\newcolumntype{P}[1]{>{\raggedright\arraybackslash}p{#1}}
\newtheorem{thm}{Theorem} 
\newtheorem{cor}{Corollary}
\newtheorem{proposition}{Proposition}
\newtheorem{lemma}{Lemma}
\newtheorem{condition}{Condition}
\newtheorem{definition}{Definition}
\newtheorem{remark}{Remark}
\newcommand{\bed}{\begin{definition}}
	\newcommand{\eed}{\end{definition}}
\newcommand{\rom}[1]{\uppercase\expandafter{\romannumeral #1\relax}}
\newcommand{\bitem}{\begin{itemize}}
	\newcommand{\eitem}{\end{itemize}}
\newcommand{\beqn}{\begin{equation}}
	\newcommand{\eeqn}{\end{equation}}
\newcommand{\balign}{\begin{align}}
	\newcommand{\ealign}{\end{align}}
\newcommand{\beq}{\begin{equation}}
	\newcommand{\eeq}{\end{equation}}
\newcommand{\diag}{\mathrm{diag}}
\newcommand{\pr}{\mathbb{P}}
\newcommand{\R}{\mathbb{R}}
\newcommand\num{\addtocounter{equation}{1}\tag{\theequation}}
\newcommand{\var}{\mathrm{Var}}
\newcommand{\E}{\mathbb{E}} 
\newcommand{\mf}[1]{\mathbf{#1}} 
\newcommand{\mc}[1]{\mathcal{#1}}
\newcommand{\VV}{\mathbb{V}} 
\newcommand{\les}{\lesssim}
\newcommand{\bo}{\mathbf{1}_p}
\newcommand{\rp}[1]{^{(#1)}} 
\title{Testing High-dimensional Multinomials with Applications to Text Analysis}
\author[T. Tony Cai]{T. Tony Cai}
\address{University of Pennsylvania,
Philadelphia, Pennsylvania,
United States.}
\author[Zheng Tracy Ke]{Zheng Tracy Ke}
\address{Harvard University,
Cambridge, Massachusetts,
United States.}
\author[]{Paxton Turner}
\address{Harvard University,
Cambridge, Massachusetts,
United States.} 
\begin{document}

\maketitle

\begin{abstract}
Motivated by applications in text mining and discrete distribution inference, we test for equality of probability mass functions of $K$ groups of high-dimensional multinomial distributions. Special cases of this problem include global testing for topic models, two-sample testing in authorship attribution, and closeness testing for discrete distributions. A test statistic, which is shown to have an asymptotic standard normal distribution under the null hypothesis, is proposed. This parameter-free limiting null distribution holds true without requiring identical multinomial parameters within each group or equal group sizes.
	The optimal detection boundary for this testing problem is established, and the proposed test is shown to achieve this optimal detection boundary across the entire parameter space of interest.  
	The proposed method is demonstrated in simulation studies and applied to analyze two real-world datasets to examine, respectively, variation among customer reviews of Amazon movies and the diversity of statistical paper abstracts.
\end{abstract}

\keywords{ authorship attribution, closeness testing, customer reviews, martingale central limit theorem, minimax optimality, topic model  }

\section{Introduction}\label{sec:Intro}

Statistical inference for multinomial data has garnered considerable recent interest \citep{diakonikolas2016new,balakrishnan2018hypothesis}. One important application is in text mining. It is common to model the word counts in a text document by a multinomial distribution \citep{blei2003latent}. 
As a motivating example, the study of online customer ratings and reviews is a trending topic in marketing research. 
Customer reviews are a good proxy to the overall ``word of mouth" and can significantly influence customers' decisions. Research works aim to  understand the patterns in online reviews and their impacts on sales. 
Classical studies only use numerical ratings but ignore the rich text reviews because of their unstructured nature. More recent works have revealed the importance of analyzing text reviews, especially for hedonic products such as books, movies, and hotels  \citep{chevalier2006effect}. 
A question of interest is to detect the heterogeneity in reviewers' response styles. For example, 
\cite{leung2020all} discovered that younger travelers, women, and travelers with less review expertise tend to give more positive reviews and that guests staying in high-class hotels tend to have more extreme response styles than those staying in low-class hotels. 
Knowing such differences will offer valuable insights for hotel managers and online rating/review sites. 

The aforementioned heterogeneity detection can be cast as a hypothesis test on multinomial data. Suppose reviews are written using a vocabulary of $p$ distinct words. Let $X_i\in\mathbb{R}^p$ contain the word counts in review $i$. We assume  $X_i$'s are independent, and 
\beq \label{Mod1-data}
X_i\sim \mathrm{Multinomial}(N_i, \Omega_i), \qquad 1\leq i\leq n,
\eeq
where $N_i$ is the total length of review $i$ and $\Omega_i\in\mathbb{R}^p$ is a probability mass function (PMF) containing the population word frequencies. 
These reviews are divided into $K$ groups by reviewer characteristics (e.g., age, gender, new/returning customer), product characteristics (e.g., high-class versus low-class hotels), and numeric ratings (e.g., from 1 star to 5 stars), where $K$ can be presumably large. 
We view $\Omega_i$ as representing the `true response' of review $i$. The ``average response" of a group $k$ is defined by a weighted average of the PMFs: 
\beq \label{Mod2-groupMean}
\mu_k = (n_k\bar{N}_k)^{-1} \sum_{i\in S_k} N_i\Omega_i, \qquad 1\leq k\leq K. 
\eeq 
Here $S_k\subset\{1,2,\ldots,n\}$ is the index set of group $k$,  $n_k=|S_k|$ is the total number of reviews in group $k$, and $\bar{N}_k=n_k^{-1}\sum_{i\in S_k}N_i$ is the average length of reviews in group $k$. 
We would like to test
\beq \label{Mod3-null}
H_0:\quad \mu_1=\mu_2=\ldots = \mu_K. 
\eeq
When the null hypothesis is rejected, it means there exist statistically significant differences among the group-wise ``average responses".

%

We call \eqref{Mod1-data}-\eqref{Mod3-null} the ``$K$-sample testing for equality of average PMFs in multinomials" or ``$K$-sample testing for multinomials" for short. As $K$ varies, it includes several well-defined problems in text mining and discrete distribution inference as special cases. 

\begin{enumerate}
	
	\item {\it Global testing for topic models}. Topic modeling \citep{blei2003latent} is a popular text mining tool.
	In a topic model, each $\Omega_i$ in \eqref{Mod1-data} is a convex combination of $M$ topic vectors. 
	Before fitting a topic model to a corpus, it is often desirable to determine if the corpus indeed contains multiple topics.  
	This boils down to the global testing problem, which tests $M=1$ versus $M>1$. In this case, we set $K=n$ and view each document as a separate group, so that $\Omega_i$ itself is the within-group average.
	Under the null hypothesis, all these $\Omega_i$'s are equal to a single topic vector.
	Under the alternative, the $\Omega_i$'s  are not all equal. 
	This is thus a special case of our problem with $K=n$ and $n_k=1$. 
	
	
	\item {\it Authorship attribution} \citep{mosteller1963inference,kipnis2022higher}. In these applications, the goal is to determine the unknown authorship of an article from other articles with known authors. A famous example \citep{mosteller2012applied} is to determine the actual authors of a few Federalist Papers written by three authors but published under a single pseudonym. It can be formulated \citep{mosteller1963inference,kipnis2022higher} as testing the equality of population word frequencies between the article of interest and the corpus from a known author, a special case of our problem with $K=2$. 
	
	\item {\it Closeness between discrete distributions} \citep{chan2014optimal,bhattacharya2015testing,balakrishnan2019hypothesis}. 
	There has been a surge of interest in discrete distribution inference. Closeness testing is one of most studied problems. 
	The data from two discrete distributions are summarized in two multinomial vectors $\mathrm{Multinomial}(N_1, \mu)$ and $\mathrm{Multinomial}(N_2, \theta)$. 
	The goal is to test $\mu=\theta$. It is a special case of our testing problem with $K=2$ and $n_1=n_2=1$. 
\end{enumerate}

In this paper, we provide a unified solution to all the aforementioned problems. 
The key to our methodology is a flexible statistic called DELVE (DE-biased and Length-assisted Variability Estimator). 
It provides a general similarity measure for comparing groups of discrete distributions such as count vectors associated with text corpora. Similarity measures (such as the classical cosine similarity, log-likelihood ratio statistic, and others) are fundamental in text mining and have been applied to problems in distribution testing \citep{kim2022minimax}, computational linguistics  \citep{gomaa2013survey}, econometrics \citep{hansen2018transparency}, and computational biology \citep{kolodziejczyk2015technology}. Our method is a new and flexible similarity measure that is potentially useful in these areas. 

We emphasize that our setting does not require that the $X_i$'s in the same group are drawn from the same distribution. Under the null hypothesis \eqref{Mod3-null}, the group-wise means are equal, but the $\Omega_i$'s within each group can still be different from each other. As a result, the null hypothesis is composite and designing a proper test statistic is non-trivial.

\subsection{Our results and contributions} \label{subsec:OurResults}

The dimensionality of the testing problem is captured by $(n, p, K)$ and $\bar{N}:=n^{-1}\sum_{i=1}^n N_i$. 
We are interested in a high-dimensional setting where 
\beq \label{Cond-scaling}
n\bar{N}\to\infty, \quad p\to\infty, \quad \mbox{and}\quad n^2\bar{N}^2/(Kp) \to\infty. 
\eeq
In most places of this paper, we use a subscript $n$ to indicate asymptotics, but our method and theory do apply to the case where $n$ is finite and $\bar{N}\to\infty$. In text applications,  $n\bar{N}$ is the total count of words in the corpus, and
a large $n\bar{N}$ means either there are sufficiently many documents, or the documents are sufficiently long. 
Given that $n\bar{N}\to\infty$, we further allow $(p,K)$ to grow with $n$ at a speed such that $Kp\ll n^2\bar{N}^2$. 
In particular, our settings allow $K$ to range from $2$ to $n$, so as to cover all the application examples. 

We propose a test that enjoys the following properties:
\begin{enumerate} 
	\item[(a)] {\it Parameter-free null distribution}: 
	We shall define a test statistic $\psi$ in \eqref{define:psi} and show that $\psi\to N(0,1)$ under the null $H_0$ in \eqref{Mod3-null}.
	Even under $H_0$, the model contains a large number of free parameters because the null hypothesis is only about the equality of ``average" PMFs but still allows $(N_i,\Omega_i)$ to differ within each group. As an appealing property, the null distribution of $\psi$ does not depend on these individual multinomial parameters; hence, we can always conveniently obtain the asymptotic $p$-value for our proposed test. 
	
	\item[(b)] {\it Minimax optimal detection boundary}: We define a quantity $\omega_n:=\omega_n(\mu_1,\mu_2,\ldots,\mu_K) $ in \eqref{def:omega_n} that measures the difference among the $K$ group-wise mean PMFs. It satisfies that $\omega_n=0$ if and only if the null hypothesis holds, and it has been properly normalized so that $\omega_n$ is bounded under the alternative hypothesis (provided some mild regularity conditions hold). 
	We show that the proposed test has an asymptotic full power if
	$
	\omega_n^4n^2{\bar N}^2/(Kp) \to\infty. 
	$
	We also provide a matching lower bound by showing that the null hypothesis and the alternative hypothesis are asymptotically indistinguishable if 
	$
	\omega_n^4n^2\bar{N}^2/(Kp)\to 0. 
	$
	Therefore, the proposed test is minimax optimal. Furthermore, in the boundary case where $\omega_n^4n^2\bar{N}^2/(Kp)\to c_0$ for a constant $c_0>0$, 
	we show that  $\psi\to N(0,1)$ under $H_0$, and $\psi\to N(c_1, 1)$, under a specific alternative hypothesis $H_1$ in \eqref{Boundary1}, 
	with $c_1$ being an explicit function of $c_0$. 
\end{enumerate}

To the best of our knowledge, this testing problem for a general $K$ has not been studied before. The existing works primarily focused on closeness testing and authorship attribution (see Section~\ref{subsec:literature}), which are special cases with $K=2$. In comparison, our test is applicable to any value of $K$, offering a unified solution to multiple applications. 
Even for $K=2$, the existing works do not provide a test statistic that has a tractable null distribution. They determined the rejection region and calculated $p$-values using either a (conservative) large-deviation bound or a permutation procedure. Our test is the first one equipped with a tractable null distribution. 
Our results about the optimal detection boundary for a general $K$ are also new to the literature.  
By varying $K$ in our theory, we obtain the optimal detection boundary for different sub-problems. For some of them (e.g., global testing for topic models, authorship attribution with moderate sparsity), the optimal detection boundary was not known before; hence, our results help advance the understanding of the statistical limits of these problems.

\subsection{Related literature} \label{subsec:literature}


First, we make a connection to discrete distribution inference. 
Let $X\sim \mathrm{Multinomial}(N,\Omega)$ represent a size-$N$ sample from a discrete distribution with $p$ categories. The one-sample closeness testing aims to test $H_0: \Omega=\mu$, for a given PMF $\mu$. 
Existing works focus on finding the minimum separation condition in terms of the $\ell^1$-norm or $\ell^2$-norm of $\Omega-\mu$. 
\cite{balakrishnan2019hypothesis} derived the minimum $\ell^1$-separation condition and 
proposed a truncated chi-square test to achieve it. 
\cite{valiant2017automatic} studied the ``local critical radius", a local separation condition that depends on the ``effective sparsity" of $\mu$, and they proposed a ``2/3rd + tail" test to achieve it. 
In the two-sample closeness testing problem, given $X_1\sim \mathrm{Multinomial}(N_1,\Omega_1)$ and $X_2\sim \mathrm{Multinomial}(N_2,\Omega_2)$, it aims to test $H_0: \Omega_1=\Omega_2$.
Again, this  literature focuses on finding the minimum separation condition in terms of the $\ell^1$-norm or $\ell^2$-norm of $\Omega_1-\Omega_2$. 
When $N_1=N_2$, \cite{chan2014optimal} derived the minimum $\ell^1$-separation condition and proposed a weighted chi-square test to attain it. 
\cite{bhattacharya2015testing} extended their results to the unbalanced case where $N_1\neq N_2$, assuming $\|\Omega_1-\Omega_2\|_1\geq p^{-1/12}$. 
This assumption was later removed by \cite{diakonikolas2016new}, who established the minimum $\ell^1$-separation condition in full generality. \cite{kim2022minimax} proposed a two-sample kernel $U$-statistic and showed that it attains the minimum $\ell^2$-separation condition. 

Since the two-sample closeness testing is a special case of our problem with $K=2$ and $n_1=n_2=1$, our test is directly applicable. 
An appealing property of our test is its tractable asymptotic null distribution of $N(0,1)$. 
In contrast, for the chi-square statistic in \cite{chan2014optimal}  or the $U$-statistic in \citep{kim2022minimax}, the rejection region is determined by either an upper bound from concentration inequalities or a permutation procedure, which may lead to a conservative threshold or need additional computational costs. 
Regarding the testing power, we show in Section~\ref{subsec:closeTesting} that our test achieves the minimum $\ell^2$-separation condition, i.e., our method is an optimal ``$\ell^2$ testor." 
Our test can also be turned into an optimal ``$\ell^1$ testor" (a test that achieves the minimum $\ell^1$-separation condition) by re-weighting terms in the test statistic (see Section~\ref{subsec:closeTesting}).

Another related problem is the independence testing \citep{diakonikolas2016new,berrett2019nonparametric}. 
Given i.i.d. bivariate samples from the joint distribution of discrete variables $I$ and $J$, it aims to test if $I$ and $J$ are independent. This is connected to our testing problem with $K=n$, as in this case our null hypothesis implies that the word distribution is independent of the document label. However, the data generating processes in two problems are not the same. In independence testing, it is assumed that the vectorization of $X$ follows a multinomial distribution with $n\bar{N}$ trials and $np$ possible outcomes. In our problem, each $X_i$ follows a multinomial distribution with $N_i$ trials and $p$ possible outcomes. Hence, we cannot directly apply existing results from independence testing. In addition, we allow $K$ to be any integer in $[2, n]$. When $K\neq n$, it is unknown how to relate independence testing to our problem.

Next, we make a connection to text mining. In this literature, a multinomial vector $X\sim \mathrm{Multinomial}(N,\Omega)$ represents the word counts for a document of length $N$ written with a dictionary containing $p$ words. 
In a topic model, each $\Omega_i$ is a convex combination of $M$ ``topic vectors": $\Omega_i=\sum_{k=1}^M w_i(k)A_k$, where each $A_k\in\mathbb{R}^p$ is a PMF and the combination coefficient vector $w_i\in\mathbb{R}^K$ is called the ``topic weight" vector for document $i$.  Given a collection of documents $X_1,X_2,\ldots,X_n$, the global testing problem aims to test $M=1$ versus $M>1$. Interestingly, the optimal detection boundary for this problem has never been rigorously studied. 
As we have explained, this problem is a special case of our testing problem with $K=n$. 
Our results (a) provide a test statistic that has a tractable null distribution and (b) reveal that the optimal detection boundary is $\omega^2_n\asymp(\sqrt{n}\bar{N})^{-1}\sqrt{p}$. 
Both (a) and (b) are new results. 
When comparing our results with those about estimation of $A_k$'s \citep{ke2017new}, it suggests that global testing requires a strictly lower signal strength than topic estimation. 

For authorship attribution, \cite{kipnis2022higher} treats the corpus from a known author as a single document and tests the null hypothesis that this combined document and a new document have the same population word frequencies. It is a two-sample closeness testing problem, except that sparsity is imposed on the difference of two PMFs.  \cite{kipnis2022higher} proposed a test which applies an ``exact binomial test" to obtain a $p$-value for each word and combines these $p$-values using Higher Criticism \citep{DJ04}. \cite{donoho2022higher} analyzed this test when the number of ``useful words" is $o(\sqrt{p})$, and they derived a sharp phase diagram  
(a related one-sample setting was studied in \cite{arias2015sparse}). 
In Section~\ref{subsec:authorChallenge}, we show that our test is applicable to this problem and has some nice properties: (a) tractable null distribution; (b) allows for $s\geq c\sqrt{p}$, where $s$ is the number of useful words; and (c) does not require
documents from the known author to have identical population word frequencies, making the setting more realistic.  
On the other hand, when $s=o(\sqrt{p})$, our test is less powerful than the one in \cite{kipnis2022higher,donoho2022higher}, as our test does not utilize sparsity explicitly. 
We can further improve our test in this regime by modifying the DELVE statistic to incorporate sparsity (see the remark in Section~\ref{subsec:authorChallenge}).


\medskip

The rest of this paper is arranged as follows. 
In Section~\ref{sec:Method}, we introduce the test statistic and explain the rationale behind it. We then present in Section~\ref{sec:Theory} the main theoretical results, including the asymptotic null distribution, power analysis, a matching lower bound, the study of two special cases ($K=n$ and $K=2$), and a discussion of the contiguity regime. Section~\ref{sec:Applications} applies our results to text mining and discrete distribution testing.  Simulations are  in Section~\ref{sec:Simu} and real data analysis is  in Section~\ref{sec:RealData}.  
The paper is concluded with a discussion in Section~\ref{sec:Discuss}. All proofs are in 
\cite{DELVE-supp}. 

\section{The DELVE Test} \label{sec:Method}

Recall that $X_1,\ldots,X_n$ are independent, and $X_i\sim \mathrm{Multinomial}(N_i, \Omega_i)$ for $1\leq i\leq n$. There is a known partition $\{1,2,\ldots,n\}=\cup_{k=1}^K S_k$. Write $n_k=|S_k|$, $\bar{N}_k=n_k^{-1}\sum_{i\in S_k}N_i$, and $\bar{N}=n^{-1}\sum_{i=1}^n N_i$. In \eqref{Mod2-groupMean}, we have defined the group-wise mean PMF $\mu_k=(n_k\bar{N}_k)^{-1}\sum_{i\in S_k}N_i\Omega_i$. We further define the overall mean PMF $\mu \in\mathbb{R}^p$ by 
\beq \label{define:eta}
\mu: =\frac{1}{n\bar{N}}\sum_{k=1}^Kn_k\bar{N}_k\mu_k=\frac{1}{n\bar{N}}\sum_{i=1}^n N_i\Omega_i.  
\eeq
We introduce a quantity $\rho^2=\rho^2(\mu_1,\ldots,\mu_K)$ by 
\beq \label{def:rhoSquare}
\rho^2: = \sum_{k=1}^Kn_k\bar{N}_k \|\mu_{k}-\mu\|^2. 
\eeq
This quantity measures the variations across $K$ group-wise mean PMFs. It is true that
the null hypothesis \eqref{Mod3-null} holds if and only if $\rho^2=0$. 
Inspired by this observation, we hope to construct an unbiased estimator of $\rho^2$ and develop it to a test statistic. 

We can easily obtain the minimum variance unbiased estimators of $\mu_k$ and $\mu$:
\beq \label{define:etaHat}
\hat{\mu}_k = \frac{1}{n_k\bar{N}_k}\sum_{i\in S_k}X_i, \qquad\mbox{and}\qquad \hat{\mu} = \frac{1}{n\bar{N}}\sum_{k=1}^K n_k\bar{N}_k\hat{\mu}_k =  \frac{1}{n\bar{N}}\sum_{i=1}^n X_i. 
\eeq
For each $1\leq j\leq p$, let $\mu_{kj}$, $\mu_j$, $\hat{\mu}_{kj}$ and $\hat{\mu}_j$ represent the $j$th entry of $\mu_k$, $\mu$, $\hat{\mu}_k$ and $\hat{\mu}$, respectively.  
A naive estimator of $\rho^2$ is
\beq \label{define:tildeT}
\widetilde{T}=\sum_{j=1}^p \widetilde{T}_j, \qquad\mbox{where}\quad \widetilde{T}_j = \sum_{k=1}^K n_k\bar{N}_k(\hat{\mu}_{kj}-\hat{\mu}_j)^2. 
\eeq
This estimator is biased. In Section \ref{subsec:proof-prop-unbiased} of \cite{DELVE-supp}, we show that
$
\E[\widetilde{T}_j]=\sum_{k=1}^K \bigl[n_k\bar{N}_k (\mu_{kj}-\mu_j)^2 +\bigl( \frac{1}{n_k\bar{N}_k}-\frac{1}{n\bar{N}} \bigr)\sum_{i\in S_k}N_i\Omega_{ij}(1-\Omega_{ij})\bigr]. 
$
It motivates us to debias $\widetilde{T}_j$ by using an unbiased estimate of $\Omega_{ij}(1-\Omega_{ij})$. By basic properties of multinomial distributions, $\E[X_{ij}(N_i-X_{ij})] = N_i(N_i-1)\Omega_{ij}(1-\Omega_{ij})$. We thereby use $\frac{1}{N_i(N_i-1)}X_{ij}(N_i-X_{ij})$ to estimate $\Omega_{ij}(1-\Omega_{ij})$. It yields an unbiased estimator of $\rho^2$:
\beq \label{DELAC}
T=\sum_{j=1}^p T_j, \quad T_j=  \sum_{k=1}^K\biggl[ n_k\bar{N}_k(\hat{\mu}_{kj}-\hat{\mu}_{j})^2 -   \Bigl( \frac{1}{n_k\bar{N}_k}-\frac{1}{n\bar{N}} \Bigr)\sum_{i\in S_k}\frac{X_{ij}(N_i-X_{ij})}{N_i-1}  \biggr]. 
\eeq 
\begin{lemma} \label{prop:unbiased}
	Under Models~\eqref{Mod1-data}-\eqref{Mod2-groupMean},
	the estimator in \eqref{DELAC} satisfies that $\E[T]=\rho^2$. 
\end{lemma}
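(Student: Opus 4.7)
The plan is a direct moment verification. I will (i) derive the quoted bias formula for $\widetilde{T}_j$ via a bias--variance decomposition, (ii) verify the multinomial second--moment identity that makes the correction term in \eqref{DELAC} unbiased, and (iii) sum over $j$ to conclude $\E[T]=\rho^2$.

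For step (i), I expand $(\hat{\mu}_{kj}-\hat{\mu}_j)^2 = [(\hat{\mu}_{kj}-\mu_{kj})-(\hat{\mu}_j-\mu_j)]^2 + 2[(\hat{\mu}_{kj}-\mu_{kj})-(\hat{\mu}_j-\mu_j)](\mu_{kj}-\mu_j) + (\mu_{kj}-\mu_j)^2$. Since $\E[\hat{\mu}_{kj}]=\mu_{kj}$ and $\E[\hat{\mu}_j]=\mu_j$, taking expectation kills the linear cross term and yields $\E[(\hat{\mu}_{kj}-\hat{\mu}_j)^2]=\var(\hat{\mu}_{kj}-\hat{\mu}_j)+(\mu_{kj}-\mu_j)^2$. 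Using that the $X_i$'s are independent across $i$ and that $\hat{\mu}_j=(n\bar{N})^{-1}\sum_l n_l\bar{N}_l\hat{\mu}_{lj}$ is a convex combination in which the $\hat{\mu}_{lj}$'s are independent across $l$, I get $\var(\hat{\mu}_{kj})=(n_k\bar{N}_k)^{-2}\sigma_k^2$ and $\mathrm{Cov}(\hat{\mu}_{kj},\hat{\mu}_j)=(n_k\bar{N}_k)(n\bar{N})^{-1}\var(\hat{\mu}_{kj})$, where I write $\sigma_k^2:=\sum_{i\in S_k}N_i\Omega_{ij}(1-\Omega_{ij})$. Multiplying the resulting expression for $\var(\hat{\mu}_{kj}-\hat{\mu}_j)$ by $n_k\bar{N}_k$ and summing over $k$ (using $\sum_k n_k\bar{N}_k=n\bar{N}$), the cross terms collapse and produce exactly $\sum_k\bigl(\tfrac{1}{n_k\bar{N}_k}-\tfrac{1}{n\bar{N}}\bigr)\sigma_k^2$, which is the formula for $\E[\widetilde{T}_j]$ quoted in the excerpt.

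For step (ii), since $X_{ij}\sim\mathrm{Binomial}(N_i,\Omega_{ij})$, I have $\E[X_{ij}^2]=N_i\Omega_{ij}(1-\Omega_{ij})+N_i^2\Omega_{ij}^2$, so $\E[X_{ij}(N_i-X_{ij})]=N_i(N_i-1)\Omega_{ij}(1-\Omega_{ij})$, whence $\E[X_{ij}(N_i-X_{ij})/(N_i-1)]=N_i\Omega_{ij}(1-\Omega_{ij})$. Subtracting this correction term from $\widetilde{T}_j$ therefore exactly cancels the variance contribution identified in step (i), leaving $\E[T_j]=\sum_k n_k\bar{N}_k(\mu_{kj}-\mu_j)^2$. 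Summing over $j\in\{1,\dots,p\}$ gives $\E[T]=\sum_k n_k\bar{N}_k\|\mu_k-\mu\|^2=\rho^2$.

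There is no substantive obstacle beyond careful bookkeeping in step (i): the main task is to track the weights $(n_k\bar{N}_k)/(n\bar{N})$ cleanly through the variance expansion and recognize the telescoping in the weighted sum over groups that collapses it to $\sum_k(\tfrac{1}{n_k\bar{N}_k}-\tfrac{1}{n\bar{N}})\sigma_k^2$. Because the claim is an exact moment identity, no concentration inequality or asymptotic argument is required.
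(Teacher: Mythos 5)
Your proof is correct and follows essentially the same route as the paper: both compute $\E[(\hat{\mu}_{kj}-\hat{\mu}_j)^2]$ exactly using independence of the groups and the multinomial variance $\var(X_{ij})=N_i\Omega_{ij}(1-\Omega_{ij})$, then observe that $\E[X_{ij}(N_i-X_{ij})/(N_i-1)]=N_i\Omega_{ij}(1-\Omega_{ij})$ makes the correction term exactly cancel the bias, and sum over $j$. The only cosmetic difference is that the paper bundles the cross-group variance contributions into an auxiliary quantity $\delta_{kj}$ and shows $\sum_k\delta_{kj}=0$, whereas you carry out the equivalent telescoping directly in the weighted sum over $k$.
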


To use $T$ for hypothesis testing, we need a proper standardization of this statistic. In Sections \ref{subsec:T-decompose}-\ref{subsec:T-var} of \cite{DELVE-supp}, we study $\VV(T)$, the variance of $T$. Under mild regularity conditions, it can be shown that $\VV(T)=\Theta_n\cdot [1+o(1)]$, where
\begin{align} \label{define:Theta_n}
	&\Theta_n: = 4\sum_{k=1}^K\sum_{j=1}^p n_k\bar{N}_k (\mu_{kj}-\mu_j)^2\mu_{kj} + 2\sum_{k=1}^K\sum_{i\in S_k}\sum_{j=1}^p  \Bigl(\frac{1}{n_k\bar{N}_k}-\frac{1}{n\bar{N}}\Bigr)^2\frac{N_i^3}{N_i-1} \Omega_{ij}^2\\
	&+ \frac{2}{n^2\bar{N}^2}\sum_{1\leq k\neq \ell\leq K}\sum_{i\in S_k}\sum_{m\in S_\ell} \sum_{j=1}^pN_iN_m\Omega_{ij}\Omega_{mj}  + 2\sum_{k=1}^K \sum_{\substack{i\in S_k, m\in S_k,\\ i\neq m}}\sum_{j=1}^p \Bigl(\frac{1}{n_k\bar{N}_k}-\frac{1}{n\bar{N}}\Bigr)^2 N_iN_m\Omega_{ij}\Omega_{mj}. \nonumber
\end{align}
In $\Theta_n$, the first term vanishes under the null, so it suffices to estimate the other three terms in $\Theta_n$. By  properties of multinomial distributions, $\E[X_{ij}X_{mj}]=N_iN_m\Omega_{ij}\Omega_{mj}$, $\E[X^2_{ij}] = N_i^2\Omega_{ij}^2 +N_i\Omega_{ij}(1-\Omega_{ij})$, and $\E[X_{ij}(N_i-X_{ij})]=N_i(N_i-1)\Omega_{ij}(1-\Omega_{ij})$.  It inspires us to estimate $\Omega_{ij}\Omega_{mj}$ by $\frac{X_{ij}X_{mj}}{N_iN_m}$ and estimate $\Omega_{ij}^2$ by  $\frac{X_{ij}^2}{N_i^2} - \frac{X_{ij}(N_i-X_{ij})}{N^2_i(N_i-1)}=\frac{X_{ij}^2-X_{ij}}{N_i(N_i-1)}$. Define
\begin{align} \label{define:V}
	&V =  2\sum_{k=1}^K\sum_{i\in S_k}\sum_{j=1}^p  \Bigl(\frac{1}{n_k\bar{N}_k}-\frac{1}{n\bar{N}}\Bigr)^2\frac{X_{ij}^2-X_{ij}}{N_i(N_i-1)}  
	+\frac{2}{n^2\bar{N}^2}\sum_{k\neq\ell}\sum_{i\in S_k}\sum_{m\in S_\ell} \sum_{j=1}^p X_{ij}X_{mj} \cr
	&\qquad +  2\sum_{k=1}^K \sum_{\substack{i\in S_k, m\in S_k,\\ i\neq m}}\sum_{j=1}^p \Bigl(\frac{1}{n_k\bar{N}_k}-\frac{1}{n\bar{N}}\Bigr)^2 X_{ij}X_{mj}.
\end{align}
The test statistic we propose is as follows (in the rate event $V<0$, we simply set $\psi=0$):
\beq \label{define:psi}
\psi = T/\sqrt{V}. 
\eeq
We call $\psi$ the {\it DEbiased and Length-adjusted Variability Estimator (DELVE)}. In Section~\ref{subsec:Main-null}, we show that under mild regularity conditions, $\psi\to N(0,1)$ under the null hypothesis. For any fixed $\kappa\in (0,1)$, the asymptotic level-$\kappa$ DELVE test rejects $H_0$ if  
\beq \label{define:Reject}
\psi > z_{\kappa}, \qquad\mbox{where $z_{\kappa}$ is the $(1-\kappa)$-quantile of $N(0,1)$}.  
\eeq

\begin{remark}[Other testing ideas]{\rm
The likelihood ratio (LR) test can only be applied when $\Omega_i$'s are equal within each group (in this case, the null/alternative hypotheses have much fewer free parameters). Moreover, the DELVE test attains the minimax optimal detection boundary in high-dimensional settings, but there is no such guarantee for the LR test. 
From simulations  in Section~\ref{sec:Simu}, when $p$ is large, DELVE has better power than LR. 
Another idea is to use the ANOVA statistic $\widetilde{T}$ in \eqref{define:tildeT} without de-biasing and apply 
a chi-square approximation or permutation procedure to compute the $p$-value. This test is unfortunately suboptimal. 
There are settings in which the bias term dominates the ``signal" term in $\widetilde{T}$, causing the test to lose power (see Remark \ref{debiasing-effect-remark} for details). 
}\end{remark}

\begin{remark}{\rm
We have assumed $X_1, \ldots, X_n$ are independent. This is better interpreted as the conditional independence given $\Omega_i$'s. When $\Omega_i$'s are random and have some dependence structure, $X_i$'s can be (marginally) dependent. We will see in Section~\ref{sec:Theory} that the asymptotic null distribution of $\psi$ does not depend on $\Omega_i$'s; then, the same asymptotic distribution also holds for random and dependent $\Omega_i$. We have also assumed that the distribution of $X_i$ is multinomial. However, our test only uses the first two moments of multinomials, not the likelihood. As a result,  our method is relatively robust to model misspecification, and it is extendable to settings with under/over dispersion.
}\end{remark}

\subsection{The special cases of $K=n$ and $K=2$} \label{subsec:test-2cases} 

As seen in Section~\ref{sec:Intro}, the application examples of $K=n$ and $K=2$ are particularly intriguing. In these cases, we give more explicit expressions of our test statistic. 

When $K=n$,  we have $S_k=\{i\}$ and $\hat{\mu}_{kj}=N_i^{-1}X_{ij}$. The null hypothesis becomes
$
H_0: \Omega_1=\Omega_2=\ldots=\Omega_n. 
$
The statistic in \eqref{DELAC} reduces to
\beq \label{DELAC(K=n)}
T =\sum_{j=1}^p  \sum_{i=1}^n \biggl[ \frac{(X_{ij}-N_i\hat{\mu}_j)^2}{N_i} -    \Bigl( 1-\frac{N_i}{n\bar{N}} \Bigr)\frac{X_{ij}(N_i-X_{ij})}{N_i(N_i-1)}  \biggr]. 
\eeq
Moreover, in the variance estimate \eqref{define:V}, the last term is exactly zero, and it can be shown that the third term is negligible compared to the first term. We thereby consider a simpler variance estimator by only retaining the first term in \eqref{define:V}:
\beq \label{define:V(K=n)}
V^* =2 \sum_{i=1}^n\sum_{j=1}^p \Bigl(\frac{1}{N_i}-\frac{1}{n\bar{N}}\Bigr)^2 \frac{X_{ij}^2-X_{ij}}{N_i(N_i-1)}. 
\eeq 
The simplified DELVE test statistic is $\psi^*=T/\sqrt{V^*}$.

When $K=2$, we observe two collections of multinomial vectors, denoted by $\{X_i\}_{1\leq i\leq n}$ and $\{G_i\}_{1\leq i\leq m}$. We assume for $1\leq i\leq n$ and $1\leq j\leq m$, 
\beq \label{model-2sample-0}
X_i \sim \mathrm{Multinomial}(N_i, \Omega_i), \qquad 
G_j \sim \mathrm{Multinomial}(M_j, \Gamma_j). 
\eeq
Write $\bar{N}=n^{-1}\sum_{i=1}^n N_i$ and $\bar{M}=m^{-1}\sum_{i=1}^m M_i$. The null hypothesis becomes
\beq\label{define:eta(K=2)-0}
H_0:\quad \eta=\theta, \qquad \mbox{where }\eta = \frac{1}{n\bar{N}}\sum_{i=1}^n N_i \Omega_{i}, \mbox{ and }  \theta = \frac{1}{m\bar{M}}\sum_{i=1}^m M_i \Gamma_{i},
\eeq 
where $\theta$ and $\eta$ are the two group-wise mean PMFs.  
We estimate them by $\hat{\eta} = (n\bar{N})^{-1}\sum_{i=1}^n X_i$ and $\hat{\theta} = (m\bar{M})^{-1}\sum_{i=1}^m G_i$.   
The statistic in \eqref{DELAC} has an equivalent form as follows: 
\beq \label{DELAC(K=2)}
T =\frac{n\bar{N}m\bar{M}}{n\bar{N}+m\bar{M}}\biggl[ \|\hat{\eta}-\hat{\theta}\|^2 - \sum_{i=1}^n\sum_{j=1}^p \frac{X_{ij}(N_i-X_{ij})}{n^2\bar{N}^2 (N_i-1)}-\sum_{i=1}^m \sum_{j=1}^p \frac{G_{ij}(M_i-G_{ij})}{m^2\bar{M}^2(M_i-1)}\biggr]. 
\eeq
The variance estimate \eqref{define:V} has an equivalent form as follows:
\begin{align} \label{define:V(K=2)}
	&V = \frac{4\sum_{i=1}^n\sum_{i'=1}^m\sum_{j=1}^p  X_{ij} G_{i'j}}{(n\bar{N}+m\bar{M})^2} +  \frac{2m^2\bar{M}^2 \big[\sum_{i=1}^n \frac{X_{ij}^2-X_{ij}}{N_i(N_i-1)} +  \sum_{1\leq i \neq i'\leq n}  X_{ij} X_{i'j} \big]}{n^2 \bar{N}^2(n\bar{N}+m\bar{M})^2} \cr 
	&\quad   +  \frac{2n^2\bar{N}^2\big[ \sum_{i=1}^m \frac{G_{ij}^2-G_{ij}}{M_i(M_i-1)} +\sum_{1\leq i \neq i'\leq m}  G_{ij} G_{i'j} \big]}{m^2 \bar{M}^2(n\bar{N}+m\bar{M})^2} . 
\end{align}
The DELVE test statistic is $\psi=T/\sqrt{V}$.

\subsection{A variant: DELVE+} \label{subsec:modification}
We introduce a variant of the DELVE test statistic to better suit real data.  Let $\hat{\mu}$, $T$ and $V$ be as in \eqref{define:etaHat}, \eqref{DELAC} and \eqref{define:V}. Define
\beq \label{define:psi+}
\psi^+ = T/\sqrt{V^+}, \qquad \mbox{where}\quad V^+=V\cdot \bigl( 1+ \| \hat \mu \|_2 T/\sqrt{V}\bigr). 
\eeq
We call \eqref{define:psi+} the DELVE+ test statistic. In theory, this modification has little effect on the key properties of the test. To see this, we note that $\|\hat{\mu}\|_2=o_{\mathbb{P}}(1)$ in high-dimensional settings. Suppose $T/\sqrt{V}\to N(0,1)$ under $H_0$. Since  $\|\hat{\mu}\|_2\to 0$, it is seen immediately that $V^+/V\to 1$; hence, the asymptotic normality also holds for $\psi^+$. Suppose $T/\sqrt{V}\to \infty$ under the alternative hypothesis. It follows that $V^+\leq 2\max\{V, \|\hat{\mu}\|_2\cdot T\sqrt{V}\}$ and $\psi^+\geq \frac{1}{\sqrt{2}}\min\{T/\sqrt{V}, \, \|\hat{\mu}\|_2^{-1}(T/\sqrt{V})^{1/2}\}\to\infty$. We have proved the following lemma:

\begin{lemma} \label{prop:modification}
	As $n\bar{N}\to\infty$, suppose $\|\hat{\mu}\|_2\to 0$ in probability. Under $H_0$, if $T/\sqrt{V}\to N(0,1)$, then $T/\sqrt{V^+}\to N(0,1)$. Under $H_1$, if $T/\sqrt{V}\to \infty$, then $T/\sqrt{V^+}\to\infty$. 
\end{lemma}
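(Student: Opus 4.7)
The proof is essentially a two-line Slutsky argument in the null regime and a two-branch algebraic bound in the alternative regime, both leveraging the identity
\[
V^+/V \;=\; 1 + \|\hat{\mu}\|_2 \cdot T/\sqrt{V},
\]
so my plan is to control the quantity $\|\hat{\mu}\|_2 \cdot T/\sqrt{V}$ in each regime and then translate this control to $T/\sqrt{V^+}$.

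Under $H_0$, the assumption $T/\sqrt{V} \to N(0,1)$ implies $T/\sqrt{V} = O_{\mathbb{P}}(1)$, and combined with $\|\hat{\mu}\|_2 = o_{\mathbb{P}}(1)$, Slutsky yields $\|\hat{\mu}\|_2 T/\sqrt{V} = o_{\mathbb{P}}(1)$. Therefore $V^+/V = 1 + o_{\mathbb{P}}(1)$, and in particular $\mathbb{P}(V^+ > 0) \to 1$ so $\sqrt{V^+}$ is well-defined with probability tending to one. Writing
\[
T/\sqrt{V^+} \;=\; \bigl(T/\sqrt{V}\bigr) \cdot \bigl(V/V^+\bigr)^{1/2}
\]
and applying Slutsky a second time gives $T/\sqrt{V^+} \to N(0,1)$.

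Under $H_1$, I will use the decomposition $V^+ = V + \|\hat{\mu}\|_2 T \sqrt{V}$. Since $T/\sqrt{V} \to \infty$ forces $T > 0$ and $V > 0$ with probability tending to one, we can bound $V^+ \leq 2 \max\bigl\{V,\ \|\hat{\mu}\|_2 T \sqrt{V}\bigr\}$, which yields
\[
T/\sqrt{V^+} \;\geq\; \tfrac{1}{\sqrt{2}} \min\!\left\{ T/\sqrt{V},\ (T/\sqrt{V})^{1/2}\,/\,\|\hat{\mu}\|_2^{1/2} \right\}.
\]
Both arguments of the min diverge in probability: the first by hypothesis, and the second because $T/\sqrt{V}\to\infty$ while $\|\hat{\mu}\|_2\to 0$. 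Hence $T/\sqrt{V^+}\to\infty$.

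There is no genuine obstacle here; the only minor subtlety is ensuring $V^+$ is nonnegative so that $\sqrt{V^+}$ makes sense, which is why I flag the event $\mathbb{P}(V^+>0)\to 1$ in each regime before invoking Slutsky or the max-min bound. Under $H_0$ this follows from $V^+/V \to 1$ in probability, and under $H_1$ from $T>0$ and $V>0$ with high probability. The convention $\psi^+ = 0$ on the negligible event $V^+ \leq 0$ (mirroring the convention for $\psi$) then completes the argument without affecting either limit.
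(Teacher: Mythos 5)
Your proof is correct and takes essentially the same route as the paper's own argument: both use $V^+/V = 1 + \|\hat\mu\|_2\,T/\sqrt{V} \to 1$ under $H_0$ with Slutsky, and the bound $V^+ \leq 2\max\{V,\ \|\hat\mu\|_2\,T\sqrt{V}\}$ under $H_1$. One small note in your favor: your lower bound $T/\sqrt{V^+} \geq \tfrac{1}{\sqrt{2}}\min\{T/\sqrt{V},\ (T/\sqrt{V})^{1/2}/\|\hat\mu\|_2^{1/2}\}$ has the correct exponent $1/2$ on $\|\hat\mu\|_2$, whereas the paper's displayed bound uses $\|\hat\mu\|_2^{-1}$ (harmless for the conclusion, but your algebra is the accurate one).
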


\noindent
In practice, this modification avoids  extremely small $p$-values. In some real datasets, $V$ is very small and leads to an extremely small $p$-value in the original DELVE test.  In DELVE+,  as long as $T$ is positive, $\psi^+$ is  smaller than $\psi$, so that the $p$-value is adjusted. 

In the numerical experiments, we consider both DELVE and DELVE+. For theoretical analysis, since these two versions  have almost identical theoretical properties, we only focus on the original DELVE test statistic.

\section{Theoretical Properties} \label{sec:Theory}
We first present the regularity conditions. For a constant $c_0\in (0,1)$, we assume
\beq \label{cond1-basic}
\min_{1\leq i\leq n} N_i \geq 2, \qquad  \max_{1\leq i\leq n} \| \Omega_i \|_\infty \leq 1 - c_0, \qquad 
\max_{1\leq k\leq K} \frac{n_k \bar{N}_k}{n \bar{N}} \leq 1 - c_0. 
\eeq 
In \eqref{cond1-basic}, the first condition is mild. Noting that $\|\Omega_i\|_1=1$, the second condition excludes those cases where one of the $p$ categories has an extremely dominating probability in the PMF $\Omega_i$, which is also mild. In the third condition, $n_k\bar{N}_k$ is the total number of counts  in all multinomials of group $k$, and this condition excludes the extremely unbalanced case where one group occupies the majority of counts (in the special case of $K = 2$, we further relax this condition to allow for severely unbalanced groups (see Section \ref{subsec:K=2})). 

Recall that $\mu_k=\frac{1}{n_k\bar{N}_k}\sum_{i\in S_k} N_i\Omega_i$ is the mean PMF within group $k$. We also define a `covariance' matrix of PMFs for group $k$ by $\Sigma_k = \frac{1}{n_k\bar{N}_k}\sum_{i\in S_k} N_i\Omega_i\Omega_i'$. 
Let
\beq \label{def:alpha_n}
\alpha_n:= \max\left\{ \sum_{k = 1}^K \frac{\| \mu_k \|_3^3}{n_k \bar{N}_k} ,  \quad  \sum_{k = 1}^K \frac{\| \mu_k \|^2}{n_k^2 \bar{N}_k^2}\right \} \bigg / \bigg( \sum_{k = 1}^K \| \mu_k \|^2 \bigg)^2,
\eeq
and
\beq \label{def:beta_n}
\beta_n: =  \max \biggl\{ \sum_{k=1}^K  \sum_{i\in S_k}\frac{N^2_i}{n_k^2 \bar{N}_k^2}\|\Omega_i\|_3^3, \quad \sum_{k=1}^K \| \Sigma_k \|_F^2
\bigg\}\bigg/ (K \| \mu \|^2). 
\eeq
We assume that as $n\bar{N}\to\infty$, 
\beq \label{cond2-regular}
\alpha_n
=o(1), \qquad \beta_n=o(1), \qquad \mbox{and} \quad \frac{\| \mu \|_4^4}{K\|\mu\|^4} = o(1). 
\eeq
Here $\alpha_n$ and $\beta_n$ only depend on group-wise quantities, such as $\mu_k$, $\Sigma_k$ and $\sum_{i\in S_k}N^2_i\|\Omega_i\|_3^3$; hence, a small number of  `outliers' (i.e., extremely large entries) in $\Omega$ has little effect on $\alpha_n$ and $\beta_n$. Furthermore, in a simple case where $\max_k n_k\leq C\min_k n_k$,  $\max_{k}\bar{N}_k\leq C\min_k \bar{N}_k$ and $\|\Omega\|_{\max}=O(1/p)$, it holds that $\alpha_n=O(\max\{\frac{1}{n\bar{N}}, \frac{Kp}{n^2\bar{N}^2}\})$, $\beta_n=O(\max\{\frac{K^2}{n^2p}, \frac{1}{p}\})$ and $\frac{\| \mu \|_4^4}{K\|\mu\|^4} =O(\frac{1}{Kp})$. 
When $n\bar{N}\to\infty$ and $p\to\infty$, \eqref{cond2-regular} reduces to $n^2\bar{N}^2/(Kp)\to\infty$. 
This condition is necessary for successful testing, because our lower bound in Section~\ref{subsec:Main-LB} implies that the two hypotheses are asymptotically indistinguishable if $n^2\bar{N}^2/(Kp)\to 0$. 

\subsection{The asymptotic null distribution} \label{subsec:Main-null}
Under the null hypothesis, the $K$ group-wise mean PMFs $\mu_1,\mu_2,\ldots,\mu_K$, are equal to each other, but this hypothesis is still highly composite, as $(N_i,\Omega_i)$ are not necessarily the same within each group. We show that the DELVE test statistic always enjoys a parameter-free asymptotic null distribution. Let $T$, $\Theta_n$ and $V$ be as in \eqref{DELAC}-\eqref{define:V}. 
The next two theorems are proved in \cite{DELVE-supp}.

\begin{thm} \label{thm:null}
	Consider Models~\eqref{Mod1-data}-\eqref{Mod2-groupMean}, where the null hypothesis \eqref{Mod3-null} holds. Suppose \eqref{cond1-basic} and \eqref{cond2-regular} are satisfied.  
	As $n\bar{N}\to\infty$, $T/\sqrt{\Theta_n} \to N(0,1)$ in distribution. 
\end{thm}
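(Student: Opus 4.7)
The plan is to establish asymptotic normality by reducing $T$ to a martingale-difference sum in zero-mean multinomial deviations and invoking a martingale CLT.

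First, under $H_0$ every $\mu_k$ equals a common $\mu$. Set $Y_{ij}:=X_{ij}-N_i\Omega_{ij}$, so $\E[Y_{ij}]=0$. Since $\sum_{i\in S_k} N_i\Omega_{ij}=n_k\bar N_k\mu_{kj}=n_k\bar N_k\mu_j$ and $\sum_i N_i\Omega_{ij}=n\bar N\mu_j$ under $H_0$, we get $\hat\mu_{kj}-\hat\mu_j=\frac{1}{n_k\bar N_k}\sum_{i\in S_k}Y_{ij}-\frac{1}{n\bar N}\sum_{i=1}^n Y_{ij}$. Expanding each $n_k\bar N_k(\hat\mu_{kj}-\hat\mu_j)^2$ and summing over $k$ produces, for each coordinate $j$, a quadratic form $\sum_{i,i'}b_{ii'}Y_{ij}Y_{i'j}$ whose coefficient $b_{ii'}$ depends only on $k(i),k(i')$; a short computation gives $b_{ii'}=\frac{1}{n_{k(i)}\bar N_{k(i)}}-\frac{1}{n\bar N}$ when $k(i)=k(i')$ and $b_{ii'}=-\frac{1}{n\bar N}$ otherwise. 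The debiasing term in \eqref{DELAC} centers the diagonal, because $\E[X_{ij}(N_i-X_{ij})/(N_i-1)]=N_i\Omega_{ij}(1-\Omega_{ij})=\mathrm{Var}(X_{ij})=\E[Y_{ij}^2]$, so that
\[
T=\sum_{j=1}^p\Big(\sum_{i\neq i'}b_{ii'}Y_{ij}Y_{i'j}+\sum_i b_{ii}Z_{ij}\Big),\qquad Z_{ij}:=Y_{ij}^2-\tfrac{X_{ij}(N_i-X_{ij})}{N_i-1},
\]
with every summand mean zero (Lemma~\ref{prop:unbiased} is the $\E[T]=0$ instance).

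Second, let $\mathcal{F}_i=\sigma(X_1,\ldots,X_i)$ and write $T=\sum_{i=2}^n D_i$ where $D_i=\sum_j\big(2\sum_{i'<i}b_{ii'}Y_{i'j}\big)Y_{ij}+\sum_j b_{ii}Z_{ij}$. Since $X_i$ is independent of $\mathcal{F}_{i-1}$ and $\E[Y_{ij}]=\E[Z_{ij}]=0$, we have $\E[D_i\mid\mathcal{F}_{i-1}]=0$, so $\{D_i\}$ is a martingale-difference array. I then apply a martingale CLT (Brown/McLeish), which reduces the proof to two checks: (a) $\sum_i\E[D_i^2\mid\mathcal{F}_{i-1}]/\Theta_n\to 1$ in probability, and (b) the Lindeberg condition, which I verify through the sufficient fourth-moment bound $\sum_i\E[D_i^4]/\Theta_n^2\to 0$. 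For (a), the expectation of $\sum_i\E[D_i^2\mid\mathcal{F}_{i-1}]$ matches the four-term formula \eqref{define:Theta_n} piece-by-piece (the first term of $\Theta_n$ vanishes because $\mu_k=\mu$), and the variance of this random quadratic form in the $Y_{i'}$ is $o(\Theta_n^2)$ under the bounds encoded in $\alpha_n$ and $\beta_n$. For (b), the expansion of $\E[D_i^4]$ reduces to sums controlled by $\sum_i N_i^2\|\Omega_i\|_3^3$, $\|\mu\|_3^3$, $\|\mu\|_4^4$ and $\sum_k\|\Sigma_k\|_F^2$, all negligible relative to $\Theta_n^2$ under \eqref{cond2-regular}, including the condition $\|\mu\|_4^4/(K\|\mu\|^4)=o(1)$.

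The main obstacle is the bookkeeping for step two: the multinomial coordinates $Y_{ij}$ are correlated across $j$ through $\sum_j X_{ij}=N_i$, the $N_i$ are heterogeneous, and the quadruple sums appearing in $\E[D_i^2\mid\mathcal{F}_{i-1}]$ and $\E[D_i^4]$ must be partitioned by coefficient type (within- versus between-group; $i=i'$ versus $i\neq i'$) and then matched against the four pieces of $\Theta_n$. The conditions in \eqref{cond1-basic}--\eqref{cond2-regular} are calibrated so that, once this matching is performed, each residual term is $o(\Theta_n)$ or $o(\Theta_n^2)$ at the relevant order, yielding $T/\sqrt{\Theta_n}\to N(0,1)$.
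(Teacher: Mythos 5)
Your decomposition of $T$ is equivalent to the paper's: after writing $Y_{ij}=\sum_{r=1}^{N_i}Z_{ijr}$ and noting $Z_{ij}=\frac{N_i}{N_i-1}\sum_{r\neq s}Z_{ijr}Z_{ijs}$, your representation $T=\sum_j\bigl(\sum_{i\neq i'}b_{ii'}Y_{ij}Y_{i'j}+\sum_i b_{ii}Z_{ij}\bigr)$ matches the uncorrelated pieces $U_2,U_3,U_4$ of Lemma~\ref{lem:decompose}. The martingale structure you then impose, however, is too coarse, and this is a genuine gap rather than a bookkeeping issue.

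You filter by documents, $\mathcal{F}_i=\sigma(X_1,\dots,X_i)$, so that the within-document quadratic form $\sum_j b_{ii}Z_{ij}$ sits entirely inside the single increment $D_i$. The theorem is stated for $n\bar N\to\infty$ and, as the introduction emphasizes, is meant to cover finite $n$ with $\bar N\to\infty$ (e.g.\ two-sample closeness testing with $n=K=2$). In that regime $\Theta_{n2}$ dominates $\Theta_n$ and comes entirely from these diagonal pieces, so $\E[D_i^2]\asymp\Theta_n/n$ for each of the finitely many $i$. Then $\sum_i\E[D_i^4]/\Theta_n^2\ge\sum_i(\E[D_i^2])^2/\Theta_n^2\gtrsim 1/n$, which does not vanish: your Lyapunov check (b) fails, and the Lindeberg condition fails as well since a single $D_i$ carries a non-vanishing fraction of the variance. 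The same failure occurs whenever one document length $N_i$ is a constant fraction of $n\bar N$, which \eqref{cond1-basic} does not preclude. The paper avoids this by refining the filtration to the single-trial level: it represents $X_i=N_i\Omega_i+\sum_{r=1}^{N_i}Z_{ir}$, orders the pairs $(i,r)$ lexicographically, and builds $\mc{M}_{\ell,s}=D_{\ell,s}+E_{\ell,s}$ from \eqref{eqn:Dells_def}--\eqref{eqn:Eells_def}. This gives $n\bar N\to\infty$ increments, and in particular the within-document U-statistic $\sum_j b_{\ell\ell}Z_{\ell j}$ is itself spread over $O(N_\ell)$ increments $E_{\ell,s}$, each individually negligible. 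To fix your proof you would need to replace the document filtration with the trial filtration; once you do, the conditional-variance and fourth-moment estimates you outline do go through, essentially reproducing Lemmas~\ref{lem:condl_var_bias}--\ref{lem:S_bounds_on_bounds}.

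A minor point: with your indexing, $D_1=\sum_j b_{11}Z_{1j}$ is nonzero, so the sum must start at $i=1$, not $i=2$.
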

\begin{thm} \label{thm:null2}
	Under the conditions of Theorem~\ref{thm:null}, as $n{\bar N}\to\infty$, $V/\Theta_n \to 1$ in probability, and 
	$
	\psi:=T/\sqrt{V} \to   N(0,1)
	$ in distribution. 
\end{thm}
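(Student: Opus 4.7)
My plan is to first prove $V/\Theta_n \to 1$ in probability and then deduce the asymptotic normality of $\psi$ from Theorem~\ref{thm:null} via Slutsky's lemma. Writing $\psi = (T/\sqrt{\Theta_n})\cdot\sqrt{\Theta_n/V}$ on the event $\{V>0\}$, the second factor converges to $1$ in probability by continuous mapping, and $T/\sqrt{\Theta_n}\to N(0,1)$ by Theorem~\ref{thm:null}, so Slutsky yields $\psi\to N(0,1)$; the rare event $V\le 0$, on which $\psi$ is set to $0$, is contained in $\{|V/\Theta_n-1|\ge 1\}$ and so has probability tending to zero.

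To establish $V/\Theta_n \to 1$ in probability, I would use a standard second-moment argument: show that $\E[V]/\Theta_n \to 1$ and $\var(V)/\Theta_n^2 \to 0$, then apply Chebyshev's inequality. Under $H_0$ the first term of $\Theta_n$ in \eqref{define:Theta_n} vanishes since $\mu_k=\mu$, and the three summands of $V$ in \eqref{define:V} should have expectations matching the remaining three terms of $\Theta_n$ to leading order. Using the multinomial identities $\E[X_{ij}X_{mj}]=N_iN_m\Omega_{ij}\Omega_{mj}$ for independent $X_i,X_m$ and $\E[X_{ij}^2-X_{ij}]=N_i(N_i-1)\Omega_{ij}^2$, the second and third sums of $V$ have expectations matching the third and fourth terms of $\Theta_n$ exactly, while the first sum matches the second term of $\Theta_n$ up to an $N_i^2/(N_i-1)$ versus $N_i$ discrepancy which, under the regularity conditions \eqref{cond1-basic}-\eqref{cond2-regular}, is absorbed by the dominant between-document contributions (the third term of $\Theta_n$).

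The main technical obstacle is the variance bound $\var(V) = o(\Theta_n^2)$. Writing $V = V_1+V_2+V_3$ for the three summands of \eqref{define:V} and using $\var(V)\le 3\sum_\ell \var(V_\ell) + 6\sum_{\ell<\ell'}|\mathrm{Cov}(V_\ell,V_{\ell'})|$, each summand is a weighted sum of products of two multinomial pair-terms, so its variance decomposes over quadruples $(i_1,m_1,i_2,m_2)$ of documents; independence of the $X_i$'s forces at least one shared document in any nonzero contribution. A case analysis by the number of distinct documents in the quadruple and their group memberships, combined with fourth-order multinomial moment formulas (which produce factors of the form $N_i^2\Omega_{ij}\Omega_{ij'}$ and $N_i^3\Omega_{ij}^2$), reduces each piece to index sums controllable by $\alpha_n$, $\beta_n$ and $\|\mu\|_4^4/(K\|\mu\|^4)$ from \eqref{def:alpha_n}-\eqref{def:beta_n} together with the group-balance bound $n_k\bar N_k/(n\bar N)\le 1-c_0$ from \eqref{cond1-basic}, and under \eqref{cond2-regular} each such sum is $o(\Theta_n^2)$. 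The combinatorial bookkeeping in this step, particularly in tracking when $(i_1,m_1)$ and $(i_2,m_2)$ sit in the same or different groups and in matching the resulting expressions against the corresponding terms of $\Theta_n^2$, is the main source of tedium; the cross-covariances $\mathrm{Cov}(V_\ell,V_{\ell'})$ are handled by the same quadruple decomposition and present no new difficulties.
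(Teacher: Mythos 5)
Your high-level plan --- Slutsky plus a Chebyshev second-moment argument for $V/\Theta_n\to 1$, with the event $\{V\le 0\}$ subsumed into $\{|V/\Theta_n-1|\ge 1\}$ --- is exactly the paper's route. Where your proposal diverges is in the bound on $\var(V)$. The paper first applies the representation $X_i = N_i\Omega_i + \sum_{r=1}^{N_i} Z_{ir}$ to $V$ (Lemma~\ref{lem:Vdecompose}), giving $V = \Theta_{n2}+\Theta_{n3}+\Theta_{n4} + A_1 + A_2 + A_3$, where the $A_\ell$ are mean-zero and mutually uncorrelated, with $A_1$ linear and $A_2, A_3$ quadratic in the $Z_{ir}$. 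Uncorrelatedness collapses $\var(V)$ to $\var(A_1)+\var(A_2)+\var(A_3)$, each of which is bounded in a few lines in Lemma~\ref{lem:varV}; then $\alpha_n=o(1)$ from \eqref{cond2-regular} delivers Chebyshev (Proposition~\ref{prop:var_estimation_null}). Your proposal instead expands $\var(V_1+V_2+V_3)$ directly over quadruples of document indices with a case analysis on overlaps and group memberships; this reaches the same bounds but forces you to re-derive, term by term, the cancellations that the $Z_{ir}$ decomposition makes automatic. Both approaches are valid; the paper's is substantially shorter.

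One correction on the bias step: under $H_0$ there is no residual factor to absorb --- $\E V = \Theta_{n2}+\Theta_{n3}+\Theta_{n4}=\Theta_n$ exactly, since $\Theta_{n1}\equiv 0$. The mismatch you flagged is an artifact of the display \eqref{define:V}, whose first summand is written $\frac{X_{ij}^2-X_{ij}}{N_i(N_i-1)}$ and has expectation $\Omega_{ij}^2$, whereas $\Theta_{n2}$ carries the factor $\frac{N_i^3}{N_i-1}\Omega_{ij}^2$. The expression used in Lemma~\ref{lem:Vdecompose}, namely $\frac{N_iX_{ij}^2}{N_i-1}-\frac{N_iX_{ij}(N_i-X_{ij})}{(N_i-1)^2}=\frac{N_i^2(X_{ij}^2-X_{ij})}{(N_i-1)^2}$, has expectation $\frac{N_i^3}{N_i-1}\Omega_{ij}^2$ exactly and is the intended definition; with it the unbiasedness claim holds with equality, and the ``absorbed by the between-document terms'' argument is unnecessary (and as written would not actually go through, since $\Theta_{n2}$, $\Theta_{n3}$, $\Theta_{n4}$ are all of the same order by Lemma~\ref{lem:Theta_n2+n3+n4}).
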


By Theorem~\ref{thm:null2}, the asymptotic $p$-value is $1-\Phi(\psi)$, where $\Phi(\cdot)$ is the CDF of $N(0,1)$. For any $\kappa\in (0,1)$, the rejection region of the asymptotic level-$\kappa$ test is as given in \eqref{define:Reject}. 

The proofs of Theorems~\ref{thm:null}-\ref{thm:null2} contain two key steps. In the first step, we decompose $T$ into mutually uncorrelated terms. Define a set of independent, mean-zero random vectors $
\{Z_{ir}\}_{1\leq i\leq n, 1\leq r\leq N_i}$, where  $Z_{ir}\sim\mathrm{Multinomial}(1, \Omega_{i})-\Omega_i$. Then, $X_i=N_i\Omega_i+\sum_{r=1}^{N_i}Z_{ir}$ (in distribution). 
We plug it into \eqref{DELAC} to get $T=T_1+T_2+T_3+T_4$, where $T_1$ is a linear form of $\{Z_{ir}\}$, $T_2$-$T_4$ are quadratic forms of $\{Z_{ir}\}$, and $T_1$-$T_4$ are uncorrelated (see Section~\ref{sec:decomposition} of \cite{DELVE-supp}). In the second step, we construct a martingale for each term $T_j$. 
This is accomplished by re-arranging the double-index sequence $Z_{ir}$ to a single-index sequence and successively adding terms in this sequence to $T_j$.  We then apply the martingale central limit theorem (CLT) \citep{hall2014martingale} 
to prove asymptotic normality of each $T_j$. The asymptotic normality of $T$ follows by identifying the dominating terms in $T_1$-$T_4$ (as model parameters change, the dominating terms also change) and studying their joint distribution. 
This step involves extensive calculations to bound conditional variances and verify the Lindeberg conditions of martingale CLT, as well as subtle uses of the Cauchy-Schwarz inequality to simplify moment bounds. 

\begin{remark}[An adjustment when $p=O(1)$]
{\rm
		While we focus on high-dimensional settings, the case of $p=O(1)$ is still of interest. In this case,  the variance estimator $V$ may not be consistent. We propose a refined estimator $\widetilde{V}$ in Section~\ref{sec:Finite-p} of \cite{DELVE-supp}. When $V$ is replaced by $\widetilde{V}$, $\psi\to N(0,1)$ continues to hold.
}\end{remark}

\subsection{Power analysis} \label{subsec:Main-power}
Under the alternative hypothesis, the PMFs $\mu_1,\mu_2,\ldots,\mu_K$ are not the same. In Section~\ref{sec:Method}, we introduce a quantity $\rho^2$ (see \eqref{def:rhoSquare}) to capture the total variation in $\mu_k$'s, but this quantity is not scale-free. We define a scaled version of $\rho^2$ as 
\beq \label{def:omega_n}
\omega_n=\omega_n(\mu_1,\mu_2,\ldots,\mu_K): = \frac{1}{n\bar{N}\|\mu\|^2} \sum_{k=1}^Kn_k\bar{N}_k \|\mu_{k}-\mu\|^2. 
\eeq
It is seen that $\omega_n\leq \max_{k}\{\frac{\|\mu_k-\mu\|^2}{\|\mu\|^2}\}$, which is properly scaled. 

\begin{thm} \label{thm:alt}
	Consider Models~\eqref{Mod1-data}-\eqref{Mod2-groupMean}, where \eqref{cond1-basic} and \eqref{cond2-regular} are satisfied. 
	Then, $\E[T]=n\bar{N}\|\mu\|^2\omega_n^2$, and $\VV(T)= O\bigl(\sum_{k=1}^K\|\mu_k\|^2\bigr) + \E[T]\cdot O\bigl(\max_{1\leq k\leq K}\|\mu_k\|_\infty\bigr)$. 
\end{thm}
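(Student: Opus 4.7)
The first claim, $\E[T]=n\bar N\|\mu\|^2\omega_n^2$, is immediate from Lemma~\ref{prop:unbiased} together with the definition of $\omega_n$ in \eqref{def:omega_n}: Lemma~\ref{prop:unbiased} gives $\E[T]=\rho^2=\sum_{k}n_k\bar N_k\|\mu_k-\mu\|^2$, and factoring out $n\bar N\|\mu\|^2$ produces the stated expression. No work beyond unpacking definitions is needed here.

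For the variance, my plan is to recycle the orthogonal decomposition $T=T_1+T_2+T_3+T_4$ that was set up for the null-case analysis of Theorems~\ref{thm:null}--\ref{thm:null2}. Starting from the representation $X_i=N_i\Omega_i+\sum_{r=1}^{N_i}Z_{ir}$ with independent, mean-zero $Z_{ir}\sim\mathrm{Multinomial}(1,\Omega_i)-\Omega_i$, the piece $T_1$ is linear in the $Z_{ir}$'s while $T_2$, $T_3$, $T_4$ are quadratic, and by parity (odd vs.\ even total degree in the independent $Z$'s) the four pieces are pairwise uncorrelated. Hence $\VV(T)=\sum_{j=1}^{4}\VV(T_j)$, and I will bound each piece separately.

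The linear piece carries the entire signal--noise interaction. Expanding $T_1$ in the schematic form $2\sum_k n_k\bar N_k\sum_j(\mu_{kj}-\mu_j)\,\varepsilon_{kj}$, where $\varepsilon_{kj}$ is a centred linear combination of $Z_{ir}$'s with per-coordinate variance of order $\mu_{kj}/(n_k\bar N_k)$, yields
\begin{align*}
\VV(T_1) \;\lesssim\; \sum_{k=1}^{K} n_k\bar N_k\sum_{j=1}^{p}(\mu_{kj}-\mu_j)^2\,\mu_{kj} \;\leq\; \max_k\|\mu_k\|_\infty\cdot\rho^2 \;=\; \E[T]\cdot O\bigl(\max_k\|\mu_k\|_\infty\bigr),
\end{align*}
which matches the second summand in the claimed variance bound. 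The quadratic pieces $T_2$, $T_3$, $T_4$ do not involve the contrasts $\mu_k-\mu$ through their variances to leading order; they reproduce, up to constants, the three non-signal summands of $\Theta_n$ in \eqref{define:Theta_n}. Each of these can be handled by standard second-moment bookkeeping, using $\sum_j\Omega_{ij}^2\leq\|\Omega_i\|_\infty\leq 1-c_0$ from \eqref{cond1-basic} and Cauchy--Schwarz on the cross sums $\sum_j N_iN_m\Omega_{ij}\Omega_{mj}$, collapsing them to $O\bigl(\sum_k\|\mu_k\|^2\bigr)$.

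The main obstacle will be the careful bookkeeping of cross-terms inside the quadratic pieces under the alternative, in particular verifying that the debiasing corrections in \eqref{DELAC} do not leak hidden $\rho^2$ contributions into $\VV(T_2)+\VV(T_3)+\VV(T_4)$ that would spoil the clean separation ``null-type variance $O(\sum_k\|\mu_k\|^2)$ plus signal-dominated variance $O(\E[T]\cdot\max_k\|\mu_k\|_\infty)$.'' Once orthogonality of the decomposition has been re-verified in the presence of $\mu_k\neq\mu$, the remaining computations reduce to moment inequalities already deployed in the null-case analysis.
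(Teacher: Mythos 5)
Your proposal matches the paper's proof (Section~\ref{sec:thm_alt_proof}): both start from $\E[T]=\rho^2$ via Lemma~\ref{prop:unbiased}, decompose the centred statistic into the four mutually uncorrelated pieces ${\bf 1}_p'U_1,\ldots,{\bf 1}_p'U_4$, bound $\var({\bf 1}_p'U_1)\leq\Theta_{n1}\lesssim\max_k\|\mu_k\|_\infty\cdot\rho^2$ via Lemma~\ref{lem:var1}, bound each of $\var({\bf 1}_p'U_2),\var({\bf 1}_p'U_3),\var({\bf 1}_p'U_4)$ by $\Theta_{n2},\Theta_{n3},\Theta_{n4}$ respectively (Lemmas~\ref{lem:var2}--\ref{lem:var4}), and invoke $\Theta_{n2}+\Theta_{n3}+\Theta_{n4}\asymp\sum_k\|\mu_k\|^2$ (Lemma~\ref{lem:Theta_n2+n3+n4}) to conclude. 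Two small notes: (i) parity only separates the odd piece $T_1$ from the three even (quadratic) pieces, so the pairwise orthogonality among $T_2,T_3,T_4$ must instead come from the disjointness of the $(i,r),(m,s)$ index patterns, exactly as in Lemma~\ref{lem:decompose}, which is already stated for the general model; (ii) the worry about debiasing ``leaking'' $\rho^2$ into $\var(T_2)+\var(T_3)+\var(T_4)$ is unfounded for the same reason --- Lemmas~\ref{lem:var2}--\ref{lem:var4} and \ref{lem:Theta_n2+n3+n4} hold under the alternative and produce bounds depending only on $\Omega_i$ and $\mu_k$, not on the contrasts $\mu_k-\mu$.
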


For the DELVE test to have an asymptotically full power, we need $\E[T]\gg \sqrt{\VV(T)}$. By Theorem~\ref{thm:alt}, this is satisfied if $\E[T]\gg \sqrt{\sum_k \|\mu_k\|^2}$ and $\E[T]\gg \max_k\|\mu_k\|_\infty$. Between these two requirements, the latter one is weaker; hence, we only need $\E[T]\gg \sqrt{\sum_{k=1}^K \|\mu_k\|^2}$. It gives rise to the following theorem:

\begin{thm} \label{thm:alt2}
	Under the conditions of Theorem~\ref{thm:alt}, we further assume that under the alternative hypothesis, as $n\bar{N}\to\infty$, 
	\beq \label{SNR}
	\mathrm{SNR}_n:= 
	\frac{n\bar{N}\|\mu\|^2\omega_n^2}{\sqrt{\sum_{k=1}^K \|\mu_k\|^2}}\;\; \to\;\;\infty. 
	\eeq  
	Under the alternative hypothesis, $\psi\to\infty$ in probability. For any fixed $\kappa\in (0,1)$, the level-$\kappa$ DELVE test has an asymptotic level of $\kappa$ and an asymptotic power of $1$. If we choose $\kappa=\kappa_n$ such that $\kappa_n\to 0$ and $1 - \Phi(\mathrm{SNR}_n) = o(\kappa_n)$, where $\Phi$ is the CDF of $N(0,1)$,
	then the sum of type I and type II errors of the DELVE test converges to $0$.  
\end{thm}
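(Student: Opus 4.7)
The argument has three stages: concentrate $T$ about its mean, bound $V$ in probability, and then read off level and power from Theorem~\ref{thm:null2}. From Theorem~\ref{thm:alt},
\[
\frac{\VV(T)}{\E[T]^2} \;\le\; \frac{C\sum_{k=1}^K\|\mu_k\|^2}{\E[T]^2} + \frac{C\max_{1\le k\le K}\|\mu_k\|_\infty}{\E[T]}.
\]
The first ratio equals $C\,\mathrm{SNR}_n^{-2}$ by the definition of $\mathrm{SNR}_n$. For the second, the elementary inequality $\|v\|_\infty \le \|v\|_2$ applied to each $\mu_k$ gives $\max_k\|\mu_k\|_\infty \le \max_k\|\mu_k\|_2 \le (\sum_k\|\mu_k\|^2)^{1/2}$, so this ratio is at most $C\,\mathrm{SNR}_n^{-1}$. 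Both tend to zero, and Chebyshev then yields $T = \E[T]\bigl(1 + o_{\mathbb{P}}(1)\bigr) = n\bar N\|\mu\|^2\omega_n^2\bigl(1+o_{\mathbb{P}}(1)\bigr)$.

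\emph{Upper bound on $V$.} Decompose $V = V_1+V_2+V_3$ as in~\eqref{define:V}. The multinomial identities $\E[(X_{ij}^2-X_{ij})/(N_i(N_i-1))] = \Omega_{ij}^2$ and, for $i\ne m$, $\E[X_{ij}X_{mj}] = N_iN_m\Omega_{ij}\Omega_{mj}$ reduce $\E[V]$ to a nonnegative quadratic form in the $\Omega_i$'s. Using $(1/n_k\bar N_k - 1/n\bar N)^2 \le (n_k\bar N_k)^{-2}$ together with the convexity bound $\|\mu_k\|^2 \le (n_k\bar N_k)^{-1}\sum_{i\in S_k}N_i\|\Omega_i\|^2$, each term is at most a constant multiple of $\sum_k\|\mu_k\|^2$, so $\E[V]\le C\sum_k\|\mu_k\|^2$. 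A second-moment calculation paralleling the proof of Theorem~\ref{thm:null2} shows $\VV(V) = o(\E[V]^2)$, so Markov delivers $V = O_{\mathbb{P}}(\sum_k\|\mu_k\|^2)$. Together with the concentration of $T$,
\[
\psi \;=\; \frac{T}{\sqrt V} \;\gtrsim\; \frac{\E[T]}{\sqrt{\sum_k\|\mu_k\|^2}} \;=\; \mathrm{SNR}_n \;\to\; \infty \quad \text{in probability.}
\]
This is the most delicate step: $V$ is calibrated for the null, and under $H_1$ the cross-product estimators $X_{ij}X_{mj}$ pick up additional mean from the discrepancies $\mu_k-\mu$, and one must verify that the net effect still leaves $V = O_{\mathbb{P}}(\sum_k\|\mu_k\|^2)$.

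\emph{Level, power, and the $\kappa_n\to 0$ refinement.} For fixed $\kappa\in(0,1)$, Theorem~\ref{thm:null2} gives $\mathbb{P}_{H_0}(\psi>z_\kappa)\to\kappa$ and the preceding paragraph gives $\mathbb{P}_{H_1}(\psi>z_\kappa)\to 1$. For a sequence $\kappa_n\to 0$ with $1-\Phi(\mathrm{SNR}_n) = o(\kappa_n)$: the Type~I error is handled by tightness, since convergence of $\psi$ to $N(0,1)$ under $H_0$ implies that for each $\epsilon>0$ there exists a finite $M$ with $\mathbb{P}_{H_0}(\psi>M)<\epsilon$ for all large $n$, and since $z_{\kappa_n}\to\infty$ we eventually have $z_{\kappa_n}>M$ and $\mathbb{P}_{H_0}(\psi>z_{\kappa_n})<\epsilon$. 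For the Type~II error, monotonicity of $1-\Phi$ together with the hypothesis forces $\mathrm{SNR}_n > z_{\kappa_n}$ eventually, and a quantitative refinement of the previous paragraph (namely $\psi$ concentrates around $\mathrm{SNR}_n$ on a scale asymptotically negligible compared with $\mathrm{SNR}_n - z_{\kappa_n}$, which in turn follows from $1-\Phi(\mathrm{SNR}_n) = o(1-\Phi(z_{\kappa_n}))$) yields $\mathbb{P}_{H_1}(\psi\le z_{\kappa_n})\to 0$. Hence the sum of Type~I and Type~II errors vanishes.
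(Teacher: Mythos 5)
Your overall architecture matches the paper's: concentrate $T$ about its mean via Chebyshev using the variance bound from Theorem~\ref{thm:alt}, control $V$ in probability, and then argue for level and power. The paper packages the $V$-control as Proposition~\ref{prop:var_estimation_alt} (showing $V=O_{\mathbb{P}}(\var(T))$ with $\E V=\Theta_{n2}+\Theta_{n3}+\Theta_{n4}\asymp\sum_k\|\mu_k\|^2$ via Lemmas~\ref{lem:Theta_n2+n3+n4}, \ref{lem:Vdecompose}, \ref{lem:varV}), and also explicitly verifies $V>0$ with high probability, which matters because $\psi$ is set to $0$ on the event $\{V<0\}$; you do not address this.

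Two more substantive issues. First, your ``convexity bound'' $\|\mu_k\|^2\le(n_k\bar N_k)^{-1}\sum_{i\in S_k}N_i\|\Omega_i\|^2$ is Jensen's inequality and runs in the wrong direction for the purpose you invoke it for: to show $\E[V]\lesssim\sum_k\|\mu_k\|^2$ one needs a \emph{lower} bound of this type on $\|\mu_k\|^2$, namely $\sum_{i\in S_k}N_i^2\|\Omega_i\|^2 \le \sum_{i,m\in S_k}N_iN_m\langle\Omega_i,\Omega_m\rangle = n_k^2\bar N_k^2\|\mu_k\|^2$, which follows from the nonnegativity of the cross terms rather than convexity. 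Second, in the last paragraph you assert that ``$\psi$ concentrates around $\mathrm{SNR}_n$ on a scale asymptotically negligible compared with $\mathrm{SNR}_n - z_{\kappa_n}$,'' and that this ``follows from $1-\Phi(\mathrm{SNR}_n)=o(1-\Phi(z_{\kappa_n}))$.'' That implication does not hold: using $1-\Phi(x)\sim\phi(x)/x$, the hypothesis $1-\Phi(\mathrm{SNR}_n)=o(\kappa_n)$ only forces $(\mathrm{SNR}_n-z_{\kappa_n})(\mathrm{SNR}_n+z_{\kappa_n})\to\infty$, and since both terms diverge this is compatible with $\mathrm{SNR}_n-z_{\kappa_n}\to 0$, so the gap need not dominate the $O_{\mathbb{P}}(1)$-multiplicative fluctuation you established for $\psi$. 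The paper's own treatment of the third claim is also terse here (it proves $\psi\gtrsim\mathrm{SNR}_n$ with high probability and invokes the choice of $\kappa_n$), but your version introduces an unverified and generally false auxiliary claim, so you should instead carry constants explicitly through the Chebyshev and $\E V$ bounds and argue $\psi> c\,\mathrm{SNR}_n>z_{\kappa_n}$ for an explicit $c$, as the paper implicitly does.
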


The detection boundary in \eqref{SNR} has simpler forms in some special cases. For example, if $\|\mu_k\|\asymp \|\mu\|$ for $1\leq k\leq K$, then $\mathrm{SRN}_n\asymp n\bar{N}\omega_n^2\|\mu\|/\sqrt{K}$. If, furthermore, all entries of $\mu$ are at the same order, which implies $\|\mu\|\asymp p^{-1/2}$, then $\mathrm{SRN}_n\asymp n^2\bar{N}^2\omega_n^2/\sqrt{Kp}$. In this case, the detection boundary simplifies to
$
\omega_n^4n^2\bar{N}^2/(Kp) \to\infty. 
$

\begin{remark}[The effect of de-biasing on power]
\label{debiasing-effect-remark}
{\rm
Let $\widetilde{T}$ be the statistic in \eqref{define:tildeT} without bias correction. 
Under $H_1$, when $\mathrm{SNR}_n\to\infty$ but $n\bar{N}\ll Kp$, the bias in $\widetilde{T}$ can dominate the ``signal" $\rho^2$. Consequentely, any test based on $\widetilde{T}$ has no power (details and examples are in Section~\ref{supp:ANOVA} of \cite{DELVE-supp}). This shows that de-biasing is critical for achieving not only parameter-free limiting null but also good power.
}\end{remark}

\subsection{A matching lower bound} \label{subsec:Main-LB}
We have seen that the DELVE test successfully separates two hypotheses if $\mathrm{SNR}_n\to\infty$, where $\mathrm{SNR}_n$ is as defined in \eqref{SNR}. We now present a lower bound to show that the two hypotheses are asymptotically indistinguishable if $\mathrm{SNR}_n\to 0$. 

Let $\ell_i\in \{1,2,\ldots,K\}$ denote the group label of $X_i$.  
Write $\xi=\{(N_i, \Omega_i, \ell_i)\}_{1\leq i\leq n}$. 
Let $\mu_k$, $\alpha_n$, $\beta_n$, and $\omega_n$ be the same as defined in \eqref{Mod2-groupMean}, \eqref{def:alpha_n}, \eqref{def:beta_n}, and \eqref{def:omega_n}, respectively. For each given $(n,p,K,\bar{N})$, we write $\mu_k=\mu_k(\xi)$ to emphasize its dependence on parameters, and similarly for $\alpha_n,\beta_n,\omega_n$. For any $c_0\in (0,1)$ and sequence $\epsilon_n$, define
\beq \label{LB-param-class}
{\cal Q}_n(c_0, \epsilon_n) := \Big\{ \xi=\{(N_i, \Omega_i, \ell_i)\}_{i = 1 }^n: \,  \mbox{\eqref{cond1-basic} holds for $c_0$}, \, \, \max (\alpha_n(\xi), \beta_n(\xi) ) \leq \epsilon_n  \Big\}
\eeq
Furthermore, for any sequence $\delta_n$, we define a parameter class for the null hypothesis and a parameter class for the alternative hypothesis: 
\begin{align} \label{LB-param-class2}
	{\cal Q}_{0n}^*(c_0,\epsilon_n)& ={\cal Q}_n(c_0, \epsilon_n)\cap \left\{\xi: \omega_n(\xi)=0 \right\}, \cr
	{\cal Q}_{1n}^*(\delta_n; c_0,\epsilon_n)& ={\cal Q}_n(c_0, \epsilon_n)\cap \left\{\xi:   \frac{n\bar{N}\|\mu(\xi)\|^2 \omega^2_n(\xi)}{\sqrt{\sum_{k=1}^K\|\mu_k(\xi)\|^2}} \geq \delta_n\right \}. 
\end{align}
\begin{thm} \label{thm:LB}
	Fix a constant $c_0\in (0,1)$ and positive sequences $\epsilon_n$ and $\delta_n$ such that $\epsilon_n\to 0$ as $n\to\infty$. 
	For any sequence of $(n, p,K, \bar{N})$ indexed by $n$, consider 
	Models~\eqref{Mod1-data}-\eqref{Mod2-groupMean} for $\Omega\in {\cal Q}_n(c_0, \epsilon_n)$. Let ${\cal Q}_{0n}^*(c_0,\epsilon_n)$ and ${\cal Q}_{1n}^*(\delta_n; c_0,\epsilon_n)$ be as in \eqref{LB-param-class2}. If $\delta_n\to 0$, then 
	$
	\limsup_{n\to\infty} \inf_{\Psi\in\{0,1\}}\bigl\{ \sup_{\xi\in {\cal Q}_{0n}^*(c_0,\epsilon_n)} \mathbb{P}_{\xi}(\Psi=1)+\sup_{\xi\in {\cal Q}_{1n}^*(\delta_n; c_0,\epsilon_n)} \mathbb{P}_{\xi}(\Psi=0)\bigr\}=1.
	$
\end{thm}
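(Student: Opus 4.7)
The plan is a Le Cam two-point reduction. It suffices to exhibit a single null configuration $\xi_0\in\Q_{0n}^*(c_0,\epsilon_n)$ together with a prior $\pi$ supported on $\Q_{1n}^*(\delta_n;c_0,\epsilon_n)$ for which the chi-square divergence between the Bayes mixture $\bar{\mathbb{P}}_\pi:=\int\mathbb{P}_\xi\,d\pi(\xi)$ and $\mathbb{P}_{\xi_0}$ tends to $0$. Then the Bayes bound $\sup_{\xi\in\Q_{1n}^*}\mathbb{P}_\xi(\Psi=0)\ge\bar{\mathbb{P}}_\pi(\Psi=0)$ together with the inequality $\mathrm{TV}(\bar{\mathbb{P}}_\pi,\mathbb{P}_{\xi_0})\le\tfrac12\sqrt{\chi^2(\bar{\mathbb{P}}_\pi\|\mathbb{P}_{\xi_0})}$ forces the total testing risk to $1$.

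For the construction I take a balanced uniform null: $N_i\equiv\bar N$, $n_k\equiv n/K$, and $\Omega_i\equiv\mu^{(0)}:=p^{-1}\mathbf 1_p$, so $\xi_0\in\Q_{0n}^*$ by inspection. For the prior I draw independent sign vectors $\epsilon_1,\ldots,\epsilon_K\in\{\pm 1\}^p$ satisfying $\mathbf 1^\top\epsilon_k=0$, further coupled so that $\sum_{k=1}^K\epsilon_k=0$ (e.g., by pairing groups with opposite signs when $K$ is even), set $v_k:=\tau p^{-1}\epsilon_k$, and let $\Omega_i=\mu^{(0)}+v_{\ell_i}$. The per-coordinate zero-sum keeps each $\Omega_i$ a valid PMF, while $\sum_k\epsilon_k=0$ forces $\mu=\mu^{(0)}$ exactly, yielding $\|\mu_k-\mu\|^2=\tau^2/p$ and $\omega_n^2=\tau^2$. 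Choosing $\tau^2=c_\star\sqrt{Kp}/(n\bar N)$ with $c_\star\to 0$ slowly gives $\mathrm{SNR}_n\asymp c_\star\le\delta_n$.

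The chi-square bound uses the multinomial moment identity
\begin{equation*}
\mathbb{E}_{X\sim\mathrm{Multinomial}(N,\mu^{(0)})}\Bigl[\prod_{j=1}^p\bigl(\Omega_j\Omega_j'/(\mu^{(0)}_j)^2\bigr)^{X_j}\Bigr]=\bigl(1+p\langle v,v'\rangle\bigr)^N
\end{equation*}
for $\Omega=\mu^{(0)}+v$, $\Omega'=\mu^{(0)}+v'$. Independence across $i$ and the group structure give
\begin{equation*}
1+\chi^2(\bar{\mathbb{P}}_\pi\|\mathbb{P}_{\xi_0})=\mathbb{E}_{\epsilon,\epsilon'}\prod_{k=1}^K\Bigl(1+\tfrac{\tau^2}{p}\langle\epsilon_k,\epsilon_k'\rangle\Bigr)^{n_k\bar N_k}\le\mathbb{E}\exp\!\Bigl(\tfrac{n\bar N\tau^2}{Kp}\sum_{k=1}^K\langle\epsilon_k,\epsilon_k'\rangle\Bigr),
\end{equation*}
where $\epsilon'$ is an independent copy and $(1+x)^N\le e^{Nx}$ is valid since $\tau^2=o(1)$. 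Conditioning on $\epsilon'$ and using the Rademacher MGF inequality $\cosh(\lambda)\le e^{\lambda^2/2}$, each of the $K$ independent factors is at most $\exp(p\lambda^2/2)$ with $\lambda=n\bar N\tau^2/(Kp)$ (the zero-sum constraint contributes only a negligible correction), giving the total bound $\exp\!\bigl(n^2\bar N^2\tau^4/(2Kp)\bigr)=\exp(\mathrm{SNR}_n^2/2)\to 1$.

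The main obstacle is the need to verify that $\pi$ is supported in $\Q_n(c_0,\epsilon_n)$ almost surely, not merely in expectation. In the uniform-plus-Rademacher construction, $\|\mu_k\|^2=(1+\tau^2)/p$, $\|\mu_k\|_3^3\asymp p^{-2}$, $\|\Sigma_k\|_F^2=\|\mu_k\|^4\asymp p^{-2}$, and $\|\mu\|_4^4=p^{-3}$ are deterministic, and direct substitution into \eqref{def:alpha_n}--\eqref{def:beta_n} gives $\alpha_n=O\bigl(\max\{(n\bar N)^{-1},Kp/(n^2\bar N^2)\}\bigr)$ and $\beta_n=O(1/p)$, while the third part of \eqref{cond2-regular} gives $\|\mu\|_4^4/(K\|\mu\|^4)=O(1/(Kp))$. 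All three vanish in the scaling $n^2\bar N^2/(Kp)\to\infty$, which is precisely the regime in which any $\epsilon_n\to 0$ is achievable, so for suitable $\epsilon_n$ the prior genuinely sits inside $\Q_{1n}^*(\delta_n;c_0,\epsilon_n)$. Parity bookkeeping (requiring $p$ even and using the pairing device to make $\sum_k\epsilon_k=0$ exact) is a routine adjustment that does not affect the rate.
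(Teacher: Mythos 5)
Your proposal follows the same overall strategy as the paper---Le Cam two-point reduction via a simple null, a Rademacher-parametrized alternative prior, and a chi-square bound using the multinomial moment-generating function identity---but the choice of prior is genuinely different and that difference is where a gap opens up.

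\textbf{What is different.} The paper's least-favorable prior is \emph{rank one}: a single Rademacher vector $b\in\{\pm1\}^{p/2}$ is shared across all groups, each group contributing only a sign $z_k$ (so $\delta_{kj}=z_k b_j$); the per-coordinate zero-sum $\sum_j\delta_{kj}=0$ is achieved through a two-halves structure with opposite signs (no conditioning on $b$ needed), and only the $z_k$'s are conditioned, on the event $\{|\sum_k z_k|\le 100\sqrt{K}\}$, which has probability at least $1/2$. The resulting chi-square exponent factors cleanly into a group piece $M_1=\sum_k z_k z_k^{(0)}$ and a coordinate piece $M_2=\sum_j\mu_j b_j b_j^{(0)}$, which are controlled separately by Hoeffding. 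Your prior, in contrast, draws $K$ independent sign vectors $\epsilon_1,\ldots,\epsilon_K\in\{\pm1\}^p$ (each its own direction), giving a chi-square exponent $\sum_k\langle\epsilon_k,\epsilon_k'\rangle$ that is a sum of $Kp$ Rademacher products rather than a product of two lower-dimensional terms. The rate you extract is the same, and the calculation is arguably more transparent. The paper's rank-one prior has the bonus of being a two-topic model, which is exactly what drives Corollary~\ref{cor:TM-LB}.

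\textbf{Where the gap is.} You impose the \emph{exact} constraint $\mathbf 1^\top\epsilon_k=0$ and say the constraint ``contributes only a negligible correction'' to the MGF bound $\cosh(\lambda)\le e^{\lambda^2/2}$. The paper's Bayes-rule trick (divide by $\mathbb{P}(\text{conditioning event})\ge1/2$) does not port over here because an exact zero-sum has probability $\Theta(p^{-1/2})\to0$, so you cannot simply absorb the conditioning into an $O(1)$ factor. The inequality does still hold---the uniform distribution on the balanced-sign slice is negatively associated, which gives $\mathbb{E}\exp(\lambda\sum_j\epsilon_{kj}a_j)\le\prod_j\mathbb{E}\exp(\lambda\epsilon_{kj}a_j)=\cosh(\lambda)^p$---but that is a nontrivial step that needs to be stated and justified, not waved past. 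The simpler fix, which also removes the need for the $K$-even pairing device, is to mirror the paper's two-halves construction inside each group: draw $\tilde\epsilon_k\in\{\pm1\}^{p/2}$ i.i.d.\ unconstrained, set $\epsilon_k=(\tilde\epsilon_k,-\tilde\epsilon_k)$, and enforce $\sum_k\epsilon_k=0$ via pairing (or skip that constraint and absorb the $O(K^{-1})$ drift of $\mu$ as the paper does in Lemma~\ref{lem:LB-signal}). With that modification the argument is complete and delivers the same detection boundary.
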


\subsection{The special case of $K=2$} \label{subsec:K=2}
The special case of $K=2$ is found in closeness testing and authorship attribution. 
We study this case more carefully. 
Given $\{X_i\}_{1\leq i\leq n}$ and $\{G_i\}_{1\leq i\leq m}$, we assume
\beq \label{model-2sample}
X_i \sim \mathrm{Multinomial}(N_i, \Omega_i), \qquad 
G_j \sim \mathrm{Multinomial}(M_j, \Gamma_j). 
\eeq
Write $\bar{N}=n^{-1}\sum_{i=1}^n N_i$ and $\bar{M}=m^{-1}\sum_{i=1}^m M_i$. The null hypothesis becomes
\beq\label{define:eta(K=2)}
H_0:\quad \eta=\theta, \qquad \mbox{where }\eta = \frac{1}{n\bar{N}}\sum_{i=1}^n N_i \Omega_{i}, \mbox{ and }  \theta = \frac{1}{m\bar{M}}\sum_{i=1}^m M_i \Gamma_{i},
\eeq 
where $\theta$ and $\eta$ are the two group-wise mean PMFs. In this case, the test statistic $\psi$ has a more explicit form as in \eqref{DELAC(K=2)}-\eqref{define:V(K=2)}.

In our previous results for a general $K$, the regularity conditions (e.g., \eqref{cond1-basic}) impose restrictions on the balance of sample sizes among groups.
For $K=2$, the severely unbalanced setting is interesting (e.g., in authorship attribution, $n=1$ and $m$ can be large). We relax the regularity conditions to the following ones:

\begin{condition} \label{cond:K=2}
	Let $\theta$ and $\eta$ be as in \eqref{define:eta(K=2)} and define two matrices $\Sigma_1 = \frac{1}{n\bar{N}}\sum_{i=1}^n N_i\Omega_i\Omega_i'$ and $\Sigma_2 = \frac{1}{m\bar{M}}\sum_{i=1}^m M_i\Gamma_i\Gamma_i'$. We assume that the following statements are true (a) For $1\leq i\leq n$ and $1\leq j\leq m$, $N_i \geq 2$, $\| \Omega_i \|_\infty \leq 1 - c_0$, $M_j\geq 2$, and $\| \Gamma_j \|_\infty \leq 1 - c_0$, where $c_0\in (0,1)$ is a contant, (b) $\max\big\{ 
	\big(  \frac{\| \eta \|_3^3}{ n \bar{N}} + \frac{\| \theta \|_3^3}{ m \bar{M}}   \big), \,\big(\frac{ \| \eta \|_2^2 }{ n^2 \bar{N}^2}+ \frac{ \| \theta \|_2^2 }{ m^2 \bar{M}_2^2}\big)\big\} 
	\big /   \bigl\| \frac{m \bar{M}}{ n\bar{N}+ m \bar{M}} \eta +  
	\frac{n \bar{N}}{ n\bar{N}+ m \bar{M}} \theta  \bigr\|^4=o(1)$, (c) $\max \big\{ \sum_i \frac{N_i^2}{n^2 \bar{N}^2} \| \Omega_i \|_3^3,\,  \sum_i \frac{M_i^2}{m^2 \bar{M}^2} \| \Gamma_i \|_3^3, \, \|\Sigma_1\|_F^2 + \| \Sigma_2 \|_F^2\big\}
	\big/ \| \mu \|^2 =o(1)$, and (d) $\| \mu \|_4^4 / \| \mu \|^4=o(1)$. 
\end{condition}
\noindent
Condition (a) is similar to \eqref{cond1-basic}, except that we drop the sample size balance requirement. Conditions (b)-(d) are equivalent to \eqref{cond2-regular} but have more explicit expressions for $K=2$.

\begin{thm} \label{thm:K=2} 
	In Model \eqref{model-2sample}, we test the null hypothesis $H_0$: $\theta=\mu$. As $\min\{n{\bar N}, m\bar{M}\}\to\infty$, suppose Condition~\ref{cond:K=2} is satisfied.  Under the alternative hypothesis, we further assume 
	\beq \label{SNR(K=2)}
	\frac{ \| \eta - \theta \|^2 }{  \big( \frac{1}{n\bar{N}} + \frac{1}{m \bar{M}}  \big) 
		\max\{\| \eta \|,\, \| \theta \|\} } \to \infty. 
	\eeq
	Consider the DELVE test statistic $\psi = T/\sqrt{V}$. The following statements are true. Under the null hypothesis, $\psi \to N(0,1)$ in distribution. Under the alternative hypothesis, $\psi \to\infty$ in probability. Moreover for any fixed $\kappa\in (0,1)$, the level-$\kappa$ DELVE test has an asymptotic level of $\kappa$ and an asymptotic power of $1$. 
\end{thm}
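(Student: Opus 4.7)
The strategy is to specialize the proofs of Theorems~\ref{thm:null}--\ref{thm:alt2} to $K=2$, exploiting the explicit forms \eqref{DELAC(K=2)}--\eqref{define:V(K=2)} so that the sample-size balance requirement in \eqref{cond1-basic} can be dropped. With $(n_1\bar N_1, n_2\bar N_2) = (n\bar N, m\bar M)$, the overall mean PMF is $\mu = (n\bar N + m\bar M)^{-1}(n\bar N\,\eta + m\bar M\,\theta)$, and Lemma~\ref{prop:unbiased} together with a direct calculation gives
\begin{equation*}
\E[T] \;=\; n\bar N\,\|\eta - \mu\|^2 + m\bar M\,\|\theta - \mu\|^2 \;=\; \frac{n\bar N\,m\bar M}{n\bar N + m\bar M}\,\|\eta - \theta\|^2,
\end{equation*}
which vanishes under $H_0$ and matches, up to the factor $\max\{\|\eta\|,\|\theta\|\}$, the numerator of \eqref{SNR(K=2)} under $H_1$.

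Under $H_0$ I would follow the decomposition approach from Section~\ref{subsec:Main-null}. Writing $X_i = N_i\Omega_i + \sum_{r=1}^{N_i} Z_{ir}$ and $G_j = M_j\Gamma_j + \sum_{r=1}^{M_j} Z'_{jr}$ for centered one-trial multinomial increments and plugging into \eqref{DELAC(K=2)} yields $T = T_1+T_2+T_3+T_4$ with $T_1$ linear and $T_2,T_3,T_4$ quadratic in the $Z_{ir},Z'_{jr}$, all mutually uncorrelated. Re-indexing these increments as a single sequence and invoking the martingale CLT of \cite{hall2014martingale} gives $T/\sqrt{\Theta_n}\to N(0,1)$; the conditional-variance and Lindeberg conditions, which in the general theorem were controlled via $\alpha_n,\beta_n$, are here controlled by parts (b)--(d) of Condition~\ref{cond:K=2} after the substitution $\max_k (n_k\bar N_k)^{-1}\mapsto (n\bar N)^{-1}+(m\bar M)^{-1}$. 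For the denominator, I would then verify $V/\Theta_n \to 1$ in probability: the three $U$-statistic summands in \eqref{define:V(K=2)} are (asymptotically) unbiased estimates of the three non-vanishing pieces of $\Theta_n$, and Condition~\ref{cond:K=2}(c)--(d) bounds the variance of each. Slutsky then delivers $\psi \to N(0,1)$.

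Under $H_1$ I would mirror Theorem~\ref{thm:alt} to obtain $\var(T) = O(\|\eta\|^2 + \|\theta\|^2) + \E[T]\cdot O(\max\{\|\eta\|_\infty,\|\theta\|_\infty\})$. Since $\|\eta\|_\infty \le \|\eta\|$, the separation hypothesis \eqref{SNR(K=2)} is precisely $\E[T]\gg \sqrt{\var(T)}$, so Chebyshev gives $T/\sqrt{\Theta_n}\to\infty$. Combined with the bound $V = \Theta_n\,(1 + o_{\mathbb P}(1)) + O_{\mathbb P}(\E[T])$ obtained by extending the null-case second-moment analysis of $V$ to the alternative, this yields $\psi = T/\sqrt V \to\infty$ in probability, and hence the asserted asymptotic level $\kappa$ and asymptotic power one. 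The main obstacle I anticipate is the severely unbalanced regime $n\bar N \ll m\bar M$ (for instance authorship attribution with $n=1$), where the three components of $V$ live at genuinely different scales and one must rule out the possibility that the single-sample quadratic piece swamps $\Theta_n$. I would handle this by factoring $(n\bar N\,m\bar M)/(n\bar N+m\bar M)^2$ out of each component and comparing term by term, reducing every ratio to a quantity controlled by Condition~\ref{cond:K=2}(b)--(c); this replaces the group-balance argument used in the general proof.
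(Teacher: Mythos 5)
Your proposal is correct and follows essentially the same route as the paper: specialize the martingale decomposition and CLT machinery (Lemmas~\ref{lem:condl_var_bias}--\ref{lem:S_bounds_on_bounds}), which only need $\min_i N_i\geq 2$ and $\|\Omega_i\|_\infty\leq 1-c_0$ and hence survive the removal of the sample-size balance requirement, replace the lower bound $\Theta_{n2}+\Theta_{n3}+\Theta_{n4}\asymp K\|\mu\|^2$ by the $K=2$ weighted-average normalization $\bigl\|\frac{m\bar M}{n\bar N + m\bar M}\eta + \frac{n\bar N}{n\bar N + m\bar M}\theta\bigr\|^2$ (Lemma~\ref{lem:Theta_n2+n3+n4-K=2}), compute $\E[T]=\rho^2=\bigl(\frac{1}{n\bar N}+\frac{1}{m\bar M}\bigr)^{-1}\|\eta-\theta\|^2$ exactly, and then use Chebyshev together with the $K=2$ variance-estimation propositions. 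Your ``factor out $(n\bar N\,m\bar M)/(n\bar N+m\bar M)^2$ and compare term by term'' step is exactly what the paper's Lemma~\ref{lem:Theta_n2+n3+n4-K=2} encodes, and the one minor omission is that one also needs $V>0$ with high probability (via a Markov argument on $\var(V)$) since $\psi$ is defined to be zero on $\{V<0\}$.
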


Compared with the theorems for a general $K$, first, Theorem~\ref{thm:K=2} allows the two groups to be severely unbalanced and reveals that the detection boundary depends on the harmonic mean of $n\bar{N}$ and $m\bar{M}$. Second, the detection boundary is expressed using $\|\eta-\theta\|$, which is easier to interpret. We also note that, when $K = 2$, straightforward calculation yields $\mathbb{E}[T] = \rho^2 = ( \frac{1}{n\bar{N}} + \frac{1}{m\bar{M}} )^{-1} \| \eta - \theta \|^2$, which explains the appearance of the harmonic means in the detection boundary \eqref{SNR(K=2)}.


\subsection{The special case of $K=n$} \label{subsec:K=n}
The special case of $K=n$ is interesting for two reasons. First, the application example of global testing in topic models corresponds to $K=n$. Second, for any $K$, when $\Omega_i$'s within each group are assumed to be the same (e.g., this is the case in closeness testing of discrete distributions), it suffices to aggregate the counts in each group, i.e., let $Y_k=\sum_{i\in S_k}X_i$ and operate on $Y_1,\ldots,Y_K$ instead of the original $X_i$'s; this reduces to the case of $K=n$. 

When $K=n$, the null hypothesis has a simpler form:
\beq \label{null(K=n)}
H_0: \quad \Omega_i=\mu, \qquad 1\leq i\leq n. 
\eeq
Moreover, under the alternative hypothesis, the quantity $\omega_n^2$ in \eqref{def:omega_n} simplifies to 
\beq \label{omega(K=n)}
\omega_n=\omega_n(\Omega_1,\Omega_2,\ldots,\Omega_n)=\frac{1}{n\bar{N}\|\mu\|^2}\sum_{i=1}^n N_i\|\Omega_i-\mu\|^2.
\eeq
The DELVE test statistic also has a simplified form as in \eqref{DELAC(K=n)}-\eqref{define:V(K=n)}. 
We can prove the same theoretical results under {\it weaker conditions}: 

\begin{condition} \label{cond:K=n}
	We assume that the following statements are true: (a) For a constant $c_0\in (0,1)$, $2\leq N_i \leq (1-c_0)n\bar{N}$ and $\| \Omega_i \|_\infty \leq 1 - c_0$, $1\leq i\leq n$, and (b) $\max\big\{ \sum_i \frac{ \| \Omega_i \|_3^3 }{ N_i}  , \, \sum_i \frac{ \| \Omega_i \|^2 }{ N_i^2}  \big\} 	\big/ ( \sum_i \| \Omega_i \|^2)^2=o(1)$, and  $(\sum_i \| \Omega_i \|_3^3)/(n \| \mu \|^2) =o(1)$
\end{condition}

\noindent
When $K=n$, Condition (a) is equivalent to \eqref{cond1-basic}; and Condition (b) is weaker than \eqref{cond2-regular}, as
we have dropped the requirement  $\frac{\|\mu\|_4^4}{K\|\mu\|^4}=o(1)$. 
We obtain weaker conditions for $K=n$ because the dominant terms in $T$ differ from those for $K<n$.


\begin{thm} \label{thm:K=n} 
	In Model \eqref{Mod1-data}, we test the null hypothesis \eqref{null(K=n)}. As $n\to\infty$, we assume that Condition~\ref{cond:K=n} is satisfied. Under the alternative, we further assume that 
	\beq \label{SNR(K=n)}
	\frac{n\bar{N}\|\mu\|^2\omega_n^2}{ \sqrt{\sum_{i=1}^n \|\Omega_i \|^2}} \to\infty. 
	\eeq
	Let $T$ and $V^*$ be the same as in \eqref{DELAC(K=n)}-\eqref{define:V(K=n)}. Consider the simplified DELVE test statistic $\psi^* = T/\sqrt{V^*}$. 
Under the null hypothesis, $\psi^*\to N(0,1)$ in distribution. Under the alternative hypothesis, $\psi^*\to\infty$ in probability. Moreover, for any fixed $\kappa\in (0,1)$, the level-$\kappa$ DELVE test has an asymptotic level of $\kappa$ and an asymptotic power of $1$. 
\end{thm}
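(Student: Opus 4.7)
The plan is to specialize the general machinery developed for Theorems~\ref{thm:null}--\ref{thm:alt2} to the case $K=n$, exploiting the structural simplifications that arise and showing that they permit the weakening of the regularity conditions. The starting point is the decomposition $T = T_1 + T_2 + T_3 + T_4$ used in the proof of Theorem~\ref{thm:null}, where $T_1$ is linear in the independent multinomial summands $Z_{ir} \sim \mathrm{Multinomial}(1,\Omega_i)-\Omega_i$ and $T_2,T_3,T_4$ are mutually uncorrelated quadratic forms. When $K=n$ each group has a single element, so $T_4$, which collects cross pairs $(i,m)$ with $i\neq m$ inside one group, is identically zero. This also kills the third line in \eqref{define:V}, explaining the reduction from $V$ to $V^*$.

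For the null distribution, I would recompute the variance formula \eqref{define:Theta_n} specialized to $K=n$: the first term vanishes under $H_0$, the last term vanishes identically, and the remaining two terms can be written explicitly in terms of $N_i$ and $\mu$. I would then apply the martingale CLT of \cite{hall2014martingale} to each surviving $T_j$ exactly as in the proof of Theorem~\ref{thm:null}, by ordering the double index $(i,r)$ into a single index and accumulating partial sums. Conditional variances and Lindeberg conditions reduce to moment sums of the form $\sum_i N_i^{-1}\|\Omega_i\|_3^3$, $\sum_i N_i^{-2}\|\Omega_i\|^2$, and $\sum_i\|\Omega_i\|_3^3/(n\|\mu\|^2)$, which are exactly what Condition~\ref{cond:K=n}(b) controls. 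The key point, and the reason the term $\|\mu\|_4^4/(K\|\mu\|^4)$ from \eqref{cond2-regular} can be dropped, is that this term arose in the proof of Theorem~\ref{thm:null} from controlling fourth moments associated with $T_4$ and certain cross-group interactions; with $K=n$ the relevant interactions collapse into a single sum that is already dominated by the quantities appearing in Condition~\ref{cond:K=n}(b).

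Next I would establish $V^*/\Theta_n \to 1$ in probability. The dominant piece of $\Theta_n$ under $H_0$ is $2\sum_i (N_i^{-1}-(n\bar N)^{-1})^2 \cdot N_i^2\Omega_{ij}^2/(N_i-1)$ (up to negligible lower-order terms), which is estimated unbiasedly by $V^*$ after substituting $(X_{ij}^2-X_{ij})/(N_i(N_i-1))$ for $\Omega_{ij}^2$. Concentration follows from a direct second-moment calculation, again using Condition~\ref{cond:K=n}(b); I would need to separately verify that the omitted cross term $\frac{2}{n^2\bar N^2}\sum_{k\neq\ell}X_{kj}X_{\ell j}$ in \eqref{define:V} is of strictly smaller order than $V^*$ under Condition~\ref{cond:K=n}, which reduces to comparing $\sum_i N_i^{-1}\|\Omega_i\|^2$ against $\sum_i\|\Omega_i\|^2$ type quantities. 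Combining with Slutsky's lemma gives $\psi^* \to N(0,1)$ under $H_0$.

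For the alternative, Theorem~\ref{thm:alt} yields $\E[T] = n\bar N \|\mu\|^2 \omega_n^2$ and $\VV(T) = O(\sum_i \|\Omega_i\|^2) + \E[T]\cdot O(\max_i\|\Omega_i\|_\infty)$ (after specializing $K=n$, so $\mu_k = \Omega_i$). Since $\max_i\|\Omega_i\|_\infty \leq 1-c_0$, the second variance term is dominated by $\E[T]$ itself once $\E[T]\to\infty$, while the first is controlled by the SNR assumption \eqref{SNR(K=n)}. Hence $\E[T]/\sqrt{\VV(T)}\to\infty$, and Chebyshev's inequality combined with $V^*/V^* \asymp \Theta_n$ (established under $H_0$, and extended to $H_1$ by showing that $V^*$ concentrates around a quantity of order at most $\sum_i\|\Omega_i\|^2$ plus lower order terms) gives $\psi^* \to\infty$ in probability; the level and power claims follow. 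The main obstacle is the variance-consistency step under the weaker Condition~\ref{cond:K=n}: tracking which terms in the Theorem~\ref{thm:null} argument required $\|\mu\|_4^4/(K\|\mu\|^4)=o(1)$ and confirming that each such term either vanishes (like $T_4$) or can be re-bounded using only the $K=n$ moment sums, rather than generic group-wise fourth-moment control.
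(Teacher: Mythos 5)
Your overall architecture matches the paper's: specialize the decomposition to $K=n$ (so $T_4$ and the within-group part of $V$ vanish), control the remaining variance terms, invoke the martingale CLT, establish consistency of $V^*$, and treat the alternative via Theorem~\ref{thm:alt2}'s machinery. But your explanation of why $\|\mu\|_4^4/(K\|\mu\|^4)=o(1)$ may be dropped from the hypotheses is not right, and you miss the observation that actually makes it work.

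You claim the $\|\mu\|_4^4$ requirement arose from $T_4$ and cross-group interactions that ``collapse'' when $K=n$. That is not what happens. The $K\|\mu\|_4^4$ term in the conditional-variance bound of Lemma~\ref{lem:condl_var_U} controls the variance of the conditional variance process for $D_{\ell,s}$, which encapsulates $U=\mathbf{1}_p'(U_3+U_4)$. When $K=n$, $U_4$ does indeed vanish, but $U_3$ (the cross-group pairs) is precisely what survives — in fact every pair $(i,m)$ with $i\neq m$ is cross-group when $K=n$. So the bound $\mathrm{Var}\bigl(\sum_{(\ell,s)}\mathrm{Var}(D_{\ell,s}\mid\mathcal{F}_{\prec(\ell,s)})\bigr)\lesssim (\sum_k \tfrac{1}{n_k\bar N_k})\|\mu\|_3^3 + K\|\mu\|_4^4$ still produces an $n\|\mu\|_4^4$ term and nothing ``collapses.'' You would still need to show $n\|\mu\|_4^4/(n\|\mu\|^2)^2=o(1)$, and the quantities in Condition~\ref{cond:K=n}(b) do not obviously imply this.

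The paper's actual reason is much simpler and is an elementary inequality rather than a structural one: since $\|\mu\|_4\leq\|\mu\|_2$, one has $\|\mu\|_4^4/(K\|\mu\|^4)=\|\mu\|_4^4/(n\|\mu\|^4)\leq 1/n\to 0$ automatically once $K=n$ and $n\to\infty$. Thus Proposition~\ref{prop:null} (the CLT for $T/\sqrt{\mathrm{Var}(T)}$ under $\alpha_n\to 0$ and $\|\mu\|_4^4/(K\|\mu\|^4)\to 0$) applies with no extra work: the first hypothesis is part (b) of Condition~\ref{cond:K=n} and the second is free. Your plan to verify the Lindeberg conditions from scratch would eventually arrive at the same place, but you would have to either rediscover this monotonicity observation or redo a sharper version of Lemma~\ref{lem:V2} under the $K=n$ structure — neither of which your current argument does. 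The rest of your sketch — comparing $\Theta_{n3}\lesssim\|\mu\|^2$ against $\Theta_{n2}\asymp n\|\mu\|^2$ to justify dropping the cross term from $V$, and handling the alternative via $\mathbb{E}T\gg\sqrt{\mathrm{Var}(T)}$ plus $V^*=O_{\mathbb{P}}(\mathrm{Var}(T))$ — agrees with Propositions~\ref{prop:var_estimation_null_K=n} and \ref{prop:var_estimation_alt_K=n} and is sound.
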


The detection boundary in \eqref{SNR(K=n)} has a simpler form if $\sum_i\|\Omega_i\|^2\asymp n\|\mu\|^2$. In this case, \eqref{SNR(K=n)} is equivalent to 
$
\sqrt{n}\bar{N}\|\mu\|\omega_n^2\to\infty. 
$
Additionally, if all entries of $\mu$ are at the same order, then $\|\mu\|\asymp 1/\sqrt{p}$, and \eqref{SNR(K=n)} further reduces to $\sqrt{n\bar{N}^2/p}\cdot \omega_n^2\to\infty$.


\subsection{A discussion of the contiguity regime} \label{subsec:boundary}

Our power analysis in Section~\ref{subsec:Main-power} concerns $\mathrm{SNR}_n\to \infty$, and our lower bound in Section~\ref{subsec:Main-LB} concerns $\mathrm{SNR}_n\to 0$. We now study the contiguity regime where $\mathrm{SNR}_n$ tends to a constant. 
For illustration, we consider a special choice of parameters, which allows us to obtain a simple expression of the testing risk. 

Suppose $K=n$ and $N_i=N$ for all $1\leq i\leq n$. Consider the pair of hypotheses:
\beq \label{Boundary1}
H_0:\;\; \Omega_{ij}=p^{-1}, \qquad\mbox{v.s.}\qquad H_1: \;\; \Omega_{ij}=p^{-1}(1+\nu_n\delta_{ij}), 
\eeq
where $\{\delta_{ij}\}_{1\leq i\leq n,1\leq j\leq p}$ satisfy that $|\delta_{ij}|= 1$, $\sum_{j=1}^p\delta_{ij}=0$ and $\sum_{i=1}^n\delta_{ij}=0$. Such $\delta_{ij}$ always exist.\footnote{For example, we can first partition the dictionary into two halves and then partition all the documents into two halves; this divides $\{1,2,\ldots,p\}\times\{1,2,\ldots,n\}$ into four subsets; we construct $\delta_{ij}$'s freely on one subset and then specify the $\delta_{ij}$'s on the other three subsets by symmetry.}
The $\mathrm{SNR}_n$ in \eqref{SNR} satisfies that $\mathrm{SNR}_n \asymp (N\sqrt{n}/\sqrt{p})\nu_n^2$. We thereby set
\beq \label{Boundary2}
\nu_n^2 = \frac{\sqrt{2p}}{N\sqrt{n}}\cdot a, \qquad\mbox{for a constant } a>0. 
\eeq
Since $K=n$ here, we consider the simplified DELVE test statistic $\psi^*$ as in Section~\ref{subsec:K=n}.

\begin{thm} \label{thm:boundary}
	Consider Model~\eqref{Mod1-data} with $N_i=N$. For a constant $a>0$, let the null and alternative hypotheses be specified as in \eqref{Boundary1}-\eqref{Boundary2}.  As $n\to\infty$, if $p=o(N^2n)$, then  $\psi^* \to N(0,1)$ under $H_0$ and $\psi^* \to N(a,1)$ under $H_1$. 
\end{thm}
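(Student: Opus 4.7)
The plan is to tackle the two convergence statements separately. The null convergence $\psi^* \to N(0,1)$ follows directly from Theorem~\ref{thm:K=n}, once Condition~\ref{cond:K=n} is verified for the uniform $\Omega_{ij} = p^{-1}$. With $\|\Omega_i\|_3^3 = p^{-2}$, $\|\Omega_i\|^2 = p^{-1}$, and $\|\mu\|^2 = p^{-1}$, the ratios in part~(b) of the condition reduce to $\max\{1/(Nn),\, p/(N^2n)\}$ and $1/p$, both of which vanish under $p = o(N^2 n)$ and the implicit $N,n \to \infty$; the $\ell^\infty$ and $N \leq (1-c_0)nN$ constraints of part~(a) are trivial.

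For $H_1$ I would use the mean-plus-noise decomposition $\psi^* = \E_{H_1}[T]/\sqrt{V^*} + (T - \E_{H_1}[T])/\sqrt{V^*}$. The first piece is direct: since $\sum_i \delta_{ij} = 0$ implies $\mu = p^{-1}\mathbf{1}_p$, one has $\|\Omega_i - \mu\|^2 = p^{-2}\nu_n^2 \sum_j \delta_{ij}^2 = \nu_n^2/p$ (using $|\delta_{ij}|=1$), and Lemma~\ref{prop:unbiased} together with the prescribed $\nu_n^2$ gives
\begin{equation*}
\E_{H_1}[T] \;=\; \rho^2 \;=\; \sum_{i=1}^n N \|\Omega_i - \mu\|^2 \;=\; \frac{nN\nu_n^2}{p} \;=\; a\sqrt{2n/p}.
\end{equation*}
For the denominator I would verify that $V^*/\Theta_n \to 1$ in probability with $\Theta_n$ evaluated at the parameters of $H_1$, and that $\Theta_n$ stays at its $H_0$ order. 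Inspecting \eqref{define:Theta_n} for $K = n$, $n_k = 1$, $N_i = N$: the fourth term vanishes; the first term, which was zero under $H_0$, contributes only $O(Nn\nu_n^2/p^2)$; and the second and third terms differ from their $H_0$ values by factors $1 + O(\nu_n^2)$. Since $\nu_n \to 0$ whenever $p = o(N^2 n)$, it follows that $\sqrt{\Theta_n} = \sqrt{2n/p}\,(1 + o(1))$, so $\E_{H_1}[T]/\sqrt{V^*} \to a$ in probability.

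The remaining piece is $(T - \E_{H_1}[T])/\sqrt{V^*} \to N(0,1)$, for which I would revisit the martingale CLT argument used in the proof of Theorem~\ref{thm:null}. The decomposition $X_i = N\Omega_i + \sum_{r=1}^{N} Z_{ir}$ with $Z_{ir}\sim\mathrm{Multinomial}(1,\Omega_i) - \Omega_i$ is valid for any $\Omega_i$, so the centered statistic $T - \E_{H_1}[T]$ still splits into uncorrelated linear and quadratic forms in $\{Z_{ir}\}$. Because $\max_i\|\Omega_i - p^{-1}\mathbf{1}_p\|_\infty \leq \nu_n/p$ with $\nu_n \to 0$, every conditional variance and Lindeberg fourth-moment bound used in that argument differs from its $H_0$ counterpart by a multiplicative factor $1 + o(1)$, so the CLT carries over to yield $(T - \E_{H_1}[T])/\sqrt{\Theta_n} \to N(0,1)$. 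Slutsky's theorem then combines this with $\E_{H_1}[T]/\sqrt{V^*} \to a$ and $V^*/\Theta_n \to 1$ to give $\psi^* \to N(a,1)$ under $H_1$.

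The main obstacle is this last step: transferring the martingale CLT of Theorem~\ref{thm:null} from the null to the contiguous alternative. Although each computation is parallel because $\nu_n \to 0$, one must carefully re-verify the conditional variance convergence and the Lindeberg condition for each martingale term under the perturbed $\Omega_i$'s, and confirm that every perturbation collapses into a $1 + o(1)$ multiplicative correction relative to the $H_0$ value. This is the only step whose verification is not already encapsulated in the previously stated theorems.
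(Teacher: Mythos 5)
Your proposal is correct and follows essentially the same route as the paper: compute $\E_{H_1}[T]=\rho^2=a\sqrt{2n/p}$, show $V^*$ concentrates at the scale $\Theta_{n2}\asymp 2n/p$, establish asymptotic normality of the centered statistic by the martingale CLT, and combine via Slutsky. One small imprecision worth flagging: describing the $H_1$-to-$H_0$ change as a uniform ``$1+o(1)$ multiplicative correction'' of the existing martingale bounds is accurate for the quadratic-form terms $\mathbf{1}_p'U_2,\mathbf{1}_p'U_3$, but not for the linear term $\mathbf{1}_p'U_1$, which is identically zero under $H_0$ and therefore has no $H_0$ counterpart to perturb; the paper instead separates $\mathbf{1}_p'U_1$ and $\mathbf{1}_p'U_3$ off as $o_{\mathbb{P}}(\sqrt{n/p})$ remainders (via simple variance bounds) and runs the martingale CLT only on $S=\mathbf{1}_p'U_2$. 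The transfer of the CLT to $H_1$ is also lighter than your last paragraph anticipates, since Lemmas~\ref{lem:S_condvar} and~\ref{lem:S_fourth_mom} were stated under conditions that do not invoke the null and so apply verbatim to the alternative.
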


Let $\Phi$ be the cumulative distribution function of the standard normal. 
By Theorem~\ref{thm:boundary}, for any fixed constant $t\in (0,a)$, if we reject the null hypothesis when $\psi^*>t$, then the sum of type I and type II errors converges to
$[1-\Phi(t)] + [1-\Phi(a-t)]$.

\section{Applications to other statistical problems } \label{sec:Applications}


As mentioned in Section~\ref{sec:Intro}, our testing problem includes global testing for topic models, authorship attribution, and closeness testing for discrete distributions as special examples. In this section, the DELVE test is applied separately to these three problems.  

\subsection{Global testing for topic models} \label{subsec:TM}
Topic modeling \citep{blei2003latent} is a popular tool in text mining. It aims to learn a small number of ``topics" from a large corpus. Given $n$ documents written using a dictionary of $p$ words, let $X_i\sim \mathrm{Multinomial}(N_i,\Omega_i)$ denote the word counts of document $i$, where $N_i$ is the length of this document and $\Omega_i\in\mathbb{R}^p$ contains the population word frequencies. 
In a topic model, there exist $M$ topic vectors $A_1, A_2, \ldots, A_M\in\mathbb{R}^p$, where each $A_k$ is a PMF. 
Let $w_i\in\mathbb{R}^M$ be a nonnegative vector whose entries sum up to $1$, where $w_i(k)$ is the ``weight" document $i$ puts on topic $k$. It assumes 
\beq \label{TopicModel}
\Omega_i=\sum_{k=1}^M w_i(k)A_k,  \qquad 1\leq i\leq n. 
\eeq
Under \eqref{TopicModel}, the matrix $\Omega=[\Omega_1,\Omega_2,\ldots,\Omega_n]$ admits a low-rank nonnegative factorization. 

Before fitting a topic model, we would like to know whether the corpus indeed involves multiple topics. This is the global testing problem: $H_0:  M=1$ v.s. $ H_1:  M>1$.
When $M=1$, by writing $A_1=\mu$,  the topic model reduces to the null hypothesis in \eqref{null(K=n)}. We can apply the DELVE test by treating each $X_i$ as a separate group (i.e., $K=n$).  

\begin{cor} \label{cor:TM}
	Consider Model~\eqref{Mod1-data} and define a vector $\xi\in\mathbb{R}^n$ by $\xi_i=\bar{N}^{-1}N_i$. Suppose that $\Omega=\mu{\bf 1}_n'$ under the null hypothesis, with $\mu=n^{-1}\Omega\xi $, and that $\Omega$ satisfies \eqref{TopicModel} under the alternative hypothesis, with $r:=\mathrm{rank}(\Omega)\geq 2$. Suppose $\bar{N}/(\min_{i}N_i)=O(1)$. Denote by $\lambda_1,\lambda_2,\ldots,\lambda_r>0$ the singular values of $\Omega [\diag(\xi)]^{1/2}$, arranged in the descending order. We further assume that under the alternative hypothesis, 
	\beq \label{SNR-TM}
	\bar{N}\cdot \frac{\sum_{k=2}^r \lambda_k^2}{\sqrt{\sum_{k=1}^r\lambda_k^2}} \to\infty.
	\eeq
	For any fixed $\kappa\in (0,1)$, the level-$\kappa$ DELVE test has an asymptotic level $\kappa$ and an asymptotic power $1$.   
\end{cor}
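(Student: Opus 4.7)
The plan is to reduce Corollary~\ref{cor:TM} to Theorem~\ref{thm:K=n} (the $K=n$ specialization of DELVE). Two things must be checked: Condition~\ref{cond:K=n} and the SNR hypothesis~\eqref{SNR(K=n)}. The first is essentially routine; the second requires translating the tail-sum-of-singular-values statement~\eqref{SNR-TM} into the form used by the theorem.

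First, I would set up matrix notation. Let $D=\diag(\xi)$ and $M=\Omega D^{1/2}\in\mathbb{R}^{p\times n}$, so that $\lambda_1\geq\cdots\geq\lambda_r>0$ are the singular values of $M$ and $\|M\|_F^2=\sum_{k=1}^r \lambda_k^2=\sum_i \xi_i\|\Omega_i\|^2$. Because $\sum_i\xi_i=n$, the vector $u_0:=D^{1/2}\mathbf{1}_n/\sqrt{n}$ has unit $\ell^2$-norm; and since $\mu=n^{-1}\Omega\xi=Mu_0/\sqrt{n}$, the variational characterization of the top singular value yields the key inequality
$$n\|\mu\|^2=\|Mu_0\|^2\leq \lambda_1^2.$$

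Second, I would use the $K=n$ form of $\omega_n$ in \eqref{omega(K=n)}, expand, and apply the key inequality:
\begin{align*}
n\bar{N}\|\mu\|^2\omega_n^2 \;=\; \sum_{i=1}^n N_i\|\Omega_i-\mu\|^2 &= \bar{N}\Bigl(\sum_i\xi_i\|\Omega_i\|^2-n\|\mu\|^2\Bigr)\\
&=\bar{N}\bigl(\|M\|_F^2-n\|\mu\|^2\bigr)\;\geq\;\bar{N}\sum_{k=2}^r\lambda_k^2.
\end{align*}
The assumption $\bar{N}/\min_i N_i=O(1)$ makes $\xi_i\asymp 1$ uniformly in $i$, so $\sum_i\|\Omega_i\|^2\asymp\|M\|_F^2=\sum_{k=1}^r\lambda_k^2$. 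Therefore
$$\mathrm{SNR}_n=\frac{n\bar{N}\|\mu\|^2\omega_n^2}{\sqrt{\sum_i\|\Omega_i\|^2}}\;\gtrsim\;\bar{N}\cdot\frac{\sum_{k=2}^r\lambda_k^2}{\sqrt{\sum_{k=1}^r\lambda_k^2}}\;\longrightarrow\;\infty$$
by \eqref{SNR-TM}, which is exactly the SNR hypothesis \eqref{SNR(K=n)} required by Theorem~\ref{thm:K=n}.

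Verification of Condition~\ref{cond:K=n} is routine given $\bar{N}=O(\min_i N_i)$: $N_i\geq 2$ and $\|\Omega_i\|_\infty\leq 1-c_0$ are mild standing assumptions; the bound $N_i\leq(1-c_0)n\bar{N}$ follows because $\max_j N_j/(n\bar{N})\leq 1-c_0+o(1)$ once all $N_j$ are comparable; part (b) reduces via $\|\Omega_i\|_3^3\leq\|\Omega_i\|_2^2$ and $N_i\asymp\bar{N}$ to standard moment calculations. Once both steps are done, Theorem~\ref{thm:K=n} delivers the claimed asymptotic level $\kappa$ and asymptotic power $1$. The main obstacle is the inequality $n\|\mu\|^2\leq\lambda_1^2$: it is what allows one to pass from $\|M\|_F^2-n\|\mu\|^2$, the quantity that emerges naturally from $\omega_n$, to the tail sum $\sum_{k\geq 2}\lambda_k^2$ of singular values that appears in the stated hypothesis.
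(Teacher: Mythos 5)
Your proof is correct and follows essentially the same route as the paper's: reduce to Theorem~\ref{thm:K=n}, express the signal $\sum_i N_i\|\Omega_i-\mu\|^2$ in terms of $\widetilde\Omega = \Omega[\diag(\xi)]^{1/2}$, and bound it below by the tail sum of squared singular values. The only real difference is in how the key inequality is obtained. The paper invokes the Eckart--Young rank-1 approximation bound, $\|\widetilde\Omega - \Delta\|_F^2\geq\sum_{k\geq2}\lambda_k^2$ for the rank-1 matrix $\Delta=\mu\widetilde\xi'$. You instead observe directly that $\|\widetilde\Omega - \mu\widetilde\xi'\|_F^2 = \|\widetilde\Omega\|_F^2 - n\|\mu\|^2$ and that $n\|\mu\|^2 = \|\widetilde\Omega u_0\|^2 \leq \lambda_1^2$ by the variational characterization, since $u_0 = \widetilde\xi/\sqrt{n}$ is a unit vector (exploiting $\sum_i\xi_i=n$). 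Both land on the same inequality; yours is marginally more self-contained because it avoids the general low-rank approximation theorem and instead identifies $\mu\widetilde\xi'$ as the projection of $\widetilde\Omega$ onto the specific rank-1 subspace spanned by $u_0$.

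One small overstatement to fix: the hypothesis $\bar N/\min_i N_i = O(1)$ gives $\xi_i\gtrsim 1$ uniformly, but \emph{not} $\xi_i\asymp 1$ (one $N_i$ could be much larger than $\bar N$ while all $N_j\geq c_0\bar N$, since $\sum_i\xi_i=n$ only forces the \emph{average} to be $1$). Therefore the claim $\sum_i\|\Omega_i\|^2\asymp\|M\|_F^2$ is not justified. Fortunately, the one-sided inequality $\sum_i\|\Omega_i\|^2\leq c_0^{-1}\sum_i\xi_i\|\Omega_i\|^2 = c_0^{-1}\|M\|_F^2$, which does follow from $\xi_i\geq c_0$, is all that is needed to lower-bound $\mathrm{SNR}_n$, so the conclusion stands; just replace $\asymp$ by the correct one-sided $\lesssim$. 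The same caveat applies to your remark ``once all $N_j$ are comparable'' in the verification of Condition~\ref{cond:K=n}: two-sided comparability of the $N_j$ is neither given nor needed.
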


The least-favorable configuration in the proof of Theorem~\ref{thm:LB} is in fact a topic model that follows \eqref{TopicModel} with $M=2$. Transferring the argument yields the following lower bound that confirms the optimality of DELVE for the global testing of topic models.


\begin{cor} \label{cor:TM-LB}
	Let ${\cal R}_{n,M}(\epsilon_n, \delta_n)$ be the collection of $\{(N_i, \Omega_i)\}_{i=1}^n$ satisfying the following conditions: 1) $\Omega$ follows the topic model \eqref{TopicModel} with $M$ topics; 2) Condition~\ref{cond:K=n} holds with $o(1)$ replaced by $\leq \epsilon_n$; 3) $\bar{N}(\sum_{k=2}^r \lambda_k^2)/(\sum_{k=1}^r\lambda_k^2)^{1/2}\geq \delta_n$. If $\epsilon_n\to 0$ and $\delta_n\to 0$, then 
	$
	\limsup_{n\to\infty} \inf_{\Psi\in\{0,1\}}\Bigl\{ \sup_{{\cal R}_{n,1}(\epsilon_n, 0)} \mathbb{P}(\Psi=1)+\sup_{\cup_{M\geq 2} {\cal R}_{n,M}(\epsilon_n, \delta_n)} \mathbb{P}(\Psi=0)\Bigr\}=1.
	$ 
\end{cor}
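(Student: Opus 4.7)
The plan is to reduce Corollary~\ref{cor:TM-LB} to Theorem~\ref{thm:LB} specialized to $K=n$, following the authors' remark that the least-favorable alternative in the proof of Theorem~\ref{thm:LB} is already a rank-two configuration. The two key tasks are (i) matching the null and alternative parameter classes across the two statements, and (ii) checking that the signal-to-noise ratio $\mathrm{SNR}_n$ of \eqref{SNR} coincides, up to constants, with the spectral quantity appearing in condition~(3) of ${\cal R}_{n,M}(\epsilon_n,\delta_n)$.

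For step (i), I specialize: with $K=n$ we have $\mu_k=\Omega_k$ and $\sum_k\|\mu_k\|^2=\sum_i\|\Omega_i\|^2$, so the null class ${\cal Q}_{0n}^*(c_0,\epsilon_n)$ collapses to rank-one configurations $\Omega_i\equiv\mu$, matching ${\cal R}_{n,1}(\epsilon_n,0)$ after translating condition~(2) into the $\alpha_n,\beta_n\le\epsilon_n$ bounds of \eqref{LB-param-class} via Condition~\ref{cond:K=n}. For step (ii), I take the construction used in Theorem~\ref{thm:LB}'s proof with uniform lengths $N_i=\bar N$ and $\Omega_i=w_iA_1+(1-w_i)A_2$ for symmetric perturbations $A_1,A_2$ of a common base PMF. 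With $\xi_i\equiv 1$ we have $\Omega[\diag(\xi)]^{1/2}=\Omega$, and since $(\Omega-\mu\mathbf{1}_n')\mathbf{1}_n=0$ the SVD decomposes as
\[
\Omega \;=\; \mu\mathbf{1}_n' \;+\; \sum_{k\ge 2}\lambda_k u_kv_k', \qquad v_k\perp \mathbf{1}_n/\sqrt n,
\]
so $\lambda_1^2=n\|\mu\|^2$ exactly, and a direct computation gives
\[
\mathrm{SNR}_n \;=\; \frac{\sum_i N_i\|\Omega_i-\mu\|^2}{\sqrt{\sum_i\|\Omega_i\|^2}} \;=\; \frac{\bar N\sum_{k\ge 2}\lambda_k^2}{\bigl(\sum_{k=1}^r\lambda_k^2\bigr)^{1/2}},
\]
which matches condition~(3) of ${\cal R}_{n,2}(\epsilon_n,\delta_n)$ verbatim.

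With (i) and (ii) in hand, the construction is a two-topic configuration lying in ${\cal R}_{n,2}(\epsilon_n,\delta_n)$ whenever the SNR there is at least $\delta_n$. Taking $\delta_n\to 0$ then produces, for each $n$, a prior on the two-topic alternative inside $\cup_{M\ge 2}{\cal R}_{n,M}(\epsilon_n,\delta_n)$ and the point prior on the null $\Omega\equiv\mu\mathbf{1}_n'\in{\cal R}_{n,1}(\epsilon_n,0)$ whose Bayes marginals have vanishing total variation by Theorem~\ref{thm:LB}, yielding the claimed limsup equal to one. The main obstacle is the SNR-translation step (ii): the uniform-length choice makes $\lambda_1^2=n\|\mu\|^2$ exact, but to allow non-uniform $N_i$ with only $\bar N/\min_iN_i=O(1)$ one must instead show that the top right singular vector of $\Omega[\diag(\xi)]^{1/2}$ remains aligned with $\xi^{1/2}/\sqrt n$ up to a $1+o(1)$ factor, which I would handle by a perturbative expansion of the SVD around the one-topic null.
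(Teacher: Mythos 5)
Your high-level plan — specialize Theorem~\ref{thm:LB} to $K=n$, observe that the least-favorable alternative is a two-topic configuration, and translate the SNR of~\eqref{SNR} into the spectral quantity of condition~(3) — is exactly the route the paper takes. But the translation step, which you flag yourself as the main obstacle, has a real gap in the form you present it.

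You assert the ``clean'' SVD split $\Omega=\mu\mathbf 1_n'+\sum_{k\ge 2}\lambda_k u_kv_k'$ from the single observation $(\Omega-\mu\mathbf 1_n')\mathbf 1_n=0$, and conclude $\lambda_1^2=n\|\mu\|^2$ exactly. This does not follow: for $\mu\mathbf 1_n'$ to be the top singular piece you need \emph{two-sided} orthogonality, i.e.\ also $\mu'(\Omega-\mu\mathbf 1_n')=0$, and you never address the left-hand orthogonality. It happens to hold in the paper's construction because the perturbation is antisymmetrically doubled across the two halves of the dictionary (so that $\sum_j\tilde\mu_j^2 s_j\tilde b_j=0$), but that is a feature of the specific construction, not of a generic ``symmetric perturbation.'' Moreover even the right-hand orthogonality you invoke is not exact in the paper's construction: the Rademacher signs $z_i$ are only conditioned on $|\sum_i z_i|\le 100\sqrt n$, not on $\sum_i z_i=0$, so $(\Omega-\mu\mathbf 1_n')\mathbf 1_n=\omega_n(\tilde\mu\circ s\circ\tilde b)(\sum_i z_i)\ne 0$. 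Your exact identity would therefore require either modifying the construction (conditioning on $\sum_i z_i=0$ exactly and re-running Proposition~\ref{prop:LB}) or a perturbative argument, neither of which you supply.

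The paper avoids the exact-SVD claim entirely. Instead it writes $\Omega\Omega'=|U|uu'+|V|vv'$ with $u,v$ the two topic columns, derives an incoherence bound $\langle u,v\rangle^2/(\|u\|^2\|v\|^2)\le 1-c\,\omega_n^2$, and uses the Rayleigh-quotient characterization of $\lambda_2$ to conclude $\lambda_2^2\gtrsim n\|\mu\|^2\omega_n^2$ together with $\sum_k\lambda_k^2\asymp n\|\mu\|^2$. That one-sided bound is all the corollary needs: given $\delta_n\to 0$ one picks $\omega_n$ so that $\sqrt n\,N\|\mu\|\omega_n^2\asymp\delta_n$, which both guarantees condition~(3) for every realization of $(b,z)$ and keeps the $\chi^2$ distance (Proposition~\ref{prop:LB}) tending to zero. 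This sidesteps exactly the delicate orthogonality and ``perturbative SVD'' step that your version leaves open; it also renders moot the issue of non-uniform $N_i$ (the corollary calls only for one indistinguishable sequence of configurations, so one is free to take $N_i\equiv N$, as the paper does).
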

\noindent

The detection boundary \eqref{SNR-TM} can be simplified when $M=O(1)$. Following \cite{ke2017new}, we define $\Sigma_A=A'H^{-1}A$ and $\Sigma_W=n^{-1}WW'$, where $A=[A_1,A_2,\ldots,A_M]$, $W=[w_1,w_2,\ldots,w_n]$ and $H=\mathrm{diag}(A{\bf 1}_M)$. \cite{ke2017new} argued that it is reasonable to assume that eigenvalues of these two matrices are at the constant order. If this is true, with some mild additional regularity conditions, each $\lambda_k$ is at the order of $\sqrt{n/p}$. Hence, \eqref{SNR-TM} reduces to
$
\sqrt{n}\bar{N}/\sqrt{p} \to \infty. 
$
In comparison, \cite{ke2017new} showed that a necessary condition for any estimator $\hat{A}=[\hat{A}_1,\hat{A}_2,\ldots,\hat{A}_M]$ to achieve $\frac{1}{M}\sum_{k=1}^M \|\hat{A}_k-A_k\|_1=o(1)$ is $\sqrt{n\bar{N}/p}\to\infty$. 
We conclude that consistent estimation of topic vectors requires strictly stronger conditions than successful testing.

\vspace{-.2cm}

\subsection{Authorship attribution} \label{subsec:authorChallenge}

In authorship attribution, 
given a corpus from a known author, we want to test whether a new document is from the same author. 
It is a special case of our testing problem with $K=2$. We can directly apply the results in Section~\ref{subsec:K=2}. However, the setting in Section~\ref{subsec:K=2} has no sparsity. \cite{kipnis2022higher,donoho2022higher} point out that the number of words with discriminating power is often much smaller than $p$. 
To see how our test  performs under sparsity, we consider a sparse model. As in Section~\ref{subsec:K=2}, let 
\beq \label{model-2sample-author}
X_i \sim \mathrm{Multinomial}(N_i, \Omega_i), \; 1\leq i\leq n, \quad\mbox{and}\quad
G_i \sim \mathrm{Multinomial}(M_i, \Gamma_i),\;  1\leq i\leq m. 
\eeq
Let $\bar{N}$ and $\bar{M}$ be the average of $N_i$'s and $M_i$'s, respectively. 
Write $\eta = \frac{1}{n\bar{N}}\sum_{i=1}^n N_i \Omega_{i}$ and $\theta = \frac{1}{m\bar{M}}\sum_{i=1}^m M_i \Gamma_{i}$.  We assume for some $\zeta_n>0$, 
\beq \label{SparseModel}
\eta_j=\theta_j, \;\; \mbox{for }j\notin S, \qquad \mbox{and}\qquad \bigl|\sqrt{\eta_{j}}-\sqrt{\theta_j}\bigr|\geq \zeta_n, \;\; \mbox{for }j\in S. 
\eeq

\begin{cor} \label{cor:author}
	Under the model \eqref{model-2sample-author}-\eqref{SparseModel}, consider testing $H_0: S=\emptyset$ v.s. $ H_1: S\neq\emptyset$,
	where Condition~\ref{cond:K=2} is satisfied. 
	Let $\eta_S$ and $\theta_S$ be the sub-vectors of $\eta$ and $\theta$ restricted to the coordinates in $S$. Suppose that under the alternative hypothesis, 
	\beq \label{SNR-author}
	\frac{ \zeta_n^2\cdot (\|\eta_S\|_1+\|\theta_S\|_1) }{  \big( \frac{1}{n\bar{N}} + \frac{1}{m \bar{M}}  \big) 
		\max\{\| \eta \|,\, \| \theta \|\} } \to \infty. 
	\eeq
	As $\min\{n\bar{N},m\bar{M}\}\to\infty$, the level-$\kappa$ DELVE test has an asymptotic level $\kappa$ and an asymptotic power $1$. Furthermore, if $n\bar{N}\asymp m\bar{M}$ and $\min_{j\in S}(\eta_j+\theta_j)\geq cp^{-1}$ for a constant $c>0$, then 
	\eqref{SNR-author} reduces to $n\bar{N}\zeta_n^2|S|/\sqrt{p}\to\infty$. 
\end{cor}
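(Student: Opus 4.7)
The plan is to deduce the corollary directly from Theorem~\ref{thm:K=2}, since the authorship model \eqref{model-2sample-author} is precisely the two-sample setup of Section~\ref{subsec:K=2}. Condition~\ref{cond:K=2} is assumed, so the only substantive task is to show that the sparse alternative \eqref{SparseModel} together with \eqref{SNR-author} implies the $L^2$-separation condition \eqref{SNR(K=2)} needed by Theorem~\ref{thm:K=2}.

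The crux is an algebraic bridge from the Hellinger-type separation $|\sqrt{\eta_j}-\sqrt{\theta_j}|\geq \zeta_n$ on $S$ to the $L^2$ distance $\|\eta-\theta\|^2$. The identity
\begin{equation*}
(\eta_j-\theta_j)^2 \;=\; (\sqrt{\eta_j}-\sqrt{\theta_j})^2\,(\sqrt{\eta_j}+\sqrt{\theta_j})^2
\end{equation*}
combined with the elementary inequality $(\sqrt{a}+\sqrt{b})^2 \geq a+b$ for $a,b\geq 0$ yields, for each $j\in S$, $(\eta_j-\theta_j)^2 \geq \zeta_n^2(\eta_j+\theta_j)$. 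Since $\eta_j=\theta_j$ off $S$, summing gives
\begin{equation*}
\|\eta-\theta\|^2 \;\geq\; \zeta_n^2\,(\|\eta_S\|_1+\|\theta_S\|_1).
\end{equation*}
Dividing by $(\frac{1}{n\bar{N}}+\frac{1}{m\bar{M}})\max\{\|\eta\|,\|\theta\|\}$ shows that \eqref{SNR-author} implies \eqref{SNR(K=2)}. Theorem~\ref{thm:K=2} then immediately delivers that $\psi\to N(0,1)$ under $H_0$ and $\psi\to\infty$ under $H_1$, so the level-$\kappa$ DELVE test has asymptotic level $\kappa$ and asymptotic power $1$.

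For the simplified detection boundary, I would just track rates. The assumption $n\bar{N}\asymp m\bar{M}$ gives $(\frac{1}{n\bar{N}}+\frac{1}{m\bar{M}})^{-1}\asymp n\bar{N}$, and $\min_{j\in S}(\eta_j+\theta_j)\geq c/p$ yields $\|\eta_S\|_1+\|\theta_S\|_1\geq c|S|/p$. In the regular regime where the entries of $\eta,\theta$ are of order $1/p$ (the natural scaling implicit in $\min_{j\in S}(\eta_j+\theta_j)\geq c/p$ together with the regularity in Condition~\ref{cond:K=2}(d)), one has $\max\{\|\eta\|,\|\theta\|\}\asymp p^{-1/2}$. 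Substituting these rates into \eqref{SNR-author} reduces the condition to $n\bar{N}\zeta_n^2|S|/\sqrt{p}\to\infty$.

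The main obstacle is the short algebraic bridge above: it is what translates the sparsity-friendly Hellinger-style quantity $|\sqrt{\eta_j}-\sqrt{\theta_j}|$ into the $L^2$ quantity that DELVE natively controls, with the right weighting $\eta_j+\theta_j$. Once this is established, nothing further is needed beyond invoking Theorem~\ref{thm:K=2} and bookkeeping of rates.
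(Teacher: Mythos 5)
Your proof is correct and takes essentially the same route as the paper: both reduce to Theorem~\ref{thm:K=2} and use the exact same algebraic bridge $(\eta_j-\theta_j)^2 = (\sqrt{\eta_j}-\sqrt{\theta_j})^2(\sqrt{\eta_j}+\sqrt{\theta_j})^2 \geq \zeta_n^2(\eta_j+\theta_j)$, summed over $j\in S$, to convert the Hellinger-type separation into the $\ell^2$ bound $\|\eta-\theta\|^2 \geq \zeta_n^2(\|\eta_S\|_1+\|\theta_S\|_1)$ required by \eqref{SNR(K=2)}. The one minor remark concerns the ``Furthermore'' clause, which the paper's proof does not explicitly verify: as you correctly flag, the reduction to $n\bar{N}\zeta_n^2|S|/\sqrt{p}\to\infty$ additionally uses $\max\{\|\eta\|,\|\theta\|\}\lesssim p^{-1/2}$, which does not follow from the stated hypothesis $\min_{j\in S}(\eta_j+\theta_j)\geq cp^{-1}$ alone (that gives a lower bound on entries in $S$, not an upper bound on all entries); it holds under the natural implicit regularity that $\|\eta\|_\infty$ and $\|\theta\|_\infty$ are $O(1/p)$, consistent with Condition~\ref{cond:K=2}, and your making this explicit is a genuine improvement in precision over the source.
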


\cite{donoho2022higher} studied a case where $N=M$, $n=m=1$, $p\to\infty$,  
\beq \label{setting-DonohoKipnis}
|S|=p^{1-\vartheta}, \qquad \mbox{and}\qquad \zeta_n= c\cdot  N^{-1/2}\sqrt{\log(p)}. 
\eeq
When $\vartheta>1/2$ (i.e., $|S|=o(\sqrt{p})$), they derived a phase diagram for the aforementioned testing problem (under a slightly different setting where the data distributions are Poisson instead of multinomial). They showed that when $\vartheta>1/2$ and $c$ is a properly large constant, a Higher-Criticism-based test has an asymptotically full power. 
\cite{donoho2022higher}  did not study the case of $\vartheta\leq 1/2$. 
By Corollary~\ref{cor:author}, when $\vartheta\leq 1/2$ (i.e., $|S|\geq C\sqrt{p}$),  the DELVE test has  asymptotically full power. 

\vspace{-.2cm}

\begin{remark}{\rm
When $\vartheta > 1/2$ in \eqref{setting-DonohoKipnis}, the DELVE test loses power. However, we can borrow the idea of maximum test or Higher Criticism test \citep{DJ04}. For example, recalling $T_j$ in \eqref{DELAC}, we may use $\max_{1\leq j\leq p}\{T_j/\sqrt{V_j}\}$ as the test statistic, where $V_j$ is a proper estimator of the variance of $T_j$. We leave this to future work. 
}
\end{remark}

\vspace{-.7cm}
\subsection{Closeness testing between discrete distributions} \label{subsec:closeTesting}

Two-sample closeness testing is a subject of intensive study in discrete distribution inference \citep{bhattacharya2015testing,chan2014optimal,diakonikolas2016new,kim2022minimax}. It is a special case of our problem with $K=2$ and $n_1=n_2=1$. 
We thereby apply both Theorem~\ref{thm:K=2} and Theorem~\ref{thm:K=n}. 

\begin{cor} \label{cor:closeness}
	Let $Y_1$ and $Y_2$ be two discrete variables taking values on the same $p$ outcomes.
	Let $\Omega_1\in\mathbb{R}^p$ and $\Omega_2\in\mathbb{R}^p$ be their corresponding PMFs. Suppose we have $N_1$ samples of $Y_1$ and $N_2$ samples of $Y_2$. The data are summarized in two multinomial vectors: 
	$
	X_1\sim \mathrm{Multinomial}(N_1, \Omega_1),  X_2\sim \mathrm{Multinomial}(N_2, \Omega_2).
	$
	We test 
	$
	H_0: \Omega_1=\Omega_2. 
	$
	Write $\mu=\frac{1}{N_1+N_2}(N_1\Omega_1+N_2\Omega_2)$. Suppose $\min\{N_1, N_2\}\geq 2$, $\max\{\|\Omega_1\|_\infty, \|\Omega_2\|_\infty\}\leq 1-c_0$, for a constant $c_0\in (0,1)$. Suppose  $\frac{1}{( \sum_{k=1}^2  \| \Omega_k \|^2)^2}\max\big\{ \sum_{k=1}^2 \frac{ \| \Omega_k \|_3^3 }{ N_k}  , \sum_{k=1}^2 \frac{ \| \Omega_k \|^2 }{ N_k^2}  \big\} =o(1)$, and  $\frac{1}{n\|\mu\|^2}\sum_{k=1}^2 \| \Omega_k \|_3^3 =o(1)$.
	We  assume that under the alternative hypothesis, 
	\beq \label{SNR(closeTesting)}
	\frac{ \| \Omega_1 - \Omega_2 \|^2 }{  \big( N_1^{-1} +N_2^{-1}  \big) 
		\max\{\| \Omega_1 \|,\, \| \Omega_2 \|\} } \to \infty. 
	\eeq
	As $\min\{N_1, N_2\}\to\infty$, the level-$\kappa$ DELVE test has 
	level $\kappa$ and power $1$, asymptotically.
\end{cor}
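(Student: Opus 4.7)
My plan is to deduce Corollary~\ref{cor:closeness} as a direct specialization of Theorem~\ref{thm:K=n} applied with $n = 2$, treating each observed multinomial vector as its own singleton ``group''. Equivalently, Theorem~\ref{thm:K=2} with $n = m = 1$ delivers the same conclusion and would serve as a backup when $N_1, N_2$ are severely unbalanced. Under either mapping, $\bar{N} = (N_1+N_2)/2$, the Corollary's $\mu$ coincides with the overall mean PMF in Section~\ref{subsec:K=n}, and the null $\Omega_1 = \Omega_2$ is exactly \eqref{null(K=n)}.

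The central computation is to translate \eqref{SNR(K=n)} into the Corollary's SNR \eqref{SNR(closeTesting)}. Using $\Omega_1 - \mu = \tfrac{N_2}{N_1+N_2}(\Omega_1 - \Omega_2)$ and $\Omega_2 - \mu = -\tfrac{N_1}{N_1+N_2}(\Omega_1 - \Omega_2)$, a short calculation gives
\[
n\bar{N}\|\mu\|^2 \omega_n^2 \;=\; \sum_{i=1}^2 N_i \| \Omega_i - \mu \|^2 \;=\; \frac{N_1 N_2}{N_1 + N_2} \| \Omega_1 - \Omega_2 \|^2 \;=\; \frac{\| \Omega_1 - \Omega_2 \|^2}{N_1^{-1} + N_2^{-1}},
\]
which produces the harmonic mean structure. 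Since $\sqrt{\|\Omega_1\|^2 + \|\Omega_2\|^2}$ is equivalent to $\max\{\|\Omega_1\|, \|\Omega_2\|\}$ up to a factor of $\sqrt{2}$, condition \eqref{SNR(K=n)} is precisely \eqref{SNR(closeTesting)}.

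I would then verify Condition~\ref{cond:K=n}. Part~(a) is immediate from $N_k \geq 2$ and $\|\Omega_k\|_\infty \leq 1 - c_0$, together with the bound $N_i \leq (1-c_0)(N_1+N_2)$ that holds in the balanced regime. Part~(b) reproduces the two $o(1)$-bounded expressions stated explicitly in the Corollary (with $n = 2$ absorbed into the normalization). Applying Theorem~\ref{thm:K=n} then yields $\psi^* \to N(0,1)$ under $H_0$ and $\psi^* \to \infty$ under $H_1$; the same conclusion carries over to the full DELVE statistic $\psi = T/\sqrt{V}$ because in this $K = n = 2$ setting the within-group pair term of $V$ vanishes identically and the remaining cross-group term is of lower order than $V^*$. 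This delivers the asymptotic level $\kappa$ and power $1$.

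The main obstacle I anticipate is the severely unbalanced regime in which the implicit balance $N_i \leq (1-c_0)(N_1+N_2)$ fails. There I would switch to Theorem~\ref{thm:K=2} (with $n = m = 1$), whose Condition~\ref{cond:K=2}(a) imposes no such balance and whose SNR \eqref{SNR(K=2)} matches \eqref{SNR(closeTesting)} by design. The only extra condition to discharge is $\|\mu\|_4^4/\|\mu\|^4 = o(1)$ from Condition~\ref{cond:K=2}(d); in the standard high-dimensional scaling where $\|\mu\|_\infty = o(\|\mu\|)$, the inequality $\|\mu\|_4^4 \leq \|\mu\|_\infty^2 \|\mu\|^2$ closes the gap immediately, and otherwise it is a mild side assumption rather than a genuine structural obstacle.
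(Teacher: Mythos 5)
Your proposal is correct and follows essentially the same route as the paper, which states it applies both Theorem~\ref{thm:K=2} and Theorem~\ref{thm:K=n} to this closeness-testing specialization. Your SNR translation via $\Omega_i-\mu = \pm\frac{N_{3-i}}{N_1+N_2}(\Omega_1-\Omega_2)$, giving $n\bar N\|\mu\|^2\omega_n^2 = \frac{N_1N_2}{N_1+N_2}\|\Omega_1-\Omega_2\|^2$, is exactly how \eqref{SNR(K=n)} reduces to \eqref{SNR(closeTesting)}, and your observation that Condition~\ref{cond:K=n}(a) implicitly imposes balance (with Theorem~\ref{thm:K=2} as the fallback for the unbalanced regime) faithfully tracks the paper's dual-theorem approach.
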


The requirement \eqref{SNR(closeTesting)} matches with the minimum $\ell^2$-separation condition for two-sample closeness testing \citep[Proposition 4.4]{kim2022minimax}. Hence, our test is an optimal $\ell^2$-testor. 
Other optimal $\ell^2$-testors \citep{chan2014optimal,bhattacharya2015testing,diakonikolas2016new} are not equipped with tractable null distributions. 

\vspace{-.3cm}

\begin{remark}{\rm
		We can modify DELVE to incorporate frequency-dependent weights. 
		Define
		$
		T(w):=\sum_{j=1}^p w_j T_j$,  where $T_j$ is the same as in \eqref{DELAC} and let 
		$w_j =\bigl(\max\{1/p,\; \hat{\mu}_j\}\bigr)^{-1}$.  
		Such weights were used in discrete distribution inference \citep{balakrishnan2019hypothesis,chan2014optimal} to turn an optimal $\ell^2$ testor to an optimal $\ell^1$ testor.  
		We can similarly study the power of the test based on $T(w)$, except that we need an additional assumption $n\bar{N}\gg p$ to guarantee that $\hat{\mu}_j$ is a sufficiently accurate estimator of $\mu_j$. 
}\end{remark}

\section{Simulations} \label{sec:Simu}

We investigate the numerical performance of DELVE in simulations. Recall that we introduced a variant of DELVE, DELVE+, in Section~\ref{subsec:modification}. DELVE+ has similar theoretical properties but is more suitable for real data. We include both versions in simulations.

\begin{figure}[!tb]
	\includegraphics[width=0.32\textwidth]{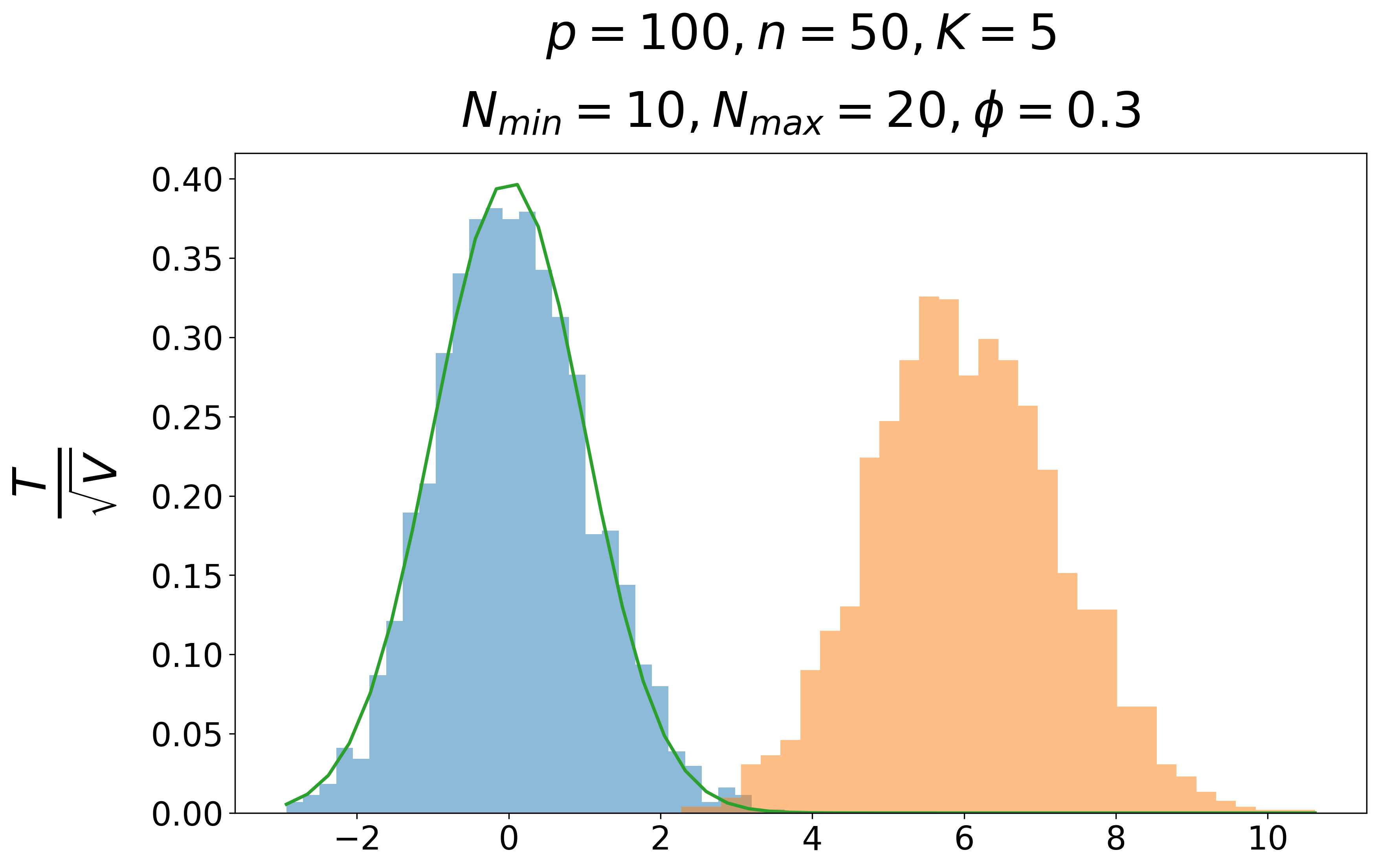}
	\includegraphics[width=0.32\textwidth]{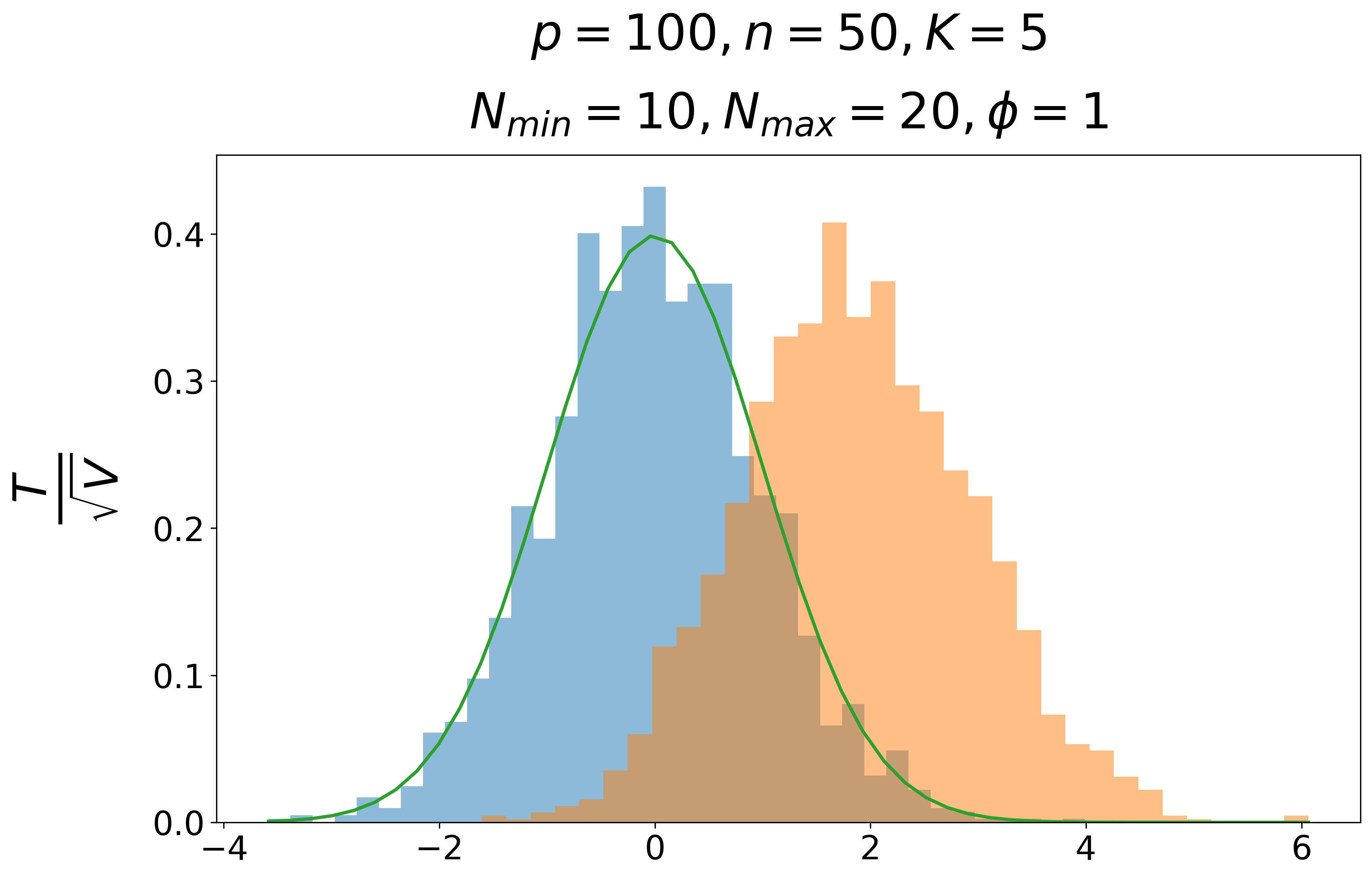}
	\includegraphics[width=0.32\textwidth]{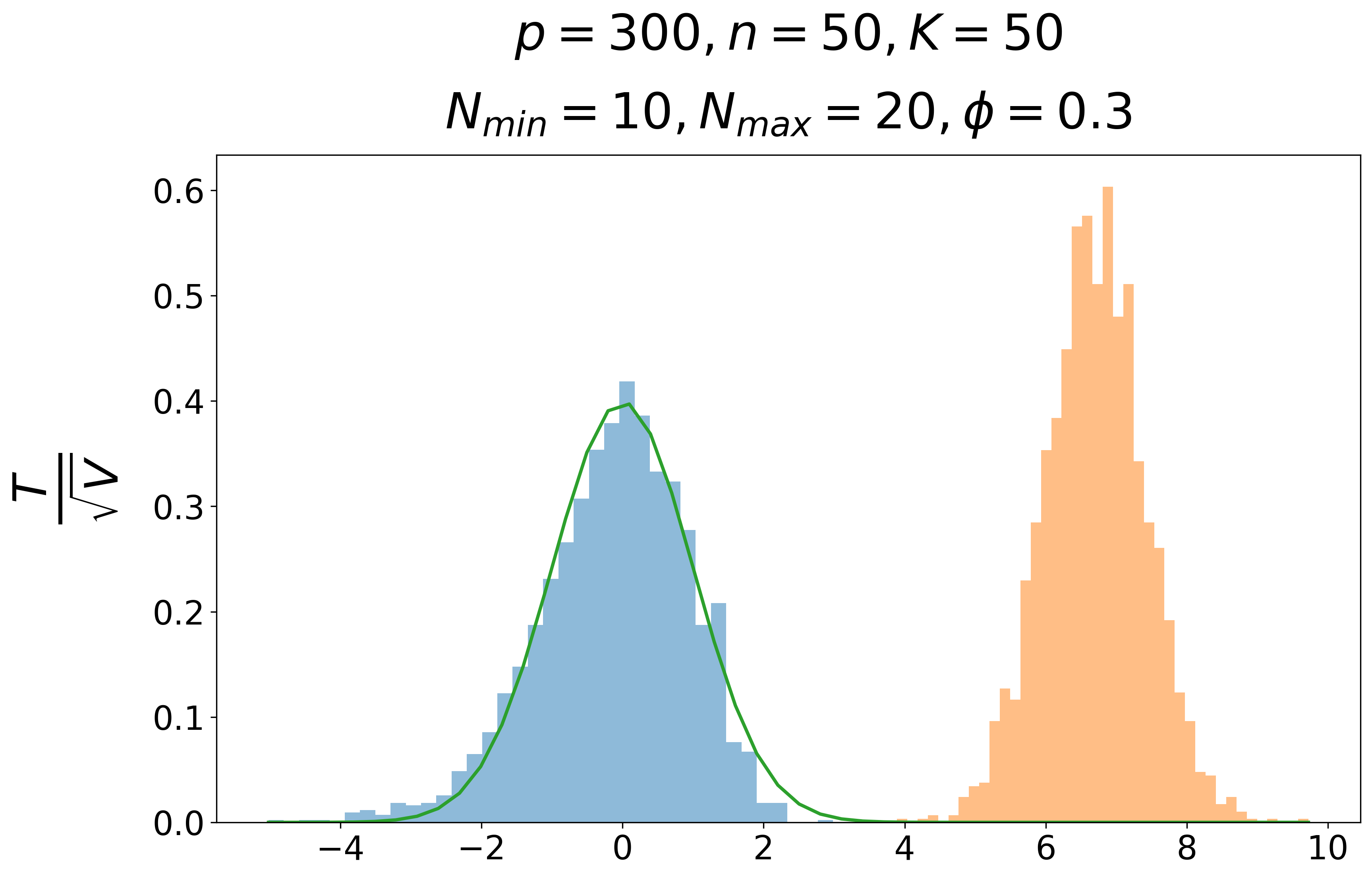}\\
	
	\includegraphics[width=0.32\textwidth]{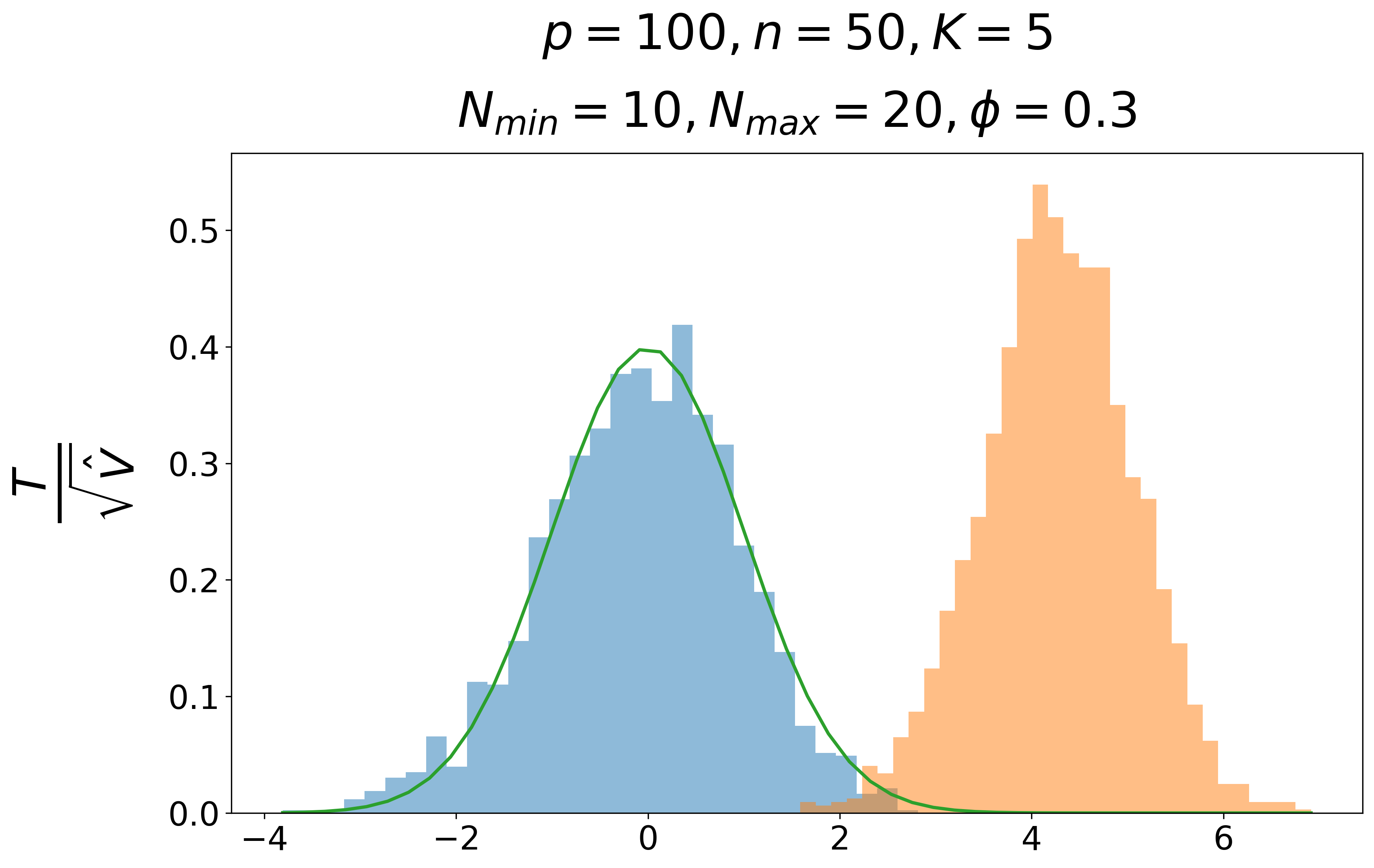}
	\includegraphics[width=0.32\textwidth]{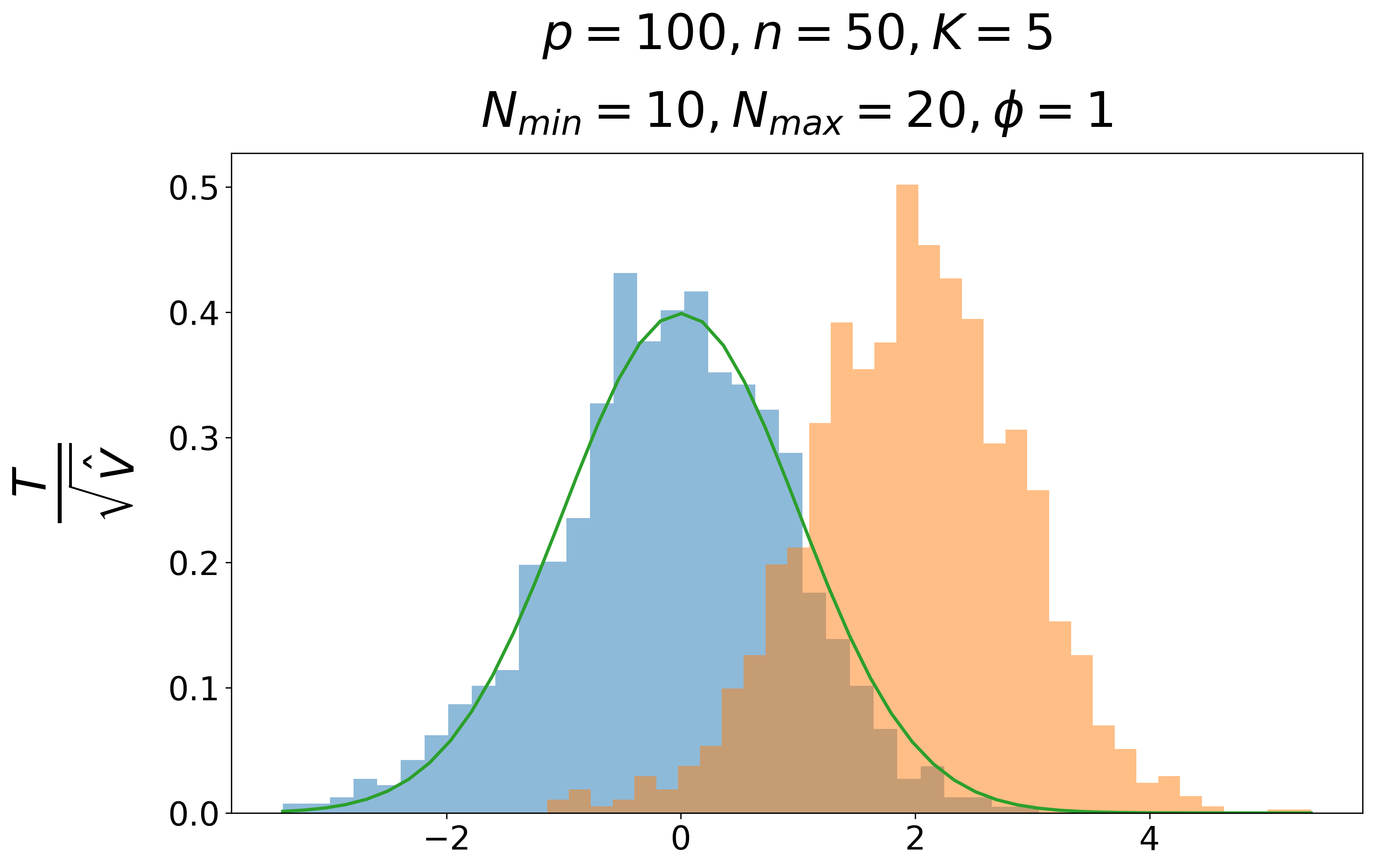}
	\includegraphics[width=0.32\textwidth]{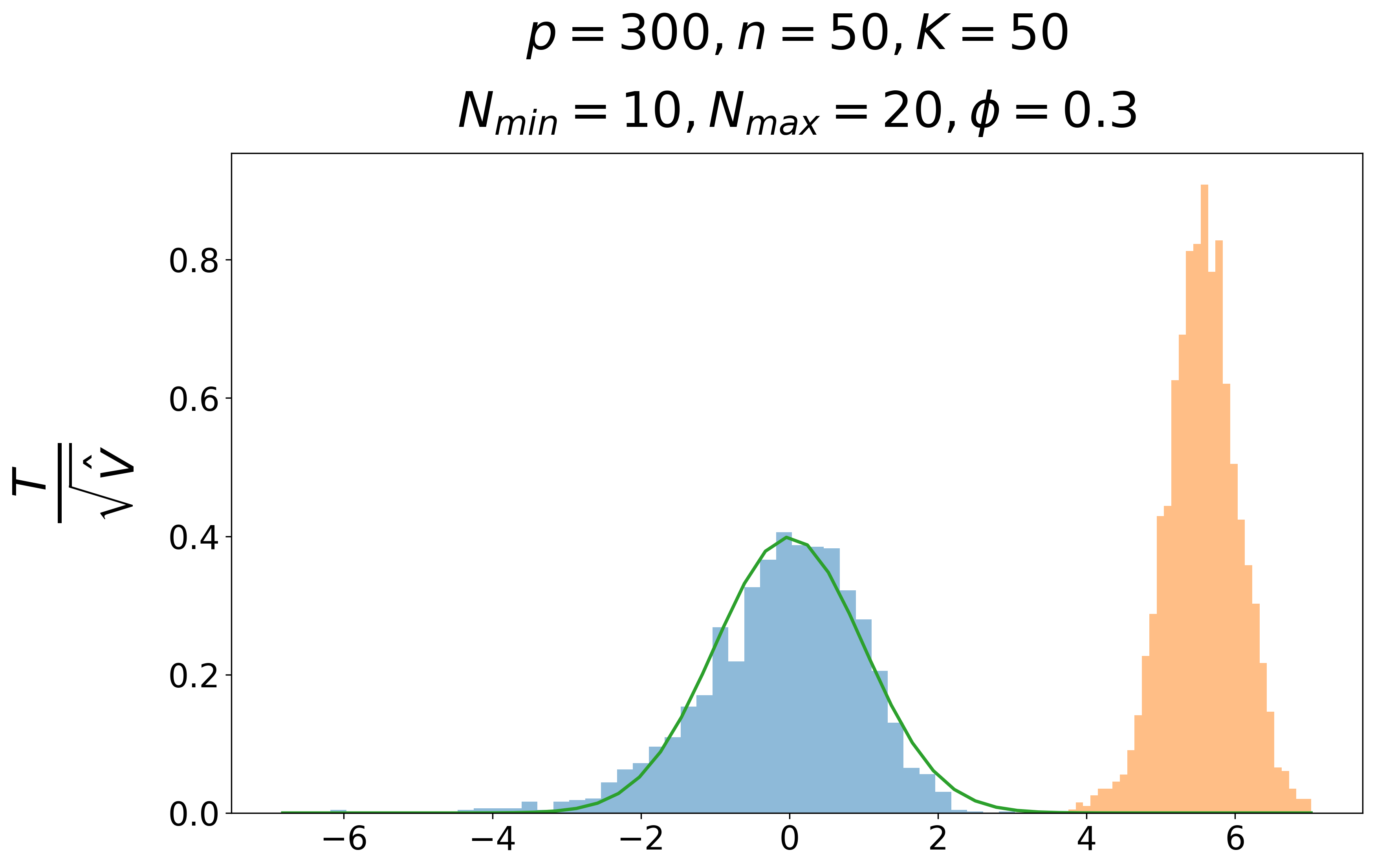}
	\caption{Histograms of DELVE (top panels) and DELVE+ (bottom panels) statistics in Experiments 1.1-1.3. In each plot, the blue and orange histograms correspond to the null and alternative hypotheses, respectively;  and the green curve is the density of $N(0,1)$. 
	}
	\label{fig:Experiment1}
\end{figure}

{\it Experiment 1 (Asymptotic normality)}.
Given $(n, p, K, N_{\min}, N_{\max}, \phi)$, we generate data as follows: first, divide $\{1,\ldots,n\}$ into $K$ equal-size groups. Next, we draw $\Omega_1^{alt}, \ldots, \Omega_n^{alt}$ i.i.d. from $\text{Dirichlet}(p, \phi \mathbf{1}_p)$. 
Third, we draw $N_i \stackrel{iid}{\sim} \text{Uniform}[N_{\min}, N_{\max}]$ and set $\Omega_i^{null} = \mu$, where $\mu := \frac{1}{n \bar{N}} \sum_i N_i \Omega_i^{alt}$. Last, we generate $X_1, \ldots, X_n$ using Model~\eqref{Mod1-data}. 
We consider three sub-experiments. In Experiment 1.1, $(n, p, K, N_{\min}, N_{\max}, \phi)=(50, 100, 5, 10, 20, 0.3)$. 
In Experiment 1.2, $\phi$ is changed to $1$, and the other parameters are the same. When $\phi=1$, $\Omega_i^{alt}$ are drawn from the uniform distribution of the standard probability simplex; in comparison, $\phi=0.3$ puts more mass near the boundary of the standard probability simplex. In Experiment 1.3, we keep all parameters the same as in Experiment 1.1, except that $(p, K)$ are changed to $(300, 50)$.  
For each sub-experiment, we generate 2000 data sets under the null hypothesis and plot the histogram of the DELVE test statistic $\psi$ (in blue); similarly, we generate 2000 data sets under the alternative hypothesis and plot the histogram of $\psi$ (in orange). The results are contained in Figure~\ref{fig:Experiment1}. 

In all  sub-experiments, when the null hypothesis holds, the histograms of DELVE and DELVE+ fit the standard normal density reasonably well. This supports our theory in Section~\ref{subsec:Main-null}. Second, when $(p, K)$ increase, the finite sample effect becomes slightly more pronounced (c.f., Experiment 1.3 versus Experiment 1.1). Third, the tests have power in differentiating two hypotheses. As $\phi$ decreases or $K$ increases, the power increases, and the two histograms become further apart. Last, in the alternative hypothesis, DELVE+ has smaller mean and variance than DELVE. 
By Lemma~\ref{prop:modification}, they have similar asymptotic behaviors. The simulations suggest that they have noticeable finite-sample differences.

\begin{figure}[!tb]
	\centering
	\includegraphics[width=0.3\textwidth]{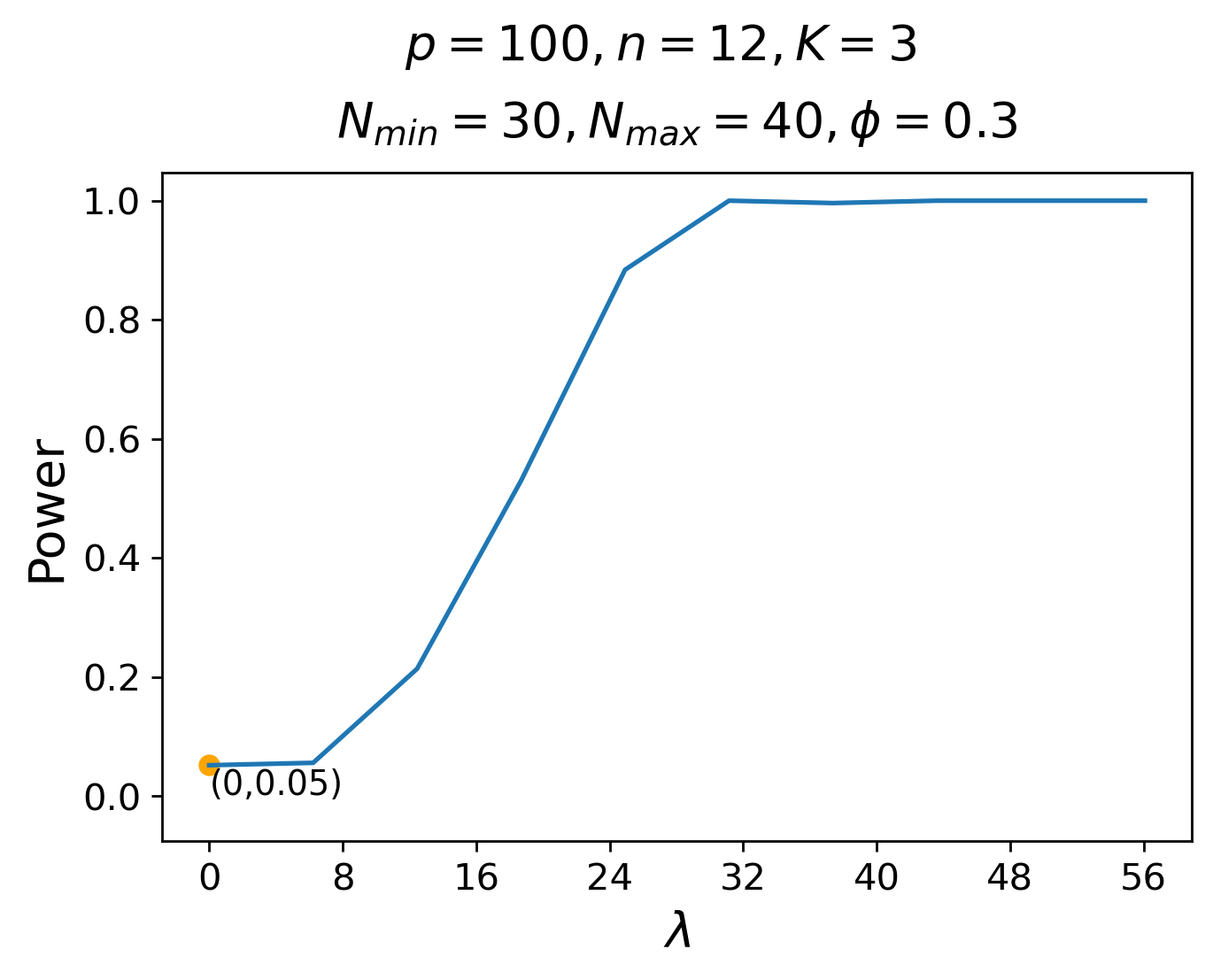}
	\includegraphics[width=0.3\textwidth]{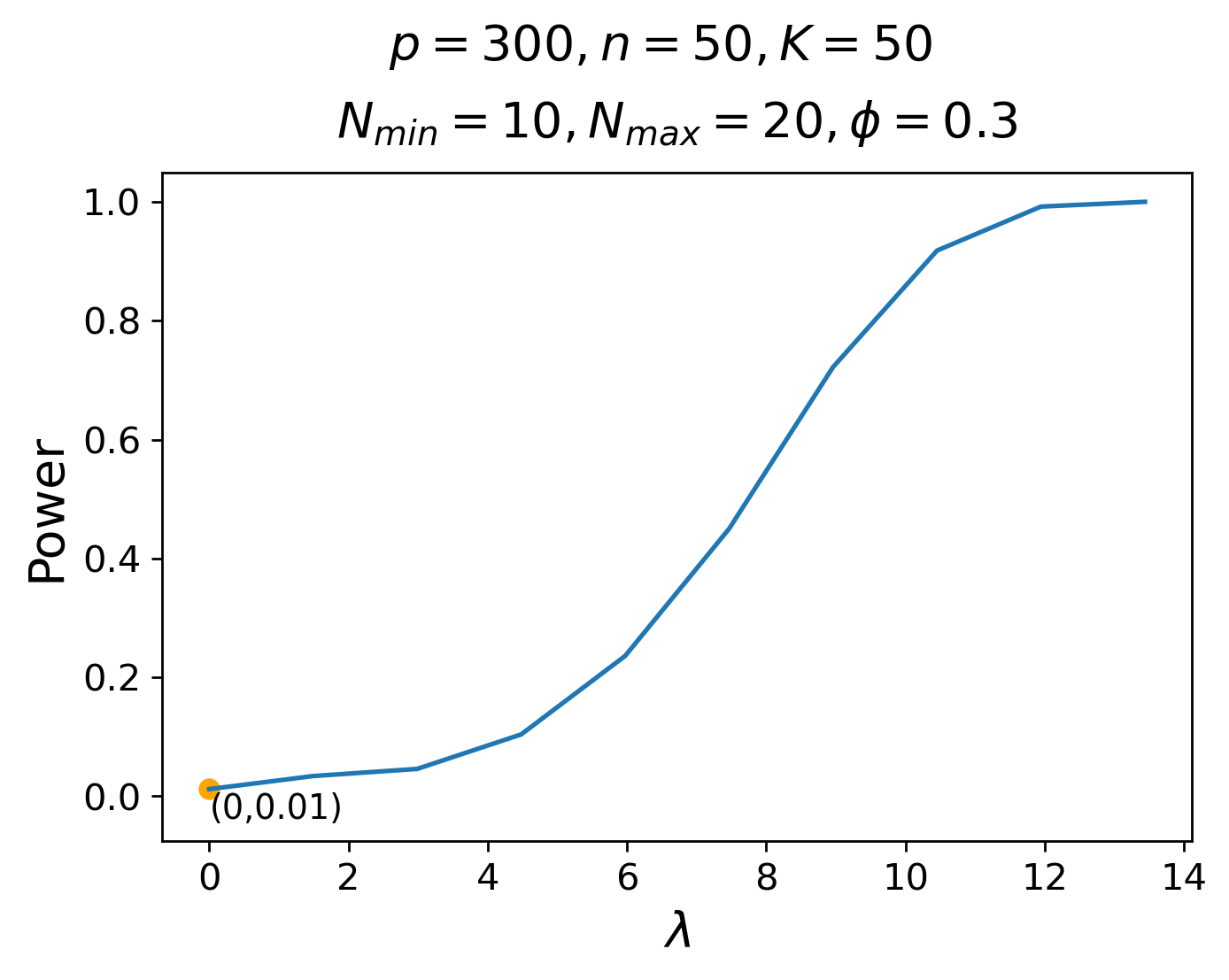}
	\includegraphics[width=0.3\textwidth]{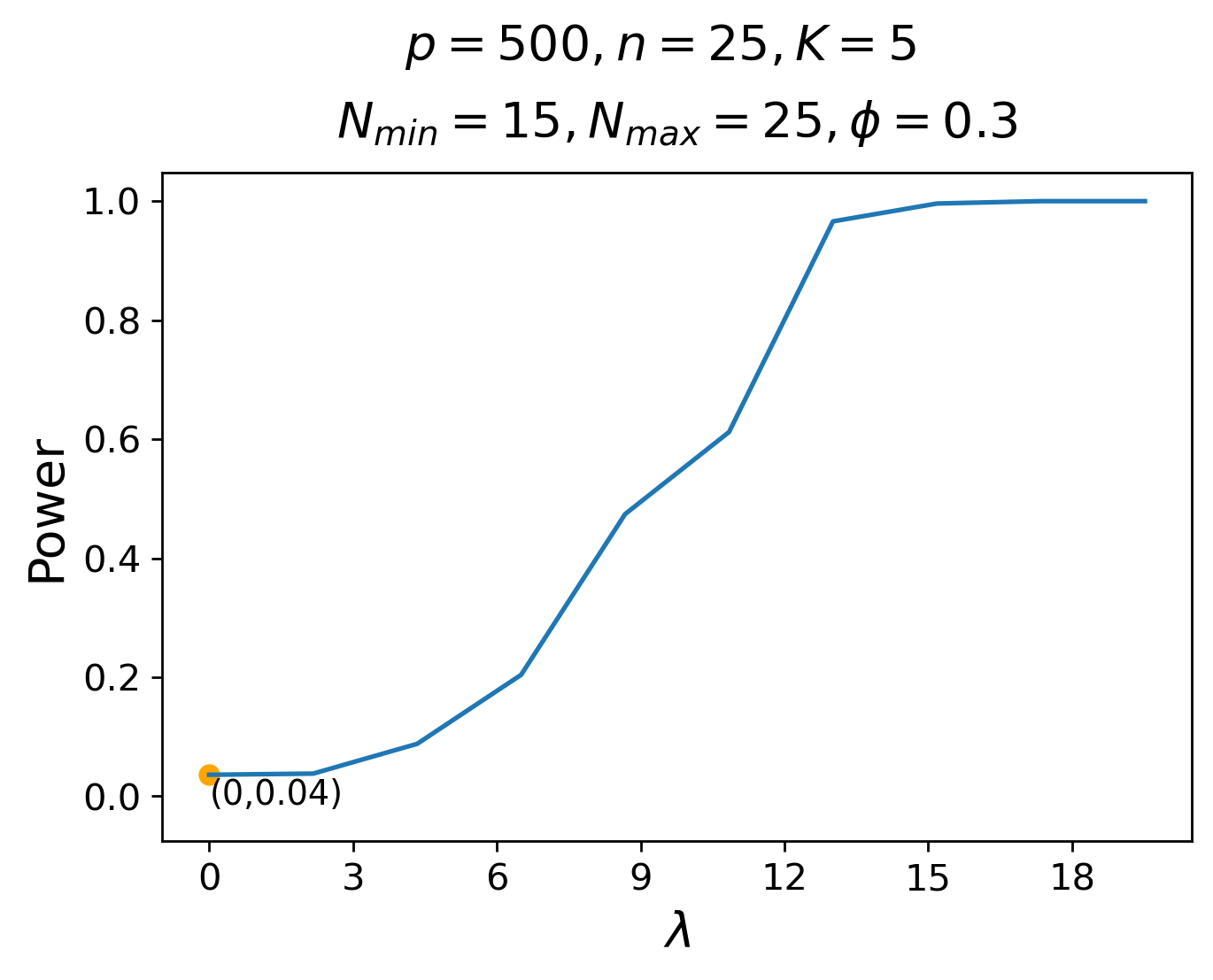}

	\caption{
		Power of the level-$5\%$ DELVE test ($x$-axis represents the SNR $\lambda(\tau_n) =\frac{ n \bar{N} \| \mu \|\tau_n^2}{\sqrt{K}}$).
	}
	\label{fig:Experiment2}
\end{figure}

{\it Experiment 2 (Power curve)}.  
Similarly as in Experiment 1, we divide $\{1,2,\ldots,n\}$ into $K$ equal-size groups and draw $N_i \sim \text{Uniform}[N_{\min}, N_{\max}]$. In this experiment, $\Omega_i$'s are generated in a different way. 
Under $H_0$, we draw $\mu \sim \text{Dirichlet}(p/2, \phi \mathbf{1}_{p/2})$ and set $\Omega_i^{null} = \tilde \mu$, where $\tilde \mu_j = \frac{1}{2} \mu_j$ for $j \leq p/2$ and $\tilde \mu_j = \frac{1}{2} \mu_{j-p/2}$ for $j\geq p/2 + 1$. 
Under $H_1$, fixing some $\tau_n\in [0,1]$, we draw $z_1, \ldots, z_K$, $b_1, \ldots, b_{p/2} \stackrel{iid}{\sim} \text{Rademacher}(1/2)$ and let $\Omega_{ij}^{alt} = \tilde{ \mu}_j (1 + \tau_n z_k b_j)$, for $i$ in group $k$ and $1\leq j\leq p/2$,  and let $\Omega_{ij}^{alt} =  \tilde{\mu}_j (1 - \tau_n z_k b_{j-p/2})$ for $p/2+1\leq j\leq p$. 
By applying our theory in Section~\ref{subsec:Main-power} together with some calculations, the signal-to-noise ratio is captured by 
$
\lambda(\tau_n) := K^{-1/2} n \bar{N} \| \mu \| \tau_n^2. 
$
In particular, it holds that $\omega_n^2(\Omega^{alt}) = \tau_n^2$, for the $\omega_n^2$ defined in \eqref{def:omega_n}. 
We consider three sub-experiments, Experiment 2.1-2.3, where the parameter values of $(n,p,K,N_{\min}, N_{\max},\phi)$ are the same as in Experiments 1.1-1.3. For each sub-experiment, we consider a grid of 10 equally-spaced values of $\lambda$. When $\lambda=0$, it corresponds to $H_0$; when $\lambda>0$, it corresponds to $H_1$. 
For each $\lambda$, we generate $500$ data sets and compute the fraction of rejections of the level-$5\%$ DELVE test. This gives a power curve for the level-$5\%$ DELVE test, in which the first point associated with $\lambda=0$ is the actual level of the test. The results are in Figure~\ref{fig:Experiment2}. We repeat the same experiments for the DELVE+ test; owing to space limit, the plots are in \cite{DELVE-supp}. 
In all three experiments, the actual level of our proposed tests is $\leq 5\%$, suggesting that our tests perform well at controlling the type-I error. As $\lambda$ increases, the power gradually increased to $1$, suggesting that $\lambda$ is a good metric of the signal-to-noise ratio. This supports our theory in Section~\ref{subsec:Main-power}. 

\begin{figure}[tbp]
	\centering
	\includegraphics[width=0.248\textwidth, height=0.22\textwidth, trim=0 15 0 0, clip=true]{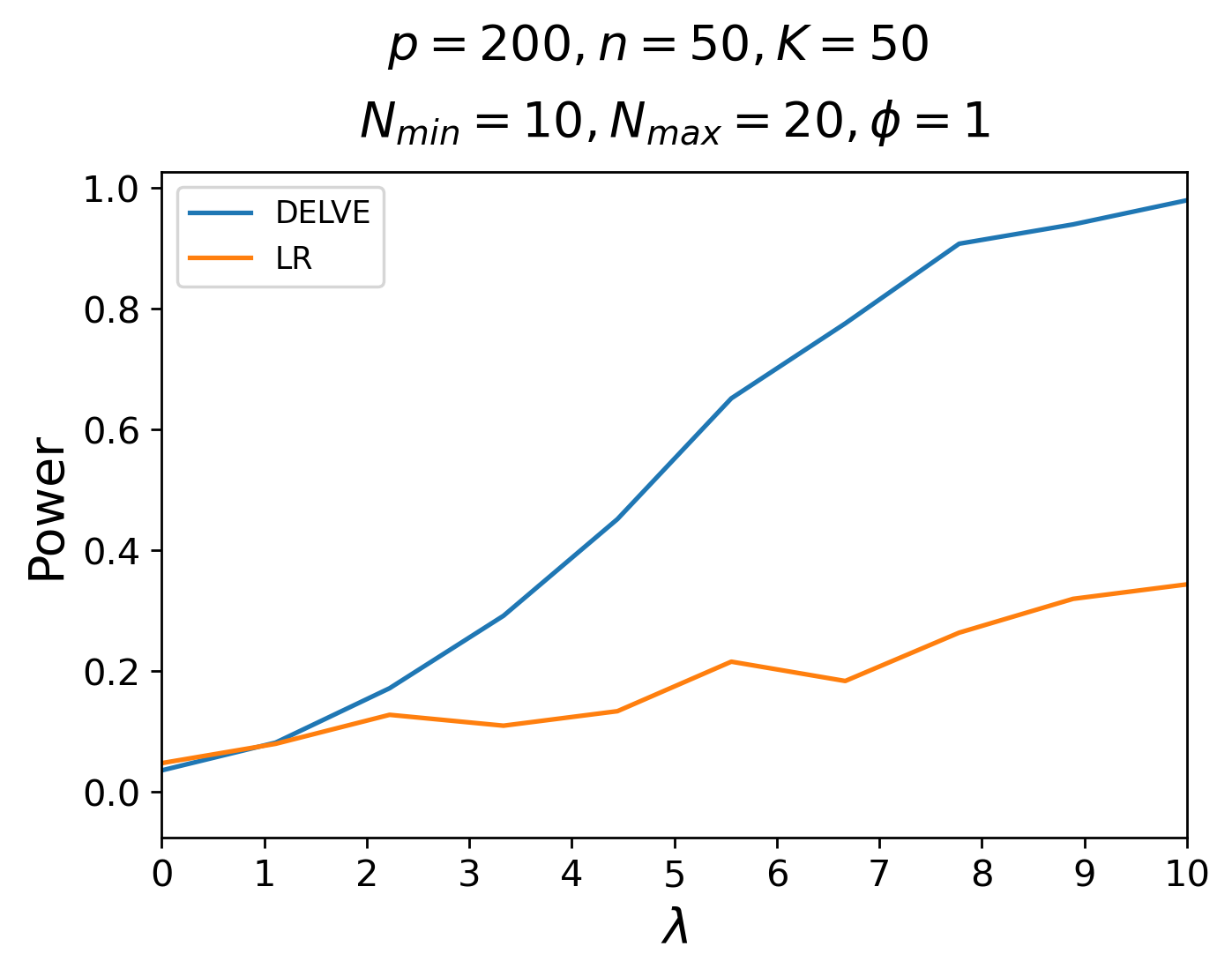}
	\includegraphics[width=0.248\textwidth, height=0.22\textwidth,  trim=0 15 0 0, clip=true]{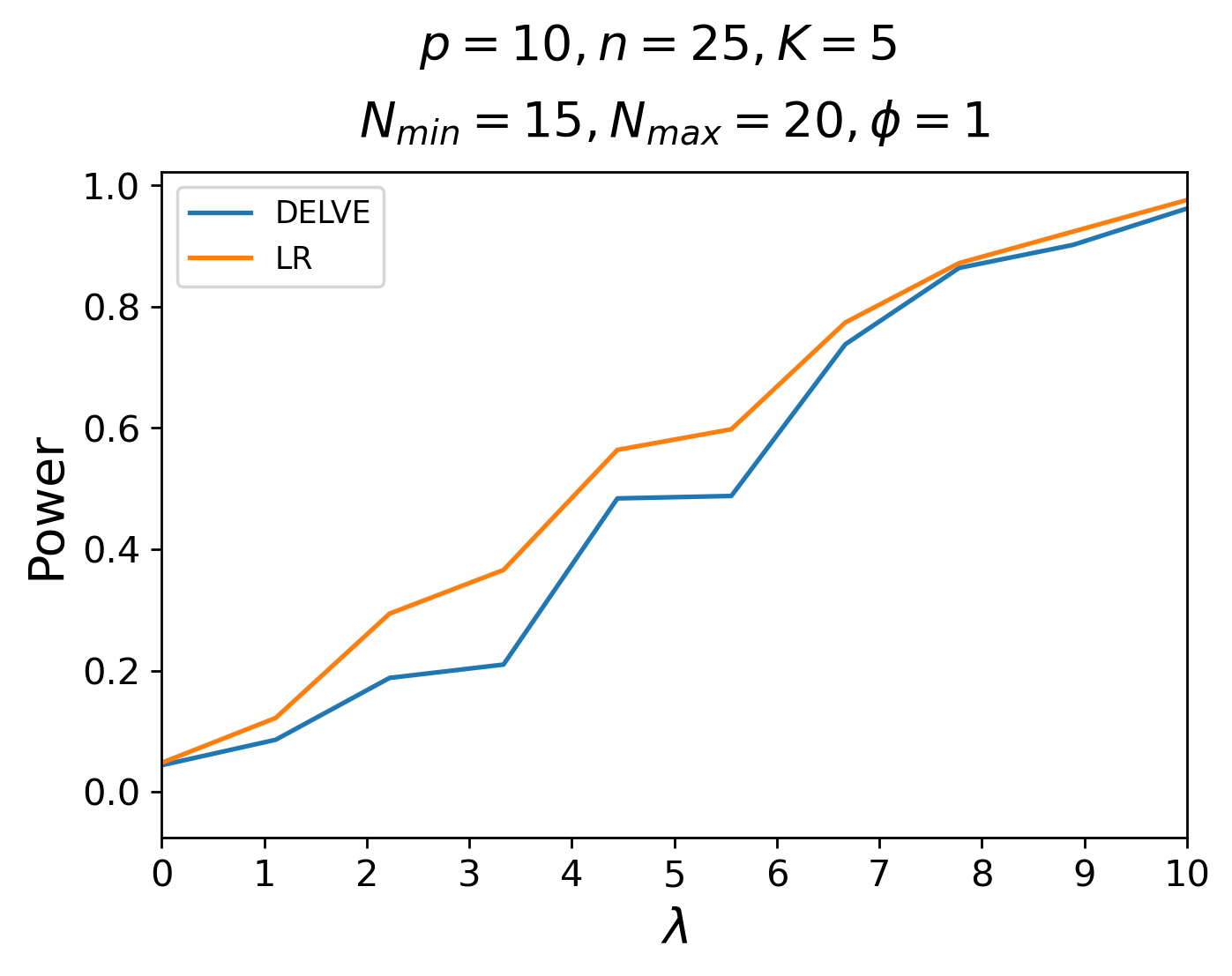}
		\includegraphics[width=0.245\textwidth, height=0.22\textwidth]{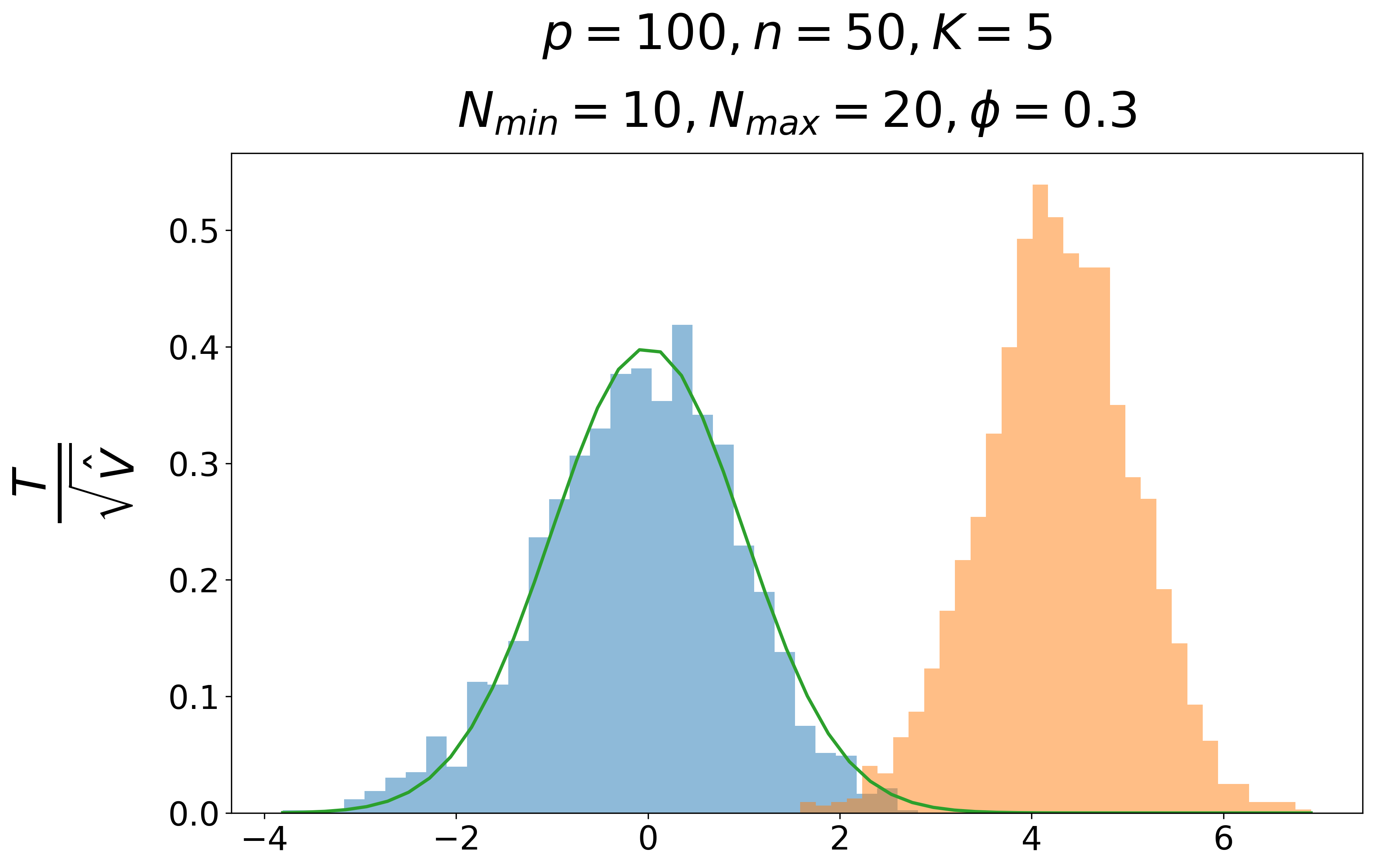} 
	\includegraphics[width=0.235\textwidth, height=0.22\textwidth]{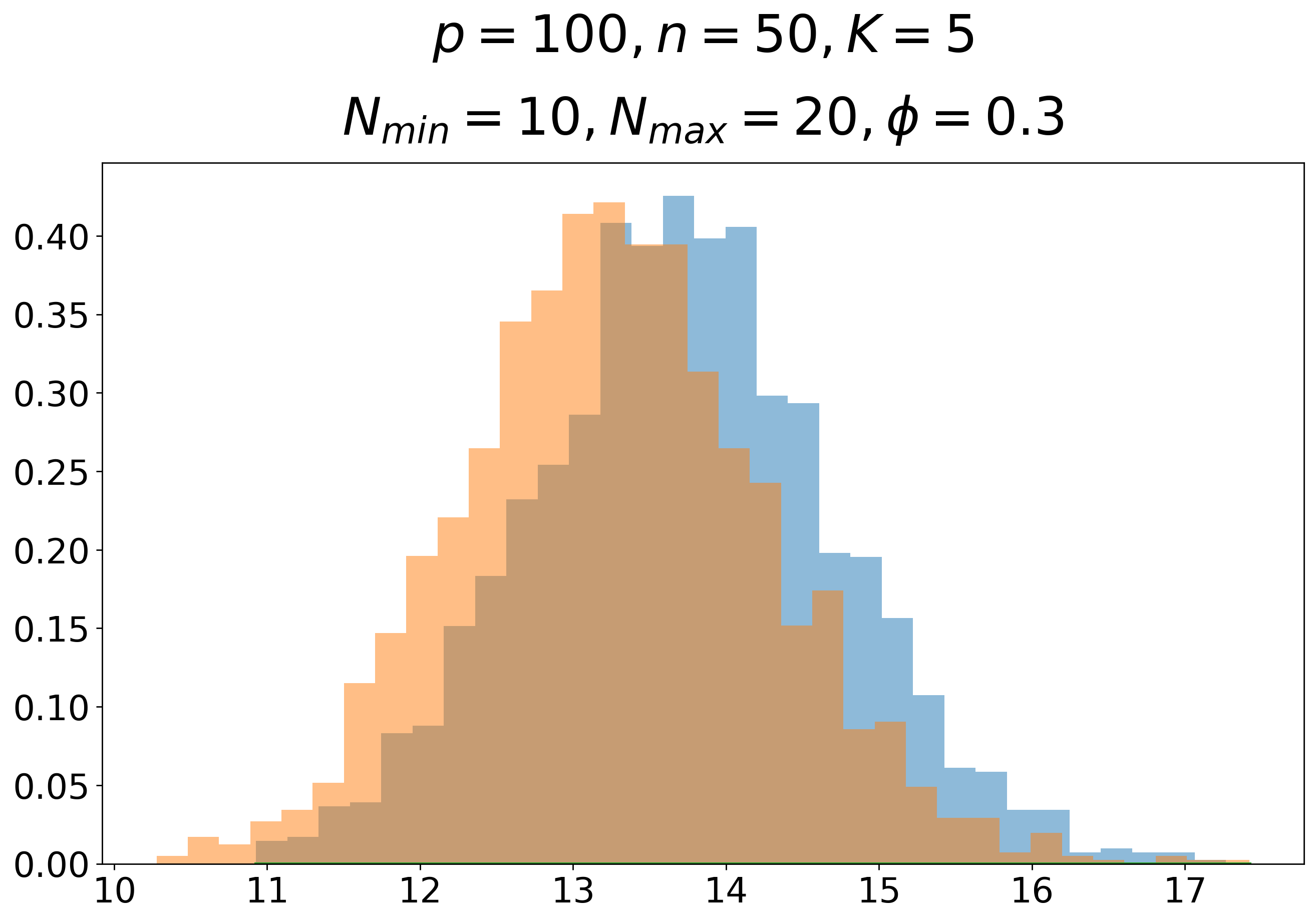}
	\caption{Comparison of DELVE+, LR, and ANOVA (details are in Experiment~3).}
	\label{fig:Experiment3}
\end{figure}

{\it Experiment 3 {\it (Comparison with the LR and ANOVA tests)}.} 
This experiment contains two sub-experiments. 
In Experiment 3.1, we compare DELVE+ with the likelihood ratio (LR) test. The LR test is only well-defined in the special case where $\Omega_i$'s are equal within each group. In this case, $T^{\mathrm{LR}}=\sum_{k}    n_k \bar{N}_k \sum_j \hat \mu_{kj}  \log\big( \frac{ \hat \mu_{kj} }{ \hat \mu_j } \big)$, where $\hat{\mu}_k$ and $\hat{\mu}$ are the same as in \eqref{define:eta}, and $\log(0/0) = 0$. Given $(n,p,K,N_{\min}, N_{\max},\phi)$, we generate data in the same way as in Experiment~2 (these settings guarantee that $\Omega_i$'s are equal within-group, hence favoring the LR test). Since no asymptotic normality result is known for $T^{\mathrm{LR}}$, we use an ideal threshold for the LR test - drawing 500 data sets from the null model ($\lambda=0$) and computing the empirical $95\%$-quantile of $T^{\mathrm{LR}}$. The power curves for two representative settings ($p=200$ and $p=10$) are shown in the left two panels of Figure~\ref{fig:Experiment3}. 
More settings can be found in Section~\ref{supp:LR} of \cite{DELVE-supp}.
We observe that DELVE+ significantly outperforms LR when $p$ is large/moderate compared to $n$, and they perform similarly (with LR being slightly better) when $p$ is small.
In Experiment 3.2, we compare DELVE+ with the ANOVA test that uses $\widetilde{T}$ in \eqref{define:tildeT} as the test statistic. 
The simulation settings are the same as in Experiment 1.1. The third panel of Figure~\ref{fig:Experiment3} is a replication of the bottom left panel of Figure~\ref{fig:Experiment1} and shows the histograms of DELVE+ test statistics under two hypotheses. The fourth panel of  Figure~\ref{fig:Experiment3} contains the histograms of $\widetilde{T}$. We see that 
$\widetilde{T}$ fails to distinguish two hypotheses while DELVE+ is able to do so. 
As explained in Remark \ref{debiasing-effect-remark}, the naive ANOVA test can lose power due to the lack of de-biasing.

\section{Real Data Analysis} \label{sec:RealData}

We consider two real corpora consisting of statistical paper abstracts and Amazon movie reviews, respectively.  
We use them to showcase: Although testing the null hypothesis \eqref{Mod3-null} is only a binary decision problem, it can be used to answer various questions of interest by simply varying the definition of ``groups" in \eqref{Mod3-null}. 
For example, we may define ``groups" of movie reviews by movie title, star rating, posting time, reviewer characteristics, etc.. 
Then, our test can detect many different kinds of heterogeneity in movie reviews (the same holds for other product reviews).
In Section~\ref{sec:Method}, we proposed DELVE and DELVE+ and explained that the latter is more suitable for real data; hence, we use DELVE+ here.   


\vspace{-.5cm}

\subsection{Abstracts of statisticians} \label{subsec:StatAbstracts}

The data set from \cite{ji2016coauthorship} contains the bibtex information of published papers in four top-tier statistics journals, {\it Annals of Statistics}, {\it Biometrika}, {\it Journal of the American Statistical Association}, and {\it Journal of the Royal Statistical Society - Series B}, from 2003 to the first half of 2012. In the pre-processing step, we first remove common stop words such as ``for", ``also", ``can", and ``the", and  common domain-specific words such as ``statistician", ``estimate", and ``sample". We then perform \textit{stemming}, which maps together words with a common prefix such as ``play", ``player", and ``playing". Finally, we perform tokenization, which maps each abstract to its vector of word (stem) counts.

We conduct two experiments. In the first one, we fix an author and treat the collection of his/her co-authored abstracts as a corpus. We apply DELVE+ with $K = n$, where $n$ is the number of abstracts written by this author. The $Z$-score measures the ``diversity" or ``variability" of this authors' abstracts. An author with a high $Z$-score possesses either diverse research interests or a variable writing style. A number of authors have only 1--2 papers, and the variance estimator $V$ is often negative; we remove all those authors.  In Figure~\ref{fig:author_diversity} (left), we plot the histogram of $Z$-scores of retained authors. The mean is $4.52$ and the standard deviation is $2.94$. In Figure~\ref{fig:author_diversity} (middle), we show the plot of $Z$-score versus logarithm of the number of abstracts written by this author. The most prolific author has 82 papers and a $Z$-score larger than $20$, implying a huge diversity in his/her abstracts. There is also a positive association between $Z$-score and number of papers. It suggests that senior authors have more diversity in their abstracts, which is intuitive.

\begin{figure}[!tb]
	\centering
	\includegraphics[width=0.3\textwidth, height=.24\textwidth, trim=10 0 10 9.2, clip=true]{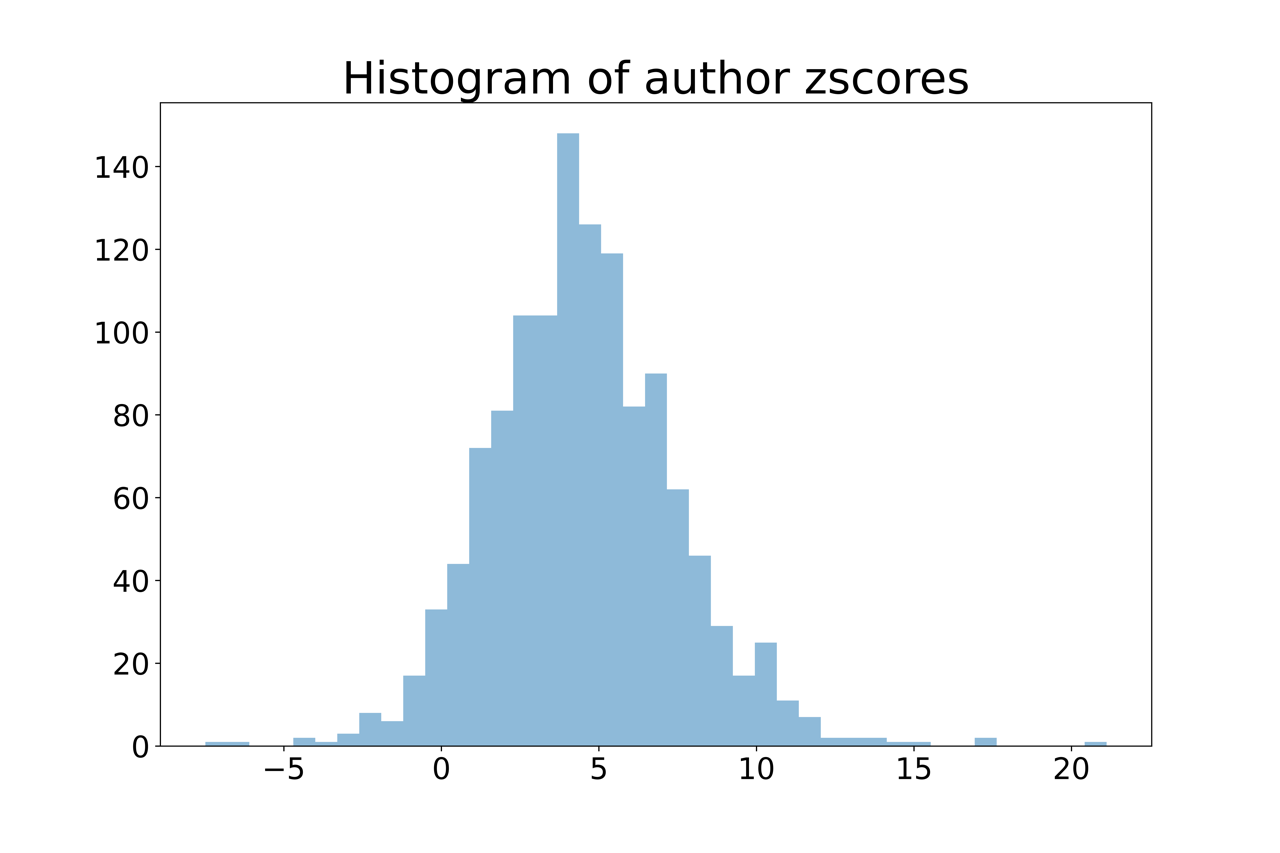}
	\includegraphics[width=0.35\textwidth, height=.255\textwidth, trim=15 0 10 50, clip=true]{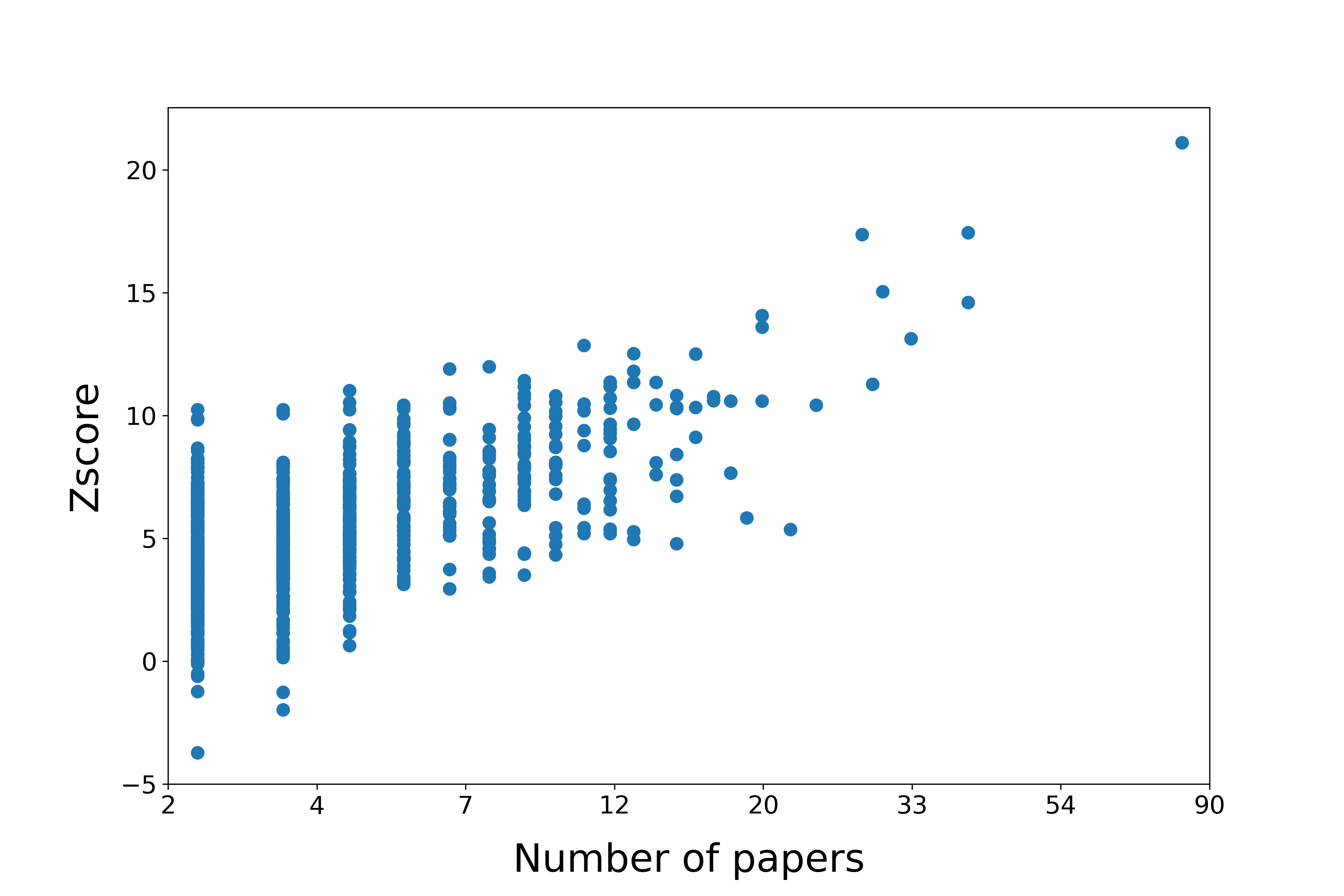}
	\includegraphics[width=0.33\textwidth, height=.24\textwidth, trim=0 0 0 25, clip=true]{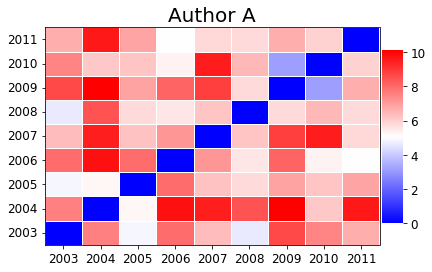}
	\caption{Results about statistical abstracts. Left: Histogram of author $Z$-scores (mean is $4.52$, and standard deviation is $2.94$). Middle: Author $Z$-score versus number of papers. Right: Pairwise $Z$-score plot for a representative author. 
	}
	\label{fig:author_diversity}
\end{figure}

In the second experiment, we further divide an author's abstracts into smaller groups by publication year. 
Owing to space limit, we only show the results for the most prolific author who has 82 papers, but we keep in mind that the same analysis can be done for each author in the data set (see \cite{DELVE-supp}).  
We divide this author's abstracts into 9 groups, each group corresponding to one year. For each pair of groups, we implement DELVE+ with $K=2$. This yields a pairwise plot of $Z$-scores, as shown in Figure~\ref{fig:author_diversity} (right). 
It reveals the temporal patterns of this author in abstract writing. The group consisting of 2004-2005 abstracts has comparably large $Z$-scores in the pairwise comparison with other groups. To interpret the results, we read titles and abstracts of all of this author's papers and found that in 2004-2005 he/she extensively studied topics related to bandwidth selection in the context of nonparametric estimation. 

\begin{remark}{\rm
The asymptotic normality in Section~\ref{subsec:Main-null} is established under the condition $n^2\bar{N}^2\gg Kp$. It is worth checking if this holds in real data. We compute $DR:= n^2 \bar{N}^2/(K p)$ for all the corpora analyzed in the above two experiments (see Section \ref{supp:DR} of \cite{DELVE-supp}). These DR values are quite large. Therefore, it would be appropriate to apply the asymptotic normality result, and we think the $Z$-scores and $p$-values are trustworthy. 
}\end{remark}

\vspace{-1cm}

\subsection{Amazon movie reviews} 
\label{subsec:Amazon}

\begin{figure}[!tb]
\centering
		\hspace{-15pt}	
		\includegraphics[width=0.3\textwidth, height=.2\textwidth, trim=0 30 0 20, clip=true]{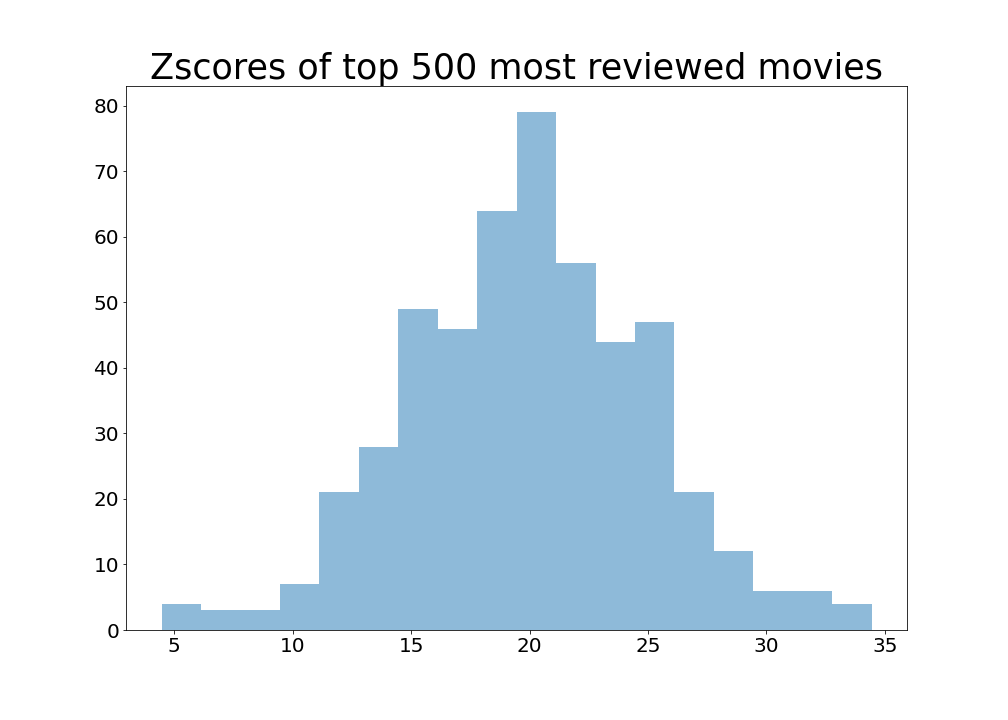} 
		\scalebox{.65}{
			\begin{tabular}[b]{r  p{6.6cm} r r}
				Rank & Title                                         & $Z$-Score & Total reviews   \\
				\hline 
				1 & Prometheus                                         & 34.44  & 813                \\
				2 & Expelled: No Intelligence Allowed                  & 34.17  & 830             \\
				3 & V for Vendetta                                     & 32.24  & 815                 \\
				4 & Sin City                                           & 31.72  & 828                \\
				$\vdots$ & $\vdots$ & $\vdots$ & $\vdots$  \\
				17 & Cars                                               & 19.98  & 902                \\
				18 & Food, Inc.                                         & 17.81  & 876        \\
				19 & Jeff Dunham: Arguing with Myself                   & 4.96   & 860                \\
				20 & Jeff Dunham: Spark of Insanity                     & 4.46   & 877              \\
		\end{tabular}}

	\includegraphics[width=0.30\textwidth, height=.28\textwidth]{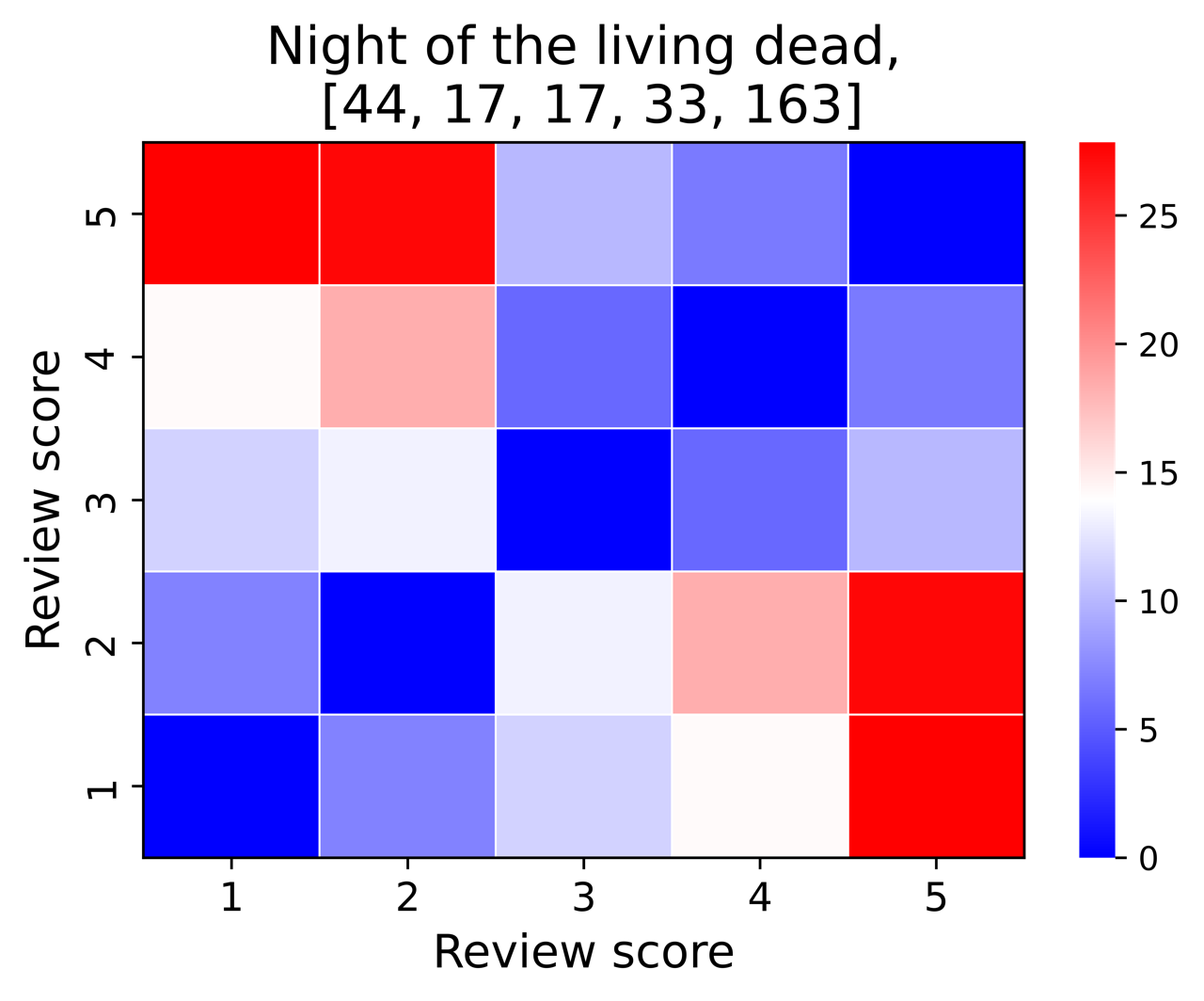}	\hspace{0.3cm}
	\includegraphics[width=0.30\textwidth, height=.28\textwidth]{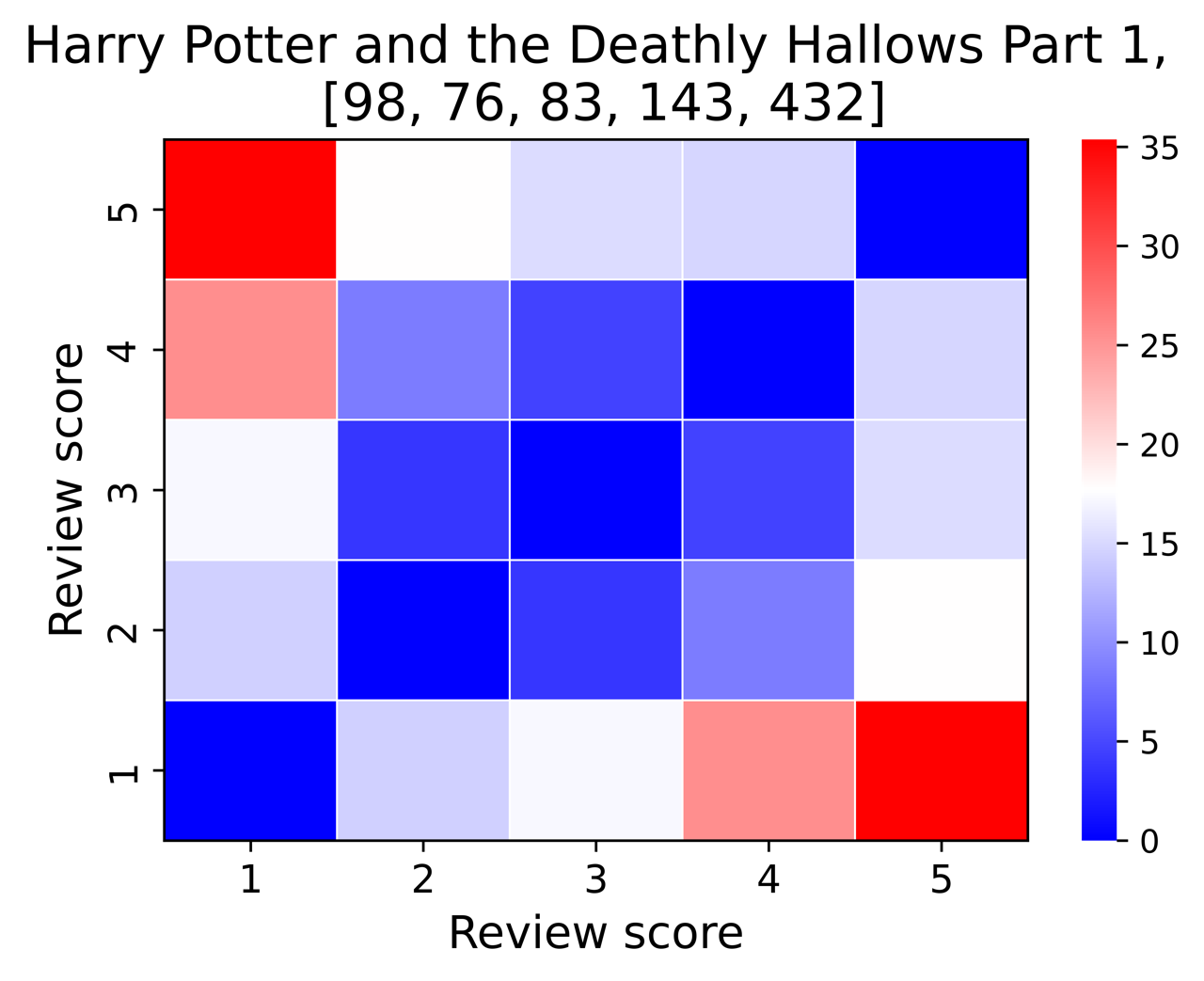}
	\hspace{0.3cm}
	\includegraphics[width=0.30\textwidth, height=.28\textwidth]{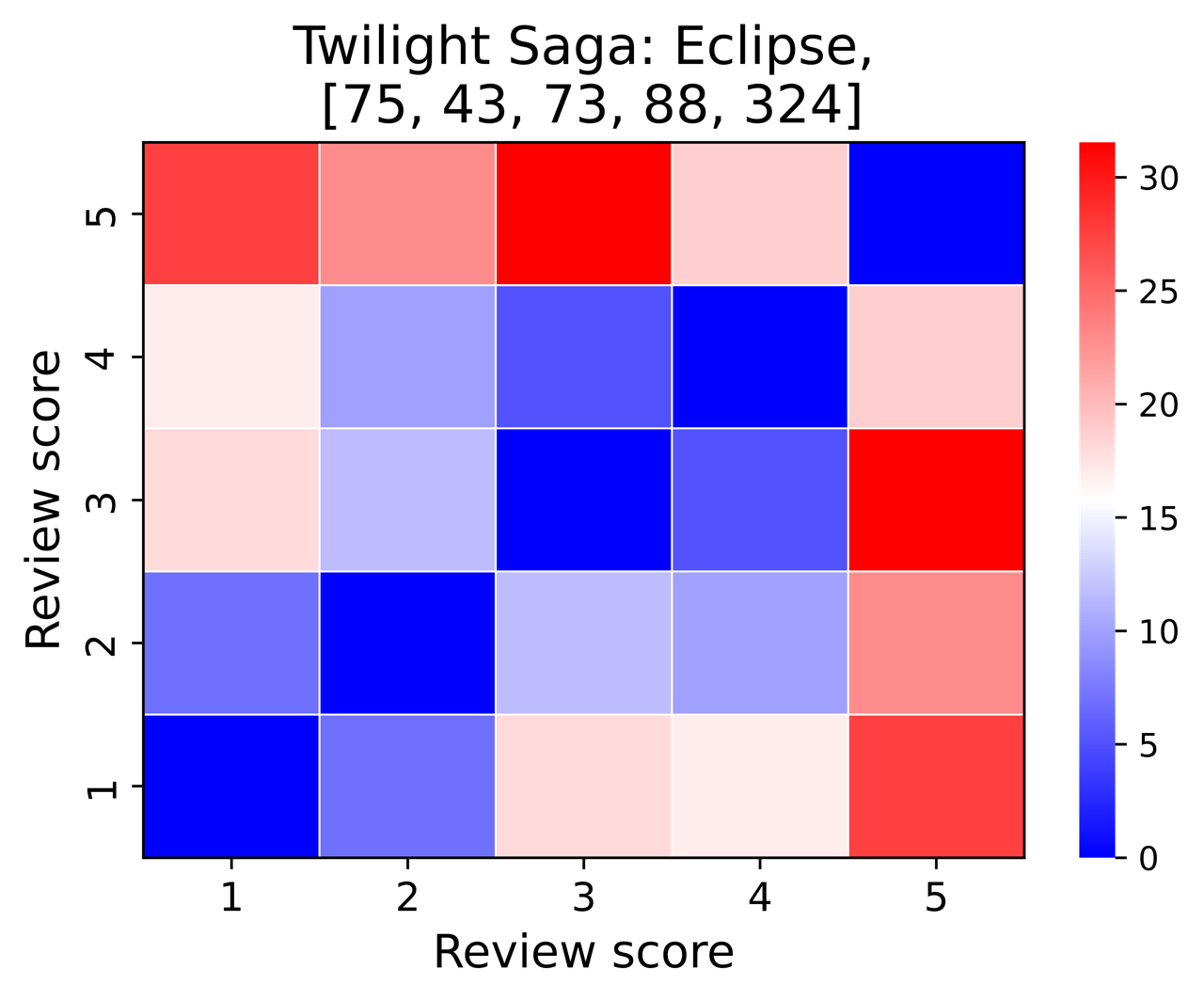}
	\caption{Results about movie reviews. Top left: Histogram of $Z$-scores for the 500 most-reviewed movies (mean is $19.97$, and standard deviation is $5.07$). Top right: Information and $Z$-scores for the top 20 most reviewed movies. Bottom: Pairwise $Z$-score plots for 3 representative movies (the title lists the number of reviews of each rating from 1--5). 	
	} 
	\label{fig:movie_heatmaps}
\end{figure}

The dataset in \cite{maurya18} contains 1,924,471 reviews of 143,007 visual media products (ie, DVDs, Bluray, or streams). We cleaned and stemmed these review text similarly as in Section~\ref{subsec:StatAbstracts}. In the first experiment, given a movie, we consider the corpus consisting of all reviews of this movie and apply DELVE+ with $K=n$. The results are in the top panels of Figure~\ref{fig:movie_heatmaps}. First, we plot the histogram of $Z$-scores for the top 500 most reviewed movies. The mean is $19.97$ and the standard deviation is $5.07$. Compared with the histogram of $Z$-scores for statistics paper abstracts, there is much larger diversity in movie reviews.  Next, we list the 4 movies with the highest $Z$-scores and lowest $Z$-scores out of the 20 most reviewed movies. Each movie has more than 800 reviews, but some have surprisingly low $Z$-scores. The works by the comedian Jeff Dunham have the lowest $Z$-scores,  suggesting  strong homogeneity among the reviews. The 2012 horror film {\it Prometheus}  has the highest degree of review diversity among the 20 most reviewed movies.  
In the second experiment, we further divide each movie's reviews into 5 groups by star rating. We compare each pair of groups using DELVE+  with $K=2$, resulting in a pairwise $Z$-score plot. In the bottom panels of Figure \ref{fig:movie_heatmaps}, we plot this for 3 popular movies. We see a variety of polarization patterns among the scores. In {\it Harry Potter and the Deathly Hallows Part I}, DELVE+ signifies that the reviews with ratings in the range 2--4 stars are all similar. We see a smooth gradation in how the 1-star reviews differ from those from 2--4 stars, and similarly for  5-star reviews versus those from 2--4 stars. {\it Twilight Saga: Eclipse} shows three clusters: 1--2 stars, 3--4 stars, and 5 star, while {\it Night of the living dead} shows two  clusters: 1--2 stars and  3--5 stars. 

As mentioned in Section \ref{sec:Intro}, the marketing research aims to understand patterns of online customer reviews. 
Our DELVE testing framework is a flexible approach to detecting many kinds of heterogeneity in review text. If reviewer characteristics (e.g., gender) are available, we can group reviews by these characteristics and answer questions such as if female and male reviewers have different styles in writing review text. In the experiments here, we showcase how to use DELVE to find patters in movie ratings.   
Although many literature works have studied patterns of movie reviews \citep{baek2012helpfulness}, most are based on the distribution of numeric ratings. The three movies in Figure \ref{fig:movie_heatmaps} have similar distributions of numerical ratings, but the patters in text reviews are considerably different. Such plots will be useful for improving rating systems, recommending movies to customers, and detecting fake reviews.


\vspace{-0.3cm}

\section{Discussions} \label{sec:Discuss}

We examine the  testing for equality of PMFs of $K$ groups of high-dimensional multinomial distributions. The proposed DELVE statistic has a parameter-free limiting null that allows for computation of $Z$-scores and $p$-values on real data. DELVE achieves the optimal detection boundary over the whole range of parameters  $(n, p, K, \bar{N})$, including the high-dimensional case  $p\to\infty$, which is very relevant to applications in text mining.

This work leads to interesting questions for future study. Recall that the $\rho^2$ defined in \eqref{def:rhoSquare} is a measure of heterogeneity among the group-wise means.  So far, the focus is on testing $\rho^2=0$, but we may also consider estimation and inference of $\rho^2$.  Assuming $\rho^2=0$, we have obtained a consistent variance estimator for the DELVE metric in \eqref{DELAC} and established it asymptotic normality. To construct a confidence interval for $\rho^2$, we will need such results under the alternative hypothesis (where $\rho^2\neq 0$). 
From Figure~\ref{fig:Experiment1}, the asymptotic normality still holds when $\rho^2\neq 0$, except that stronger regularity conditions may be required. Inspired by the authorship attribution problem \citep{kipnis2021two,kipnis2022higher}, it is interesting to consider a sparse alternative hypothesis where the group mean vectors are equal except on a small set of ``giveaway words". 
As discussed in Section \ref{subsec:authorChallenge}, we may combine DELVE with the idea of higher criticism.

Another exciting future direction is to extend our methods from the `bag-of-words' model to more realistic sequence-based models. One approach is to consider the counts of adjacent words (bi-grams) instead of raw word counts. More generally, one can consider the counts of short sequences of words, which are known as $m$-grams. It is possible that a suitably modified version of DELVE would perform well in a setting where the next word is generated according to a Markov transition kernel whose input is the previous $m-1$ observed words \citep{jurafsky2000speech}. A final idea is to combine words that have similar meanings or are close in a word embedding into `superwords' and to use these superword counts as the basis for DELVE. We leave them to future work.

\medskip
\footnotesize  
\noindent \textbf{Acknowledgments} The research of T. Tony Cai was supported in part by NSF Grant DMS-2015259 and NIH grant R01-GM129781. The research of Zheng Tracy Ke was supported in part by NSF CAREER Grant DMS-1943902.

\vspace{-0.3cm}

\newpage 

\appendix 

{\noindent \Large\textbf{Appendix}}

\section{Additional simulation results} \label{supp:Simu}

\noindent \textbf{Notational conventions:} We write $A \les B$ (respectively, $A \gtrsim B$) if there exists an absolute constant $C > 0$ such that $A \leq C \cdot B$ (respectively $A \geq C \cdot B$). If both $A \les B$ and $B \les A$, we write $A \asymp B$. The implicit constant $C$ may vary from line to line. For sequences $a_t, b_t$ indexed by an integer $t \in \mathbb{N}$, we write $a_t \ll b_t $ if $b_t/a_t \to \infty$ as $t \to \infty$, and we write $a_t \gg b_t$ if $a_t/b_t \to \infty$ as $t \to \infty$. We also may write $a_t = o(b_t)$ to denote $a_t \ll b_t$. In particular, we write $a_t = (1 + o(1)) b_t$ if $a_t/b_t \to 1$ as $t \to \infty$. Given a positive integer  $T $, define $[T] = \{1, 2, \ldots, T \}$. 

We present some simulation results that are not included in the main paper for space constraint. 

\subsection{Power diagrams of DELVE+} \label{supp:Figure2omitted}

In Experiment~2 of Section~\ref{sec:Simu}, we investigate the power of the DELVE test. We now present the power diagrams for DELVE+. Please see Figure~\ref{fig:Experiment2-supp}, where the simulation settings are the same as those in Figure~\ref{fig:Experiment2}. 
Comparing these two figures, we observe that DELVE+ and DELVE have similar power on simulated data. This is consistent with our theory in Section~\ref{subsec:modification}.

\begin{figure}[htb]
	\centering

	\includegraphics[width=0.32\textwidth]{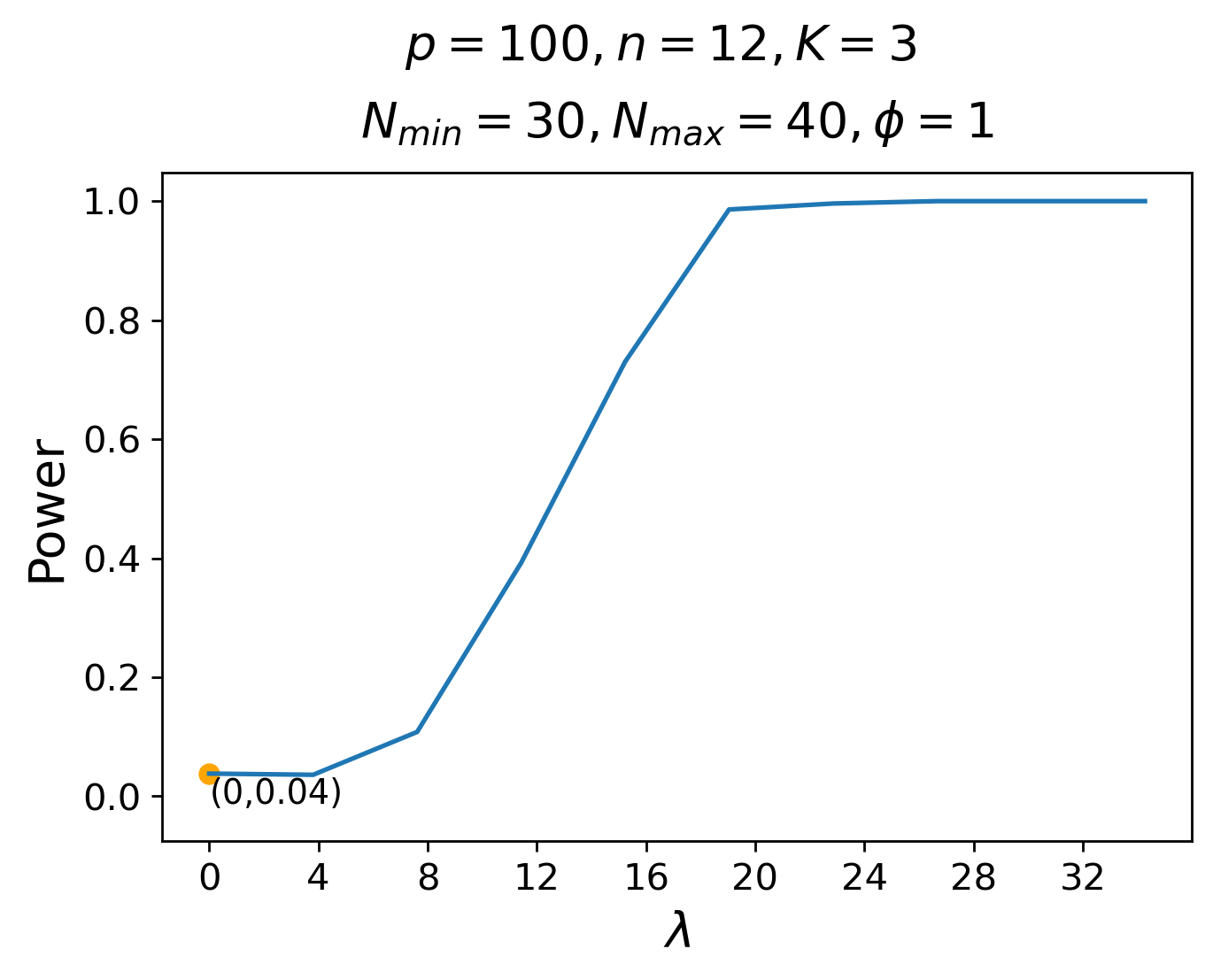}
	\includegraphics[width=0.32\textwidth]{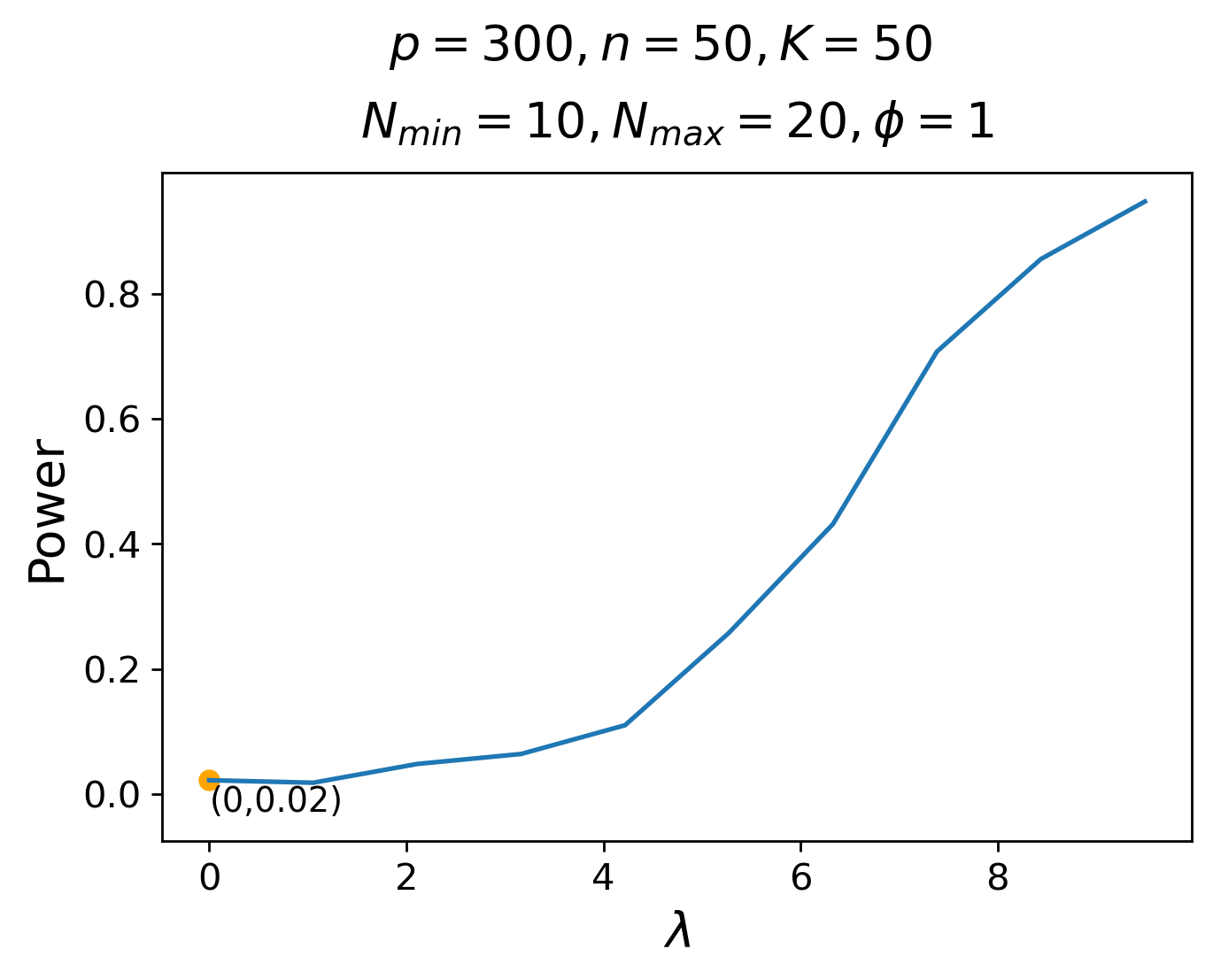}
	\includegraphics[width=0.32\textwidth]{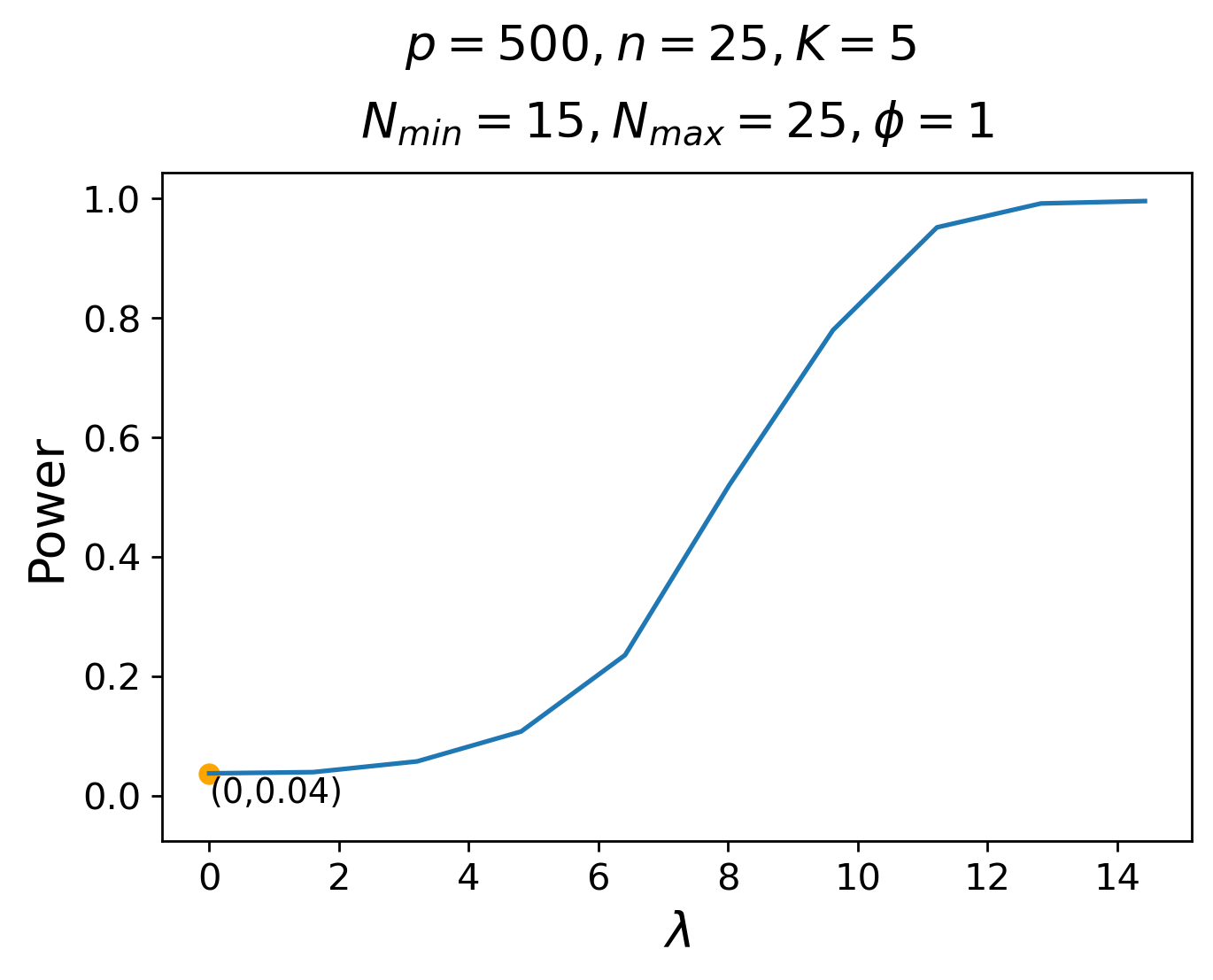}
	\caption{
		Power of the level-$5\%$ DELVE+ test ($x$-axis represents the SNR $\lambda(\tau_n) =\frac{ n \bar{N} \| \mu \|\tau_n^2}{\sqrt{K}}$).
	}
	\label{fig:Experiment2-supp}
\end{figure}

\subsection{More comparison between LR and DELVE+} \label{supp:LR}

In Experiment~3 of Section~\ref{sec:Simu}, we compare the power of DELVE+ with that of the likelihood ratio (LR) test. We recall that in our general setting \eqref{Mod3-null}, both the null and alternative hypotheses are highly composite, because $\Omega_i$'s are allowed to be unequal within each group. It is impossible to compute the LR test statistic, except in the special setting where all of the $\Omega_i$'s in group $k$ are equal to $\mu_k$. In this special setting, the LR test statistic takes the form 
\begin{equation}
	\label{eqn:LR_test}
	LR :=  \sum_{k}    n_k \bar{N}_k \sum_j \hat \mu_{kj}  \log\big( \frac{ \hat \mu_{kj} }{ \hat \mu_j } \big),
\end{equation}
where 
\begin{equation}\label{define:etaHat-supp}
	\hat{\mu}_k = \frac{1}{n_k\bar{N}_k}\sum_{i\in S_k}X_i, \qquad\mbox{and}\qquad \hat{\mu} = \frac{1}{n\bar{N}}\sum_{k=1}^K n_k\bar{N}_k\hat{\mu}_k =  \frac{1}{n\bar{N}}\sum_{i=1}^n X_i. 
\end{equation}
To ensure that LR is well-defined in the case of zero-counts (ie, $\hat \mu_{kj} = 0$ ), we define $\log(0/0) = 0$.

\begin{figure}[htb]
	\centering
	\includegraphics[width=0.32\textwidth]{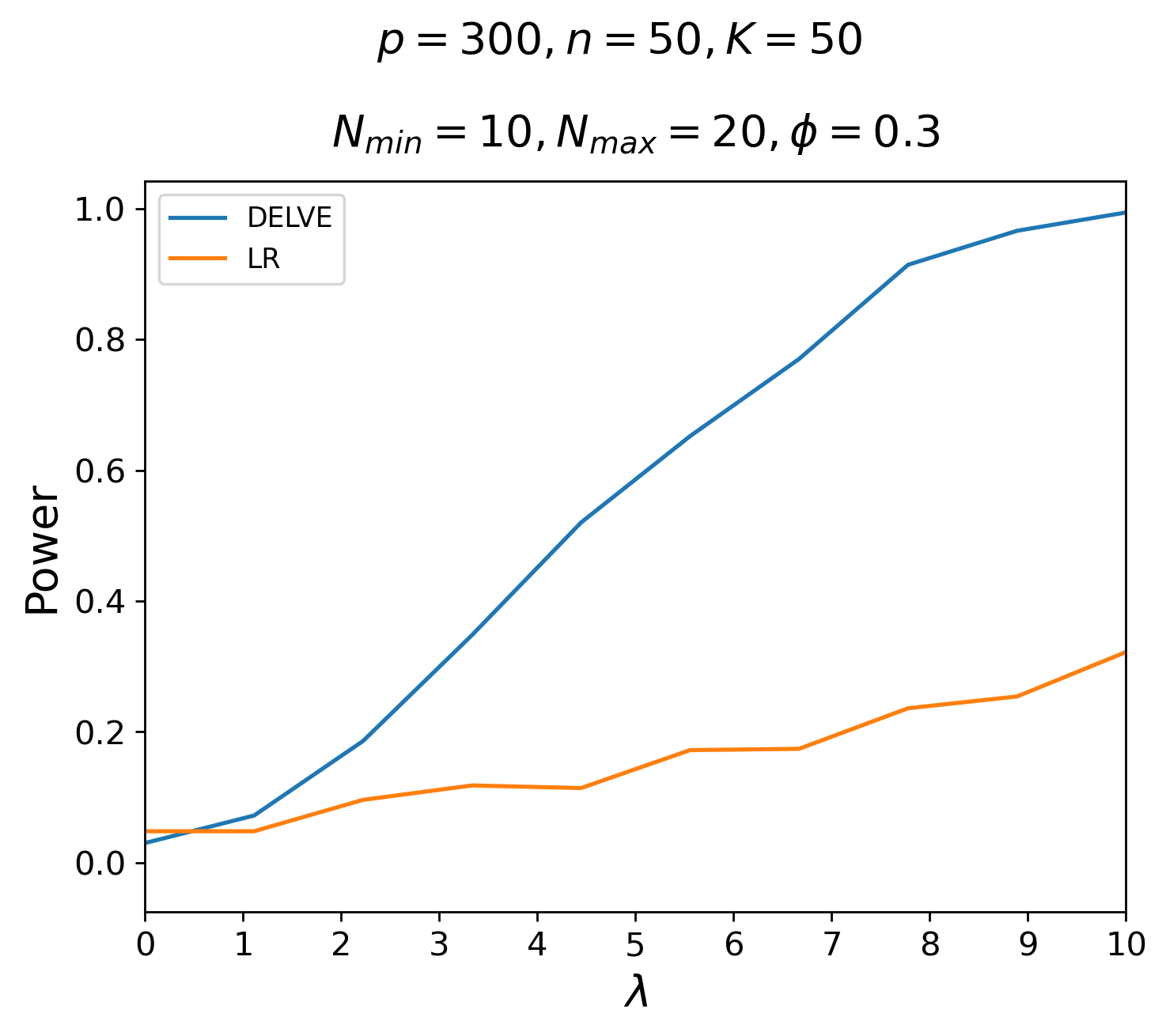}
	\includegraphics[width=0.32\textwidth]{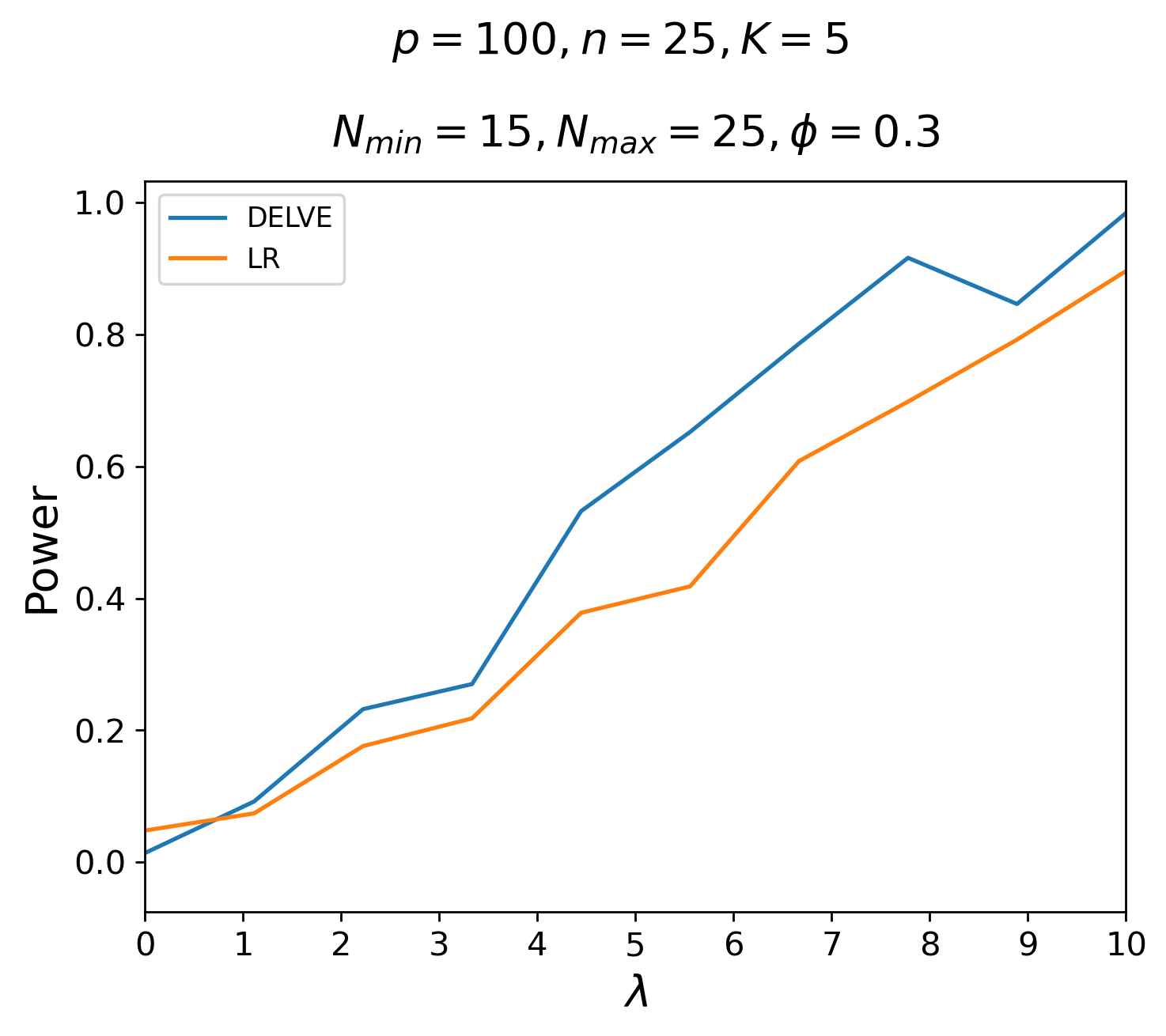}
	\includegraphics[width=0.32\textwidth]{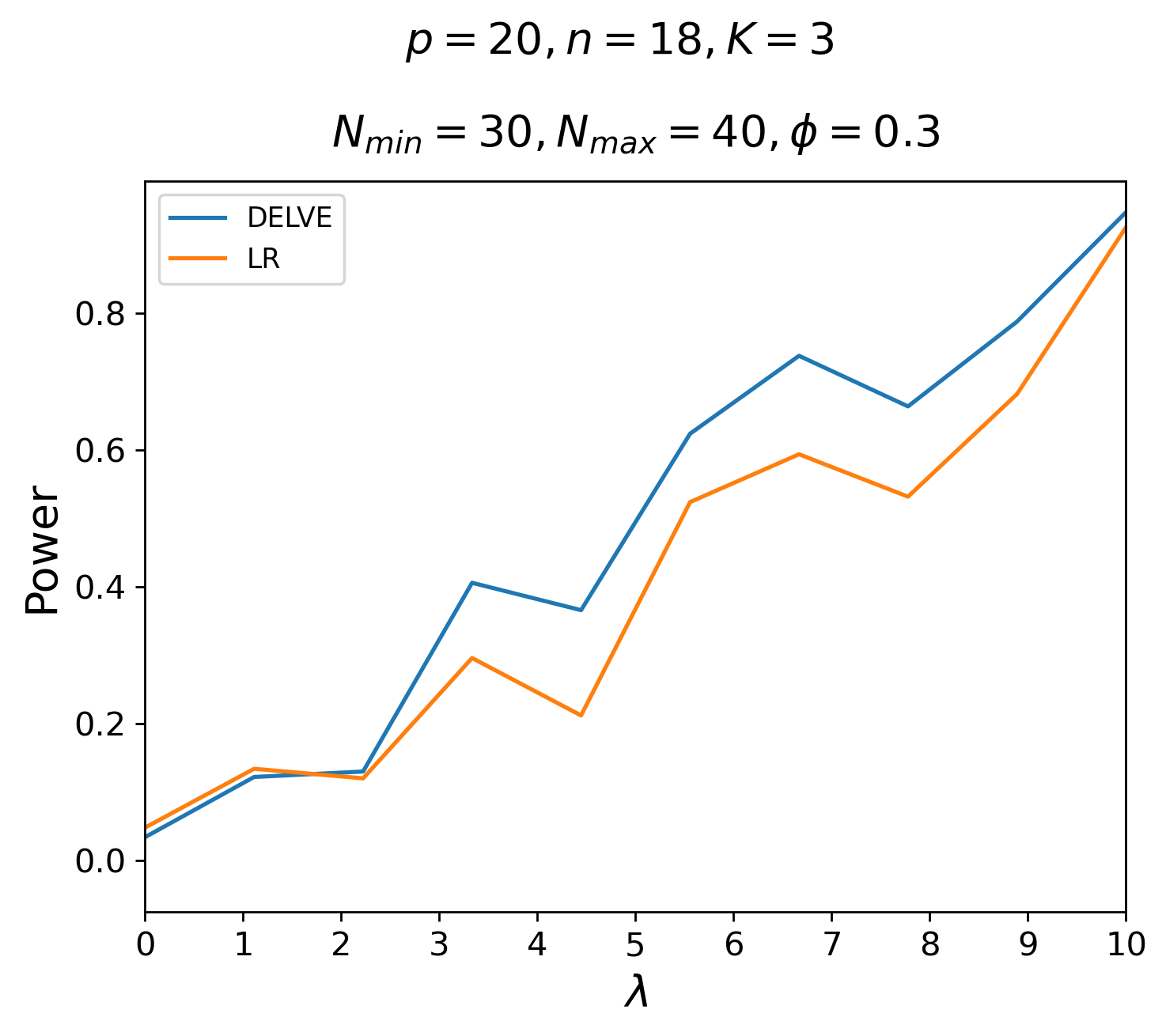}
	\caption{Power curves for DELVE+ (blue) and LR (orange) versus SNR $\lambda$ for two different settings of $(p, n, K, N_{\min}, N_{\max}, \phi )$. }
	\label{fig:LR_vs_DELVE-supp}
\end{figure}

In Figure~\ref{fig:Experiment3} of the main paper, we have seen the power diagrams of LR and DELVE+ for two values of $(p, n, K, N_{\min}, N_{\max}, \phi)$. Results for some other values of  $(p, n, K, N_{\min}, N_{\max}, \phi)$ are in  Figure~\ref{fig:LR_vs_DELVE-supp}. These results suggest that when $p$ is relatively large, DELVE+ outperforms LR in terms of power. In theory, DELVE+ attains the optimal detection boundary, but the asymptotic behavior of LR for large-$p$ is unclear.  There are cases where LR performs somewhat better than DELVE+, but they seem to be limited to the smaller-$p$ regime.

\section{Supplementary results from real data} \label{supp:Real}

\subsection{The pairwise $Z$-score of another author} \label{supp:fan}

In Section~6.1, we give a pair-wise $Z$-score plot for a representative author (denoted by Author~A). We can produce such a plot for any author in our data set. Here we show another example (this author is denoted by Author~B). 
Compared to Author~A, the publication years of Author~B's papers are less evenly distributed. We divide Author~B's abstracts into 6 groups, and the time window sizes for 6 groups are unequal, to guarantee that  all groups have roughly equal numbers of abstracts. The pairwise $Z$-score plot for Author~B is in the right panel of Figure~\ref{fig:author_AB-supp}. We also include the pairwise $Z$-score plot for Author~A in the left panel of this figure (which is the same as the right panel of Figure~\ref{fig:author_diversity}). 

\begin{figure}[htb]
	\centering
	\includegraphics[width=0.42\textwidth]{PDF/Figure4_Plot1.png}
	\hspace{1cm}
	\includegraphics[width=0.4\textwidth]{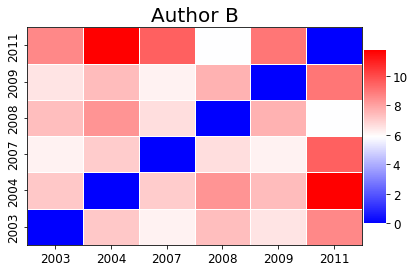}
	\caption{Pairwise $Z$-score plots for 
		Author A (left) and 
		Author B (right). In the cell $(x,y)$, we compare the corpus of an author's abstracts from time $x$ with the corpus of that author's abstracts from time $y$. The heatmap shows the value of DELVE+ with $K = 2$ for each cell.
	} 	
	\label{fig:author_AB-supp}
\end{figure}

There are some interesting temporal patterns. For Author A, the group consisting of 2004-2005 abstracts has comparably large $Z$-scores in the pairwise comparison with other groups, and similarly for Author B, the group of 2011-2012 abstracts have relatively large $Z$-scores. To gain further insight, we collected the titles and abstracts of each author's papers and manually inspected them. We found that Author A extensively studied topics related to bandwidth selection in the context of nonparametric estimation. For Author B, the time period 2011-2012 reveals a more intense focus on variable selection, compared to this author's papers in other years within this data set.

\subsection{Checking the applicability of our asymptotic result on real data} \label{supp:DR}

The properties of the DELVE test are established in the asymptotic regime of $n^2\bar{N}^2/(Kp)\to\infty$ (see  Section~\ref{sec:Theory}). We check if this ``asymptotics" is reasonable for real applications. To this end, define the {\it dimension ratio} as
\begin{equation}
	\label{eqn:dimension_ratio}
	DR := n^2 \bar{N}^2 K^{-1} p^{-1}.
\end{equation}
The larger $DR$, the more appropriate to apply our asymptotic theory. We report the DR values of all the corpora used in the analysis of statistics abstracts. In the first experiment of Section~\ref{subsec:StatAbstracts}, for each author, we take all his/her abstracts as the corpus and apply DELVE with $K=n$. Each author is associated with a corpus. Figure~\ref{fig:DR-supp1} displays the DR values for the corpora of the 15 most prolific authors. In the second experiment of Section~\ref{subsec:StatAbstracts}, we take the abstracts written by an author (Author A), divide them by year into 9 groups, and apply DELVE with $K=2$ to each pair of groups. There are a total of $(9\times 8)/2=36$ corpora for this experiment, whose DR values are shown in the left panel of Figure~\ref{fig:DR-supp2}. In Section~\ref{supp:fan}, we conduct similar analysis for another author (Author B). The DR values in this experiment are in the right panel of Figure~\ref{fig:DR-supp2}. 
These DR values are large, suggesting that our asymptotic setting is relevant for real applications and that the $Z$-scores obtained in these experiments are trustworthy. 

\begin{figure}[t]
	\centering
	\scalebox{0.7}{
		\begin{tabular}{llllll}
			Author & Total papers& Average abstract & Vocab size & $DR$      \\
			& ($n$) & length ($\bar{N}$)  & ($p$)  & ($n \bar{N}^2/p$) \\ 
			\hline 
			1               & 81                     & 75.90                   & 1103            & 423.07   \\
			2               & 40                     & 81.78                   & 801             & 333.94   \\
			3               & 39                     & 75.38                   & 758             & 292.39 \\
			4               & 32                     & 68.66                   & 562             & 268.39  \\
			5               & 30                     & 98.77                   & 672             & 435.48  \\
			6               & 27                     & 85.74                   & 698             & 284.37   \\
			7               & 27                     & 72.59                   & 592             & 240.34   \\
			8               & 24                     & 65.58                   & 471             & 219.17   \\
			9               & 22                     & 61.23                   & 415             & 198.73   \\
			10              & 20                     & 73.55                   & 463             & 233.68 \\
			11 & 20 & 84.15  & 502 & 282.12 \\
			12 & 19 & 114.53 & 617 & 403.90 \\
			13 & 19 & 52.47  & 361 & 144.92 \\
			14 & 18 & 77.06  & 459 & 232.85 \\
			15 & 18 & 59.17  & 369 & 170.77 \\
		\end{tabular}
	}

	\caption{Summary statistics and DR values of the corpora of the top 15 most prolific authors.} 
	\label{fig:DR-supp1}
\end{figure} 

\begin{figure}[tb]
	\centering
	\begin{tabular}{cc}
		\scalebox{0.7}{
			
			\begin{tabular}{c|ccccccccc}
				& 2003    & 2004    & 2005    & 2006    & 2007    & 2008    & 2009    & 2010    \\
				\hline 
				2003 & ---    \\
				2004 & 1411 & ---  \\
				2005 & 1313 & 1518 & ---   \\
				2006 & 1986 & 2208 & 2107 & ---  \\
				2007 & 1408 & 1541 & 1470 & 2216 & \\
				2008 & 1448 & 1615 & 1547 & 2263 & 1615 & \\
				2009 & 1887 & 2088 & 1981 & 2753 & 2065 & 2223 \\
				2010 & 1506 & 1714 & 1650 & 2395 & 1714 & 1758 & 2293 \\
				2011 & 1393 & 1576 & 1499 & 2213 & 1617 & 1631 & 2160 & 1762   
			\end{tabular}
		}
		&
		\scalebox{0.8}{
			\begin{tabular}{c|cccccc}
				Time $\backslash$ Time  & 2003    & 2004    & 2007    & 2008    & 2009      \\
				\hline 
				2003 &\\
				2004 & 1145 & \\
				2007 & 859  & 1636 &  \\
				2008 & 784  & 1548 & 1263 &  \\
				2009 & 1226 & 2064 & 1675 & 1597 &  \\
				2011 & 963  & 1694 & 1347 & 1358 & 1843    
			\end{tabular}
		}
	\end{tabular}
	\caption{The DR values for cells the pairwise $Z$-score plots in Figure~\ref{fig:author_AB-supp}, where the left table is for  Author~A  and the right table is for Author~B.} 
	\label{fig:DR-supp2}
\end{figure}

\section{Some analysis of the naive ANOVA test} \label{supp:ANOVA}

In Section~\ref{sec:Method}, we introduced a native estimator of $\rho^2$ as 
\[
\widetilde{T} = \sum_{k=1}^K n_k\bar{N}_k\|\hat{\mu}_{k}-\hat{\mu}\|^2.
\]
Consider a $K\times p$ ``contingency table" whose $(k,j)$th cell is $\sum_{i\in S_k}X_i(j)$. Then, $\widetilde{T}$ is an ANOVA-type statistic associated with this contingency table. It is interesting to investigate the test based on $\widetilde{T}$ and compare it with our proposed DELVE test. 

In the proof of Lemma \ref{lem:decompose}, we will show that
\beq \label{tildeT-mean}
\mathbb{E}[\widetilde{T}] = \rho^2 + 
J_5, \qquad\mbox{where}\quad J_5 = \sum_{k=1}^K\sum_{i\in S_k} \sum_j \Bigl(1-\frac{n_k\bar{N}_k}{n\bar{N}}\Bigr)\frac{N_i\Omega_{ij}(1-\Omega_{ij})}{n_k\bar{N}_k}. 
\eeq
Here, $\rho^2$ is the signal of interest, and $J_5$ characterizes the bias in $\widetilde{T}$. To gain some insight about the order of these two terms, we consider a simple case where (i) groups have equal size,  (ii) $N_i$'s are equal, (iii) $\Omega_{ij} = O(p^{-1})$, (iv) under $H_1$, $\min_k \|\mu_k-\mu\|\geq c_0\|\mu\|$, for a constant $c_0>0$. It holds that 
\beq \label{tildeT-order}
J_5 \asymp K/p\;\; \mbox{under $H_0$ and $H_1$}, \quad\mbox{and}\quad  \rho^2 \asymp n \bar{N} /p^2\;\; \mbox{under $H_1$}. 
\eeq
The bias term is negligible if $n\bar{N}\ll Kp$. This is a stronger condition than the optimal detection boundary, which only requires $n^2 N^2 \gg K p$. In particular, when
\[
Kp \;\;\ll \;\; n^2\bar{N}^2 \;\; \ll\;\; K^2p^2,
\]
the bias term dominates the ``signal" term, so the test based on $\widetilde{T}$ may lose power. In comparison, the DELVE statistic $T$ in \eqref{DELAC} is a de-biased version of $\widetilde{T}$, hence, it has no such issue.

{\bf An example where $\widetilde{T}$ is powerless}. Suppose $K = n$, both $n$ and $p$ are even, and $N_i \equiv N$. Take two vectors $\sigma \in \{-1, 1 \}^p$ and $\varepsilon \in \{-1,1\}^n$ such that $\sum_{j=1}^p\sigma_j=0$ and $\sum_{i=1}^n \varepsilon_i = 0$. Under $H_0$, let $\Omega=p^{-1}{\bf 1}_p{\bf 1}_n'$. Under $H_1$, let $\Omega_{ij} = p^{-1} + \alpha p^{-1} \varepsilon_i \sigma_j$, for some $\alpha \in (0,1)$. We can easily check that each $\Omega_i$ is indeed a PMF. 
For this example, 
\begin{align*}
	J_5^{alt} - J_5^{null} 
	&= (1 - \frac{1}{n}) \sum_{i,j} \frac{1}{p}( 1+ \alpha \varepsilon_i \sigma_j )
	(1 - \frac{1}{p} - \frac{1}{p} \alpha \varepsilon_i \sigma_j )  - (1 - \frac{1}{n}) \sum_{i,j} \frac{1}{p} (1 - \frac{1}{p} )
	\\&
	= - (1 - \frac{1}{n}) \sum_{i,j} \frac{1}{p^2} \alpha^2 \varepsilon_i^2 \sigma_j^2 = - (1 - \frac{1}{n}) \frac{\alpha^2 n}{p}.
\end{align*}
Moreover, $\rho_{null}^2=0$, and $\rho_{alt}^2 =O( nN/p^2)$. When $p \gg N$ and $\alpha$ is lower bounded by a constant, 
\begin{align*}
	\mathbb{E}_1[\widetilde{T}] - \mathbb{E}_0[\widetilde{T}] = \rho_{alt}^2+
	J_5^{alt} - J_5^{null} = O\Bigl(\frac{nN}{p^2} \Bigr)
	- (1 - \frac{1}{n}) \frac{\alpha^2 n}{p} 
	\leq  - \frac{\alpha^2 n}{2p}. 
\end{align*}
Since $\mathbb{E}_1[\widetilde{T}]$ is smaller than $\mathbb{E}_0[\widetilde{T}]$, the test based on $\widetilde{T}$ is powerless.

\section{Properties of $T$ and $V$} \label{sec:decomposition}

This section is a preparation for the proofs of our main theorems. 
We recall that 
\beq \label{relaxedMod1}
X_i\sim \mathrm{Multinomial}(N_i, \Omega_i), \qquad 1\leq i\leq n. 
\eeq
For each $1\leq k\leq K$, define
\beq \label{relaxedMod2}
\mu_k = \frac{1}{n_k\bar{N}_k}\sum_{i\in S_k}N_i\Omega_i\;\in\; \mathbb{R}^p, \qquad \Sigma_k=\frac{1}{n_k\bar{N}_k}\sum_{i\in S_k}N_i\Omega_i\Omega_i'\;\in\;\mathbb{R}^{p\times p}. 
\eeq
Moreover, let
\beq \label{relaxedMod3}
\mu=\frac{1}{n\bar{N}}\sum_{k=1}^Kn_k\bar{N}_k\mu_k = \frac{1}{n\bar{N}}\sum_{i=1}^n N_i\Omega_i \, , 
\quad \Sigma = \frac{1}{n\bar{N}} \sum_{k = 1}^n n_k \bar{N}_k \Sigma_k
= \frac{1}{n \bar{N}} \sum_{i=1}^n N_i \Omega_i \Omega_i' 
\eeq
The DELVE test statistic is $\psi=T/\sqrt{V}$, where $T$ is as in \eqref{DELAC} and $V$ is as in \eqref{define:V}. As a preparation for the main proofs, 
in this section, we study $T$ and $V$ separately. 

\subsection{The decomposition of $T$} \label{subsec:T-decompose}

It is well-known that a multinomial with the number of trials equal to $N$ can be equivalently written as the sum of $N$ independent multinomials each with the number of trials equal to $1$. This inspires us to introduce  a set of independent, mean-zero random vectors:
\beq \label{Multinomials}
\{Z_{ir}\}_{1\leq i\leq n, 1\leq r\leq N_i}, \qquad\mbox{with }Z_{ir}=B_{ir}-\mathbb{E}B_{ir}, \;\; \mbox{and}\;\; B_{ir}\sim\mathrm{Multinomial}(1, \Omega_{i}). 
\eeq
We use them to get a decomposition of $T$ into mutually uncorrelated terms:

\begin{lemma} \label{lem:decompose}
	Let $\{Z_{ir}\}_{1\leq i\leq n, 1\leq r\leq N_i}$ be as in \eqref{Multinomials}. For each $Z_{ir}\in\mathbb{R}^p$, let $\{Z_{ijr}\}_{1\leq j\leq p}$ denote its $p$ coordinates.  
	Recall that $\rho^2=\sum_{k=1}^Kn_k\bar{N}_k \|\mu_{k}-\mu\|^2$. 
	For $1\leq j\leq p$, define
	\begin{eqnarray*}
		U_{1j} &=&  2\sum_{k=1}^K\sum_{i\in S_k}\sum_{r=1}^{N_i} (\mu_{kj}-\mu_j)Z_{ijr},\cr
		U_{2j} &=& \sum_{k=1}^K \sum_{i\in S_k}  \sum_{1\leq r\neq s\leq N_i}\Bigl(\frac{1}{n_k\bar{N}_k} -\frac{1}{n\bar{N}} \Bigr)\frac{N_i}{N_i-1} Z_{ijr}Z_{ijs},\cr
		U_{3j} &=& - \frac{1}{n\bar{N}}\sum_{1\leq k\neq\ell\leq K} \sum_{i\in S_k}\sum_{m \in S_\ell} \sum_{r=1}^{N_i}\sum_{s=1}^{N_m} Z_{ijr}Z_{mjs},\cr
		U_{4j} &=&  \sum_{k=1}^K \sum_{\substack{i\in S_k,m\in S_k\\i\neq m}} \sum_{r=1}^{N_i}\sum_{s=1}^{N_m}\Bigl(\frac{1}{n_k\bar{N}_k} -\frac{1}{n\bar{N}} \Bigr)  Z_{ijr}Z_{mjs}. 
	\end{eqnarray*}
	Then, $T= \rho^2+\sum_{\kappa=1}^4 {\bf 1}_p' U_{\kappa}$. 
	Moreover, $\mathbb{E}[U_\kappa]={\bf 0}_p$ and $\mathbb{E}[U_{\kappa}U_{\zeta}']={\bf 0}_{p\times p}$ for $1\leq \kappa\neq \zeta \leq 4$. 
\end{lemma}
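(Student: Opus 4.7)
My plan is to use the multinomial-of-one representation $X_i \stackrel{d}{=} N_i \Omega_i + \sum_{r=1}^{N_i} Z_{ir}$, write $X_{ij} = N_i\Omega_{ij} + Y_{ij}$ with $Y_{ij} := \sum_r Z_{ijr}$, and substitute into $T_j$. Then $\hat{\mu}_{kj}-\hat{\mu}_j = (\mu_{kj}-\mu_j)+W_{kj}$, where $W_{kj} := \frac{1}{n_k\bar{N}_k}\sum_{i\in S_k}Y_{ij}-\frac{1}{n\bar{N}}\sum_i Y_{ij}$. Squaring, the cross term produces $2\sum_k n_k\bar{N}_k(\mu_{kj}-\mu_j)W_{kj}$, and after expanding $W_{kj}$ the coefficient of $\frac{1}{n\bar{N}}\sum_i Y_{ij}$ is $\sum_k n_k\bar{N}_k(\mu_{kj}-\mu_j)=0$, so this collapses to $2\sum_k\sum_{i\in S_k}\sum_r (\mu_{kj}-\mu_j) Z_{ijr} = U_{1j}$. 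The deterministic term yields $n_k\bar{N}_k(\mu_{kj}-\mu_j)^2$, summing to $\rho_j^2$.

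Next I would handle the pure-quadratic part $\sum_k n_k\bar{N}_k W_{kj}^2$. Writing $A_k := \sum_{i\in S_k}Y_{ij}$, $A := \sum_k A_k$, a short computation gives $\sum_k n_k\bar{N}_k W_{kj}^2 = \sum_k \frac{A_k^2}{n_k\bar{N}_k} - \frac{A^2}{n\bar{N}}$. Expanding each square as a diagonal $\sum_i Y_{ij}^2$ plus an off-diagonal $\sum_{i\neq m}Y_{ij}Y_{mj}$, the off-diagonal within-group pairs assemble into $U_{4j}$ (via the coefficient $c_{ki}:=\frac{1}{n_k\bar{N}_k}-\frac{1}{n\bar{N}}$), and the between-group off-diagonal pairs assemble into $U_{3j}$ (via the residual $-\frac{1}{n\bar{N}}$). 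The diagonal piece $\sum_k\sum_{i\in S_k}c_{ki} Y_{ij}^2$ must now be combined with the de-biasing term $-\sum_k c_{ki}\sum_{i\in S_k}\frac{X_{ij}(N_i-X_{ij})}{N_i-1}$ to produce $U_{2j}$. The key algebraic identity for this step is
\begin{equation*}
X_{ij}(N_i-X_{ij}) \;=\; N_i\sum_{r=1}^{N_i} Z_{ijr}^2 \;-\; Y_{ij}^2,
\end{equation*}
which follows from $B_{ijr}^2=B_{ijr}$. Using $Y_{ij}^2 = \sum_r Z_{ijr}^2 + \sum_{r\neq s}Z_{ijr}Z_{ijs}$, the above identity rearranges to $\sum_r Z_{ijr}^2 - \frac{X_{ij}(N_i-X_{ij})}{N_i-1} = \frac{1}{N_i-1}\sum_{r\neq s}Z_{ijr}Z_{ijs}$, and then $Y_{ij}^2 - \frac{X_{ij}(N_i-X_{ij})}{N_i-1} = \frac{N_i}{N_i-1}\sum_{r\neq s}Z_{ijr}Z_{ijs}$, which is exactly $U_{2j}$ after summation against $c_{ki}$. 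This settles the decomposition $T_j = \rho_j^2 + \sum_{\kappa=1}^4 U_{\kappa j}$.

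For the orthogonality claims, I observe that $\{Z_{ir}\}$ are mutually independent across $(i,r)$ with mean zero. Each $U_{\kappa j}$ is either linear ($\kappa=1$) or a bilinear form $\sum Z_{ijr} Z_{mjs}$ in which $(i,r)$ and $(m,s)$ are always distinct pairs (enforced by $r\neq s$ in $U_{2j}$, $k\neq \ell$ in $U_{3j}$, $i\neq m$ in $U_{4j}$). Hence every summand is a product of two independent mean-zero factors, giving $\E[U_\kappa]=\mathbf{0}$. For $\E[U_{\kappa j}U_{\zeta l}]$ with $\kappa\neq\zeta$, I do a short case analysis: the product is a triple (if one factor is $U_{1}$) or a quadruple (otherwise) in the $Z_{ir}$'s, and I check that the index patterns of $U_2,U_3,U_4$ are incompatible — within-same-$i$ pairs for $U_2$, strictly between-group pairs for $U_3$, within-group-distinct-$i$ pairs for $U_4$ — so any attempted matching of indices produces a contradiction. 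Consequently at least one $Z_{ir}$ appears with multiplicity one, forcing the expectation to factor and vanish.

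The main obstacle is the algebraic identification in the third paragraph: keeping the bookkeeping straight among the four $U_\kappa$'s while peeling off the de-biasing correction via the identity $X_{ij}(N_i-X_{ij})=N_i\sum_r Z_{ijr}^2 - Y_{ij}^2$. The other parts are routine once one realizes that all off-diagonal $Y_{ij}Y_{mj}$ and $Z_{ijr}Z_{ijs}$ contributions already have the correct coefficient $c_{ki}$ (or $-1/n\bar{N}$) baked in by the $\frac{A_k^2}{n_k\bar{N}_k}-\frac{A^2}{n\bar{N}}$ expansion.
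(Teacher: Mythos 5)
Your proof is correct and follows essentially the same route as the paper's: decompose $X_i = N_i\Omega_i + \sum_r Z_{ir}$, expand $(\hat\mu_{kj}-\hat\mu_j)^2$ around $\mu_{kj}-\mu_j$, exploit $\sum_k n_k\bar N_k(\mu_{kj}-\mu_j)=0$ for the linear term, and use the Bernoulli constraint $B_{ijr}^2=B_{ijr}$ to absorb the de-biasing correction into the within-document cross terms $\sum_{r\neq s}Z_{ijr}Z_{ijs}$, then close with the standard $(i,r)$-level independence/case-analysis argument for uncorrelatedness. Your bookkeeping is tighter than the paper's $I_1,\dots,I_6,J_1,\dots,J_6$ chain — the telescoping identity $\sum_k A_k^2/(n_k\bar N_k)-A^2/(n\bar N)$ and the compact rearrangement $Y_{ij}^2-\tfrac{X_{ij}(N_i-X_{ij})}{N_i-1}=\tfrac{N_i}{N_i-1}\sum_{r\neq s}Z_{ijr}Z_{ijs}$ replace the paper's intermediate $Q_{ij}$ substitutions — but the underlying algebra and the decomposition itself are the same.
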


\subsection{The variance of $T$} \label{subsec:T-var}

By Lemma~\ref{lem:decompose},  the four terms $\{{\bf 1}_p' U_{\kappa}\}_{1\leq \kappa\leq 4}$ are uncorrelated with each other. Therefore, 
\[
\mathrm{Var}(T) = \mathrm{Var}({\bf 1}_p'U_1)+\mathrm{Var}({\bf 1}_p'U_2)+\mathrm{Var}({\bf 1}_p'U_3)+\mathrm{Var}({\bf 1}_p'U_4). 
\]
It suffices to study the variance of each of these four terms.

\begin{lemma} \label{lem:var1}
	Let $U_1$ be the same as in Lemma~\ref{lem:decompose}. Define
	\begin{align}
		\Theta_{n1} &= 4\sum_{k=1}^K n_k\bar{N}_k\bigl\|\diag(\mu_k)^{1/2}(\mu_k-\mu)\bigr\|^2 \label{eqn:Theta_n1} \\
		L_n &= 4\sum_{k=1}^K n_k\bar{N}_k \bigl\|\Sigma_k^{1/2}(\mu_k-\mu)\bigr\|^2
		\label{eqn:Ln}
	\end{align}
	Then $\mathrm{Var}({\bf 1}_p'U_1) = \Theta_{n1} - L_n$. 
	Furthermore, if $\max_{1\leq k\leq K}\|\mu_k\|_\infty=o(1)$, then $\mathrm{Var}({\bf 1}_p'U_1)=o(\rho^2)$. 
\end{lemma}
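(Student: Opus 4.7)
\medskip

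\noindent\textbf{Proof proposal for Lemma \ref{lem:var1}.}

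The plan is to exploit that $\mathbf{1}_p'U_1 = 2\sum_{k=1}^K \sum_{i\in S_k}\sum_{r=1}^{N_i}(\mu_k-\mu)'Z_{ir}$ is a linear form in the \emph{independent} mean-zero vectors $\{Z_{ir}\}$, whose covariance is the standard single-trial multinomial covariance $\mathrm{Cov}(Z_{ir})=\diag(\Omega_i)-\Omega_i\Omega_i'$. By independence across the pair $(i,r)$, the variance decomposes as
\begin{equation*}
\mathrm{Var}(\mathbf{1}_p'U_1)=4\sum_{k=1}^K\sum_{i\in S_k}\sum_{r=1}^{N_i}(\mu_k-\mu)'\bigl[\diag(\Omega_i)-\Omega_i\Omega_i'\bigr](\mu_k-\mu).
\end{equation*}
The sum over $r$ contributes a factor $N_i$, after which I would use the definitions \eqref{relaxedMod2} to collapse $\sum_{i\in S_k}N_i\diag(\Omega_i)=n_k\bar{N}_k\diag(\mu_k)$ and $\sum_{i\in S_k}N_i\Omega_i\Omega_i'=n_k\bar{N}_k\Sigma_k$. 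This yields exactly $\Theta_{n1}-L_n$ by matching with the definitions \eqref{eqn:Theta_n1}--\eqref{eqn:Ln}. This part is purely book-keeping, with no probabilistic content beyond the standard single-trial multinomial covariance.

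For the second claim, I would first note that $\mathrm{Var}(\mathbf{1}_p'U_1)\geq 0$ forces $0\leq L_n\leq \Theta_{n1}$, so it suffices to control $\Theta_{n1}$. Writing $\Theta_{n1}=4\sum_k n_k\bar{N}_k\sum_j\mu_{kj}(\mu_{kj}-\mu_j)^2$ and pulling out $\max_k\|\mu_k\|_\infty$, I obtain
\begin{equation*}
\Theta_{n1}\;\leq\;4\max_{1\leq k\leq K}\|\mu_k\|_\infty\cdot\sum_{k=1}^K n_k\bar{N}_k\|\mu_k-\mu\|^2\;=\;4\max_{1\leq k\leq K}\|\mu_k\|_\infty\cdot\rho^2,
\end{equation*}
which is $o(\rho^2)$ under the stated hypothesis. (An alternative sanity check: by Cauchy--Schwarz applied to $(\mu_k-\mu)'\Omega_i=\sum_j(\mu_{kj}-\mu_j)\Omega_{ij}$ with weights $\Omega_{ij}$, one sees directly that $L_n\leq\Theta_{n1}$, confirming the sign of the difference.)

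I do not anticipate a serious obstacle here; the only point requiring a bit of care is making sure the cross terms between different $(i,r)$ and between different $(k,i)$ truly vanish by the independence of $\{Z_{ir}\}$, which is already built into the construction \eqref{Multinomials}. The algebraic bookkeeping collapsing $\sum_{i\in S_k}N_i\Omega_i$ and $\sum_{i\in S_k}N_i\Omega_i\Omega_i'$ into the group-level quantities $\mu_k$ and $\Sigma_k$ is the only step that could hide a subtle indexing slip, so I would carry it out explicitly coordinate by coordinate before taking the quadratic form against $\mu_k-\mu$.
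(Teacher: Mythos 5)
Your proof is correct and matches the paper's argument exactly: decompose the variance over the independent $Z_{ir}$ using $\mathrm{Cov}(Z_{ir})=\diag(\Omega_i)-\Omega_i\Omega_i'$, collapse $\sum_{i\in S_k}N_i\Omega_i$ and $\sum_{i\in S_k}N_i\Omega_i\Omega_i'$ into $n_k\bar{N}_k\mu_k$ and $n_k\bar{N}_k\Sigma_k$, then bound $\mathrm{Var}(\mathbf{1}_p'U_1)\leq\Theta_{n1}\leq 4\max_k\|\mu_k\|_\infty\cdot\rho^2$. No gaps.
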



\begin{lemma} \label{lem:var2}
	Let $U_2$ be the same as in Lemma~\ref{lem:decompose}. Define
	\begin{align}
		\Theta_{n2} &= 	2\sum_{k=1}^K \Bigl(\frac{1}{n_k\bar{N}_k}-\frac{1}{n\bar{N}}\Bigr)^2\sum_{i\in S_k}\frac{N_i^3}{N_i-1} \|\Omega_i\|^2 \label{eqn:Theta_n2}\\
		A_n &= 	2\sum_{k=1}^K \Bigl(\frac{1}{n_k\bar{N}_k}-\frac{1}{n\bar{N}}\Bigr)^2\sum_{i\in S_k}\frac{N_i^3}{N_i-1} \|\Omega_{i}\|_3^3 \label{eqn:An}
	\end{align}
	Then $$\Theta_{n2} - A_n \leq \mathrm{Var}({\bf 1}_p'U_2)  \leq  \Theta_{n2}.$$ 
	Furthermore, if 
	\begin{align}
		\label{eqn:var_condition_K=n}
		\max_{1\leq k\leq K}\big\{\frac{\sum_{i\in S_k}N^2_i\|\Omega_i\|_3^3}{\sum_{i\in S_k}N_i^2\|\Omega_i\|^2} \bigr\}=o(1),
	\end{align}
	then $\mathrm{Var}({\bf 1}_p'U_2) = [1+ o(1)] \cdot \Theta_{n2}$. 
\end{lemma}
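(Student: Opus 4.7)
The strategy is to exploit the independence of the multinomial trials across $i$ to reduce $\mathrm{Var}({\bf 1}_p'U_2)$ to a sum of per-sample variances, and then to evaluate each per-sample variance by a short fourth-moment enumeration.

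First, I rewrite
\[{\bf 1}_p'U_2 = \sum_{k=1}^K\sum_{i\in S_k} a_i W_i, \quad a_i := \Bigl(\frac{1}{n_k\bar{N}_k}-\frac{1}{n\bar{N}}\Bigr)\frac{N_i}{N_i-1}, \quad W_i := \sum_{j=1}^p\sum_{1\leq r\neq s\leq N_i} Z_{ijr}Z_{ijs}.\]
Since $\{Z_{ir}\}_{r=1}^{N_i}$ is independent across $i$ and each $W_i$ has mean zero (because $Z_{ir}\perp Z_{is}$ for $r\neq s$ and $E[Z_{ir}]=0$), the $W_i$'s are independent mean-zero, giving $\mathrm{Var}({\bf 1}_p'U_2)=\sum_k\sum_{i\in S_k} a_i^2\, E[W_i^2]$.

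Next I evaluate $E[W_i^2]$. A product $E[Z_{ijr}Z_{ijs}Z_{ij'r'}Z_{ij's'}]$ with $r\neq s$ and $r'\neq s'$ vanishes unless every value in the multiset $\{r,s,r',s'\}$ appears at least twice; combined with the constraints this forces $\{r',s'\}=\{r,s\}$, and the two orderings $(r',s')=(r,s)$ and $(r',s')=(s,r)$ each contribute equally by the exchangeability of $Z_{ir}$ and $Z_{is}$. Using the single-trial multinomial covariance $E[Z_{ijr}Z_{ij'r}]=\Omega_{ij}\mathbf{1}\{j=j'\}-\Omega_{ij}\Omega_{ij'}$, a direct expansion (in which the diagonal and off-diagonal $\|\Omega_i\|_4^4$ contributions cancel) yields
\[\sum_{j,j'=1}^p\bigl(E[Z_{ijr}Z_{ij'r}]\bigr)^2 = \|\Omega_i\|^2 - 2\|\Omega_i\|_3^3 + \|\Omega_i\|^4,\]
so $E[W_i^2] = 2N_i(N_i-1)\bigl(\|\Omega_i\|^2-2\|\Omega_i\|_3^3+\|\Omega_i\|^4\bigr)$.

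Assembling gives $\mathrm{Var}({\bf 1}_p'U_2)=\Theta_{n2}-2A_n+B_n$, where
\[B_n := 2\sum_k\Bigl(\frac{1}{n_k\bar{N}_k}-\frac{1}{n\bar{N}}\Bigr)^{\!2}\sum_{i\in S_k}\frac{N_i^3}{N_i-1}\|\Omega_i\|^4.\]
The Cauchy--Schwarz bound $\|\Omega_i\|^2=\sum_j\Omega_{ij}^{1/2}\Omega_{ij}^{3/2}\leq\|\Omega_i\|_1^{1/2}\|\Omega_i\|_3^{3/2}=\|\Omega_i\|_3^{3/2}$ shows $\|\Omega_i\|^4\leq\|\Omega_i\|_3^3$, hence $0\leq B_n\leq A_n$; combined with the identity above this produces the stated two-sided sandwich. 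For the asymptotic equivalence, since $N_i\geq 2$ implies $N_i^3/(N_i-1)\asymp N_i^2$, assumption \eqref{eqn:var_condition_K=n} forces $A_n/\Theta_{n2}\to 0$, whence $\mathrm{Var}({\bf 1}_p'U_2)=[1+o(1)]\Theta_{n2}$. The main obstacle is the $4$-tuple index enumeration and tracking the cancellations in the expansion of $\sum_{j,j'}(E[Z_{ijr}Z_{ij'r}])^2$; once that algebraic identity is in hand, the remaining inequalities and asymptotic statement are routine.
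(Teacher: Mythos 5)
Your proof takes essentially the same route as the paper's: decompose ${\bf 1}_p' U_2$ into mean-zero uncorrelated blocks over the multinomial trials, compute the exact fourth-moment variance of each block to get $\mathrm{Var}(W_{irs}) = \|\Omega_i\|^2 - 2\|\Omega_i\|_3^3 + \|\Omega_i\|^4$, and then pass to the stated bounds via $\|\Omega_i\|^4 \le \|\Omega_i\|_3^3$. Your bookkeeping of the cancellation of the $\|\Omega_i\|_4^4$ terms and the Cauchy--Schwarz derivation of $\|\Omega_i\|^4\le\|\Omega_i\|_3^3$ are both correct and match the paper's computation.

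One small caveat in the final sentence: from the exact identity $\mathrm{Var}({\bf 1}_p'U_2) = \Theta_{n2} - 2A_n + B_n$ together with $0\le B_n \le A_n$, you get $\Theta_{n2} - 2A_n \le \mathrm{Var}({\bf 1}_p'U_2) \le \Theta_{n2} - A_n$. That does not yield the claimed lower bound $\Theta_{n2} - A_n \le \mathrm{Var}$ (you would need $B_n \ge A_n$, which fails). This is not an error you introduced --- the paper's own inequality $\|\Omega_i\|^2 - \|\Omega_i\|_3^3 \le \mathrm{Var}(W_{irs})$ is off in the same direction (e.g.\ $\Omega_i=(0.6,0.4,0,\dots)$ gives $\|\Omega_i\|^4 = 0.2704 < 0.28 = \|\Omega_i\|_3^3$). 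The constant-factor sandwich and the asymptotic conclusion $\mathrm{Var}({\bf 1}_p'U_2)=[1+o(1)]\Theta_{n2}$ under \eqref{eqn:var_condition_K=n} hold regardless, since all you need is $A_n/\Theta_{n2}\to 0$, but you should state the lower bound as $\Theta_{n2}-2A_n$ rather than claiming to reproduce the paper's version.
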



\begin{lemma} \label{lem:var3}
	Let $U_3$ be the same as in Lemma~\ref{lem:decompose}. Define
	\begin{align}
		\Theta_{n3} &= \frac{2}{n^2\bar{N}^2}\sum_{ k\neq \ell }\sum_{i\in S_k}\sum_{m\in S_\ell} \sum_j N_iN_m\Omega_{ij}\Omega_{mj}  \label{eqn:Theta_n3} \\
		B_n &= 2\sum_{k\neq \ell}\frac{n_kn_{\ell}\bar{N}_k\bar{N}_\ell}{n^2\bar{N}^2}{\bf 1}_p'(\Sigma_k\circ\Sigma_\ell){\bf 1}_p
		\label{eqn:Bn}
	\end{align}
	Then $$ \Theta_{n3} - B_n\leq \mathrm{Var}({\bf 1}_p'U_3) \leq  \Theta_{n3} + B_n.$$
\end{lemma}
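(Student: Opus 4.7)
My plan is to compute $\mathrm{Var}(\mathbf{1}_p' U_3)$ directly by exploiting the independence of the aggregated increments $\tilde Z_i := \sum_{r=1}^{N_i} Z_{ir} = X_i - N_i\Omega_i$. Summing the definition of $U_{3j}$ over $j$ first yields $\mathbf 1_p' U_3 = -(n\bar N)^{-1}\sum_{k\neq\ell}\sum_{i\in S_k,\, m\in S_\ell}\tilde Z_i'\tilde Z_m$, where the $\tilde Z_i$ are mutually independent, mean-zero vectors with $\mathrm{Cov}(\tilde Z_i) = N_i(\diag(\Omega_i) - \Omega_i\Omega_i')$. Squaring and taking expectation produces a sum over index quadruples which collapses dramatically: by independence and vanishing first moments, a summand $\mathbb{E}[(\tilde Z_{i_1}'\tilde Z_{m_1})(\tilde Z_{i_2}'\tilde Z_{m_2})]$ is nonzero only if the multisets $\{i_1,m_1\}$ and $\{i_2,m_2\}$ coincide, and since the labels satisfy $k_1\neq\ell_1$ and $k_2\neq\ell_2$, both the direct pairing $(i_2,m_2)=(i_1,m_1)$ and the swapped pairing $(i_2,m_2)=(m_1,i_1)$ remain admissible and contribute the same quantity. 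This gives
\begin{equation*}
\mathrm{Var}(\mathbf 1_p' U_3) = \frac{2}{n^2\bar N^2}\sum_{k\neq\ell}\sum_{i\in S_k,\, m\in S_\ell}\mathbb{E}[(\tilde Z_i'\tilde Z_m)^2].
\end{equation*}

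Next I would evaluate $\mathbb{E}[(\tilde Z_i'\tilde Z_m)^2]$ via the standard identity $\mathbb{E}[(u'v)^2] = \tr(\mathrm{Cov}(u)\,\mathrm{Cov}(v))$ for independent mean-zero vectors, expanding the product of multinomial covariance matrices to obtain
\begin{equation*}
\mathbb{E}[(\tilde Z_i'\tilde Z_m)^2] = N_iN_m\Bigl[\langle\Omega_i,\Omega_m\rangle - \sum_j \Omega_{ij}\Omega_{mj}^2 - \sum_j \Omega_{ij}^2 \Omega_{mj} + \langle\Omega_i,\Omega_m\rangle^2\Bigr].
\end{equation*}
The leading term $\langle\Omega_i,\Omega_m\rangle$, once aggregated over $(i,m,k,\ell)$, reproduces $\Theta_{n3}$ verbatim. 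To see that the quadratic term assembles into $B_n$, I would expand $\mathbf 1_p'(\Sigma_k\circ\Sigma_\ell)\mathbf 1_p = \sum_{j,j'}(\Sigma_k)_{jj'}(\Sigma_\ell)_{jj'}$ and use $\sum_{j,j'}\Omega_{ij}\Omega_{ij'}\Omega_{mj}\Omega_{mj'} = \langle\Omega_i,\Omega_m\rangle^2$; after canceling the $n_k\bar N_k n_\ell\bar N_\ell$ factors against the prefactor in the definition of $B_n$, the expressions match exactly.

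The final step is to control the two residual cross terms $\sum_j\Omega_{ij}\Omega_{mj}^2$ and $\sum_j\Omega_{ij}^2\Omega_{mj}$. Because each cross term is nonnegative, the upper bound $\mathrm{Var}(\mathbf 1_p'U_3)\leq \Theta_{n3}+B_n$ follows immediately (the minus sign in front of each cross term only reduces the variance relative to $\Theta_{n3}+B_n$). For the matching lower bound, I would apply Cauchy--Schwarz with weights $a_j=\sqrt{\Omega_{ij}}$ and $b_j=\sqrt{\Omega_{ij}}\,\Omega_{mj}$ to obtain $\langle\Omega_i,\Omega_m\rangle^2 \leq \sum_j \Omega_{ij}\Omega_{mj}^2$, and analogously for the symmetric variant, so that each cross term absorbs a corresponding $\langle\Omega_i,\Omega_m\rangle^2$ contribution in $B_n$ and what remains subtracts off at most $B_n$, yielding $\mathrm{Var}(\mathbf 1_p'U_3)\geq \Theta_{n3}-B_n$. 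The main obstacle I anticipate lies in the combinatorial accounting of Step~1 — in particular, checking that the constraints $k_1\neq\ell_1$ and $k_2\neq\ell_2$ really do admit exactly two pairings (direct and swapped) with no additional contributions, so that the factor of $2$ in front of the final sum is correct — and in the delicate sign-tracking of the Cauchy--Schwarz comparison needed to extract a two-sided bound of the clean form $|\mathrm{Var}(\mathbf 1_p'U_3)-\Theta_{n3}|\leq B_n$.
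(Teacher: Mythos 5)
Your aggregated approach---working with $\tilde Z_i=X_i-N_i\Omega_i$ and the identity $\mathbb{E}[(u'v)^2]=\mathrm{tr}(\mathrm{Cov}(u)\,\mathrm{Cov}(v))$ for independent mean-zero vectors---is a valid alternative to the paper's single-trial decomposition over $V_{irms}=\sum_j Z_{ijr}Z_{mjs}$, and your per-pair formula
\[
(N_iN_m)^{-1}\,\mathbb{E}[(\tilde Z_i'\tilde Z_m)^2]
= \langle\Omega_i,\Omega_m\rangle
- \sum_j\Omega_{ij}^2\Omega_{mj}
- \sum_j\Omega_{ij}\Omega_{mj}^2
+ \langle\Omega_i,\Omega_m\rangle^2
\]
is correct; it in fact repairs an arithmetic slip in the paper's own simplification of $\mathrm{Var}(V_{irms})$, where the two mixed cubic sums $-\sum_j\Omega_{ij}^2\Omega_{mj}-\sum_j\Omega_{ij}\Omega_{mj}^2$ are wrongly collapsed to $-2\sum_j\Omega_{ij}^2\Omega_{mj}^2$. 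Your factor-of-two bookkeeping and the upper bound $\mathrm{Var}(\mathbf{1}_p'U_3)\le\Theta_{n3}+B_n$ are also right.

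The lower bound, however, has a sign problem. The Cauchy--Schwarz inequality you invoke, $\langle\Omega_i,\Omega_m\rangle^2\le\sum_j\Omega_{ij}\Omega_{mj}^2$ (and its mirror $\langle\Omega_i,\Omega_m\rangle^2\le\sum_j\Omega_{ij}^2\Omega_{mj}$), points the \emph{wrong} way: substituting these into your formula gives $(N_iN_m)^{-1}\mathbb{E}[(\tilde Z_i'\tilde Z_m)^2]\le\langle\Omega_i,\Omega_m\rangle-\langle\Omega_i,\Omega_m\rangle^2$, which aggregates to $\mathrm{Var}(\mathbf{1}_p'U_3)\le\Theta_{n3}-B_n$---an \emph{upper} bound by $\Theta_{n3}-B_n$, not the lower bound you need. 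In fact the claimed two-sided inequality fails as stated: with $p=3$, $K=2$, $n_1=n_2=1$, $N_1=N_2=N\ge 2$ and $\Omega_1=\Omega_2=(1/2,1/4,1/4)'$, one computes $\mathrm{Var}(\mathbf{1}_p'U_3)=13/64$ while $\Theta_{n3}-B_n=24/64-9/64=15/64$. The paper reaches its bound only via the slip noted above, since the remainder it writes down, $-2\sum_j\Omega_{ij}^2\Omega_{mj}^2+\langle\Omega_i,\Omega_m\rangle^2$, is trivially boxed by $\pm\langle\Omega_i,\Omega_m\rangle^2$, whereas the correct remainder is not. This does not affect the paper's downstream use of the lemma (all of these cross terms are of the same order as $B_n$ and only need to be shown negligible relative to $\Theta_{n2}+\Theta_{n3}+\Theta_{n4}$), but a literal two-sided bound requires enlarging the error term on the lower side.
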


\begin{lemma} \label{lem:var4}
	Let $U_4$ be the same as in Lemma~\ref{lem:decompose}. Define 
	\begin{align}
		\Theta_{n4} &= 2\sum_{k=1}^K \sum_{\substack{i\in S_k,m\in S_k\\i\neq m}}\sum_j \Bigl(\frac{1}{n_k\bar{N}_k}-\frac{1}{n\bar{N}}\Bigr)^2 N_i N_m \Omega_{ij}\Omega_{mj}. \label{eqn:Theta_n4} \\
		E_n &= 2\sum_k
		\sum_{\substack{i\in S_k, m\in S_k,\\ i\neq m}} \sum_{1 \leq j, j' \leq p}  \Bigl(\frac{1}{n_k\bar{N}_k}-\frac{1}{n\bar{N}}\Bigr)^2  N_i N_m 
		\Omega_{ij} \Omega_{ij'} \Omega_{mj} 
		\Omega_{mj'}
		\label{eqn:En}
	\end{align}
	Then $$\Theta_{n4} - E_n \leq \mathrm{Var}({\bf 1}_p'U_4) \leq  \Theta_{n4} + E_n$$. 
\end{lemma}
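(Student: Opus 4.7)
The plan is to evaluate $\mathrm{Var}(\mathbf{1}_p' U_4)$ by a direct second-moment expansion, exploiting the independence of the $Z_{ir}$'s. Since $\mathbb{E}[U_4] = \mathbf{0}$ by Lemma~\ref{lem:decompose}, it suffices to compute $\mathbb{E}[(\mathbf{1}_p' U_4)^2]$. Writing $c_k := \frac{1}{n_k \bar{N}_k} - \frac{1}{n\bar{N}}$ and $W_i := \sum_{r=1}^{N_i} Z_{ir} = X_i - N_i \Omega_i$, the expression $\mathbf{1}_p'U_4$ collapses to $\sum_{k} c_k \sum_{i \neq m \in S_k} W_i^\top W_m$, a sum of quadratic forms in independent, mean-zero random vectors with $\mathrm{Cov}(W_i) = N_i(\diag(\Omega_i) - \Omega_i \Omega_i^\top) =: N_i C_i$.

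Next, I would square, use independence, and identify the surviving index patterns. Cross terms of the form $(W_i^\top W_m)(W_i^\top W_{m'})$ with $m \neq m'$ vanish because $\mathbb{E}[W_{m'}] = 0$, and terms involving four distinct indices vanish outright. The only nonzero contributions come from the two matchings $(i=i', m=m')$ and $(i=m', m=i')$ within the same group $k$, which each yield $N_i N_m\, \mathrm{tr}(C_i C_m)$. This produces the formula
\begin{equation*}
\mathrm{Var}(\mathbf{1}_p' U_4) = 2 \sum_{k=1}^K c_k^2 \sum_{\substack{i,m \in S_k \\ i \neq m}} N_i N_m \, \mathrm{tr}(C_i C_m).
\end{equation*}
Expanding $\mathrm{tr}(C_i C_m)$ via the matrix identity $\mathrm{tr}[(\diag(\Omega_i) - \Omega_i \Omega_i^\top)(\diag(\Omega_m) - \Omega_m \Omega_m^\top)]$ gives four terms: $\langle \Omega_i, \Omega_m \rangle$, $-\sum_j \Omega_{ij}^2 \Omega_{mj}$, $-\sum_j \Omega_{ij} \Omega_{mj}^2$, and $\langle \Omega_i, \Omega_m \rangle^2$. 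When substituted back, the first reproduces $\Theta_{n4}$, the fourth reproduces $E_n$, and the two middle ``correction'' terms are nonpositive.

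The upper bound $\mathrm{Var}(\mathbf{1}_p' U_4) \leq \Theta_{n4} + E_n$ then follows immediately by discarding the nonpositive correction. The lower bound requires showing the magnitude of the correction is controlled by $E_n$ in aggregate. My approach is to apply Cauchy--Schwarz in the form $\langle \Omega_i, \Omega_m \rangle^2 = (\sum_j \Omega_{ij} \sqrt{\Omega_{mj}} \cdot \sqrt{\Omega_{mj}})^2 \leq \sum_j \Omega_{ij}^2 \Omega_{mj} \cdot \sum_j \Omega_{mj} = \sum_j \Omega_{ij}^2 \Omega_{mj}$ (and symmetrically), together with the pointwise bound $\Omega_{ij}^2 \Omega_{mj} \leq \|\Omega_i\|_\infty \cdot \Omega_{ij} \Omega_{mj}$ which leverages the regularity assumption $\|\Omega_i\|_\infty \leq 1 - c_0$ to trade correction mass against $\Theta_{n4}$. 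The hard part is calibrating these two bounds to yield the stated clean form $\Theta_{n4} - E_n$: the first Cauchy--Schwarz inequality shows the correction is at least $2 E_n$ termwise, but an upper bound in terms of $E_n$ alone is not pointwise true, so the argument must be summed carefully, possibly splitting pairs $(i,m)$ into those where $\langle \Omega_i, \Omega_m \rangle$ is large relative to $\|\Omega_i\|_\infty + \|\Omega_m\|_\infty$ and those where it is not. This calibration is where I expect the main technical effort to lie.
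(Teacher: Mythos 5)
Your variance computation is correct and, in fact, implicitly exposes an algebraic slip in the paper's own proof. Your trace identity
\[
\mathrm{tr}(C_iC_m)=\langle\Omega_i,\Omega_m\rangle-\sum_j\Omega_{ij}^2\Omega_{mj}-\sum_j\Omega_{ij}\Omega_{mj}^2+\langle\Omega_i,\Omega_m\rangle^2
\]
is the correct expansion, whereas the paper's equation \eqref{Var-of-V(irms)} records the middle correction as $-2\sum_j\Omega_{ij}^2\Omega_{mj}^2$; a direct re-expansion of $\sum_j\Omega_{ij}(1-\Omega_{ij})\Omega_{mj}(1-\Omega_{mj})$ gives $-\sum_j\Omega_{ij}^2\Omega_{mj}-\sum_j\Omega_{ij}\Omega_{mj}^2$, matching yours. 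The paper's bound $|\delta_{im}|\le\sum_{j,j'}\Omega_{ij}\Omega_{ij'}\Omega_{mj}\Omega_{mj'}=\langle\Omega_i,\Omega_m\rangle^2$ is a consequence of that typo and does not hold for the true $\delta_{im}$.

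Once the formula is corrected, you should stop looking for the ``calibration'' you describe: it does not exist, because your own Cauchy--Schwarz observation settles the sign. Writing $\mathrm{Var}(\mathbf{1}_p'U_4)=\Theta_{n4}+E_n-R$ with $R=2\sum_k\sum_{i\neq m}\kappa_{im}\bigl(\sum_j\Omega_{ij}^2\Omega_{mj}+\sum_j\Omega_{ij}\Omega_{mj}^2\bigr)$, the inequality $\langle\Omega_i,\Omega_m\rangle^2\le\sum_j\Omega_{ij}^2\Omega_{mj}$ (and its symmetric counterpart) gives $R\ge 2E_n$, hence
\[
\mathrm{Var}(\mathbf{1}_p'U_4)\le\Theta_{n4}-E_n ,
\]
with equality forcing equality in Cauchy--Schwarz for every pair $(i,m)$. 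In particular the lemma's stated lower bound $\Theta_{n4}-E_n\le\mathrm{Var}(\mathbf{1}_p'U_4)$ is false in general (a concrete counterexample: $\Omega_i=(3/4,1/4,0,\dots)$, $\Omega_m=(1/4,3/4,0,\dots)$ gives $\mathrm{tr}(C_iC_m)=9/64$ but $\langle\Omega_i,\Omega_m\rangle-\langle\Omega_i,\Omega_m\rangle^2=15/64$). So your proposal correctly proves the upper bound (indeed the stronger $\Theta_{n4}-E_n$), and correctly flags the lower bound as problematic; the gap in your write-up is that you continued to treat the stated two-sided bound as a target rather than concluding, from the very inequality you derived, that the lower bound cannot be established because it is not true. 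A serviceable replacement is $\mathrm{Var}(\mathbf{1}_p'U_4)\ge 0$ (since $\mathrm{tr}(C_iC_m)\ge 0$, $C_i,C_m$ being PSD), or $\mathrm{Var}(\mathbf{1}_p'U_4)\ge\Theta_{n4}-R$ together with $R\le 2(1-c_0)\Theta_{n4}$ from $\|\Omega_i\|_\infty\le 1-c_0$; the paper's downstream order-of-magnitude claims can be salvaged, but not via the stated lower bound.
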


Using Lemmas~\ref{lem:var1}-\ref{lem:var4}, we derive regularity conditions such that the first term in $\mathrm{Var}({\bf 1}_p'U_\kappa)$ is the dominating term. Observe that $\Theta_n = \Theta_{n1}+ \Theta_{n2} + \Theta_{n3} + \Theta_{n4}$, where the quantity $\Theta_n$ is defined in \eqref{define:Theta_n}. The following intermediate result is useful. 

\begin{lemma}
	\label{lem:Theta_n2+n3+n4}
	Suppose that  \eqref{cond1-basic} holds. Then
	\begin{align}
		\Theta_{n2} + \Theta_{n3} + \Theta_{n4} \asymp \sum_k \| \mu_k \|^2. 
	\end{align}
	Moreover, under the null hypothesis, $\Theta_n \asymp  K \| \mu \|^2$. 
\end{lemma}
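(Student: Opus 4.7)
\textbf{Proof proposal for Lemma~\ref{lem:Theta_n2+n3+n4}.} The plan is to establish the two-sided bound by pairing $\Theta_{n2}$ with $\Theta_{n4}$ (which together reconstruct $\sum_k \|\mu_k\|^2$ up to constants) and by bounding $\Theta_{n3}$ from above by $\|\mu\|^2$, which is itself dominated by $\sum_k \|\mu_k\|^2$.

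First I would record three elementary consequences of \eqref{cond1-basic} that will be used throughout:
(i) since $N_i \geq 2$, one has $N_i^2 \leq \frac{N_i^3}{N_i-1} \leq 2 N_i^2$;
(ii) since $\frac{n_k \bar N_k}{n \bar N} \leq 1-c_0$, one has $\frac{c_0}{n_k \bar N_k} \leq \frac{1}{n_k \bar N_k} - \frac{1}{n \bar N} \leq \frac{1}{n_k \bar N_k}$, so $\bigl(\frac{1}{n_k\bar N_k} - \frac{1}{n \bar N}\bigr)^2 \asymp \frac{1}{(n_k \bar N_k)^2}$;
(iii) for PMFs $\Omega_i$, $\Omega_m$, the inner product $\Omega_i'\Omega_m = \sum_j \Omega_{ij}\Omega_{mj}$ is non-negative.

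The main observation is that expanding the square gives the exact identity
\[
\sum_k \|\mu_k\|^2 = \sum_k \frac{1}{(n_k \bar N_k)^2}\Bigl[\sum_{i \in S_k} N_i^2 \|\Omega_i\|^2 + \sum_{\substack{i,m \in S_k\\ i \neq m}} N_i N_m \, \Omega_i'\Omega_m\Bigr].
\]
By (i)--(ii) the first bracketed sum is, up to universal constants, exactly $\Theta_{n2}/2$, and the second bracketed sum is, up to universal constants, exactly $\Theta_{n4}/2$, both positive. Hence $\Theta_{n2} + \Theta_{n4} \asymp \sum_k \|\mu_k\|^2$, with the lower constant depending only on $c_0$.

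Next I would handle $\Theta_{n3}$ by the telescoping identity
\[
\sum_{k \neq \ell} (n_k \bar N_k)(n_\ell \bar N_\ell)\,\mu_k'\mu_\ell = \Bigl\|\sum_k n_k \bar N_k \mu_k\Bigr\|^2 - \sum_k (n_k \bar N_k)^2 \|\mu_k\|^2 = n^2 \bar N^2 \|\mu\|^2 - \sum_k (n_k \bar N_k)^2 \|\mu_k\|^2,
\]
so that $\Theta_{n3} = 2\|\mu\|^2 - 2\sum_k (n_k \bar N_k)^2 (n \bar N)^{-2} \|\mu_k\|^2 \geq 0$ (the non-negativity also follows directly from (iii)) and $\Theta_{n3} \leq 2\|\mu\|^2$. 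Finally, $\|\mu\|^2 = \|\sum_k \frac{n_k \bar N_k}{n \bar N} \mu_k\|^2 \leq \sum_k \frac{n_k \bar N_k}{n \bar N} \|\mu_k\|^2 \leq \sum_k \|\mu_k\|^2$ by Jensen applied to the convex combination. Combining, $0 \leq \Theta_{n3} \lesssim \sum_k \|\mu_k\|^2$, which together with the previous step yields $\Theta_{n2} + \Theta_{n3} + \Theta_{n4} \asymp \sum_k \|\mu_k\|^2$.

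For the ``moreover'' claim, note that under $H_0$ we have $\mu_k = \mu$ for every $k$, so $\Theta_{n1} = 4\sum_k n_k \bar N_k \|\diag(\mu_k)^{1/2}(\mu_k - \mu)\|^2 = 0$ and $\sum_k \|\mu_k\|^2 = K\|\mu\|^2$; the first part then gives $\Theta_n = \Theta_{n2} + \Theta_{n3} + \Theta_{n4} \asymp K\|\mu\|^2$. No real obstacle is anticipated: all estimates are elementary once the key identity expanding $\|\mu_k\|^2$ into diagonal plus off-diagonal pieces (matching $\Theta_{n2}$ and $\Theta_{n4}$ respectively) is noticed; the only mild care needed is to exploit the third part of \eqref{cond1-basic} to ensure $\frac{1}{n_k \bar N_k} - \frac{1}{n \bar N}$ is comparable to $\frac{1}{n_k \bar N_k}$ so that the lower bound constant is positive.
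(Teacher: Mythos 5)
Your proof is correct and follows essentially the same route as the paper's: pair $\Theta_{n2}$ (diagonal) with $\Theta_{n4}$ (off-diagonal) to reconstruct $\sum_k \|\mu_k\|^2$ up to constants via \eqref{cond1-basic}, then bound $\Theta_{n3}$ above by $2\|\mu\|^2 \leq 2\sum_k\|\mu_k\|^2$. The only cosmetic differences are that you use Jensen's inequality rather than Cauchy--Schwarz for $\|\mu\|^2\leq\sum_k\|\mu_k\|^2$, and you supply the exact telescoping identity $\Theta_{n3}=2\|\mu\|^2-2\sum_k(n_k\bar N_k/n\bar N)^2\|\mu_k\|^2$ where the paper is content with the one-sided bound; both choices are equivalent in substance.
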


The next result is useful in proving that our variance estimator $V$ is asymptotically unbiased.

\begin{lemma}
	\label{lem:var-null}
	Suppose that \eqref{cond1-basic} holds, and recall the definition of $\Theta_n$ in \eqref{define:Theta_n}. 
	Define 
	\begin{align}
		\beta_n = \frac{ \max \bigg\{ 
			\sum_k  \sum_{i\in S_k}\frac{N^2_i}{n_k^2 \bar{N}_k^2}\|\Omega_i\|_3^3 \, ,  \, \,
			\sum_k \| \Sigma_k \|_F^2
			\bigg\} }{ K \| \mu \|^2 }. \label{eqn:beta_n}
	\end{align}
	If $\beta_n = o(1)$, then under the null hypothesis, $\var(T) = [1+o(1)] \cdot \Theta_n$. 
\end{lemma}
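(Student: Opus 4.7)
The plan is to combine the decomposition of $T$ in Lemma~\ref{lem:decompose} with the four variance lemmas (Lemmas~\ref{lem:var1}--\ref{lem:var4}), and then show that under the condition $\beta_n = o(1)$, the three ``error terms'' $A_n$, $B_n$, $E_n$ are each of smaller order than $K\|\mu\|^2$, which by Lemma~\ref{lem:Theta_n2+n3+n4} is the order of $\Theta_n$ under the null.

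First, since Lemma~\ref{lem:decompose} gives $T = \rho^2 + \sum_{\kappa=1}^4 \mathbf{1}_p' U_\kappa$ with the four quadratic/linear forms mutually uncorrelated, and since $\rho^2 = 0$ and $\mu_k = \mu$ for all $k$ under $H_0$, we immediately get $\var(T) = \sum_{\kappa=1}^4 \var(\mathbf{1}_p'U_\kappa)$ and $\var(\mathbf{1}_p'U_1) = \Theta_{n1} - L_n = 0$ from Lemma~\ref{lem:var1}. Lemmas~\ref{lem:var2}--\ref{lem:var4} then give two-sided sandwiches $|\var(\mathbf{1}_p'U_\kappa) - \Theta_{n\kappa}| \leq A_n$, $B_n$, $E_n$ for $\kappa=2,3,4$ respectively, so the claim reduces to proving $A_n + B_n + E_n = o(\Theta_n) = o(K\|\mu\|^2)$.

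The heart of the argument is the following set of bounds, each obtained using \eqref{cond1-basic} to convert the factors $\bigl(\tfrac{1}{n_k\bar{N}_k}-\tfrac{1}{n\bar{N}}\bigr)^2$ into $\tfrac{1}{n_k^2\bar{N}_k^2}$ (up to a constant) and $N_i^3/(N_i-1) \leq 2N_i^2$. For $A_n$ we directly obtain
\[
A_n \;\lesssim\; \sum_k\sum_{i\in S_k}\frac{N_i^2}{n_k^2\bar{N}_k^2}\|\Omega_i\|_3^3 \;\leq\; \beta_n \cdot K\|\mu\|^2.
\]
For $B_n$, Cauchy--Schwarz on $\mathbf{1}_p'(\Sigma_k\circ\Sigma_\ell)\mathbf{1}_p = \sum_{j,j'}(\Sigma_k)_{jj'}(\Sigma_\ell)_{jj'} \leq \|\Sigma_k\|_F\|\Sigma_\ell\|_F$, combined with $\sum_k \tfrac{n_k\bar{N}_k}{n\bar{N}} = 1$ and another Cauchy--Schwarz in $k$, yields $B_n \lesssim \sum_k \|\Sigma_k\|_F^2 \leq \beta_n\cdot K\|\mu\|^2$. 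For $E_n$, the key identity is
\[
\sum_{\substack{i,m\in S_k\\ i\neq m}} N_iN_m(\Omega_i'\Omega_m)^2 \;\leq\; \sum_{i,m\in S_k} N_iN_m(\Omega_i'\Omega_m)^2 \;=\; n_k^2\bar{N}_k^2\,\|\Sigma_k\|_F^2,
\]
which follows from expanding $\|\Sigma_k\|_F^2$. This gives $E_n \lesssim \sum_k \|\Sigma_k\|_F^2 \leq \beta_n\cdot K\|\mu\|^2$ as well.

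Putting these together, $A_n + B_n + E_n \lesssim \beta_n\cdot K\|\mu\|^2 = o(K\|\mu\|^2)$. Since $\Theta_{n1}=0$ under $H_0$, we have $\Theta_n = \Theta_{n2}+\Theta_{n3}+\Theta_{n4}\asymp K\|\mu\|^2$ by Lemma~\ref{lem:Theta_n2+n3+n4}, and hence $\var(T) = \Theta_n + O(A_n+B_n+E_n) = [1+o(1)]\Theta_n$. I expect the main obstacle to be the $E_n$ bound: one must recognize that the double sum over $(i,m)\in S_k\times S_k$ of $N_iN_m(\Omega_i'\Omega_m)^2$ is exactly $n_k^2\bar{N}_k^2\|\Sigma_k\|_F^2$, which is what allows $E_n$ to be absorbed into the $\|\Sigma_k\|_F^2$ part of $\beta_n$. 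The $B_n$ bound is also slightly delicate because a naive application of Cauchy--Schwarz in $k$ loses a factor of $K$; the remedy is to factor out the probability-weighted form $\tfrac{n_k\bar{N}_k}{n\bar{N}}$ before applying Cauchy--Schwarz so the weights sum to one.
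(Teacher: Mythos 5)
Your proof is correct and follows essentially the same approach as the paper: reduce to showing $A_n + B_n + E_n = o(\Theta_n) \asymp K\|\mu\|^2$ via the identities relating each error term to $\sum_k \|\Sigma_k\|_F^2$ or the $\|\Omega_i\|_3^3$ sum. Your $B_n$ argument applies Cauchy--Schwarz first in $(j,j')$ to get $\mathbf{1}_p'(\Sigma_k\circ\Sigma_\ell)\mathbf{1}_p \leq \|\Sigma_k\|_F\|\Sigma_\ell\|_F$ and then in $k$ with the unit-mass weights, whereas the paper first writes $B_n \leq 2\|\Sigma\|_F^2$ for the mixture $\Sigma=\sum_k \tfrac{n_k\bar{N}_k}{n\bar{N}}\Sigma_k$ and then applies weighted Cauchy--Schwarz entrywise; these are interchangeable steps and both land on $\sum_k\|\Sigma_k\|_F^2$.
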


We also study the case of $K = 2$ more explicitly. In the lemmas below we use the notation from Section \ref{subsec:K=2}.  First we have an intermediate result analogous to Lemma \ref{lem:Theta_n2+n3+n4} that holds under weaker conditions. 

\begin{lemma}
	\label{lem:Theta_n2+n3+n4-K=2}
	Consider $K = 2$ and suppose that $\min N_i \geq 2$, $\min M_i \geq 2$
	Then
	\begin{align*}
		\Theta_{n2} + \Theta_{n3} + \Theta_{n4} \asymp  \bigg \|  \frac{m \bar{M}}{ n\bar{N}+ m \bar{M}} \eta +  
		\frac{n \bar{N}}{ n\bar{N}+ m \bar{M}} \theta  \bigg \|^2.
	\end{align*}
	Moreover, under the null hypothesis, $	\Theta_n \asymp \|  \mu  \|^2$.
\end{lemma}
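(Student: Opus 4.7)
The plan is to compute $\Theta_{n2}, \Theta_{n3}, \Theta_{n4}$ explicitly for $K = 2$, spot a cancellation between the debias-type term $\Theta_{n2}$ and a negative within-group contribution from $\Theta_{n4}$, and then compare the remainder to the convex combination $\bar{\Omega} := \tfrac{m\bar{M}}{n\bar{N}+m\bar{M}}\eta + \tfrac{n\bar{N}}{n\bar{N}+m\bar{M}}\theta$ (note the \emph{swap} of weights relative to $\mu$). Writing $T_1 = n\bar{N}$, $T_2 = m\bar{M}$, and $T = T_1 + T_2$, we have $(1/T_1 - 1/T)^2 = T_2^2/(T_1^2 T^2)$ and symmetrically for group 2. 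Plugging this into \eqref{eqn:Theta_n3} and \eqref{eqn:Theta_n4} and using $\eta = T_1^{-1}\sum_i N_i\Omega_i$, $\theta = T_2^{-1}\sum_i M_i\Gamma_i$, one obtains
\begin{align*}
\Theta_{n3} &= \tfrac{4 T_1 T_2}{T^2}\langle \eta,\theta\rangle, \\
\Theta_{n4} &= \tfrac{2 T_2^2}{T^2}\|\eta\|^2 + \tfrac{2 T_1^2}{T^2}\|\theta\|^2 - \tfrac{2 T_2^2}{T_1^2 T^2}\sum_i N_i^2\|\Omega_i\|^2 - \tfrac{2 T_1^2}{T_2^2 T^2}\sum_i M_i^2\|\Gamma_i\|^2,
\end{align*}
where the negative pieces come from restricting the double sum to $i \neq m$. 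Since $\tfrac{N_i^3}{N_i-1} - N_i^2 = \tfrac{N_i^2}{N_i-1}$, adding $\Theta_{n2}$ produces the clean identity
\[
\Theta_{n2} + \Theta_{n3} + \Theta_{n4} = 2\|\bar{\Omega}\|^2 + R, \qquad R := \tfrac{2 T_2^2}{T_1^2 T^2}\sum_i \tfrac{N_i^2}{N_i-1}\|\Omega_i\|^2 + \tfrac{2 T_1^2}{T_2^2 T^2}\sum_i \tfrac{M_i^2}{M_i-1}\|\Gamma_i\|^2 \geq 0.
\]

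This immediately yields the lower bound $\Theta_{n2} + \Theta_{n3} + \Theta_{n4} \geq 2\|\bar{\Omega}\|^2$. For the matching upper bound I need $R \lesssim \|\bar{\Omega}\|^2$, and the key observation is that every PMF $\Omega_i$ has non-negative entries, so every inner product $\langle \Omega_i, \Omega_j\rangle$ is non-negative; dropping the off-diagonal terms gives
\[
\sum_i N_i^2\|\Omega_i\|^2 \;\leq\; \sum_{i,j} N_i N_j\langle \Omega_i,\Omega_j\rangle \;=\; \Bigl\|\sum_i N_i \Omega_i\Bigr\|^2 \;=\; T_1^2\|\eta\|^2.
\]
Since $N_i \geq 2$ implies $N_i^2/(N_i-1) \leq N_i^2$, the first summand of $R$ is bounded by $\tfrac{2 T_2^2}{T^2}\|\eta\|^2 \leq 2\|\bar{\Omega}\|^2$, where the last step uses that $\bar{\Omega} - (T_2/T)\eta = (T_1/T)\theta$ is componentwise non-negative, so $\|\bar{\Omega}\|^2 \geq (T_2/T)^2\|\eta\|^2$. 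An analogous bound for the second summand gives $R \leq 4\|\bar{\Omega}\|^2$, hence $\Theta_{n2} + \Theta_{n3} + \Theta_{n4} \asymp \|\bar{\Omega}\|^2$.

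For the moreover clause, under $H_0$ we have $\eta = \theta = \mu$, so $\bar{\Omega} = \mu$, and $\Theta_{n1} = 0$ since each $\mu_k - \mu$ vanishes in \eqref{eqn:Theta_n1}; thus $\Theta_n = \Theta_{n2} + \Theta_{n3} + \Theta_{n4} \asymp \|\mu\|^2$. The main obstacle is spotting the two cancellations. The first is algebraic: the $-N_i^2$ in $\Theta_{n4}$ (from removing the diagonal $i=m$) is exactly the leading piece of $N_i^3/(N_i-1)$ from the bias correction, leaving only $N_i^2/(N_i-1)$ in $R$. The second is the inequality $\sum_i N_i^2\|\Omega_i\|^2 \leq T_1^2\|\eta\|^2$, which crucially relies on non-negativity of the $\Omega_i$; for signed vectors this bound can fail by a factor of $n$, which would break the $\asymp$ conclusion.
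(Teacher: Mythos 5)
Your proof is correct and follows the same essential route as the paper's: both recognize that $\Theta_{n2}$ (with $\tfrac{N_i^3}{N_i-1}\asymp N_i^2$) restores the missing diagonal of the within-group quadratic form in $\Theta_{n4}$, so that $\Theta_{n2}+\Theta_{n4}$ is within constants of $\tfrac{T_2^2}{T^2}\|\eta\|^2+\tfrac{T_1^2}{T^2}\|\theta\|^2$, and then $\Theta_{n3}$ supplies the cross term to complete the square into $\|\bar\Omega\|^2$, with non-negativity of the PMFs needed to keep all pieces comparable. Your refinement — an exact identity $\Theta_{n2}+\Theta_{n3}+\Theta_{n4}=2\|\bar\Omega\|^2+R$ with $0\le R\le 4\|\bar\Omega\|^2$ via $\sum_i N_i^2\|\Omega_i\|^2\le T_1^2\|\eta\|^2$ and $\|\bar\Omega\|^2\ge (T_2/T)^2\|\eta\|^2$ — yields explicit constants where the paper leaves them implicit, but it is not a structurally different argument.
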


The next result is a version of Lemma \ref{lem:var-null} for the case $K = 2$ that holds under weaker conditions. 
\begin{lemma}
	\label{lem:var-null-K=2}
	Suppose that $\min_i N_i \geq 2$ and $\min_i M_i \geq 2$.  Define 
	\begin{align}
		\beta_n\rp{2}  = \frac{ \max \bigg\{ 
			\sum_i N_i^2 \| \Omega_i \|^3 , \, \,
			\sum_i M_i^2 \| \Gamma_i \|^3 \, ,  \, \,
			\| \Sigma_1\|_F^2 + \| \Sigma_2\|_F^2
			\bigg\} }{  \| \mu \|^2 }. \label{eqn:beta_n2}
	\end{align}
	If $\beta_n\rp{2}  = o(1)$, then under the null hypothesis, $\var(T) = [1+o(1)] \cdot \Theta_n$. 
\end{lemma}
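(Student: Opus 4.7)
The plan is to leverage the orthogonal decomposition $T = \rho^2 + \sum_{\kappa=1}^4 \mathbf{1}_p' U_\kappa$ from Lemma~\ref{lem:decompose} together with the per-term variance bounds in Lemmas~\ref{lem:var1}--\ref{lem:var4}, and then show that in the specialized $K=2$ null setting, each error term is controlled by one of the three pieces defining $\beta_n\rp{2}$.

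\textbf{Step 1 (reducing to $U_2,U_3,U_4$).} Under $H_0$, $\eta=\theta$, hence $\mu_1=\mu_2=\mu$ and $\mu_k-\mu=0$ for $k=1,2$. Inspecting the definition of $U_1$ in Lemma~\ref{lem:decompose} shows every summand carries a factor $(\mu_{kj}-\mu_j)$, so $U_1\equiv 0$ and $\Theta_{n1}=0$. Combined with the pairwise uncorrelatedness from Lemma~\ref{lem:decompose}, we obtain
\begin{equation*}
\var(T) = \var(\mathbf{1}_p' U_2) + \var(\mathbf{1}_p' U_3) + \var(\mathbf{1}_p' U_4), \qquad \Theta_n = \Theta_{n2}+\Theta_{n3}+\Theta_{n4}.
\end{equation*}
By Lemmas~\ref{lem:var2}, \ref{lem:var3}, and \ref{lem:var4},
\begin{equation*}
\bigl|\var(\mathbf{1}_p' U_2) - \Theta_{n2}\bigr| \leq A_n, \quad \bigl|\var(\mathbf{1}_p' U_3) - \Theta_{n3}\bigr| \leq B_n, \quad \bigl|\var(\mathbf{1}_p' U_4) - \Theta_{n4}\bigr| \leq E_n.
\end{equation*}
Lemma~\ref{lem:Theta_n2+n3+n4-K=2} then gives $\Theta_n\asymp\|\mu\|^2$ under $H_0$, so it suffices to prove that $A_n+B_n+E_n = o(\|\mu\|^2)$.

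\textbf{Step 2 (bounding $E_n$ and $B_n$ via Frobenius norms of $\Sigma_k$).} Expanding the inner $\ell^2$ structure,
\begin{equation*}
\sum_{i,m\in S_k} N_iN_m(\Omega_i'\Omega_m)^2 = \sum_{j,j'}\Bigl(\sum_{i\in S_k}N_i\Omega_{ij}\Omega_{ij'}\Bigr)^2 = (n_k\bar{N}_k)^2 \|\Sigma_k\|_F^2,
\end{equation*}
so plugging into \eqref{eqn:En} and using $\bigl(\frac{1}{n_k\bar{N}_k}-\frac{1}{n\bar{N}}\bigr)^2(n_k\bar{N}_k)^2 = \bigl(1-\frac{n_k\bar{N}_k}{n\bar{N}}\bigr)^2\leq 1$ yields $E_n\les \|\Sigma_1\|_F^2+\|\Sigma_2\|_F^2$. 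For $B_n$, Cauchy--Schwarz on the Hadamard product gives $\mathbf{1}_p'(\Sigma_1\circ\Sigma_2)\mathbf{1}_p\leq \|\Sigma_1\|_F\|\Sigma_2\|_F$, so $B_n\les\|\Sigma_1\|_F^2+\|\Sigma_2\|_F^2$. Both terms are $o(\|\mu\|^2)$ by the $\beta_n\rp{2}$ hypothesis.

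\textbf{Step 3 (bounding $A_n$).} Using $N_i\geq 2$ to write $\frac{N_i^3}{N_i-1}\les N_i^2$, and the elementary identities $\frac{1}{n\bar{N}}-\frac{1}{n\bar{N}+m\bar{M}} = \frac{m\bar{M}}{n\bar{N}(n\bar{N}+m\bar{M})}\leq\frac{1}{n\bar{N}}$ for group $1$ and $\frac{1}{m\bar{M}}-\frac{1}{n\bar{N}+m\bar{M}} \leq\frac{1}{m\bar{M}}$ for group $2$, we get
\begin{equation*}
A_n \les \sum_{i=1}^n \frac{N_i^2}{(n\bar{N})^2}\|\Omega_i\|_3^3 + \sum_{j=1}^m \frac{M_j^2}{(m\bar{M})^2}\|\Gamma_j\|_3^3,
\end{equation*}
which is $o(\|\mu\|^2)$ by the first two pieces of $\beta_n\rp{2}$.

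\textbf{Main obstacle.} The delicate point is that Condition~\ref{cond:K=2} and $\beta_n\rp{2}$ allow the two groups to be severely unbalanced, so the uniform ``$n_k\bar{N}_k\asymp n\bar{N}/K$'' type estimates used in the general-$K$ analysis are unavailable. The substitute is the pair of two-group identities just used, together with the algebraic simplification $\sum_{i,m\in S_k}N_iN_m(\Omega_i'\Omega_m)^2 = (n_k\bar{N}_k)^2\|\Sigma_k\|_F^2$, which together turn each error term into a quantity that is precisely one of the three ingredients of $\beta_n\rp{2}$. Once this bookkeeping is done, the conclusion $\var(T)=[1+o(1)]\Theta_n$ follows directly.
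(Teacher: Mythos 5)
Your proof is correct and follows essentially the same route as the paper's: decompose $T$ via Lemma~\ref{lem:decompose}, bound $A_n$, $B_n$, $E_n$ by the three ingredients of $\beta_n\rp{2}$, and invoke Lemma~\ref{lem:Theta_n2+n3+n4-K=2} to lower-bound $\Theta_n$ by $\|\mu\|^2$. One small remark: your Step~3 bound keeps the normalizations $(n\bar{N})^{-2}$ and $(m\bar{M})^{-2}$ in front of $\sum_i N_i^2\|\Omega_i\|_3^3$ and $\sum_j M_j^2\|\Gamma_j\|_3^3$, which is exactly what the general-$K$ bound \eqref{eqn:An_bd} gives on specialization and matches Condition~\ref{cond:K=2}(c); the displayed $\beta_n\rp{2}$ in \eqref{eqn:beta_n2} omits these factors (and writes $\|\Omega_i\|^3$ for $\|\Omega_i\|_3^3$), which is an inconsistency in the paper's own statement — but since the unnormalized quantities dominate the normalized ones, the lemma is still true as written and your derivation covers both versions.
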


\subsection{The decomposition of $V$} \label{subsec:V-decompose}

\begin{lemma}
	\label{lem:Vdecompose}
	Let $\{Z_{ir}\}_{1\leq i\leq n, 1\leq r\leq N_i}$ be as in \eqref{Multinomials}. Recall that 
	\begin{align} 
		V &=  2\sum_{k=1}^K\sum_{i\in S_k}\sum_{j=1}^p  \Bigl(\frac{1}{n_k\bar{N}_k}-\frac{1}{n\bar{N}}\Bigr)^2\biggl[  \frac{N_iX_{ij}^2}{N_i-1} - \frac{N_iX_{ij}(N_i-X_{ij})}{(N_i-1)^2}\biggr]\\
		& +\frac{2}{n^2\bar{N}^2}\sum_{1\leq k\neq \ell\leq K}\sum_{i\in S_k}\sum_{m\in S_\ell} \sum_{j=1}^p X_{ij}X_{mj} +  2\sum_{k=1}^K \sum_{\substack{i\in S_k, m\in S_k,\\ i\neq m}}\sum_{j=1}^p \Bigl(\frac{1}{n_k\bar{N}_k}-\frac{1}{n\bar{N}}\Bigr)^2 X_{ij}X_{mj}. \nonumber
	\end{align}
	Define
	\begin{align*}
		\theta_i &= \big( \frac{1}{n_k \bar{N}_k} - \frac{1}{n\bar{N}} )^2 \frac{N_i^3}{N_i  -1} \quad \text{for } i \in S_k \, \,,\quad \text{and let} \, \,  \\
		\alpha_{im} 
		&= \begin{cases}
			\frac{2}{n^2\bar{N}^2} &\quad \text{ if } i \in S_k, m \in S_\ell, k \neq \ell \\
			2\big( \frac{1}{n_k \bar{N}_k} - \frac{1}{n\bar{N}} )^2 &\quad \text{ if } i, m \in S_k 
		\end{cases} 
	\end{align*}
	If we let
	\begin{align}
		A_1 &= 
		\sum_{i} \sum_{r =1}^{N_i} 
		\sum_j \big[ \frac{4\theta_i \Omega_{ij}}{N_i} 
		+ \sum_{m \in [n] \backslash \{i\} } 2 \alpha_{im} N_m \Omega_{mj} \big] Z_{ijr}, \label{eqn:A1_def}
		\\
		A_2
		&=  \sum_i \sum_{r \neq s \in [N_i]} \frac{2 \theta_i}{N_i(N_i -1)} 
		\big( \sum_j Z_{ijr} Z_{ijs} \big) \label{eqn:A2_def}  
		\\
		A_3 &= \sum_{i \neq m} \sum_{r =1 }^{N_i} \sum_{s = 1}^{N_m} 
		\alpha_{im} \big(  \sum_j Z_{ijr} Z_{mjs} \big)  \label{eqn:A3_def}, 
	\end{align}
	then these terms are mean zero, are mutually uncorrelated, and satisfy 
	\begin{align}
		\label{eqn:V_decomp}
		V = A_1  +A_2 +A_3 + \Theta_{n2} + \Theta_{n3} + \Theta_{n4}.
	\end{align}
\end{lemma}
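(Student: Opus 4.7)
The plan is to substitute the representation $X_{ij} = N_i\Omega_{ij} + \sum_{r=1}^{N_i} Z_{ijr}$ into each of the three summands that define $V$, expand, and collect pieces by their degree in the $Z$'s. The degree-$0$ pieces should reconstruct $\Theta_{n2}+\Theta_{n3}+\Theta_{n4}$ (consistent with the unbiasedness conclusion $\mathbb{E}[V]=\Theta_{n2}+\Theta_{n3}+\Theta_{n4}$), the degree-$1$ pieces will combine to form $A_1$, and the degree-$2$ pieces will give $A_2$ and $A_3$.

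First I would handle the ``same-document'' summand (indexed by a single $i$). A short algebraic simplification shows
\[
\frac{N_i X_{ij}^2}{N_i-1} - \frac{N_i X_{ij}(N_i-X_{ij})}{(N_i-1)^2} = \frac{N_i^3}{N_i-1}\cdot\frac{X_{ij}(X_{ij}-1)}{N_i(N_i-1)},
\]
so the summand equals $2\theta_i\,\hat\Omega_{ij}^2$ with $\hat\Omega_{ij}^2 := X_{ij}(X_{ij}-1)/[N_i(N_i-1)]$. Using $B_{ijr}\in\{0,1\}$, which yields $Z_{ijr}^2 = (1-2\Omega_{ij})Z_{ijr}+\Omega_{ij}(1-\Omega_{ij})$, together with $X_{ij}=N_i\Omega_{ij}+\sum_r Z_{ijr}$, a direct expansion gives
\[
\hat\Omega_{ij}^2 = \Omega_{ij}^2 + \frac{2\Omega_{ij}}{N_i}\sum_r Z_{ijr} + \frac{1}{N_i(N_i-1)}\sum_{r\ne s} Z_{ijr}Z_{ijs}.
\]
Multiplying by $2\theta_i$ and summing over $i,j$ produces $\Theta_{n2}$ from the degree-$0$ part, the expression $\sum_i\sum_r\sum_j \frac{4\theta_i\Omega_{ij}}{N_i}Z_{ijr}$ from the degree-$1$ part, and exactly $A_2$ from the degree-$2$ part.

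Next, for the two ``cross-document'' summands (indexed by $i\ne m$, with coefficient $\alpha_{im}$ as defined in the statement), I expand
\[
X_{ij}X_{mj} = N_iN_m\Omega_{ij}\Omega_{mj} + N_m\Omega_{mj}\sum_r Z_{ijr} + N_i\Omega_{ij}\sum_s Z_{mjs} + \sum_r\sum_s Z_{ijr}Z_{mjs}.
\]
The constant part, weighted by $\alpha_{im}$ and summed over $i\ne m,\,j$, contributes $\Theta_{n3}+\Theta_{n4}$; the degree-$2$ part yields $A_3$ verbatim. The two degree-$1$ terms are interchanged by swapping the dummy indices $i\leftrightarrow m$, and using $\alpha_{im}=\alpha_{mi}$ they collapse into $\sum_i\sum_r\sum_j\sum_{m\ne i} 2\alpha_{im}N_m\Omega_{mj}\,Z_{ijr}$. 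Adding this to the degree-$1$ piece from the same-document step reproduces exactly $A_1$. This establishes \eqref{eqn:V_decomp}.

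Finally, to check that $A_1,A_2,A_3$ are mean zero and mutually uncorrelated, I would use only that $\{Z_{ir}\}$ is independent across $(i,r)$ with $\mathbb{E}[Z_{ir}]=\mathbf{0}$. Mean-zero is immediate since every summand contains at least one $(i,r)$-position appearing uniquely and hence factoring out with mean zero. For the cross-moments $\mathbb{E}[A_1A_2]$ and $\mathbb{E}[A_1A_3]$, three $Z$'s cannot have all three $(i,r)$-positions matched into pairs, so some factor is a singleton with mean zero. For $\mathbb{E}[A_2A_3]$ the four $(i,r)$-positions are $(i_1,r_1),(i_1,s_1),(i_2,r_2),(m_2,s_2)$ with $r_1\ne s_1$ and $i_2\ne m_2$; a quick case check shows that every pairing either leaves a singleton or forces $i_2=m_2$ (forbidden in $A_3$), so the expectation vanishes. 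The only real obstacle is the index bookkeeping in the linear-piece combination that identifies the cross-document contribution with the stated form of $A_1$; the rest is routine algebraic accounting.
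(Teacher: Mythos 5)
Your proposal is correct and follows essentially the same route as the paper's proof: substitute $X_{ij}=N_i\Omega_{ij}+\sum_r Z_{ijr}$, collect by degree in $Z$, and match the degree-$0$, $1$, $2$ pieces to $\Theta_{n2}+\Theta_{n3}+\Theta_{n4}$, $A_1$, and $A_2,A_3$ respectively (your $\hat\Omega_{ij}^2$ is just the paper's $\Delta_{ij}/N_i$ in slightly different packaging, and your case analysis for uncorrelatedness is the unpacked version of the paper's one-line observation that singleton $Z$-factors make the cross-moments vanish).
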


\subsection{Properties of  $V$}

First we control the variance of $V$. 

\begin{lemma}
	\label{lem:varV}
	Let $A_1, A_2,$ and $A_3$ be defined as in Lemma \ref{lem:Vdecompose}. Then 
	\begin{align*}
		\var(A_1)
		&\lesssim \frac{1}{n \bar{N}} \| \mu \|_3^3
		+ \sum_k \frac{ \| \mu_k \|_3^3}{ n_k \bar{N}_k} 
		\lesssim \sum_k \frac{ \| \mu_k \|_3^3}{ n_k \bar{N}_k} 
		\\ 
		\var(A_2) &\lesssim \sum_k \sum_{i \in S_k} \frac{N_i^2 \| \Omega_i \|_2^2 }{n_k^4 \bar{N}_k^4} \lesssim 	\sum_k \frac{\| \mu_k \|^2}{n_k^2 \bar{N}_k^2} 
		\\
		\var(A_3) &\lesssim \sum_k \frac{ \| \mu_k \|^2 }{ n_k^2 \bar{N}_k^2} 
		+ \frac{1}{n^2 \bar{N}^2} \| \mu \|^2
		\lesssim \sum_k \frac{ \| \mu_k \|^2 }{ n_k^2 \bar{N}_k^2} . 
	\end{align*}
\end{lemma}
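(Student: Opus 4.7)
\noindent \textbf{Proof plan for Lemma \ref{lem:varV}.}
The strategy is to leverage the independence and mean-zero structure of the atoms $\{Z_{ir}\}$, together with the covariance identity $\Sigma_i^{(Z)} := \mathrm{Cov}(Z_{ir}) = \diag(\Omega_i) - \Omega_i \Omega_i' \preceq \diag(\Omega_i)$. Each of $A_1$, $A_2$, $A_3$ is a sum of uncorrelated pieces indexed respectively by single samples $(i,r)$, same-document pairs $(i;\, r\neq s)$, and cross-document pairs $(i\neq m;\, r,s)$, which makes $\var(A_\kappa)$ decompose into tractable blocks.

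\medskip
\textit{Linear term $A_1$.} Write $A_1 = \sum_{i,r} c_i' Z_{ir}$ with $c_{ij} = \frac{4\theta_i \Omega_{ij}}{N_i} + \sum_{m \neq i} 2\alpha_{im} N_m \Omega_{mj}$. By independence across $(i,r)$ and the covariance bound,
\[
\var(A_1) \;\leq\; \sum_i N_i \sum_j c_{ij}^2\, \Omega_{ij}.
\]
Substituting $\theta_i \lesssim N_i^2/(n_k\bar{N}_k)^2$ and $\alpha_{im} \lesssim 1/(n_k\bar{N}_k)^2$ or $1/(n\bar{N})^2$ (depending on whether $m \in S_k$), and using $\sum_{m \in S_k} N_m \Omega_{mj} = n_k \bar{N}_k \mu_{kj}$ and $\sum_m N_m \Omega_{mj} = n\bar{N} \mu_j$, one gets, for $i \in S_k$,
\[
|c_{ij}| \;\lesssim\; \frac{N_i \Omega_{ij}}{(n_k\bar{N}_k)^2} + \frac{\mu_{kj}}{n_k\bar{N}_k} + \frac{\mu_j}{n\bar{N}}.
\]
The first piece is absorbed into the second: squaring and multiplying by $\Omega_{ij}$, the inequality $N_i\Omega_{ij} \leq n_k\bar{N}_k \mu_{kj}$ gives $\big(\tfrac{N_i\Omega_{ij}}{(n_k\bar{N}_k)^2}\big)^2 \Omega_{ij} \leq \big(\tfrac{\mu_{kj}}{n_k\bar{N}_k}\big)^2 \Omega_{ij}$. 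Summing $\sum_{i\in S_k} N_i \Omega_{ij} = n_k \bar{N}_k \mu_{kj}$ over $i$, then over $k,j$, produces $\var(A_1) \lesssim \frac{\|\mu\|_3^3}{n\bar{N}} + \sum_k \frac{\|\mu_k\|_3^3}{n_k\bar{N}_k}$. Finally, Jensen's inequality for $t\mapsto t^3$ with weights $n_k\bar{N}_k/(n\bar{N})$, together with $n_k \bar{N}_k \leq n\bar{N}$, yields $\frac{\|\mu\|_3^3}{n\bar{N}} \leq \sum_k \frac{\|\mu_k\|_3^3}{n_k\bar{N}_k}$.

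\medskip
\textit{Same-document quadratic $A_2$.} Set $\gamma_i := 2\theta_i/(N_i(N_i-1))$ and $W_i := \sum_{r \neq s} Z_{ir}' Z_{is}$. The $W_i$ are mean-zero and independent across $i$, so $\var(A_2) = \sum_i \gamma_i^2 \var(W_i)$. Expanding the fourth moment and using pairwise independence of $Z_{ir}, Z_{is}$ for $r\neq s$ gives $\var(W_i) = 2N_i(N_i-1) \|\Sigma_i^{(Z)}\|_F^2$, and a direct computation shows $\|\Sigma_i^{(Z)}\|_F^2 = \|\Omega_i\|_2^2 - 2\|\Omega_i\|_3^3 + \|\Omega_i\|_2^4 \leq 2\|\Omega_i\|_2^2$. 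Combining with $\gamma_i^2 \lesssim N_i^2/(n_k\bar{N}_k)^4$ (from $\theta_i \lesssim N_i^2/(n_k\bar{N}_k)^2$ and $N_i\geq 2$) yields the first bound. For the second, write $\sum_{i\in S_k} N_i^2 \|\Omega_i\|_2^2 = \sum_j \sum_i (N_i\Omega_{ij})^2$ and apply $\sum_i w_i^2 \leq (\sum_i w_i)^2$ (valid since each $w_i = N_i\Omega_{ij}$ is nonnegative and bounded by the sum) with $\sum_i w_i = n_k\bar{N}_k \mu_{kj}$ to conclude $\sum_{i\in S_k} N_i^2 \|\Omega_i\|_2^2 \leq (n_k\bar{N}_k)^2 \|\mu_k\|^2$.

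\medskip
\textit{Cross-document quadratic $A_3$.} Set $Y_i := \sum_r Z_{ir} = X_i - N_i \Omega_i$, so $A_3 = \sum_{i\neq m} \alpha_{im} Y_i' Y_m$ with $\{Y_i\}$ independent and mean-zero, $\mathrm{Cov}(Y_i)=N_i\Sigma_i^{(Z)}$. Pairing matching orientations of each unordered pair $\{i,m\}$ gives
\[
\var(A_3) = 2\sum_{i\neq m} \alpha_{im}^2\, N_i N_m\, \tr\big(\Sigma_i^{(Z)}\Sigma_m^{(Z)}\big).
\]
A short expansion shows $\tr(\Sigma_i^{(Z)}\Sigma_m^{(Z)}) \leq 2\,\Omega_i'\Omega_m$. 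Partitioning the sum by within-group ($i,m\in S_k$, $\alpha_{im}\lesssim 1/(n_k\bar{N}_k)^2$) versus cross-group ($i\in S_k, m\in S_\ell$, $k\neq\ell$, $\alpha_{im}\lesssim 1/(n\bar{N})^2$) and collapsing via $\sum_{i\in S_k} N_i \Omega_i = n_k\bar{N}_k \mu_k$ and $\sum_i N_i \Omega_i = n\bar{N}\mu$, the two blocks contribute $\lesssim \|\mu_k\|^2/(n_k\bar{N}_k)^2$ and $\|\mu\|^2/(n\bar{N})^2$ respectively (the diagonal $i=m$ subtraction is absorbed into the implicit constant). The final simplification $\|\mu\|^2/(n\bar{N})^2 \lesssim \sum_k \|\mu_k\|^2/(n_k\bar{N}_k)^2$ follows from $\|\mu\|^2 \leq \sum_k \tfrac{n_k\bar{N}_k}{n\bar{N}}\|\mu_k\|^2$ (Jensen) and $n_k\bar{N}_k \leq n\bar{N}$.

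\medskip
The main obstacle is the bookkeeping in $A_1$: $c_{ij}$ mixes contributions from every other document via the $\alpha_{im}$ sum, so one must partition carefully by group membership and verify that the $\theta_i$-piece is dominated by the $\mu_{kj}$-piece before invoking the mean identities. The $A_2$ and $A_3$ steps are more mechanical, reducing to Frobenius-norm and trace bounds on products of multinomial covariance matrices.
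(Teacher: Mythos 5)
Your proof is correct and follows essentially the same route as the paper's: bound each piece via the independence of the $\{Z_{ir}\}$ and the multinomial covariance inequality $\mathrm{Cov}(Z_{ir}) \preceq \diag(\Omega_i)$, collapse the coefficients $\theta_i, \alpha_{im}$ using the mean identities $\sum_{i\in S_k} N_i\Omega_i = n_k\bar{N}_k\mu_k$ and $\sum_i N_i\Omega_i = n\bar{N}\mu$, and finish with Jensen for $\|\mu\|_3^3$ and the trivial norm comparison $\|\mu\|^2 \lesssim \sum_k\|\mu_k\|^2$. The only material difference is cosmetic: where the paper bounds $\var(A_2)$ and $\var(A_3)$ by summing variances of the individual atoms $W_{irs} = \sum_j Z_{ijr}Z_{ijs}$ and $V_{irms} = \sum_j Z_{ijr}Z_{mjs}$ (citing the earlier Lemmas 4--6), you aggregate to the per-document level $W_i = \sum_{r\neq s} Z_{ir}'Z_{is}$ and per-pair level $Y_i'Y_m$ with $Y_i = X_i - N_i\Omega_i$, and then invoke the closed-form moment identities $\var(W_i) = 2N_i(N_i-1)\|\Sigma_i^{(Z)}\|_F^2$ and $\E[(Y_i'Y_m)^2] = N_iN_m\tr(\Sigma_i^{(Z)}\Sigma_m^{(Z)})$. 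This is an equivalent, slightly more structured packaging of the same computation; both reduce to the same elementary bounds $\|\Sigma_i^{(Z)}\|_F^2 \leq 2\|\Omega_i\|^2$ and $\tr(\Sigma_i^{(Z)}\Sigma_m^{(Z)}) \leq 2\Omega_i'\Omega_m$.
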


Next we show consistency of $V$ under the null, which is crucial in properly standardizing our test statistic and establishing asymptotic normality.
\begin{proposition}
	\label{prop:var_estimation_null}
	Recall the definition of $\beta_n$ in \eqref{eqn:beta_n}. Suppose that $\beta_n = o(1)$ and that the condition \eqref{cond1-basic} holds. If under the null hypothesis we have
	\begin{align}
		\label{eqn:null_ell2}
		K^2 \| \mu \|^4 
		&\gg  \sum_k \frac{ \| \mu \|^2}{n_k^2 \bar{N}_k^2} 
		\vee 
		\sum_k \frac{ \| \mu \|_3^3}{ n_k \bar{N}_k} ,
	\end{align}
	then $V / \var{T} \to 1$ in probability.
\end{proposition}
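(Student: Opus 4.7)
\medskip
\noindent\textbf{Proof plan for Proposition~\ref{prop:var_estimation_null}.}
The plan is to leverage the orthogonal decomposition of $V$ from Lemma~\ref{lem:Vdecompose} and reduce the claim to showing that each of the three mean-zero fluctuation terms is negligible relative to $\mathrm{Var}(T)$. By Lemma~\ref{lem:Vdecompose},
\[
V = A_1 + A_2 + A_3 + \Theta_{n2} + \Theta_{n3} + \Theta_{n4},
\]
where $A_1, A_2, A_3$ are mean-zero and mutually uncorrelated. Under the null hypothesis $\mu_k = \mu$ for all $k$, so $\Theta_{n1}=0$ (since the defining expression in \eqref{eqn:Theta_n1} vanishes), whence $\Theta_n = \Theta_{n2} + \Theta_{n3} + \Theta_{n4}$. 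Combining this with Lemma~\ref{lem:var-null}, which uses $\beta_n=o(1)$ to give $\mathrm{Var}(T) = [1+o(1)]\Theta_n$, I obtain
\[
\frac{V}{\mathrm{Var}(T)} = 1 + \frac{A_1 + A_2 + A_3}{\Theta_n} + o(1).
\]
Hence it suffices to show $A_j/\Theta_n = o_{\mathbb{P}}(1)$ for $j=1,2,3$.

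The second step is to apply Chebyshev's inequality to each $A_j$ using the variance bounds from Lemma~\ref{lem:varV}, together with the denominator order $\Theta_n \asymp K\|\mu\|^2$ (which holds under the null by Lemma~\ref{lem:Theta_n2+n3+n4}). Specifically, under the null ($\mu_k = \mu$), Lemma~\ref{lem:varV} gives
\[
\mathrm{Var}(A_1) \lesssim \sum_k \frac{\|\mu\|_3^3}{n_k\bar{N}_k}, \qquad \mathrm{Var}(A_2) + \mathrm{Var}(A_3) \lesssim \sum_k \frac{\|\mu\|^2}{n_k^2\bar{N}_k^2}.
\]
The hypothesis \eqref{eqn:null_ell2} asserts precisely that both of these quantities are $o(K^2\|\mu\|^4) = o(\Theta_n^2)$, so $\mathrm{Var}(A_j)/\Theta_n^2 = o(1)$ for each $j$. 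Chebyshev then yields $A_j/\Theta_n = o_{\mathbb{P}}(1)$, and summing the three contributions completes the proof.

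The only mildly delicate point is matching the exact form of \eqref{eqn:null_ell2} with the variance bounds under the null. Once one substitutes $\mu_k = \mu$, this is a direct comparison, so no real obstacle arises; the argument is essentially an assembly of Lemmas~\ref{lem:Vdecompose}, \ref{lem:Theta_n2+n3+n4}, \ref{lem:var-null}, and \ref{lem:varV}. If I wanted a cleaner statement, I could equivalently note that Slutsky-type reasoning reduces the whole claim to showing $|V - \Theta_n|/\Theta_n \to 0$ in probability, which is exactly what the Chebyshev bounds on $A_1, A_2, A_3$ deliver.
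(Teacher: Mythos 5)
Your proposal is correct and follows essentially the same route as the paper: Lemma~\ref{lem:Vdecompose} gives $\E V = \Theta_n$ under the null, Lemma~\ref{lem:var-null} gives $\var(T)=[1+o(1)]\Theta_n$, Lemma~\ref{lem:Theta_n2+n3+n4} gives $\Theta_n \asymp K\|\mu\|^2$, and Lemma~\ref{lem:varV} plus Chebyshev then control $V - \E V$. The only cosmetic difference is that you apply Chebyshev separately to $A_1$, $A_2$, $A_3$ while the paper applies it once to $V - \E V$; since the $A_j$ are mutually uncorrelated, this is the same bound.
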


To later control the type II error, we must also show that $V$ does not dominate the true variance under the alternative. We first state an intermediate result that is useful throughout.

\begin{lemma}
	\label{lem:var_lbd}
	Suppose that, under either the null or alternative, $\max_i \| \Omega_i \|_\infty \leq 1 - c_0$ holds for an absolute constant $c_0>0$. Then
	\begin{align}
		\var(T) \gtrsim \Theta_{n2} + \Theta_{n3} + \Theta_{n4} .
	\end{align}
\end{lemma}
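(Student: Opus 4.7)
The plan is to combine the decomposition $T = \rho^2 + \sum_{\kappa=1}^4 \mathbf{1}_p' U_\kappa$ from Lemma~\ref{lem:decompose}, together with the uncorrelatedness of the four terms, to write $\var(T) = \sum_{\kappa=1}^4 \var(\mathbf{1}_p' U_\kappa)$, and then apply the lower-bound halves of Lemmas~\ref{lem:var2}--\ref{lem:var4}. This gives
\[
\var(T) \;\geq\; (\Theta_{n2} - A_n) + (\Theta_{n3} - B_n) + (\Theta_{n4} - E_n).
\]
It therefore suffices to show that each correction term $A_n, B_n, E_n$ is bounded by $(1-c_0)$ times the corresponding main term, using only the pointwise bound $\|\Omega_i\|_\infty \leq 1 - c_0$.

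For $A_n$ vs.\ $\Theta_{n2}$, the pointwise inequality $\|\Omega_i\|_3^3 = \sum_j \Omega_{ij}^3 \leq \|\Omega_i\|_\infty \|\Omega_i\|^2 \leq (1-c_0)\|\Omega_i\|^2$ gives immediately $A_n \leq (1-c_0)\Theta_{n2}$, since the two quantities differ only in replacing $\|\Omega_i\|^2$ by $\|\Omega_i\|_3^3$. For $E_n$ vs.\ $\Theta_{n4}$, I would first rewrite both expressions in inner-product form: using $\sum_{j,j'}\Omega_{ij}\Omega_{ij'}\Omega_{mj}\Omega_{mj'} = \langle \Omega_i, \Omega_m\rangle^2$, one obtains $E_n = 2\sum_k (\tfrac{1}{n_k\bar{N}_k}-\tfrac{1}{n\bar{N}})^2 \sum_{i\neq m\in S_k} N_i N_m \langle \Omega_i,\Omega_m\rangle^2$, while $\Theta_{n4}$ has the same form with $\langle\Omega_i,\Omega_m\rangle^2$ replaced by $\langle\Omega_i,\Omega_m\rangle$. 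Since $\langle\Omega_i,\Omega_m\rangle \leq \|\Omega_i\|_\infty \|\Omega_m\|_1 \leq 1-c_0$, we get $\langle\Omega_i,\Omega_m\rangle^2 \leq (1-c_0)\langle\Omega_i,\Omega_m\rangle$, hence $E_n \leq (1-c_0)\Theta_{n4}$.

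For $B_n$ vs.\ $\Theta_{n3}$, the key computation is to expand the Hadamard quadratic form:
\[
\mathbf{1}_p'(\Sigma_k \circ \Sigma_\ell)\mathbf{1}_p
= \frac{1}{n_k\bar{N}_k\, n_\ell\bar{N}_\ell} \sum_{i\in S_k, m\in S_\ell} N_i N_m \langle \Omega_i, \Omega_m\rangle^2.
\]
Substituting into the definition of $B_n$ yields $B_n = \frac{2}{n^2\bar{N}^2}\sum_{k\neq\ell}\sum_{i\in S_k, m\in S_\ell} N_i N_m\langle\Omega_i,\Omega_m\rangle^2$, which has exactly the same structure as $\Theta_{n3}$ but with $\langle\Omega_i,\Omega_m\rangle$ replaced by its square. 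Applying the same $\langle\Omega_i,\Omega_m\rangle \leq 1-c_0$ bound as above gives $B_n \leq (1-c_0)\Theta_{n3}$.

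Putting the three inequalities together, $\var(T) \geq c_0(\Theta_{n2} + \Theta_{n3} + \Theta_{n4}) + \var(\mathbf{1}_p'U_1) \gtrsim \Theta_{n2} + \Theta_{n3} + \Theta_{n4}$, which is the claim. The only mildly non-routine step is the Hadamard-trace expansion needed to realize that $B_n$ shares the clean inner-product structure of $\Theta_{n3}$; once this is done, all three corrections are controlled uniformly by the single scalar inequality $\|\Omega_i\|_\infty \leq 1 - c_0$, and no other regularity assumption is used.
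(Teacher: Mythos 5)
Your proof is correct and follows essentially the same route as the paper: decompose $\var(T)$ into the four uncorrelated pieces, invoke the lower-bound halves of Lemmas~\ref{lem:var2}--\ref{lem:var4}, and control each of $A_n, B_n, E_n$ by $(1-c_0)$ times the matching $\Theta_{n\kappa}$ using only $\|\Omega_i\|_\infty \leq 1 - c_0$ and $\|\Omega_i\|_1 = 1$. The paper manipulates the index sums directly (bounding one factor $\Omega_{mj'} \leq 1-c_0$ and summing $j'$) whereas you recast things via $\langle\Omega_i,\Omega_m\rangle^2 \leq (1-c_0)\langle\Omega_i,\Omega_m\rangle$; these are the same computation in different notation.
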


\begin{proposition}
	\label{prop:var_estimation_alt}
	Suppose that under the alternative \eqref{cond1-basic} holds and 
	\begin{align}
		\big( \sum_k \| \mu_k \|^2  \big)^2
		&\gg  \sum_k \frac{ \| \mu_k \|^2}{n_k^2 \bar{N}_k^2} 
		\vee \sum_k \frac{ \| \mu_k \|_3^3}{ n_k \bar{N}_k} . 
	\end{align}
	Then $V = O_{\pr}( \var(T))$ under the alternative.
\end{proposition}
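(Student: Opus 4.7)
\textbf{Proof proposal for Proposition \ref{prop:var_estimation_alt}.} The plan is to bound $V$ directly through its decomposition in Lemma~\ref{lem:Vdecompose}, controlling the deterministic part via Lemma~\ref{lem:Theta_n2+n3+n4} and the stochastic fluctuation via Chebyshev's inequality using the variance bounds of Lemma~\ref{lem:varV}, then compare against the lower bound on $\var(T)$ from Lemma~\ref{lem:var_lbd}.

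First, I would invoke the decomposition \eqref{eqn:V_decomp}:
\[
V = A_1 + A_2 + A_3 + \Theta_{n2} + \Theta_{n3} + \Theta_{n4},
\]
where $A_1, A_2, A_3$ are mean-zero and mutually uncorrelated. Since $\mathbb{E}[V] = \Theta_{n2} + \Theta_{n3} + \Theta_{n4}$, Lemma~\ref{lem:Theta_n2+n3+n4} (applicable since \eqref{cond1-basic} is assumed) yields $\mathbb{E}[V] \asymp \sum_k \|\mu_k\|^2$. Combining this with Lemma~\ref{lem:var_lbd} (which gives $\var(T) \gtrsim \Theta_{n2}+\Theta_{n3}+\Theta_{n4} \asymp \sum_k \|\mu_k\|^2$ under the alternative), I obtain $\mathbb{E}[V] \lesssim \var(T)$.

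Next, I would control the fluctuation. By uncorrelatedness,
\[
\var(V) = \var(A_1) + \var(A_2) + \var(A_3),
\]
so Lemma~\ref{lem:varV} gives
\[
\var(V) \lesssim \sum_k \frac{\|\mu_k\|_3^3}{n_k \bar{N}_k} + \sum_k \frac{\|\mu_k\|^2}{n_k^2 \bar{N}_k^2}.
\]
The hypothesis of the proposition says exactly that this quantity is $o\bigl((\sum_k \|\mu_k\|^2)^2\bigr)$, and therefore $\sqrt{\var(V)} = o\bigl(\sum_k\|\mu_k\|^2\bigr) = o(\var(T))$ using the lower bound on $\var(T)$ again.

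Finally, Chebyshev's inequality gives $|V - \mathbb{E}[V]| = O_{\pr}(\sqrt{\var(V)}) = o_{\pr}(\var(T))$, so
\[
V \leq \mathbb{E}[V] + |V - \mathbb{E}[V]| = O(\var(T)) + o_{\pr}(\var(T)) = O_{\pr}(\var(T)),
\]
which is the desired conclusion. The argument is mostly bookkeeping: all heavy lifting sits in Lemmas~\ref{lem:Vdecompose}, \ref{lem:varV}, \ref{lem:Theta_n2+n3+n4}, and \ref{lem:var_lbd}. The only mild subtlety is verifying that the hypothesized condition matches cleanly against the two $\var(A_\kappa)$ bounds (the $A_1$ bound is controlled by $\sum_k \|\mu_k\|_3^3/(n_k\bar{N}_k)$ after absorbing the $\|\mu\|_3^3/(n\bar N)$ term via a Jensen/convexity argument, while $\var(A_2)$ and $\var(A_3)$ together give the $\sum_k \|\mu_k\|^2/(n_k^2\bar{N}_k^2)$ contribution), but this is already built into the statement of Lemma~\ref{lem:varV}, so no further work is needed.
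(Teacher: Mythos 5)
Your proposal is correct and follows essentially the same route as the paper's proof: both invoke Lemma~\ref{lem:Vdecompose} for the decomposition $V = A_1 + A_2 + A_3 + \Theta_{n2}+\Theta_{n3}+\Theta_{n4}$, bound $\mathbb{E}[V]$ via Lemma~\ref{lem:Theta_n2+n3+n4} and $\var(T)$ from below via Lemma~\ref{lem:var_lbd}, bound $\var(V)$ via Lemma~\ref{lem:varV}, and finish with Chebyshev. (Incidentally, the paper's displayed inequality \eqref{eqn:var_alt_Chebyshev} has a $\gtrsim$ that should clearly read $\lesssim$; your write-up states the direction correctly.)
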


We also require  versions of Proposition \ref{prop:var_estimation_null} and Proposition \ref{prop:var_estimation_alt} that hold under weaker conditions in the special case $K = 2$. We omit the proofs as they are  similar. Below we use the notation of Section \ref{subsec:K=2}. 

\begin{proposition}
	\label{prop:var_estimation_null_K=2}
	Suppose that $K = 2$ and recall the definition of $\beta_n\rp{2}$ in \ref{eqn:beta_n2}. Suppose that $\beta_n\rp{2} = o(1)$, $\min_i N_i \geq 2, \min_i M_i \geq 2$, and $\max_i \| \Omega_i \|_\infty \leq 1 - c_0, \max_i \| \Gamma_i \|_\infty \leq 1 - c_0$. If under the null hypothesis 
	\begin{align}
		\| \mu \|^4
		\gg	\max\Big\{ 
		\,\big(\frac{ \| \mu \|_2^2 }{ n^2 \bar{N}^2}+ \frac{ \| \mu \|_2^2 }{ m^2 \bar{M}_2^2}\big), \, 	\big(  \frac{\| \mu \|_3^3}{ n \bar{N}} + \frac{\| \mu  \|_3^3}{ m \bar{M}}   \big) \Big\} ,
	\end{align}
	then $V/\var(T) \to 1$ in probability.
\end{proposition}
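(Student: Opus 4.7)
The plan is to mimic the approach of Proposition \ref{prop:var_estimation_null} but apply the sharper $K=2$ asymptotic identity for $\var(T)$ from Lemma \ref{lem:var-null-K=2} together with the explicit $K=2$ order estimate from Lemma \ref{lem:Theta_n2+n3+n4-K=2}, so that the weaker assumption on $\|\mu\|^4$ in the proposition still suffices to conclude $V/\var(T)\to 1$ in probability.

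\smallskip
\noindent\textbf{Step 1 (Benchmark for $\var(T)$).} Under $H_0$, $\eta=\theta=\mu$. Hypothesis $\beta_n^{(2)}=o(1)$ together with Lemma \ref{lem:var-null-K=2} yields $\var(T)=[1+o(1)]\,\Theta_n$, and Lemma \ref{lem:Theta_n2+n3+n4-K=2} specialized to $H_0$ gives $\Theta_n\asymp \|\mu\|^2$. In particular $\var(T)\asymp\|\mu\|^2$, which will serve as the denominator we need to control.

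\smallskip
\noindent\textbf{Step 2 (Decomposition of $V-\Theta_n$).} By Lemma \ref{lem:Vdecompose}, $V=A_1+A_2+A_3+\Theta_{n2}+\Theta_{n3}+\Theta_{n4}$. Under the null, $\mu_k=\mu$ for both $k=1,2$, so $\Theta_{n1}=0$ by \eqref{eqn:Theta_n1}; hence $\Theta_n=\Theta_{n2}+\Theta_{n3}+\Theta_{n4}$ and
\[
V-\Theta_n \;=\; A_1+A_2+A_3,
\]
where by Lemma \ref{lem:Vdecompose} the three terms are mean zero and mutually uncorrelated. Thus $\var(V)=\var(A_1)+\var(A_2)+\var(A_3)$.

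\smallskip
\noindent\textbf{Step 3 (Specializing the variance bounds of Lemma \ref{lem:varV} to $K=2$ under $H_0$).} With $K=2$, $(n_1\bar N_1,n_2\bar N_2)=(n\bar N,m\bar M)$, and $\mu_1=\mu_2=\mu$ under the null, the bounds in Lemma \ref{lem:varV} give
\[
\var(A_1)\;\lesssim\; \frac{\|\mu\|_3^3}{n\bar N}+\frac{\|\mu\|_3^3}{m\bar M}, \qquad
\var(A_2)+\var(A_3)\;\lesssim\; \frac{\|\mu\|^2}{n^2\bar N^2}+\frac{\|\mu\|^2}{m^2\bar M^2}.
\]
The hypothesis of the proposition is exactly that $\|\mu\|^4$ dominates the maximum of these two expressions, so $\var(A_1)+\var(A_2)+\var(A_3)=o(\|\mu\|^4)$.

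\smallskip
\noindent\textbf{Step 4 (Chebyshev and conclusion).} By Chebyshev's inequality, $V-\Theta_n=o_{\pr}(\|\mu\|^2)$. Combining with Step 1,
\[
\frac{V}{\var(T)}\;=\;\frac{\Theta_n+(V-\Theta_n)}{[1+o(1)]\,\Theta_n}\;=\;1+\frac{o_{\pr}(\|\mu\|^2)}{\Theta_n}\;\xrightarrow{\;p\;}\;1,
\]
since $\Theta_n\asymp\|\mu\|^2$. This is the desired conclusion.

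\smallskip
\noindent\textbf{Anticipated obstacle.} The proof itself is short once the right lemmas are in hand; the substantive work is already carried by Lemma \ref{lem:Vdecompose}, Lemma \ref{lem:varV}, Lemma \ref{lem:var-null-K=2}, and Lemma \ref{lem:Theta_n2+n3+n4-K=2}. The only delicate point is to verify that the bounds of Lemma \ref{lem:varV} remain valid when the regularity hypothesis is weakened to the $K=2$ version, i.e.\ that the derivation of those bounds does not silently use the balance condition $\max_k n_k\bar N_k/(n\bar N)\le 1-c_0$ from \eqref{cond1-basic}. Inspecting the proof of Lemma \ref{lem:varV}, the bounds rest only on $\min N_i\ge 2$, $\min M_i\ge 2$, and $\max\{\|\Omega_i\|_\infty,\|\Gamma_i\|_\infty\}\le 1-c_0$, all of which are assumed here. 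Hence the estimates of Step 3 carry over without change.
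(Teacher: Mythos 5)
Your proof is correct and is precisely the argument the paper had in mind: the paper explicitly omits the proof of Proposition \ref{prop:var_estimation_null_K=2}, noting it is "similar" to Proposition \ref{prop:var_estimation_null}, and your write-up fills that gap by swapping in Lemma \ref{lem:var-null-K=2} and Lemma \ref{lem:Theta_n2+n3+n4-K=2} for their general-$K$ counterparts. Your "anticipated obstacle" paragraph correctly identifies and resolves the one nontrivial point — that the bounds of Lemma \ref{lem:varV} use only $(\frac{1}{n_k\bar N_k}-\frac{1}{n\bar N})^2\le\frac{1}{n_k^2\bar N_k^2}$ and $N_i\ge2$, never the balance condition of \eqref{cond1-basic} — so the estimates carry over unchanged to the unbalanced $K=2$ setting.
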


%
Under the alternative we have the following. 

\begin{proposition}
	\label{prop:var_estimation_alt_K=2}
	Suppose that $K = 2$, $\min_i N_i \geq 2, \min_i M_i \geq 2$, and $\max_i \| \Omega_i \|_\infty \leq 1 - c_0, \max_i \| \Gamma_i \|_\infty \leq 1 - c_0$. If under the alternative
	\begin{align}
		\bigg \|  \frac{m \bar{M}}{ n\bar{N}+ m \bar{M}} \eta +  
		\frac{n \bar{N}}{ n\bar{N}+ m \bar{M}} \theta  \bigg \|^4
		\gg	\max\Big\{ 
		\,\big(\frac{ \| \eta \|_2^2 }{ n^2 \bar{N}^2}+ \frac{ \| \theta \|_2^2 }{ m^2 \bar{M}_2^2}\big), \, 	\big(  \frac{\| \eta \|_3^3}{ n \bar{N}} + \frac{\| \theta \|_3^3}{ m \bar{M}}   \big) \Big\} ,
	\end{align}
	then $V = O_{\mathbb{P}}(\var(T))$. 
\end{proposition}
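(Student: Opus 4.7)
The plan is to mirror the proof of Proposition \ref{prop:var_estimation_alt} while substituting its auxiliary lemmas with the $K=2$ analogues (specifically Lemma \ref{lem:Theta_n2+n3+n4-K=2} in place of Lemma \ref{lem:Theta_n2+n3+n4}), so that I can drop the balance requirement in \eqref{cond1-basic}. Set $M := \frac{m\bar M}{n\bar N + m\bar M}\eta + \frac{n\bar N}{n\bar N + m\bar M}\theta$. From Lemma \ref{lem:Vdecompose}, I have the mean-decomposition $V = A_1 + A_2 + A_3 + (\Theta_{n2}+\Theta_{n3}+\Theta_{n4})$ with $A_1,A_2,A_3$ mean-zero and mutually uncorrelated. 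The conclusion will follow if I show (i) $|A_i| = o_{\mathbb{P}}(\Theta_{n2}+\Theta_{n3}+\Theta_{n4})$ for $i=1,2,3$, and (ii) $\Theta_{n2}+\Theta_{n3}+\Theta_{n4} \lesssim \var(T)$.

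For (ii), I invoke Lemma \ref{lem:var_lbd}, which requires only $\max_i\|\Omega_i\|_\infty, \max_i\|\Gamma_i\|_\infty \leq 1-c_0$ and is hence available in the unbalanced $K=2$ regime. For (i), I specialise Lemma \ref{lem:varV} via the identifications $\mu_1\leftrightarrow\eta,\ n_1\bar N_1\leftrightarrow n\bar N,\ \mu_2\leftrightarrow\theta,\ n_2\bar N_2\leftrightarrow m\bar M$ to get
\[
\var(A_1) \lesssim \frac{\|\eta\|_3^3}{n\bar N}+\frac{\|\theta\|_3^3}{m\bar M}, \qquad \var(A_2)+\var(A_3) \lesssim \frac{\|\eta\|^2}{n^2\bar N^2}+\frac{\|\theta\|^2}{m^2\bar M^2}.
\]
By Lemma \ref{lem:Theta_n2+n3+n4-K=2}, $\Theta_{n2}+\Theta_{n3}+\Theta_{n4} \asymp \|M\|^2$. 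The hypothesis of the proposition states exactly that $\|M\|^4$ dominates each of the two variance bounds displayed above, so $\sqrt{\var(A_i)} = o(\|M\|^2) \asymp o(\Theta_{n2}+\Theta_{n3}+\Theta_{n4})$ for $i=1,2,3$. Chebyshev's inequality then delivers (i), and combining (i)--(ii) gives $V = (1+o_{\mathbb{P}}(1))(\Theta_{n2}+\Theta_{n3}+\Theta_{n4}) = O_{\mathbb{P}}(\var(T))$.

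The main obstacle is justifying the specialisation of Lemma \ref{lem:varV} to the severely unbalanced $K=2$ setting, since that lemma is stated under the general $K$ notation and is commonly invoked together with \eqref{cond1-basic}. A careful reading of its proof should confirm that the bounds on $\var(A_1), \var(A_2), \var(A_3)$ are obtained from direct second-moment calculations on the orthogonal multinomial increments $\{Z_{ir}\}$ supplied by Lemma \ref{lem:Vdecompose}, using only independence and the elementary multinomial moment formulas, and in particular not relying on any sample-size balance. Granted this, the remaining steps are a routine Chebyshev argument combined with the $K=2$ structural identity of Lemma \ref{lem:Theta_n2+n3+n4-K=2} and the variance lower bound of Lemma \ref{lem:var_lbd}.
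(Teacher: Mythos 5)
Your proof is correct and is exactly the argument the paper has in mind when it says the proof is "omitted as similar" to that of Proposition~\ref{prop:var_estimation_alt}: you correctly substitute Lemma~\ref{lem:Theta_n2+n3+n4-K=2} for Lemma~\ref{lem:Theta_n2+n3+n4}, invoke Lemma~\ref{lem:var_lbd} (whose hypotheses are indeed just the $\ell_\infty$ bounds), and specialize Lemma~\ref{lem:varV} to $K=2$. Your worry about Lemma~\ref{lem:varV} requiring \eqref{cond1-basic} does check out: its proof uses only $N_i \ge 2$ and the always-valid inequality $(\tfrac{1}{n_k\bar N_k}-\tfrac{1}{n\bar N})^2 \le \tfrac{1}{n_k^2\bar N_k^2}$, so no balance is needed, and the rest of the Chebyshev argument is routine.
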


In the setting of $K = n$ and utilize the variance estimator $V^*$. The next results capture the behavior of $V^*$ under the null and alternative. The proofs are given later in this section.

\begin{proposition}
	\label{prop:var_estimation_null_K=n}
	Define
	\begin{align}
		\label{eqn:beta_nn_def}
		\beta_n\rp{n} = \frac{ \sum_i \|\Omega_i \|^3 }{ n \| \mu \|^2 }. 
	\end{align}
	Suppose that \eqref{cond1-basic} holds, $\beta_n\rp{n} = o(1)$, and 
	\begin{align}
		n^2 \| \mu \|^4 \gg \sum_i \frac{\| \mu \|^2}{N_i^2 }
		\vee \sum_i \frac{\| \mu \|_3^3}{N_i }.
	\end{align}
	Then $V^*/\var(T) \to 1$ in probability as $n \to \infty$. 
\end{proposition}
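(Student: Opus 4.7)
\noindent\textbf{Proof plan for Proposition \ref{prop:var_estimation_null_K=n}.}
My plan is to split the statement $V^*/\var(T)\to 1$ in probability into three pieces under the null: (a) $\var(T) = (1+o(1))\Theta_{n2}$; (b) $\mathbb{E}[V^*] = \Theta_{n2}$; and (c) $\var(V^*) = o(\Theta_{n2}^2)$. Pieces (b) and (c) combined with Chebyshev deliver $V^*/\Theta_{n2}\to 1$ in probability, which together with (a) yields the conclusion. Throughout, I use the orthogonal $Z$-representations $X_i = N_i\Omega_i + \sum_r Z_{ir}$ that power Lemma \ref{lem:decompose} and Lemma \ref{lem:Vdecompose}.

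For (a), I specialize Lemma \ref{lem:decompose} to $K=n$ under the null. Since $\mu_k=\mu$ for every $k$, the linear term satisfies $U_1\equiv 0$; since $|S_k|=1$, there are no within-group pairs and $U_4\equiv 0$. Hence $\var(T) = \var(\mathbf{1}_p'U_2) + \var(\mathbf{1}_p'U_3)$. Under the null, the assumption $\beta_n^{(n)}=o(1)$ specializes to $\|\mu\|_3^3/\|\mu\|^2 = o(1)$, which is exactly the hypothesis of Lemma \ref{lem:var2}; this gives $\var(\mathbf{1}_p'U_2) = (1+o(1))\Theta_{n2}$. For $U_3$, Lemma \ref{lem:var3} supplies the bound $\Theta_{n3} + B_n$; using $N_i\le (1-c_0)n\bar{N}$ from Condition \ref{cond:K=n}(a), a direct calculation gives $\Theta_{n2}\asymp n\|\mu\|^2$ while $\Theta_{n3}\asymp\|\mu\|^2$ and $B_n\lesssim \|\mu\|_4^4 \le \|\mu\|_\infty\|\mu\|^2 \le \|\mu\|^2$, so this contribution is $O(\Theta_{n2}/n) = o(\Theta_{n2})$ as $n\to\infty$.

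For (b) and (c), I mimic the $Z$-expansion of Lemma \ref{lem:Vdecompose}, restricted to the single sum comprising $V^*$ (the cross-index sums of the general $V$ are absent from $V^*$). This produces the decomposition
\[
V^* = \Theta_{n2} + A_1^* + A_2^*,
\]
where $A_1^*=\sum_i\sum_{r=1}^{N_i}\sum_j (4\theta_i\Omega_{ij}/N_i) Z_{ijr}$ is a linear form and $A_2^*=\sum_i\sum_{r\ne s}(2\theta_i/(N_i(N_i-1)))\sum_j Z_{ijr}Z_{ijs}$ is a degenerate quadratic form, both mean-zero and mutually uncorrelated; the $A_3^*$ counterpart is not present because $V^*$ drops the cross-$i$ terms of $V$. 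The same second-moment computations that prove the variance bounds of Lemma \ref{lem:varV} then yield, after specializing $n_k\bar{N}_k=N_i$, $\mu_k=\mu$, and noting $\theta_i\asymp 1$ under Condition \ref{cond:K=n}(a), the bounds $\var(A_1^*)\lesssim \sum_i\|\mu\|_3^3/N_i$ and $\var(A_2^*)\lesssim \sum_i\|\mu\|^2/N_i^2$. Since $\Theta_{n2}^2\asymp n^2\|\mu\|^4$, the hypothesis $n^2\|\mu\|^4 \gg \sum_i\|\mu\|^2/N_i^2 \vee \sum_i\|\mu\|_3^3/N_i$ translates exactly into $\var(V^*)=o(\Theta_{n2}^2)$, giving (c); piece (b) is the mean of the decomposition, computed from $\mathbb{E}[X_{ij}(X_{ij}-1)/(N_i(N_i-1))]=\Omega_{ij}^2$ and the coefficient $N_i^3/(N_i-1)$ carried by the correct form of the cell-level statistic inside $V^*$.

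The main technical obstacle I anticipate is bookkeeping around the cell-level form of $V^*$: to match $\Theta_{n2}$, the summand must be written in the form $(N_i^2/(N_i-1)^2)X_{ij}(X_{ij}-1)$, equivalently $(N_i^3/(N_i-1))\cdot(X_{ij}^2-X_{ij})/(N_i(N_i-1))$, so that its expectation equals $(N_i^3/(N_i-1))\Omega_{ij}^2$ and aligns with the coefficient appearing in $\Theta_{n2}$. Once this alignment is established and the two-part $Z$-expansion is in hand, all remaining variance bounds are corollaries of the arguments already executed for Lemma \ref{lem:varV}, and combining pieces (a)--(c) via Slutsky gives $V^*/\var(T)\to 1$ in probability.
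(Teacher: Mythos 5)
Your proposal is correct and follows essentially the same route as the paper: decompose $V^* = \Theta_{n2} + A_{11} + A_2$ (your $A_1^*, A_2^*$) via the argument of Lemma \ref{lem:Vdecompose}, bound their variances as in Lemma \ref{lem:varV}, note $\Theta_{n1}\equiv\Theta_{n4}\equiv 0$ with $\Theta_{n3}\lesssim\|\mu\|^2=o(\Theta_{n2})$ so $\var(T)=(1+o(1))\Theta_{n2}\asymp n\|\mu\|^2$, and finish by Chebyshev. One small slip: in bounding the $U_3$ remainder you write $B_n\lesssim\|\mu\|_4^4$, but under the null with $K=n$ the correct order is $B_n\lesssim\|\mu\|^4$ (each $\Sigma_k=\mu\mu'$, so $\|\Sigma_k\|_F^2=\|\mu\|^4$); either way $B_n\leq\|\mu\|^2=o(\Theta_{n2})$, so the conclusion is unaffected.
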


\begin{proposition}
	\label{prop:var_estimation_alt_K=n}
	Suppose that under the alternative \eqref{cond1-basic} holds and 
	\begin{align}
		\big( \sum_i \| \Omega_i \|^2 \big)^2 \gg 
		\sum_i \frac{\| \Omega_i \|^2}{N_i^2 }
		\vee \sum_i \frac{\| \Omega_i  \|_3^3}{N_i }.
	\end{align}
	Then $V^* = O_{\mathbb{P}}(\var(T))$ under the alternative. 
\end{proposition}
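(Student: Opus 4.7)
The plan is to mirror the proof of Proposition \ref{prop:var_estimation_alt}, exploiting the simplified structure of $V^*$ available when $K=n$. Since each group is a singleton, we have $n_k\bar{N}_k = N_i$ and $\Theta_{n4} = 0$, and $V^*$ retains only the first term of the $V$ expression in Lemma \ref{lem:Vdecompose}. Substituting the Hoeffding representation $X_{ij} = N_i\Omega_{ij} + \sum_{r=1}^{N_i} Z_{ijr}$ into that single term and regrouping by degree in $Z$ gives
\[
V^* \;=\; \Theta_{n2} + A_1^{\dagger} + A_2,
\]
where $A_2$ is exactly \eqref{eqn:A2_def} specialized to $K=n$, and $A_1^{\dagger}$ is the $\theta_i$-part of \eqref{eqn:A1_def}. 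The $\alpha_{im}$-part of $A_1$ and the entirety of $A_3$ drop out because the corresponding pieces of $V$ have been removed in $V^*$. These two residual summands are mean zero and mutually uncorrelated.

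I would next identify the size of the target. Under \eqref{cond1-basic}, $\bigl(\tfrac{1}{N_i}-\tfrac{1}{n\bar N}\bigr)^2 \asymp N_i^{-2}$, so
\[
\E[V^*] \;=\; \Theta_{n2} \;\asymp\; \sum_{i=1}^n \|\Omega_i\|^2,
\]
and Lemma \ref{lem:var_lbd} yields $\var(T) \gtrsim \Theta_{n2}$. It therefore suffices to prove $V^* = (1+o_{\mathbb{P}}(1))\Theta_{n2}$, from which $V^* = O_{\mathbb{P}}(\var(T))$ immediately follows.

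This last claim reduces to a Chebyshev bound. The variance computations in Lemma \ref{lem:varV}, specialized to $K = n$ (so that $\|\mu_k\| = \|\Omega_i\|$) and to the sub-sum $A_1^\dagger \subset A_1$, give
\[
\var(A_1^{\dagger}) \;\lesssim\; \sum_{i=1}^n \frac{\|\Omega_i\|_3^3}{N_i}, \qquad \var(A_2) \;\lesssim\; \sum_{i=1}^n \frac{\|\Omega_i\|^2}{N_i^2}.
\]
The hypothesis of the proposition is precisely that both right-hand sides are $o\bigl((\sum_i\|\Omega_i\|^2)^2\bigr) = o(\Theta_{n2}^2)$, so $\var(V^*) = o(\Theta_{n2}^2)$, and Chebyshev's inequality gives $V^*/\Theta_{n2} \to 1$ in probability.

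The main obstacle is the bookkeeping around the restricted decomposition. One must verify that when only the first term of $V$ is retained, the cross-sample quadratic $A_3$ contributes nothing (since it arose purely from the cross-sample $X_{ij}X_{mj}$ pieces that have been dropped), and that $A_1^\dagger$ obeys the same $\ell^3$-type variance bound as the full $A_1$ (true because $A_1^\dagger$ is a linear sum over the same independent $Z_{ijr}$ as $A_1$, with a strictly smaller coefficient vector). Beyond these verifications, no new analytic ideas are required; the argument is essentially a specialization of Proposition \ref{prop:var_estimation_alt}.
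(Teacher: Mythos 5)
Your proof is correct and follows essentially the same route as the paper: decompose $V^* = \Theta_{n2} + A_{11} + A_2$ via Lemma \ref{lem:Vdecompose}, bound $\var(A_{11}) \lesssim \sum_i \|\Omega_i\|_3^3/N_i$ and $\var(A_2) \lesssim \sum_i \|\Omega_i\|^2/N_i^2$ using Lemma \ref{lem:varV}, observe that the hypothesis makes these $o(\Theta_{n2}^2)$, apply Chebyshev, and combine with $\var(T) \gtrsim \Theta_{n2}$ from Lemma \ref{lem:var_lbd} and Lemma \ref{lem:Theta_n2+n3+n4}. One small imprecision worth flagging: your closing remark that ``$A_1^\dagger$ obeys the same $\ell^3$-type variance bound as the full $A_1$ ... because [it] has a strictly smaller coefficient vector'' is not literally valid for the raw variance, since $\mathrm{Cov}(Z_{i:r})$ has negative off-diagonals and a quadratic form in it is not monotone in the coefficients. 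The argument does go through, but the clean justification is that the paper bounds $\var(A_1)$ via $\mathrm{Cov}(Z_{i:r}) \preceq \diag(\Omega_i)$, after which monotonicity in the (nonnegative) coefficients is legitimate; equivalently one can just re-run the bound of Lemma \ref{lem:varV} with the $A_{11}$-coefficient $\gamma_{irj} \lesssim \Omega_{ij}/N_i$ alone, which directly yields $\sum_i \|\Omega_i\|_3^3/N_i$.
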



\subsection{Proof of Lemma~\ref{lem:decompose}}  \label{subsec:proof-LemmaA1}
We first show that $\mathbb{E}[U_\kappa]={\bf 0}_p$ and $\mathbb{E}[U_{\kappa}U_{\zeta}']={\bf 0}_{p\times p}$ for $\kappa\neq \zeta$. Note that $\{Z_{ir}\}_{1\leq i\leq n,1\leq r\leq N_i}$ are independent mean-zero random vectors. It follows that each $U_{\kappa}$ is a mean-zero random vector. We then compute $\mathbb{E}[U_{\kappa j_1}U_{\zeta j_2}]$ for $\kappa\neq \zeta$ and all $1\leq j_1,j_2\leq p$. By direct calculations,
\[
\mathbb{E}[U_{1j}U_{2j_2}] = 2\sum_{(k,i,r,s)} \sum_{(k',i',r')} \Bigl(\frac{1}{n_k\bar{N}_k} -\frac{1}{n\bar{N}} \Bigr)(\mu_{k'j}-\mu_j) \frac{N_i}{N_i-1} \mathbb{E}[Z_{ij_2r}Z_{ij_2s}Z_{i'j_1r'}]. 
\]
If $i'\neq i$, or if $i'=i$ and $r'\notin\{r,s\}$, then $Z_{i'j_1r'}$ is independent of $Z_{ij_2r}Z_{ij_2s}$, and it follows that $\mathbb{E}[Z_{ij_2r}Z_{ij_2s}Z_{i'j_1r'}]=0$. 
If $i'=i$ and $r=r'$, then $\mathbb{E}[Z_{ij_2r}Z_{ij_2s}Z_{i'j_1r'}]=\mathbb{E}[Z_{ij_2r}Z_{ij_1r}]\cdot \mathbb{E}[Z_{ij_2s}]$; since $r\neq s$, we also have $\mathbb{E}[Z_{ij_2r}Z_{ij_2s}Z_{i'j_1r'}]=0$. This proves $\mathbb{E}[U_{1j}U_{2j_*}]=0$. Since this holds for all $1\leq j_1,j_2\leq p$, we immediately have
\[
\mathbb{E}[U_1U_2']={\bf 0}_{p\times p}. 
\]
We can similarly show that $\mathbb{E}[U_\kappa U_\zeta']={\bf 0}_{p\times p}$, for other $\kappa\neq \zeta$. The proof is omitted. 

It remains to prove the desirable decomposition of $T$. Recall that $T=\sum_{j=1}^p T_j$. Write $\rho^2=\sum_{j=1}^p \rho_j^2$, where $\rho_j^2=2\sum_{k=1}^Kn_k\bar{N}_k(\mu_{kj}-\mu_j)^2$. It suffices to show that 
\beq \label{lem-decompose-goal}
T_j = \rho_j^2+U_{1j}+U_{2j}+U_{3j}+U_{4j}, \qquad \mbox{for all } 1\leq j\leq p. 
\eeq

To prove \eqref{lem-decompose-goal}, we need some preparation. Define
\beq \label{lem-decompose-0}
Y_{ij} :=\frac{X_{ij}}{N_i} - \Omega_{ij}= \frac{1}{N_i}\sum_{r=1}^{N_i}Z_{ijr}, \qquad  Q_{ij}: = Y_{ij}^2-\mathbb{E}Y^2_{ij}=Y_{ij}^2-\frac{\Omega_{ij}(1-\Omega_{ij})}{N_i}. 
\eeq
With these notations, $X_{ij}=N_i(\Omega_{ij}+Y_{ij})$ and $N_iY_{ij}^2=N_iQ_{ij}+\Omega_{ij}(1-\Omega_{ij})$. 
Moreover, we can use  \eqref{lem-decompose-0} to re-write $Q_{ij}$ as a function of $\{Z_{ijr}\}_{1\leq r\leq N_i}$ as follows: 
\[
Q_{ij}  = \frac{1}{N^2_i}\sum_{r=1}^{N_i}[Z_{ijr}^2-\Omega_{ij}(1-\Omega_{ij})]+\frac{1}{N^2_i}\sum_{1\leq r\neq s\leq N_i}Z_{ijr}Z_{ijs}.
\]
Note that $Z_{ijr}=B_{ijr}-\Omega_{ij}$, where $B_{ijr}$ can only take values in $\{0,1\}$. Hence, $(Z_{ijr}+\Omega_{ij})^2 = (Z_{ijr}+\Omega_{ij})$ always holds. Re-arranging the terms gives $Z^2_{ijr}-\Omega_{ij}(1-\Omega_{ij})=(1-2\Omega_{ij})Z_{ijr}$. It follows that
\beq \label{lem-decompose-00}
Q_{ij} = (1-2\Omega_{ij}) \frac{Y_{ij}}{N_i} + \frac{1}{N^2_i}\sum_{1\leq r\neq s\leq N_i}Z_{ijr}Z_{ijs}. 
\eeq
This is a useful equality which we will use in the proof below.

We now show \eqref{lem-decompose-goal}. 
Fix $j$ and write $T_j=R_j -D_j$, where
\[
R_j = \sum_{k=1}^K n_k\bar{N}_k(\hat{\mu}_{kj}-\hat{\mu}_{j})^2, \quad\mbox{and}\quad  D_j= \sum_{k=1}^K\sum_{i\in S_k} \xi_k\frac{X_{ij}(N_i-X_{ij})}{n_k\bar{N}_k(N_i-1)}, \quad \mbox{with }\xi_k = 1-\frac{n_k\bar{N}_k}{n\bar{N}}
\]
First, we study $D_j$. Note that $X_{ij}(N_{ij}-X_{ij})=N^2_i(\Omega_{ij}+Y_{ij})(1-\Omega_{ij}-Y_{ij})=N^2_i\Omega_{ij}(1-\Omega_{ij})-N^2_iY_{ij}^2+N_i^2(1-2\Omega_{ij})Y_{ij}$, where $Y_{ij}^2=Q_{ij}+N_i^{-1}\Omega_{ij}(1-\Omega_{ij})$. It follows that 
\[
\frac{X_{ij}(N_{ij}-X_{ij})}{N_i(N_i-1)} =\Omega_{ij}(1-\Omega_{ij})-\frac{N_iQ_{ij}}{N_i-1}+\frac{N_i}{N_i-1} (1-2\Omega_{ij})Y_{ij}.
\]
We apply \eqref{lem-decompose-00} to get 
\beq \label{lem-decompose-1}
\frac{X_{ij}(N_{ij}-X_{ij})}{N_i(N_i-1)} =\Omega_{ij}(1-\Omega_{ij}) +(1-2\Omega_{ij})Y_{ij} - \frac{1}{N_i(N_i-1)}\sum_{1\leq r\neq s\leq N_i}Z_{ijr}Z_{ijs}.
\eeq 
It follows that 
\begin{align} \label{lem-decompose-2}
	D_j &=\sum_{k=1}^K \sum_{i\in S_k} \frac{\xi_kN_i}{n_k\bar{N}_k} \Omega_{ij}(1-\Omega_{ij}) +  \sum_{k=1}^K \sum_{i\in S_k} \frac{\xi_kN_i}{n_k\bar{N}_k} (1-2\Omega_{ij})Y_{ij}\cr
	&\qquad - \sum_{k=1}^K \sum_{i\in S_k} \frac{\xi_k}{n_k\bar{N}_k(N_i-1)}\sum_{1\leq r\neq s\leq N_i}Z_{ijr}Z_{ijs}. 
\end{align}

Next, we study $R_j$. Note that $n_k\bar{N}_k(\hat{\mu}_{kj}-\hat{\mu}_j)=\sum_{i\in S_k}(X_{ij}-\bar{N}_k\hat{\mu}_j)$. It follows that
\[
R_j = \sum_{k=1}^K\frac{1}{n_k\bar{N}_k} \biggl[ \sum_{i\in S_k}(X_{ij}-\bar{N}_k\hat{\mu}_j)\biggr]^2. 
\]
Recall that $X_{ij}=N_i(\Omega_{ij}+Y_{ij})$. By direct calculations,  $\sum_{i\in S_k}X_{ij}=n_k\bar{N}_k\mu_{kj}+\sum_{i\in S_k}N_iY_{ij}$, and $\hat{\mu}_j=\mu_j +(n\bar{N})^{-1}\sum_{m=1}^nN_mY_{mj}$. We then have the following decomposition: 
\begin{align*}
	\sum_{i\in S_k}(X_{ij}-\bar{N}_k\hat{\mu}_j) &= n_k\bar{N}_k(\mu_{kj}-\mu_j) + \sum_{i\in S_k}N_iY_{ij} -\frac{n_k\bar{N}_k}{n\bar{N}}\Bigl(\sum_{m=1}^{n} N_mY_{mj}\Bigr). 
\end{align*}
Using this decomposition, we can expand $[\sum_{i\in S_k}(X_{ij}-\bar{N}_k\hat{\mu}_j)]^2$ to a total of 6 terms, where 3 are quadratic terms and 3 are cross terms. It yields a decomposition of $R_j$ into 6 terms: 
\begin{align} \label{lem-decompose-3}
	R_j & = \sum_{k=1}^Kn_k\bar{N}_k(\mu_{kj}-\mu_j)^2 +    \sum_{k=1}^K  \frac{1}{n_k\bar{N}_k} \Bigl(\sum_{i\in S_k} N_iY_{ij}\Bigr)^2 + \sum_{k=1}^K \frac{n_k\bar{N}_k}{n^2\bar{N}^2}\Bigl(\sum_{m=1}^nN_mY_{mj}\Bigr)^2\cr
	&\qquad +2\sum_{k=1}^K (\mu_{kj}-\mu_j)\Bigl(\sum_{i\in S_k}N_iY_{ij}\Bigr) -2\sum_{k=1}^K\frac{n_k\bar{N}_k}{n\bar{N}} (\mu_{kj}-\mu_j)\Bigl(\sum_{m=1}^nN_mY_{mj}\Bigr)\cr
	&\qquad - \frac{2}{n\bar{N}}\sum_{k=1}^K \Bigl(\sum_{i\in S_k} N_i Y_{ij}\Bigr)\Bigl(\sum_{m=1}^nN_mY_{mj} \Bigr)\cr
	&\equiv I_1+I_2+I_3+I_4+I_5+I_6. 
\end{align}
By definition, $\sum_{k=1}^Kn_k\bar{N}_k=n\bar{N}$ and $\sum_{k=1}^K n_k\bar{N}_k\mu_{kj}=n\bar{N}\mu_j$. It follows that
\[
I_3 = \frac{1}{n\bar{N}}\Bigl(\sum_{m=1}^nN_mY_{mj}\Bigr)^2, \qquad I_5=0, \qquad I_6=-\frac{2}{n\bar{N}}\Bigl(\sum_{m=1}^nN_mY_{mj}\Bigr)^2=-2I_3. 
\] 
It follows that
\beq \label{dem-decompose-3(2)}
R_j = I_1 +I_2-I_3+I_4. 
\eeq
We further simplify $I_3$. Recall that $\xi_k=1-(n\bar{N})^{-1}n_k\bar{N}_k$. 
By direct calculations, 
\begin{align} \label{lem-decompose-4}
	I_3 &=\frac{1}{n\bar{N}}\Bigl(\sum_{m=1}^n N_mY_{mj}\Bigr)^2=\frac{1}{n\bar{N}}\biggl[\sum_{k=1}^K\Bigl(\sum_{i\in S_k} N_iY_{ij}\Bigr)\biggr]^2\cr
	&= \frac{1}{n\bar{N}} \sum_{k=1}^K \Bigl(\sum_{i\in S_k} N_iY_{ij}\Bigr)^2 + \frac{1}{n\bar{N}} \sum_{1\leq k\neq\ell\leq K}\Bigl(\sum_{i\in S_k} N_iY_{ij}\Bigr)\Bigl(\sum_{m\in S_\ell} N_mY_{mj}\Bigr)\cr
	&= \sum_{k=1}^K (1-\xi_k)\frac{1}{n_k\bar{N}_k} \Bigl(\sum_{i\in S_k}N_iY_{ij}\Bigr)^2 + \underbrace{\frac{1}{n\bar{N}}\sum_{k\neq\ell} \sum_{i\in S_k}\sum_{m \in S_\ell} N_iN_mY_{ij}Y_{mj}}_{J_1}\cr
	&= I_2 - \sum_{k=1}^K\sum_{i\in S_k} \frac{\xi_k}{n_k\bar{N}_k}\Bigl(\sum_{i\in S_k}N_iY_{ij}\Bigr)^2+ J_1\cr
	&= I_2 +J_1 - \sum_{k=1}^K\frac{\xi_k}{n_k\bar{N}_k}\Bigl(\sum_{i\in S_k}N_i^2Y_{ij}^2\Bigr) - \underbrace{\sum_{k=1}^K \frac{\xi_k}{n_k\bar{N}_k} \sum_{\substack{i\in S_k,m\in S_k\\i\neq m}}N_iN_mY_{ij}Y_{mj}}_{J_2}.   
\end{align}
By \eqref{lem-decompose-0},  $N_iY_{ij}^2 = N_iQ_i+\Omega_{ij}(1-\Omega_{ij}) $. We further apply \eqref{lem-decompose-00} to get 
\[
N_i^2Y_{ij}^2 =  N_i(1-2\Omega_{ij})Y_{ij} + \sum_{1\leq r\neq s\leq N_i}Z_{ijr}Z_{ijs}+N_i\Omega_{ij}(1-\Omega_{ij}). 
\]
It follows that
\begin{align} \label{lem-decompose-5}
	\sum_{k=1}^K\frac{\xi_k}{n_k\bar{N}_k}&\Bigl(\sum_{i\in S_k}N_i^2Y_{ij}^2\Bigr)= 
	\underbrace{\sum_{k=1}^K\sum_{i\in S_k}\frac{\xi_kN_i}{n_k\bar{N}_k}(1-2\Omega_{ij})Y_{ij}}_{J_3} \cr
	& + \underbrace{\sum_{k=1}^K\sum_{i\in S_k}\frac{\xi_k}{n_k\bar{N}_k}\sum_{r\neq s}Z_{ijr}Z_{ijs}}_{J_4} + \underbrace{\sum_{k=1}^K\sum_{i\in S_k} \frac{\xi_kN_i}{n_k\bar{N}_k} \Omega_{ij}(1-\Omega_{ij})}_{J_5}. 
\end{align}
We plug \eqref{lem-decompose-5} into \eqref{lem-decompose-4} to get $I_3=I_2+J_1-J_2-J_3-J_4-J_5$. Further plugging $I_3$ into the expression of $R_j$ in \eqref{dem-decompose-3(2)}, we have  
\begin{align} \label{lem-decompose-6}
	R_j &= I_1+I_4-J_1+J_2+J_3+J_4+J_5,
\end{align}
where $I_1$ and $I_4$ are defined in \eqref{lem-decompose-3}, $J_1$-$J_2$ are defined in \eqref{lem-decompose-4}, and $J_3$-$J_5$ are defined in \eqref{lem-decompose-5}. 

Finally, we combine the expressions of $D_j$ and $R_j$. By \eqref{lem-decompose-2} and the definitions of $J_1$-$J_5$, 
\begin{align*}
	D_j &= J_5 +J_3 -  \sum_{k=1}^K \sum_{i\in S_k} \frac{\xi_k}{n_k\bar{N}_k(N_i-1)}\sum_{r\neq s}Z_{ijr}Z_{ijs}\cr
	&= J_5 + J_3 + J_4 - \underbrace{\sum_{k=1}^K \sum_{i\in S_k} \frac{\xi_k N_i}{n_k\bar{N}_k(N_i-1)}\sum_{r\neq s}Z_{ijr}Z_{ijs}}_{J_6}. 
\end{align*}
Combining it with \eqref{lem-decompose-6} gives $T_j = R_j-D_j = I_1+I_4-J_1+J_2+J_6$. We further plug in the definition of each term. It follows that
\begin{align}\label{lem-decompose-7}
	T_j 
	&= \sum_{k=1}^Kn_k\bar{N}_k(\mu_{kj}-\mu_j)^2 +2\sum_{k=1}^K\sum_{i\in S_k} (\mu_{kj}-\mu_j)N_iY_{ij} - \frac{1}{n\bar{N}}\sum_{k\neq\ell} \sum_{i\in S_k, m \in S_\ell} N_iN_mY_{ij}Y_{mj}\cr
	& \qquad + \sum_{k=1}^K \sum_{\substack{i\in S_k,m\in S_k\\i\neq m}}\frac{\xi_k}{n_k\bar{N}_k}  N_iN_mY_{ij}Y_{mj}+\sum_{k=1}^K \sum_{i\in S_k} \frac{\xi_k N_i}{n_k\bar{N}_k(N_i-1)}\sum_{r\neq s}Z_{ijr}Z_{ijs}. 
\end{align}
We plug in $Y_{ij}=N_i^{-1}\sum_{r=1}^{N_i}Z_{ijr}$ and take a sum of $1\leq j\leq p$. It gives \eqref{lem-decompose-goal} immediately. The proof is now complete. 
\qed

\subsection{Proof of Lemma~\ref{lem:var1}} 
Recall that $\{Z_{ir}\}_{1\leq i\leq n, 1\leq r\leq N_i}$ are independent random vectors. Write 
\[
{\bf 1}_p'U_1 = 2 \sum_{k=1}^K \sum_{i\in S_k}\sum_{r=1}^{N_i} (\mu_k-\mu)'Z_{ir}.
\]
The covariance matrix of $Z_{ir}$ is $\diag(\Omega_i)-\Omega_i\Omega_i'$. It follows that  
\begin{align} \label{lem-var-1}
	\mathrm{Var}({\bf 1}_p'U_1) &= 4 \sum_{k=1}^K\sum_{i\in S_k} \sum_{r=1}^{N_i} (\mu_k-\mu)'\bigl[\diag(\Omega_i)-\Omega_i\Omega_i'\bigr](\mu_k-\mu)\cr
	&= 4 \sum_{k}(\mu_k-\mu)'\Bigl[\diag\Bigl(\sum_{i\in S_k}N_i\Omega_i\Bigr)-\Bigl(\sum_{i\in S_k} N_i\Omega_i\Omega_i'\Bigr)\Bigr](\mu_k-\mu)\cr
	&=4\sum_k   (\mu_k-\mu)'\Bigl[\diag(n_k\bar{N}_k\mu_k)-n_k\bar{N}_k\Sigma_k\Bigr](\mu_k-\mu)\cr
	&=4\sum_k n_k\bar{N}_k\bigl\|\diag(\mu_k)^{1/2}(\mu_k-\mu)\bigr\|^2 - 4\sum_kn_k\bar{N}_k \bigl\|\Sigma_k^{1/2}(\mu_k-\mu)\bigr\|^2. 
\end{align}
This proves the first claim. Furthermore, by \eqref{lem-var-1},
\[
\mathrm{Var}({\bf 1}_p'U_1)\leq 4\sum_k n_k\bar{N}_k\bigl\|\diag(\mu_k)^{1/2}(\mu_k-\mu)\bigr\|^2\leq 4\sum_k n_k\bar{N}_k\|\diag(\mu_k)\|\|\mu_k-\mu\|^2. 
\]
Note that $\|\diag(\mu_k)\|=\|\mu_k\|_\infty$. Therefore, if $\max_{k}\|\mu_k\|_\infty=o(1)$, the right hand side above is $o(1)\cdot 4\sum_kn_k\bar{N}_k \|\mu_k-\mu\|^2=o(\rho^2)$. This proves the second claim. \qed

\subsection{Proof of Lemma~\ref{lem:var2}} 

For each $1\leq k\leq K$, define a set of index triplets: ${\cal M}_k=\{(i,r,s): i\in S_k, 1\leq r< s\leq N_i\}$. Let ${\cal M}=\cup_{k=1}^K {\cal M}_k$. Write for short $\theta_i=(\frac{1}{n_k\bar{N}_k}-\frac{1}{n\bar{N}})^2\frac{N_i^3}{N_i-1}$, for $i\in S_k$. It is seen that
\[
{\bf 1}_p'U_2 = 2\sum_{(i,r,s)\in {\cal M}} \frac{\sqrt{\theta_i}}{\sqrt{N_i(N_i-1)}} W_{irs}, \qquad\mbox{with} \quad W_{irs} = \sum_{j=1}^p Z_{ijr}Z_{ijs}. 
\]
For $W_{irs}$ and $W_{i'r's'}$, if $i\neq i'$, or if $i=i'$ and $\{r,s\}\cap \{r',s'\}=\empty\emptyset$, then these two variables are independent; if $i=i'$, $r=r'$ and $s\neq s'$, then $\mathbb{E}[W_{irs}W_{irs'}] = \sum_{j, j'} \mathbb{E}[Z_{ijr}Z_{ijs}Z_{ij'r}Z_{ij's'}]=\sum_{j,j'}\mathbb{E}[Z_{ijr}Z_{ij'r}]\cdot\mathbb{E}[Z_{ijs}]\cdot\mathbb{E}[Z_{ij's'}]=0$. Therefore, $\{W_{irs}\}_{(i,r,s)\in {\cal M}}$ is a collection of mutually uncorrelated variables. It follows that 
\[
\mathrm{Var}({\bf 1}_p'U_2) = 4\sum_{(i,r,s)\in {\cal M}} \frac{\theta_i}{N_i(N_i-1)}\mathrm{Var}(W_{irs}). 
\]
It remains to calculate the variance of each $W_{irs}$. By direction calculations, 
\begin{align} \label{lem-var-3-add}
	\mathrm{Var}(W_{irs}) &= \sum_j \mathbb{E}[Z_{ijr}^2Z_{ijs}^2] + 2\sum_{ j<\ell} \mathbb{E}[Z_{ijr}Z_{ijs}Z_{i\ell r}Z_{i\ell s}] \cr
	&= \sum_j [\Omega_{ij}(1-\Omega_{ij})]^2 + 2\sum_{ j< \ell }(-\Omega_{ij}\Omega_{i\ell})^2\cr
	&=  \sum_j\Omega_{ij}^2-2\sum_j \Omega^3_{ij}+ \Bigl(\sum_j\Omega^2_{ij}\Bigr)^2\cr
	&=\|\Omega_i\|^2-2\|\Omega_i\|_3^3 + \|\Omega_i\|^4\cr
\end{align}
Since $\max_{ij} \Omega_{ij} \leq 1$, we have
\begin{align*}
	\| \Omega_i \|^2 - \|\Omega_i\|_3^3	\leq 	\var(W_{irs})  \leq \|\Omega_i\|^2. 
\end{align*}
Therefore, 
\begin{align*}
	\mathrm{Var}({\bf 1}_p'U_2) &= 4\sum_{k=1}^K\sum_{i\in S_k} \sum_{1\leq r<s\leq N_i}\frac{\theta_i}{N_i(N_i-1)}\var(W_{irs})
	\\&= 2 \sum_{k=1}^K \sum_{i\in S_k} \theta_i \var(W_{irs}) 
	\geq 2 \sum_{k=1}^K \sum_{i\in S_k} \theta_i \big[ \| \Omega_i \|^2 - \|\Omega_i\|_3^3 \big] = \Theta_{n2} - A_n,
\end{align*}
and similarly $	\mathrm{Var}({\bf 1}_p'U_2) \leq \Theta_{n2}$, which proves the first claim. To prove the second claim, note that $\mathrm{Var}({\bf 1}_p'U_2) = \Theta_{n2} + O(A_n)$. By \eqref{eqn:var_condition_K=n} and the assumption $\min N_i \geq 2$, we have
\begin{align*}
	A_n &\lesssim \sum_{k} \big( \frac{1}{n_k \bar{N}_k} - \frac{1}{n \bar{N}} \big)^2 \sum_{i \in S_k} N_i^2 \| \Omega_i \|_3^3 
	\\&=  \sum_{k} \big( \frac{1}{n_k \bar{N}_k} - \frac{1}{n \bar{N}} \big)^2 \cdot o\bigg(  \sum_{i \in S_k}  N_i^2 \| \Omega_i \|^2 \bigg)
	= o( \Theta_{n2}),
\end{align*}
which implies that $\var( {\bf 1 }_p U_2) = [1+o(1)] \Theta_{n2}$, as desired. 

\qed

\subsection{Proof of Lemma~\ref{lem:var3}} 

For each $1\leq k<\ell\leq K$, define a set of index quadruples: ${\cal J}_{k\ell}=\{(i,r,m,s): i\in S_k, j\in S_{\ell}, 1\leq r\leq N_i, 1\leq s\leq N_m\}$. Let ${\cal J}=\cup_{(k,\ell): 1\leq k<\ell\leq K}{\cal J}_{k\ell}$. It is seen that 
\[
{\bf 1}_p'U_3 =  -  \frac{2}{n\bar{N}}\sum_{(i,r, m, s)\in {\cal J}} V_{irms}, \qquad\mbox{where}\;\; V_{irms} = \sum_{j=1}^p Z_{ijr}Z_{mjs}. 
\]
For $V_{irms}$ and $V_{i'r'm's'}$, if $\{(i,r), (m,s)\}\cap \{(i',r'), (m',s')\}=\emptyset$, then the two variables are independent of each other. If $(i,r)=(i',r')$ and $(m,s)\neq (m',s')$,  then 
$\mathbb{E}[V_{irms}V_{irm's'}]=\sum_{j,j'}\mathbb{E}[Z_{ijr}Z_{mjs}Z_{ij'r}Z_{m'j's'}] =\sum_{j,j'}\mathbb{E}[Z_{ijr}Z_{ij'r}]\cdot\mathbb{E}[Z_{mjs}]\cdot \mathbb{E}[Z_{m'js'}]=0$. Therefore, the only correlated case is when $(i,r, m, s)=(i',r', m', s')$. This implies that $\{V_{irms}\}_{(i,r,m,s)\in {\cal J}}$ is a collection of mutually uncorrelated variables. 
Therefore, 
\[
\mathrm{Var}({\bf 1}_p'U_3) = \frac{4}{n^2\bar{N}^2}\sum_{(i,r,m,s)\in {\cal J}} \mathrm{Var}(V_{irms}). 
\]
Note that $\mathrm{Var}(V_{irms})=\mathbb{E}[(\sum_j Z_{ijr}Z_{mjs})^2]=\sum_{j,j'}\mathbb{E}[Z_{ijr}Z_{mjs}Z_{ij' r}Z_{mj' s}] $; also, the covariance matrix of $Z_{ir}$ is $\diag(\Omega_i)-\Omega_i\Omega_i'$. It follows that 
\begin{align} \label{Var-of-V(irms)}
	\mathrm{Var}(V_{irms}) 
	&=  \sum_{j} \mathbb{E}[Z^2_{ijr}] \cdot\mathbb{E}[Z^2_{mjs}] +   \sum_{j\neq j'} \mathbb{E}[Z_{ijr}Z_{ij' r}] \cdot\mathbb{E}[Z_{mjs}Z_{mj' s}]\cr
	&= \sum_j \Omega_{ij}(1-\Omega_{ij}) \Omega_{mj}(1-\Omega_{mj})+ \sum_{j\neq j' }\Omega_{ij}\Omega_{ij'}\Omega_{mj}\Omega_{mj'}\cr
	&=\sum_j\Omega_{ij}\Omega_{mj} - 2\sum_{j}\Omega^2_{ij}\Omega^2_{mj} + \sum_{j,j'}\Omega_{ij}\Omega_{ij'}\Omega_{mj}\Omega_{mj'}.
\end{align}
Write for short $\delta_{im} = - 2\sum_{j}\Omega^2_{ij}\Omega^2_{mj} + \sum_{j,j'}\Omega_{ij}\Omega_{ij'}\Omega_{mj}\Omega_{mj'}$.Combining the above gives
\begin{align} \label{lem-var-2}
	\mathrm{Var}&({\bf 1}_p'U_3)  =  \frac{4}{n^2\bar{N}^2}\sum_{k<\ell}\sum_{i\in S_k}\sum_{m\in S_{\ell}}\sum_{r=1}^{N_i}\sum_{s=1}^{N_m}\Bigl(\sum_j \Omega_{ij}\Omega_{mj} +\delta_{im} \Bigr)\cr
	&= \frac{2}{n^2\bar{N}^2}\sum_{ k\neq \ell }\sum_{i\in S_k}\sum_{m\in S_\ell} \sum_j N_iN_m\Omega_{ij}\Omega_{mj} +  \frac{2}{n^2\bar{N}^2}\sum_{k\neq\ell}\sum_{i\in S_k}\sum_{m\in S_{\ell}}N_iN_m\delta_{im} .
\end{align}
It is easy to see that $|\delta_{im}|\leq \sum_{j,j'}\Omega_{ij}\Omega_{ij'}\Omega_{mj}\Omega_{mj'}$. Also, by the definition of $\Sigma_k$ in \eqref{relaxedMod2}, we have $\Sigma_k(j,j')=\frac{1}{n_k\bar{N}_k}\sum_{i\in S_k}N_i\Omega_{ij}\Omega_{ij'}$. 
Using these results, we immediately have
\begin{align} \label{lem-var-2(add)}
	\Bigl|\frac{2}{n^2\bar{N}^2}\sum_{k\neq\ell}\sum_{i\in S_k}\sum_{m\in S_{\ell}}N_iN_m\delta_{im}\Bigr|&\leq \frac{2}{n^2\bar{N}^2} \sum_{k\neq \ell }\sum_{i\in S_k}\sum_{m\in S_{\ell}}\sum_{j,j'}N_iN_m\Omega_{ij}\Omega_{ij'}\Omega_{mj}\Omega_{mj'}\cr
	&=   \frac{2}{n^2\bar{N}^2}\sum_{j,j'}\sum_{k\neq\ell}\Bigl(\sum_{i\in S_k}N_i\Omega_{ij}\Omega_{ij'}\Bigr)\Bigl(\sum_{m\in S_\ell}N_i\Omega_{mj}\Omega_{mj'}\Bigr)\cr
	&=  \frac{2}{n^2\bar{N}^2}\sum_{j,j'}\sum_{k\neq\ell}n_k\bar{N}_k \Sigma_k(j,j')\cdot n_{\ell}\bar{N}_{\ell}\Sigma_\ell(j,j')\cr
	&= 2\sum_{k\neq \ell}\frac{n_kn_{\ell}\bar{N}_k\bar{N}_\ell}{n^2\bar{N}^2}{\bf 1}_p'(\Sigma_k\circ\Sigma_\ell){\bf 1}_p =: B_n
\end{align}
as desired. 

\qed

\subsection{Proof of Lemma~\ref{lem:var4}}

For $1\leq k\leq K$, define a set of index quadruples: ${\cal Q}_k =\{(i,r,m,s): i\in S_k, m\in S_k, i<m, 1\leq r\leq N_i, 1\leq s\leq N_m\}$. Let ${\cal Q}=\cup_{k=1}^K {\cal Q}_k$. Write $\kappa_{im}=(\frac{1}{n_k\bar{N}_k}-\frac{1}{n\bar{N}})^2N_iN_m$, for $i\in S_k$ and $m\in S_k$. It is seen that
\[
{\bf 1}_p'U_4 = 2\sum_{(i, r, m, s)\in {\cal Q}} \frac{\sqrt{\kappa_{im}}}{\sqrt{N_iN_m}}V_{irms}, \qquad\mbox{where}\quad V_{irms} = \sum_{j=1}^p Z_{ijr}Z_{mjs}. 
\]
It is not hard to see that $V_{irms}$ and $V_{i'r'm's'}$ are correlated only if $(i,r,m,s)=(i',r',m',s')$. It follows that
\[
\mathrm{Var}({\bf 1}_p'U_4) = 4\sum_{(i,r,m,s)\in {\cal Q}} \frac{\kappa_{im}}{N_iN_m}\mathrm{Var}(V_{irms}). 
\]
In the proof of Lemma~\ref{lem:var3}, we have studied $\mathrm{Var}(V_{irms})$. In particular, by \eqref{Var-of-V(irms)}, we have
\[
\mathrm{Var}(V_{irms}) =\sum_j \Omega_{ij}\Omega_{mj}+\delta_{im}, \qquad\mbox{with}\quad |\delta_{im}|\leq  \sum_{j,j'}\Omega_{ij}\Omega_{i j'}\Omega_{mj}\Omega_{m j'}.  
\]
Thus
\begin{align*}  
	\mathrm{Var}({\bf 1}_p'U_4) &= 4\sum_{k=1}^K\sum_{\substack{i\in S_k, m\in S_k\\i< m}}\sum_{i=1}^{N_i}\sum_{r=1}^{N_m} \frac{\kappa_{im}}{N_iN_m} \mathrm{Var}(V_{irms})\cr
	&=4\sum_{k=1}^K\sum_{\substack{i\in S_k, m\in S_k\\i< m}}\kappa_{im}\Bigl(\sum_{j}\Omega_{ij}\Omega_{mj} +\delta_{im}\Bigr) \cr
	&= 2\sum_{k=1}^K \sum_{\substack{i\in S_k,m\in S_k\\i\neq m}}\sum_j \kappa_{im}\Omega_{ij}\Omega_{mj} \pm 2 \sum_k\sum_{i \neq  m\in S_k} \kappa_{im} \sum_{j,j'}\Omega_{ij}\Omega_{i j'}\Omega_{mj}\Omega_{m j'} , \cr
	&= \Theta_{n3} \pm E_n.  \num \label{lem-var-4(1)}
\end{align*}
which proves the lemma. 

\qed

\subsection{Proof of Lemma~\ref{lem:Theta_n2+n3+n4}}

By assumption \eqref{cond1-basic}, $N_i^3/(N_i-1) \asymp N_i$ and $\Bigl(\frac{1}{n_k\bar{N}_k}-\frac{1}{n\bar{N}}\Bigr)^2 \asymp \frac{1}{n_k^2 \bar{N}_k^2}$. First, observe that
\begin{align*}
	\Theta_{n2}+\Theta_{n4}
	&= 2\sum_{k=1}^K \Bigl(\frac{1}{n_k\bar{N}_k}-\frac{1}{n\bar{N}}\Bigr)^2\sum_{i\in S_k}\frac{N_i^3}{N_i-1} \|\Omega_i\|^2 
	\\&\quad + 2\sum_{k=1}^K \sum_{\substack{i\in S_k,m\in S_k\\i\neq m}}\sum_j \Bigl(\frac{1}{n_k\bar{N}_k}-\frac{1}{n\bar{N}}\Bigr)^2 N_i N_m \Omega_{ij}\Omega_{mj}
	\\ &\asymp \sum_{k =1} \Bigl(\frac{1}{n_k\bar{N}_k}\Bigr)^2 \sum_j
	\sum_{i, m \in S_k}  N_i \Omega_{ij} \cdot N_m \Omega_{ij} 
	= \sum_k \| \mu_k \|^2. 
	\num \label{eqn:Theta_n2+n4}
\end{align*}
%
Recall the definitions of $\mu_k$ and $\mu$ in \eqref{relaxedMod2}-\eqref{relaxedMod3}. By direct calculations, we have
\begin{align*}
	\Theta_{n3} &=2  \sum_j \sum_{ k\neq \ell }\Bigl( \frac{1}{n\bar{N}}\sum_{i\in S_k}N_i\Omega_{ij}\Bigr)\Bigl(\frac{1}{n\bar{N}}\sum_{m\in S_\ell} N_m\Omega_{mj}\Bigr)\cr
	&=2 \sum_j\sum_{k\neq \ell}\frac{n_k\bar{N}_k}{n\bar{N}}\mu_{kj}\cdot \frac{n_\ell \bar{N}_\ell}{n\bar{N}} \mu_{\ell j}\cr
	&= 2 \sum_{k \neq \ell} \frac{n_kn_{\ell}\bar{N}_k\bar{N}_\ell}{n^2\bar{N}^2} \cdot  \mu_k^{\, \, \prime} \, \mu_\ell  \\
	&\leq 2 \sum_j\Bigl(\sum_k\frac{n_k\bar{N}_k}{n\bar{N}}\mu_{kj}\Bigr)^2 =2\sum_j\mu_j^2=2\|\mu\|^2. 
	\num \label{eqn:Thetan3_bd}
\end{align*}
By Cauchy--Schwarz, 
\begin{align*}
	\| \mu \|^2 &= \sum_j \bigg( \sum_k (\frac{n_k \bar{N_k}}{n \bar{N}}) \mu_{kj} \bigg)^2
	\\&\leq \sum_j \bigg( \sum_k (\frac{n_k \bar{N_k}}{n \bar{N}})^2 \bigg) \cdot 
	\bigg( \sum_k \mu_{kj}^2 \bigg) 
	\\& \leq \sum_j \bigg( \sum_k (\frac{n_k \bar{N_k}}{n \bar{N}}) \bigg) \cdot 
	\bigg( \sum_k \mu_{kj}^2 \bigg) 
	= \sum_j \sum_k \mu_{kj}^2 = \sum_k \| \mu_k \|^2.
	\num \label{eqn:mu_vs_muk}
\end{align*}
Combining \eqref{eqn:Theta_n2+n4}, \eqref{eqn:Thetan3_bd}, and  \eqref{eqn:mu_vs_muk} yields
\begin{align*}
	c \big(\sum_k \| \mu_k \|^2  \big) \leq 	\Theta_{n2} + \Theta_{n3} + \Theta_{n4} \leq C\big( \sum_k \| \mu_k \|^2 \big),
\end{align*}
for absolute constants $c, C>0$. This completes the proof. 
\qed

\subsection{Proof of Lemma~\ref{lem:var-null}}

By \eqref{cond1-basic}, it holds that 
\begin{align}
	\label{eqn:basic1}
	(\frac{1}{n_k\bar{N}_k}-\frac{1}{n\bar{N}})^2 
	\asymp \frac{1}{(n_k \bar{N}_k)^2}, 
\end{align}
and moreover, for all $i \in \{1, 2, \ldots, n\}$, 
\begin{align}
	\label{eqn:basic2}
	\frac{N_i^3}{N_i - 1} \asymp N_i^2. 
\end{align}
Recall the definitions of $A_n,$ $B_n$, and $E_n$ in \eqref{eqn:An}, \eqref{eqn:Bn},  and  \eqref{eqn:En}, respectively. Note that these are the remainder terms in Lemmas \ref{lem:var2}, \ref{lem:var3}, and \ref{lem:var4}, respectively. Under the null hypothesis (recall $\Theta_{n1} \equiv 0$ under the null), 
\begin{align}
	\label{eqn:var_null_main_decomp}
	\var(T) = \Theta_{n2} + \Theta_{n3} + \Theta_{n4}
	+ O(A_n + B_n + E_n). 
\end{align}
It holds that 
\begin{align}
	\label{eqn:An_bd}
	A_n \leq \sum_{k=1}^K \Bigl(\frac{1}{n_k\bar{N}_k}\Bigr)^2\sum_{i\in S_k}N_i^2 \|\Omega_{i}\|_3^3. 
\end{align}
Next,  by linearity and the definition of $\Sigma_k, \Sigma$ in \eqref{relaxedMod2}, \eqref{relaxedMod3}, respectively, 
\begin{align*} 
	B_n &\leq  2\sum_{k, \ell}\frac{n_kn_{\ell}\bar{N}_k\bar{N}_\ell}{n^2\bar{N}^2}{\bf 1}_p'(\Sigma_k\circ\Sigma_\ell){\bf 1}_p 
	\\& \leq 2 {\bf 1}_p' \bigg( \frac{1}{n \bar{N}} \sum_k n_k \bar{N}_k \Sigma_k \bigg) \circ \bigg( \frac{1}{n \bar{N}} \sum_\ell n_\ell \bar{N}_\ell  \Sigma_k\ell   \bigg) 
	{\bf 1}_p 
	\\ &= 2 	{\bf 1}_p ' (\Sigma \circ \Sigma) 	{\bf 1}_p = 2\| \Sigma \|_F^2 
\end{align*} 
By Cauchy--Schwarz,
\begin{align*}
	B_n &\leq \| \Sigma \|_F^2
	= \sum_{j, j'} \bigg( \sum_k (\frac{n_k \bar{N}_k}{n\bar{N}} \Sigma_k(j,j') \bigg)^2 
	\\& \leq \sum_{j,j'} \bigg( \sum_k  (\frac{n_k \bar{N}_k}{n\bar{N}} )^2 \bigg) 
	\cdot \bigg(  \sum_k \Sigma_k(j,j')^2 \bigg)
	\\&\leq \sum_{j,j'} \bigg( \sum_k  \frac{n_k \bar{N}_k}{n\bar{N}}  \bigg) 
	\cdot \bigg(  \sum_k \Sigma_k(j,j')^2 \bigg)
	= \sum_{j,j'}\sum_k \Sigma_k(j,j')^2 =\sum_k \| \Sigma_k\|_F^2.
	\num 	\label{eqn:Bn_bd}
\end{align*}

Next by the definition of $\Sigma_k$ in \eqref{relaxedMod2}, we have $\Sigma_k(j,j')=\frac{1}{n_k\bar{N}_k}\sum_{i\in S_k}N_i\Omega_{ij}\Omega_{ij'}$. It follows that
\begin{align} \label{eqn:En_bd}
	E_n &\leq  \sum_k \sum_{j,j'} \Bigl(  \frac{1}{n_k\bar{N}_k} \sum_{i\in S_k}N_i\Omega_{ij}\Omega_{i j'}\Bigr)\Bigl( \frac{1}{n_k\bar{N}_k}\sum_{m\in S_k}N_m\Omega_{mj}\Omega_{m j'} \Bigr)\cr
	&=\sum_k \sum_{j,j'}\Sigma_k^2(j,j') = \sum_k \|\Sigma_k\|_F^2. 
\end{align}
Next, Lemma \ref{lem:Theta_n2+n3+n4} implies that
\begin{align}
	\label{eqn:app_sumLem}
	\Theta_{n2} + \Theta_{n3} + \Theta_{n4} 
	\asymp \sum_k \| \mu_k \|^2 = K \| \mu \|^2,
\end{align}
where we use that the null hypothesis holds. By assumption of the lemma, we have 
$$\beta_n= \frac{ \max \bigg\{ 
	\sum_k  \sum_{i\in S_k}\frac{N^2_i}{n_k^2 \bar{N}_k^2}\|\Omega_i\|_3^3 \, , \, \, 
	\sum_k \| \Sigma_k \|_F^2
	\bigg\} }{ K \| \mu \|^2 } = o(1)$$
Combining this with   \eqref{eqn:var_null_main_decomp}, \eqref{eqn:An_bd}, \eqref{eqn:Bn_bd}, \eqref{eqn:En_bd},and  \eqref{eqn:app_sumLem}  completes the proof of the first claim. The second claim follows plugging in $\mu_k = \mu$ for all $k \in \{1, 2, \ldots, K\}$. 

%
%
%

\qed

\subsection{Proof of Lemma~\ref{lem:Theta_n2+n3+n4-K=2}}

By assumption, $N_i^3/(N_i-1) \asymp N_i, M_i^3/(M_i-1) \asymp M_i$. By direct calculation,
\begin{align*}
	\Theta_{n2} + \Theta_{n4}
	&\asymp \big[ \frac{m\bar{M}}{(n \bar{N} + m\bar{M}) n \bar{N}}\big]^2
	\sum_{i, m, j} N_i N_m \Omega_{ij} \Omega_{mj}  
	+ \big[ \frac{n\bar{N}}{(n \bar{N} + m\bar{M}) m \bar{M}}\big]^2
	\sum_{i, m} N_i N_m \Gamma_{ij} \Gamma_{mj}  
	\\&= \frac{1}{(n \bar{N} + m \bar{M})^2} \bigg( (m \bar{M})^2 \| \eta \|^2 
	+ n \bar{N}^2 \| \theta \|^2 \bigg). 
	\num 	\label{eqn:theta_n2+n4_K=2}
\end{align*}
Next
\begin{align*}
	\Theta_{n3} 
	&= 
	\frac{4}{(n \bar{N} + m \bar{M})^2} \sum_{i\in S_1}\sum_{m\in S_2} \sum_j N_i \Omega_{ij} \cdot N_m \Gamma_{mj} 
	\\&= \frac{4}{(n \bar{N} + m \bar{M})^2} \cdot
	n \bar{N} m \bar{M} \langle \theta, \eta \rangle. 
	\num	\label{eqn:theta_n3_K=2}
\end{align*}
Combining \eqref{eqn:theta_n2+n4_K=2} and \eqref{eqn:theta_n3_K=2} yields 
\begin{align*}
	\Theta_{n2}+\Theta_{n3}+ \Theta_{n4}
	&\asymp \frac{1}{(n \bar{N} + m \bar{M})^2}
	\big( (m \bar{M})^2 \| \eta \|^2 + 2 n \bar{N} m \bar{M} \langle \theta, \eta \rangle + 
	n \bar{N}^2 \| \theta \|^2\big) \\
	&= \bigg \|  \frac{m \bar{M}}{ n\bar{N}+ m \bar{M}} \eta +  
	\frac{n \bar{N}}{ n\bar{N}+ m \bar{M}} \theta  \bigg \|^2,
\end{align*}
which proves the first claim. The second follows by plugging in $\theta = \eta = \mu$ under the null. \qed

\subsection{Proof of Lemma~\ref{lem:var-null-K=2}}

As in \eqref{eqn:var_null_main_decomp}, we have under the null that 
\begin{align}
	\label{eqn:var_null_main_decomp_K=2}
	\var(T) = \Theta_{n2} + \Theta_{n3} + \Theta_{n4}
	+ O(A_n + B_n + E_n). 
\end{align}
For general $K$, observe that the proofs of the bounds 
\begin{align*}
	A_n &\leq \sum_{k=1}^K \Bigl(\frac{1}{n_k\bar{N}_k}\Bigr)^2\sum_{i\in S_k}N_i^2 \|\Omega_{i}\|_3^3 \\
	B_n &\leq \sum_{k=1}^K \| \Sigma_k \|_F^2 \\
	E_n &\leq \sum_{k=1}^K \| \Sigma_k \|_F^2
\end{align*}
derived in \eqref{eqn:An_bd}, \eqref{eqn:Bn_bd}, and \eqref{eqn:En_bd}, only use the assumption that $N_i, M_i \geq 2$ for all $i$. 

Translating these bounds to the notation of the $K = 2$ case, we have
\begin{align*}
	A_n &\leq 	\sum_i N_i^2 \| \Omega_i \|^3 +
	\sum_i M_i^2 \| \Gamma_i \|^3 \\
	B_n &\leq \| \Sigma_1 \|_F^2 + \| \Sigma_2 \|_F^2 \\
	E_n &\leq \| \Sigma_1 \|_F^2 + \| \Sigma_2 \|_F^2. 
	\num \label{eqn:remainder_bds_K=2}
\end{align*}
Furthermore, we know that $\Theta_n \geq c \| \mu \|^2$ under the null by Lemma \ref{lem:Theta_n2+n3+n4-K=2}, for an absolute constant $c>0$. Combining this with \eqref{eqn:var_null_main_decomp_K=2} and \eqref{eqn:remainder_bds_K=2} completes the proof. 
\qed

\subsection{Proof of Lemma~\ref{lem:Vdecompose}}

Define
\begin{align*}
	V_1 &= 2\sum_{k=1}^K\sum_{i\in S_k}\sum_{j=1}^p  \Bigl(\frac{1}{n_k\bar{N}_k}-\frac{1}{n\bar{N}}\Bigr)^2\biggl[  \frac{N_iX_{ij}^2}{N_i-1} - \frac{N_iX_{ij}(N_i-X_{ij})}{(N_i-1)^2}\biggr]  \\
	V_2 &= \frac{2}{n^2\bar{N}^2}\sum_{1\leq k\neq \ell\leq K}\sum_{i\in S_k}\sum_{m\in S_\ell} \sum_{j=1}^p X_{ij}X_{mj} \\
	V_3 &= 2\sum_{k=1}^K \sum_{\substack{i\in S_k, m\in S_k,\\ i\neq m}}\sum_{j=1}^p \Bigl(\frac{1}{n_k\bar{N}_k}-\frac{1}{n\bar{N}}\Bigr)^2 X_{ij}X_{mj}.
\end{align*}
Observe that $V_1 + V_2 + V_3 = V$. Also define
\begin{align}
	A_{11} &= 
	\sum_{i} \sum_{r =1}^{N_i} 
	\sum_j \big[ \frac{4\theta_i \Omega_{ij}}{N_i}  \big] Z_{ijr} \num \label{eqn:A11_def} \\
	A_{12} &= 
	2\sum_{i} \sum_{r =1}^{N_i} 
	\sum_j \big[\sum_{m \in [n] \backslash \{i\} }  \alpha_{im} N_m \Omega_{mj} \big] Z_{ijr} 
	\num \label{eqn:A12_def}
\end{align}
and observe that $A_{11} + A_{12} = A_1$.

First, we derive the decomposition of $V_1$. Recall that 
\beq 
Y_{ij} :=\frac{X_{ij}}{N_i} - \Omega_{ij}= \frac{1}{N_i}\sum_{r=1}^{N_i}Z_{ijr}, \qquad  Q_{ij}: = Y_{ij}^2-\mathbb{E}Y^2_{ij}=Y_{ij}^2-\frac{\Omega_{ij}(1-\Omega_{ij})}{N_i}. 
\eeq
With these notations, $X_{ij}=N_i(\Omega_{ij}+Y_{ij})$ and $N_iY_{ij}^2=N_iQ_{ij}+\Omega_{ij}(1-\Omega_{ij})$.

Write 
\beq \label{lem-V1decompose-1}
V_1 = 2\sum_{i=1}^n \sum_{i=1}^n\frac{\theta_i}{N_i}\Delta_{ij}, \qquad\mbox{where}\quad\Delta_{ij}:= \frac{X^2_{ij}}{N_i}  - \frac{X_{ij}(N_i-X_{ij})}{N_i(N_i-1)}. 
\eeq

Note that $X_{ij}=N_i(\Omega_{ij}+Y_{ij})$ and $Y_{ij}^2=Q_{ij}+N_i^{-1}\Omega_{ij}(1-\Omega_{ij})$. It follows that
\[
\frac{X_{ij}^2}{N_i} = N_i\Omega_{ij}^2 + 2N_i\Omega_{ij}Y_{ij} +  N_iQ_{ij} + \Omega_{ij}(1-\Omega_{ij}). 
\]
In \eqref{lem-decompose-00}, we have shown that $Q_{ij} = (1-2\Omega_{ij}) \frac{Y_{ij}}{N_i} + \frac{1}{N^2_i}\sum_{1\leq r\neq s\leq N_i}Z_{ijr}Z_{ijs}$. It follows that
\[
\frac{X_{ij}^2}{N_i} = N_i\Omega_{ij}^2 + 2N_i\Omega_{ij}Y_{ij} +  (1-2\Omega_{ij})Y_{ij} + \frac{1}{N_i}\sum_{1\leq r\neq s\leq N_i}Z_{ijr}Z_{ijs} + \Omega_{ij}(1-\Omega_{ij}).
\]
Additionally, by \eqref{lem-decompose-1}, 
\[
\frac{X_{ij}(N_{ij}-X_{ij})}{N_i(N_i-1)} =\Omega_{ij}(1-\Omega_{ij}) +(1-2\Omega_{ij})Y_{ij} - \frac{1}{N_i(N_i-1)}\sum_{1\leq r\neq s\leq N_i}Z_{ijr}Z_{ijs}.
\]
Combining the above gives
\begin{align} \label{lem-Vdecompose-2}
	&\Delta_{ij}  = N_i\Omega_{ij}^2 + 2N_i\Omega_{ij}Y_{ij}   + \frac{1}{N_i-1}\sum_{1\leq r\neq s\leq N_i}Z_{ijr}Z_{ijs} \cr
	&= N_i\Omega_{ij}^2 + 2\Omega_{ij}\sum_{r=1}^{N_i}Z_{ijr}  + \frac{1}{N_i-1}\sum_{1\leq r\neq s\leq N_i}Z_{ijr}Z_{ijs}. 
\end{align}
Recall the definition of $\Theta_{n2}$ in \eqref{eqn:Theta_n2},  $A_2$ in \eqref{eqn:A2_def}, and $A_{11}$ in \eqref{eqn:A11_def}. We have
\begin{align*}
	V_1 &= 2\sum_{k, i \in S_k} \sum_j \frac{\theta_i}{N_i} 
	\big[ N_i\Omega_{ij}^2 + 2\Omega_{ij}\sum_{r=1}^{N_i}Z_{ijr}  + \frac{1}{N_i-1}\sum_{1\leq r\neq s\leq N_i}Z_{ijr}Z_{ijs}  \big].  
	\\ &= \Theta_{n2}
	+ \sum_{k, i \in S_k} \sum_j \frac{4 \theta_i \Omega_{ij} }{N_i} \sum_{r=1}^{N_i}Z_{ijr}
	+ \sum_{k, i \in S_k} \sum_j \frac{2\theta_i}{N_i(N_i-1)}  \sum_{1\leq r\neq s\leq N_i}Z_{ijr}Z_{ijs} 
	\\&= \Theta_{n2} + A_{11} + A_2
	\num \label{eqn:V1_decomp}
\end{align*}

Next, we have
\begin{align*}
	V_2 + V_3 &= 
	\sum_{ i \neq m} \alpha_{im} N_i N_m \sum_j
	\bigg[ (Y_{ij} + \Omega_{ij} ) (Y_{mj} + \Omega_{mj} ) \bigg] 
	\\&= 	\sum_{ i \neq m} \alpha_{im} N_i N_m \sum_j Y_{ij} Y_{mj}
	+ 2 	\sum_{ i \neq m} \alpha_{im} N_i N_m \sum_j  Y_{ij} \Omega_{mj} 
	+ \sum_{i \neq m} \alpha_{im} N_i N_m \sum_j  \Omega_{ij} \Omega_{mj} 
	\\&= \sum_{i \neq m} \sum_{r = 1}^{N_i} \sum_{s = 1}^{N_m}
	\alpha_{im} \big(\sum_j Z_{ijr} Z_{mjs} \big) 
	+	2\sum_{i} \sum_{r =1}^{N_i} 
	\sum_j \big[\sum_{m \in [n] \backslash \{i\} }  \alpha_{im} N_m \Omega_{mj} \big] Z_{ijr} 
	+ \Theta_{n3} + \Theta_{n4} 
	\\&= A_3 + A_{12} + \Theta_{n3} + \Theta_{n4} . 
\end{align*}
Hence
\begin{align*}
	A_1 + A_2 + A_3 + \Theta_{n2} + \Theta_{n3} + \Theta_{n4}
	= V,
\end{align*}
which verifies \eqref{eqn:V_decomp}. By inspection, we also see that $\E A_b = 0$ for $b \in \{1,2,3\}$. That $A_1, A_2, A_3$ are mutually uncorrelated follows immediately from the linearity of expectation and the fact that the random variables $\{ Z_{ijr} \}_{i,r} \cup \{ Z_{ijr} Z_{mjs} \}_{(i,r) \neq (m, s)}$ are mutually uncorrelated. 

\qed

\subsection{Proof of Lemma~\ref{lem:varV}}

Define
\begin{align}
	\label{eqn:gamma_def}
	\gamma_{irj} = \frac{4\theta_i \Omega_{ij}}{N_i} 
	+ \sum_{m \in [n] \backslash \{i\} } 2 \alpha_{im} N_m \Omega_{mj}
\end{align}
and recall that $A_1 = \sum_{i} \sum_{r \in [N_i]}  \sum_j \gamma_{irj} Z_{ijr}$. First we develop a bound on $\gamma_{irj}$.  Suppose that 
$i \in S_k$. Then we  have
\begin{align*}
	\gamma_{irj}
	&\lesssim \frac{N_i \Omega_{ij}}{n_k^2 \bar{N}_k^2}
	+ \sum_{m \in S_k, m \neq i} \frac{N_m \Omega_{mj}}{n_k^2 \bar{N}_k^2}
	+ \sum_{k' \in [K]\backslash \{k\}} \sum_{m \in S_{k'}} \frac{N_m \Omega_{mj}}{n^2 \bar{N}^2} 
	\\&\les  \frac{\mu_{kj}}{n_k \bar{N}_k} 
	+ \frac{ \mu_j}{n \bar{N}}.
\end{align*}
Next  using properties of the covariance matrix of a multinomial vector, we have 
\begin{align*}
	\var(A_1)
	&= \sum_{i, r \in [N_i]} \var( \gamma_{ir:} '  Z_{i:r} ) = \sum_{i, r \in [N_i]} \gamma_{ir:}' \text{Cov}(Z_{i:r}) \gamma_{ir:}
	\\& \leq \sum_{i, r \in [N_i]} \gamma_{ir:}' \text{diag}(\Omega_{i:}) \gamma_{ir:}
	=  \sum_{i, r \in [N_i]}  \sum_j \Omega_{ij} \gamma_{irj}^2
	\\&\les \sum_{k,j} \big(  \frac{\mu_{kj}}{n_k \bar{N}_k}+\frac{ \mu_j}{n \bar{N} } \big)^2
	\sum_{i\in S_k, r \in [N_i]} \Omega_{ij} 
	\\&\les \sum_{k,j} \big(  \frac{\mu_{kj}}{n_k \bar{N}_k}\big)^2 n_k \bar{N}_k \mu_{kj} + \sum_{k,j} \big(  \frac{ \mu_j}{n \bar{N} } \big)^2 n_k \bar{N}_k \mu_{kj} 
	\\&= (\sum_k \frac{\| \mu_k \|_3^3 }{n_k \bar{N}_k} ) 
	+ \frac{\| \mu \|_3^3}{n \bar{N}} \les \sum_k \frac{\| \mu_k \|_3^3 }{n_k \bar{N}_k},
	\num \label{eqn:mu3_vs_muk3}
\end{align*}
which proves the first claim. The last inequality follows because by Jensen's inequality (noting that the function $x \mapsto x^3$ is convex for $x \geq 0$), 
\begin{align*}
	\| \mu \|_3^3
	&= \sum_j \bigg( \sum_k (\frac{n_k \bar{N}_k}{n \bar{N}} ) \mu_{kj} \bigg)^3 
	\leq \sum_j \sum_k (\frac{n_k \bar{N}_k}{n \bar{N}} ) \mu_{kj}^3
	\leq \sum_k \| \mu_k \|_3^3. 
\end{align*}

Next observe that
\begin{align}
	A_2 = \sum_i \sum_{r \neq s} \frac{2 \theta_i}{N_i(N_i -1)} W_{irs}
\end{align}
where recall $W_{irs} = \sum_j Z_{ijr}Z_{ijs}$. Also recall that $W_{irs}$ and $W_{i'r's'}$ are uncorrelated unless $i = i'$ and $\{r,s\} = \{r',s'\}$.
By \eqref{lem-var-3-add}, 
\begin{align*}
	\var(A_2)
	&= \sum_i \sum_{r \neq s} \frac{4 \theta_i^2}{N_i^2 (N_i - 1)^2} \var(W_{irs})
	\\&\les \sum_i \sum_{r \neq s} \frac{4 \theta_i^2}{N_i^2 (N_i - 1)^2}  \| \Omega_i \|^2
	\\&\les \sum_k \sum_{i \in S_k}  \cdot (\frac{1}{n_k\bar{N}_k}-\frac{1}{n\bar{N}})^4\frac{N_i^6}{(N_i-1)^2} \cdot \frac{1}{N_i (N_i  -1)}
	\| \Omega_i \|^2 
	\\&\les \sum_k \sum_{i \in S_k} \frac{N_i^2}{n_k^4 \bar{N}_k^4} \| \Omega_{i} \|^2 
	\num \label{eqn:varA2_bd}
\end{align*}
Also observe that
\begin{align*}
	\sum_k \frac{1}{n_k^4 \bar{N}_k^4} \sum_{i \in S_k} N_i^2  \| \Omega_i \|_2^2
	&\leq 		\sum_k \frac{1}{n_k^2 \bar{N}_k^2} \sum_{i,m \in S_k} \bigg \langle (\frac{N_i}{n_k \bar{N}_k}) \Omega_i, (\frac{N_m}{n_m \bar{N}_m}) \Omega_m \bigg \rangle 
	\\&= 	\sum_k \frac{1}{n_k^2 \bar{N}_k^2} \| \mu_k \|^2.
\end{align*}	This establishes the second claim.

Last we study $A_3$. Observe that
\begin{align*}
	A_3 = \sum_{i \neq m} \sum_{r = 1}^{N_i} \sum_{s = 1}^{N_m} \alpha_{im} V_{irms}
\end{align*}
where recall $V_{irms} = \sum_j Z_{ijr} Z_{mjs}$. Recall that $V_{irms}$ and $V_{i'r'm's'}$ are uncorrelated unless $(r,s) = (r', s')$ and $\{i,m\} = \{i', m'\}$ .By \eqref{Var-of-V(irms)},
\begin{align*}
	\var(A_3) &\lesssim \sum_{i \neq m} \alpha_{im}^2 N_i N_m \sum_j \Omega_{ij}\Omega_{mj} 
	\\&\les \sum_k \sum_{i \neq m \in S_k} \frac{1}{n_k^4 \bar{N}_k^4} \langle N_i \Omega_i, N_m \Omega_m \rangle 
	+ \sum_{k \neq \ell} \sum_{i \in S_k, m \in S_\ell} \frac{1}{n^4 \bar{N}^4} \langle N_i \Omega_i, N_m \Omega_m \rangle 
	\\&\les \sum_k \frac{ \| \mu_k \|^2 }{n_k^2 \bar{N}_k^2}
	+ \sum_{k , \ell} \frac{1}{n^4 \bar{N}^4} \langle n_k \bar{N}_k \mu_k, 
	n_\ell \bar{N}_\ell \mu_\ell \rangle 
	\\&\les \sum_k \frac{ \| \mu_k \|^2 }{n_k^2 \bar{N}_k^2}
	+ \frac{ \| \mu \|^2 }{ n^2 \bar{N}^2} \les \sum_k \frac{ \| \mu_k \|^2 }{n_k^2 \bar{N}_k^2}.
	\num \label{eqn:A3_bd} 
\end{align*}
In the last line we use that $\| \mu \|^2 \leq 2 \sum \| \mu_k \|^2$ as shown in \eqref{eqn:mu_vs_muk}. This proves all required claims. 
\qed

\subsection{Proof of Proposition~\ref{prop:var_estimation_null}}

Under the null hypothesis, we have $\Theta_{n1} \equiv 0$. Thus, $\E V = \Theta_n$ under the null by Lemma \ref{lem:Vdecompose}.  Under \eqref{cond1-basic}, we have $\var(T) = [1+o(1)] \Theta_n$. Therefore, 
\begin{align}
	\label{eqn:var-est-null1}
	\E V = [1+o(1)]\var(T),
\end{align}
so $V$ is asymptotically unbiased under the null.	Furthermore, by Lemma \ref{lem:Theta_n2+n3+n4}, we have
\begin{align}
	\label{eqn:var-est-null2}
	\Theta_n \asymp K \| \mu \|^2. 
\end{align}
In Lemma \ref{lem:varV}, we showed that 
\begin{align*}
	\var(A_2) &\lesssim \sum_k \sum_{i \in S_k} \frac{N_i^2 \| \Omega_i \|_2^2 }{n_k^4 \bar{N}_k^4}
\end{align*}
We conclude by Lemma \ref{lem:varV} that under the null 
\begin{align}
	\var(V) \les \sum_k \frac{ \| \mu \|^2}{n_k^2 \bar{N}_k^2} 
	\vee 
	\sum_k \frac{ \| \mu \|_3^3}{ n_k \bar{N}_k}.
	\label{eqn:var-est-null3}
\end{align}
By Chebyshev's inequality, \eqref{eqn:var-est-null2},  \eqref{eqn:var-est-null3}, and assumption \eqref{eqn:null_ell2} of the theorem statement, we have
\begin{align*}
	\frac{|V-\E V| }{\var(T)} \asymp 	\frac{|V-\E V |}{K \| \mu \|^2} = o_{\mathbb{P}}(1).
\end{align*}
Thus by \eqref{eqn:var-est-null1}, 
\begin{align*}
	\frac{V}{\var(T)}
	&= \frac{(V - \E V)}{\var(T)} + \frac{ \E V }{\var(T) }
	= o_{\mathbb{P}}(1) + [1 + o(1)],
\end{align*}
as desired. 
\qed 

\subsection{Proof of Lemma~\ref{lem:var_lbd}}

By Lemmas \ref{lem:decompose}--\ref{lem:var4}, we have
\begin{align}
	\label{eqn:var_lbd}
	\var(T) &= \sum_{a = 1}^4 \var( \bo' U_a )
	\geq (\sum_{a = 2}^4 \Theta_{na}) - (A_n + B_n + E_n).
\end{align}
Using that $\max_i \| \Omega_i \|_\infty \leq 1 - c_0$, we have $	\| \Omega_i \|^3 \leq (1 - c_0) \| \Omega_i \|^2$, which implies that 
\begin{align*}
	\num \label{eqn:An_tool}
	A_n \leq (1 - c_0) \Theta_{n2}. 
\end{align*}
Again using $\max_i \| \Omega_i \|_\infty \leq 1 - c_0$, as well as $\sum_{j'} \Omega_{ij'} = 1$, we have
\begin{align*}
	B_n &= \frac{2}{n^2\bar{N}^2} \sum_{k\neq \ell }\sum_{i\in S_k}\sum_{m\in S_{\ell}}\sum_{j,j'}N_iN_m\Omega_{ij}\Omega_{ij'}\Omega_{mj}\Omega_{mj'}
	\\&\leq (1-c_0) \cdot  \frac{2}{n^2\bar{N}^2} \sum_{k\neq \ell }\sum_{i\in S_k}\sum_{m\in S_{\ell}}\sum_{j,j'}N_iN_m\Omega_{ij}\Omega_{ij'}\Omega_{mj} 
	\\&= (1-c_0) \cdot  \frac{2}{n^2\bar{N}^2} \sum_{k\neq \ell }\sum_{i\in S_k}\sum_{m\in S_{\ell}}\sum_{j}N_iN_m\Omega_{ij}\Omega_{mj}
	\\&\leq (1-c_0) \cdot \Theta_{n3}. \num \label{eqn:Bn_tool}
\end{align*}
Similarly to control $E_n$, we again use $\max_i \| \Omega_i \|_\infty \leq 1 - c_0$ and obtain 
\begin{align*}
	E_n &= 2\sum_k
	\sum_{\substack{i\in S_k, m\in S_k,\\ i\neq m}} \sum_{1 \leq j, j' \leq p}  \Bigl(\frac{1}{n_k\bar{N}_k}-\frac{1}{n\bar{N}}\Bigr)^2  N_i N_m 
	\Omega_{ij} \Omega_{ij'} \Omega_{mj} 
	\Omega_{mj'}
	\\& \leq (1 - c_0) \cdot  2\sum_k
	\sum_{\substack{i\in S_k, m\in S_k,\\ i\neq m}} \sum_{1 \leq j, j' \leq p}  \Bigl(\frac{1}{n_k\bar{N}_k}-\frac{1}{n\bar{N}}\Bigr)^2  N_i N_m 
	\Omega_{ij} \Omega_{ij'} \Omega_{mj} 
	\\& \leq (1 - c_0) \cdot  2\sum_k
	\sum_{\substack{i\in S_k, m\in S_k,\\ i\neq m}} \sum_{1 \leq j\leq p}  \Bigl(\frac{1}{n_k\bar{N}_k}-\frac{1}{n\bar{N}}\Bigr)^2  N_i N_m 
	\Omega_{ij}  \Omega_{mj}
	\\& \leq (1 - c_0) \cdot \Theta_{n4}. \num  \label{eqn:En_tool}
\end{align*}
Combining \eqref{eqn:var_lbd}, \eqref{eqn:An_tool}, \eqref{eqn:Bn_tool}, and \eqref{eqn:En_tool} finishes the proof. 

\qed 

\subsection{Proof of Proposition~\ref{prop:var_estimation_alt}}

By Lemmas \ref{lem:Theta_n2+n3+n4} and \ref{lem:var_lbd}, 
\begin{align*}
	\num \label{eqn:var_alt1}
	\var(T)  \gtrsim \Theta_{n2} + \Theta_{n3} + \Theta_{n4}
	\gtrsim \sum_k \| \mu_k \|^2. 
\end{align*}
By Lemma \ref{lem:varV}, 
\begin{align*}
	\num \label{eqn:var_alt3}
	\var(V) &\lesssim \sum_k \frac{ \| \mu_k \|^2}{n_k^2 \bar{N}_k^2} 
	\vee \sum_k \frac{ \| \mu_k \|_3^3}{ n_k \bar{N}_k} \\
\end{align*}
Using a similar argument based on Chebyshev's inequality as in the proof of Proposition \ref{prop:var_estimation_null} and applying  \eqref{eqn:var_alt1} and \eqref{eqn:var_alt3}, we have 
\begin{align}
	\label{eqn:var_alt_Chebyshev}
	\frac{|V - \E V|}{\var(T)} \gtrsim \frac{|V - \E V|}{\sum_k \| \mu_k \|^2}
	= o_{\mathbb{P}}(1). 
\end{align}
Next, by Lemma \ref{lem:Vdecompose} and \eqref{eqn:var_alt1},
\begin{align*}
	\num \label{eqn:var_alt2}
	\E V = \Theta_{n2} + \Theta_{n3} + \Theta_{n4} \lesssim \var(T).
\end{align*}
Combining \eqref{eqn:var_alt_Chebyshev} and \eqref{eqn:var_alt2} finishes the proof.
\qed

\subsection{Proof of Proposition~\ref{prop:var_estimation_null_K=n}}

From the proof of Lemma \ref{lem:Vdecompose}, we have
\begin{align*}
	V^* = V_1 = \Theta_{n2} + 
	A_{11} + A_2,
\end{align*}
and the terms on the right-hand-side are mutually uncorrelated. 
From \eqref{eqn:mu3_vs_muk3}, we have 
\begin{align*}
	\var(A_{11}) \les \sum_i \frac{\|\Omega_i \|_3^3}{N_i} \\
	\var(A_2) \les \sum_i \frac{ \| \Omega_i \|^2}{N_i^2}. 
\end{align*}
Hence
\begin{align*}
	\E V^* &= \Theta_{n2} \\
	\var(V^*) &\les \sum_i \frac{\|\Omega_i \|_3^3}{N_i} \vee \sum_i \frac{ \| \Omega_i \|^2}{N_i^2}. 
	\num \label{eqn:V*_properties}
\end{align*}

Since $K = n$ and the null hypothesis holds, we have $\Theta_{n1} \equiv \Theta_{n4} \equiv 0$. Moreover, by \eqref{eqn:Thetan3_bd}, we have
\begin{align*}
	\Theta_{n3} \lesssim \| \mu \|^2 \ll \Theta_{n2} \asymp n \| \mu \|^2. 
\end{align*}
It  follows that 
\begin{align}
	\var(T) = [1+o(1)]\Theta_{n2} \asymp n \| \mu \|^2
	\label{eqn:varT_asymp_K=n}.
\end{align} Thus by \eqref{eqn:V*_properties} and Chebyshev's inequality, we have 
\begin{align*}
	\frac{V^*}{\var(T)} 
	= 	\frac{V^*- \E V^*}{\var(T)}  + 	\frac{\E V^*}{\var(T)} 
	= o_{\mathbb{P}}(1) + 1 + o(1), 
\end{align*}
as desired. 

\qed 

\subsection{Proof of Proposition~\ref{prop:var_estimation_alt_K=n}}

By Lemmas \ref{lem:Theta_n2+n3+n4} and \ref{lem:var_lbd}, 
\begin{align*}
	\num \label{eqn:var_alt1_K=n}
	\var(T)  \gtrsim \Theta_{n2} + \Theta_{n3} 
	\gtrsim \sum_i \| \Omega_i \|^2. 
\end{align*}
By \eqref{eqn:V*_properties}, 
\begin{align*}
	\num \label{eqn:var_alt3_K=n}
	\var(V^*) &\lesssim \sum_i \frac{ \| \Omega_i \|^2}{N_i^2} 
	\vee \sum_i \frac{ \| \Omega_i \|_3^3}{ N_i} \\
\end{align*}
Using a similar argument based on Chebyshev's inequality as in the proof of Proposition \ref{prop:var_estimation_null} and applying  \eqref{eqn:var_alt1_K=n} and \eqref{eqn:var_alt3_K=n}, we have 
\begin{align}
	\label{eqn:var_alt_Chebyshev_K=n}
	\frac{|V^* - \E V^*|}{\var(T)} \gtrsim \frac{|V^* - \E V^*|}{\sum_i \| \Omega_i  \|^2}
	= o_{\mathbb{P}}(1). 
\end{align}
Next, by Lemma \ref{lem:Vdecompose} and \eqref{eqn:var_alt1_K=n},
\begin{align*}
	\num \label{eqn:var_alt2_K=n}
	\E V^* = \Theta_{n2}  \lesssim \var(T).
\end{align*}
Combining \eqref{eqn:var_alt_Chebyshev} and \eqref{eqn:var_alt2_K=n} finishes the proof.
\qed 

\section{Proofs of asymptotic normality results}
\label{sec:asymptotic_normality_appendix}
The goal of this section is to prove Theorems \ref{thm:null} and \ref{thm:null2}. The argument relies on the martingale central limit theorem and the lemmas stated below. As a preliminary, we describe a martingale decomposition of $T$ under the null.  

Define 
\begin{align*}
	U = \bo' (U_3 + U_4), \qquad \text{and } \,\,\, S = \bo' U_2. 
\end{align*}
By Lemma \ref{lem:decompose}, we have $T = U + S$ under the null hypothesis. It holds that 
\begin{align}
	U = \sum_{i < i'} \sigma_{i, i'} \sum_{r= 1}^{N_i} \sum_{s = 1}^{N_{i'}} \big( \sum_j Z_{ijr} Z_{i'js} \big).  
	\label{eqn:U_stat}
\end{align}

where we define 
\[
\sigma_{i,i'} 
= \begin{cases}
	2	\big( \frac{1}{n_k \bar{N}_k} - \frac{1}{n \bar{N}} \big) &\quad \text{ if } i, i' \in S_k \text{ for some } k \\
	- \frac{2}{n \bar{N}} &\quad \text{ else}.
\end{cases}
\]
Define a sequence of random variables
\begin{align}
	D_{\ell, s}
	= \sum_{ i \in [\ell - 1] } 
	\sigma_{i, \ell} \sum_{r = 1}^{N_i} \sum_j Z_{ijr} Z_{\ell j s}
	\label{eqn:Dells_def}
\end{align}
indexed by $(\ell, s ) \in \{ (i, r) \}_{1 \leq i \leq n, 1 \leq r \leq N_i}$, where these tuples are placed in lexicographical order. Precisely, we define
\[
(\ell_1, s_1) \prec (\ell_2, s_2) 
\]
if either
\begin{itemize}
	\item $\ell_1 < \ell_2$, or
	\item $\ell_1 = \ell_2$ and $s_1 < s_2$. 
\end{itemize}

Observe that 
\[
\sum_{\ell, s} D_{\ell,s} = U.
\]
Next define  $\mc{F}_{\prec (\ell, s)} $ to be the $\sigma$-field generated by $\{ Z_{i:r} \}_{(i,r) \prec (\ell, s) }$. Observe that 
\[
\E[ D_{\ell, s} | \mc{F}_{\prec (\ell, s)} ] = 0,
\]
and hence $\{ D_{\ell, s } \}$ is a martingale difference sequence. 
%
Turning to $S$, we have 
\begin{align}
	S =  \sum_{i  =1}^n \sigma_{i} \sum_{r < s} \sum_j Z_{ijr}Z_{ijs}. 
	\label{eqn:S_def}
\end{align}
where we define 
\[
\sigma_i =  2\Bigl(\frac{1}{n_k\bar{N}_k} -\frac{1}{n\bar{N}} \Bigr)\frac{N_i}{N_i-1}
\]
if $i \in S_k$. 
Define
\begin{align}
	E_{\ell, s} = 	\sigma_{\ell} \sum_{ r  \in [s-1] } \sum_j Z_{\ell jr} Z_{\ell js}.  
	\label{eqn:Eells_def}
\end{align}
Note that $E_{\ell, 1} = 0$. Order $(\ell, s)$ lexicographically as above, and recall that $\mc{F}_{\prec (\ell, s)} $ is the $\sigma$-field generated by $\{ Z_{i:r} \}_{(i,r) \prec (\ell, s) }$. Observe that 
\[
\E[ E_{\ell, s} | \mc{F}_{\prec (\ell, s)} ] = 0,
\]
and hence $\{ E_{\ell, s } \}$ is a martingale difference sequence. 
We have 
\begin{align*}
	\sum_{(\ell, s)} \sigma_{\ell} \sum_{ r  \in [s-1] } \sum_j Z_{\ell jr} Z_{\ell js}
	&= \sum_{\ell = 1}^n \sum_{s = 1}^{N_\ell}  \sigma_{\ell} \sum_{ r  \in [s-1] } \sum_j Z_{\ell jr} Z_{\ell js}
	=S. 
\end{align*}
Define
\begin{align*}
	\num \label{eqn:T_mg_terms}
	\mc{M}_{\ell, s} = D_{\ell, s} + E_{\ell, s},
	\quad \widetilde{\mc{M}}_{\ell, s} = 
	\frac{ \mc{M}_{\ell, s} }{\sqrt{\var(T)}}. 
\end{align*}
Thus we obtain the martingale decomposition:
\begin{align*}
	\num \label{eqn:T_mg_decomp}
	T = U + S = \sum_{(\ell, s)} [ D_{\ell, s}   + E_{\ell, s} ] = \sum_{(\ell, s)} \mc{M}_{\ell, s}.  
\end{align*}

The technical results below are crucial to the proof of Theorem \ref{thm:null} given in Section \ref{sec:thm-null-proof}. Theorem \ref{thm:null2} then  follows easily from Theorem \ref{thm:null} and Theorem \ref{prop:var_estimation_null}. 




\begin{lemma}
	\label{lem:condl_var_bias} 
	Let $\widetilde{\mc{M}}_{\ell,s}$ be defined as in \eqref{eqn:T_mg_terms}. It holds that 
	\begin{align*}
		\E \bigg[  \sum_{(\ell, s)} \var\big( \widetilde{\mc{M}}_{\ell, s} \big| \mc{F}_{\prec (\ell, s)} \big) \bigg]  = 1. 
	\end{align*}
\end{lemma}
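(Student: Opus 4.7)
The claim is essentially a tautology once one accepts the martingale decomposition \eqref{eqn:T_mg_decomp}. My plan is as follows.

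First, I would verify carefully that $\{\mc{M}_{\ell,s}\}$, ordered lexicographically, is a martingale difference sequence with respect to the filtration $\{\mc{F}_{\prec(\ell,s)}\}$. This follows because each term $D_{\ell,s}$ and $E_{\ell,s}$ involves factors $Z_{\ell j s}$ (with the index $(\ell,s)$ not yet measurable with respect to $\mc{F}_{\prec(\ell,s)}$), and the $Z_{ir}$ are independent mean-zero vectors; conditional on $\mc{F}_{\prec(\ell,s)}$, only $Z_{\ell:s}$ remains random in each summand, so the conditional expectation vanishes.

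Second, I would apply the tower property. Since $\E[\mc{M}_{\ell,s}\mid \mc{F}_{\prec(\ell,s)}] = 0$,
\begin{equation*}
\E\bigl[\var(\widetilde{\mc{M}}_{\ell,s}\mid \mc{F}_{\prec(\ell,s)})\bigr]
= \E\bigl[\widetilde{\mc{M}}_{\ell,s}^{\,2}\bigr]
= \frac{\E[\mc{M}_{\ell,s}^2]}{\var(T)}.
\end{equation*}
Summing over all indices $(\ell,s)$ and using that martingale differences are mutually orthogonal in $L^2$,
\begin{equation*}
\sum_{(\ell,s)} \E[\mc{M}_{\ell,s}^2]
= \E\Bigl[\bigl(\sum_{(\ell,s)} \mc{M}_{\ell,s}\bigr)^{\!2}\Bigr]
= \E[T^2].
\end{equation*}

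Third, under the null hypothesis $\rho^2 = 0$, so by Lemma~\ref{lem:decompose} the term $U_1 \equiv 0$ and $\E[T] = \rho^2 = 0$, giving $\E[T^2] = \var(T)$. Dividing by $\var(T)$ yields the claim.

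I do not anticipate any real obstacle: the result is a bookkeeping consequence of the martingale difference property plus the identity $T = \sum \mc{M}_{\ell,s}$ under $H_0$. The only thing worth double-checking is the compatibility between the ordering of $(\ell,s)$ used to index the filtration and the fact that $D_{\ell,s}$ ranges over $i \in [\ell-1]$ (not $i \in [\ell-1]$ together with $r \leq $ something involving $s$); since $D_{\ell,s}$ only uses $Z_{ir}$ with $i < \ell$, it is measurable with respect to $\mc{F}_{\prec(\ell,1)} \subseteq \mc{F}_{\prec(\ell,s)}$, and combined with the factor $Z_{\ell j s}$ the conditional mean is zero, as needed.
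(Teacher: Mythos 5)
Your proof is correct and rests on the same core ingredients as the paper's: the martingale-difference property of $\mc{M}_{\ell,s}$, the tower property (which converts expected conditional variance into $\E[\mc{M}_{\ell,s}^2]$), $L^2$-orthogonality of martingale differences, and the fact that $\E[T]=\rho^2=0$ under $H_0$. The only difference in packaging is that the paper splits $\mc{M}_{\ell,s}=D_{\ell,s}+E_{\ell,s}$ and matches $\sum\E[D_{\ell,s}^2]=\var(U)$ and $\sum\E[E_{\ell,s}^2]=\var(S)$ separately (using $\var(T)=\var(S)+\var(U)$ and the conditional orthogonality $\E[D_{\ell,s}E_{\ell,s}\mid\mc{F}_{\prec(\ell,s)}]=0$), whereas you work directly at the $\mc{M}_{\ell,s}$ level and invoke $\sum\E[\mc{M}_{\ell,s}^2]=\E[T^2]=\var(T)$ in one step; your version is slightly more compact but mathematically equivalent.
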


\begin{lemma}
	\label{lem:condl_var_U} 
	Suppose that $\min N_i \geq 2$ and $\max \| \Omega_i \|_\infty \leq 1 - c_0$. Under the null hypothesis, it holds that 
	\begin{align*}
		\var\bigg( 	\sum_{(\ell, s)} \var(  D_{\ell,s} | \mc{F}_{\prec (\ell, s)} ) \bigg) 
		\lesssim \big( \sum_k \frac{1}{n_k \bar{N}_k} \big) \| \mu \|_3^3 + K \| \mu \|_4^4.
	\end{align*}
\end{lemma}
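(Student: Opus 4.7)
The plan is to express $W := \sum_{(\ell,s)}\var(D_{\ell,s}\mid \mc{F}_{\prec(\ell,s)})$ explicitly as a degree-$2$ polynomial in the independent vectors $\{Z_{i:r}\}$, split it into uncorrelated chaos pieces, and bound each piece by reducing $\Omega_\ell$-averages to $\mu$ via the null identity $\sum_{\ell\in S_k} N_\ell \Omega_\ell = n_k \bar{N}_k \mu$.

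Given $\mc{F}_{\prec(\ell,s)}$, the variable $D_{\ell,s}$ is linear in $Z_{\ell:s}$: with $Y_{\ell j} := \sum_{i<\ell}\sigma_{i,\ell}\sum_{r=1}^{N_i} Z_{ijr}$ being $\mc{F}_{\prec(\ell,s)}$-measurable, we have $D_{\ell,s}=\sum_j Y_{\ell j} Z_{\ell j s}$. Writing $M_\ell := \diag(\Omega_\ell) - \Omega_\ell\Omega_\ell'$ for the covariance matrix of $Z_{\ell:1}$, the conditional variance $Y_\ell' M_\ell Y_\ell$ does not depend on $s$, so $W = \sum_\ell N_\ell Y_\ell' M_\ell Y_\ell$. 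Expanding $Y_\ell' M_\ell Y_\ell$ and splitting ``pure-squares'' $(i_1,r_1)=(i_2,r_2)$ from ``cross-products'' $(i_1,r_1)\neq (i_2,r_2)$, then centering the pure-squares, yields $W = \E W + A + B$ with
\begin{align*}
A &= \sum_{i,r}\bigl[ Z_{i:r}' P_i Z_{i:r}-\tr(P_i M_i)\bigr], & P_i &:= \sum_{\ell>i}N_\ell\sigma_{i,\ell}^2 M_\ell,\\
B &= \sum_{(i,r)\neq(i',r')} Z_{i:r}' Q_{i,i'} Z_{i':r'}, & Q_{i,i'} &:= \sum_{\ell>\max(i,i')}N_\ell\sigma_{i,\ell}\sigma_{i',\ell} M_\ell.
\end{align*}
A direct moment check shows $\E[AB]=0$ (any cross product contains a mean-zero $Z$-factor in $B$ not matched by an $A$-factor), so $\var(W) = \var(A) + \var(B)$.

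Bounding $\var(A)$ is the cleaner step: the $(i,r)$-summands are independent, and a direct calculation for the bounded multinomial increment $Z_{i:r}$ gives $\var(Z_{i:r}' P_i Z_{i:r})\lesssim \sum_j \Omega_{ij}P_i(j,j)^2$. For $i\in S_k$, $|\sigma_{i,\ell}|^2\lesssim (n_k\bar{N}_k)^{-2}$ combined with the null identity yields $P_i(j,j)\lesssim \mu_j/(n_k\bar{N}_k)$; summing over $r\in[N_i]$, then $i\in S_k$, and finally $k$, gives $\var(A) \lesssim \bigl(\sum_k (n_k\bar{N}_k)^{-1}\bigr)\|\mu\|_3^3$, matching the first target term.

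For $\var(B)$, expanding $\E B^2$ and pairing up the two possible tuple-matchings, then using $M_i\preceq \diag(\Omega_i)$ inside the trace, reduces the task to $\var(B) \lesssim \sum_{i_1,i_2}N_{i_1}N_{i_2}\sum_{j,j'}Q_{i_1,i_2}(j,j')^2 \Omega_{i_1 j}\Omega_{i_2 j'}$. The diagonal $j=j'$ piece yields $K\|\mu\|_4^4$ via the null identity. The main obstacle is the off-diagonal $j\neq j'$ piece: $Q_{i_1,i_2}$ has nontrivial block structure in $(k(i_1),k(i_2),k(\ell))$, and the key trick is the centered decomposition $\Omega_{\ell j}=\mu_j + \epsilon_{\ell j}$ (with $\sum_{\ell\in S_k} N_\ell \epsilon_{\ell j}=0$ under the null), which kills the linear-in-$\epsilon$ cross terms and isolates a leading $\mu_j\mu_{j'}$-term plus a lower-order $\epsilon$-remainder controlled by $\|\Sigma_k\|_F$. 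A final case split (same-group vs.\ cross-group) consolidates the off-diagonal contribution at the same order as the target bound.
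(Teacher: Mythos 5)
Your decomposition mirrors the paper's: $W=\E W + A + B$ is the same split as the paper's $V_1$ (pure squares, $(i,r)=(i',r')$) and $V_2$ (cross terms), and $\E[AB]=0$ is correct. Your treatment of $\var(A)$ is a genuine simplification of the paper's Lemma~\ref{lem:V1}: observing that $Z_{i:r}'P_iZ_{i:r}$ is a bounded function of a single categorical draw, with $|f(j)|\lesssim P_i(j,j)$ thanks to $\|\Omega_i\|_\infty\le 1-c_0$, replaces the paper's seven-term case expansion with a one-line second-moment bound; combining with $P_i(j,j)\lesssim\mu_j/(n_k\bar{N}_k)$ under the null yields $(\sum_k 1/(n_k\bar{N}_k))\|\mu\|_3^3$ directly. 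Likewise the reduction $\var(B)\lesssim\sum_{i_1,i_2}N_{i_1}N_{i_2}\sum_{j,j'}Q_{i_1,i_2}(j,j')^2\,\Omega_{i_1j}\Omega_{i_2j'}$ via $M_i\preceq\diag(\Omega_i)$ is correct, and compresses the paper's casework for $\var(V_2)$; the diagonal $j=j'$ piece does give $K\|\mu\|_4^4$ as you claim.

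The off-diagonal $j\neq j'$ piece is where there is a gap. First, the centering $\Omega_{\ell j}=\mu_j+\epsilon_{\ell j}$ does not kill the linear-in-$\epsilon$ terms inside $Q_{i_1,i_2}$, because the $\ell$-sum runs only over $\ell>\max(i_1,i_2)$ (the martingale makes it one-sided), not over all of a group; the null identity $\sum_{\ell\in S_{k'}}N_\ell\epsilon_{\ell j}=0$ is for a full group sum and does not apply to a partial sum. The paper sidesteps this by first taking absolute values and extending to the full range in Lemma~\ref{lem:alpha_bounds}, at the cost of forgoing cancellation. Second, and more substantively, ``controlled by $\|\Sigma_k\|_F$'' is quantitatively too weak. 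After summing $i_1,i_2\in S_k$ under the null, the off-diagonal contribution for a group $k$ reduces (up to constants) to $\sum_{j\neq j'}\mu_j\mu_{j'}\Sigma_k(j,j')^2 = \mu'\Sigma_k^{\circ 2}\mu$, and the closing step is the parameter-free bound $\mu'\Sigma_k^{\circ 2}\mu\le\|\mu\|_4^4$, which uses Cauchy--Schwarz together with row-stochasticity $\sum_j\Omega_{ij}\Omega_{i'j}\le 1$ (see \eqref{eqn:muSig2_bd}). Dropping the $\mu_j\mu_{j'}$ weight and bounding by $\|\Sigma_k\|_F^2$ instead overshoots by a factor as large as $\|\mu\|_\infty^{-2}$: with $K=1$, $p=n$, $N_i\equiv N$, $\mu_j\equiv 1/p$, and $\Omega_i$ placing half its mass on coordinate $i$ and the rest uniformly (a valid null instance obeying $\|\Omega_i\|_\infty\le 1-c_0$), one has $\|\Sigma_1\|_F^2\asymp 1/p$ while $\|\mu\|_4^4\asymp 1/p^3$. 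You also cannot fall back on the smallness of $\sum_k\|\Sigma_k\|_F^2$ implied by $\beta_n=o(1)$, since this lemma is stated under strictly weaker hypotheses. The fix is to carry the $\mu_j\mu_{j'}$ weight through and close with the Cauchy--Schwarz step, not a Frobenius-norm argument.
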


\begin{lemma}
	\label{lem:fourth_mom_U} 
	Suppose that $\min N_i \geq 2$ and $\max \| \Omega_i \|_\infty \leq 1 - c_0$. Under the null hypothesis, it holds that 
	\begin{align*}
		\sum_{(\ell, s)} \E D_{\ell, s}^4 
		\les  \big(\sum_k \frac{1}{n_k^2 \bar{N}_k^2} \big)\| \mu \|^2 + \big(  \sum_k \frac{ 1 }{ n_k \bar{N}_k} \big) \| \mu \|_3^3 \, , 
	\end{align*}
\end{lemma}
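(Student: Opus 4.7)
The plan is to exploit the martingale structure: since $D_{\ell,s}=\sum_j Y_{\ell,j}Z_{\ell j s}$ with $Y_{\ell,j}:=\sum_{i<\ell}\sigma_{i,\ell}\sum_{r=1}^{N_i}Z_{ijr}$, the weight vector $Y_{\ell,\cdot}$ is independent of $Z_{\ell,\cdot,s}$. Since $Z_{\ell,\cdot,s}=B_{\ell,\cdot,s}-\Omega_\ell$ with $B_{\ell,\cdot,s}$ a single-trial multinomial, the conditional random variable $W:=\sum_j Y_{\ell,j}B_{\ell j s}$ equals $Y_{\ell,J}$ where $J$ is drawn from $\Omega_\ell$. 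A direct computation of the fourth central moment (using that $(\mathbb{E}W)^4\le\mathbb{E}W^4$) yields
\[
\mathbb{E}\!\left[D_{\ell,s}^4\,\bigl|\,\mathcal{F}_{\prec(\ell,s)}\right]\;\lesssim\;\sum_j \Omega_{\ell j}\,Y_{\ell,j}^4.
\]

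Next I would bound $\mathbb{E}Y_{\ell,j}^4$ by Rosenthal's inequality. Since the summands $\sigma_{i,\ell}\sum_{r}Z_{ijr}$ are independent and mean-zero with $\mathrm{Var}(Z_{ijr})\le\Omega_{ij}$ and $|Z_{ijr}|\le 1$, I get
\[
\mathbb{E}Y_{\ell,j}^4\;\lesssim\;\Bigl(\sum_{i<\ell}\sigma_{i,\ell}^2 N_i\Omega_{ij}\Bigr)^{\!2}+\sum_{i<\ell}\sigma_{i,\ell}^4 N_i\Omega_{ij}.
\]
Using $\sigma_{i,\ell}^2\asymp (n_k\bar N_k)^{-2}$ when $i,\ell\in S_k$ and $\sigma_{i,\ell}^2\asymp (n\bar N)^{-2}$ otherwise, splitting the sum by group, and invoking the null hypothesis identity $\sum_{i\in S_k}N_i\Omega_{ij}=n_k\bar N_k\mu_j$, both pieces collapse cleanly: for $\ell\in S_k$,
\[
\sum_{i<\ell}\sigma_{i,\ell}^{2}N_i\Omega_{ij}\;\lesssim\;\frac{\mu_j}{n_k\bar N_k},\qquad \sum_{i<\ell}\sigma_{i,\ell}^{4}N_i\Omega_{ij}\;\lesssim\;\frac{\mu_j}{(n_k\bar N_k)^3}.
\]
Hence $\mathbb{E}Y_{\ell,j}^4\lesssim \mu_j^2/(n_k\bar N_k)^2+\mu_j/(n_k\bar N_k)^3$.

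Finally, summing over $(\ell,s)$ (noting that the distribution of $D_{\ell,s}$ does not depend on $s$, so one factor of $N_\ell$ appears),
\[
\sum_{(\ell,s)}\mathbb{E}D_{\ell,s}^4\;\lesssim\;\sum_k\frac{1}{(n_k\bar N_k)^2}\sum_{\ell\in S_k}N_\ell\sum_j\Omega_{\ell j}\mu_j^2+\sum_k\frac{1}{(n_k\bar N_k)^3}\sum_{\ell\in S_k}N_\ell\sum_j\Omega_{\ell j}\mu_j.
\]
Applying $\sum_{\ell\in S_k}N_\ell\Omega_{\ell j}=n_k\bar N_k\mu_j$ once more gives $\sum_{\ell\in S_k}N_\ell\sum_j\Omega_{\ell j}\mu_j^2=n_k\bar N_k\|\mu\|_3^3$ and $\sum_{\ell\in S_k}N_\ell\sum_j\Omega_{\ell j}\mu_j=n_k\bar N_k\|\mu\|^2$, producing exactly the claimed bound $\sum_k(n_k\bar N_k)^{-2}\|\mu\|^2+\sum_k(n_k\bar N_k)^{-1}\|\mu\|_3^3$.

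The main obstacle is the first step: moving from the raw fourth moment $\mathbb{E}[D_{\ell,s}^4\mid\mathcal{F}_{\prec(\ell,s)}]$ to the clean bound $\sum_j\Omega_{\ell j}Y_{\ell,j}^4$. The naive $\mathbb{E}(W-\mathbb{E}W)^4\le 8\mathbb{E}W^4+8(\mathbb{E}W)^4$ works because $|W|\le\max_j|Y_{\ell,j}|$ and Jensen yields $(\mathbb{E}W)^4\le\mathbb{E}W^4=\sum_j\Omega_{\ell j}Y_{\ell,j}^4$, but one must be careful that this really does collapse to the diagonal fourth moment rather than leaving cross-terms that would require higher-order norms of $\mu$ to control after summation. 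Once this conditional bound is in hand, the remainder is bookkeeping and repeated use of the null identity $\mu_k=\mu$.
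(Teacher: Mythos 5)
Your proof is correct, and it takes a genuinely different — and considerably cleaner — route than the paper's. The paper establishes this lemma by brute force: it expands $\E D_{\ell,s}^4$ into a fourfold sum over the word indices $j_1,\dots,j_4$, performs a five-case analysis of coincidence patterns (\textit{Cases 1--5}), and then runs a further case analysis over coincidence patterns of the document indices inside the coefficients $A_{j_1,j_2,j_3,j_4}$, invoking bounds on $\alpha_{ii'j_1j_2}$ from a separate lemma. Your proof instead exploits the fact that $Z_{\ell\cdot s}=B_{\ell\cdot s}-\Omega_\ell$ with $B_{\ell\cdot s}$ a \emph{one-trial} multinomial, so that $\sum_j Y_{\ell,j}B_{\ell j s}$ is just $Y_{\ell,J}$ for a single random coordinate $J\sim\Omega_\ell$. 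Conditioning on $\mathcal{F}_{\prec(\ell,s)}$, the elementary bound $(W-\E W)^4\le 8W^4+8(\E W)^4$ plus Jensen collapses the conditional fourth moment to $\lesssim\sum_j\Omega_{\ell j}Y_{\ell,j}^4$ with no cross terms, and Rosenthal's inequality then absorbs all the remaining combinatorics in one line. The same null identity $\sum_{i\in S_k}N_i\Omega_{ij}=n_k\bar N_k\mu_j$ does the bookkeeping for both proofs. Your route is shorter and more conceptual; the paper's is more explicit about which coincidence pattern dominates, and it avoids invoking Rosenthal.

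One small expository remark on your Rosenthal step: if you apply Rosenthal at the level of the sum over $i$ (treating $\sigma_{i,\ell}\sum_r Z_{ijr}$ as the independent summands), the fourth-moment term produces both $\sum_i\sigma_{i,\ell}^4 N_i\Omega_{ij}$ \emph{and} $\sum_i\sigma_{i,\ell}^4 N_i^2\Omega_{ij}^2$; the latter is harmless because it is dominated by the squared sum, $\sum_i(\sigma_{i,\ell}^2 N_i\Omega_{ij})^2\le(\sum_i\sigma_{i,\ell}^2 N_i\Omega_{ij})^2$, but this deserves a word. Alternatively, apply Rosenthal directly over the double index $(i,r)$ — all these summands are mutually independent — and you land on exactly the two terms you wrote with no bookkeeping.
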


\begin{lemma}
	\label{lem:S_condvar} 
	Suppose that $\min N_i \geq 2$ and and $\max \| \Omega_i \|_\infty \leq 1 - c_0$. Then we have 
	\begin{align}
		\var\bigg(\sum_{(\ell, s)} \var( \tilde E_{\ell,s} | \mc{F}_{\prec (\ell, s)} )\bigg) \les \sum_k \sum_{i \in S_k} \frac{ N_i^3 \| \Omega_i \|_3^3 }{ n_k^4 \bar{N}_k^4 }
		\vee \sum_k \sum_{i  \in S_k } \frac{N_i^4 \| \Omega_i \|_4^4}{n_k^4 \bar{N}_k^4} 
	\end{align}
\end{lemma}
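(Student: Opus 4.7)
The plan is to reduce the outer variance to a sum $\sum_\ell \var(W_\ell)$ where $W_\ell := \sum_{s=1}^{N_\ell}\var(E_{\ell,s}\mid\mc{F}_{\prec(\ell,s)})$. Each $E_{\ell,s}$ depends only on $\{Z_{\ell:r}\}_{r\leq s}$, so $W_\ell$ is a measurable function of $\{Z_{\ell:r}\}_{r<N_\ell}$ alone; since these blocks are independent across $\ell$, the variances add. Exploiting that $Z_{\ell:s}$ is independent of $\mc{F}_{\prec(\ell,s)}$ with $\E Z_{\ell:s}=0$ and $\mathrm{Cov}(Z_{\ell:s}) = C^{(\ell)}:=\diag(\Omega_\ell)-\Omega_\ell\Omega_\ell^{T}$, I will establish
\[
\var(E_{\ell,s}\mid\mc{F}_{\prec(\ell,s)}) = \sigma_\ell^2\, M_{\ell,s-1}^{T} C^{(\ell)} M_{\ell,s-1}, \qquad M_{\ell,t}:=\sum_{r=1}^{t}Z_{\ell:r},
\]
so that $W_\ell = \sigma_\ell^2 \sum_{r,r'=1}^{N_\ell-1}(N_\ell-\max(r,r'))\,Z_{\ell:r}^{T} C^{(\ell)} Z_{\ell:r'}$.

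Next I split $W_\ell$ into its diagonal ($r=r'$) and off-diagonal ($r\neq r'$) pieces $A_\ell$ and $B_\ell$; these are uncorrelated by the same pairwise-independence arguments already used for $U_2$ and $U_3$ in Lemmas~\ref{lem:var2}--\ref{lem:var3}, so $\var(W_\ell)=\var(A_\ell)+\var(B_\ell)$. For the diagonal, $\var(A_\ell)=\sigma_\ell^4\sum_r(N_\ell-r)^2\var(Z_{\ell:r}^{T} C^{(\ell)} Z_{\ell:r})$. The representation $Z_{\ell:r}=e_J-\Omega_\ell$ with $\pr(J=j)=\Omega_{\ell j}$ reduces $Z^{T} C^{(\ell)} Z$ to $(1+2\|\Omega_\ell\|_2^2)\Omega_{\ell J}-3\Omega_{\ell J}^2$ plus a deterministic shift; applying $\E\Omega_{\ell J}^k=\|\Omega_\ell\|_{k+1}^{k+1}$ and the bound $\|\Omega_\ell\|_\infty\leq 1-c_0$ yields $\var(Z^{T} C^{(\ell)} Z)\lesssim\|\Omega_\ell\|_3^3$. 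Combined with $\sum_r(N_\ell-r)^2\lesssim N_\ell^3$, this gives a contribution of order $\sigma_\ell^4 N_\ell^3\|\Omega_\ell\|_3^3$.

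For the off-diagonal, pairwise independence gives $\var(B_\ell)=2\sigma_\ell^4\sum_{r<r'}(N_\ell-r')^2\E[(Z_{\ell:r}^{T}C^{(\ell)}Z_{\ell:r'})^2]$, and the standard identity $\E[(Z_r^{T} C Z_{r'})^2]=\tr\bigl((C^{(\ell)})^4\bigr)$ for independent mean-zero $Z_r,Z_{r'}$ with common covariance $C^{(\ell)}$. A direct expansion gives $\bigl((C^{(\ell)})^2\bigr)_{jj}=\Omega_{\ell j}^2(1-2\Omega_{\ell j}+\|\Omega_\ell\|_2^2)$ and $\bigl((C^{(\ell)})^2\bigr)_{jj'}=-\Omega_{\ell j}\Omega_{\ell j'}(\Omega_{\ell j}+\Omega_{\ell j'}-\|\Omega_\ell\|_2^2)$ for $j\neq j'$; squaring and summing, then invoking the elementary fact $\|\Omega\|_2^8\leq\|\Omega\|_\infty^4\leq\|\Omega\|_4^4$ (together with $\|\Omega\|_2^2\leq\|\Omega\|_\infty$), yields $\tr\bigl((C^{(\ell)})^4\bigr)\lesssim\|\Omega_\ell\|_4^4$. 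Combined with $\sum_{r<r'}(N_\ell-r')^2\lesssim N_\ell^4$, this produces a contribution of order $\sigma_\ell^4 N_\ell^4\|\Omega_\ell\|_4^4$.

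Substituting $\sigma_\ell^4\asymp 1/(n_k^4\bar{N}_k^4)$ under \eqref{cond1-basic} and $N_\ell\geq 2$, and summing over $\ell\in S_k$ and $k$, delivers the claimed bound as the maximum of the two contribution types. The main obstacle is the single-trial moment bookkeeping: producing the sharp bounds $\var(Z^{T}C^{(\ell)}Z)\lesssim\|\Omega_\ell\|_3^3$ and $\tr\bigl((C^{(\ell)})^4\bigr)\lesssim\|\Omega_\ell\|_4^4$ requires carefully tracking many competing terms in $\|\Omega_\ell\|_p^p$ for various $p$, and using $\|\Omega_\ell\|_\infty\leq 1-c_0$ to absorb subdominant quantities such as $\|\Omega_\ell\|_2^4$, $\|\Omega_\ell\|_5^5$, and $\|\Omega_\ell\|_3^6$ into the desired $\|\Omega_\ell\|_3^3$ and $\|\Omega_\ell\|_4^4$ factors.
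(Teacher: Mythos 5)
Your proof is correct and structurally mirrors the paper's: both partition the cumulative conditional-variance process $\sum_{(\ell,s)}\var(E_{\ell,s}\mid\mc{F}_{\prec(\ell,s)})$ into a diagonal piece (indexed by $r=r'$) and an off-diagonal piece, observe that the pieces are uncorrelated and that the summands within each are pairwise uncorrelated, and then bound the variance by summing single-pair second moments. Where you differ is in how those single-pair bounds are obtained. The paper bounds $\E\varphi_{\ell r\ell r}^2$ and $\E\varphi_{\ell r\ell r'}^2$ by exhaustive case splits over the coincidence patterns of $(j_1,j_2,j_3,j_4)$ (displays \eqref{eqn:zeta_bd_equal_casework} and \eqref{eqn:zeta_bd_dist_casework}). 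You instead package everything as quadratic forms: $\var(E_{\ell,s}\mid\mc{F}_{\prec(\ell,s)}) = \sigma_\ell^2\,M_{\ell,s-1}^\top C^{(\ell)} M_{\ell,s-1}$ with $C^{(\ell)}=\diag(\Omega_\ell)-\Omega_\ell\Omega_\ell^\top$, then use the identity $\E[(Z_r^\top C^{(\ell)} Z_{r'})^2]=\tr\bigl((C^{(\ell)})^4\bigr)$ for $r\neq r'$ and an explicit expression for $Z_r^\top C^{(\ell)} Z_r$ as an affine function of $\Omega_{\ell J}$ for the diagonal. This collapses the index bookkeeping into two compact algebraic facts, namely $\tr\bigl((C^{(\ell)})^4\bigr)\lesssim\|\Omega_\ell\|_4^4$ (via the entries of $(C^{(\ell)})^2$ and $\|\Omega\|_2^8\leq\|\Omega\|_\infty^4\leq\|\Omega\|_4^4$) and $\var(Z^\top C^{(\ell)}Z)\lesssim\|\Omega_\ell\|_3^3$ (via $\E\Omega_{\ell J}^k=\|\Omega_\ell\|_{k+1}^{k+1}$); the resulting argument is shorter and equally sharp. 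One small overstatement: you invoke \eqref{cond1-basic} to get $\sigma_\ell^4\asymp (n_k\bar{N}_k)^{-4}$, but the lemma does not assume \eqref{cond1-basic} and you only need the one-sided bound $\sigma_\ell^4\lesssim (n_k\bar{N}_k)^{-4}$, which already follows from $N_\ell\geq 2$ together with $\frac{1}{n_k\bar{N}_k}-\frac{1}{n\bar{N}}\leq\frac{1}{n_k\bar{N}_k}$.
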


\begin{lemma}
	\label{lem:S_fourth_mom} 
	Suppose that $\min N_i \geq 2$ and and $\max \| \Omega_i \|_\infty \leq 1 - c_0$. Then we have
	\begin{align*}
		\sum_{(\ell,s)} \E \, E_{\ell, s}^4 \les\sum_k 	\sum_{i \in S_k} \frac{ N_i^2 \| \Omega_i \|^2}{ n_k^4 \bar{N}_k^4} 
		\vee \sum_k \sum_{i \in S_k} \frac{ N_i^3 \| \Omega_i \|_3^3 }{ n_k^4 \bar{N}_k^4 }
	\end{align*}
\end{lemma}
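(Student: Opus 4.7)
\textbf{Proof plan for Lemma~\ref{lem:S_fourth_mom}.}
I would first strip out the deterministic factor: by definition $E_{\ell,s} = \sigma_\ell \, W_{\ell,s}$ with $W_{\ell,s} := \sum_{r=1}^{s-1} Y_r$ and $Y_r := \sum_j Z_{\ell j r} Z_{\ell j s}$. Under \eqref{cond1-basic} and $\min_i N_i \geq 2$ we have $\sigma_\ell^4 \asymp 1/(n_k^4 \bar{N}_k^4)$ for $\ell \in S_k$, so the target bound reduces to showing
\[
\E W_{\ell,s}^4 \;\lesssim\; (s-1)\,\|\Omega_\ell\|^2 \,+\, (s-1)(s-2)\,\|\Omega_\ell\|_3^3 ,
\]
since then $\sum_{s=1}^{N_\ell}[(s-1) + (s-1)(s-2)] \lesssim N_\ell^2 + N_\ell^3$ and summing over $\ell$ yields the stated inequality.

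The expansion $W_{\ell,s}^4 = \sum_{r_1,r_2,r_3,r_4 < s}\prod_a Y_{r_a}$ is controlled using conditional independence: given $Z_{\ell:s}$, the $\{Y_r\}_{r<s}$ are mutually independent and each has conditional mean $\sum_j Z_{\ell j s}\E Z_{\ell j r} = 0$. Hence any 4-tuple of $r$-values containing an index of odd multiplicity (the $3{+}1$, $2{+}1{+}1$, and $1{+}1{+}1{+}1$ patterns) contributes zero, leaving only (i) $r_1 = r_2 = r_3 = r_4$, and (ii) the three ways to form two distinct pairs. For case~(i) I would use the explicit representation $Y_r = \mathbf{1}\{B_{\ell r} = B_{\ell s}\} - \Omega_{\ell,B_{\ell r}} - \Omega_{\ell,B_{\ell s}} + \|\Omega_\ell\|^2$ to deduce the deterministic bound $|Y_r| \leq 4$, and combine this with the second-moment calculation already done inside the proof of Lemma~\ref{lem:var2} (giving $\E Y_r^2 \leq \|\Omega_\ell\|^2$) to conclude $\E Y_r^4 \leq \|Y_r\|_\infty^2 \E Y_r^2 \lesssim \|\Omega_\ell\|^2$. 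Summing over $r$ produces the $(s-1)\|\Omega_\ell\|^2$ contribution.

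Case~(ii) is the technically delicate step and is where I expect the main difficulty. For $r \neq r'$, conditional independence gives $\E[Y_r^2 Y_{r'}^2 \mid Z_{\ell:s}] = f(B_{\ell s})^2$ where $f(b) := \E[Y_r^2 \mid B_{\ell s} = b]$; a direct computation of the conditional variance yields
\[
f(b) \;=\; \Omega_{\ell b} - 2\Omega_{\ell b}^2 + \|\Omega_\ell\|_3^3 - (\Omega_{\ell b} - \|\Omega_\ell\|^2)^2,
\]
so $|f(b)| \lesssim \Omega_{\ell b} + \|\Omega_\ell\|^2$ and $f(b)^2 \lesssim \Omega_{\ell b}^2 + \|\Omega_\ell\|^4$. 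Averaging against the law of $B_{\ell s}$ gives
\[
\E Y_r^2 Y_{r'}^2 \;=\; \sum_b \Omega_{\ell b}\, f(b)^2 \;\lesssim\; \|\Omega_\ell\|_3^3 + \|\Omega_\ell\|^4 .
\]
The key trick that collapses the two terms is the Cauchy--Schwarz identity $\|\Omega_\ell\|^4 = \bigl(\sum_j \Omega_{\ell j}^{1/2}\cdot \Omega_{\ell j}^{3/2}\bigr)^2 \leq \sum_j\Omega_{\ell j}\cdot \sum_j \Omega_{\ell j}^3 = \|\Omega_\ell\|_3^3$, using $\|\Omega_\ell\|_1 = 1$. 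Combining this with the 3 pairings and $(s-1)(s-2)$ ordered choices of distinct pair-values gives the case~(ii) contribution $\lesssim (s-1)(s-2)\|\Omega_\ell\|_3^3$, completing the proof after multiplication by $\sigma_\ell^4$ and summation over $(\ell,s)$. The main obstacle is the bookkeeping of the conditional fourth-moment calculation and spotting the Cauchy--Schwarz reduction $\|\Omega_\ell\|^4 \leq \|\Omega_\ell\|_3^3$; without the latter the bound would degrade to $\|\Omega_\ell\|^2$, which is too weak.
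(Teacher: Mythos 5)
Your proof is correct, and it takes a genuinely cleaner route than the paper's. The paper proves this lemma by a brute-force expansion of $\E E_{\ell,s}^4$ into $\sum_{j_1,\ldots,j_4}\E[Z_{\ell j_1 s}\cdots Z_{\ell j_4 s}]\,B_{\ell,s;j_1,\ldots,j_4}$ followed by two nested case tables enumerating every coincidence pattern of $(j_1,\ldots,j_4)$ and $(r_1,\ldots,r_4)$ that yields a nonzero moment. Your argument replaces all of that bookkeeping with a single structural observation: conditionally on $Z_{\ell:s}$, the summands $Y_r=\sum_j Z_{\ell jr}Z_{\ell js}$ (for $r<s$) are i.i.d.\ with mean zero, so the classical fourth-moment identity $\E[(\sum_r Y_r)^4\mid Z_{\ell:s}]=(s-1)\E[Y_1^4\mid Z_{\ell:s}]+3(s-1)(s-2)\bigl(\E[Y_1^2\mid Z_{\ell:s}]\bigr)^2$ collapses the combinatorics immediately. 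The two remaining ingredients you use --- the deterministic bound $|Y_r|\le 4$ (so $\E Y_r^4\le 16\,\E Y_r^2\le 16\|\Omega_\ell\|^2$) and the conditional-variance formula $f(b)=\Omega_{\ell b}-2\Omega_{\ell b}^2+\|\Omega_\ell\|_3^3-(\Omega_{\ell b}-\|\Omega_\ell\|^2)^2$ together with the Cauchy--Schwarz reduction $\|\Omega_\ell\|^4\le\|\Omega_\ell\|_3^3$ (valid since $\|\Omega_\ell\|_1=1$) --- are both correct and give exactly the paper's target orders $(s-1)\|\Omega_\ell\|^2$ and $(s-1)(s-2)\|\Omega_\ell\|_3^3$. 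Summing over $s\in[N_\ell]$ and then over $\ell$ with $\sigma_\ell^4\lesssim(n_k\bar N_k)^{-4}$ recovers the stated bound. One small imprecision: you invoke \eqref{cond1-basic} to claim $\sigma_\ell^4\asymp(n_k\bar N_k)^{-4}$, but the lemma is stated only under $\min_i N_i\ge 2$; fortunately only the one-sided bound $\sigma_\ell^4\lesssim(n_k\bar N_k)^{-4}$ is needed, and that already follows from $\min_i N_i\ge 2$ alone, so this does not create a gap. Your conditional-independence approach is more maintainable and would generalize more easily (e.g.\ to higher moments) than the paper's exhaustive enumeration; the paper's approach has the minor advantage of being uniform across the other fourth-moment lemmas (Lemmas~\ref{lem:fourth_mom_U} and~\ref{lem:S_condvar}) where the terms are not conditionally i.i.d.
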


\begin{lemma}
	\label{lem:S_bounds_on_bounds}
	Under either the null or alternative, it holds that
	\begin{align*}
		\sum_k \sum_{i \in S_k} \frac{N_i^2 \| \Omega_i \|^2}{n_k^4 \bar{N}_k^4} 
		&\leq \sum_k \frac{1}{n_k^2 \bar{N}_k^2} \| \mu_k \|^2 
		\\ 	\sum_k \sum_{i \in S_k} \frac{N_i^3 \| \Omega_i \|_3^3}{n_k^4 \bar{N}_k^4} 
		&\leq  \sum_k \frac{1}{n_k \bar{N}_k} \| \mu_k \|_3^3
		\\ \sum_k \sum_{i \in S_k} \frac{N_i^4\| \Omega_i \|_4^4}{n_k^4 \bar{N}_k^4}
		&\leq \sum_k \| \mu_k \|_4^4  
	\end{align*}
\end{lemma}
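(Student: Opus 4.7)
The plan is to recognize that the three inequalities all share the same underlying structure, and to reduce them to a single elementary fact about non-negative reals. The key observation is that if one writes $w_i := N_i/(n_k \bar{N}_k)$ for $i \in S_k$, then $\sum_{i \in S_k} w_i = 1$ and the group-mean PMF is the convex combination
\[
\mu_k \;=\; \frac{1}{n_k \bar{N}_k} \sum_{i \in S_k} N_i \Omega_i \;=\; \sum_{i \in S_k} w_i \Omega_i.
\]
After pulling out the common prefactor $(n_k \bar{N}_k)^{-p}$ on each side (for $p = 2, 3, 4$ respectively), all three claims reduce term-by-term in $k$ to the single inequality
\[
\sum_{i \in S_k} w_i^p \, \|\Omega_i\|_p^p \;\leq\; \|\mu_k\|_p^p, \qquad p \in \{2, 3, 4\}.
\]

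To establish this inequality, I would expand the right-hand side coordinatewise: fix $j \in \{1,\ldots,p\}$, set $a_i := w_i \Omega_{ij} \geq 0$, and invoke the elementary bound $\sum_i a_i^p \leq (\sum_i a_i)^p$, which holds for any non-negative reals $a_i$ whenever $p \geq 1$. (The one-line proof: $a_j^p = a_j \cdot a_j^{p-1} \leq a_j \cdot (\sum_i a_i)^{p-1}$, then sum over $j$.) Applying this with our choice of $a_i$ gives
\[
\sum_{i \in S_k} (w_i \Omega_{ij})^p \;\leq\; \Bigl( \sum_{i \in S_k} w_i \Omega_{ij} \Bigr)^p \;=\; \mu_{kj}^p,
\]
and summing over $j$ yields $\sum_{i \in S_k} w_i^p \|\Omega_i\|_p^p \leq \|\mu_k\|_p^p$ as required. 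Finally, summing over $k$ and restoring the $(n_k \bar{N}_k)^{-p}$ prefactors produces the three displayed bounds exactly.

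There is no substantive obstacle here: the lemma is a bookkeeping statement, and the argument uses only non-negativity of the PMF entries (which holds under both hypotheses, hence the ``under either the null or alternative'' qualifier in the statement) together with the convexity of $x \mapsto x^p$ for $p \geq 1$ on the non-negative reals. The only thing to be careful about is to properly identify the weights $w_i$ and track the $(n_k \bar{N}_k)^{-p}$ factors so that the three cases $p = 2, 3, 4$ align correctly with the three inequalities.
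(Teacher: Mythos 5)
Your proof is correct and uses essentially the same idea as the paper: both arguments boil down to the coordinatewise inequality $\sum_i a_i^p \le (\sum_i a_i)^p$ for non‑negative reals (the paper phrases it as ``the diagonal terms are bounded by the full sum''), applied to $a_i = N_i\Omega_{ij}$ within each group $k$ and each coordinate $j$, then divided by the appropriate power of $n_k\bar N_k$. Your reformulation via the weights $w_i = N_i/(n_k\bar N_k)$ is a slightly cleaner way to unify the three cases $p=2,3,4$, but it is not a substantively different route.
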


\subsection{Proof of Theorem~\ref{thm:null}}
\label{sec:thm-null-proof}
By the martingale central limit theorem (see e.g. \cite{hall2014martingale}), we have that $T/\sqrt{\var(T)} \Rightarrow N(0,1)$ if the following conditions are satisfied:
\begin{align}
	\sum_{(\ell, s)} \var\big( \widetilde{\mc{M}}_{\ell, s} \big| \mc{F}_{\prec (\ell, s)} \big) &\stackrel{\mathbb{P}}{\to} 1 \label{eqn:mgclt1} \\
	\sum_{(\ell, s)} \E\big[ \widetilde{\mc{M}}_{\ell, s}^2 {\bf 1}_{| \widetilde{\mathcal{M}}_{\ell, s} | > \varepsilon} \big| \mc{F}_{\prec (\ell, s)} \big] &\stackrel{\mathbb{P}}{\to} 0, \quad 
	\text{ for any } \varepsilon > 0. \label{eqn:mgclt2}
\end{align}
It is known that \eqref{eqn:mgclt2}, which is a Lindeberg-type condition, is implied by the Lyapunov-type condition
\begin{align}
	\label{eqn:mgclt2_alt}
	\sum_{(\ell,s)} \E  \widetilde{\mc{M}}_{\ell,s}^4 = o(1). 
\end{align}
See e.g. \cite{jin2018network}. 

Since \eqref{cond1-basic} holds, 
\begin{align}
	\var(T) \gtrsim \Theta = \Theta_{n2} + \Theta_{n3} + \Theta_{n4} 
	\gtrsim K \| \mu \|^2. 
	\label{eqn:varT_lbd} 
\end{align}
Recall that
\[
\widetilde{\mc{M}}_{\ell, s} = \frac{ \mc{M}_{\ell, s}}{ \var(T) }
= \frac{ D_{\ell, s} + E_{\ell, s}}{\var(T)} ,
\]
Note that \eqref{eqn:mgclt1} holds if 
\begin{align}
	\E \bigg[\var\big( \widetilde{\mc{M}}_{\ell, s} \big| \mc{F}_{\prec (\ell, s)} \big)\bigg] &\to 1, \, \, \text{ and} \label{eqn:M_bias} \\
	\var\bigg( \var\big( \widetilde{\mc{M}}_{\ell, s} \big| \mc{F}_{\prec (\ell, s)} \big) \bigg) &\to 0. \label{eqn:M_variance}
\end{align}
Recall that \eqref{eqn:M_bias} holds by Lemma \ref{lem:condl_var_bias}.


Next note that
\begin{align*}
	\E( D_{\ell,s} E_{\ell, s} | \mc{F}_{\prec(\ell, s)} ) = 0,
\end{align*}
by inspection of the expressions for $D_{\ell, s}$ and $E_{\ell, s}$ in \eqref{eqn:Dells_def} and \eqref{eqn:Eells_def}. 
Therefore
\begin{align*}
	\var( \mc{M}_{\ell, s}  | \mc{F}_{\prec (\ell, s)} )
	= 	\var( D_{\ell, s}  | \mc{F}_{\prec (\ell, s)} )
	+ 	\var( E_{\ell, s}  | \mc{F}_{\prec (\ell, s)} ). 
\end{align*}
Hence by \eqref{eqn:varT_lbd}; Lemmas \ref{lem:condl_var_U}, \ref{lem:S_condvar} , and \ref{lem:S_bounds_on_bounds}; and the assumption \eqref{cond2-regular}, under the null hypothesis, we have 
\begin{align*}
	\var\bigg( \var\big( \widetilde{\mc{M}}_{\ell, s} \big| \mc{F}_{\prec (\ell, s)} \big) \bigg) 
	&\leq  \frac{1}{\var(T)^2} \bigg[	\var\bigg( \var\big( D_{\ell, s} \big| \mc{F}_{\prec (\ell, s)} \big) \bigg) 
	+ 	\var\bigg( \var\big( E_{\ell, s} \big| \mc{F}_{\prec (\ell, s)} \big) \bigg)  \bigg] 
	\\&\lesssim \frac{1}{K^2 \| \mu \|^4 }
	\bigg[ \big( \sum_k \frac{1}{n_k \bar{N}_k} \big) \| \mu \|_3^3 + K \| \mu \|_4^4  \big) \| \mu \|^2 \bigg] 
	= o(1).
\end{align*}
This proves \eqref{eqn:M_variance}. Thus, \eqref{eqn:M_bias} and \eqref{eqn:M_variance} are established, which proves \eqref{eqn:mgclt1}. 

Similarly, \eqref{eqn:mgclt2_alt} (and thus \eqref{eqn:mgclt2}) holds by \eqref{eqn:varT_lbd}; Lemmas \eqref{lem:fourth_mom_U}, \eqref{lem:S_fourth_mom}, and \eqref{lem:S_bounds_on_bounds}, and the assumption \eqref{cond2-regular}. Combining \eqref{eqn:mgclt1} and \eqref{eqn:mgclt2} verifies the conditions of the martingale central limit theorem, so we conclude that $T/ \sqrt{\var(T)} \Rightarrow N(0,1)$. Since $\var(T) = [1+o(1)]\Theta_n$ by \eqref{cond2-regular} and Lemma \ref{lem:var-null}, the proof is complete. 
\qed 

\bigskip

We record a useful proposition that records the weaker conditions under which $T/\sqrt{\var(T)}$ is asymptotically normal.

\begin{proposition}
	\label{prop:null} 
	Recall that $\alpha_n$ is defined as
	\beq
	\alpha_n:= \max\left\{ \sum_{k = 1}^K \frac{\| \mu_k \|_3^3}{n_k \bar{N}_k} ,  \quad  \sum_{k = 1}^K \frac{\| \mu_k \|^2}{n_k^2 \bar{N}_k^2}\right \} \bigg / \bigg( \sum_{k = 1}^K \| \mu_k \|^2 \bigg)^2
	\eeq
	in \eqref{def:alpha_n}. If under the null hypothesis, 
	\begin{align}
		\label{eqn:weaker_cond_T_varT}
		\alpha_n = \max\left\{ \sum_{k = 1}^K \frac{\| \mu_k \|_3^3}{n_k \bar{N}_k} ,  \quad  \sum_{k = 1}^K \frac{\| \mu_k \|^2}{n_k^2 \bar{N}_k^2}\right \} \bigg / \bigg( K \| \mu \|^2 \bigg)^2  \to 0, \qquad \text{ and }  \, \, \frac{ \| \mu \|_4^4}{K \| \mu \|^4} \to 0,
	\end{align}
	then $T/\sqrt{\var(T)} \Rightarrow N(0,1)$. 
\end{proposition}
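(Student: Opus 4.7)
The plan is to reuse the martingale central limit argument that proves Theorem~\ref{thm:null}, but to track the hypotheses at each step so that the requirement $\beta_n = o(1)$ is never invoked. Work with the decomposition $T = \sum_{(\ell, s)} \mc{M}_{\ell, s}$ from \eqref{eqn:T_mg_decomp} with $\mc{M}_{\ell,s} = D_{\ell,s} + E_{\ell,s}$, and set $\widetilde{\mc{M}}_{\ell,s} = \mc{M}_{\ell,s}/\sqrt{\var(T)}$. The martingale CLT reduces the asymptotic normality of $T/\sqrt{\var(T)}$ to verifying (i) $\sum_{(\ell,s)}\var(\widetilde{\mc{M}}_{\ell,s}\mid\mc{F}_{\prec(\ell,s)}) \to 1$ in probability and (ii) the Lyapunov bound $\sum_{(\ell,s)} \E\,\widetilde{\mc{M}}_{\ell,s}^4 \to 0$. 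A crucial preliminary observation is that the lower bound $\var(T) \gtrsim K\|\mu\|^2$ under the null hypothesis follows from combining Lemma~\ref{lem:var_lbd} with Lemma~\ref{lem:Theta_n2+n3+n4}, neither of which requires $\beta_n = o(1)$; only \eqref{cond1-basic} is used.

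For (i), Lemma~\ref{lem:condl_var_bias} handles the mean. Since $D_{\ell,s}$ and $E_{\ell,s}$ are conditionally orthogonal by inspection of \eqref{eqn:Dells_def}--\eqref{eqn:Eells_def}, the conditional variance splits additively, so
\begin{align*}
\var\!\Bigl(\sum_{(\ell,s)}\var(\mc{M}_{\ell,s}\mid\mc{F}_{\prec(\ell,s)})\Bigr)
\le 2\,\var\!\Bigl(\sum \var(D_{\ell,s}\mid\cdot)\Bigr) + 2\,\var\!\Bigl(\sum \var(E_{\ell,s}\mid\cdot)\Bigr).
\end{align*}
Applying Lemma~\ref{lem:condl_var_U} to the first piece and Lemmas~\ref{lem:S_condvar}, \ref{lem:S_bounds_on_bounds} to the second (and using $\mu_k = \mu$ under the null to rewrite $\|\mu_k\|_q = \|\mu\|_q$), this is bounded by $(\sum_k 1/(n_k\bar{N}_k))\|\mu\|_3^3 + K\|\mu\|_4^4$ up to constants. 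Dividing by $\var(T)^2 \gtrsim K^2\|\mu\|^4$ produces precisely one term inside the max defining $\alpha_n$ and the quantity $\|\mu\|_4^4/(K\|\mu\|^4)$, both $o(1)$ by hypothesis; Chebyshev then finishes (i).

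For (ii), $(a+b)^4 \le 8(a^4+b^4)$ combined with Lemma~\ref{lem:fourth_mom_U} and Lemmas~\ref{lem:S_fourth_mom}, \ref{lem:S_bounds_on_bounds} yields
\begin{align*}
\sum_{(\ell,s)} \E\,\mc{M}_{\ell,s}^4 \;\lesssim\; \Bigl(\sum_k \tfrac{1}{n_k^2\bar{N}_k^2}\Bigr)\|\mu\|^2 + \Bigl(\sum_k \tfrac{1}{n_k\bar{N}_k}\Bigr)\|\mu\|_3^3.
\end{align*}
Dividing by $\var(T)^2 \gtrsim K^2\|\mu\|^4$ gives exactly the two terms in the definition of $\alpha_n$ (under the null), each $o(1)$. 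Thus (i) and (ii) both follow from $\alpha_n \to 0$ together with $\|\mu\|_4^4/(K\|\mu\|^4)\to 0$, and the martingale CLT delivers the claim. The main obstacle is purely bookkeeping: verifying that every occurrence of $\|\mu_k\|_q$, $\|\Sigma_k\|_F$, or $\sum_{i\in S_k} N_i^a\|\Omega_i\|_q^q$ appearing in the supporting lemmas can be traded, under the null, for a quantity controlled by $\alpha_n$ and $\|\mu\|_4^4/(K\|\mu\|^4)$ alone, and in particular that no step secretly relies on the $\beta_n = o(1)$ bounds used to relate $\var(T)$ to $\Theta_n$ in Lemma~\ref{lem:var-null}.
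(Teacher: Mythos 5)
Your proposal is correct and tracks exactly the paper's own route: Proposition~\ref{prop:null} is in effect a recap of the proof of Theorem~\ref{thm:null}, where the martingale CLT for $T/\sqrt{\var(T)}$ is established via Lemmas~\ref{lem:condl_var_bias}--\ref{lem:S_bounds_on_bounds} together with the lower bound $\var(T)\gtrsim K\|\mu\|^2$ from Lemmas~\ref{lem:var_lbd} and~\ref{lem:Theta_n2+n3+n4}, and the $\beta_n=o(1)$ hypothesis is invoked only afterwards to pass from $\var(T)$ to $\Theta_n$. Your explicit bookkeeping---replacing $\|\mu_k\|_q$ by $\|\mu\|_q$ under the null and checking that the resulting ratios are exactly the terms in $\alpha_n$ plus $\|\mu\|_4^4/(K\|\mu\|^4)$---is precisely the paper's argument, with \eqref{cond1-basic} retained as an implicit standing assumption as in the theorem it follows.
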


\subsection{Proof of Theorem~\ref{thm:null2}}

By our assumptions, Proposition \ref{prop:var_estimation_null} holds, and $V/\var(T) \to 1$. Thus the variance estimate $V$ is consistent under the null. Theorem \ref{thm:null2} then follows from Slutsky's theorem and Theorem \ref{thm:null}. \qed 

\subsection{Proof of Lemma~\ref{lem:condl_var_bias}}

By Lemma \ref{lem:decompose}, $S$ and $U$ are uncorrelated, and it holds that 
\begin{align*}
	\var(T) = \var(S) + \var(U). 
	\num \label{eqn:condl_var_bias1}
\end{align*}

Next note that
\begin{align*}
	\E( D_{\ell,s} E_{\ell, s} | \mc{F}_{\prec(\ell, s)} ) = 0,
\end{align*}
by inspection of the expressions for $D_{\ell, s}$ and $E_{\ell, s}$ in \eqref{eqn:Dells_def} and \eqref{eqn:Eells_def}. 
Therefore
\begin{align*}
	\var( \mc{M}_{\ell, s}  | \mc{F}_{\prec (\ell, s)} )
	= 	\var( D_{\ell, s}  | \mc{F}_{\prec (\ell, s)} )
	+ 	\var( E_{\ell, s}  | \mc{F}_{\prec (\ell, s)} ). 
\end{align*}

%
Observe that 
\begin{align*}
	\E\bigg[ 	\sum_{(\ell, s)} \var(  E_{\ell,s} | \mc{F}_{\prec (\ell, s)} ) \bigg]
	&= \sum_{(\ell, s)} \E E_{\ell, s}^2 
	= \sum_{(\ell, s)} \sigma_{\ell}^2 
	\sum_{r, r' \in [s-1]} 
	\sum_{j, j'} \E[ Z_{\ell j r } Z_{\ell j s } Z_{\ell j' r'} Z_{\ell j' s } ]
	\\&= \sum_{(\ell, s)} \sigma_{\ell}^2 
	\sum_{r \in [s-1]} 
	\sum_{j, j'} \E[ Z_{\ell j r } Z_{\ell j' r}  Z_{\ell j s } Z_{\ell j' s} ]  
	\\&= \sum_{\ell = 1}^n \sigma_\ell^2  \sum_{s \in [N_\ell]} \sum_{r \in [s-1]}
	\E \big( \sum_j Z_{\ell jr } Z_{\ell j s} \big)^2
	\\&= \var(S). 
	\num \label{eqn:condl_var_bias2}
\end{align*}
The last line is obtained noting that $S$ as defined in \eqref{eqn:S_def} is a sum of uncorrelated terms over $(i, r, s)$. 

Similarly, we have 
\begin{align*}
	\E\bigg[ 	\sum_{(\ell, s)} \var(  D_{\ell,s} | \mc{F}_{\prec (\ell, s)} ) \bigg]
	&= \E\bigg[ 	\sum_{(\ell, s)} \E\big[  D_{\ell,s}^2 | \mc{F}_{\prec (\ell, s)} \big] \bigg] =  	\sum_{(\ell, s)} \E\big[  D_{\ell,s}^2  \big] 
	\\&= \sum_{(\ell, s) } \sum_{i \in [\ell - 1] } 
	\sigma_{i,\ell}^2 \var\big( \sum_{r = 1}^{N_i} \sum_j Z_{ijr} Z_{\ell j s}
	\big) 
	\\&=  \sum_{ \ell } \sum_{ i \in [\ell - 1] } \sigma_{i,\ell}^2 \var\big( \sum_{r = 1}^{N_i} \sum_{s = 1}^{N_\ell}  Z_{ijr} Z_{\ell j s } \big) 
	\\&= \var(U). 
	\num \label{eqn:condl_var_bias3}
\end{align*}

The lemma follows by combining  \eqref{eqn:condl_var_bias1}--\eqref{eqn:condl_var_bias3}. 
\qed 

\subsection{Proof of Lemma~\ref{lem:condl_var_U}}

%
%

Let $M_k = n_k \bar{N}_k$ and $M = n\bar{N}$. Define
\begin{align}
	\Sigma = \frac{1}{M} \sum_k M_k \Sigma_k =	\frac{1}{M} \sum_{\ell \in [n]} N_\ell	\Omega_{\ell j_1} \Omega_{\ell j_2} .
	\label{eqn:Sigma_def} 
\end{align}

Our main goal is to control the conditional variance process. Define
\begin{align*}
	\delta_{j j' \ell} = \E Z_{\ell j r}Z_{\ell j'r} =  \begin{cases}
		\Omega_{\ell j}(1 - \Omega_{\ell j}) &\quad \text{ if } j = j' \\
		-	\Omega_{\ell j} \Omega_{\ell j'} &\quad \text{ else}. 
	\end{cases}
	\num \label{eqn:delta_def_ell}
\end{align*}
Observe that
\begin{align*}
	\var(  D_{\ell,s} | \mc{F}_{\prec (\ell, s)} )
	&= \E[   \sum_{i, i' \in [\ell - 1]} \sum_{r, r'} \sum_{j_1, j_2} \sigma_{i\ell} \sigma_{i'\ell} Z_{ij_1r} Z_{\ell j_1 s}
	Z_{i'j_2r'} Z_{\ell j_2 s}  | \mc{F}_{\prec (\ell,s)} ]
	\\&=  \sum_{i, i' \in [\ell - 1]} \sum_{r, r'} \sum_{j_1, j_2} \sigma_{i\ell} \sigma_{i'\ell}
	Z_{ij_1r}Z_{i'j_2r'} \E[ Z_{\ell j_1 s} Z_{\ell j_2 s} ] 
	\\ &= \sum_{i, i' \in [\ell - 1]} \sum_{r, r'}
	\sigma_{i\ell} \sigma_{i'\ell} \sum_{j_1, j_2} \delta_{j_1 j_2 \ell } Z_{ij_1r}Z_{i'j_2r'}
\end{align*}
Define 
\begin{align}
	\alpha_{i i' j_1 j_2} 
	= \sum_{\ell > i'} N_\ell \sigma_{i \ell} \sigma_{i'\ell} 
	\delta_{j_1 j_2 \ell}.
\end{align}
Thus
\begin{align*}
	\sum_{(\ell, s)} \var(  D_{\ell,s} | \mc{F}_{\prec (\ell, s)} ) &= \sum_{\ell,s} \sum_{i , i' \in [\ell - 1]} \sum_{r = 1}^{N_i} \sum_{r' = 1}^{N_{i'}} \sigma_{i \ell} \sigma_{i'\ell} \sum_{j_1, j_2} \delta_{j_1 j_2 \ell} \, Z_{ij_1r} Z_{i'j_2r'} 
	\\ &= \sum_i \sum_{r = 1}^{N_i} \sum_{r' = 1}^{N_{i}} \,  \sum_{j_1, j_2}
	\bigg(
	\sum_{\ell > i} N_\ell \sigma_{i \ell}^2
	\delta_{j_1 j_2 \ell} \, 
	\bigg) Z_{ij_1r} Z_{i'j_2r'} 
	\\ &\quad+  2\sum_{i < i'} \sum_{r = 1}^{N_i} \sum_{r' = 1}^{N_{i'}} \,  \sum_{j_1, j_2}
	\bigg(
	\sum_{\ell > i'} N_\ell \sigma_{i \ell} \sigma_{i'\ell} 
	\delta_{j_1 j_2 \ell} \, 
	\bigg) Z_{ij_1r} Z_{i'j_2r'} 
	\\ &= \sum_i \sum_{r = 1}^{N_i} \sum_{r' = 1}^{N_{i}} \,  \sum_{j_1, j_2}
	\alpha_{i i j_1 j_2}  Z_{ij_1r} Z_{i'j_2r'} 
	\\ &\quad+  2\sum_{i < i'} \sum_{r = 1}^{N_i} \sum_{r' = 1}^{N_{i'}} \,  \sum_{j_1, j_2}
	\alpha_{i i' j_1 j_2} Z_{ij_1r} Z_{i'j_2r'}.
\end{align*}
Define 
\begin{align}
	\label{eqn:zeta_def_genU}
	\zeta_{iri'r'} 
	= \sum_{j_1, j_2} \alpha_{i i' j_1 j_2} Z_{ij_1r} Z_{i'j_2r'}.
\end{align}
Then
\begin{align*}
	\sum_{(\ell, s)} \var(  D_{\ell,s} | \mc{F}_{\prec (\ell, s)} ) &=
	\sum_i \sum_{r \in [N_i]} \zeta_{irir}
	+ \bigg(  2 \sum_i \sum_{r < r' \in [N_i]} \zeta_{irir'}
	+ 2 \sum_{i < i'} \sum_{r = 1}^{N_i} \sum_{r' = 1}^{N_{i'}} \zeta_{iri'r'} \bigg) 
	\\&=: V_1 + V_2 
\end{align*}

With this decomposition, Lemma \ref{lem:condl_var_U} follows directly from Lemmas \ref{lem:V1} and \ref{lem:V2} stated below and proved in the next remainder of this subsection. 

\begin{lemma}
	\label{lem:V1}	It holds that
	\begin{align*}
		\var(V_1)  \lesssim \big( \sum_k \frac{1}{M_k} \big) \| \mu \|_3^3. 
	\end{align*}
\end{lemma}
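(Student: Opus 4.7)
The plan is to exploit the fact that $V_1$ is a sum of independent terms and then control the per-term variance by a moment computation involving the multinomial$(1,\mu)$ distribution.

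\textbf{Step 1 (Independence decomposition).} Observe that $\zeta_{irir} = \sum_{j_1,j_2} \alpha_{iij_1j_2} Z_{ij_1r}Z_{ij_2r}$ depends only on the random vector $Z_{i:r}$. Since the vectors $\{Z_{i:r}\}$ are mutually independent over all index pairs $(i,r)$, the terms in $V_1 = \sum_i \sum_{r\in[N_i]} \zeta_{irir}$ are independent, so $\var(V_1) = \sum_i \sum_{r\in[N_i]} \var(\zeta_{irir})$.

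\textbf{Step 2 (Quadratic form under the null).} Under $H_0$ we have $\Omega_\ell = \mu$ for all $\ell$, so $\delta_{j_1 j_2 \ell}$ does not depend on $\ell$ and equals the entry $M_{j_1 j_2}$ of $M := \diag(\mu) - \mu\mu' = \text{Cov}(Z_{i:r})$. Setting $c_i := \sum_{\ell>i} N_\ell \sigma_{i\ell}^2$ gives $\alpha_{iij_1j_2} = c_i M_{j_1j_2}$, hence $\zeta_{irir} = c_i\, Z_{i:r}' M Z_{i:r}$.

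\textbf{Step 3 (Per-term moment bound).} I will bound $\var(\zeta_{irir}) \leq \E \zeta_{irir}^2 = c_i^2 \E[(Z_{i:r}' M Z_{i:r})^2]$. Using $Z_{ij r} = \mathbf{1}\{J=j\} - \mu_j$ where $J \sim \mu$, a direct expansion yields
\[
Z_{i:r}' M Z_{i:r} = \mu_J(1 - 3\mu_J + 2\|\mu\|^2) + (\|\mu\|_3^3 - \|\mu\|^4),
\]
so $|Z_{i:r}' M Z_{i:r}| \lesssim \mu_J + \|\mu\|_3^3$ (using $\|\mu\|^4 \leq \|\mu\|_3^3 \cdot \|\mu\|_1 = \|\mu\|_3^3$ via Cauchy--Schwarz). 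Squaring, taking expectation, and using $\E\mu_J^2 = \|\mu\|_3^3$ together with $\|\mu\|_3^3 \leq \|\mu\|_1 = 1$ gives $\E[(Z_{i:r}' M Z_{i:r})^2] \lesssim \|\mu\|_3^3$.

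\textbf{Step 4 (Summing over $(i,r)$).} By \eqref{cond1-basic}, if $i \in S_k$ then $\sigma_{i\ell}^2 \lesssim M_k^{-2}$ for $\ell \in S_k$ and $\sigma_{i\ell}^2 \lesssim M^{-2}$ otherwise, which yields $c_i \lesssim M_k^{-1}$. Therefore
\[
\sum_i N_i c_i^2 \;\lesssim\; \sum_k \frac{1}{M_k^2} \sum_{i\in S_k} N_i \;=\; \sum_k \frac{1}{M_k}.
\]
Combining Steps 1--4 gives $\var(V_1) \lesssim \|\mu\|_3^3 \sum_k M_k^{-1}$, as claimed.

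The only nontrivial step is Step 3: the explicit moment evaluation for the $\text{Mult}(1,\mu)$ distribution, where one must verify that lower-order correction terms such as $\|\mu\|^4$ and $\|\mu\|_3^6$ are dominated by $\|\mu\|_3^3$; this is where the inequalities $\|\mu\|^4 \leq \|\mu\|_3^3$ and $\|\mu\|_3^3 \leq 1$ play the key role.
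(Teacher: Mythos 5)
Your proof is correct, and it takes a genuinely different (and cleaner) route than the paper. The paper's proof of this lemma proceeds by brute force: it expands $\E\zeta_{irir}^2$ into a fourfold sum over index tuples $(j_1,j_2,j_3,j_4)$, splits that sum into seven families $B_1,\dots,B_7$ according to the coincidence pattern of the $j$'s, and then bounds each family separately using the coefficient estimates from Lemma~\ref{lem:alpha_bounds} (e.g.\ $\alpha_{iij_1j_1}\lesssim M_k^{-1}\mu_{j_1}$ and $\alpha_{iij_1j_2}\lesssim M_k^{-1}\Sigma_{kj_1j_2}+M^{-1}\Sigma_{j_1j_2}$). You instead notice that under the null the $\ell$-dependence of $\delta_{j_1j_2\ell}$ disappears, so the coefficient array factors as $\alpha_{iij_1j_2}=c_iM_{j_1j_2}$ with $M=\diag(\mu)-\mu\mu'=\mathrm{Cov}(Z_{i:r})$, collapsing $\zeta_{irir}=c_i\,Z_{i:r}'MZ_{i:r}$ to a single scalar quadratic form that has a closed expression $\mu_J(1-3\mu_J+2\|\mu\|^2)+(\|\mu\|_3^3-\|\mu\|^4)$ in the sampled category $J$. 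The second moment is then a one-line computation. Your approach buys brevity and transparency (the role of $\|\mu\|_3^3$ is crystallized as $\E\mu_J^2$); the paper's approach is coded to be agnostic to whether the $\Omega_\ell$ are equal, which is why the casework is phrased in terms of $\Sigma_k$ and $\Sigma$, even though the lemma is only invoked under the null. Both deliver the same bound with the same constants up to absolute factors. Your use of $\var(\zeta_{irir})\le\E\zeta_{irir}^2$ is also slightly cleaner than the paper's ``$\var(V_1)=\sum\E\zeta_{irir}^2$,'' since $\E\zeta_{irir}=c_i\|M\|_F^2\neq0$ in general, so that equality should be a $\le$.
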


\begin{lemma}
	\label{lem:V2}
	It holds that
	\begin{align*}
		\var(V_2) &\lesssim K \| \mu \|_4^4
	\end{align*}
\end{lemma}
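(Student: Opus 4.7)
The argument has three steps.

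First, I would observe that the summands of $V_2$ are pairwise uncorrelated. Since $\{Z_{\ell\cdot s}\}_{\ell,s}$ are jointly independent and mean-zero, $\E[\zeta_{iri'r'}\zeta_{\tilde\imath\tilde r\tilde\imath'\tilde r'}]$ can be nonzero only when the multisets $\{(i,r),(i',r')\}$ and $\{(\tilde\imath,\tilde r),(\tilde\imath',\tilde r')\}$ agree, and the lexicographic constraint $(i,r)\prec(i',r')$ eliminates the swapped pairing. Hence $\var(V_2)=4\sum_{(i,r)\prec(i',r')}\E\zeta_{iri'r'}^2$. To bound each squared expectation, let $A_{ii'}=[\alpha_{ii'j_1j_2}]_{j_1,j_2}\in\mathbb{R}^{p\times p}$ and $\Delta_\ell=\mathrm{diag}(\Omega_\ell)-\Omega_\ell\Omega_\ell^\top=\mathrm{Cov}(Z_{\ell\cdot r})$. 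A direct computation gives $\E\zeta_{iri'r'}^2=\mathrm{tr}(A_{ii'}\Delta_{i'}A_{ii'}^\top\Delta_i)$, and the PSD bound $\Delta_\ell\preceq\mathrm{diag}(\Omega_\ell)$ applied twice yields $\mathrm{tr}(A_{ii'}\Delta_{i'}A_{ii'}^\top\Delta_i)\leq\sum_{j_1,j_2}\Omega_{ij_1}\Omega_{i'j_2}\alpha_{ii'j_1j_2}^2$. Summing over $r,r'$ gives
\[
\var(V_2)\;\lesssim\;\sum_{i\leq i'}N_iN_{i'}\sum_{j_1,j_2}\Omega_{ij_1}\Omega_{i'j_2}\alpha_{ii'j_1j_2}^2.
\]

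Second, I would exploit the identity $\delta_{j_1j_2\ell}=\mathbf{1}_{j_1=j_2}\Omega_{\ell j_1}-\Omega_{\ell j_1}\Omega_{\ell j_2}$ to split $\alpha_{ii'j_1j_2}=\alpha^D_{ii'j_1}\mathbf{1}_{j_1=j_2}-\alpha^O_{ii'j_1j_2}$, where $\alpha^D_{ii'j}=\sum_{\ell>i'}N_\ell\sigma_{i\ell}\sigma_{i'\ell}\Omega_{\ell j}$ and $\alpha^O_{ii'j_1j_2}=\sum_{\ell>i'}N_\ell\sigma_{i\ell}\sigma_{i'\ell}\Omega_{\ell j_1}\Omega_{\ell j_2}$, and then use $\alpha^2\leq 2(\alpha^D)^2+2(\alpha^O)^2$. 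Expanding the squares over $(\ell_1,\ell_2)$, the diagonal contribution becomes
\[
\sum_j\sum_{\substack{i\leq i'\\ \ell_1,\ell_2>i'}}N_iN_{i'}N_{\ell_1}N_{\ell_2}\,\sigma_{i\ell_1}\sigma_{i'\ell_1}\sigma_{i\ell_2}\sigma_{i'\ell_2}\,\Omega_{ij}\Omega_{i'j}\Omega_{\ell_1j}\Omega_{\ell_2j},
\]
while the off-diagonal contribution factors as a product of two cubic sums $\sum_j\Omega_{\cdot j}\Omega_{\ell_1 j}\Omega_{\ell_2 j}$.

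Third is the combinatorial bookkeeping. Passing to absolute values extends the $\ell>i'$ constraint to all of $[n]$ without loss, and I split the four-index sum over $(i,i',\ell_1,\ell_2)$ according to which indices share a common group. Using $|\sigma_{i\ell}|\lesssim 1/(n_k\bar{N}_k)$ when $i,\ell\in S_k$ and $|\sigma_{i\ell}|\lesssim 1/(n\bar{N})$ otherwise, together with the group-averaging identity $\sum_{i\in S_k}N_i\Omega_{ij}/(n_k\bar{N}_k)=\mu_{kj}$, the same-group block of the diagonal contribution telescopes to $\sum_k\|\mu_k\|_4^4$, and the cross-group blocks are of the same order or smaller. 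Under the null, $\mu_k\equiv\mu$, so this is $O(K\|\mu\|_4^4)$. The off-diagonal contribution is bounded analogously by $O(K\|\mu\|_3^6)$, which is $O(K\|\mu\|_4^4)$ by H\"older, since $\|\mu\|_3^6\leq\|\mu\|_2^2\|\mu\|_4^4\leq\|\mu\|_4^4$ (using $\|\mu\|_1=1$ and $\|\mu\|_2\leq 1$). The chief obstacle is precisely this book-keeping: the signs of $\sigma_{i\ell}\sigma_{i'\ell}$ vary across same-group and cross-group configurations, so the cross-group contributions—whose $\sigma$ factors carry $(n\bar{N})^{-1}$ rather than $(n_k\bar{N}_k)^{-1}$—must be tracked carefully to verify that they do not inflate the bound beyond $K\|\mu\|_4^4$. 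This is also where the null hypothesis is first used in an essential way, to collapse the fibered product $\Omega_{ij}\Omega_{i'j}\Omega_{\ell_1 j}\Omega_{\ell_2 j}$ into fourth powers of a common $\mu_j$.
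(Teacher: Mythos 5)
Your approach is correct and genuinely different from the paper's, and in one key respect cleaner. The paper bounds $\var(V_2)$ by expanding $\E\,\zeta_{iri'r'}^2 = \sum_{j_1,\ldots,j_4}\alpha_{ii'j_1j_2}\alpha_{ii'j_3j_4}\E[Z_{ij_1r}Z_{ij_3r}]\E[Z_{i'j_2r'}Z_{i'j_4r'}]$ into seven terms $C_1,\ldots,C_7$ by casework on coincidences among $j_1,\ldots,j_4$, and then bounds each $C_a$ separately using the coefficient estimates of Lemma~\ref{lem:alpha_bounds} and many individual Cauchy--Schwarz steps. Your Step 1 replaces all of this with a single PSD argument: writing $\E\,\zeta_{iri'r'}^2 = \tr(A_{ii'}\Delta_{i'}A_{ii'}^\top\Delta_i)$ with $\Delta_\ell = \diag(\Omega_\ell) - \Omega_\ell\Omega_\ell^\top \preceq \diag(\Omega_\ell)$ and applying the domination twice (first to $\Delta_{i'}$, then to $\Delta_i$; both applications are legitimate since $\Delta_i$ and $A\diag(\Omega_{i'})A^\top$ are PSD) gives $\E\,\zeta_{iri'r'}^2 \leq \sum_{j_1,j_2}\Omega_{ij_1}\Omega_{i'j_2}\alpha_{ii'j_1j_2}^2$. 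This collapses the paper's $C_2, C_3, C_5, C_6, C_7$ into nothing, leaving only the analogues of $C_1$ and $C_4$. That is a real simplification. In Steps 2--3 you then also diverge from the paper: instead of first bounding $\alpha_{ii'j_1j_2}$ and then summing (Lemma~\ref{lem:alpha_bounds}), you expand $\alpha^2$ explicitly over $(\ell_1,\ell_2)$ using $\delta_{j_1j_2\ell} = \mathbf{1}_{j_1=j_2}\Omega_{\ell j_1}-\Omega_{\ell j_1}\Omega_{\ell j_2}$, pass to absolute values, and observe the sum factors as a perfect square in the $i$ index for fixed $(\ell_1,\ell_2)$. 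Both routes reach the same bound; yours trades the paper's enumeration of $j$-coincidence patterns for explicit $\ell$-sums and the group-averaging identity.

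One imprecision worth fixing: your stated bound $O(K\|\mu\|_3^6)$ for the off-diagonal contribution is not correct in general. After passing to the perfect-square form, the same-group off-diagonal block is
\[
\sum_k \frac{1}{M_k^2}\sum_{\ell_1,\ell_2\in S_k}N_{\ell_1}N_{\ell_2}\Bigl(\sum_j \mu_j\Omega_{\ell_1 j}\Omega_{\ell_2 j}\Bigr)^2 \;=\; \sum_k \mu'\Sigma_k^{\circ 2}\mu,
\]
and the quantity $\mu'\Sigma_k^{\circ 2}\mu$ is of order $\|\mu\|_4^4$, not $\|\mu\|_3^6$: the two cubic sums share the indices $\ell_1,\ell_2$, so the square of the average is not the average of the squares, and the gap between $\|\mu\|_3^6$ and $\|\mu\|_4^4$ is exactly what Jensen/Cauchy--Schwarz allows. (For instance, if $\Sigma_k \approx \diag(\mu)$ then $\mu'\Sigma_k^{\circ 2}\mu \approx \|\mu\|_4^4$, which for a near-uniform $\mu$ on $p$ outcomes is $\asymp p^{-3}$, strictly larger than $\|\mu\|_3^6 \asymp p^{-4}$.) The correct estimate, $\mu'\Sigma_k^{\circ 2}\mu \leq \|\mu\|_4^4$, follows by the Cauchy--Schwarz step used in the paper's display \eqref{eqn:muSig2_bd}. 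Since your target is $K\|\mu\|_4^4$ anyway, this does not break the argument; it just means the H\"older step $\|\mu\|_3^6 \leq \|\mu\|_4^4$ is not actually load-bearing, and the $\|\mu\|_3^6$ claim should be replaced by $\|\mu\|_4^4$ directly.
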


\qed 

\subsubsection{Statement and proof of Lemma \ref{lem:alpha_bounds}} 

The proofs of Lemmas \ref{lem:V1} and \ref{lem:V2} heavily rely on the following intermediate result that  bounds the coefficients $\alpha_{i i' j_1 j_2}$ in all cases. 

\begin{lemma}
	\label{lem:alpha_bounds} 
	It holds that
	\begin{align*}
		\alpha_{i i' j_1 j_2}
		&\lesssim \begin{cases}
			\frac{1}{M_k} \mu_{ j_1} &\quad \text{ if } i, i' \in S_k, j_1 = j_2
			\\ \frac{1}{M_k} \Sigma_{k j_1 j_2}
			+\frac{1}{M} \Sigma_{ j_1 j_2} &\quad \text{ if } i, i' \in S_k, j_1 \neq j_2
			\\ 	\frac{1}{M} \mu_{ j_1} 	&\quad \text{ if } i \in S_{ k_1}, i' \in S_{k_2},  k_1 \neq k_2, j_1 = j_2
			\\ \frac{1}{M} \sum_{a = 1}^2 \Sigma_{k_a j_1 j_2} 
			+ \frac{1}{M} \Sigma_{j_1 j_2}
			&\quad \text{ if } i \in S_{ k_1}, i' \in S_{k_2} , k_1 \neq k_2, j_1 \neq j_2
		\end{cases}
	\end{align*}
\end{lemma}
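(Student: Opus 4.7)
The proof proceeds by a direct case analysis on whether $i, i'$ lie in the same group and whether the summation index $\ell$ lies in the same group as $i$ and/or $i'$. The argument is essentially bookkeeping once two elementary observations are in place.

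First, I will record the estimate
\[
\left| \tfrac{1}{M_k} - \tfrac{1}{M} \right| = \tfrac{M - M_k}{M_k M} \asymp \tfrac{1}{M_k},
\]
which is valid under \eqref{cond1-basic} since $M_k \leq (1 - c_0)M$. This yields
\[
|\sigma_{i \ell}| \lesssim \begin{cases} 1/M_k & \text{if } i, \ell \in S_k, \\ 1/M & \text{otherwise.} \end{cases}
\]
Second, from the definition of $\delta_{j_1 j_2 \ell}$ in \eqref{eqn:delta_def_ell}, we have $|\delta_{j j \ell}| \leq \Omega_{\ell j}$ when $j_1 = j_2 = j$, and $|\delta_{j_1 j_2 \ell}| = \Omega_{\ell j_1} \Omega_{\ell j_2}$ when $j_1 \neq j_2$.

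Next I will partition $\{\ell > i'\}$ according to which group $S_q$ contains $\ell$, bound $|\sigma_{i\ell}\sigma_{i'\ell}|$ on each piece using the first observation, and apply the second. Partial sums such as $\sum_{\ell \in S_q} N_\ell \Omega_{\ell j}$ and $\sum_{\ell \in S_q} N_\ell \Omega_{\ell j_1} \Omega_{\ell j_2}$ are then identified via \eqref{relaxedMod2} as $M_q \mu_{q j}$ and $M_q \Sigma_{q j_1 j_2}$ respectively, and the corresponding unrestricted sums yield $M \mu_j$ and $M \Sigma_{j_1 j_2}$ by \eqref{eqn:Sigma_def}.

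In case 1 ($i, i' \in S_k$, $j_1 = j_2 = j$), the sum splits into $\ell \in S_k$ (where $|\sigma_{i\ell}\sigma_{i'\ell}| \lesssim 1/M_k^2$, contributing $\mu_{kj}/M_k$) and $\ell \notin S_k$ (where $|\sigma_{i\ell}\sigma_{i'\ell}| \lesssim 1/M^2$, contributing at most $\mu_j/M$); under the null $\mu_{kj} = \mu_j$ and $\mu_j/M \leq \mu_j/M_k$, yielding the claim. Case 2 is analogous, with $\Sigma_{k j_1 j_2}$ and $\Sigma_{j_1 j_2}$ playing the roles of $\mu_{kj}$ and $\mu_j$, giving exactly the stated bound with no need to invoke the null. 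For cases 3 and 4 ($i \in S_{k_1}$, $i' \in S_{k_2}$, $k_1 \neq k_2$), the sum splits into four pieces ($\ell \in S_{k_1}$, $\ell \in S_{k_2}$, $\ell$ in neither, and always $\ell > i'$); on the first two pieces $|\sigma_{i\ell}\sigma_{i'\ell}| \lesssim 1/(M_{k_a} M)$ (the small factor $1/M_{k_a}$ coming from $\ell$ sharing a group with one of $i, i'$ and the $1/M$ from the other), while on the last piece both factors are $\lesssim 1/M$. Collecting terms yields $\mu_{k_1 j}/M + \mu_{k_2 j}/M + \mu_j/M$ in case 3 (which collapses to $\mu_j/M$ under the null) and $\Sigma_{k_1 j_1 j_2}/M + \Sigma_{k_2 j_1 j_2}/M + \Sigma_{j_1 j_2}/M$ in case 4.

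The main obstacle is keeping track of the six combinations of group-membership configurations for the triple $(i, i', \ell)$ and correctly combining the resulting $1/M_k$ and $1/M$ factors; no inequality more delicate than the triangle inequality and monotonicity in the sign of a sum of nonnegative terms is needed.
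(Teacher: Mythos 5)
Your proposal is correct and follows essentially the same approach as the paper's proof: split the $\ell$-sum by the group containing $\ell$, bound $|\sigma_{i\ell}\sigma_{i'\ell}|$ on each piece via $|1/M_k - 1/M| \asymp 1/M_k$, bound $|\delta_{j_1j_2\ell}|$ by $\Omega_{\ell j_1}$ or $\Omega_{\ell j_1}\Omega_{\ell j_2}$, and identify partial sums with $\mu_{q j}$, $\Sigma_{q j_1 j_2}$. You are in fact slightly more careful than the paper in one respect: in cases 1 and 3 the raw bound first yields $\mu_{k j_1}$ (resp.\ $\mu_{k_1 j_1}, \mu_{k_2 j_1}$) before collapsing to $\mu_{j_1}$ via the null hypothesis $\mu_k = \mu$, a step the paper's proof elides by writing $\mu_{j_1}$ directly, and your note that this is where the null is (implicitly) invoked is a genuine clarification of the paper's argument, which is used only inside the null-hypothesis Lemma \ref{lem:condl_var_U}.
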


\begin{proof} 
	
	If $j_1 = j_2$ and $i, i' \in S_{k}$, we have 
	\begin{align*}
		|\alpha_{i i' j_1 j_1} |
		&= | \sum_{\ell > i'} N_\ell \sigma_{i \ell} \sigma_{i'\ell}
		\delta_{j_1 j_1 \ell} | 
		\leq \sum_{k' = 1}^K \sum_{\ell  \in S_{k'} } N_\ell \sigma_{i \ell} \sigma_{i'\ell}
		\delta_{j_1 j_1 \ell }
		\\&	\lesssim \frac{1}{M_k} \cdot  \frac{1}{M_k} \sum_{\ell \in S_k} N_\ell \Omega_{\ell j_1} 
		+ \frac{1}{M} \cdot \frac{1}{M} \sum_{\ell \in [n]} N_\ell \Omega_{\ell j_1}
		\lesssim \frac{1}{M_k} \mu_{ j_1} + \frac{1}{M} \mu_{j_1} 
		\lesssim  \frac{1}{M_k} \mu_{ j_1} . 
	\end{align*}
	
	If $j_1 \neq j_2$ and $i, i' \in S_{k}$, we have 
	\begin{align*}
		|\alpha_{i i' j_1 j_2} |
		&=| \sum_{\ell > i'} N_\ell \sigma_{i \ell} \sigma_{i'\ell}
		\delta_{j_1 j_2 \ell} |
		\leq \sum_{\ell \in [n]} N_\ell |\sigma_{i \ell} \sigma_{i'\ell}|
		\Omega_{\ell j_1} \Omega_{\ell j_2} 
		\\&\lesssim \frac{1}{M_k} \cdot \frac{1}{M_k} \sum_{\ell \in S_k} N_\ell 		\Omega_{\ell j_1} \Omega_{\ell j_2} 
		+ 	 \frac{1}{M} \cdot \frac{1}{M} \sum_{\ell \in [n]} N_\ell	\Omega_{\ell j_1} \Omega_{\ell j_2} 
		\lesssim \frac{1}{M_k} \Sigma_{k j_1 j_2}
		+\frac{1}{M} \Sigma_{ j_1 j_2}. 
	\end{align*}
	
	
	%
	%
	
	If $i \neq i'$, $j_1 = j_2$, and $i \in S_{k_1}, i' \in S_{k_2}$ where $k_1 \neq k_2$, we have
	\begin{align*}
		|	\alpha_{i i' j_1 j_1} |
		&= |\sum_{\ell > i'} N_\ell \sigma_{i \ell} \sigma_{i'\ell} 
		\delta_{j_1 j_1 \ell}|
		\leq \sum_\ell N_\ell |\sigma_{i \ell} \sigma_{i'\ell} | \Omega_{\ell j_1} 
		\\& \lesssim \frac{1}{M} \cdot \sum_{a = 1}^2 \frac{1}{M_{k_a}} \sum_{\ell \in S_{k_a}} N_\ell \Omega_{\ell j_1} 
		+\frac{1}{M} \cdot \frac{1}{M} \sum_{\ell \in [n]} N_\ell \Omega_{\ell j_1} 
		= \frac{3}{M} \mu_{j_1}. 
	\end{align*}
	
	If $i \neq i'$, $j_1 \neq j_2$, and $i \in S_{k_1}, i' \in S_{k_2}$ where $k_1 \neq k_2$, we have
	\begin{align*}
		|	\alpha_{i i' j_1 j_2}  |
		&= | \sum_{\ell > i'} N_\ell \sigma_{i \ell} \sigma_{i'\ell} 
		\delta_{j_1 j_2 \ell} |
		\lesssim \sum_\ell  N_\ell \sigma_{i \ell} \sigma_{i'\ell} 
		\Omega_{\ell j_1} \Omega_{\ell j_2} 
		\\&\lesssim  \frac{1}{M} \cdot \sum_{a = 1}^2 \frac{1}{M_{k_a}} \sum_{\ell \in S_{k_a}} N_\ell \Omega_{\ell j_1} \Omega_{\ell j_2} 
		+\frac{1}{M} \cdot \frac{1}{M} \sum_{\ell \in [n]} N_\ell \Omega_{\ell j_1} \Omega_{\ell j_2}  
		\\&\leq   \frac{1}{M} \sum_{a = 1}^2 \Sigma_{k_a j_1 j_2}
		+ \frac{1}{M} \Sigma_{j_1 j_2}. 
	\end{align*}
	
\end{proof} 

%
%
%
%

\subsubsection{Proof of Lemma \ref{lem:V1}}

We have
\begin{align*}
	\var(V_1) 
	&= \sum_{i,r} \E \zeta_{irir}^2. 
\end{align*}
Next by symmetry, 
\begin{align*}
	\E \zeta_{irir}^2 
	&= \sum_{j_1, j_2 ,j_3, j_4} \alpha_{i i j_1 j_2} 
	\alpha_{i i j_3 j_4} \, \E Z_{ij_1r} Z_{i j_3 r}    Z_{ij_2r} Z_{i j_4 r}
	\\ &\lesssim \sum_{j_1} \alpha_{i i j_1 j_1}^2 \Omega_{i j_1} 
	+ \sum_{j_1 \neq j_4} \alpha_{i i j_1 j_1} 
	\alpha_{i i j_1 j_4} \, \Omega_{i j_1} \Omega_{i j_4} 
	\\&\quad +  \sum_{j_1 \neq j_3} \alpha_{i i j_1 j_1} 
	\alpha_{i i j_3 j_3} \, \Omega_{i j_1} \Omega_{i j_3} 
	+ \sum_{j_1 \neq j_2}  \alpha_{i i j_1 j_2}^2 \,  \Omega_{i j_1} \Omega_{i j_2} 
	\\&\quad + \sum_{j_1, j_3, j_4 (dist.)} \alpha_{i i j_1 j_1} 
	\alpha_{i i j_3 j_4} \, \Omega_{i j_1} \Omega_{i j_3} \Omega_{i j_4} 
	+ \sum_{j_1, j_2, j_4 (dist.)}  \alpha_{i i j_1 j_2} 
	\alpha_{i i j_1 j_4} \Omega_{i j_1} \Omega_{i j_2} \Omega_{i j_4}
	\\& \quad + \sum_{j_1, j_2, j_3, j_4 (dist.)} 
	\alpha_{i i j_1 j_2} 
	\alpha_{i i j_3 j_4} \Omega_{ij_1} 	\Omega_{ij_2}  \Omega_{ij_3} 
	\Omega_{ij_4} =: \sum_{a = 1}^7 B_{a, i,r}
\end{align*}
Thus 
\begin{align*}
	\var(V_1) &\lesssim \sum_a \bigg( \underbrace{\sum_{i,r} B_{a, i, r} }_{ =:  B_{a} } \bigg). 
\end{align*}
We analyze $B_1$-- $B_7$ separately, bounding the $\alpha_{ii'j_r j_s}$ coefficients using Lemma \ref{lem:alpha_bounds}.

For $B_1$, 
\begin{align*}
	B_1 &\lesssim \sum_{i,r}  \sum_{j_1} \alpha_{iij_1j_2}^2 \Omega_{i j_1} 
	\lesssim \sum_{k = 1}^k \sum_{i \in S_k} \sum_{r \in [N_i]} \sum_{j_1} (	\frac{1}{M_k} \mu_{ j_1})^2 \Omega_{i j_1} 
	\\&	\lesssim \sum_k  	\sum_{j_1} (	\frac{1}{M_k} \mu_{ j_1})^2  
	M_k \mu_{j_1} 
	\lesssim \big(\sum_k \frac{1}{M_k} \big)  \| \mu \|_3^3. 
	\num \label{eqn:B1_bd}
\end{align*}

For $B_2$, 
\begin{align*}
	B_2 &\lesssim \sum_{i,r}  \sum_{j_1 \neq j_4} \alpha_{i i j_1 j_1} 
	\alpha_{i i j_1 j_4} \, \Omega_{i j_1} \Omega_{i j_4} 
	\\&\lesssim \sum_k \sum_{i \in S_k} \sum_{r \in [N_i]} 
	\sum_{j_1 \neq j_4} 
	\frac{1}{M_k} \mu_{ j_1}  \cdot \big( \frac{1}{M_k} \Sigma_{k j_1 j_4} +\frac{1}{M} \Sigma_{ j_1 j_4} \big) \cdot 
	\Omega_{i j_1} \Omega_{i j_4} 
	\\&\lesssim 
	\sum_k \sum_{j_1 \neq j_4}  \frac{1}{M_k} \mu_{ j_1}  \cdot \big( \frac{1}{M_k} \Sigma_{k j_1 j_4} +\frac{1}{M} \Sigma_{ j_1 j_4} \big) \cdot M_k \Sigma_{k j_1 j_4} 
	\\&\lesssim  \sum_k \frac{1}{M_k}  \sum_{j_1 \neq j_4} \Sigma_{k j_1 j_4}^2 \mu_{j_1} 
	+ \sum_k \frac{1}{M} \sum_{j_1 \neq j_4}  \Sigma_{k j_1 j_4} \Sigma_{ j_1 j_4} \mu_{j_1} 
	\\&\lesssim \sum_k \frac{  \mathbf{1}' \Sigma_{k}^{\circ 2} \mu    }{M_k}
	+ \sum_k \frac{   \mathbf{1}'   ( \Sigma_k \circ \Sigma   ) \mu }{ M}
	= \sum_k \frac{  \mathbf{1}' \Sigma_{k}^{\circ 2} \mu    }{M_k}
\end{align*}

Next,
\begin{align*}
	\sum_{j_1 \neq j_4} \Sigma_{k j_1 j_4}^2 \mu_{j_1} 
	&= \sum_{j_1 \neq j_4} \frac{1}{M_k^2} \sum_{i, i' \in S_k} N_i N_{i'}
	\Omega_{i j_1} \Omega_{i' j_1} \Omega_{ i j_4 } \Omega_{i' j_4} \cdot \mu_{j_1} 
	\\&\leq \sum_{j_1 } \frac{1}{M_k^2} \sum_{i, i' \in S_k} N_i N_{i'}
	\Omega_{i j_1} \Omega_{i' j_1}  \mu_{j_1} \cdot \big( \sum_{j_4} \Omega_{ i j_4 } \Omega_{i' j_4}  \big) 
	\\& \leq \sum_{j_1 } \frac{1}{M_k^2} \sum_{i, i' \in S_k} N_i N_{i'}
	\Omega_{i j_1} \Omega_{i' j_1} \cdot \mu_{j_1} 
	\\&\leq \sum_{j_1} \mu_{j_1}^3 = \| \mu \|_3^3, 
	\num \label{eqn:B21_bd}
\end{align*}
and similarly 
\begin{align*}
	\sum_{j_1 \neq j_4}  \Sigma_{k j_1 j_4} \Sigma_{ j_1 j_4} \mu_{j_1} 
	&= \sum_{j_1 \neq j_4} \frac{1}{M_k M} \sum_{i \in S_k, i' \in [n]} N_i N_{i'} \Omega_{i j_1} \Omega_{i' j_1} \Omega_{ i j_4 } \Omega_{i' j_4} \cdot \mu_{j_1} 
	\\&\leq \sum_{j_1 } \frac{1}{M_k M} \sum_{i \in S_k, i' \in [n]} N_i N_{i'} \Omega_{i j_1} \Omega_{i' j_1}  \mu_{j_1} 
	\\&= \sum_{j_1} \mu_{j_1}^3 = \| \mu \|_3^3. 
\end{align*}
Thus
\begin{align}
	B_2 \lesssim \big(\sum_k \frac{1}{M_k} \big)  \| \mu \|_3^3. 
	\label{eqn:B2_bd}
\end{align}

For $B_3$,
\begin{align*}
	B_3 &\lesssim  \sum_{i,r}  \sum_{j_1 \neq j_3} \alpha_{i i j_1 j_1} 
	\alpha_{i i j_3 j_3} \, \Omega_{i j_1} \Omega_{i j_3} 
	\\&\lesssim  \sum_{k} \sum_{i \in S_k} \sum_{r \in [N_i]} \sum_{j_1 \neq j_3}
	\frac{1}{M_k} \mu_{ j_1}  \cdot  	\frac{1}{M_k} \mu_{ j_3}   \cdot 
	\Omega_{i j_1} \Omega_{i j_3}  
	\\&\lesssim \sum_{k} \sum_{j_1 \neq j_3}
	\frac{1}{M_k} \mu_{ j_1}  \cdot  	\frac{1}{M_k} \mu_{ j_3}   \cdot 
	M_k \Sigma_{k j_1 j_3} \lesssim \sum_k \frac{  \mu' \Sigma_k \mu  }{M_k}. 
\end{align*}
We have by Cauchy-Schwarz,
\begin{align*}
	\mu' \Sigma_k \mu &= \frac{1}{M_k} \sum_{i \in S_k} N_i \mu ' \Omega_i \Omega_{i'}' \mu
	\\&= \frac{1}{M_k} \sum_{i \in S_k} N_i \big(  \sum_j \mu_j \Omega_{ij}  \big)^2
	\\&\leq  \frac{1}{M_k} \sum_{i \in S_k} N_i \big( \sum_j \Omega_{ij} \big) \big( \sum_j \mu_j^2 \Omega_{ij}  \big)
	\\&= \sum_j \mu_j^3 = \| \mu \|_3^3. \num \label{eqn:mu_Sigmak_quad}
\end{align*}
Thus
\begin{align*}
	B_3 \lesssim  \big(\sum_k \frac{1}{M_k} \big)  \| \mu \|_3^3
	\num \label{eqn:B3_bd} 
\end{align*}

For $B_4$,
\begin{align*}
	B_4 &\lesssim  \sum_{i,r} \sum_{j_1 \neq j_2}  \alpha_{i i j_1 j_2}^2 \,  \Omega_{i j_1} \Omega_{i j_2} 
	\lesssim \sum_{k} \sum_{i \in S_k} \sum_{r \in [N_i]} \sum_{j_1 \neq j_2}  
	\big(  \frac{1}{M_k} \Sigma_{k j_1 j_2} +\frac{1}{M} \Sigma_{ j_1 j_2} \big)^2
	\,  \Omega_{i j_1} \Omega_{i j_2} 
	\\&\lesssim \sum_{k} \sum_{j_1 \neq j_2}  
	\big(  \frac{1}{M_k} \Sigma_{k j_1 j_2} +\frac{1}{M} \Sigma_{ j_1 j_2} \big)^2
	\,  \cdot M_k \Sigma_{k j_1 j_2} 
	\lesssim \sum_k \frac{ {\bf{1}}' ( \Sigma_k^{\circ 3} ){\bf{1}}  }{M_k} 
	+ \sum_k \frac{M_k}{ M^2} {\bf{1}}' ( \Sigma_k\circ \Sigma^{\circ 2}  ){\bf{1}}
	\\&\lesssim \big( \sum_k \frac{ {\bf{1}}' ( \Sigma_k^{\circ 3} ){\bf{1}}  }{M_k} \big) 
	+ \frac{1}{M} \mf{1}' (\Sigma^{\circ 3}) \mf{1}. 
\end{align*}
First, 
\begin{align*}
	{\bf{1}}' ( \Sigma_k^{\circ 3} ){\bf{1}}
	&= \frac{1}{M_k^3} \sum_{i_1, i_2, i_3 \in S_k } N_{i_1} N_{i_2} N_{i_3}
	\big( \sum_j \Omega_{i_1 j} \Omega_{i_2 j} \Omega_{i_3 j} \big)^2 
	\\&\leq \frac{1}{M_k^3} \sum_{i_1, i_2, i_3 \in S_k } N_{i_1} N_{i_2} N_{i_3} \cdot
	\sum_j \Omega_{i_1 j} \Omega_{i_2 j} \Omega_{i_3 j} = \sum_j \mu_j^3 = \| \mu \|_3^3,
\end{align*}
and similarly,
\begin{align*}
	{\bf{1}}' ( \Sigma^{\circ 3} ){\bf{1}}
	= \frac{1}{M^3} \sum_{i_1, i_2, i_3 \in [n] } N_{i_1} N_{i_2} N_{i_3}	\big( \sum_j \Omega_{i_1 j} \Omega_{i_2 j} \Omega_{i_3 j} \big)^2 
	\leq \| \mu \|_3^3. 
\end{align*}
Thus
\begin{align*}
	B_4 \lesssim  \big(\sum_k \frac{1}{M_k} \big)  \| \mu \|_3^3
	\num \label{eqn:B4_bd} 
\end{align*}

For $B_5$, 
\begin{align*}
	B_5 &\lesssim  \sum_{i,r} \sum_{j_1, j_3, j_4 (dist.)} \alpha_{i i j_1 j_1} 
	\alpha_{i i j_3 j_4} \, \Omega_{i j_1} \Omega_{i j_3} \Omega_{i j_4}  
	\\&\lesssim  \sum_{k} \sum_{i \in S_k} N_i \sum_{j_1, j_3, j_4 } 
	\frac{1}{M_k} \mu_{ j_1} \cdot 
	(	\frac{1}{M_k} \Sigma_{k j_3 j_4} +\frac{1}{M} \Sigma_{ j_3 j_4}	)\cdot
	\, \Omega_{i j_1} \Omega_{i j_3} \Omega_{i j_4} 
	\\&\les \sum_k \sum_{i \in S_k} \sum_{j_1, j_3, j_4 }  \frac{N_i \mu_{ j_1} \Sigma_{k j_3 j_4} \Omega_{i j_1} \Omega_{i j_3} \Omega_{i j_4} }{M_k^2} 
	+ \sum_k \sum_{i \in S_k} \sum_{j_1, j_3, j_4 }  \frac{N_i \mu_{ j_1} \Sigma_{j_3 j_4} \Omega_{i j_1} \Omega_{i j_3} \Omega_{i j_4} }{M_k M} 
	\\&=: B_{51} + B_{52}.
\end{align*}
We have
\begin{align*}
	B_{51}	&= \sum_k \frac{1}{M_k^3}   \sum_{i_1, i_2 \in S_k} \sum_{j_1, j_3, j_4 } N_{i_1} N_{i_2}  
	\mu_{j_1} \Omega_{i_1 j_1} \Omega_{i_1 j_3} \Omega_{i_2 j_3} \Omega_{i_1 j_4} 
	\Omega_{i_2 j_4} 
	\\&= \sum_k \frac{1}{M_k^3} \sum_{i_1, i_2 \in S_k} N_{i_1} N_{i_2} (\Omega_{i_1}' \mu) \cdot ( \Omega_{i_1}' \Omega_{i_2} )^2 
	\\&\leq \sum_k \frac{1}{M_k^3} \sum_{i_1, i_2 \in S_k} N_{i_1} N_{i_2} \cdot \Omega_{i_1}' \mu \cdot  \Omega_{i_1}' \Omega_{i_2} 
	\\&= \sum_k \frac{1}{M_k^2} \sum_{i_1} N_{i_1} \mu' \Omega_{i_1} \Omega_{i_1}' \mu 
	= \frac{1}{M_k} \mu' \Sigma_k \mu 
	\leq \sum_k \frac{1}{M_k} \| \mu \|_3^3. 
	\num \label{eqn:B51_bd} 
\end{align*}
In the last line we apply \eqref{eqn:mu_Sigmak_quad}. Similarly,
\begin{align*}
	B_{52} &=\sum_k \frac{1}{M_k M^2} \sum_{i_1 \in S_k, i_2 \in [n] } \sum_{j_1, j_3, j_4}
	N_{i_1} N_{i_2} 
	\mu_{j_1} \Omega_{i_1 j_1} \Omega_{i_1 j_3} \Omega_{i_2 j_3} \Omega_{i_1 j_4} \Omega_{i_2 j_4}
	\\&\leq \sum_k \frac{1}{M_k M^2} \sum_{i_1 \in S_k, i_2 \in [n] } N_{i_1} N_{i_2}  \cdot \Omega_{i_1}' \mu \cdot  \Omega_{i_1}' \Omega_{i_2} 
	\\&\leq \sum_k \frac{1}{M_k M} \sum_{i_1 \in S_k} N_{i_1} \mu' \Omega_{i_1} \Omega_{i_1}' \mu 
	\leq \sum_k \frac{1}{M} \| \mu \|_3^3. 
	\num \label{eqn:B52_bd} 
\end{align*}
Thus 
\begin{align*}
	B_5 \lesssim  \big(\sum_k \frac{1}{M_k} \big)  \| \mu \|_3^3
	\num \label{eqn:B5_bd}.  
\end{align*}

For $B_6$, 
\begin{align*}
	B_6 &\lesssim \sum_{k} \sum_{i \in S_k} \sum_{r  \in [N_i]}  \sum_{j_1, j_2, j_4 (dist.)}  \big( \frac{1}{M_k} \Sigma_{k j_1 j_2} +\frac{1}{M} \Sigma_{ j_1 j_2} \big) 
	\big( \frac{1}{M_k} \Sigma_{k j_1 j_4} +\frac{1}{M} \Sigma_{ j_1 j_4} \big) \Omega_{i j_1} \Omega_{i j_2} \Omega_{i j_4}
	\\&\lesssim \sum_{k} \sum_{i \in S_k} \sum_{r  \in [N_i]} \sum_{j_1, j_2, j_4 } \frac{\Sigma_{k j_1 j_2}^2  \Omega_{i j_1} \Omega_{i j_2} \Omega_{i j_4}}{M_k^2} 
	+ 2 \sum_{k} \sum_{i \in S_k} \sum_{r  \in [N_i]} \sum_{j_1, j_2, j_4 }  \frac{\Sigma_{k j_1 j_2} \Sigma_{j_1 j_2}  \Omega_{i j_1} \Omega_{i j_2} \Omega_{i j_4}}{M_k M} 
	\\&\quad  + \sum_{k} \sum_{i \in S_k} \sum_{r  \in [N_i]} \sum_{j_1, j_2, j_4 }  \frac{\Sigma_{ j_1 j_2}^2  \Omega_{i j_1} \Omega_{i j_2} \Omega_{i j_4}}{M^2} 
	=: B_{61} + B_{62} + B_{63}. 
\end{align*}
First,
\begin{align*}
	B_{61} &\leq \sum_{k} \sum_{i \in S_k} \sum_{r  \in [N_i]} \sum_{j_1, j_2, j_4 } \frac{\Sigma_{k j_1 j_2}^2  \Omega_{i j_1} }{M_k^2} 
	= \sum_k \frac{1}{M_k} \mf{1}' \Sigma_k^{\circ 2} \mu 
	\leq  \sum_k  \frac{1}{M_k} \| \mu \|_3^3, 
\end{align*}
where we applied \eqref{eqn:B21_bd}. Similarly, 
\begin{align*}
	B_{62} &\lesssim \sum_k  \frac{1}{M_k} \| \mu \|_3^3, \text{ and} \\
	B_{63} &\lesssim \sum_k  \frac{1}{M_k} \| \mu \|_3^3. 
\end{align*}
Thus
\begin{align*}
	B_6 \lesssim  \big(\sum_k \frac{1}{M_k} \big)  \| \mu \|_3^3
	\num \label{eqn:B6_bd}.  
\end{align*}

For $B_7$, we have
\begin{align*}
	B_7 &\lesssim 
	\sum_{j_1, j_2, j_3, j_4 (dist.)} 
	\big( \frac{1}{M_k} \Sigma_{k j_1 j_2} +\frac{1}{M} \Sigma_{ j_1 j_2} \big) 
	\big( \frac{1}{M_k} \Sigma_{k j_3 j_4} +\frac{1}{M} \Sigma_{ j_3 j_4} \big) \Omega_{ij_1} 	\Omega_{ij_2}  \Omega_{ij_3} 
	\Omega_{ij_4}
	\\&\lesssim \sum_{k} \sum_{i \in S_k} \sum_{r  \in [N_i]} \sum_{j_1, j_2, j_3, j_4 } \frac{\Sigma_{k j_1 j_2} \Sigma_{k j_3 j_4}   \Omega_{i j_1} \Omega_{i j_2} \Omega_{i j_3} \Omega_{i j_4}}{M_k^2} 
	\\&\quad + 2 \sum_{k} \sum_{i \in S_k} \sum_{r  \in [N_i]} \sum_{j_1, j_2,j_3, j_4 }  \frac{\Sigma_{k j_1 j_2} \Sigma_{j_3 j_4}  \Omega_{i j_1} \Omega_{i j_2} \Omega_{i j_3}  \Omega_{i j_4}}{M_k M} 
	\\&\quad  + \sum_{k} \sum_{i \in S_k} \sum_{r  \in [N_i]} \sum_{j_1, j_2,j_3, j_4 }  \frac{\Sigma_{ j_1 j_2} \Sigma_{j_3 j_4}  \Omega_{i j_1} \Omega_{i j_2} \Omega_{i j_3}  \Omega_{i j_4}}{M^2} 
	=: B_{71} + B_{72} + B_{73}.
\end{align*}
Note that
\begin{align*}
	\Sigma_{kj_1 j_2} 
	&= \frac{1}{M_k} \sum_{i \in S_k} N_i \Omega_{i j_1} \Omega_{i j_2} 
	\leq  \frac{1}{M_k} \sum_{i \in S_k} N_i \Omega_{i j_1} = \mu_{j_1}, \text{ and} 
	\\ \Sigma_{j_1 j_2} 
	&= \frac{1}{M} \sum_{i \in [n]} N_i \Omega_{i j_1} \Omega_{i j_2} 
	\leq  \frac{1}{M} \sum_{i \in [n]} N_i \Omega_{i j_1} = \mu_{j_1}.
	\num \label{eqn:Sigma_bd}
\end{align*}
Thus
\begin{align*}
	B_{71} &\leq 
	\sum_{k} \sum_{i \in S_k} \sum_{r  \in [N_i]} \sum_{j_1, j_2,j_3, j_4 } \frac{ \mu_{j_1} \Sigma_{k j_3 j_4}   \Omega_{i j_1} \Omega_{i j_2} \Omega_{i j_3} \Omega_{i j_4}}{M_k^2} 
	\\&\leq \sum_k \sum_{i \in S_k} \sum_{j_1,j_3, j_4 } \frac{N_i \mu_{j_1} \Sigma_{k j_3 j_4}   \Omega_{i j_1}  \Omega_{i j_3} \Omega_{i j_4}}{M_k^2} 
	\leq \sum_k \frac{1}{M_k} \| \mu \|_3^3
\end{align*}
where we applied \eqref{eqn:B51_bd}. Similarly, 
\begin{align*}
	B_{72} &\lesssim \sum_k  \frac{1}{M_k} \| \mu \|_3^3, \text{ and} \\
	B_{73} &\lesssim \sum_k  \frac{1}{M_k} \| \mu \|_3^3. 
\end{align*}
Thus
\begin{align*}
	B_7 \lesssim  \big(\sum_k \frac{1}{M_k} \big)  \| \mu \|_3^3
	\num \label{eqn:B7_bd}.  
\end{align*}
Combining the results for $B_1$--$B_7$ concludes the proof. 

\qed

\subsubsection{Proof of Lemma~\ref{lem:V2}}

We have
\begin{align*}
	\var(V_2 ) \lesssim  4 \sum_{(i,r) \neq (i', r')} \E \zeta_{irir'}^2,
\end{align*}
where $r \in [N_i]$ and $r \in [N_{i'}]$ in the summation above. 

By symmetry, if $(i,r) \neq (i',r')$, 
\begin{align*}
	\E & \zeta_{iri'r'}^2 
	=  \sum_{j_1, j_2 ,j_3, j_4} \alpha_{i i' j_1 j_2} 
	\alpha_{i i' j_3 j_4} \, \E Z_{ij_1r} Z_{i j_3 r}  \, \E  Z_{i'j_2r'} Z_{i' j_4 r'}
	\\&\lesssim   \sum_{j_1} \alpha_{i i' j_1 j_1}^2 
	\, \Omega_{i j_1} \Omega_{i' j_1} 
	+ \sum_{j_1 \neq j_4} \alpha_{i i' j_1 j_1} 
	\alpha_{i i' j_1 j_4} \, \Omega_{i j_1} \Omega_{i' j_1}  \Omega_{i' j_4} 
	\\& \quad +  \sum_{j_1 \neq j_3} \alpha_{i i' j_1 j_1} 
	\alpha_{i i' j_3 j_3} \, \Omega_{ij_1}  \Omega_{ij_3} \Omega_{i'j_1}  \Omega_{i'j_3}
	+ \sum_{j_1 \neq j_2}  \alpha_{i i' j_1 j_2}^2 \, \Omega_{i j_1} \Omega_{i' j_2}
	\\&\quad + \sum_{j_1, j_3, j_4 (dist.)} \alpha_{i i' j_1 j_1} 
	\alpha_{i i' j_3 j_4} \, \Omega_{i j_1} \Omega_{i j_3} \Omega_{i'j_1} 
	\Omega_{i' j_4} 
	+ \sum_{j_1, j_2, j_4 (dist.)}  \alpha_{i i' j_1 j_2} 
	\alpha_{i i' j_1 j_4} \, \Omega_{i j_1} \Omega_{i' j_2} \Omega_{i' j_4} 
	\\& \quad + \sum_{j_1, j_2, j_3, j_4 (dist.)} 
	\alpha_{i i' j_1 j_2} 
	\alpha_{i i' j_3 j_4} \Omega_{ij_1} \Omega_{ij_3} 
	\Omega_{i'j_2} \Omega_{i'j_4} =: \sum_a^7 C_{a,i,r}. 
	\num \label{eqn:V2_expansion2} 
\end{align*}
Thus
\begin{align*}
	\var(V_2) \lesssim \sum_{a = 1}^7 \sum_{(i,r) \neq (i', r')} C_{a,i,r} 
	\lesssim \sum_{a = 1}^7 \underbrace{ \sum_{i,i'} N_i N_{i'} C_{a,i,r} }_{=: C_a}.  
\end{align*}
Next we analyze $C_1, \ldots, C_7$, bounding the $\alpha_{ii'j_r j_s}$ coefficients using Lemma \ref{lem:alpha_bounds}.

For $C_1$, 
\begin{align*}
	C_1 &\lesssim \sum_k \sum_{i, i' \in S_k} \sum_{j_1}
	N_i N_{i'} \alpha_{ii'j_1j_1}^2 \Omega_{i j_1} \Omega_{i' j_1} 
	+ \sum_{k \neq k'} \sum_{i \in S_k, i' \in S_{k'}}   \sum_{j_1}
	N_i N_{i'} \alpha_{ii'j_1j_1}^2 \Omega_{i j_1} \Omega_{i' j_1} 
	\\&\lesssim \sum_k \sum_{i, i' \in S_k}  \sum_{j_1}
	N_i N_{i'} (	\frac{1}{M_k} \mu_{ j_1} )^2 \Omega_{i j_1} \Omega_{i j_1} 
	+ \sum_{k \neq k'} \sum_{i \in S_k, i' \in S_{k'}}   \sum_{j_1}
	(\frac{1}{M} \mu_{ j_1} 	)^2 \Omega_{i j_1} \Omega_{i' j_1} 
	\\&\lesssim \sum_k \sum_{j_1} \mu_{ j_1}^4 
	+ \sum_{k \neq k'} \sum_{j_1} \frac{M_k M_{k'}}{M^2} \mu_{j_1}^4
	\lesssim K \| \mu \|_4^4. 
	\num \label{eqn:C1_bd}
\end{align*}

For $C_2$,
\begin{align*}
	C_2 &\lesssim \sum_k \sum_{i, i' \in S_k} N_i N_{i'}  \sum_{j_1 \neq j_4} \alpha_{i i' j_1 j_1} 
	\alpha_{i i' j_1 j_4} \, \Omega_{i j_1} \Omega_{i' j_1}  \Omega_{i' j_4}
	\\&\quad + \sum_{k \neq k'} \sum_{i \in S_k, i' \in S_{k'}} N_i N_{i'}  \sum_{j_1 \neq j_4} \alpha_{i i' j_1 j_1} 
	\alpha_{i i' j_1 j_4} \, \Omega_{i j_1} \Omega_{i' j_1}  \Omega_{i' j_4}
	\\&\lesssim \sum_k \sum_{i, i' \in S_k} N_i N_{i'}  \sum_{j_1 \neq j_4} 
	\frac{1}{M_k} \mu_{ j_1} 	\cdot 
	\big( \frac{1}{M_k} \Sigma_{k j_1 j_4} +\frac{1}{M} \Sigma_{ j_1 j_4} \big) \, \Omega_{i j_1} \Omega_{i' j_1}  \Omega_{i' j_4}
	\\&\quad + \sum_{k \neq k'} \sum_{i \in S_k, i' \in S_{k'}} N_i N_{i'}  \sum_{j_1 \neq j_4} 	\frac{1}{M} \mu_{ j_1} 	\cdot 
	\big( \frac{1}{M} \sum_{a \in \{k, k'\}} \Sigma_{a j_1 j_4} + \frac{1}{M} \Sigma_{j_1 j_4} \big) \, \Omega_{i j_1} \Omega_{i' j_1}  \Omega_{i' j_4} 
	\\&\lesssim 
	\sum_k \sum_{i, i' \in S_k} N_i N_{i'}  \sum_{j_1 \neq j_4} 
	\frac{1}{M_k} \mu_{ j_1} 	\cdot 
	\big( \frac{1}{M_k} \mu_{j_1} +\frac{1}{M} \mu_{j_1} \big) \, \Omega_{i j_1} \Omega_{i' j_1}  \Omega_{i' j_4} 
	\\&\quad + \sum_{k \neq k'} \sum_{i \in S_k, i' \in S_{k'}} N_i N_{i'}  \sum_{j_1 \neq j_4} 	\frac{1}{M} \mu_{ j_1} 	\cdot 
	\big( \frac{2}{M} \mu_{j_1} + \frac{1}{M} \mu_{j_1} \big) \, \Omega_{i j_1} \Omega_{i' j_1}  \Omega_{i' j_4} 
	\\&\lesssim \sum_k \sum_{j_1} \big( \mu_{j_1}^4 + \frac{M_k}{M} \mu_{ j_1}^4 \big) 
	+ \sum_{k \neq k'} \sum_{j_1} \frac{M_k M_{k'}}{M^2} \mu_{ j_1}^4 
	\lesssim K \| \mu \|_4^4. 
	\num \label{eqn:C2_bd}
\end{align*}
where we applied \eqref{eqn:Sigma_bd}. 

For $C_3$,
\begin{align*}
	C_3 &\lesssim \bigg(\sum_k \sum_{i, i' \in S_k} N_i N_{i'} + \sum_{k \neq k'} \sum_{i \in S_k, i' \in S_{k'}} N_i N_{i'}  \bigg) 
	\sum_{j_1 \neq j_3} \alpha_{i i' j_1 j_1} 
	\alpha_{i i' j_3 j_3} \, \Omega_{ij_1}  \Omega_{ij_3} \Omega_{i'j_1}  \Omega_{i'j_3}
	\\&\lesssim  \sum_k \sum_{i, i' \in S_k} N_i N_{i'}  \sum_{j_1 \neq j_3}
	\frac{1}{M_k} \mu_{ j_1} \cdot 
	\frac{1}{M_k} \mu_{ j_3}  \cdot  \, \Omega_{ij_1}  \Omega_{ij_3} \Omega_{i'j_1}  \Omega_{i'j_3}
	\\&\quad +  \sum_{k \neq k'} \sum_{i \in S_k, i' \in S_{k'}}  N_i N_{i'} 
	\sum_{j_1 \neq j_3} \frac{1}{M} \mu_{ j_1} 	\cdot 
	\frac{1}{M} \mu_{ j_3} 	 \, \Omega_{ij_1}  \Omega_{ij_3} \Omega_{i'j_1}  \Omega_{i'j_3}
	\\&= \sum_k \sum_{j_1 \neq j_3} \mu_{j_1} \mu_{j_3} \Sigma_{k j_1 j_3}^2 +  \sum_{k \neq k'} 
	\sum_{j_1 \neq j_3} \frac{M_k M_{k'}}{M^2} \mu_{j_1} \mu_{j_3} \Sigma_{k j_1 j_3} \Sigma_{k' j_1 j_3} 
	\\&\leq \big(\sum_k \mu' \Sigma_{k}^{\circ 2} \mu \big) + \mu' \Sigma^{\circ 2} \mu. 
\end{align*}
First, by Cauchy--Schwarz,
\begin{align*}
	\mu' \Sigma_k^{\circ 2} \mu 
	&= \frac{1}{M_k^2} \sum_{i, i' \in S_k} N_i N_{i'} \big(\sum_j \mu_j \Omega_{ij}\Omega_{i'j} \big)^2 
	\\&= \frac{1}{M_k^2} \sum_{i, i' \in S_k} N_i N_{i'} \big(\sum_j \Omega_{ij} \Omega_{i'j} \big) \sum_j \mu_j^2 \Omega_{ij} \Omega_{i'j} 
	\\&\leq \frac{1}{M_k^2} \sum_{i, i' \in S_k} N_i N_{i'}  \sum_j \mu_j^2 \Omega_{ij} \Omega_{i'j} 
	= \sum_j \mu_j^4 = \| \mu \|_4^4. 
	\num \label{eqn:C31_bd}
\end{align*}
Similarly 
\begin{align*}
	\mu' \Sigma^{\circ 2} \mu \lesssim \| \mu \|_4^4 .
	\num \label{eqn:C32_bd}
\end{align*}
Hence
\begin{align*}
	C_3 \lesssim K \| \mu \|_4^4. 
	\num \label{eqn:C3_bd}
\end{align*}

For $C_4$,
\begin{align*}
	C_4 &\lesssim \bigg(\sum_k \sum_{i, i' \in S_k} N_i N_{i'}  + \sum_{k \neq k'} \sum_{i \in S_k, i' \in S_{k'}} N_i N_{i'}  \bigg) \sum_{j_1 \neq j_2}  \alpha_{i i' j_1 j_2}^2 \, \Omega_{i j_1} \Omega_{i' j_2}
	\\&\lesssim \sum_k \sum_{i, i' \in S_k} N_i N_{i'}  \sum_{j_1 \neq j_2}  
	\big(  \frac{1}{M_k} \Sigma_{k j_1 j_2} +\frac{1}{M} \Sigma_{ j_1 j_2} \big)^2 \, \Omega_{i j_1} \Omega_{i' j_2} 
	\\&\quad + \sum_{k \neq k'} \sum_{i \in S_k, i' \in S_{k'}} N_i N_{i'} 
	\sum_{j_1 \neq j_2}  \big(\frac{1}{M} \sum_{a \in \{k, k'\}}^2 \Sigma_{a j_1 j_2} + \frac{1}{M} \Sigma_{j_1 j_2}\big)^2 \, \Omega_{i j_1} \Omega_{i' j_2}
	\\&\lesssim \sum_k \sum_{i, i' \in S_k} N_i N_{i'}  \sum_{j_1 \neq j_2}  
	\big(  \frac{1}{M_k^2} \Sigma_{k j_1 j_2}^2 +\frac{1}{M^2} \Sigma_{ j_1 j_2}^2 \big) \, \Omega_{i j_1} \Omega_{i' j_2}	
	\\&\quad + \sum_{k \neq k'} \sum_{i \in S_k, i' \in S_{k'}} N_i N_{i'}  
	\sum_{j_1 \neq j_2}  \big(\frac{1}{M^2} \sum_{a \in \{k, k'\}}^2 \Sigma_{a j_1 j_2}^2 + \frac{1}{M^2} \Sigma_{j_1 j_2}^2 \big) \, \Omega_{i j_1} \Omega_{i' j_2} =: C_{41} + C_{42} 
\end{align*}
First,
\begin{align*}
	C_{41} &\lesssim 
	\sum_k \sum_{i, i' \in S_k} N_i N_{i'}  \sum_{j_1 \neq j_2} 
	\frac{1}{M_k^2} \Sigma_{k j_1 j_2}^2 \Omega_{i j_1} \Omega_{i' j_2}	+ 	 \sum_k \sum_{i, i' \in S_k} N_i N_{i'}  \sum_{j_1 \neq j_2} \frac{1}{M^2} \Sigma_{ j_1 j_2}^2  \Omega_{i j_1} \Omega_{i' j_2}
	\\&\lesssim \sum_k \sum_{j_1 \neq j_2} \Sigma_{k j_1 j_2}^2 \mu_{j_1} \mu_{j_2} 
	+ \sum_k \sum_{j_1 \neq j_2} \frac{M_k^2}{M^2} \Sigma_{j_1 j_2}^2 \mu_{j_1} \mu_{j_2} 
	\leq  \sum_k \mu' \Sigma_{k}^{\circ 2} \mu 
	+ \sum_k \frac{M_k^2}{M^2} \mu' \Sigma^{\circ 2} \mu.
\end{align*}
Similarly, 
\begin{align*}
	C_{42} &\lesssim 
	\sum_{k \neq k'} 
	\sum_{j_1 \neq j_2} \frac{M_k M_{k'}}{M^2} \Sigma_{k j_1 j_2}^2 \mu_{j_1} \mu_{j_2} 
	+ 	 \sum_{k \neq k'} 
	\sum_{j_1 \neq j_2} \frac{M_k M_{k'}}{M^2} \Sigma_{j_1 j_2}^2 \mu_{j_1} \mu_{j_2}
	\\&\lesssim \sum_{k \neq k'} \frac{M_k M_{k'}}{M^2} \big(  \mu' \Sigma_k^{\circ 2} \mu +   \mu' \Sigma^{\circ 2} \mu \big) 
\end{align*}
Combining the previous two displays and applying \eqref{eqn:C31_bd} and \eqref{eqn:C32_bd}, we have
\begin{align}
	C_4 &\lesssim K \| \mu \|_4^4. 
\end{align}

For $C_5$, 
\begin{align*}
	C_5&\lesssim \bigg(\sum_k \sum_{i, i' \in S_k} N_i N_{i'}  + \sum_{k \neq k'} \sum_{i \in S_k, i' \in S_{k'}} N_i N_{i'}  \bigg) \sum_{j_1, j_3, j_4 (dist.)} \alpha_{i i' j_1 j_1} 
	\alpha_{i i' j_3 j_4} \, \Omega_{i j_1} \Omega_{i j_3} \Omega_{i'j_1} 
	\Omega_{i' j_4} 
	\\&\lesssim \sum_k \sum_{i, i' \in S_k} N_i N_{i'}  \sum_{j_1, j_3, j_4 } \frac{1}{M_k} \mu_{ j_1} 	\cdot 
	\big( \frac{1}{M_k} \Sigma_{k j_3 j_4} +\frac{1}{M} \Sigma_{ j_3 j_4} \big)  \, \Omega_{i j_1} \Omega_{i j_3} \Omega_{i'j_1} 
	\Omega_{i' j_4} 
	\\&\quad + \sum_{k \neq k'} \sum_{i \in S_k, i' \in S_{k'}} N_i N_{i'}  \sum_{j_1, j_3, j_4 } 
	\frac{1}{M} \mu_{ j_1} 
	\big( \frac{1}{M} \sum_{a \in \{k, k'\} }^2 \Sigma_{a j_3 j_4} + \frac{1}{M} \Sigma_{j_3 j_4}\big) \, \Omega_{i j_1} \Omega_{i j_3} \Omega_{i'j_1} 
	\Omega_{i' j_4} 
	\\&= \sum_k \sum_{j_1,j_3, j_4 } \mu_{j_1} \Sigma_{k j_3 j_4} \Sigma_{k j_1 j_3} \Sigma_{k j_1 j_4} 
	+ \sum_k \sum_{j_1,j_3, j_4 } \frac{M_k}{M} \mu_{j_1} \Sigma_{ j_3 j_4} \Sigma_{k j_1 j_3} \Sigma_{k j_1 j_4} 
	\\&\quad + 2\sum_{k \neq k'}  \sum_{j_1, j_3, j_4 } \frac{M_k M_{k'}}{M^2} \mu_{j_1} \Sigma_{k j_3 j_4} \Sigma_{k j_1 j_3} \Sigma_{k' j_1 j_4} 
	+  \sum_{k \neq k'}  \sum_{j_1, j_3, j_4 } \frac{M_k M_{k'}}{M^2} \mu_{j_1} \Sigma_{ j_3 j_4} \Sigma_{k j_1 j_3} \Sigma_{k' j_1 j_4}
	\\&= C_{51} + C_{52} + 2C_{53} + C_{54}  
\end{align*}
For $C_{51}$, we have
\begin{align*}
	C_{51} 
	&= \sum_k \frac{1}{M_k^3} \sum_{i_1, i_2, i_3 \in S_k } N_{i_1} N_{i_2} N_{i_3} \langle \mu \circ \Omega_{i_1}, \Omega_{i_2} \rangle
	\langle \Omega_{i_1}, \Omega_{i_3} \rangle
	\langle \Omega_{i_2}, \Omega_{i_3} \rangle 
	\\&=  \sum_k \frac{1}{M_k^2} 
	\sum_{i_1, i_2 \in S_k} N_{i_1} N_{i_2} \langle \mu \circ \Omega_{i_1}, \Omega_{i_2} \rangle \cdot \langle \Omega_{i_1}, \Sigma_k \Omega_{i_2} \rangle 
	\\&\leq  \sum_k  \bigg(\frac{1}{M_k^2} \sum_{i_1, i_2 \in S_k} N_{i_1} N_{i_2} \langle \mu \circ \Omega_{i_1}, \Omega_{i_2} \rangle ^2 \bigg)^{1/2} \bigg( \frac{1}{M_k^2} \sum_{i_1, i_2 \in S_k} N_{i_1} N_{i_2} \langle \Omega_{i_1}, \Sigma_k \Omega_{i_2} \rangle^2 \bigg)^{1/2}
	\\&=: \sum_k C_{511k}^{1/2} \cdot C_{512k}^{1/2}. 
	\num \label{eqn:C51_bd1}
\end{align*}
We have by Cauchy--Schwarz that
\begin{align*}
	C_{511k} &=\frac{1}{M_k^2} \sum_{i_1, i_2 \in S_k} N_{i_1} N_{i_2} \big( \sum_j \mu_j \Omega_{i_1 j} \Omega_{i_2 j} \big)^2 
	\\&	\leq \frac{1}{M_k^2} \sum_{i_1, i_2 \in S_k}   N_{i_1} N_{i_2} \big( \sum_j \mu_j^2 \Omega_{i_1 j} \Omega_{i_2 j} \big) \big( \sum_j \Omega_{i_1 j} \Omega_{i_2 j}  \big) 
	\leq  \|  \mu \|_4^4,
\end{align*}
and similarly
\begin{align*}
	C_{512k} &=\frac{1}{M_k^2} \sum_{i_1, i_2 \in S_k} N_{i_1} N_{i_2}
	\big( \sum_{j_1, j_2} \Omega_{i_1 j_1} \Sigma_{k j_1 j_2} \Omega_{i_2 j_2} \big)^2
	\\&= \frac{1}{M_k^2} \sum_{i_1, i_2} N_{i_1} N_{i_2} \big( \sum_{j_1, j_2} \Omega_{i_1 j_1} \Sigma_{k j_1 j_2}^2 \Omega_{i_2 j_2} \big) 
	\big(
	\sum_{j_1, j_2} \Omega_{i_1 j_1} \Omega_{i_2 j_2} 
	\big) 
	\\&\leq \frac{1}{M_k^2} \sum_{i_1, i_2} N_{i_1} N_{i_2} \big( \sum_{j_1, j_2} \Omega_{i_1 j_1} \Sigma_{k j_1 j_2}^2 \Omega_{i_2 j_2} \big) 
	= \mu' \, \Sigma_k^{\circ 2} \, \mu
	\num \label{eqn:C512_bd}
\end{align*}
Since by Cauchy--Schwarz, 
\begin{align*}
	\mu' \, \Sigma_k^{\circ 2} \, \mu 
	&= \sum_{j_1, j_2} \mu_{ j_1} \mu_{j_2} \big( \frac{1}{M_k} \sum_{i \in S_k} N_i  \Omega_{ij_1} \Omega_{ij_2} \big)^2
	= \frac{1}{M_k^2} \sum_{j_1, j_2} \mu_{j_1} \mu_{j_2} 
	\sum_{i, i' \in S_k} N_i N_{i'} \Omega_{ij_1} \Omega_{i j_2} 
	\Omega_{i'j_1} \Omega_{i'j_2} 
	\\&= \frac{1}{M_k^2} \sum_{i, i' \in S_k} 
	\big( \sum_j \mu_j \Omega_{ij} \Omega_{i' j} \big)^2 
	\leq \frac{1}{M_k^2} \sum_{i, i' \in S_k}  \sum_j \mu_j^2 \Omega_{ij} \Omega_{i'j} \leq \| \mu \|_4^4
	\num	\label{eqn:muSig2_bd}
\end{align*}
we have in total
$
C_{512k} \lesssim K \| \mu \|_4^4
$. Combining the result with the bound for $C_{511k}$ implies that
\begin{align*}
	C_{51} \lesssim K \| \mu \|_4^4. 
\end{align*}
Next we study $C_{52}$ using a similar argument. 
\begin{align*}
	C_{52} &= \sum_k \sum_{j_1,j_3, j_4 } \frac{M_k}{M} \mu_{j_1} \Sigma_{ j_3 j_4} \Sigma_{k j_1 j_3} \Sigma_{k j_1 j_4} 
	\\&=  \sum_k \sum_{j_1,j_3, j_4 } \frac{M_k}{M} \mu_{j_1} \big(
	\frac{1}{M} \sum_{i_1 \in [n]} N_{i_1} \Omega_{i_1j_3} \Omega_{i_1j_4}\big) 
	\big(\frac{1}{M_k} \sum_{i_2 \in S_k} N_{i_2} \Omega_{i_2 j_1} \Omega_{i_2 j_3} \big)
	\big(\frac{1}{M_k} \sum_{i_3 \in S_k} N_{i_3} \Omega_{i_3 j_1} \Omega_{i_3 j_4} \big)
	\\&=\sum_k  \frac{1}{M^2 M_k}
	\sum_{j_1,j_2,j_3} \sum_{\substack{i_1 \in [n] \\ i_2, i_3 \in S_k}}
	N_{i_1} N_{i_2} N_{i_3} \langle \mu \circ \Omega_{i_2}, \Omega_{i_3} \rangle 
	\langle \Omega_{i_1}, \Omega_{i_3} \rangle
	\langle \Omega_{i_1}, \Omega_{i_2} \rangle 
	\\&= \sum_k \frac{1}{M^2} 
	\sum_{i_2, i_3 \in [S_k]} N_{i_2} N_{i_3} \langle \mu \circ \Omega_{i_2}, \Omega_{i_3} \rangle \langle \Omega_{i_3}, \Sigma \Omega_{i_2} \rangle 
	\\&\leq \sum_k \bigg( \frac{1}{M^2} \sum_{i_2, i_3 \in [S_k]} N_{i_2} N_{i_3}  \langle \mu \circ \Omega_{i_2}, \Omega_{i_3} \rangle ^2 \bigg)^{1/2} 
	\bigg( \frac{1}{M^2} \sum_{i_2, i_3 \in [S_k]} N_{i_2} N_{i_3} \langle \Omega_{i_3}, \Sigma \Omega_{i_2} \rangle  \bigg)^{1/2} 
	\\&=: \sum_k C_{521k}^{1/2} C_{522k}^{1/2}.
	\num \label{eqn:C52_bd1}
\end{align*}
Observe that $C_{521k} = C_{511k}$, and thus $C_{521} \les \| \mu \|^4$ by \eqref{eqn:C512_bd}. With a similar argument as in \eqref{eqn:muSig2_bd} we obtain $C_{522k} \les \| \mu \|_4^4$. Hence we obtain
\begin{align*}
	C_{52} \leq \sum_k C_{521k}^{1/2} C_{522k}^{1/2} \les K \| \mu \|_4^4.
\end{align*}

For $C_{53}$, we have 
\begin{align*}
	&C_{53} 
	= \sum_{k \neq k'}  \sum_{j_1, j_3, j_4 } \frac{M_k M_{k'}}{M^2} \mu_{j_1} \Sigma_{k j_3 j_4} \Sigma_{k j_1 j_3} \Sigma_{k' j_1 j_4} 
	\\&\leq \sum_k \sum_{j_1,j_3, j_4 } \frac{M_k}{M} 
	\mu_{j_1} \Sigma_{k j_3 j_4} \Sigma_{k j_1 j_3}  \Sigma_{j_1 j_4}
	\\&=  \sum_k \sum_{j_1,j_3, j_4 } \frac{M_k}{M} 
	\mu_{j_1}
	\big(
	\frac{1}{M_k} \sum_{i_1 \in S_k} N_{i_1} \Omega_{i_1j_3} \Omega_{i_1j_4}\big) 
	\big(\frac{1}{M_k} \sum_{i_2 \in S_k} N_{i_2} \Omega_{i_2 j_1} \Omega_{i_2 j_3} \big)
	\big(\frac{1}{M} \sum_{i_3 \in [n]} N_{i_3} \Omega_{i_3 j_1} \Omega_{i_3 j_4} \big)
	\\&= \sum_k \frac{1}{M^2 M_k} \sum_{\substack{i_1, i_2 \in S_k \\ i_3 \in [n]}} N_{i_1} N_{i_2} N_{i_3} \langle \mu \circ \Omega_{i_2} , \Omega_{i_3} \rangle \langle \Omega_{i_1}, \Omega_{i_2} \rangle
	\langle \Omega_{i_1}, \Omega_{i_3} \rangle 
	\\&=  \sum_k \frac{1}{M^2} \sum_{ i_2 \in S_k, i_3 \in [n] }
	N_{i_2} N_{i_3} \langle \mu \circ \Omega_{i_2} , \Omega_{i_3} \rangle  \langle \Omega_{i_2}, \Sigma_k \Omega_{i_3} \rangle.
	\num \label{eqn:C53_bd1}
\end{align*}
We then upper bound the last line using a similar strategy as in that we used for $C_{51}$ and $C_{52}$, respectively. We omit the details and state the final bound:
\begin{align*}
	\num \label{eqn:C53_bd} 
	C_{53} \les K \| \mu \|_4^4
\end{align*}
Finally for $C_{54}$, summing over $k, k'$ we obtain
\begin{align*}
	C_{54} &\leq  \sum_{j_1,j_3, j_4 } \mu_{j_1} \Sigma_{j_3 j_4} \Sigma_{j_1 j_3} \Sigma_{j_1 j_4}
	= \frac{1}{M^3} \sum_{i_1, i_2, i_3 \in [n]} 
	N_{i_1} N_{i_2} N_{i_3}  \langle \mu \circ \Omega_{i_2} , \Omega_{i_3} \rangle \langle \Omega_{i_1}, \Omega_{i_2} \rangle
	\langle \Omega_{i_1}, \Omega_{i_3} \rangle.
	\num \label{eqn:C54_bd1} 
\end{align*}
We then proceed as in \eqref{eqn:C53_bd1} to control the right-hand side. We omit the details and state the final bound:
\begin{align*}
	\num \label{eqn:C54_bd} 
	C_{54} \les K \| \mu \|_4^4.
\end{align*}
Combining the results for $C_{51}, \ldots, C_{54}$, we see that
\begin{align*}
	C_5 \les K \| \mu \|^4.
\end{align*}
For $C_6$, we have
\begin{align*}
	& C_6 \leq \bigg(\sum_k \sum_{i, i' \in S_k} N_i N_{i'}  + \sum_{k \neq k'} \sum_{i \in S_k, i' \in S_{k'}} N_i N_{i'}  \bigg) \sum_{j_1, j_2, j_4 }  \alpha_{i i' j_1 j_2} 
	\alpha_{i i' j_1 j_4} \, \Omega_{i j_1} \Omega_{i' j_2} \Omega_{i' j_4}
	\\&\lesssim \sum_k \sum_{i, i' \in S_k} N_i N_{i'}  \sum_{j_1, j_2, j_4 }  \big(  \frac{1}{M_k} \Sigma_{k j_1 j_2} +\frac{1}{M} \Sigma_{ j_1 j_2}  \big) 
	\big( \frac{1}{M_k} \Sigma_{k j_1 j_4} +\frac{1}{M} \Sigma_{ j_1 j_4} \big)
	\, \Omega_{i j_1} \Omega_{i' j_2} \Omega_{i' j_4}
	\\&+ \sum_{\substack{k \neq k' \\i \in S_k, i' \in S_{k'} \\ j_1, j_2, j_4 }} N_i N_{i'} 
	\big( \frac{1}{M} \sum_{a \in \{k, k'\} }^2 \Sigma_{a j_1 j_2} + \frac{1}{M} \Sigma_{j_1 j_2} \big) 
	\big( \frac{1}{M} \sum_{a \in \{k, k'\} }^2 \Sigma_{a j_1 j_4} + \frac{1}{M} \Sigma_{j_1 j_4} \big)
	\, \Omega_{i j_1} \Omega_{i' j_2} \Omega_{i' j_4}
	\\&=: C_{61} + C_{62}. 
\end{align*}
For $C_{61}$, we have
\begin{align*}
	C_{61} &= \sum_k \sum_{i' \in S_k} N_{i'} 
	\sum_{j_1, j_2, j_4 } \frac{1}{M_k} 
	\Sigma_{k j_1 j_2} \Sigma_{k j_1 j_4}  \mu_{j_1} \Omega_{i'j_2} \Omega_{i' j_4} 
	\\& \quad + 2\sum_k \sum_{ i' \in S_k} N_{i'} 
	\sum_{j_1, j_2, j_4 } \frac{1}{M} 
	\Sigma_{k j_1 j_2} \Sigma_{ j_1 j_4} \mu_{j_1} \Omega_{i'j_2} \Omega_{i' j_4} 
	\\&+ \quad \sum_k \sum_{i' \in S_k} N_{i'} 
	\sum_{j_1, j_2, j_4 } \frac{M_k}{M^2} 
	\Sigma_{ j_1 j_2} \Sigma_{j_1 j_4} \mu_{j_1} \Omega_{i'j_2} \Omega_{i' j_4} 
	=: C_{611} + 2C_{612} +C_{613}. 
\end{align*}
Relabeling indices, we see that 
\begin{align*}
	C_{611} &= \sum_k \sum_{j_1, j_2, j_4} \mu_{j_1} 
	\Sigma_{k j_1 j_2} \Sigma_{k j_1 j_4}  \Sigma_{k j_2 j_4} 
	= C_{51}
\end{align*}
Hence, $C_{611} \lesssim K \| \mu \|_4^4$. Next,
\begin{align*}
	C_{612} \leq \sum_k \frac{M_k}{M} \sum_{j_1,j_2,j_4} \mu_{j_1} \Sigma_{kj_1j_2} \Sigma_{j_1j_4} \Sigma_{k j_2 j_4} 
	\les K \| \mu \|^4,
\end{align*}
where we applied \eqref{eqn:C53_bd1}. Similarly,
\begin{align*}
	C_{613}
	&= \sum_k \frac{M_k^2}{M^2} \sum_{j_1,j_2,j_4} \mu_{j_1} \Sigma_{j_1 j_2} \Sigma_{j_1 j_4} \Sigma_{k j_2 j_4} 
	\leq \sum_{j_1,j_2,j_4}  \mu_{j_1} \Sigma_{j_1 j_2} \Sigma_{j_1 j_4} \Sigma_{ j_2 j_4} \les K \| \mu \|^4,
\end{align*}
where in the final bound we apply \eqref{eqn:C54_bd1} and \eqref{eqn:C54_bd}. Combining the results above for $C_{611}, C_{612}, C_{613}$, we obtain
\begin{align}
	C_{61} \les K \| \mu \|_4^4
\end{align}
The argument for $C_{62}$ is very similar, so we omit proof and state the final bound. We have 
\begin{align*}
	C_{62} \lesssim K\| \mu \|_4.
\end{align*}
Thus
\begin{align*}
	C_6 \lesssim K \| \mu \|_4^4
\end{align*}

For $C_7$, we have
\begin{align*}
	&C_7 \lesssim \bigg(\sum_k \sum_{i, i' \in S_k} N_i N_{i'}  + \sum_{k \neq k'} \sum_{i \in S_k, i' \in S_{k'}} N_i N_{i'}  \bigg)
	\sum_{j_1, j_2, j_3, j_4 } 
	\alpha_{i i' j_1 j_2} 
	\alpha_{i i' j_3 j_4} \Omega_{ij_1} \Omega_{ij_3} 
	\Omega_{i'j_2} \Omega_{i'j_4}
	\\&\lesssim \sum_k \sum_{i, i' \in S_k} N_i N_{i'} 	\sum_{j_1, j_2, j_3, j_4 } 
	\big(   \frac{1}{M_k} \Sigma_{k j_1 j_2} +\frac{1}{M} \Sigma_{ j_1 j_2}  \big) 
	\big(   \frac{1}{M_k} \Sigma_{k j_3 j_4} +\frac{1}{M} \Sigma_{ j_3 j_4}  \big) 
	\Omega_{ij_1} \Omega_{ij_3} 
	\Omega_{i'j_2} \Omega_{i'j_4}
	\\&+ \sum_{k \neq k'} \sum_{\substack{j_1, j_2, j_3, j_4 \\ i \in S_k, i' \in S_{k'}} } N_i N_{i'} 	
	\big( \frac{1}{M} \sum_{a \in \{k, k'\} }^2 \Sigma_{a j_1 j_2} + \frac{1}{M} \Sigma_{j_1 j_2} \big) 
	\big( \frac{1}{M} \sum_{a \in \{k, k'\} }^2 \Sigma_{a j_3 j_4} + \frac{1}{M} \Sigma_{j_3 j_4} \big) \Omega_{ij_1} \Omega_{ij_3}
	\Omega_{i'j_2} \Omega_{i'j_4}
	\\&= : C_{71} +C_{72}
\end{align*}
Write 
\begin{align*}
	C_{71} 
	&= \sum_k \sum_{i, i' \in S_k} N_i N_{i'} 	\sum_{j_1, j_2, j_3, j_4 } 
	\frac{1}{M_k^2} \Sigma_{k j_1 j_2} \Sigma_{k j_3 j_4} 
	\Omega_{ij_1} \Omega_{ij_3} 
	\Omega_{i'j_2} \Omega_{i'j_4}
	\\&\quad +  2\sum_k \sum_{i, i' \in S_k} N_i N_{i'} 	\sum_{j_1, j_2, j_3, j_4 } 
	\frac{1}{M_k M} \Sigma_{ j_1 j_2} \Sigma_{k j_3 j_4} 
	\Omega_{ij_1} \Omega_{ij_3} 
	\Omega_{i'j_2} \Omega_{i'j_4}
	\\&\quad + \sum_k \sum_{i, i' \in S_k} N_i N_{i'} 	\sum_{j_1, j_2, j_3, j_4 } 
	\frac{1}{M^2} \Sigma_{ j_1 j_2} \Sigma_{ j_3 j_4} 
	\Omega_{ij_1} \Omega_{ij_3} 
	\Omega_{i'j_2} \Omega_{i'j_4} =: C_{711} + 2 C_{712} + C_{713}.
\end{align*}
For $C_{711}$, we have 
\begin{align*}
	C_{711} 
	&= \sum_k \sum_{j_1, j_2, j_3, j_4} 
	\Sigma_{k j_1 j_2} \Sigma_{k j_3 j_4} 
	\Sigma_{k j_1 j_3} \Sigma_{k j_2 j_4} 
	\\& =  \sum_k\frac{1}{M_k^4} 
	\sum_{i_1, i_2, i_3, i_4 \in S_k} 
	N_{i_1} N_{i_2} N_{i_3} N_{i_4} 
	\langle \Omega_{i_1}, \Omega_{i_3} \rangle
	\langle \Omega_{i_1}, \Omega_{i_4} \rangle
	\langle \Omega_{i_2} , \Omega_{i_3} \rangle
	\langle \Omega_{i_2}, \Omega_{i_4} \rangle 
	\\&= \frac{1}{M_k^2}  \sum_k\sum_{i_3, i_4} N_{i_3} N_{i_4}
	\big(\Omega_{i_3}' \Sigma_k \Omega_{i_4} \big)^2 
	=  \sum_k\frac{1}{M_k^2} \sum_{i_3, i_4} N_{i_3} N_{i_4}
	\big( \sum_{j, j'} \Omega_{i_3 j}' \Sigma_{k j j'} \Omega_{i_4 j'} \big)^2 
	\\&\leq  \sum_k \frac{1}{M_k^2} \sum_{i_3, i_4} N_{i_3} N_{i_4}
	\sum_{j, j'} \Omega_{i_3 j}' \Sigma_{k j j'}^2 \Omega_{i_4 j'}
	\leq \sum_k \sum_{j,j'} \mu_{j} \Sigma_{k j j'}^2 \mu_{j'} 
	\lesssim K \| \mu \|_4^4.
	\num \label{eqn:C711_bd}
\end{align*}
In the last line we applied Cauchy--Schwarz and \eqref{eqn:muSig2_bd}. 
For $C_{712}$, we have similarly
\begin{align*}
	C_{712}
	&= \sum_k \frac{M_k}{M} \sum_{j_1, j_2 ,j_3, j_4} \Sigma_{j_1j_2} \Sigma_{kj_3j_4} \Sigma_{kj_1j_3}\Sigma_{kj_2j_4} 
	\\&= \sum_k \frac{1}{M^2M_k} \sum_{\substack{i_1 \in[n] \\ i_2,i_3,i_4 \in S_k}}
	N_{i_1} N_{i_2} N_{i_3} N_{i_4} 
	\langle \Omega_{i_1}, \Omega_{i_3} \rangle
	\langle \Omega_{i_1}, \Omega_{i_4} \rangle
	\langle \Omega_{i_2} , \Omega_{i_3} \rangle
	\langle \Omega_{i_2}, \Omega_{i_4} \rangle
	\\&= \sum_k \frac{M_k}{M^2} \sum_{i_1 \in [n], i_2 \in S_k} N_{i_1} N_{i_2}
	\langle \Omega_{i_1}, \Sigma_k \Omega_{i_2} \rangle^2 
	\leq \sum_k \frac{M_k}{M^2} \sum_{i_1 \in [n], i_2 \in S_k} N_{i_1} N_{i_2} 
	\sum_{j,j'} \Omega_{i_1 j} \Sigma_{kjj'}^2 \Omega_{i_2 j'} 
	\\&\leq  \sum_k \frac{M_k^2}{M^2} \sum_{j,j'} \mu_j \Sigma_{kjj'}^2 \mu_{j'} 
	\les K \| \mu \|_4^4. 
	\num \label{eqn:C712_bd}
\end{align*}
Next, 
\begin{align*}
	C_{713} &= \sum_k \frac{M_k^2}{M^2} \sum_{j_1, j_2 ,j_3, j_4}
	\Sigma_{j_1 j_2} \Sigma_{j_3 j_4} \Sigma_{k j_1 j_3} \Sigma_{kj_2j_4}
	\\&= \sum_k  \frac{1}{M^4} \sum_{\substack{i_1, i_2 \in [n] \\ i_3, i_4 \in S_k}}
	N_{i_1} N_{i_2} N_{i_3} N_{i_4}
	\langle \Omega_{i_1}, \Omega_{i_3} \rangle
	\langle \Omega_{i_1}, \Omega_{i_4} \rangle
	\langle \Omega_{i_2} , \Omega_{i_3} \rangle
	\langle \Omega_{i_2}, \Omega_{i_4} \rangle ,
\end{align*}
and applying a similar strategy as in \eqref{eqn:C711_bd}, \eqref{eqn:C712_bd} leads to the bound $C_{713} \les K \| \mu \|_4^4$. Thus
\begin{align*}
	C_{71} \les K \| \mu \|_4^4. 
\end{align*}

Next , by symmetry and summing over $i \in S_k,i' \in S_{k'}$, we have
\begin{align*}
	C_{72}
	&= \sum_{k \neq k'} \frac{M_k M_{k'}}{M^2}
	\sum_{j_1, j_2 ,j_3, j_4} 
	\bigg[
	2\Sigma_{kj_1j_2} \Sigma_{kj_3j_4} 
	+ 2 \Sigma_{k' j_1 j_2} \Sigma_{k j_3 j_4} 
	+ 4 \Sigma_{kj_1 j_2} \Sigma_{j_3 j_4} 
	+  \Sigma_{j_1j_2}\Sigma_{j_3 j_4} 
	\bigg]
	\Sigma_{kj_1j_3} \Sigma_{k'j_2j_4}
	\\&=: 2C_{721} + 2 C_{722} + 4 C_{723} + C_{724} 
\end{align*}
First, 
\begin{align*}
	C_{721} &\leq \sum_k \frac{M_k}{M} \sum_{j_1, j_2 ,j_3, j_4}
	\Sigma_{kj_1 j_2} \Sigma_{kj_3j_4} \Sigma_{kj_1 j_3} \Sigma_{j_2 j_4} 
	= C_{712} \les K \| \mu \|_4^4
\end{align*}
by \eqref{eqn:C712_bd}. Next,
\begin{align*}
	C_{722} &= \sum_{k \neq k'} \frac{M_k M_{k'}}{M^2} 
	\sum_{j_1, j_2 ,j_3, j_4} \Sigma_{k'j_1j_2} \Sigma_{kj_3j_4} \Sigma_{kj_1 j_3} \Sigma_{k'j_2j_4}
	\\&\leq \sum_{k,k'} \frac{1}{M^2M_k M_{k'}} 
	\sum_{\substack{i_1, i_2 \in S_k \\ i_3, i_4 \in S_{k'}} } N_{i_1} N_{i_2} N_{i_3} N_{i_4}
	\langle \Omega_{i_1}, \Omega_{i_3} \rangle
	\langle \Omega_{i_1}, \Omega_{i_4} \rangle
	\langle \Omega_{i_2} , \Omega_{i_3} \rangle
	\langle \Omega_{i_2}, \Omega_{i_4} \rangle
	\\&= \sum_{k,k'} \frac{M_k}{M^2M_{k'}} 	\sum_{i_3, i_4 \in S_{k'}} N_{i_3} N_{i_4} 
	\langle \Omega_{i_3}, \Sigma_k \Omega_{i_4} \rangle^2
	\leq  \sum_{k,k'} \frac{M_k}{M^2M_{k'}} 	\sum_{i_3, i_4 \in S_{k'}} N_{i_3} N_{i_4}  
	\sum_{j,j'} \Omega_{i_3 j} \Sigma_{kjj'}^2 \Omega_{i_4 j'} 
	\\&\leq  \sum_{k,k'} \frac{M_k M_{k'}}{M^2} \mu' \Sigma_k^{\circ 2} \mu 
	\leq \| \mu \|_4^4,
	\num \label{eqn:C722_bd}
\end{align*}
where we applied Cauchy-Schwarz in the penultimate line and \eqref{eqn:muSig2_bd} in the last line. 

For $C_{723}$, we have
\begin{align*}
	C_{723} &= \sum_{k \neq k'} \frac{M_k M_{k'}}{M^2} \sum_{j_1, j_2 ,j_3, j_4} \Sigma_{kj_1 j_2} \Sigma_{j_3 j_4} \Sigma_{kj_1 j_3} \Sigma_{k'j_2j_4}
	\leq \sum_k \frac{M_k}{M} \sum_{j_1, j_2 ,j_3, j_4} \Sigma_{kj_1 j_2} \Sigma_{j_3 j_4} \Sigma_{kj_1 j_3} \Sigma_{j_2j_4}
	\\&= \sum_k \frac{1}{M^3 M_k} 
	\sum_{\substack{i_1,i_3 \in S_k \\ i_2, i_4 \in [n]}}
	N_{i_1} N_{i_2} N_{i_3} N_{i_4}
	\langle \Omega_{i_1}, \Omega_{i_3} \rangle
	\langle \Omega_{i_1}, \Omega_{i_4} \rangle
	\langle \Omega_{i_2} , \Omega_{i_3} \rangle
	\langle \Omega_{i_2}, \Omega_{i_4} \rangle
	\\&= \sum_k \frac{1}{M^2} \sum_{i_3 \in S_k, i_4 \in [n]}
	N_{i_3} N_{i_4} \langle \Omega_{i_3}, \Sigma_k \Omega_{i_4} \rangle 
	\langle \Omega_{i_3}, \Sigma \Omega_{i_4} \rangle 
	\\&\leq  \frac{1}{2} \sum_k \frac{1}{M^2} \sum_{i_3 \in S_k, i_4 \in [n]}
	N_{i_3} N_{i_4} \big( \langle \Omega_{i_3}, \Sigma_k \Omega_{i_4} \rangle^2 +
	\langle \Omega_{i_3}, \Sigma \Omega_{i_4} \rangle^2  \big)
\end{align*}
Using a similar technique as in \eqref{eqn:C711_bd}--\eqref{eqn:C722_bd} and applying \eqref{eqn:C31_bd}, \eqref{eqn:C32_bd} we obtain
\begin{align*}
	C_{723} \les \| \mu \|_4^4. 
\end{align*}
Finally, for $C_{724}$ we have
\begin{align*}
	C_{724} &=  \sum_{k \neq k'} \frac{M_k M_{k'}}{M^2} \sum_{j_1, j_2 ,j_3, j_4}
	\Sigma_{j_1 j_2} \Sigma_{j_3 j_4} \Sigma_{kj_1 j_3} \Sigma_{k'j_2j_4} 
	\leq \sum_{j_1, j_2 ,j_3, j_4} 
	\Sigma_{j_1 j_2} \Sigma_{j_3 j_4} \Sigma_{j_1 j_3} \Sigma_{j_2j_4}
	\\&= \frac{1}{M^4} \sum_{i_1, i_2, i_3, i_4 \in [n]} 	N_{i_1} N_{i_2} N_{i_3} N_{i_4}
	\langle \Omega_{i_1}, \Omega_{i_3} \rangle
	\langle \Omega_{i_1}, \Omega_{i_4} \rangle
	\langle \Omega_{i_2} , \Omega_{i_3} \rangle
	\langle \Omega_{i_2}, \Omega_{i_4} \rangle
\end{align*}
The details are very similar to \eqref{eqn:C711_bd}--\eqref{eqn:C722_bd}, so we omit them and simply state the final bound:
\begin{align*}
	C_{724} &\les \| \mu \|_4^4
\end{align*}
Combining the bounds for $C_{721}, C_{722}, C_{723}$, and $C_{724}$ yields
\begin{align*}
	C_7 \lesssim K \| \mu \|_4^4. 
\end{align*}
Combining the bounds for $C_1$--$C_7$ proves the result.

\qed

\subsection{Proof of Lemma~\ref{lem:fourth_mom_U}}

We have
\begin{align*}
	&\E D_{\ell,s}^4 
	= \E\bigg[ \big( \sum_{ i \in [\ell - 1] } 
	\sigma_{i, \ell} \sum_{r = 1}^{N_i} \sum_j Z_{ijr} Z_{\ell j s} \big)^4\bigg] 
	\\&= \sum_{i_1, i_2, i_3, i_4 \in [\ell - 1]}
	\sigma_{i_1\ell} \sigma_{i_2 \ell} \sigma_{i_3 \ell} \sigma_{i_4 \ell} 
	\sum_{\substack{r_1, r_2, r_3, r_4 \\  j_1, j_2 ,j_3, j_4}} 
	\E \big[ Z_{i_1 j_1 r_1} Z_{\ell j_1 s}
	Z_{i_2 j_2 r_2} Z_{\ell j_2 s} Z_{i_3 j_3 r_3} Z_{\ell j_3 s}
	Z_{i_4 j_4 r_4} Z_{\ell j_4 s} \big] 
	\\&= \sum_{i_1, i_2, i_3, i_4 \in [\ell - 1]}
	\sigma_{i_1\ell} \sigma_{i_2 \ell} \sigma_{i_3 \ell} \sigma_{i_4 \ell} 
	\sum_{\substack{r_1, r_2, r_3, r_4 \\  j_1, j_2 ,j_3, j_4}} 
	\E \big[ Z_{i_1 j_1 r_1} 
	Z_{i_2 j_2 r_2}  Z_{i_3 j_3 r_3} 
	Z_{i_4 j_4 r_4}  \big] \E \big[
	Z_{\ell j_1 s}Z_{\ell j_2 s}Z_{\ell j_3 s}Z_{\ell j_4 s}
	\big]
	\\&= 
	\sum_{j_1, j_2 ,j_3, j_4} \E[  Z_{\ell j_1 s}Z_{\ell j_2 s}Z_{\ell j_3 s}Z_{\ell j_4 s} ] 
	\sum_{\substack{i_1, i_2, i_3, i_4 \in [\ell - 1]  \\  r_1, r_2, r_3, r_4}}
	\sigma_{i_1\ell} \sigma_{i_2 \ell} \sigma_{i_3 \ell} \sigma_{i_4 \ell} 
	\E[ Z_{i_1 j_1 r_1} 
	Z_{i_2 j_2 r_2}  Z_{i_3 j_3 r_3} 
	Z_{i_4 j_4 r_4} ] 
	\\&=: \sum_{j_1, j_2 ,j_3, j_4} \E[  Z_{\ell j_1 s}Z_{\ell j_2 s}Z_{\ell j_3 s}Z_{\ell j_4 s} ] A_{j_1, j_2, j_3, j_4}
	\num \label{eqn:Dls_fourth_moment}
\end{align*}
In the summations above,  $r_t$ ranges over $[ N_{i_t} ]$. 

Observe that
\begin{align*}
	\num \label{eqn:Z4_bounds}
	| \E[  Z_{\ell j_1 s}Z_{\ell j_2 s}Z_{\ell j_3 s}Z_{\ell j_4 s} ]  |
	\lesssim \begin{cases}
		\Omega_{\ell j_1 }
		&\quad \text{ if } j_1 = j_2 = j_3 = j_4 \\
		\Omega_{\ell j_1 } 	  \Omega_{\ell j_4}
		&\quad \text{ if } j_1 = j_2 = j_3,  j_4 \neq j_1 \\
		\Omega_{\ell j_1 } 	  \Omega_{\ell j_3}
		&\quad \text{ if } j_1 = j_2,  j_3 =  j_4, j_1 \neq j_3 \\
		\Omega_{\ell j_1 } 	  \Omega_{\ell j_3} 	  \Omega_{\ell j_4 } 	 
		&\quad \text{ if } j_1 = j_2,  j_1, j_3, j_4 \, \, dist.  \\
		\Omega_{\ell j_1 } 	  \Omega_{\ell j_2 } 	  \Omega_{\ell j_3} 	  \Omega_{\ell j_4 } 
		&\quad \text{ if }  j_1, j_2, j_3, j_4 \, \, dist. \\
	\end{cases}
\end{align*}
Up to permutation of the indices $j_1, \ldots, j_4$, this accounts for all possible cases. 

To proceed we also bound $A_{j_1, j_2, j_3, j_4}$ by casework on the number of distinct $j$ indices. For brevity we define $\omega_t = (i_t, r_t)$ and slightly abuse notation, letting $Z_{\omega_t, j} = Z_{i_t j r_t}$. Further let $\mc{I}_\ell = \{ \omega = (i,r): i \in [\ell ], 1 \leq r \leq N_i \}$. Our goal is to control
\begin{align}
	A_{j_1, j_2, j_3, j_4}
	= \sum_{\omega_1, \omega_2, \omega_3, \omega_4 \in \mc{I}_{\ell - 1} }
	\sigma_{i_1\ell} \sigma_{i_2 \ell} \sigma_{i_3 \ell} \sigma_{i_4 \ell} 
	\E[ Z_{\omega_1 j_1} Z_{\omega_2 j_2} Z_{\omega_3 j_3} Z_{\omega_4 j_4} ]. 
	\label{eqn:A_main}
\end{align}
To do this, we study \eqref{eqn:A_main} in five cases that cover all possibilities (up to permutation of the indices $j_1, \ldots, j_4$).

\vspace{0.2cm}

\noindent \textit{Case 1:} $j_1 = j_2 = j_3 = j_4$. Define $j = j_1$. It holds that
\begin{align*}
	\sigma_{i_1\ell} \sigma_{i_2 \ell} &\sigma_{i_3 \ell} \sigma_{i_4 \ell} \E[ Z_{\omega_1 j} Z_{\omega_2 j} Z_{\omega_3 j} Z_{\omega_4 j} ]
	\\&= 
	\begin{cases}
		\sigma_{i_1\ell}^4 \, \E Z_{\omega_1 j }^4 \lesssim \sigma_{i_1\ell}^4 \, \Omega_{i_1 j}
		&\quad \text{ if } \omega_1 = \omega_2 = \omega_3 = \omega_4 \\
		\sigma_{i_1\ell}^2  \sigma_{i_3\ell}^2 \E Z_{\omega_1 j}^2 \E Z_{\omega_3 j}^2
		\lesssim \sigma_{i_1\ell}^2  \sigma_{i_3\ell}^2 \Omega_{i_1j}
		\Omega_{i_3 j} 
		&\quad \text{ if } \omega_1 = \omega_2,  \omega_3 = \omega_4, \omega_1 \neq \omega_3 
	\end{cases}
	\num \label{case1_A}
\end{align*}
Up to permutation of the indices $\omega_1, \ldots, \omega_4$, this accounts for all cases such that \eqref{case1_A} is nonvanishing. To be precise, by symmetry, it also holds that for all permutations $\pi:[4] \to [4]$ that
if $\omega_{\pi(1)} = \omega_{\pi(2)},  \omega_{\pi(3)} = \omega_{\pi(4)}, \omega_{\pi(1)} \neq \omega_{\pi(3)} $, then 
\[
\sigma_{i_1\ell} \sigma_{i_2 \ell} \sigma_{i_3 \ell} \sigma_{i_4 \ell} \E[ Z_{\omega_1 j} Z_{\omega_2 j} Z_{\omega_3 j} Z_{\omega_4 j} ] 
\lesssim \sigma_{i_{\pi(1)}\ell}^2  \sigma_{i_{\pi(3)}\ell}^2 \Omega_{i_{\pi(1)} j}
\Omega_{i_{\pi(3)} j}. 
\]
In all other cases besides those considered above, we have
\[
\sigma_{i_1\ell} \sigma_{i_2 \ell} \sigma_{i_3 \ell} \sigma_{i_4 \ell} \E[ Z_{\omega_1 j} Z_{\omega_2 j} Z_{\omega_3 j} Z_{\omega_4 j} ] = 0
\]
by independence.

Therefore,
\begin{align}
	A_{jjjj} &\lesssim
	\sum_{\omega \in \mc{I}_{\ell - 1}} \sigma_{i\ell}^4 \Omega_{i j}
	+ \sum_{\omega_1 \neq \omega_3 \in \mc{I}_{\ell-1}} 
	\sigma_{i_1\ell}^2  \sigma_{i_3\ell}^2 \Omega_{i_1j}
	\Omega_{i_3 j} 
	\label{eqn:case1_A}
\end{align}


In the remaining Cases 2--6, we follow the same strategy of writing out bounds for \[\sigma_{i_1\ell} \sigma_{i_2 \ell} \sigma_{i_3 \ell} \sigma_{i_4 \ell} 
\E[ Z_{\omega_1 j_1} Z_{\omega_2 j_2} Z_{\omega_3 j_3} Z_{\omega_4 j_4} ] \]
that cover all nonzero cases, up to permutation of the indices $\omega_1, \ldots, \omega_4$. 

\vspace{0.1cm}

\noindent \textit{Case 2:} $j_1 = j_2 = j_3, j_1 \neq j_4$. It holds that
\begin{align*}
	\sigma_{i_1\ell} &\sigma_{i_2 \ell} \sigma_{i_3 \ell} \sigma_{i_4 \ell} 
	\E[ Z_{\omega_1 j_1} Z_{\omega_2 j_1} Z_{\omega_3 j_1} Z_{\omega_4 j_4} ]
	\\&\quad = \begin{cases}
		\sigma_{i_1\ell}^4 \E[ Z_{\omega_1 j_1}^3 Z_{\omega_1 j_4} ]
		\lesssim \sigma_{i_1\ell}^4 \, \Omega_{i_1 j_1} \Omega_{i_1 j_4} 
		&\quad \text{ if } \omega_1 = \omega_2 = \omega_3 = \omega_4 \\
		\sigma_{i_1\ell}^2 \sigma_{i_3 \ell}^2
		\E Z_{\omega_1 j_1}^2 \E Z_{\omega_3 j_1} Z_{\omega_3 j_4} 
		\lesssim \sigma_{i_1\ell}^2 \sigma_{i_3 \ell}^2 \, 
		\Omega_{i_1 j_1} \Omega_{i_3 j_1} \Omega_{i_3 j_4}
		&\quad \text{ if } \omega_1 = \omega_2,  \omega_3 = \omega_4, \omega_1 \neq \omega_3 \\
	\end{cases}
	\num \label{case2_A}
\end{align*}
Up to permutation of the indices $\omega_1, \ldots, \omega_4$, this accounts for all cases such that \eqref{case2_A} is nonvanishing. Thus 
\begin{align}
	\num \label{eqn:case2_A}
	A_{j_1, j_1, j_1, j_4} \lesssim 
	\sum_{\omega \in \mc{I}_{\ell-1}}
	\sigma_{i\ell}^4 \Omega_{i_1 j_1} \Omega_{i_1 j_4} 
	+ \sum_{\omega_1 \neq \omega_3 \in \mc{I}_{\ell - 1}}
	\sigma_{i_1\ell}^2 \sigma_{i_3 \ell}^2 \, 
	\Omega_{i_1 j_1} \Omega_{i_3 j_1} \Omega_{i_3 j_4}
\end{align}


\vspace{0.1cm}

\noindent \textit{Case 3:} $j_1 = j_2, j_3 = j_4, j_1 \neq j_3$. It holds that
\begin{align*}
	&\sigma_{i_1\ell}  \sigma_{i_2 \ell} \sigma_{i_3 \ell} \sigma_{i_4 \ell} 
	\E[ Z_{\omega_1 j_1} Z_{\omega_2 j_1} Z_{\omega_3 j_3} Z_{\omega_4 j_3} ]
	\\& = \begin{cases}
		\sigma_{i_1\ell}^4 \E Z_{\omega_1 j_1}^2 Z_{\omega_1 j_3}^2 
		\lesssim \sigma_{i_1\ell}^4 \, 
		\Omega_{i_1 j_1} \Omega_{i_1 j_3} 
		& \text{ if } \omega_1 = \omega_2 = \omega_3 = \omega_4 \\
		\sigma_{i_1\ell}^2 \sigma_{i_3\ell}^2 \E Z_{\omega_1 j_1}^2 
		\E Z_{\omega_{3} j_3}^2 
		\lesssim \sigma_{i_1\ell}^2 \sigma_{i_3\ell}^2 \, 
		\Omega_{i_1 j_1} \Omega_{i_3 j_3} 
		& \text{ if } \omega_1 = \omega_2,  \omega_3 = \omega_4, \omega_1 \neq \omega_3 \\
		\sigma_{i_1\ell}^2 \sigma_{i_3\ell}^2 
		\E Z_{\omega_1 j_1} Z_{\omega_1 j_3} \E Z_{\omega_2 j_1} Z_{\omega_2 j_3}
		\lesssim 	\sigma_{i_1\ell}^2 \sigma_{i_3\ell}^2  
		\Omega_{i_1 j_1 } \Omega_{i_1 j_3} \Omega_{i_2 j_1} \Omega_{i_2 j_3} 
		&\text{ if } \omega_1 = \omega_3,  \omega_2 = \omega_4, \omega_1 \neq \omega_2. 
	\end{cases}
	\num \label{case3_A}
\end{align*}
Up to permutation of the indices $\omega_1, \ldots, \omega_4$, this accounts for all cases such that \eqref{case3_A} is nonvanishing. Thus by symmetry, 
\begin{align*}
	\num  \label{eqn:case3_A}
	A_{j_1, j_1, j_3, j_3} 
	&\les \sum_{\omega \in \mc{I}_{\ell-1}} \sigma_{i_1\ell}^4 \, 
	\Omega_{i_1 j_1} \Omega_{i_1 j_3} 
	+ \sum_{\omega_1 \neq \omega_3 \in \mc{I}_{\ell - 1}}\sigma_{i_1\ell}^2 \sigma_{i_3\ell}^2 \, 
	\Omega_{i_1 j_1} \Omega_{i_3 j_3}
	\\&\quad + \sum_{\omega_1 \neq \omega_3 \in \mc{I}_{\ell - 1}}
	\sigma_{i_1\ell}^2 \sigma_{i_3\ell}^2  
	\Omega_{i_1 j_1 } \Omega_{i_1 j_3} \Omega_{i_3 j_1} \Omega_{i_3 j_3} 
\end{align*}


\vspace{0.1cm}

\noindent \textit{Case 4:} $j_1 = j_2$ and $j_1, j_3, j_4$ distinct. We have
\begin{align*}
	&	\sigma_{i_1\ell} \sigma_{i_2 \ell} \sigma_{i_3 \ell} \sigma_{i_4 \ell} 
	\E[ Z_{\omega_1 j_1} Z_{\omega_2 j_1} Z_{\omega_3 j_3} Z_{\omega_4 j_4} ]\\
	& = \begin{cases}
		\sigma_{i_1\ell}^4 \E Z_{\omega_1 j_1}^2 Z_{\omega_{1} j_3} Z_{\omega_1 j_4}
		\lesssim \sigma_{i_1\ell}^4 \, 
		\Omega_{i_1 j_1} \Omega_{i_1 j_3} \Omega_{i_1 j_4} 
		& \text{ if } \omega_1 = \omega_2 = \omega_3 = \omega_4 \\
		\sigma_{i_1\ell}^2 \sigma_{i_3\ell}^2
		\E Z_{\omega_1 j_1}^2 \E Z_{\omega_3 j_3} Z_{\omega_3 j_4} 
		\lesssim \sigma_{i_1\ell}^2 \sigma_{i_3\ell}^2\,
		\Omega_{i_1 j_1} \Omega_{i_3 j_3} \Omega_{i_3 j_4} 
		& \text{ if } \omega_1 = \omega_2,  \omega_3 = \omega_4, \omega_1 \neq \omega_3 \\
		\sigma_{i_1 \ell}^2 \sigma_{i_2 \ell}^2 
		\E Z_{\omega_1 j_1} Z_{\omega_1 j_3} \E Z_{\omega_2 j_1} Z_{\omega_2 j_4}
		\lesssim 	\sigma_{i_1 \ell}^2 \sigma_{i_2 \ell}^2 \,
		\Omega_{i_1 j_1} \Omega_{i_1 j_3} \Omega_{i_2 j_1} \Omega_{i_2 j_4}
		& \text{ if } \omega_1 = \omega_3,  \omega_2 = \omega_4, \omega_1 \neq \omega_2
	\end{cases}
	\num \label{case4_A} 
\end{align*}
Up to permutation of the indices $\omega_1, \ldots, \omega_4$, this accounts for all cases such that \eqref{case4_A} is nonvanishing. Thus
\begin{align*}
	\num  \label{eqn:case4_A}
	A_{j_1, j_1, j_3, j_4} 
	&\lesssim 
	\sum_{\omega \in \mc{I}_{\ell-1}}  \sigma_{i_1\ell}^4 \, 
	\Omega_{i_1 j_1} \Omega_{i_1 j_3} \Omega_{i_1 j_4} 
	+ 
	\sum_{\omega_1 \neq \omega_3 \in \mc{I}_{\ell - 1}}  \sigma_{i_1\ell}^2 \sigma_{i_3\ell}^2\,
	\Omega_{i_1 j_1} \Omega_{i_3 j_3} \Omega_{i_3 j_4} 
	\\&\quad \sum_{\omega_1 \neq \omega_3 \in \mc{I}_{\ell - 1}}	\sigma_{i_1 \ell}^2 \sigma_{i_2 \ell}^2 \,
	\Omega_{i_1 j_1} \Omega_{i_1 j_3} \Omega_{i_3 j_1} \Omega_{i_3 j_4}. 
\end{align*}

\vspace{0.1cm}

\noindent \textit{Case 5:} $j_1, j_2, j_3, j_4$ distinct. For this final case, it holds that
\begin{align*}
	&	\sigma_{i_1\ell} \sigma_{i_2 \ell} \sigma_{i_3 \ell} \sigma_{i_4 \ell} 
	\E[ Z_{\omega_1 j_1} Z_{\omega_2 j_2} Z_{\omega_3 j_3} Z_{\omega_4 j_4} ]
	\\&= \begin{cases}
		\sigma_{i_1\ell}^4 \E Z_{\omega_1 j_1} Z_{\omega_1 j_2} Z_{\omega_1 j_3} Z_{\omega_1 j_4} \lesssim 	\sigma_{i_1\ell}^4 \,
		\Omega_{i_1 j_1} 		\Omega_{i_1 j_2} 		\Omega_{i_1 j_3} 		\Omega_{i_1 j_4}   
		& \text{ if } \omega_1 = \omega_2 = \omega_3 = \omega_4 \\
		\sigma_{i_1\ell}^2 \sigma_{i_3 \ell}^2
		\E Z_{\omega_1 j_1} Z_{\omega_1 j_2} \E Z_{\omega_3 j_3} Z_{\omega_3 j_4}
		\lesssim 	\sigma_{i_1\ell}^2 \sigma_{i_3 \ell}^2 \,
		\Omega_{i_1 j_1} 		\Omega_{i_1 j_2} 		\Omega_{i_3 j_3} 		\Omega_{i_3 j_4}   
		& \text{ if } \omega_1 = \omega_2,  \omega_3 = \omega_4, \omega_1 \neq \omega_3 \\	
	\end{cases}
\end{align*} 
The above accounts for all nonzero cases, up to permutation of $\omega_1, \omega_2, \omega_3, \omega_4$. Hence
\begin{align*}
	\num  \label{eqn:case5_A}
	A_{j_1, j_2, j_3, j_4} 
	\lesssim 
	\sum_{\omega \in \mc{I}_{\ell-1}} \sigma_{i_1\ell}^4 \,
	\Omega_{i_1 j_1} 		\Omega_{i_1 j_2} 		\Omega_{i_1 j_3} 		\Omega_{i_1 j_4}   
	+\sum_{\omega_1 \neq \omega_3 \in \mc{I}_{\ell - 1}} \sigma_{i_1\ell}^2 \sigma_{i_3 \ell}^2 \,
	\Omega_{i_1 j_1} 		\Omega_{i_1 j_2} 		\Omega_{i_3 j_3} 		\Omega_{i_3 j_4} . 
\end{align*}

Finally we control the fourth moment using the casework above. By \eqref{eqn:Dls_fourth_moment} and symmetry,
\begin{align*}
	\E D_{\ell, s}^4 &\lesssim 
	\sum_j  \E[  Z_{\ell j s}Z_{\ell j s}Z_{\ell j s}Z_{\ell j s} ] A_{j, j, j, j}
	+ \sum_{j_1 \neq j_4}  \E[  Z_{\ell j_1 s}Z_{\ell j_1 s}Z_{\ell j_1 s}Z_{\ell j_4 s} ] A_{j_1, j_1, j_1, j_4}
	\\&\quad+ \sum_{j_1 \neq j_3}  \E[  Z_{\ell j_1 s}Z_{\ell j_1 s}Z_{\ell j_3 s}Z_{\ell j_3 s} ] A_{j_1, j_1, j_3, j_3}
	+ \sum_{j_1, j_3, j_4 \, dist.}  \E[  Z_{\ell j_1 s}Z_{\ell j_1 s}Z_{\ell j_3 s}Z_{\ell j_4 s} ] A_{j_1, j_1, j_3, j_4}
	\\&\quad+ \sum_{j_1, j_2 ,j_3, j_4 \, dist.} \E[  Z_{\ell j_1 s}Z_{\ell j_2 s}Z_{\ell j_3 s}Z_{\ell j_4 s} ] A_{j_1, j_2, j_3, j_4}
	\\&=: F_{1\ell s} + F_{2\ell s} + F_{3\ell s} +F_{4\ell s} +F_{5\ell s}
	\num \label{eqn:F_decomp_U}
\end{align*}
By \eqref{eqn:Z4_bounds},  \eqref{eqn:case1_A}, \eqref{eqn:case2_A} ,\eqref{eqn:case3_A}, \eqref{eqn:case4_A}, and \eqref{eqn:case5_A},
\begin{align*}
	F_{1\ell s}&\lesssim \sum_j \Omega_{\ell j} \bigg( \sum_{\omega \in \mc{I}_{\ell - 1}} \sigma_{i\ell}^4 \Omega_{i j}
	+ \sum_{\omega_1 \neq \omega_3 \in \mc{I}_{\ell-1}} 
	\sigma_{i_1\ell}^2  \sigma_{i_3\ell}^2 \Omega_{i_1j}
	\Omega_{i_3 j} \bigg)  
	\\ F_{2\ell s} & \les \sum_{j_1 \neq j_4} 
	\Omega_{\ell j_1} \Omega_{\ell j_4} \bigg( 
	\sum_{\omega \in \mc{I}_{\ell-1}}
	\sigma_{i\ell}^4 \Omega_{i_1 j_1} \Omega_{i_1 j_4} 
	+ \sum_{\omega_1 \neq \omega_3 \in \mc{I}_{\ell - 1}}
	\sigma_{i_1\ell}^2 \sigma_{i_3 \ell}^2 \, 
	\Omega_{i_1 j_1} \Omega_{i_3 j_1} \Omega_{i_3 j_4} \bigg) 
	\\ F_{3\ell s} & \les \sum_{j_1 \neq j_3} \Omega_{\ell j_1} \Omega_{\ell j_3} 
	\bigg(
	\sum_{\omega \in \mc{I}_{\ell-1}} \sigma_{i_1\ell}^4 \, 
	\Omega_{i_1 j_1} \Omega_{i_1 j_3} 
	+ \sum_{\omega_1 \neq \omega_3 \in \mc{I}_{\ell - 1}}\sigma_{i_1\ell}^2 \sigma_{i_3\ell}^2 \, 
	\Omega_{i_1 j_1} \Omega_{i_3 j_3}
	\\&\qquad + \sum_{\omega_1 \neq \omega_3 \in \mc{I}_{\ell - 1}}
	\sigma_{i_1\ell}^2 \sigma_{i_3\ell}^2  
	\Omega_{i_1 j_1 } \Omega_{i_1 j_3} \Omega_{i_3 j_1} \Omega_{i_3 j_3} 
	\bigg)
	\\ F_{4\ell s} &\les \sum_{j_1, j_3, j_4 \, dist.} 
	\Omega_{\ell j_1} \Omega_{\ell j_3} \Omega_{\ell j_4} 
	\bigg(
	\sum_{\omega \in \mc{I}_{\ell-1}}  \sigma_{i_1\ell}^4 \, 
	\Omega_{i_1 j_1} \Omega_{i_1 j_3} \Omega_{i_1 j_4} 
	+ 
	\sum_{\omega_1 \neq \omega_3 \in \mc{I}_{\ell - 1}}  \sigma_{i_1\ell}^2 \sigma_{i_3\ell}^2\,
	\Omega_{i_1 j_1} \Omega_{i_3 j_3} \Omega_{i_3 j_4} 
	\\&\qquad + \sum_{\omega_1 \neq \omega_3 \in \mc{I}_{\ell - 1}}	\sigma_{i_1 \ell}^2 \sigma_{i_3 \ell}^2 \,
	\Omega_{i_1 j_1} \Omega_{i_1 j_3} \Omega_{i_3 j_1} \Omega_{i_3 j_4}. 
	\bigg)
	\\ F_{5\ell s} &\les 
	\sum_{j_1, j_2 ,j_3, j_4 \, dist.} \Omega_{\ell j_1} \Omega_{\ell j_2} \Omega_{\ell j_3} \Omega_{\ell j_4} \bigg(
	\sum_{\omega \in \mc{I}_{\ell-1}} \sigma_{i_1\ell}^4 \,
	\Omega_{i_1 j_1} 		\Omega_{i_1 j_2} 		\Omega_{i_1 j_3} 		\Omega_{i_1 j_4}   
	\\&\qquad +\sum_{\omega_1 \neq \omega_3 \in \mc{I}_{\ell - 1}} \sigma_{i_1\ell}^2 \sigma_{i_3 \ell}^2 \,
	\Omega_{i_1 j_1} 		\Omega_{i_1 j_2} 		\Omega_{i_3 j_3} 		\Omega_{i_3 j_4}
	\bigg). 
\end{align*}
Define
\begin{align*}
	F_{11\ell s} &=  \sum_{\omega \in \mc{I}_{\ell - 1}} \sigma_{i\ell}^4 \sum_j \Omega_{\ell j} \Omega_{i j}
	\\ F_{21\ell s} &= 
	\sum_{\omega \in \mc{I}_{\ell-1}}
	\sigma_{i\ell}^4 \sum_{j_1 \neq j_4} 
	\Omega_{\ell j_1} \Omega_{\ell j_4}  \Omega_{i_1 j_1} \Omega_{i_1 j_4} 
	\\ F_{31\ell s} &= 
	\sum_{\omega \in \mc{I}_{\ell-1}} \sigma_{i_1\ell}^4 \, 
	\sum_{j_1 \neq j_3} \Omega_{\ell j_1} \Omega_{\ell j_3} \Omega_{i_1 j_1} \Omega_{i_1 j_3} 
	\\ F_{41\ell s} &= 
	\sum_{\omega \in \mc{I}_{\ell-1}}  \sigma_{i_1\ell}^4 \, \sum_{j_1, j_3, j_4 \, dist.} 
	\Omega_{\ell j_1} \Omega_{\ell j_3} \Omega_{\ell j_4} 
	\Omega_{i_1 j_1} \Omega_{i_1 j_3} \Omega_{i_1 j_4} 
	\\ F_{51\ell s} &= 	
	\sum_{\omega \in \mc{I}_{\ell-1}} \sigma_{i_1\ell}^4 \,
	\sum_{j_1, j_2 ,j_3, j_4 \, dist.} \Omega_{\ell j_1} \Omega_{\ell j_2} \Omega_{\ell j_3} \Omega_{\ell j_4}
	\Omega_{i_1 j_1} 		\Omega_{i_1 j_2} 		\Omega_{i_1 j_3} 		\Omega_{i_1 j_4}   
\end{align*}
and 
\begin{align*}
	F_{12\ell s} &= \sum_{\omega_1 \neq \omega_3 \in \mc{I}_{\ell-1}} 
	\sigma_{i_1\ell}^2  \sigma_{i_3\ell}^2 \sum_j \Omega_{\ell j} \Omega_{i_1j}
	\Omega_{i_3 j} 
	\\ F_{22\ell s} &=  \sum_{\omega_1 \neq \omega_3 \in \mc{I}_{\ell - 1}}
	\sigma_{i_1\ell}^2 \sigma_{i_3 \ell}^2  \, \sum_{j_1 \neq j_4} 
	\Omega_{\ell j_1} \Omega_{\ell j_4} 
	\Omega_{i_1 j_1} \Omega_{i_3 j_1} \Omega_{i_3 j_4}
	\\ F_{32\ell s} &= \sum_{\omega_1 \neq \omega_3 \in \mc{I}_{\ell - 1}}\sigma_{i_1\ell}^2 \sigma_{i_3\ell}^2 \,  \sum_{j_1 \neq j_3} \big[ \Omega_{\ell j_1} \Omega_{\ell j_3} 
	\Omega_{i_1 j_1} \Omega_{i_3 j_3} + \Omega_{\ell j_1} \Omega_{\ell j_3} 
	\Omega_{i_1 j_1 } \Omega_{i_1 j_3} \Omega_{i_3 j_1} \Omega_{i_3 j_3} \big]
	\\ F_{42 \ell s} &= 
	\sum_{\omega_1 \neq \omega_3 \in \mc{I}_{\ell - 1}}  \sigma_{i_1\ell}^2 \sigma_{i_3\ell}^2\,
	\sum_{j_1, j_3, j_4 \, dist.} \big[ 
	\Omega_{\ell j_1} \Omega_{\ell j_3} \Omega_{\ell j_4} \Omega_{i_1 j_1} \Omega_{i_3 j_3} \Omega_{i_3 j_4} 
	\\ &\qquad \qquad \qquad \qquad \qquad \qquad \qquad +
	\Omega_{\ell j_1} \Omega_{\ell j_3} \Omega_{\ell j_4} 
	\Omega_{i_1 j_1} \Omega_{i_1 j_3} \Omega_{i_3 j_1} \Omega_{i_3 j_4}\big] 
	\\ F_{52\ell s} &= \sum_{\omega_1 \neq \omega_3 \in \mc{I}_{\ell - 1}} \sigma_{i_1\ell}^2 \sigma_{i_3 \ell}^2 \,
	\sum_{j_1, j_2 ,j_3, j_4 \, dist.} \Omega_{\ell j_1} \Omega_{\ell j_2} \Omega_{\ell j_3} \Omega_{\ell j_4}
	\Omega_{i_1 j_1} 		\Omega_{i_1 j_2} 		\Omega_{i_3 j_3} 		\Omega_{i_3 j_4}
\end{align*}
Note that $\sum_{x = 1}^2 F_{tx\ell s} = F_{t\ell s}$ for all $t \in [5]$. Using the fact that $\sum_j \Omega_{ij} = 1$, we have
\begin{align*}
	\sum_{t} F_{t1\ell s} 
	&\les F_{11\ell s} = \sum_{\omega \in \mc{I}_{\ell - 1}} \sigma_{i\ell}^4 \sum_j \Omega_{\ell j} \Omega_{i j} 
	= \sum_{ \omega \in \mc{I}_{\ell - 1}} \sigma_{i\ell}^4 \langle \Omega_\ell, \Omega_i \rangle. 
	\num \label{eqn:Ft1ells_bd}
\end{align*}
To control 	$\sum_{t} F_{t2\ell s}$ , observe that, since $\Omega_{ij} \leq 1 $ for all $i,j$, 
\begin{align*}
	&\sum_j \Omega_{\ell j} \Omega_{i_1j} = \langle \Omega_\ell ,
	\Omega_{i_1} \circ \Omega_{i_3} \rangle 
	\\ 	&\sum_{j_1 \neq j_4} 
	\Omega_{\ell j_1} \Omega_{\ell j_4} 
	\Omega_{i_1 j_1} \Omega_{i_3 j_1} \Omega_{i_3 j_4}
	\leq \langle \Omega_\ell, \Omega_{i_1} \circ \Omega_{i_3} \rangle \cdot
	\langle \Omega_\ell , \Omega_{i_3} \rangle 
	\\ &\sum_{j_1 \neq j_3} \big[ \Omega_{\ell j_1} \Omega_{\ell j_3} 
	\Omega_{i_1 j_1} \Omega_{i_3 j_3} +\Omega_{\ell j_1} \Omega_{\ell j_3} 
	\Omega_{i_1 j_1 } \Omega_{i_1 j_3} \Omega_{i_3 j_1} \Omega_{i_3 j_3} \big]
	\leq 2 \langle \Omega_\ell, \Omega_{i_1}  \rangle \cdot 
	\langle \Omega_\ell, \Omega_{i_3} \rangle
	\\ &\sum_{j_1, j_3, j_4 \, dist.} \big[ 
	\Omega_{\ell j_1} \Omega_{\ell j_3} \Omega_{\ell j_4} \Omega_{i_1 j_1} \Omega_{i_3 j_3} \Omega_{i_3 j_4} + 
	\Omega_{\ell j_1} \Omega_{\ell j_3} \Omega_{\ell j_4} 
	\Omega_{i_1 j_1} \Omega_{i_1 j_3} \Omega_{i_3 j_1} \Omega_{i_3 j_4}\big] 
	\leq 2 \langle \Omega_\ell, \Omega_{i_1} \rangle 
	\langle \Omega_\ell, \Omega_{i_3} \rangle^2 
	\\ & \sum_{j_1, j_2 ,j_3, j_4 \, dist.} \Omega_{\ell j_1} \Omega_{\ell j_2} \Omega_{\ell j_3} \Omega_{\ell j_4}
	\Omega_{i_1 j_1} 		\Omega_{i_1 j_2} 		\Omega_{i_3 j_3} 		\Omega_{i_3 j_4} \leq  \langle \Omega_{\ell}, \Omega_{i_1} \rangle^2 
	\langle \Omega_{\ell}, \Omega_{i_3} \rangle^2. 
\end{align*}
These bounds are relatively sharp, and it is clear that the first and third lines dominate. Furthermore $as$. Hence,
\begin{align*}
	\sum_{t} F_{t2\ell s} \les 
	F_{12\ell s} + F_{32\ell s} \les 
	\sum_{\omega_1 \neq \omega_3 \in \mc{I}_{\ell-1}} 
	\sigma_{i_1\ell}^2  \sigma_{i_3\ell}^2 \big[ \langle \Omega_\ell ,
	\Omega_{i_1} \circ \Omega_{i_3} \rangle 
	+ 	\langle \Omega_\ell, \Omega_{i_1}  \rangle \cdot 
	\langle \Omega_\ell, \Omega_{i_3} \rangle \big]. 
	\num \label{eqn:Ft2ells_bd}
\end{align*}
Observe that if $\ell \in S_k$, then
\begin{align}
	\sum_\omega \sigma_{i \ell}^4 \Omega_{ij} 
	&\leq  \sum_{i \in S_k} \frac{1}{n_k^4 \bar{N}_k^4} N_i \Omega_{ij} 
	+ \sum_{k' =1}^K \sum_{i \in S_{k'}} 
	\frac{1}{n^4 \bar{N}^4} N_i \Omega_{ij}
	\\&\leq  \frac{1}{n_k^3 \bar{N}_k^3} \mu_{kj} 
	+ \frac{1}{n^3 \bar{N}^3} \mu_j , 
\end{align}
and
\begin{align*}
	\sum_\omega \sigma_{i \ell}^2 \Omega_{ij} 
	&\leq  \sum_{i \in S_k} \frac{1}{n_k^2 \bar{N}_k^2} N_i \Omega_{ij} 
	+ \sum_{k' =1}^K \sum_{i \in S_{k'}} 
	\frac{1}{n \bar{N}} N_i \Omega_{ij} 
	\\&\leq 
	\frac{1}{n_k \bar{N}_k} \mu_{kj} + \frac{1}{n \bar{N}} \mu_j. 
\end{align*}
Next,
\begin{align*}
	\sum_{(\ell, s)} \sum_t F_{t1\ell s} 
	&\les 	\sum_{(\ell, s)} \sum_{ \omega \in \mc{I}_{\ell - 1}} \sigma_{i\ell}^4 \langle \Omega_\ell, \Omega_i \rangle. 
	\\&\les \sum_{(\ell, s)}  \sum_j \Omega_{\ell j} \big(   \frac{1}{n_k^3 \bar{N}_k^3} \mu_{kj} 
	+ \frac{1}{n^3 \bar{N}^3} \mu_j \big) 
	\\&\les  \sum_j \sum_k \frac{1}{n_k^2 \bar{N}_k^2} \mu_{kj}^2 
	+ \sum_j \sum_k \frac{1}{n^2 \bar{N}^2} \mu_j^2
	\les \sum_k \frac{1}{n_k^2 \bar{N}_k^2} \| \mu_k \|^2,
	\num \label{eqn:U_4m_bd1}
\end{align*}
where we applied that $\| \mu \|^2 \les \sum_k \| \mu_k \|^2$ (see \eqref{eqn:mu_vs_muk}). 
Furthermore, 
\begin{align*}
	&\sum_{(\ell, s)} 	\sum_{t} F_{t2\ell s}
	\leq \sum_{k = 1}^K \sum_{\ell \in S_k} N_\ell \sum_{\omega_1 , \omega_3 } 
	\sigma_{i_1\ell}^2  \sigma_{i_3\ell}^2 \big[ \langle \Omega_\ell ,
	\Omega_{i_1} \circ \Omega_{i_3} \rangle 
	+ 	\langle \Omega_\ell, \Omega_{i_1}  \rangle \cdot 
	\langle \Omega_\ell, \Omega_{i_3} \rangle \big] 
	\\& \les \sum_k \sum_{\ell \in S_k} N_\ell 	\bigg[ \sum_j 
	\Omega_{\ell_j} \big( 	\frac{1}{n_k \bar{N}_k} \mu_{kj} + \frac{1}{n \bar{N}} \mu_j \big)^2
	+ \bigg(  \sum_{j}  \Omega_{\ell j}  \cdot \big( 	\frac{1}{n_k \bar{N}_k} \mu_{kj} + \frac{1}{n \bar{N}} \mu_j \big) \bigg)^2
	\bigg]
	\\&\les \sum_k \sum_{\ell \in S_k}  N_\ell \sum_j 
	\Omega_{\ell_j} \big( 	\frac{1}{n_k \bar{N}_k} \mu_{kj} + \frac{1}{n \bar{N}} \mu_j \big)^2
\end{align*}
In the last line we apply Cauchy--Schwarz. Continuing, we have
\begin{align*}
	\sum_{(\ell, s)} 	\sum_{t} F_{t2\ell s} &\les 	\sum_k \sum_{\ell \in S_k}  N_\ell \sum_j 
	\Omega_{\ell_j} \big( 	\frac{1}{n_k \bar{N}_k} \mu_{kj} + \frac{1}{n \bar{N}} \mu_j \big)^2
	\\&\les 	\sum_k \sum_{\ell \in S_k}  N_\ell \sum_j 
	\Omega_{\ell_j}  \big(\frac{1}{n_k \bar{N}_k} \mu_{kj}\big)^2 
	+ 	\sum_k \sum_{\ell \in S_k}  N_\ell \sum_j 
	\Omega_{\ell_j}  \big( \frac{1}{n \bar{N}} \mu_j \big)^2  
	\\&\les \sum_k \frac{ \| \mu_k \|_3^3 }{ n_k \bar{N}_k} 
	+ \sum_k \frac{\| \mu \|_3^3}{n \bar{N}}
	\les \sum_k \frac{ \| \mu_k \|_3^3 }{ n_k \bar{N}_k} ,
	\num \label{eqn:U_4m_bd2}
\end{align*}
where we applied \eqref{eqn:mu3_vs_muk3}. Combining \eqref{eqn:F_decomp_U}, \eqref{eqn:U_4m_bd1} and \eqref{eqn:U_4m_bd2}, we have
\begin{align*}
	\sum_{(\ell, s)} \E D_{\ell, s}^4 &\les 	\sum_{(\ell, s)} \sum_{x = 1}^2 \sum_{t = 1}^5 F_{tx\ell s} 
	\les  \sum_k \frac{\| \mu_k \|^2}{n_k^2 \bar{N}_k^2}  + \sum_k \frac{ \| \mu_k \|_3^3 }{ n_k \bar{N}_k} \, , 
\end{align*}
as desired. 

\subsection{Proof of Lemma~\ref{lem:S_condvar}}

\begin{align}
	\var\bigg[ 	\sum_{(\ell, s)} \var( \tilde E_{\ell,s} | \mc{F}_{\prec (\ell, s)} ) \bigg] &\to 0 
	\label{eqn:mgclt_12_Kn}
\end{align} 
Next we study \eqref{eqn:mgclt_12_Kn}. We have
\begin{align*}
	\var(  E_{\ell,s} | \mc{F}_{\prec (\ell, s)} )
	&= \E[ E_{\ell, s}^2 | \mc{F}_{\prec (\ell, s)}]
	= \sigma_\ell^2 
	\sum_{r, r' \in [s-1]} \sum_{j, j'} 
	\E\big[ Z_{\ell j r } Z_{\ell j s } Z_{\ell j' r'} Z_{\ell j' s }
	\big| \mc{F}_{\prec (\ell, s)} \big  ]
	\\&= \sigma_\ell^2
	\sum_{r, r' \in [s-1]} \sum_{j, j'} Z_{\ell j r } Z_{\ell j' r'} 
	\E[ Z_{\ell j s} Z_{\ell j' s }]
	\\&= 	\sigma_\ell^2
	\sum_{r, r' \in [s-1]} \sum_{j, j'} \delta_{j j' \ell } Z_{\ell j r } Z_{\ell j' r'}, 
	\num \label{eqn:condvar_E}
\end{align*}
where we let 
\begin{align*}
	\delta_{j j' \ell} = \E Z_{\ell j s}Z_{\ell j's} =  \begin{cases}
		\Omega_{\ell j}(1 - \Omega_{\ell j}) &\quad \text{ if } j = j' \\
		-	\Omega_{\ell j} \Omega_{\ell j'} &\quad \text{ else}. 
	\end{cases}
	\num \label{eqn:delta_def_ell_S}
\end{align*}
Define 
\begin{align}
	\label{eqn:new_zeta_def} 
	\varphi_{\ell r \ell r'} = \sum_{j, j'} \delta_{jj'\ell} Z_{\ell j r } Z_{\ell j' r'}.
\end{align}

By \eqref{eqn:condvar_E} we have
\begin{align*}
	\sum_{(\ell, s)} 	\var(  E_{\ell,s} | \mc{F}_{\prec (\ell, s)} )
	&= \sum_{\ell = 1}^n \sum_{s = 1}^{N_\ell} \sum_{r, r' \in [s-1]}
	\sigma_\ell^2 \,  \varphi_{\ell r \ell r'} 
	\\&= \sum_{\ell = 1}^n \sum_{s = 1}^{N_\ell} \big[ \sum_{r \in [s-1]} \sigma_\ell^2 \,  \varphi_{\ell r \ell r}
	+ 2 \sum_{r < r' \in [s-1]} \sigma_\ell^2 \,  \varphi_{\ell r \ell r'}
	\big]  
	\\&= \sum_{\ell = 1}^n
	\sum_{r =1  }^{N_\ell} \sum_{s \in [N_\ell]: s > r} \sigma_\ell^2 \,  \varphi_{\ell r \ell r}
	+ 2 \sum_{\ell = 1}^s \sum_{ r < r' \in [N_\ell] } \sum_{s \in [N_\ell]: s > r'} \sigma_\ell^2 \,  \varphi_{\ell r \ell r'}
	\\&= \sum_{\ell = 1}^n
	\sum_{r =1  }^{N_\ell} (N_\ell - r) \sigma_\ell^2 \,   \varphi_{\ell r \ell r}
	+ 2 \sum_{\ell = 1}^s \sum_{ r < r' \in [N_\ell] } (N_\ell - r') \sigma_\ell^2 \,  \varphi_{\ell r \ell r'}
	\\&\equiv S_1 + S_2.
\end{align*}	
Observe that $S_1$ and $S_2$ are uncorrelated. In addition, the terms in the summation defining $S_1$ are uncorrelated; the same holds for $S_2$ also. 

First we study $S_2$. 
Next,
\begin{align*}
	\E \varphi_{\ell  r \ell  r' }^2
	&= \sum_{j_1, j_2 ,j_3, j_4} \delta_{j_1 j_2, \ell } \delta_{j_3 j_4, \ell } \,
	\E Z_{\ell j_1 r} Z_{\ell j_2 r'} Z_{\ell j_3 r} Z_{\ell j_4 r'} 
	\\&= \sum_{j_1, j_2 ,j_3, j_4} \delta_{j_1 j_2 \ell } \delta_{j_3 j_4 \ell}
	\E Z_{\ell j_1 r} Z_{\ell j_3 r} \E Z_{\ell j_2 r'} Z_{\ell j_4 r'}. 
	\num 	\label{eqn:zeta_expand}
\end{align*}

First we study $V_2$. 
By casework, 
\begin{align*}
	\num \label{eqn:zeta_bd_dist_casework}
	&|\delta_{j_1 j_2 \ell} \delta_{j_3 j_4\ell}
	\E Z_{\ell  j_1 r} Z_{\ell  j_3 r} \E Z_{\ell  j_2 r'} Z_{\ell  j_4 r'}|
	\\&=
	\begin{cases}
		\delta_{jj\ell}^2	\E Z_{\ell jr}^2 \E Z_{\ell jr'}^2 \lesssim \Omega_{\ell j}^4 \quad &\text{ if } j_1 = \cdots = j_4 \\
		\delta_{j_1j_1\ell} \delta_{j_1 j_4\ell} |\E Z_{\ell j_1r}^2 \E Z_{\ell  j_1 r' } Z_{\ell  j_4 r'} |
		\lesssim  \Omega_{\ell j_1}^4 \Omega_{\ell  j_4}^2 
		&\text{ if } j_1 = j_2 = j_3, j_1 \neq j_4 \\
		\delta_{j_1 j_1\ell} \delta_{j_3 j_3\ell}
		\E Z_{\ell  j_1 r} Z_{\ell  j_3 r} \E Z_{\ell  j_1 r'} Z_{\ell  j_3 r'}
		\lesssim \Omega_{\ell  j_1}^3 \Omega_{\ell  j_3}^3 
		&\text{ if } j_1 = j_2, j_3 = j_4, j_1 \neq j_3 \\
		\delta_{j_1 j_2\ell}^2
		\E Z_{\ell  j_1 r}^2 \E Z_{\ell  j_2 r'}^2
		\lesssim \Omega_{\ell  j_1}^3 \Omega_{\ell j_2}^3 
		&\text{ if } j_1 = j_3, j_2 = j_4, j_1 \neq j_2 \\
		\delta_{j_1 j_1\ell} \delta_{j_3 j_4\ell}
		\E Z_{\ell  j_1 r} Z_{\ell  j_3 r} \E Z_{\ell  j_1 r'} Z_{\ell  j_4 r'}
		\lesssim \Omega_{\ell  j_1}^3 \Omega_{\ell j_3}^2 \Omega_{\ell j_4}^2 
		&\text{ if } j_1 = j_2, j_1, j_3, j_4 \, \, dist. \\
		\delta_{j_1 j_2\ell} \delta_{j_1 j_4\ell}
		\E Z_{\ell  j_1 r}^2 \E Z_{\ell  j_2 r'} Z_{\ell  j_4 r'}
		\lesssim \Omega_{\ell  j_1}^3 \Omega_{\ell  j_2}^2 \Omega_{\ell j_4}^2 
		&\text{ if } j_1 = j_3, j_1, j_2, j_4 \,\, dist. \\
		\delta_{j_1 j_2\ell} \delta_{j_3 j_4\ell}
		\E Z_{\ell  j_1 r} Z_{\ell  j_3 r} \E Z_{\ell  j_2 r'} Z_{\ell  j_4 r'}
		\lesssim \Omega_{\ell  j_1}^2 \Omega_{\ell  j_2}^2 \Omega_{\ell j_3}^2 \Omega_{\ell  j_4}^2
		&\text{ if }  j_1, j_2, j_3, j_4 \, \, dist.
	\end{cases}
\end{align*}
Up to permutation of the indices $j_1, \ldots, j_4$, all nonzero terms of \eqref{eqn:zeta_expand} take one of the forms above. By \eqref{eqn:zeta_bd_dist_casework} and Cauchy--Schwarz, we have 
\begin{align*}
	\num \label{eqn:zeta_bd_dist}
	\E \varphi_{\ell r \ell r'}^2 
	&\lesssim \| \Omega_\ell \|_4^4 + \| \Omega_\ell \|_4^4 \| \Omega_\ell \|^2 + 2\| \Omega_\ell \|_3^6
	+ 2 \| \Omega_\ell \|_3^3 \| \Omega_\ell \|^4 + \| \Omega_\ell \|^8
	\lesssim \| \Omega_\ell \|_4^4. 
\end{align*}
%
Recalling that $\{ \varphi_{\ell  r \ell  r' } \}_{\ell, r < r' \in [N_\ell]}$ are mutually uncorrelated, it  follows that
\begin{align*}
	\var(S_2) &\les \sum_{\ell} \sum_{r < r' \in [N_\ell]} (N_\ell - r')^2 \sigma_\ell^2 
	\E \varphi^2_{\ell r \ell r'} 
	\\&\les \sum_{\ell} \sum_{r < r' \in [N_\ell]} (N_\ell - r')^2 \sigma_\ell^4 
	\| \Omega_\ell \|_4^4 
	\\&\les \sum_k \sum_{\ell  \in S_k} N_\ell^4 \cdot \frac{1}{n_k^4 \bar{N}_k^4} \| \Omega_\ell \|_4^4. \num \label{eqn:varS2}
\end{align*}

Next we study $S_1$. We have
\begin{align*}
	\E \varphi_{\ell r \ell r}^2
	&= \sum_{j_1, j_2 ,j_3, j_4} \delta_{j_1 j_2\ell} \delta_{j_3 j_4\ell}
	\E Z_{\ell j_1r} Z_{\ell j_2r} Z_{\ell j_3r} Z_{\ell j_4 r}.  
\end{align*}
We have the following bounds by casework.
\begin{align*}
	\num \label{eqn:zeta_bd_equal_casework}
	&|	\delta_{j_1 j_2\ell} \delta_{j_3 j_4\ell}\E Z_{\ell j_1r} Z_{\ell j_2r} Z_{\ell j_3r} Z_{\ell j_4 r} |
	\\&=
	\begin{cases}
		\delta_{jj\ell}^2	\E Z_{\ell jr}^4 \lesssim \Omega_{\ell j}^3 \quad &\text{ if } j_1 = \cdots = j_4 \\
		\delta_{j_1j_1\ell} \delta_{j_1 j_4\ell} |\E Z_{\ell j_1r}^3 Z_{\ell j_4 r} |
		\lesssim  \Omega_{\ell j_1}^3 \Omega_{\ell j_4}^2 &\text{ if } j_1 = j_2 = j_3, j_1 \neq j_4 \\
		\delta_{j_1j_1\ell} \delta_{j_3j_3\ell} 
		\E Z_{\ell j_1r}^2 Z_{\ell j_3r}^2 
		\lesssim \Omega_{\ell j_1}^2 \Omega_{\ell j_3}^2 
		&\text{ if } j_1 = j_2, j_3 = j_4, j_1 \neq j_3 \\
		\delta_{j_1 j_2\ell}^2 \E Z_{\ell j_1r}^2 Z_{\ell j_2r}^2
		\lesssim \Omega_{\ell j_1}^3 \Omega_{\ell j_2}^3 
		&\text{ if } j_1 = j_3, j_2 = j_4, j_1 \neq j_3 \\
		\delta_{j_1 j_1\ell} \delta_{j_3 j_4\ell} 
		| \E Z_{\ell j_1r}^2 Z_{\ell j_3r} Z_{\ell j_4r}|
		\lesssim \Omega_{\ell j_1}^2 \Omega_{\ell j_3}^2 \Omega_{\ell j_4}^2 
		&\text{ if } j_1 = j_2, j_1, j_3, j_4 \, \, dist. \\
		\delta_{j_1 j_2\ell} \delta_{j_1 j_4\ell}
		| \E Z_{\ell j_1r}^2 Z_{\ell j_2r} Z_{\ell j_4} | 
		\lesssim \Omega_{\ell j_1}^3 \Omega_{\ell j_2}^2 \Omega_{\ell j_4}^2 
		&\text{ if } j_1 = j_3, j_1, j_2, j_4 \,\, dist. \\
		\delta_{j_1 j_2\ell} \delta_{j_3 j_4\ell} 
		| \E Z_{\ell j_1r} Z_{\ell j_2r} Z_{\ell j_3r} Z_{\ell j_4 r} |
		\lesssim \Omega_{\ell j_1}^2 \Omega_{\ell j_2}^2 \Omega_{\ell j_3}^2 \Omega_{\ell j_4}^2
		&\text{ if }  j_1, j_2, j_3, j_4 \, \, dist.
	\end{cases}
\end{align*}
Up to symmetry, this accounts for all possible (nonzero) cases. Hence by Cauchy--Schwarz, 
\begin{align*}
	\num \label{eqn:zeta_bd_equal}
	\E \varphi_{\ell r \ell r}^2 
	&\lesssim \| \Omega_\ell  \|_3^3 
	+ \| \Omega_\ell  \|_3^3 \| \Omega_\ell  \|^2 
	+ \| \Omega_\ell  \|^4 + \| \Omega_\ell  \|_3^6 + \| \Omega_\ell  \|^6 
	+ \| \Omega_\ell  \|_3^3 \| \Omega_\ell  \|^4 + \| \Omega_\ell  \|^8
	\lesssim \| \Omega_\ell  \|_3^3. 
\end{align*}
Recalling that $\{ \varphi_{\ell r \ell r } \}_{\ell, r\in[N_\ell]}$ is an uncorrelated collection of random variables, we have 
\begin{align*}
	\var(S_1) &\les \sum_\ell \sum_{r \in [N_\ell]} 
	(N_\ell - r)^2 \sigma_\ell^4 \E \varphi_{\ell r \ell r}^2 
	\\&\les \sum_\ell \sum_{r \in [N_\ell]} 
	(N_\ell - r)^2 \sigma_\ell^4 \| \Omega_\ell \|_3^3 
	\\&\les \sum_k \sum_{\ell \in S_k} 
	N_\ell^3 \cdot \frac{1}{n_k^4 \bar{N}_k^4} \| \Omega_\ell \|_3^3. 
	\num \label{eqn:varS1}
\end{align*}

Combining \eqref{eqn:varS1} and \eqref{eqn:varS2} proves the result. 
\qed 

\subsection{Proof of Lemma~\ref{lem:S_fourth_mom} } 

We have 
\begin{align*}
	\E E_{\ell,s }^4
	&= \sum_{ r_1, r_2, r_3, r_4 \in [s-1]} \, \,
	\sigma_{\ell}^4 \sum_{j_1, j_2, j_3, j_4} \E Z_{\ell j_1 r_1} Z_{\ell j_1 s}
	Z_{\ell j_2 r_2} Z_{\ell j_2 s}
	Z_{\ell j_3 r_3} Z_{\ell j_3 s}
	Z_{\ell j_4 r_4} Z_{\ell j_4 s}
	\\&= \sigma_\ell^4  
	\sum_{j_1, j_2, j_3, j_4} \bigg[ \, \, \E[ Z_{\ell j_1 s}Z_{\ell j_2 s}
	Z_{\ell j_3 s}
	Z_{\ell j_4 s}] \cdot  \underbrace{\sum_{ r_1, r_2, r_3, r_4 \in [s-1]} \E[ Z_{\ell j_1 r_1} 
		Z_{\ell j_2 r_2} 	Z_{\ell j_3 r_3} Z_{\ell j_4 r_4} ]}_{=: B_{\ell, s; j_1, j_2, j_3, j_4}} \, \,\bigg] 
	\num \label{eqn:S_4thmom_main_bd}
\end{align*}
We have by exhaustive casework that
\begin{align*}
	\num \label{eqn:S_casework}
	&| \E[ Z_{\ell j_1 r_1} 
	Z_{\ell j_2 r_2} 	Z_{\ell j_3 r_3} Z_{\ell j_4 r_4} ]|
	\\&= \begin{cases}
		\E Z_{\ell j_1 r_1}^4 \lesssim \Omega_{\ell j_1}
		& \text{ if  } \substack{j_1 = j_2 = j_3 = j_4; \\ r_1 = r_2 = r_3 = r_4} \\	
		\E Z_{\ell j_1 r_1}^2 \E Z_{\ell j_1 r_3}^2 
		\lesssim \Omega_{\ell j_1}^2 
		& \text{ if  }  \substack{j_1 = j_2 = j_3 = j_4; \\ r_1 = r_2, r_3 = r_4, r_1 \neq r_3 }\\	
		| \E[ Z_{\ell j_1 r_1}^3 Z_{\ell j_4 r_1} ]| 
		\lesssim \Omega_{\ell j_1} \Omega_{\ell j_4} 
		& \text{ if  } \substack{j_1 = j_2 = j_3, j_1 \neq j_4; \\ r_1 = r_2 = r_3 = r_4} \\
		| \E[ Z_{\ell j_1 r_1}^2 \E Z_{\ell j_1 r_3} Z_{\ell j_4 r_3} ]|
		\lesssim \Omega_{\ell j_1}^2 \Omega_{\ell j_4}
		& \text{ if  } \substack{j_1 = j_2 = j_3, j_1 \neq j_4; \\ r_1 = r_2, r_3 = r_4, r_1 \neq r_3} \\
		| \E Z_{\ell j_1 r_1}^2 Z_{\ell j_3 r_1}^2 |
		\lesssim \Omega_{\ell j_1} \Omega_{\ell j_3} 
		& \text{ if  } \substack{j_1 = j_2, j_3 = j_4, j_1 \neq j_3;\\ r_1 = r_2 = r_3 = r_4} \\
		| \E[ Z_{\ell j_1 r_1}^2	Z_{\ell j_3 r_3}^2 ]| 
		\lesssim \Omega_{\ell j_1} \Omega_{\ell j_3}
		& \text{ if  } \substack{j_1 = j_2, j_3 = j_4, j_1 \neq j_3;\\ r_1 = r_2, r_3 = r_4, r_1 \neq r_3} \\
		| \E[ Z_{\ell j_1 r_1} Z_{\ell j_3 r_1} \E 
		Z_{\ell j_1 r_2} 	 Z_{\ell j_3 r_2} ]|
		\lesssim \Omega_{\ell j_1}^2 \Omega_{\ell j_3}^2
		& \text{ if  } \substack{j_1 = j_2, j_3 = j_4, j_1 \neq j_3;\\ r_1 = r_3, r_2 = r_4, r_1 \neq r_2} \\
		| \E[ Z_{\ell j_1 r_1}^2	Z_{\ell j_3 r_1} Z_{\ell j_4 r_1} ]|
		\lesssim \Omega_{\ell j_1} \Omega_{\ell j_3} \Omega_{\ell j_4}
		& \text{ if  } \substack{j_1 = j_2, j_1, j_3, j_4 \, \, dist.;\\ r_1 = r_2 = r_3 = r_4} \\
		| \E[ Z_{\ell j_1 r_1}^2 \E	Z_{\ell j_3 r_3} Z_{\ell j_4 r_3} ]|
		\lesssim \Omega_{\ell j_1} \Omega_{\ell j_3} \Omega_{\ell j_4}	
		& \text{ if  } \substack{j_1 = j_2, j_1, j_3, j_4 \, \, dist.;\\ r_1 = r_2, r_3 = r_4, r_1 \neq r_3} \\
		| \E[ Z_{\ell j_1 r_1} 	Z_{\ell j_3 r_1}  \E
		Z_{\ell j_1 r_2} Z_{\ell j_4 r_2} ]|
		\lesssim \Omega_{\ell j_1}^2 \Omega_{\ell j_3} \Omega_{\ell j_4}
		& \text{ if  } \substack{j_1 = j_2, j_1, j_3, j_4 \, \, dist.; \\ r_1 = r_3, r_2 = r_4, r_1 \neq r_2} \\
		| \E[ Z_{\ell j_1 r_1} 
		Z_{\ell j_2 r_1} 	Z_{\ell j_3 r_1} Z_{\ell j_4 r_1} ]|
		\lesssim \Omega_{\ell j_1} \Omega_{\ell j_2} \Omega_{\ell j_3} \Omega_{\ell j_4}
		& \text{ if } \substack{ j_1, j_2, j_3, j_4 \, \, dist;\\ r_1 = r_2 = r_3 = r_4} \\ 
		| \E[ Z_{\ell j_1 r_1} 
		Z_{\ell j_2 r_1} \E	Z_{\ell j_3 r_3} Z_{\ell j_4 r_3} ]|
		\lesssim \Omega_{\ell j_1} \Omega_{\ell j_2} \Omega_{\ell j_3} \Omega_{\ell j_4}
		& \text{ if } \substack{j_1, j_2, j_3, j_4 \, \, dist;\\ 	r_1 = r_2, r_3 = r_4, r_1 \neq r_3}
	\end{cases}
\end{align*}
Up to permutation of the indices $j_1, j_2, j_3, j_4$ and $r_1, r_2, r_3, r_4$, this accounts for all possible cases such that \eqref{eqn:S_casework} is nonzero.	Therefore,
\begin{align*}
	B_{\ell, s;j_1, j_2, j_3, j_4} &\lesssim \begin{cases}
		s \Omega_{\ell j_1} + s^2 \Omega_{\ell j_1}^2
		&\quad \text{ if  } j_1 = j_2 = j_3 = j_4 \\
		s \Omega_{\ell j_1} \Omega_{\ell j_4} + 
		s^2 \Omega_{\ell j_1}^2 \Omega_{\ell j_4}
		&\quad \text{ if  } j_1 = j_2 = j_3, j_1 \neq j_4 \\
		s \Omega_{\ell j_1} \Omega_{\ell j_3} 
		+ s^2 \Omega_{\ell j_1} \Omega_{\ell j_3} 
		&\quad \text{ if  } j_1 = j_2, j_3 = j_4, j_1 \neq j_3 \\
		s \Omega_{\ell j_1} \Omega_{\ell j_3} \Omega_{\ell j_4}
		+ s^2 \Omega_{\ell j_1} \Omega_{\ell j_3} \Omega_{\ell j_4}
		&\quad \text{ if  } j_1 = j_2, j_1,  j_3,  j_4 \, \, dist. \\
		s \Omega_{\ell j_1} \Omega_{\ell j_2} \Omega_{\ell j_3} \Omega_{\ell j_4}
		+ s^2 \Omega_{\ell j_1} \Omega_{\ell j_2} \Omega_{\ell j_3} \Omega_{\ell j_4}
		&\quad \text{ if  } j_1 , j_2 , j_3 , j_4 \, \, dist. \\
	\end{cases}
\end{align*}
Up to permutation of $j_1, j_2, j_3, j_4$, this accounts for all possible cases. Returning to \eqref{eqn:S_4thmom_main_bd}, we have by applying \eqref{eqn:Z4_bounds} and the previous display that
\begin{align*}
	\E E_{\ell,s}^4 
	&\lesssim \sigma_\ell^4
	\bigg( \sum_{j} \Omega_{\ell j} ( s \Omega_{\ell j} + s^2 \Omega_{\ell j}^2 )
	+ \sum_{j_1 \neq j_4} \Omega_{\ell j_1} \Omega_{\ell j_4} ( s \Omega_{\ell j_1} \Omega_{\ell j_4} 
	+ s^2 \Omega_{\ell j_1}^2 \Omega_{\ell j_4})
	\\&\quad  + \sum_{j_1 \neq j_3} \Omega_{\ell j_1} \Omega_{\ell j_3} ( s \Omega_{\ell j_1} \Omega_{\ell j_3} 
	+ s^2 \Omega_{\ell j_1} \Omega_{\ell j_3}  )
	\\&\quad + \sum_{j_1, j_3, j_4 (dist.)} \Omega_{\ell j_1} \Omega_{\ell j_3} \Omega_{\ell j_4} (s \Omega_{\ell j_1} \Omega_{\ell j_3} \Omega_{\ell j_4}
	+ s^2 \Omega_{\ell j_1} \Omega_{\ell j_3} \Omega_{\ell j_4} )
	\\&\quad + \sum_{j_1, j_2 ,j_3, j_4 \, dist.} \Omega_{\ell j_1} \Omega_{\ell j_2} \Omega_{\ell j_3} \Omega_{\ell j_4} (s \Omega_{\ell j_1} \Omega_{\ell j_2} \Omega_{\ell j_3} \Omega_{\ell j_4}
	+ s^2 \Omega_{\ell j_1} \Omega_{\ell j_2} \Omega_{\ell j_3} \Omega_{\ell j_4}) \bigg)
	\\&\lesssim 
	s \sigma_{\ell}^4 \|\Omega_\ell  \|^2 
	+ s^2 \sigma_{\ell}^4 \| \Omega_\ell \|_3^3.  
\end{align*}
In the third line we group the coefficients of $s$ and $s^2$ and use the fact that $\| \Omega_\ell \|^4 \leq \| \Omega_\ell \|_3^3$ by Cauchy--Schwarz. Therefore
\begin{align*}
	\sum_{(\ell, s)} \E E_{\ell,s}^4
	&\lesssim  \sum_{(\ell, s)} s \sigma_\ell^4 \| \Omega_\ell  \|^2 
	+ \sum_{(\ell, s)} s^2 \sigma_\ell^4\| \Omega_\ell \|_3^3 
	\\&= \sum_k \sum_{\ell \in S_k} \sum_{s \in [N_\ell]} 
	s \sigma_\ell^4 \| \Omega_\ell  \|^2  
	+  \sum_k \sum_{\ell \in S_k} \sum_{s \in [N_\ell]}  
	s^2 \sigma_\ell^4\| \Omega_\ell \|_3^3
	\\&\les \sum_k \sum_{\ell \in S_k} 
	N_\ell^2 \cdot  \frac{1}{n_k^4 \bar{N}_k^4} \| \Omega_\ell  \|^2  
	+ \sum_k \sum_{\ell \in S_k} N_\ell^3 \cdot  \frac{1}{n_k^4 \bar{N}_k^4} \| \Omega_\ell \|_3^3,
\end{align*}
as desired.
\qed 

\subsection{Proof of Lemma~\ref{lem:S_bounds_on_bounds}}

We have
\begin{align*}
	\sum_k \sum_{i \in S_k} \frac{N_i^2 \| \Omega_i \|^2}{n_k^4 \bar{N}_k^4} 
	&\leq \sum_k \frac{1}{n_k^4 \bar{N}_k^4} \sum_{i,m \in S_k}  N_i N_m \langle \Omega_i, \Omega_m \rangle 
	\\&= \sum_k \frac{1}{n_k^2 \bar{N}_k^2} \| \mu_k \|^2,
\end{align*}
which establishes the first claim. 

Similarly, 
\begin{align*}
	\sum_k \sum_{i \in S_k} \frac{N_i^3 \| \Omega_i \|_3^3}{n_k^4 \bar{N}_k^4} 
	&\leq \sum_k \frac{1}{n_k^4 \bar{N}_k^4}  \sum_{i,m,m' \in S_k} N_i N_m N_{m'} \sum_j \Omega_{ij}\Omega_{mj} \Omega_{m'j} 
	\\&\leq \sum_k \frac{1}{n_k \bar{N}_k} \| \mu_k \|_3^3,  
\end{align*}
which proves the second claim.

The third claim follows similarly and we omit the proof. 

\qed 

\section{Proofs of other main lemmas and theorems}

\subsection{Proof of Lemma~\ref{prop:unbiased}} \label{subsec:proof-prop-unbiased}

We start from computing $\mathbb{E}[(\hat{\mu}_{kj}-\hat{\mu}_j)^2]$. 
Write $X_{ij}=N_i(\Omega_{ij}+Y_{ij})$. It follows by elementary calculation that  
\[
\hat{\mu}_{kj}-\hat{\mu}_j = \mu_{kj}-\mu_j +\Bigl( \frac{1}{n_k\bar{N}_k}-\frac{1}{n\bar{N}}\Bigr)\sum_{i\in S_k}N_iY_{ij} - \frac{1}{n\bar{N}} \sum_{\ell: \ell\neq k} \sum_{i\in S_{\ell}}N_iY_{ij}. 
\]
For different $k$, the variables $\sum_{i\in S_k}N_iY_{ij}$ are independent of each other. 
It follows that
\begin{align} \label{lem-unbias-1}
	\mathbb{E}[(\hat{\mu}_{kj}&-\hat{\mu}_j)^2] = (\mu_{kj}-\mu_j)^2 + \Bigl( \frac{1}{n_k\bar{N}_k}-\frac{1}{n\bar{N}}\Bigr)^2\mathbb{E}\Bigl[ \Bigl(\sum_{i\in S_k}N_iY_{ij}\Bigr)^2 \Bigr] +  \sum_{\ell:\ell\neq k} \frac{1}{n^2\bar{N}^2}\mathbb{E}\Bigl[ \Bigl(\sum_{i\in S_\ell}N_iY_{ij}\Bigr)^2 \Bigr]\cr
	&= (\mu_{kj}-\mu_j)^2 + \Bigl( \frac{1}{n_k\bar{N}_k}-\frac{1}{n\bar{N}}\Bigr)^2 \sum_{i\in S_k}N_i\Omega_{ij}(1-\Omega_{ij})  +  \sum_{\ell:\ell\neq k} \frac{1}{n^2\bar{N}^2} \sum_{i\in S_\ell}N_i\Omega_{ij}(1-\Omega_{ij})\cr
	&= (\mu_{kj}-\mu_j)^2 + \frac{1}{n_k^2\bar{N}_k^2}\Bigl( 1-\frac{n_k\bar{N}_k}{n\bar{N}} \Bigr)\sum_{i\in S_k}N_i\Omega_{ij}(1-\Omega_{ij})\cr
	&\qquad  + \frac{1}{n^2\bar{N}^2}\biggl[\Bigl( 1 - \frac{n\bar{N}}{n_k\bar{N}_k}\Bigr) \sum_{i\in S_k}N_i\Omega_{ij}(1-\Omega_{ij}) + \sum_{\ell:\ell\neq k} \sum_{i\in S_\ell}N_i\Omega_{ij}(1-\Omega_{ij}) \biggr]\cr
	&= (\mu_{kj}-\mu_j)^2 + \frac{1}{n_k^2\bar{N}_k^2}\Bigl( 1-\frac{n_k\bar{N}_k}{n\bar{N}} \Bigr)\sum_{i\in S_k}N_i\Omega_{ij}(1-\Omega_{ij})\cr
	&\qquad  - \frac{1}{n\bar{N} n_k\bar{N}_k}\underbrace{\biggl[ \sum_{i\in S_k}N_i\Omega_{ij}(1-\Omega_{ij}) - \frac{n_k\bar{N}_k}{n\bar{N}} \sum_{\ell=1}^K \sum_{i\in S_\ell}N_i\Omega_{ij}(1-\Omega_{ij}) \biggr]}_{\delta_{kj}}. 
\end{align}
Since $X_{ij}$ follows a binomial distribution, it is easy to see that $\mathbb{E}[X_{ij}]=N_i\Omega_{ij}$ and $\mathbb{E}[X_{ij}^2]=(\mathbb{E}[X_{ij}])^2+\mathrm{Var}(X_{ij})=N_i^2\Omega^2_{ij}+N_i\Omega_{ij}(1-\Omega_{ij})$. Combining them gives 
\beq \label{lem-unbias-2}
\mathbb{E}[X_{ij}(N_i-X_{ij})] = N_i (N_i-1)\Omega_{ij}(1-\Omega_{ij}). 
\eeq
Define
\[
\hat{\zeta}_{kj} = (\hat{\mu}_{kj}-\hat{\mu}_{j})^2 -    \frac{1}{n_k^2\bar{N}_k^2}\Bigl( 1-\frac{n_k\bar{N}_k}{n\bar{N}} \Bigr)\sum_{i\in S_k}\frac{X_{ij}(N_i-X_{ij})}{N_i-1}, 
\]
It follows from \eqref{lem-unbias-1}-\eqref{lem-unbias-2} that
\beq \label{lem-unbias-3}
\mathbb{E}[\hat{\zeta}_{kj}] = (\mu_{kj}-\mu_j)^2  - \frac{1}{n\bar{N}n_k\bar{N}_k}\delta_{kj}. 
\eeq
We are ready to compute $\mathbb{E}[T]$. By definition,  $T=\sum_{j=1}^p\sum_{k=1}^K n_k\bar{N}_k\hat{\zeta}_{kj}$ and $\rho^2=\sum_{j,k}(\mu_{kj}-\mu_j)^2$. Consequently, 
\beq \label{lem-unbias-4}
\mathbb{E}[T] = \sum_{j=1}^p\sum_{k=1}^K n_k\bar{N}_k\Bigl[  (\mu_{kj}-\mu_j)^2 - \frac{1}{n\bar{N}n_k\bar{N}_k}\delta_{kj}\Bigr] = \rho^2 - \frac{1}{n\bar{N}}\sum_{j=1}^p\sum_{k=1}^K\delta_{kj}. 
\eeq
We use the definition of $\delta_{kj}$ in \eqref{lem-unbias-1}. It is seen that for each $1\leq j\leq p$, 
\beq \label{lem-unbias-5}
\sum_{k=1}^K \delta_{kj} = \sum_{k=1}^K \sum_{i\in S_k}N_i\Omega_{ij}(1-\Omega_{ij}) - \Bigl(\sum_{k=1}^K \frac{n_k\bar{N}_k}{n\bar{N}}\Bigr)\sum_{\ell=1}^K \sum_{i\in S_\ell}N_i\Omega_{ij}(1-\Omega_{ij}) = 0.
\eeq
Combining \eqref{lem-unbias-4}-\eqref{lem-unbias-5} gives $\mathbb{E}[T]=\rho^2$. This proves the claim. \qed

\subsection{Proof of Theorem~\ref{thm:alt}} 
\label{sec:thm_alt_proof}

%
%
%

First we show that 
\begin{align}
	\var(T) \lesssim \Theta_n
	\label{eqn:varT_vs_Thetan}
\end{align}
Recall 
\begin{align*}
	\Theta_{n1}
	&=  4\sum_{k=1}^K\sum_{j=1}^p n_k\bar{N}_k (\mu_{kj}-\mu_j)^2\mu_{kj} 
	\\ \Theta_{n2} 
	&= 2\sum_{k=1}^K\sum_{i\in S_k}\sum_{j=1}^p  \Bigl(\frac{1}{n_k\bar{N}_k}-\frac{1}{n\bar{N}}\Bigr)^2\frac{N_i^3}{N_i-1} \Omega_{ij}^2
	\\
	\Theta_{n3} &=  \frac{2}{n^2\bar{N}^2}\sum_{1\leq k\neq \ell\leq K}\sum_{i\in S_k}\sum_{m\in S_\ell} \sum_{j=1}^pN_iN_m\Omega_{ij}\Omega_{mj} 
	\\ \Theta_{n4} &= 2\sum_{k=1}^K \sum_{\substack{i\in S_k, m\in S_k,\\ i\neq m}}\sum_{j=1}^p \Bigl(\frac{1}{n_k\bar{N}_k}-\frac{1}{n\bar{N}}\Bigr)^2 N_iN_m\Omega_{ij}\Omega_{mj}. 
\end{align*}
and that $\sum_{a = 1}^4 \Theta_{na} = \Theta_n$. 

By Lemma \ref{lem:var1}, we immediately have
\begin{align}
	\label{eqn:var1_vs_Theta1}
	\var( {\bf 1}_p' U_1 ) \leq \Theta_{n1}.  
\end{align}

For $U_2$, it is shown in the Proof of Lemma \ref{lem:var2}  that
\[
\mathrm{Var}({\bf 1}_p'U_2) = 4\sum_{k=1}^K\sum_{i\in S_k} \sum_{1\leq r<s\leq N_i}\frac{\theta_i}{N_i(N_i-1)}\bigl[\|\Omega_i\|^2 + O(\|\Omega_i\|_3^3)].
\]
Thus
\begin{align*}
	\mathrm{Var}({\bf 1}_p'U_2) &\lesssim 
	4 \sum_{k=1}^K\sum_{i\in S_k} \sum_{1\leq r<s\leq N_i}\frac{\theta_i}{N_i(N_i-1)} \|\Omega_i\|^2
	\\&= 2 \sum_{k=1}^K\sum_{i\in S_k} \theta_i  \|\Omega_i\|^2
	= \Theta_{n2} \num 	\label{eqn:var2_vs_Theta2}
\end{align*}

Next we study $U_3$. Using that $\Omega_{mj'} \leq 1$ and $\| \Omega_i \|_1 = 1$, we have 
\begin{align*}
	\sum_{k\neq \ell}\frac{n_kn_{\ell}\bar{N}_k\bar{N}_\ell}{n^2\bar{N}^2}{\bf 1}_p'(\Sigma_k\circ\Sigma_\ell){\bf 1}_p
	&= \frac{2}{n^2\bar{N}^2} \sum_{k\neq \ell }\sum_{i\in S_k}\sum_{m\in S_{\ell}}\sum_{j,j'}N_iN_m\Omega_{ij}\Omega_{ij'}\Omega_{mj}\Omega_{mj'}
	\\ 
	&\leq \frac{2}{n^2\bar{N}^2} \sum_{k\neq \ell }\sum_{i\in S_k}\sum_{m\in S_{\ell}}\sum_{j }N_iN_m\Omega_{ij} \Omega_{mj}  \sum_{j'} \Omega_{ij'} 
	\\&= \frac{2}{n^2\bar{N}^2} \sum_{k\neq \ell }\sum_{i\in S_k}\sum_{m\in S_{\ell}}\sum_{j }N_iN_m\Omega_{ij} \Omega_{mj}. 
\end{align*}
Therefore by Lemma \ref{lem:var3},
\begin{align}
	\var(\mathbf{1}_p' U_3 ) 
	\lesssim \frac{2}{n^2\bar{N}^2}\sum_{1\leq k\neq \ell\leq K}\sum_{i\in S_k}\sum_{m\in S_\ell} \sum_{j=1}^pN_iN_m\Omega_{ij}\Omega_{mj}
	= \Theta_{n3}.  
	\label{eqn:var3_vs_Theta3}
\end{align}

Similarly for $U_4$, we have by the Proof of Lemma \ref{lem:var4} that
\begin{align}  
	\label{eqn:var4_vs_Theta4}
	\mathrm{Var}({\bf 1}_p'U_4) &=4\sum_{k=1}^K\sum_{\substack{i\in S_k, m\in S_k\\i< m}}\kappa_{im}\Bigl(\sum_{j}\Omega_{ij}\Omega_{mj} +\delta_{im}\Bigr) \cr
	&\les \sum_{k=1}^K\sum_{\substack{i\in S_k, m\in S_k\\i< m}}\kappa_{im} \sum_{j}\Omega_{ij}\Omega_{mj}
	= \Theta_{n4}. 
\end{align}
Above we use that $|\delta_{im}| \leq \sum_j \Omega_{ij}\Omega_{mj}$ and recall that $\kappa_{im} = (\frac{1}{n_k\bar{N}_k}-\frac{1}{n\bar{N}})^2N_iN_m$.

Observe that by Lemma \ref{prop:unbiased}, 
\begin{align}
	\Theta_{n1} =
	4\sum_{k=1}^K\sum_{j=1}^p n_k\bar{N}_k (\mu_{kj}-\mu_j)^2\mu_{kj} 
	\les \max_{k} \| \mu_k \|_\infty \cdot \rho^2
	= \max_{k} \| \mu_k \|_\infty \cdot \E \, T. 
	\label{eqn:Thetan1_ubd}
\end{align}
Since \eqref{cond1-basic} holds,  Lemma \ref{lem:Theta_n2+n3+n4} applies and
\begin{align}
	\Theta_{n_2} + \Theta_{n3} + \Theta_{n4} 
	\asymp \sum_k \| \mu_k \|^2.
	\label{eqn:sum_Thetant_ubd}
\end{align}
Combining \eqref{eqn:varT_vs_Thetan}, \eqref{eqn:Thetan1_ubd}, and \eqref{eqn:sum_Thetant_ubd} proves the theorem. 
\qed

\subsection{Proof of Theorem~\ref{thm:alt2}}
\label{sec:detection_boundary_appendix} 

To prove Theorem~\ref{thm:alt2}, we must prove the following claims:
\begin{enumerate} \itemsep 1pt
	\item[(a)] Under the alternative hypothesis, $\psi\to\infty$ in probability. 
	\item[(b)] For any fixed $\kappa\in (0,1)$, the level-$\kappa$ DELVE test has an asymptotic level of $\kappa$ and an asymptotic power of $1$. 
	\item[(c)] If we choose $\kappa=\kappa_n$ such that $\kappa_n\to 0$ and $1 - \Phi(\mathrm{SNR}_n) = o(\kappa_n)$, where $\Phi$ is the CDF of $N(0,1)$, then the sum of type I and type II errors of the DELVE test converges to $0$.  
\end{enumerate}

We  show the first claim, that $\psi \to \infty$, under the alternative hypothesis and the conditions of Theorem \ref{thm:alt2}. In particular, recall we assume that
\begin{align}
	\label{eqn:detection_boundaryK}
	\frac{\rho^2}{ \sqrt{\sum_{k =1}^K  \| \mu_k \|^2  } }=	\frac{ n \bar{N} \| \mu \|^2	\omega_n^2 }{ \sqrt{\sum_{k =1}^K  \| \mu_k \|^2  } } \to \infty. 
\end{align}


%
Our first goal is to show that
\begin{align}
	\label{eqn:ideal_T_blowup}
	T/\sqrt{\var(T)} \stackrel{\mathbb{P}}{\to} \infty 
\end{align}
under the alternative.
By Chebyshev's inequality, it suffices to show that
\begin{align}
	\label{eqn:Chebyshev_powerK}
	\E \, T \gg \sqrt{ \var(T) }.
\end{align} 
By Theorem \ref{thm:alt}, 
\begin{align}
	\label{eqn:thm_alt21}
	\var(T) \lesssim \sum_k \| \mu_k \|^2 + \max_k \| \mu_k \|_\infty \cdot \E T =
	\sum_k \| \mu_k \|^2 + \max_k \| \mu_k \|_\infty \cdot \rho^2
\end{align}
By \eqref{eqn:detection_boundaryK},
\[
\E T = \rho^2 \gg \sqrt{ \sum_{k =1}^K  \| \mu_k \|^2  } 
\geq \max_{1\leq k\leq K} \| \mu_k \| _\infty. 
\]
Therefore,
\begin{align}
	\label{eqn:Theta_1}
	\sqrt{ \max_{1\leq k\leq K} \| \mu_k \|_\infty }  \cdot \rho \ll \rho^2 = \E T.
\end{align} 
Moreover, by \eqref{eqn:detection_boundaryK}, 
\begin{align}
	\label{eqn:thm_alt22}
	\sum_k \| \mu_k \|^2 \ll \rho^4 = (\E T)^2.
\end{align}
Combining \eqref{eqn:thm_alt21}, \eqref{eqn:Theta_1}, and \eqref{eqn:thm_alt22} implies \eqref{eqn:ideal_T_blowup}. 

Next we show that $V > 0$ with high probability (i.e., with probability tending to $1$ as $n \bar{N} \to \infty$). Recall that by Lemmas \ref{lem:Theta_n2+n3+n4}, \ref{lem:Vdecompose},  and \ref{lem:varV},
\begin{align}
	\E V &= \Theta_{n2} + \Theta_{n3} + \Theta_{n4} \gtrsim \sum_k \| \mu_k \|^2 > 0  , \, \text{ and} 
	\\  \var(V) &\lesssim \sum_k \frac{ \| \mu_k \|^2}{n_k^2 \bar{N}_k^2} 
	\vee \sum_k \frac{ \| \mu_k \|_3^3}{ n_k \bar{N}_k}. 
\end{align}
Using this, the Markov inequality, and \eqref{cond2-regular}, we have
\begin{align}
	\mathbb{P}\bigl(V<\mathbb{E}[V]/2\bigr) \leq \mathbb{P}\bigl( |V-\mathbb{E}[V]|\geq \mathbb{E}[V]/2 \bigr)\leq \frac{4\mathrm{Var}(V)}{(\mathbb{E}[V])^2} = o(1),
	\num 	\label{eqn:V_vs_0_whp}
\end{align}
which implies that $V >  0$ with high probability.

To finish the proof of the first claim, note that the assumptions of Proposition \ref{prop:var_estimation_alt} are satisfied and we have $V/\var(T) = O_{\mathbb{P}}(1)$. By this, \eqref{eqn:ideal_T_blowup}, and \eqref{eqn:V_vs_0_whp}, we have
\begin{align*}
	\psi = 	\frac{T {\bf 1}_{V> 0} }{\sqrt{V}}
	= \frac{ \sqrt{ \var(T)} }{ \sqrt{V}} \cdot \frac{T}{\sqrt{\var(T)}} \cdot {\bf 1}_{V> 0}
	\gtrsim  \frac{T}{\sqrt{\var(T)}} 
	\to \infty 
\end{align*}
in probability. 

The second claim follows directly from the first claim and Theorem \ref{thm:null2}. 

To prove the third claim, by Chebyshev's inequality and $T/ \sqrt{\var(T)} \to \infty$, it follows that $T > (1/2) \E T = (1/2)\rho^2$ with high probability as $n \bar{N} \to \infty$. By a similar Chebyshev argument as above, it also holds that $V < (3/2) \E V$ with high probability as $n \bar{N} \to \infty$. Recall that $\E V = \Theta_{n2}+ \Theta_{n3}+ \Theta_{n4} \les \sum_k \| \mu_k \|^2$ by Lemmas \ref{lem:Theta_n2+n3+n4} and \ref{lem:Vdecompose}. Thus, with high probability as $n \bar{N} \to \infty$, we have
$$\psi = T {\bf 1}_{V>0} / \sqrt{V} \gtrsim \rho^2 / \sqrt{ \E V } \gtrsim \frac{ n \bar{N} \| \mu \|^2 \omega_n^2 }{\sqrt{\sum_k \| \mu_k \|^2}} = \mathrm{SNR}_n.$$
Choosing $\alpha_n$ as specified yield the third claim. The proof is complete since all three claims are  established.

\qed

\subsection{Proof of Theorem~\ref{thm:LB}} 

%
%

Without loss of generality, 
we assume $p$ is even and write $m= p/2$. Let $\mu\in\mathbb{R}^m$ be a nonnegative vector with $\|\mu\|_1=1/2$
. Let $\tilde{\mu}=(\mu', \mu')'\in\mathbb{R}^p$. We consider the null hypothesis:
\beq \label{LBconstruct-null}
H_0: \qquad \Omega_i = \tilde{\mu}, \qquad 1\leq i\leq n. 
\eeq
We pair it with a random alternative hypothesis.  Let $b_1,b_2, \ldots,b_m$ be a collection of i.i.d.\ Rademacher variables. Let $z_1, z_2, \ldots,z_K$ denote an independent collection of i.i.d.\ Rademacher random variables conditioned on the event
$
|\sum_k z_k| \leq 100 \sqrt{K}. 
$
For a properly small sequence $\omega_n>0$ of positive numbers, let 
\beq \label{LBconstruct-alt}
H_1: \qquad \Omega_{ij}  = 
\begin{cases} 
	\mu_j\bigl(1 + \omega_n (n_k \bar{N}_k)^{-1} \big( \frac{1}{K} \sum_{k \in K} n_k \bar{N}_k \big) z_k b_j\bigr), & \mbox{if } 1\leq j\leq m, i \in S_k \cr
	\tilde{\mu}_{j} \bigl(1 - \omega_n (n_k \bar{N}_k)^{-1} \big( \frac{1}{K} \sum_{k \in K} n_k \bar{N}_k \big) z_k b_{j-m}\bigr), &\mbox{if }m+1\leq j\leq 2m, i \in S_k
\end{cases}
\eeq
In this section we slightly abuse notation, using $\omega_n$ to refer to the (deterministic) sequence above and reserving $\omega(\Omega)$ for the random quantity 
\beq
\label{eqn:omega_def}
\omega(\Omega) = \sqrt{  \frac{1}{n\bar{N}\|\mu\|^2} \sum_{k=1}^Kn_k\bar{N}_k \|\mu_{k}-\mu\|^2 }. 
\eeq
As long as 
\[\omega_n \leq \frac{ \min_k n_k \bar{N}_k}{  \frac{1}{K} \sum_{k \in [K] } n_k \bar{N}_k   } = \frac{\min_k n_k \bar{N}_k}{ n \bar{N}/K} ,\] 
then $\Omega_{ij} \geq 0$ for all $i \in [n], j \in [p]$. Furthermore, for each $1\leq i\leq n$, we have $\|\Omega_i\|_1=2\|\mu\|_1=1$. We suppose there exists a constant $c \in (0,1)$ such that 
\begin{align}
	c K^{-1} n \bar{N}  \leq n_k \bar{N}_k &\leq c^{-1} K^{-1} n \bar{N}
	\quad \text{for all } k \in [K] \label{eqn:lbd_assn1}
\end{align}
%
%
With \eqref{eqn:lbd_assn1} in hand, we may assume without loss of generality that 
\begin{align*}
	\num \label{eqn:omega_assn}
	\omega_n \leq c/2
\end{align*}
This assumption implies that \eqref{LBconstruct-alt} is well-defined and moreover $\Omega_{ij} \asymp \mu_j $. 

Next we characterize the random quantity $\omega(\Omega)$ in terms of $\omega_n$. 

\begin{lemma} \label{lem:LB-signal}
	Let $\omega^2(\Omega)$ be as in \eqref{eqn:omega_def}.  When $\Omega$ follows Model \eqref{LBconstruct-alt}, there exists a constant $c_1 \in (0,1)$ such that $c_1 \omega_n^2\leq \omega^2(\Omega)\leq c_1^{-1}\omega_n^2$ with probability $1$.
\end{lemma}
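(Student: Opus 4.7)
The plan is to obtain a closed-form expression for $\omega^2(\Omega)$ by exploiting the algebraic structure built into \eqref{LBconstruct-alt}. Set $\bar{M} := n\bar{N}/K$ and abbreviate $c_{kn} := \omega_n\,\bar{M}/(n_k\bar{N}_k)$, so that the prefactor appearing in \eqref{LBconstruct-alt} is exactly $c_{kn} z_k b_j$ (up to a sign depending on whether $j\le m$). The crucial identity behind the construction is that $n_k\bar{N}_k\, c_{kn} = \omega_n \bar{M}$ is constant in $k$. This will collapse any weighted sum of the form $\sum_k n_k\bar{N}_k c_{kn} z_k$ to $\omega_n\bar{M} KW$, where $W := K^{-1}\sum_{k'} z_{k'}$; the conditioning event forces $|W|\le 100/\sqrt K$.

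First I compute the two means. Since $\Omega_i$ depends on $i\in S_k$ only through $k$, $\mu_k = \Omega_i$ for $i\in S_k$. Averaging with weights $n_k\bar{N}_k$ and using the identity gives $\bar\mu_j = \mu_j(1+\omega_n b_j W)$ for $j\le m$ and $\bar\mu_j = \mu_{j-m}(1-\omega_n b_{j-m} W)$ for $j>m$. Squaring and summing, the $b_j$--linear contributions from the two halves cancel because $\sum_{j\le m}\mu_j^2 b_j - \sum_{j\le m}\mu_j^2 b_j = 0$, leaving $\|\bar\mu\|^2 = \|\tilde\mu\|^2(1+\omega_n^2 W^2)$. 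Similarly, $\mu_{k,j}-\bar\mu_j = \pm\,\tilde\mu_j b_{[j]}\bigl(c_{kn} z_k - \omega_n W\bigr)$, where $b_{[j]}=b_j$ or $b_{j-m}$; using $b_{[j]}^2=1$ yields $\|\mu_k-\bar\mu\|^2 = \|\tilde\mu\|^2(c_{kn} z_k-\omega_n W)^2$.

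Next I insert these into the definition of $\omega^2(\Omega)$. Expanding the square (and using $z_k^2=1$), the cross term is $-2\omega_n W\sum_k n_k\bar{N}_k c_{kn}z_k = -2\omega_n^2 \bar{M} K W^2$ by the key identity, the quadratic $c_{kn}^2$ term is $\omega_n^2\bar{M}^2\sum_k(n_k\bar{N}_k)^{-1}$, and the constant term is $\omega_n^2 W^2 n\bar{N}=\omega_n^2 W^2\bar{M}K$. Combining and dividing by $n\bar{N}\|\bar\mu\|^2$ produces the compact formula
\begin{equation*}
\omega^2(\Omega) \;=\; \frac{\omega_n^2}{1+\omega_n^2 W^2}\left[\frac{\bar{M}}{K}\sum_{k=1}^K \frac{1}{n_k\bar{N}_k} \;-\; W^2\right].
\end{equation*}

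The remaining task is to bound the bracket from above and below by positive constants. The AM--HM inequality gives $\frac{\bar{M}}{K}\sum_k(n_k\bar{N}_k)^{-1}\ge 1$, and assumption \eqref{eqn:lbd_assn1} gives the matching upper bound $\le c^{-1}$. Combined with the conditioning bound $W^2\le 10^4/K$ and the restriction $\omega_n\le c/2$ in \eqref{eqn:omega_assn}, the bracket is pinched between $1-10^4/K$ and $c^{-1}$, and the $1+\omega_n^2 W^2$ factor stays in $[1,1+O(1)]$, which delivers $\omega^2(\Omega)\asymp\omega_n^2$. The main obstacle is precisely the lower bound on the bracket: the AM--HM lower bound $1$ is only strictly positive after subtracting $W^2$, so the proof hinges on the conditioning being tight enough (i.e., $K$ sufficiently large so that $10^4/K$ is a fraction strictly below $1$), which is the regime relevant to the lower-bound construction. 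This is why the construction specifically conditions on $|\sum_k z_k|\le 100\sqrt K$ rather than working with an unconditioned Rademacher vector.
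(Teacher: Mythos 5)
Your proof is correct and, at the conceptual level, follows the same route as the paper's: compute the group means $\mu_k$ and the overall mean $\bar\mu$ under \eqref{LBconstruct-alt}, exploit the fact that $\Omega_i$ is constant within each group, and control the resulting expression using the balance assumption \eqref{eqn:lbd_assn1} together with the conditioning on $z$. Two differences are worth recording. First, you derive the exact identity
\[
\omega^2(\Omega)\;=\;\frac{\omega_n^2}{1+\omega_n^2 W^2}\biggl[\frac{\bar M}{K}\sum_{k=1}^K\frac{1}{n_k\bar N_k}\;-\;W^2\biggr],\qquad W=\frac1K\sum_{k=1}^K z_k,\ \ \bar M=\frac{n\bar N}{K},
\]
whereas the paper argues only up to constants after a cosmetic re-parameterization that collapses each group into a single meta-document of size $n_k\bar N_k$. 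Your exact expansion is cleaner and in fact sidesteps a small algebra slip in the paper's intermediate step, which writes $\widetilde\Omega_{ij}-\eta_j$ as if it were $\mu_j\omega_n b_j\,\widetilde N_i^{-1}\widetilde N\,(z_i-\bar z)$ instead of the correct $\mu_j\omega_n b_j\,(\widetilde N_i^{-1}\widetilde N z_i-\bar z)$; the error is harmless for the order-of-magnitude conclusion because $\widetilde N_i\asymp\widetilde N$ under \eqref{eqn:lbd_assn1}, but your exact bookkeeping avoids it entirely and makes the role of AM--HM and of \eqref{eqn:lbd_assn1} transparent. Second, you are right to flag that the lower bound on the bracket, $1-10^4/K$, is a positive constant only when $K$ is sufficiently large; for small $K$ the event $|\sum_k z_k|\le100\sqrt K$ is vacuous, and taking all $z_k$ equal (with balanced $n_k\bar N_k$) yields $\omega^2(\Omega)=0$ with positive conditional probability. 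This is not a gap you introduce --- it is implicitly present in the paper's step ``$|z_i-\bar z|\asymp1$'', which fails under exactly the same circumstances --- but the paper leaves the requirement on $K$ unstated, whereas your closed form brings it to the surface and makes clear precisely what the conditioning must achieve.
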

The proof of Lemma \ref{lem:LB-signal} is given in Section \ref{sec:LB-signal-proof}. By Lemma \ref{lem:LB-signal},  under the model \eqref{LBconstruct-alt} it holds with probability $1$ that
\begin{align*}
	\num \label{eqn:simpler_SNR}
	\frac{ n \bar{N} \| \mu \|^2 \omega^2(\Omega) }{\sqrt{ \sum_{k=1}^K \| \mu_k \|^2 }}
	\asymp K^{-1/2} n \bar{N} \| \mu \| \omega_n^2.
\end{align*}
Above we use that $\Omega_{ij} \asymp \mu_j$ , since we assume \eqref{eqn:omega_assn} 

We also require Proposition \ref{prop:LB} below, whose proof is given in Section \ref{sec:lbd_construction}. 

\begin{proposition} \label{prop:LB}
	Suppose that \eqref{eqn:lbd_assn1} and \eqref{eqn:omega_assn} hold. Consider the pair of hypotheses in \eqref{LBconstruct-null}-\eqref{LBconstruct-alt} and let $\mathbb{P}_0$, and $\mathbb{P}_1$ be the respective probability measures. If 
	$$\frac{ n \bar{N} \| \mu \|^2 \omega^2(\Omega) }{\sqrt{ \sum_{k=1}^K \| \mu_k \|^2 }} \asymp 	 K^{-1/2} n \bar{N} \| \mu \| \omega_n^2 \to 0,$$ 
	then the chi-square distance between $\mathbb{P}_0$ and $\mathbb{P}_1$ converges to $0$. 
\end{proposition}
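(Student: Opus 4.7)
The plan is to bound the chi-square divergence $\chi^2(\mathbb{P}_1,\mathbb{P}_0)$ directly through the identity $\chi^2+1=\mathbb{E}_0[(d\mathbb{P}_1/d\mathbb{P}_0)^2]$. I will write $L_{b,z}=d\mathbb{P}^{(b,z)}/d\mathbb{P}_0$ for the likelihood ratio at a fixed realization of the Rademacher variables in \eqref{LBconstruct-alt}, and let $(b',z')$ be an independent copy. Since the mixture likelihood is $L=\mathbb{E}_{b,z}[L_{b,z}]$, Fubini combined with conditional independence of the $X_i$ under $H_0$ gives
\begin{equation*}
\chi^2(\mathbb{P}_1,\mathbb{P}_0)+1 \;=\; \mathbb{E}_{(b,z,b',z')}\,\prod_{i=1}^n\Big(\sum_{j=1}^p\frac{\Omega_{ij}^{(b,z)}\Omega_{ij}^{(b',z')}}{\tilde\mu_j}\Big)^{N_i},
\end{equation*}
via the multinomial generating-function identity $\mathbb{E}[\prod_j a_j^{X_{ij}}]=(\sum_j a_j\tilde\mu_j)^{N_i}$. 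The sign-flip structure of \eqref{LBconstruct-alt} across the two halves of the vocabulary, together with $\sum_{j=1}^m\mu_j=1/2$, forces the linear terms in $(b,b')$ to cancel, and for $i\in S_k$ the inner sum collapses to $1+2\tau_k^2 z_k z_k' W$, where $\tau_k:=\omega_n(n_k\bar N_k)^{-1}(n\bar N/K)$ and $W:=\sum_{j=1}^m\mu_j b_j b_j'$. Summing $N_i$ over $i\in S_k$ then produces
\begin{equation*}
\chi^2+1 \;=\; \mathbb{E}_{(b,b',z,z')}\,\prod_{k=1}^K\bigl(1+2\tau_k^2 z_k z_k'\,W\bigr)^{n_k\bar N_k}.
\end{equation*}

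Next I integrate out $z,z'$. The conditioning event $|\sum_k z_k|\le 100\sqrt K$ has probability bounded away from $0$ by the CLT, and by \eqref{eqn:omega_assn} the integrand is nonnegative since $|2\tau_k^2 W|\le\omega_n^2\le c/2<1$, so replacing the conditional by the unconditional $z,z'$-expectation merely costs a universal constant. Unconditionally, $\{z_k z_k'\}_{k=1}^K$ are i.i.d.\ Rademacher variables, and the elementary estimate $\tfrac12[(1+x)^N+(1-x)^N]\le\cosh(N|x|)\le e^{N^2 x^2/2}$ yields
\begin{equation*}
\mathbb{E}_{z,z'}\!\prod_k(1+2\tau_k^2 z_k z_k' W)^{n_k\bar N_k} \;\le\; C\exp\!\Big(2W^2\sum_k n_k^2\bar N_k^2\tau_k^4\Big)\;\le\;C\exp\!\bigl(C'\,\omega_n^4 n^2\bar N^2 W^2/K\bigr),
\end{equation*}
where the last bound plugs in the balance hypothesis \eqref{eqn:lbd_assn1} to evaluate $\sum_k n_k^2\bar N_k^2\tau_k^4\asymp\omega_n^4 n^2\bar N^2/K$.

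Finally I take the expectation in $(b,b')$. Conditional on $b'$, the variable $W=\sum_j\mu_j(b_j b_j')$ is a signed Rademacher sum with $\ell^2$-weight $\|\mu\|$, hence $\|\mu\|$-sub-Gaussian; the standard squared-sub-Gaussian MGF bound $\mathbb{E}[e^{tW^2}]\le(1-2t\|\mu\|^2)^{-1/2}$ (valid for $2t\|\mu\|^2<1$), applied with $t=C'\omega_n^4 n^2\bar N^2/K$, gives
\begin{equation*}
\chi^2(\mathbb{P}_1,\mathbb{P}_0)+1 \;\le\; C\bigl(1-2C'\omega_n^4 n^2\bar N^2\|\mu\|^2/K\bigr)^{-1/2}.
\end{equation*}
By Lemma~\ref{lem:LB-signal} and the equivalence \eqref{eqn:simpler_SNR}, the hypothesis $K^{-1/2}n\bar N\|\mu\|\omega_n^2\to 0$ is exactly $\omega_n^4 n^2\bar N^2\|\mu\|^2/K\to 0$, so the right-hand side tends to $1$, giving $\chi^2\to 0$ as claimed.

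The main technical hurdle will be the conditioning on $|\sum_k z_k|\le 100\sqrt K$: I must verify that passing to unconditional $z,z'$-expectations is legitimate, for which both nonnegativity of the integrand and a constant lower bound on the conditioning probability are essential. A secondary calculational check is that the cancellation $1+2\tau_k^2 z_k z_k' W$ at the level of a single $i\in S_k$ really does produce the clean product structure above --- this is where the two-halves-of-the-dictionary design with matching signs $\pm b_j$ pays off, and it is what makes the resulting SNR in Proposition~\ref{prop:LB} match the upper bound from Theorem~\ref{thm:alt2}.
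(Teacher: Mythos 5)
Your proposal is correct, and after the common setup it takes a genuinely different route from the paper. Both arguments begin identically: the multinomial generating-function identity, the two-halves cancellation of linear terms in $(b,b')$, and aggregation over $i\in S_k$ yield the same product formula
\[
\chi^2+1=\mathbb{E}_{b,z,b',z'}\prod_{k=1}^K\bigl(1+2\tau_k^2\,z_kz_k'\,W\bigr)^{n_k\bar N_k},\qquad W=\sum_{j\le m}\mu_j b_j b_j'.
\]
From there the paper applies $\log(1+t)\le t$ to pass to $\mathbb{E}[\exp(M^*)]$ with $M^*$ a scaled product of two independent Rademacher-type sums $M_1,M_2$; it then de-conditions $z,z'$ via Bayes's rule, applies Hoeffding to $M_1$ and $M_2$ separately to get a tail bound $\mathbb{P}(M^*>t)\lesssim e^{-c_1 t/s_{\tilde n}^2}$ with $s_{\tilde n}^2=\sqrt{\tilde n}\tilde N\omega_n^2\|\mu\|$, and integrates the tail. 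You instead integrate out $z,z'$ first, replacing the conditional by the unconditional expectation (legitimate because the integrand is positive and $\mathbb{P}(A)$ is bounded below by a universal constant), and apply $\frac12[(1+x)^N+(1-x)^N]\le\cosh(N|x|)\le e^{N^2x^2/2}$ coordinatewise; this leaves an expression of the form $\mathbb{E}_{b,b'}\exp(cW^2)$, which you close with the sub-Gaussian-square MGF bound $\mathbb{E}[e^{tW^2}]\le(1-2t\|\mu\|^2)^{-1/2}$. Your route yields a closed-form upper bound on the chi-square divergence, rather than a tail-probability-plus-integration argument, and incidentally sidesteps a small imprecision in the paper's integration-by-parts step (which treats $M^*$ as if it were nonnegative, though the paper's final inequality is still valid by the layer-cake identity). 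One small correction to your positivity check: using $\tau_k\le\omega_n/c$ (from \eqref{eqn:lbd_assn1}) and $|W|\le\|\mu\|_1=1/2$, the bound is $|2\tau_k^2W|\le\omega_n^2/c^2\le 1/4<1$ under \eqref{eqn:omega_assn}, not $|2\tau_k^2W|\le\omega_n^2$; the conclusion is unchanged, but the extra factor $1/c^2$ should appear.
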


Now we prove Theorem \ref{thm:LB}. Let $\delta_n$ denote an arbitrary sequence tending to $0$. Without loss of generality, we may assume that $\delta_n \leq c^*$ for a small absolute constant $c^* \in (0,1)$. Note that $K^{-1/2} n \bar{N} \geq 1 $ since $K \leq n$.  Thus for appropriate choice of sequences of $\mu = \mu_n$ and $\omega_n \leq c/2$ in models \eqref{LBconstruct-null}, \eqref{LBconstruct-alt} and applying  \eqref{eqn:simpler_SNR}, we obtain 
\begin{align*}
	\num \label{eqn:model_choice} 
	2 \delta_n \geq \frac{ n \bar{N} \| \mu \|^2 \omega^2(\Omega) }{\sqrt{ \sum_{k=1}^K \| \mu_k \|^2 }}  
	\geq \delta_n .
\end{align*}

Recall the definitions of $\mathcal{Q}^*_{0n}$ and $\mathcal{Q}^*_{1n}$ in \eqref{LB-param-class2}. Let $\Pi$ denote the distribution on $\xi = \{(N_i, \Omega_i, \ell_i)\} \in \mathcal{Q}^*_{1n}$ induced by \eqref{LBconstruct-alt}. Let $\xi_0$ denote the parameter associated to the simple null hypothesis in \eqref{LBconstruct-null} associated to our choice of $\mu$ and $\omega_n$ satisfying \eqref{eqn:model_choice}. We have by standard manipulations,
\begin{align*}
	\mc{R}( \mathcal{Q}^*_{0n}, \mathcal{Q}^*_{1n} )
	&:= 	\inf_{\Psi\in\{0,1\}} \Bigl\{ \sup_{\xi\in {\cal Q}_{0n}^*(c_0,\epsilon_n)} \mathbb{P}_{\xi}(\Psi=1)+\sup_{\xi\in {\cal Q}_{1n}^*(\delta_n; c_0,\epsilon_n)} \mathbb{P}_{\xi}(\Psi=0)\Bigr\}
	\\&=	\inf_{\Psi\in\{0,1\}} \Bigl\{ \sup_{\xi\in {\cal Q}_{0n}^*(c_0,\epsilon_n), \xi' \in {\cal Q}_{1n}^*(\delta_n; c_0,\epsilon_n)} \big[ \mathbb{P}_{\xi}(\Psi=1)
	+ \mathbb{P}_{\xi}(\Psi=0)\big] 
	\\&\geq 	\inf_{\Psi\in\{0,1\}} \Bigl\{ \sup_{\xi\in {\cal Q}_{0n}^*(c_0,\epsilon_n)}  \E_{\xi' \sim \Pi } \bigg[ \mathbb{P}_{\xi}(\Psi=1)
	+ \mathbb{P}_{\xi'}(\Psi=0)\bigg] \bigg\} 
	\\&\geq 	\inf_{\Psi\in\{0,1\}} \Bigl\{  \E_{\xi' \sim \Pi } \bigg[ \mathbb{P}_{\xi_0}(\Psi=1)
	+ \mathbb{P}_{\xi'}(\Psi=0)\bigg] \bigg\} 
	\\&= 	\inf_{\Psi\in\{0,1\}} \Bigl\{   \mathbb{P}_{0}(\Psi=1)
	+ \mathbb{P}_{1}(\Psi=0) \bigg\} .
\end{align*}
In the last line we recall the definition of $\pr_0$ and $\pr_1$ in \eqref{LBconstruct-null} and \eqref{LBconstruct-alt}, noting that for all events $E$, 
$$ \pr_1(E) = \E_{\xi' \sim \pi} \, \pr_{\xi'}(E).$$ 
Next, by the Neyman--Pearson lemma and the standard inequality $\text{TV}(P, Q) \leq \sqrt{ \chi^2 (P, Q)}$ (see e.g. Chapter 2 of \cite{tsybakov2009introduction}), 
\begin{align*}
	\mc{R}( \mathcal{Q}^*_{0n}, \mathcal{Q}^*_{1n} ) 
	&\geq \inf_{\Psi\in\{0,1\}} \Bigl\{   \mathbb{P}_{0}(\Psi=1)
	+ \mathbb{P}_{1}(\Psi=0) \bigg\}
	\\&=1 - \text{TV}\big( \pr_0, \pr_1 \big) 
	\geq 1-  \sqrt{ \chi^2( \pr_0, \pr_1 ) }.
\end{align*}
By Proposition \ref{prop:LB}, as $\delta_n \to 0$ we have $\chi^2(\pr_0, \pr_1) \to 0$ and thus $\mc{R}( \mathcal{Q}^*_{0n}, \mathcal{Q}^*_{1n} ) \to 1$, as desired.

\qed 

\subsubsection{Proof of Proposition~\ref{lem:LB-signal}}
\label{sec:LB-signal-proof}

Next, we perform a change of parameters that preserves the signal strength and chi-squared distance. The testing problem \eqref{LBconstruct-null} and \eqref{LBconstruct-alt} has parameters $\Omega_{ij}, N_i, \bar{N}_k, n_k, n,$ and $K$. Let $\pr_0$ and $\pr_1$ denote the distributions corresponding to the null and alternative hypotheses, respectively. For each $k \in [K]$, we combine all documents in sample $k$ to obtain new null and alternative distributions $\tilde \pr_0$ and $\tilde \pr_1$ with parameters $\tilde \Omega_{ij}, \tilde N_i, \bar{\tilde{N}}_i, \tilde n_i, \tilde n,$ and $\tilde K$ such that 
\begin{align*}
	\tilde K &= K = \tilde n \\
	\tilde N_i &=  n_i \bar{N}_i &&\quad  \text{ for $i \in [\tilde K]$  } \\
	\bar{\tilde{N}}_i &\equiv \tilde  N_i &&\quad  \text{ for $i \in [\tilde K]$  } \\
	\tilde n_i &= 1 &&\quad \text{ for $i \in [\tilde K]$  }. 
	\num \label{eqn:lbd-model-reparam}
\end{align*}
For notational ease, we define $\tilde N := \bar{\tilde{N}} =  \frac{1}{K} \sum_{k \in [K]} n_k \bar{N}_k $.  Furthermore, we have $\tilde \Omega_i = \mu$ for all $i \in [\tilde n]$ under the null  $\tilde \Omega_i = \mu_i$ for all $i \in [\tilde n]$ under the alternative. Explicitly, in the reparameterized model, we have the null hypothesis
\beq \label{LBconstruct-null-repar}
H_0: \qquad \Omega_i = \tilde{\mu}, \qquad 1\leq i\leq n. 
\eeq
and alternative hypothesis 
\beq \label{LBconstruct-alt_reparam}
H_1: \qquad \Omega_{ij}  = 
\begin{cases} 
	\mu_j\bigl(1 + \omega_n \tilde N_i^{-1} \tilde{N}z_ib_j\bigr), & \mbox{if } 1\leq j\leq m,    \cr
	\tilde{\mu}_{j} \bigl(1 - \omega_n \tilde N_i^{-1} \tilde{N}z_ib_{j-m}\bigr), &\mbox{if }m+1\leq j\leq 2m.  
\end{cases}
\eeq
for all $i \in [\tilde K] = [K] = [\tilde n] $. Observe that the likelihood ratio is preserved: $\frac{d\pr_0}{d\pr_1} = \frac{\tilde d\pr_0}{d\tilde \pr_1}$ and also $\omega(\Omega) = \omega(\widetilde{\Omega})$. For simplicity we work with this reparameterized model in this proof.

If $z_1, \ldots, z_{\tilde n}$ are independent Rademacher random variables then with probability at least $1/2$ it holds that
\begin{align}
	\label{eqn:z_conditioning}
	|\sum_i z_i| \leq 100 \sqrt{\tilde n} 
\end{align}
by Hoeffding's inequality. Recall that our random model is defined in \eqref{LBconstruct-alt} where (i) $z_1, \ldots, z_{\tilde n}$ are independent Rademacher random variables conditioned on the event $|\sum_i z_i| \leq 100 \sqrt{\tilde n}$, and (ii) $b_1, \ldots, b_m$ are independent Rademacher random variables.

Now we study $\omega^2(\widetilde{\Omega})$. For each $1\leq j\leq m$,  we have $\widetilde{\Omega}_{ij}=\mu_j(1+\omega_n \tilde N_i^{-1}\tilde{N}z_ib_j)$. Define $\eta_j = (\tilde n\tilde{N})^{-1}\sum_{i=1}^{\tilde n}\tilde N_i\widetilde{\Omega}_{ij}=\mu_j (1 + \omega_n\bar{z}b_j)$ for $1 \leq j \leq m$ and 
$\eta_j = (\tilde n\tilde{N})^{-1}\sum_{i=1}^{\tilde n}\tilde N_i\widetilde{\Omega}_{ij}=\tilde{\mu}_j (1 - \omega_n\bar{z}b_j)$ for $m < j \leq 2m$.
We have
\begin{align*}
	\sum_{i=1}^{\tilde n}\sum_{j=1}^p\tilde N_i(\widetilde{\Omega}_{ij}-\eta_j)^2 
	& = 2\sum_{i=1}^{\tilde n} \sum_{j=1}^m\tilde N_i\cdot \mu_j^2 \omega_n^2 \frac{\tilde{N}^2}{\tilde N_i^2}(z_i-\bar{z})^2b_j^2\cr
	&= 2 \omega_n^2\tilde{N}^2\|\mu\|^2 \sum_{i=1}^{\tilde n} \tilde N_i^{-1}(z_i-\bar{z})^2. 
\end{align*}
By \eqref{eqn:z_conditioning}, $| \bar{z}| \leq 100 \sqrt{ \tilde n}$. Thus $| z_i - \bar{z}| \asymp 1$. %
Write $\tilde{N}_*=( \tilde n^{-1}\sum_{i=1}^{\tilde n } \tilde N_i^{-1})$. It follows that 
\[
\sum_{i=1}^{\tilde n}\sum_{j=1}^p\tilde N_i(\widetilde{\Omega}_{ij}-\eta_j)^2  \asymp \omega_n^2 \tilde N^2 \| \mu \|^2 \cdot \tilde n \tilde N_*^{-1}. 
\]
Note that $\tilde{N}\geq \tilde{N}_*$. Additionally, by assumption \eqref{eqn:lbd_assn1}, $\tilde N_i \asymp \tilde{N}\leq c^{-1}\tilde{N}_*$. It follows that 
\beq \label{thm-LB-1}
\sum_{i=1}^{\tilde n}\sum_{j=1}^p\tilde N_i(\widetilde{\Omega}_{ij}-\eta_j)^2  \asymp    \tilde n \tilde N \| \mu \|^2 \omega_n^2.
\eeq
Moreover, $\|\eta\|^2=\sum_{j=1}^p \mu_j^2(1+\omega_n\bar{z}b_j)^2$. By our conditioning on the event in \eqref{eqn:z_conditioning}, 
\[|\omega_n\bar{z}b_j| \lesssim \omega_n {\tilde n }^{-1/2}.\] 
Since $\omega_n \leq 1$ and $\sum_j b_j = 0$, we have
\begin{align}
	\label{eqn:eta_norm_vs_mu_norm} 
	\|\eta\|^2 = \| \mu \|^2 +  \sum_{j = 1}^p \mu_j^2 \omega_n^2 \bar{z}^2 
	= \| \mu \|^2 [1 + O( \tilde n^{-1} ) ] \asymp \| \mu \|^2. 
\end{align}
Hence
\beq \label{thm-LB-2}
\omega^2(\widetilde{\Omega}) = 
\omega^2(\Omega)
\asymp  \omega_n^2, \qquad\mbox{where recall}\quad \omega(\widetilde{\Omega})=\frac{\sum_{i=1}^{\tilde n }\sum_{j=1}^p \tilde N_i(\widetilde{\Omega}_{ij}-\eta_j)^2}{{\tilde n }\tilde{N}\|\eta\|^2}. 
\eeq
This finishes the proof.
\qed 

\subsubsection{Proof of Proposition~\ref{prop:LB}}
\label{sec:lbd_construction} 

In this proof, we continue to employ the reparametrization in \eqref{eqn:lbd-model-reparam}. As discussed there, this reparametrization preserves the likelihood ratio and thus the chi-square distance. 

By definition, $\chi^2(\mathbb{P}_0,\mathbb{P}_1) = \int (\frac{d\mathbb{P}_1}{d\mathbb{P}_0})^2d\mathbb{P}_0 -1$. 
It suffices to show that 
\beq \label{LB-proof-0}
\int \Bigl(\frac{d\mathbb{P}_1}{d\mathbb{P}_0}\Bigr)^2d\mathbb{P}_0 = 1+o(1). 
\eeq
From the density of of multinomial distribution, $d\mathbb{P}_0 = \prod_{i,j} \tilde{\mu}_j^{X_{ij}}$, and $d\mathbb{P}_1 =\mathbb{E}_{b,z}[ \prod_{i,j}\widetilde{\Omega}_{ij}^{X_{ij}}]$. 
It follows that 
\[
\frac{d\mathbb{P}_1}{d\mathbb{P}_0}  = \mathbb{E}_{b,z}\biggl[\prod_{i=1}^{\tilde n }\prod_{j=1}^p \Bigl(\frac{\widetilde{\Omega}_{ij}}{\tilde{\mu}_j}\Bigr)^{X_{ij}} \biggr] . 
\]

Let $b\rp{0}=(b\rp{0}_1,\ldots,b\rp{0}_m)'$ and $z\rp{0}=(z\rp{0}_1,\cdots, z\rp{0}_{\tilde n })'$ be independent copies of $b$ and $z$.  We construct $\widetilde{\Omega}\rp{0}_{ij}$ similarly as in \eqref{LBconstruct-alt_reparam}. It is seen that 
\begin{align}  \label{thm-LB-3}
	\int \Bigl(\frac{d\mathbb{P}_1}{d\mathbb{P}_0}\Bigr)^2d\mathbb{P}_0 &= \mathbb{E}_{X} \mathbb{E}_{b,z, b\rp{0},z\rp{0}}\biggl[\prod_{i=1}^{\tilde n }\prod_{j=1}^p \Bigl(\frac{\widetilde{\Omega}_{ij}\widetilde{\Omega}\rp{0}_{ij}}{\tilde{\mu}^2_j}\Bigr)^{X_{ij}} \biggr] \cr
	&=   \mathbb{E}_{b,z, b\rp{0},z\rp{0}}\biggl\{ \prod_{i=1}^{\tilde n } \mathbb{E}_{X_i}\biggl[\prod_{j=1}^p \Bigl(\frac{\widetilde{\Omega}_{ij}\widetilde{\Omega}\rp{0}_{ij}}{\tilde{\mu}^2_j}\Bigr)^{X_{ij}}\biggr] \biggr\}  \cr
	&= \mathbb{E}_{b,z, b\rp{0},z\rp{0}}\biggl\{ \prod_{i=1}^{\tilde n } \biggl(\sum_{j=1}^p \tilde{\mu}_j\cdot\frac{\widetilde{\Omega}_{ij}\widetilde{\Omega}\rp{0}_{ij}}{\tilde{\mu}_j^2}\Bigr)^{\tilde N_i}\biggr]\biggr\}\cr
	&= \mathbb{E}[\exp(M)], \quad\mbox{with}\quad M:=\sum_{i=1}^{\tilde n } \tilde N_i \log\Bigl(\sum_{j=1}^p \tilde{\mu}_j^{-1}\widetilde{\Omega}_{ij}\widetilde{\Omega}\rp{0}_{ij}\Bigr). 
\end{align}
Here, the third line follows from the moment generating function of a multinomial distribution. We plug in the expression of $\widetilde{\Omega}_{ij}$ in \eqref{LBconstruct-alt}. By direct calculations, 
\begin{align*}
	\sum_{j=1}^p \tilde{\mu}_j^{-1}\widetilde{\Omega}_{ij}\widetilde{\Omega}\rp{0}_{ij} &= \sum_{j=1}^m\mu_j\bigl(1+\omega_n\tilde N_i^{-1}\tilde{N}z_ib_j\bigr)\bigl(1+\omega_n\tilde N_i^{-1}\tilde{N}z\rp{0}_ib\rp{0}_j\bigr) \cr
	&\qquad +\sum_{j=1}^m\mu_j\bigl(1-\omega_n\tilde N_i^{-1}\tilde{N}z_ib_j\bigr)\bigl(1-\omega_n\tilde N_i^{-1}\tilde{N}z\rp{0}_ib\rp{0}_j\bigr) \cr
	&=2\|\mu\|_1 + 2\sum_{j=1}^m \mu_j \omega_n^2 \tilde N_i^{-2}\tilde{N}^2z_iz\rp{0}_ib_jb\rp{0}_j \cr
	&= 1 +  2\sum_{j=1}^m\mu_j \omega_n^2 \tilde N_i^{-2}\tilde{N}^2z_iz\rp{0}_ib_jb\rp{0}_j. 
\end{align*}
We plug it into $M$ and notice that $\log(1+t)\leq t$ is always true. It follows that
\beq \label{thm-LB-4}
M\leq \sum_{i=1}^{\tilde n } \tilde N_i\cdot 2\sum_{j=1}^m \mu_j \omega_n^2 \frac{\tilde{N}^2}{\tilde N_i^2}z_iz\rp{0}_ib_jb\rp{0}_j = 2\tilde{N}\omega_n^2\Bigl( \sum_{i=1}^{\tilde n } \frac{\tilde{N}}{\tilde N_i}z_iz\rp{0}_i\Bigr)\Bigl(\sum_{j=1}^m \mu_j b_j b\rp{0}_j\Bigr)=:M^*. 
\eeq

We combine \eqref{thm-LB-4} with \eqref{thm-LB-3}. It is seen that to show \eqref{LB-proof-0}, it suffices to show that
\beq  \label{thm-LB-5}
\mathbb{E}[\exp(M^*)]=1+o(1). 
\eeq
We now show \eqref{thm-LB-5}. Write  $M_1 = \sum_{i=1}^{\tilde n } (\tilde N_i^{-1}\tilde{N})z_iz\rp{0}_i$ and $M_2=\sum_{j=1}^p \mu_j b_jb\rp{0}_j$.

Recall that we condition on the event \eqref{eqn:z_conditioning}. By Hoeffding's inequality, Bayes's rule, and \eqref{eqn:z_conditioning},
\begin{align*}
	\mathbb{P}(|M_1|>t) 
	&= \pr\bigg( | \sum_i \frac{\tilde N}{ \tilde N_i} z_i z_i\rp{0} \geq t \, \,\bigg| \, \,
	| \sum_i z_i | \leq 100 \sqrt{\tilde n}, 	| \sum_i z_i\rp{0} | \leq 100 \sqrt{\tilde n}
	\bigg)
	\\ &= \frac{ \pr\big( | \sum_i \frac{\tilde N}{ \tilde N_i} z_i z_i\rp{0} | \geq t  \big)  }{  \pr( | \sum_i z_i | \leq 100 \sqrt{\tilde n} ) \, \,
		\pr( | \sum_i z_i\rp{0} | \leq 100 \sqrt{\tilde n} ) }
	\\ &\leq 4 \cdot 2\exp\Bigl(-\frac{t^2}{8\sum_{i=1}^{\tilde n } (\tilde N_i^{-1}\tilde{N})^2}\Bigr)
	\\&= 8 \exp\Bigl(-\frac{t^2}{8{\tilde n }}\Bigr). 
\end{align*}
for all $t >0$. In the last line, we have used the assumption of $\tilde N_i \asymp \tilde N$.
By Hoeffding's inequality again, we also have 
\begin{align*}
	\mathbb{P}(|M_2|>t)\leq 2\exp\Bigl(-\frac{t^2}{8\sum_{j=1}^p\mu_j^2}\Bigr)=2\exp\Bigl(-\frac{t^2}{8\|\mu\|^2}\Bigr)
\end{align*}
for all $t >0$.
Write $s^2_{\tilde n }=\sqrt{{\tilde n }}\tilde{N}\omega_n^2\|\mu\|$. 
It follows that
\begin{align} \label{thm-LB-6}
	\mathbb{P}(M^*>t) &= \mathbb{P}\bigl(2\tilde{N}\omega_n^2M_1M_2>t\bigr)= \mathbb{P}\bigl(M_1M_2>t\cdot \sqrt{{\tilde n }}\|\mu\|s_{\tilde n }^{-2}\bigr)\cr
	&\leq \mathbb{P}\bigl(M_1>\sqrt{t}\cdot \sqrt{{\tilde n }}s_{\tilde n }^{-1}\bigr) +  \mathbb{P}\bigl(M_2>\sqrt{t}\cdot \|\mu\|s_{\tilde n }^{-1}\bigr)\cr
	&\leq 8\exp\Bigl(-\frac{t}{8s^{2}_{\tilde n }}\Bigr) + 2\exp\Bigl(-\frac{t}{8s_{\tilde n }^2}\Bigr)\cr
	&\leq 4\exp(-c_1t /s_{\tilde n }^2), 
\end{align}
for some constant $c_1>0$. Here, in the last line, we have used the assumption of $\tilde N_i \asymp \tilde N$.

Let $f(x)$ and $F(x)$ be the density and distribution function of $M^*$. Write $\bar{F}(x)=1-F(x)$.  
Using integration by part, we have $\mathbb{E}[\exp(M^*)]=\int_0^\infty \exp(x)f(x)dx =-\exp(x)\bar{F}(x)|_0^{\infty}+\int_0^\infty \exp(x)\bar{F}(x)dx= 1+\int_0^\infty \exp(x)\bar{F}(x)dx$, provided that the integral exists. As a result, when $s_{\tilde n }=o(1)$, 
\begin{align*}
	\mathbb{E}[\exp(M^*)] - 1&=  \int_0^\infty \exp(t)\cdot\mathbb{P}(M^*>t)\cr
	&\leq 4\int_0^\infty \exp\bigl(- [c_1s_{\tilde n }^{-2}-1]t\bigr)dt\cr
	&\leq 4(c_1s_{\tilde n }^{-1}-1)^{-1} = 4s_{\tilde n }/(c_1-s_{\tilde n }). 
\end{align*}
It implies $\mathbb{E}[\exp(M^*)]=1+o(1)$, which is exactly \eqref{thm-LB-5}. This completes the proof.
because
\[
s^2_{\tilde n }=\sqrt{{\tilde n }}\tilde{N}\omega_n^2\|\mu\|
= \frac{ n \bar{N} \| \mu \|	\omega_n^2 }{ \sqrt{K}  }
\asymp \frac{ n \bar{N} \| \mu \|	\omega_n^2 }{ \sqrt{ \sum_{k \in K} \| \mu_k \|^2 }. }
\]

\qed

\subsection{Proof of Theorem~\ref{thm:K=2}}

%
%
%
%

First we show that
\begin{align}
	&T/\sqrt{\var(T)} \Rightarrow N(0,1) \label{eqn:ideal_T_AN_K=2}, \, \, \text{ and}  \\
	&V/\var(T) \to 1. \label{eqn:var_consistency_K=2}
\end{align}
If \eqref{eqn:ideal_T_AN_K=2} and \eqref{eqn:var_consistency_K=2} hold, then by mimicking the proof of Theorem \ref{thm:null2}, we see that $\psi$ is asymptotically normal and the level-$\kappa$ DELVE test has asymptotic level $\kappa$. We omit the details as they are quite similar.

Recall the martingale decomposition of $T$ described in Section \ref{sec:asymptotic_normality_appendix}. Observe that, under our assumptions, Lemmas \ref{lem:condl_var_bias}--\ref{lem:S_bounds_on_bounds} are valid. Moreover, by Lemmas \ref{lem:Theta_n2+n3+n4-K=2} and  \ref{lem:var_lbd}
\begin{align*}
	\num \label{eqn:varT_lbd_K=2}
	\var(T) \gtrsim \Theta_{n2} + \Theta_{n3} + \Theta_{n4} 
	\gtrsim   \bigg \|  \frac{m \bar{M}}{ n\bar{N}+ m \bar{M}} \eta +  
	\frac{n \bar{N}}{ n\bar{N}+ m \bar{M}} \theta  \bigg \|^2.
\end{align*}
Combining \eqref{eqn:varT_lbd_K=2} with Lemmas \ref{lem:condl_var_bias}--\ref{lem:S_bounds_on_bounds} and mimicking the argument in Section \ref{sec:thm-null-proof} implies that $T/\sqrt{V} \Rightarrow N(0,1)$. Thus \eqref{eqn:ideal_T_AN_K=2} is established.

Moreover, \eqref{eqn:var_consistency_K=2} is a direct consequence of our assumptions and Proposition \ref{prop:var_estimation_null_K=2}. The claims of  Theorem \ref{thm:K=2} regarding the null hypothesis follow.

To prove the claims about the alternative hypothesis, it suffices to show
\begin{align}
	&T/\sqrt{\var(T)} \to \infty \label{eqn:ideal_T_blowup_K=2}, \\
	&V > 0 \, \, \text{ with high probability} \label{eqn:V_vs_0_K=2}, \, \, \text{and} \\ 
	&V = O_{\mathbb{P}}(\var(T)) \label{eqn:V_vs_varT_K=2}.
\end{align}
Once these claims are established, we prove that $\psi = T {\bf 1}_{V>0}/\sqrt{V} \to \infty$ under the alternative by mimicking the last step of the proof of Theorem \ref{thm:alt2} in Section \ref{sec:detection_boundary_appendix}. We omit the details as they are very similar. 

Note that \eqref{eqn:V_vs_varT_K=2} follows directly from our assumptions and Proposition \ref{prop:var_estimation_alt_K=2}. 

As in the proof of Theorem \ref{thm:alt2} in Section \ref{sec:detection_boundary_appendix}, to establish \eqref{eqn:ideal_T_blowup_K=2}, it suffices to prove that
\begin{align}
	\label{eqn:ideal_T_blowup_K=2_equiv}
	\E T = \rho^2 \gg \var(T). 
\end{align}
Our main assumption under the alternative when $K = 2$ is 
\beq \label{SNR(K=2)-supp}
\frac{ \| \eta - \theta \|^2 }{  \big( \frac{1}{n\bar{N}} + \frac{1}{m \bar{M}}  \big) 
	\max\{\| \eta \|,\, \| \theta \|\} } \to \infty. 
\eeq
As shown in Section \ref{sec:thm_alt_proof}, we have that
\begin{align}
	\var(T) \les  \Theta_n = \Theta_{n1} +  \sum_{t = 2}^4\Theta_{nt}.
\end{align}
Applying \eqref{eqn:Theta_1} to the first term and Lemma \ref{lem:Theta_n2+n3+n4-K=2} to the remaining terms, we have
\begin{align*}
	\var(T) &\les \max \{  \| \eta \|_\infty, \, \| \theta \|_\infty \} \cdot \rho^2 
	+ \bigg \|  \frac{m \bar{M}}{ n\bar{N}+ m \bar{M}} \eta +  
	\frac{n \bar{N}}{ n\bar{N}+ m \bar{M}} \theta  \bigg \|^2
	\\&\les \max \{  \| \eta \|, \, \| \theta \| \} \cdot \rho^2 
	+ \max \{  \| \eta \|^2, \, \| \theta \|^2 \}
	\num \label{eqn:varT_mainbd_K=2} 
\end{align*}
Next, note that
\begin{align*}
	\rho^2
	&= n \bar{N} \| \eta - \mu \|^2
	+  m\bar{M} \| \theta - \mu \|^2
	\\&= n\bar{N}  \bigg \| \eta - \big( \frac{n \bar{N}}{n\bar{N}+ m\bar{M}} \eta + \frac{m\bar{M}}{n\bar{N}+ m\bar{M}} \theta \big)  \bigg \|^2 
	\\&\quad + m\bar{M}
	\bigg \| \theta - \big( \frac{n \bar{N}}{n\bar{N}+ m\bar{M}} \eta + \frac{m\bar{M}}{n\bar{N}+ m\bar{M}} \theta \big)  \bigg \|^2 
	\\&= n\bar{N} \cdot \big(\frac{m\bar{M}}{n\bar{N}+ m\bar{M}}\big)^2
	\| \eta - \theta  \|^2 
	+ m\bar{M} \cdot \big(\frac{n\bar{N}}{n\bar{N}+ m\bar{M}}\big)^2
	\| \eta - \theta  \|^2 
	\\&= \frac{ n \bar{N} m \bar{M}}{ ( n\bar{N}+ m\bar{M})} \| \eta - \theta \|^2
	= \big( \frac{1}{n\bar{N}} + \frac{1}{m\bar{M}} \big)^{-1} \| \eta - \theta \|^2. 
	\num \label{eqn:rho_K=2} 
\end{align*}
By \eqref{SNR(K=2)-supp}, \eqref{eqn:varT_mainbd_K=2}, and \eqref{eqn:rho_K=2}, we have
\begin{align*}
	\frac{(\E T)^2}{\var(T)} 
	&\gtrsim  \frac{\rho^4}{ \max \{  \| \eta \|, \, \| \theta \| \} \cdot \rho^2 
		+ \max \{  \| \eta \|^2, \, \| \theta \|^2 \} } 
	\\&\gtrsim \frac{ \| \eta - \theta \|^2 }{ ( \frac{1}{n\bar{N}} + \frac{1}{m\bar{M}} ) \max \{  \| \eta \|, \, \| \theta \| \} }
	+ \big(\frac{ \| \eta - \theta \|^2 }{ ( \frac{1}{n\bar{N}} + \frac{1}{m\bar{M}} ) \max \{  \| \eta \|, \, \| \theta \| \} }\big)^2 \to \infty, 
\end{align*}
which proves \eqref{eqn:ideal_T_blowup_K=2_equiv} and thus \eqref{eqn:ideal_T_blowup_K=2}. 

To prove \eqref{eqn:V_vs_0_K=2}, we mimick the Markov argument in \eqref{eqn:V_vs_0_whp} and use that under our assumptions, $\var(V)/(\E V)^2 = o(1)$ . We omit the details as they are similar. Since we have established \eqref{eqn:ideal_T_blowup_K=2}, \eqref{eqn:V_vs_0_K=2}, and \eqref{eqn:V_vs_varT_K=2}, the proof is complete.
\qed

\subsection{Proof of Theorem~\ref{thm:K=n}}

Note that $T/\sqrt{\var(T)} \Rightarrow N(0,1)$ by our assumptions and Proposition \ref{prop:null}. In particular, using that $n \to \infty$ and the monotonicity of the $\ell_p$ norms we have
\[
\frac{ \| \mu \|_4^4}{K \| \mu \|^4 }
= \frac{ \| \mu \|_4^4}{n \| \mu \|^4 }
\leq \frac{1}{n} \cdot \frac{ \|\mu\|^4 }{ \| \mu \|^4 }
= \frac{1}{n} \to 0.
\]
Moreover, $V^*/\var(T) \to 1$ in probability by Proposition \ref{prop:var_estimation_null_K=n}. It follows by Slutsky's theorem that $\psi^* = T/\sqrt{V^*} \Rightarrow N(0,1)$ and that the level-$\kappa$ DELVE test has an asymptotic level $\kappa$. 

To conclude the proof, it suffices to show that $\psi^* \to \infty$ under the alternative. As in the proof of Theorem~\ref{thm:alt2}, this follows immediately if we can show
\begin{align}
	&T/\sqrt{\var(T)} \to \infty \label{eqn:ideal_T_blowup_K=n}, \\
	&V^* > 0 \, \, \text{ with high probability} \label{eqn:V_vs_0_K=n}, \, \, \text{and} \\ 
	&V^* = O_{\mathbb{P}}(\var(T)) \label{eqn:V_vs_varT_K=n}.
\end{align}
Note that \eqref{eqn:ideal_T_blowup_K=n} follows from \eqref{eqn:ideal_T_blowup}, and \eqref{eqn:V_vs_varT_K=n} is the content of Proposition \ref{prop:var_estimation_alt_K=n}. Since our assumptions imply that $\E V^* \gg \sqrt{\var(V^*)}$, \eqref{eqn:V_vs_0_K=n} follows by a Markov argument as in \eqref{eqn:V_vs_0_whp}. 

\qed

\subsection{Proof of Theorem~\ref{thm:boundary}}

We apply Theorem~\ref{thm:null2} to get the asymptotic null distribution. Since $N_i=N$ and $\mu=p^{-1}{\bf 1}_p$, it is easy to see that Condition \ref{cond:K=n} is satisfied under our assumption of $p=o(N^2n)$. Therefore, by Theorem~\ref{thm:null2}, $\psi^* \to N(0,1)$ under $H_0$. 

We now show the asymptotic alternative distribution. By direct calculations and using $\sum_{i=1}^n \delta_{ij}=0$ and $\sum_{j=1}^p\delta_{ij}=0$, we have 
\[ 
\sum_{i,j} N_i (\Omega_{ij}-\mu_j)^2 = \frac{nN\nu_n^2}{p}, \quad\;\; \sum_{i,j} N_i (\Omega_{ij}-\mu_j)^2\Omega_{ij}=\frac{nN\nu_n^2}{p^2}, \quad\;\; \sum_{i} \|\Omega_i\|^2 = \frac{n(1+\nu_n^2)}{p}. 
\]
We apply Lemmas~\ref{lem:decompose}-\ref{lem:var4} and plug in the above expressions. Let $S = \bo' U_2$. It follows that  
\begin{align}
	T = \frac{nN\nu_n^2}{p} + S+O_{\mathbb{P}}\biggl( \frac{\sqrt{nN}\nu_n}{p} + \frac{1}{\sqrt{p}}\biggr), \qquad\mbox{where}\;\; \mathrm{Var}(S)=2p^{-1}n[1+o(1)]. 
	\label{eqn:T_behavior_contig}
\end{align}
First, we plug in $\nu_n^2= a\sqrt{2p}/(N\sqrt{n})$. It gives $p^{-1}nN\nu_n^2=\sqrt{2n/p}$. Second, $p^{-1}\sqrt{nN}\nu_n\asymp (np)^{-1/4} \sqrt{n/p}=o(\sqrt{n/p})$. It follows that
\beq \label{thm-boundary-1}
T = a\sqrt{2n/p} + S+o_{\mathbb{P}}\bigl(\sqrt{n/p}\bigr), \qquad\mbox{where}\;\; \mathrm{Var}(S) = (2n/p)[1+o(1)]. 
\eeq

Recall the martingale decomposition $S = \sum_{(\ell, s)} E_{\ell, s}$ where $E_{\ell,s}$ is defined in \eqref{eqn:Eells_def}. Observe that Lemmas \ref{lem:S_condvar} and \ref{lem:S_fourth_mom} hold (even under the alternative). Define $\widetilde{E}_{\ell,s} = E_{\ell,s}/\sqrt{\var(S)}$. Using $\var(S) \gtrsim n \sum_i \| \Omega_i \|^2$ and these lemmas, it is straightforward to verify that the following conditions hold:
\begin{align}
	&\sum_{(\ell, s)} \var\big( \widetilde{E}_{\ell, s} \big| \mc{F}_{\prec (\ell, s)} \big) \stackrel{\mathbb{P}}{\to} 1 \label{eqn:mgclt1_contig} \\
	&	\sum_{(\ell, s)} \E \widetilde{E}_{\ell,s}^4 \stackrel{\mathbb{P}}{\to} 0. \label{eqn:mgclt2_contig}
\end{align}
As in Section \ref{sec:thm-null-proof}, the martingale CLT applies and we have 
\begin{align*}
	S/\sqrt{\var(S)} \Rightarrow N(0,1). 
\end{align*} 
By \ref{eqn:T_behavior_contig}, 
\beq \label{thm-boundary-2}
T/\sqrt{\mathrm{Var}(S)}\;\; \to\;\; N(a, 1). 
\eeq
By Lemma \ref{lem:var2} and \eqref{eqn:varT_asymp_K=n}, 
$$\var(S) = [1+o(1)] \Theta_{n2} = [1+o(1)] \var(T)$$ 
By Proposition \ref{prop:var_estimation_alt_K=n}, we have that $V^*/\var(T) \to 1$ in probability. As a result, 
\beq \label{thm-boundary-3}
V^*/\mathrm{Var}(S)\;\; \to\;\; 1, \qquad\mbox{in probability}. 
\eeq
We combine \eqref{thm-boundary-2} and \eqref{thm-boundary-3} to conclude that $\psi = T/\sqrt{V^*}\to N(a, 1)$. 

\qed 

%


\section{Proofs of the corollaries for text analysis}

\subsection{Proof of Corollary~\ref{cor:TM}}

Note that Corollary \ref{cor:TM} follows immediately from the slightly more general result stated below.

\begin{cor} 
	\label{cor:TM-general}
	Consider Model~\eqref{Mod1-data} and suppose that $\Omega=\mu{\bf 1}_n'$ under the null hypothesis and that $\Omega$ satisfies \eqref{TopicModel} under the alternative hypothesis. Define $\xi\in\mathbb{R}^n$ by $\xi_i=\bar{N}^{-1}N_i$ and let $\widetilde{\Omega} = \Omega[\diag(\xi)]^{1/2}$. Let $\lambda_1, \ldots, \lambda_M>0$ and $\widetilde{\lambda}_1, \ldots, \widetilde{\lambda}_M>0$ denote the singular values of $\Omega$ and $\widetilde{\Omega}$, respectively, arranged in decreasing order.We further assume that under the alternative hypothesis, 
	\beq 
	\label{SNR-TM-supp}
	\frac{	\bar{N}\cdot \sum_{k=2}^M \widetilde{\lambda}_k^2}{\sqrt{\sum_{k=1}^M\lambda_k^2}} \to\infty.
	\eeq
	For any fixed $\kappa\in (0,1)$, the level-$\kappa$ DELVE test has an asymptotic level $\kappa$ and an asymptotic power $1$.  Moreover if $N_i \asymp \bar{N}$ for all $i$, we may replace $\sum_{k=2}^M \widetilde{\lambda}_k^2$ with $\sum_{k=2}^M\lambda_k^2$ in the numerator of \eqref{SNR-TM-supp}. 
\end{cor}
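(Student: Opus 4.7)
The plan is to recognize Corollary~\ref{cor:TM-general} as the special case $K=n$ of Theorem~\ref{thm:K=n}: under the global null $M=1$, the model $\Omega=\mu\mathbf{1}_n'$ coincides with \eqref{null(K=n)}, while under the alternative the columns of $\Omega$ need not be identical. Granted the regularity conditions of Theorem~\ref{thm:K=n} (which follow from $\bar N/\min_i N_i = O(1)$, $\|\Omega_i\|_\infty \leq 1-c_0$ inherited from the convex-combination structure \eqref{TopicModel} under a mild nonconcentration assumption on the topic vectors $A_k$, and the background scaling $n\bar N^2/p\to\infty$), it remains to show that the topic-model detection boundary \eqref{SNR-TM-supp} implies the DELVE detection boundary \eqref{SNR(K=n)}.

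The core step is a Frobenius-norm identity combined with a variational argument. First, by the within--between decomposition,
\begin{equation*}
n\bar N\|\mu\|^2\omega_n^2 \;=\; \sum_{i=1}^n N_i\|\Omega_i-\mu\|^2 \;=\; \sum_{i=1}^n N_i\|\Omega_i\|^2 - n\bar N\|\mu\|^2 \;=\; \bar N\Bigl(\sum_{k=1}^M\widetilde\lambda_k^2 - n\|\mu\|^2\Bigr),
\end{equation*}
where I used $\sum_i N_i\|\Omega_i\|^2 = \bar N\|\widetilde\Omega\|_F^2 = \bar N\sum_{k=1}^M\widetilde\lambda_k^2$. I would then exhibit the explicit unit test vector $u = \sqrt\xi/\sqrt n \in\mathbb R^n$ and compute $\widetilde\Omega u = n^{-1/2}\sum_i\xi_i\Omega_i = \sqrt n\,\mu$, which by the variational characterization of singular values yields $\widetilde\lambda_1^2 \geq \|\widetilde\Omega u\|^2 = n\|\mu\|^2$. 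Peeling off the top singular value gives
\begin{equation*}
\sum_{i=1}^n N_i\|\Omega_i-\mu\|^2 \;\geq\; \bar N\sum_{k=2}^M\widetilde\lambda_k^2,
\end{equation*}
and combining with $\sum_i\|\Omega_i\|^2 = \sum_{k=1}^M\lambda_k^2$ shows that \eqref{SNR-TM-supp} forces \eqref{SNR(K=n)}, after which Theorem~\ref{thm:K=n} delivers asymptotic level $\kappa$ and asymptotic power $1$.

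For the final claim under $N_i\asymp\bar N$, observe that $\widetilde\Omega'\widetilde\Omega = [\diag(\xi)]^{1/2}\,\Omega'\Omega\,[\diag(\xi)]^{1/2}$ with $\xi_i\asymp 1$, so the eigenvalues satisfy $\widetilde\lambda_k^2\asymp\lambda_k^2$ for every $k$ by a standard sandwiching argument, and one may replace $\widetilde\lambda_k$ with $\lambda_k$ throughout. The main obstacle is the variational lower bound $\widetilde\lambda_1^2\geq n\|\mu\|^2$; identifying the correct unit vector $\sqrt\xi/\sqrt n$—the one that encodes the document-length weighting baked into the definition of $\mu$—is what enables the dominant singular value to be subtracted cleanly and the remaining bulk $\sum_{k\geq 2}\widetilde\lambda_k^2$ to drive the detection boundary.
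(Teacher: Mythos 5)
Your proof is correct, and the main bound you derive is literally the same inequality as the paper's, reached by a slightly different path. The paper computes $\omega_n^2 = \frac{1}{n\|\mu\|^2}\|\widetilde\Omega - \mu\widetilde\xi'\|_F^2$ with $\widetilde\xi = [\diag(\xi)]^{1/2}\mathbf{1}_n$ and then invokes the Eckart--Young/Weyl fact that subtracting any rank-one matrix leaves Frobenius residual at least $\sum_{k\geq 2}\widetilde\lambda_k^2$. You instead expand $\sum_i N_i\|\Omega_i-\mu\|^2 = \bar N\|\widetilde\Omega\|_F^2 - n\bar N\|\mu\|^2$ and exhibit the explicit unit vector $u=\sqrt{\xi}/\sqrt{n}$ with $\widetilde\Omega u = \sqrt{n}\,\mu$, giving $\widetilde\lambda_1^2 \geq n\|\mu\|^2$. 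Since $\|\widetilde\Omega - \mu\widetilde\xi'\|_F^2 = \|\widetilde\Omega\|_F^2 - n\|\mu\|^2$ (the rank-one matrices in the two arguments coincide: $\mu\widetilde\xi' = \sqrt{n}\,\mu\,u'$), the two arguments prove exactly the same inequality; yours is marginally more elementary in that it uses a single Rayleigh-quotient step rather than citing the low-rank approximation theorem. For the second claim ($N_i\asymp\bar N$), the routes genuinely diverge: the paper re-applies Weyl's inequality directly to $\Omega - \mu\mathbf{1}_n'$ using $\sum_i N_i\|\Omega_i-\mu\|^2 \gtrsim \bar N\|\Omega-\mu\mathbf{1}_n'\|_F^2 \geq \bar N\sum_{k\geq 2}\lambda_k^2$, whereas you use the Courant--Fischer sandwiching $\widetilde\lambda_k^2\asymp\lambda_k^2$ for every $k$ (valid because $\widetilde\Omega'\widetilde\Omega = D^{1/2}\Omega'\Omega D^{1/2}$ with $D=\diag(\xi)$, $\xi_i\asymp 1$). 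Your version is slightly stronger — it gives termwise comparability and hence a two-sided replacement — while the paper's one-sided bound is all that is logically required; both suffice. Your brief treatment of the regularity conditions (inherited from Theorem~\ref{thm:K=n}) matches the paper's level of detail, which also does not re-verify them.
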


\begin{proof}[Proof of Corollary \ref{cor:TM-general}]
	
	This is a special case of our testing problem with $K=n$. Moreover, $\mu=n^{-1}\Omega\xi$ matches with the definition of $\mu$ in \eqref{Mod2-groupMean}. Therefore, we can apply Theorem~\ref{thm:K=n} directly. It remains to verify that the condition 
	\beq \label{cor-TM-0}
	\frac{	\bar{N}\cdot \sum_{k=2}^M \widetilde{\lambda}_k^2}{\sqrt{\sum_{k=1}^M\lambda_k^2}} \to\infty
	\eeq
	is sufficient to lead to the condition 
	\beq 
	\label{SNR(K=n)-TM}
	\frac{n\bar{N}\|\mu\|^2\omega_n^2}{ \sqrt{\sum_i \|\Omega_i \|^2} } \to\infty. 
	\eeq
	If we show this then Theorem~\ref{thm:K=n} applies directly.
	We first calculate  $\omega_n^2$. Recall $\xi_i=N_i/\bar{N}$ for $1\leq i\leq n$. Write 
	\[
	\widetilde{\Omega} = \Omega [\diag(\xi)]^{1/2}, \qquad \widetilde{\xi} = [\diag(\xi)]^{1/2}{\bf 1}_n. 
	\]
	For $K=n$, by \eqref{omega(K=n)}, $\omega_n^2=\frac{1}{n\bar{N}\|\mu\|^2}\sum_{i=1}^n N_i\|\Omega_i-\mu\|^2$. It follows that 
	\beq \label{cor-TM-1}
	\omega_n^2=\frac{1}{n\|\mu\|^2}\Bigl\|(\Omega-\mu{\bf 1}'_n)[\diag(\xi)]^{1/2}\Bigr\|_F^2 =\frac{1}{n\|\mu\|^2}\bigl\|\widetilde{\Omega} - \mu \widetilde{\xi}'\bigr\|_F^2.  
	\eeq
	Recall that $\widetilde{\lambda_1},\ldots,\widetilde{\lambda}_M$ are the singular values of $\widetilde{\Omega}$.  We apply a well-known result in linear algebra \citep{HornJohnson}, namely Weyl's inequality: For any rank-1 matrix $\Delta$, 
	$\|\widetilde{\Omega}-\Delta\|_F^2\geq \sum_{k\neq 1}\widetilde{\lambda}_k^2$. 
	In \eqref{cor-TM-1}, $\mu\widetilde{\xi}'$ is a rank-1 matrix. 
	It follows that 
	\beq \label{cor-TM-2}
	\bigl\|\widetilde{\Omega} - \mu \widetilde{\xi}'\bigr\|_F^2 \geq \sum_{k=2}^M \widetilde{\lambda}_k^2. 
	\eeq
	Hence 
	\begin{align*}
		\frac{n\bar{N}\|\mu\|^2\omega_n^2}{ \sqrt{\sum_i \|\Omega_i \|^2} } 
		\geq \frac{ \bar{N} \cdot \sum_{k=2}^M \widetilde{\lambda}_k^2}{ \| \Omega \|_F}
		= \frac{ \bar{N} \cdot \sum_{k= 2}^M \widetilde{\lambda}_k^2}{ \sqrt{\sum_{k=1}^M \lambda_k^2 }},
	\end{align*}
	which implies \eqref{SNR(K=n)-TM} by our assumption. The first claim is proved. 
	
	Next we prove the second claim. Observe that  if $N_i \asymp \bar{N}$ , then by Weyl's inequality: 
	\begin{align*}
		\omega_n^2 &= \frac{1}{\| \mu \|^2 n\bar{N}} \sum_i N_i \| \Omega_i - \mu|^2
		\gtrsim \frac{1}{ \| \mu \|^2} \sum_i  \| \Omega_i - \mu\|^2 
		\\&=  \frac{1}{ \| \mu \|^2} \| \Omega - \mu {\bf 1}_n ' \|_F^2 \geq  \frac{1}{ \| \mu \|^2} \sum_{k=2}^M \lambda_k^2. 
	\end{align*}
	Thus
	\begin{align*}
		\frac{n\bar{N}\|\mu\|^2\omega_n^2}{ \sqrt{\sum_i \|\Omega_i \|^2} } 
		\geq \frac{ \bar{N} \cdot \sum_{k=2}^M \lambda_k^2}{ \| \Omega \|_F}
		= \frac{ \bar{N} \cdot \sum_{k= 2}^M \lambda_k^2}{ \sqrt{\sum_{k=1}^M \lambda_k^2 }}. 
	\end{align*}
	We see that the assumption 
	\begin{align}
		\frac{ \bar{N} \cdot \sum_{k= 2}^M \lambda_k^2}{ \sqrt{\sum_{k=1}^M \lambda_k^2 }} \to \infty 
	\end{align}
	implies \eqref{SNR(K=n)-TM}. The second claim is established and the proof is complete. 
\end{proof}

\subsection{Proof of Corollary~\ref{cor:TM-LB}}

%
%

Recall the construction of a simple null and simple (random) alternative model from Section \ref{sec:lbd_construction}, specialized below to the case of $K=n$ and $N_i \equiv N$: 
\beq \label{LBconstruct-null-topic}
H_0: \qquad \Omega_i = \tilde{\mu}, \qquad 1\leq i\leq n. 
\eeq
\beq \label{LBconstruct-alt-topic}
H_1: \qquad \Omega_{ij}  = 
\begin{cases} 
	\mu_j\bigl(1 + \omega_n z_i b_j\bigr), & \mbox{if } 1\leq j\leq m \cr
	\tilde{\mu}_{j} \bigl(1 - \omega_n z_i b_{j-m}\bigr), &\mbox{if }m+1\leq j\leq 2m
\end{cases}
\eeq
where $b_1, \ldots, b_m$ are i.i.d.\ Rademacher random variables and $z_1, \ldots, z_n$ are i.i.d\ Rademacher random variables conditioned to satisfy $|\sum_{i} z_i| \leq 100 \sqrt{n}$. Define 
$$\tilde b= (b_1, \ldots, b_m, b_1, \ldots, b_m)'.$$
To derive the lower bound of Corollary \ref{cor:TM-LB}, we assume without loss of generality that $\omega_n$ is a sufficiently small absolute constant. 

We claim that $H_1$ prescribes a  topic model with $M = 2$ topics. To see this,
under the alternative,
\begin{align*}
	\num 	\label{eqn:two_topic_Omega}
	\Omega_i = \begin{cases}
		\mu \circ (\bo + \omega_{n} \, \tilde b) &\quad \text{ if } z_i = 1 \\
		\mu \circ (\bo - \omega_{n} \, \tilde b) &\quad \text{ if } z_i = -1.
	\end{cases}
\end{align*} 
Moreover, we showed in Section \ref{sec:lbd_construction} that $\Omega_{ij} \geq 0$ for all $i, j$ and that $\| \Omega_{ij} \|_1 = 1$. From \eqref{eqn:two_topic_Omega}, we see that $\Omega = AW$ where $A \in \R^{p \times 2}$ and $W \in \R^{2 \times n}$ are defined as follows:
\begin{align*}
	&A_{:1} = 	\mu \circ  (\bo + \omega_{n} \, \tilde b), \quad 
	A_{:2} = 	\mu \circ (\bo -  \omega_{n} \, \tilde b)
	\\ &W_{:i} = \begin{cases}
		(1,0)' \quad \text{if } z_i = 1 \\
		(0,1)' \quad \text{if } z_i = -1. 
	\end{cases}
\end{align*}
Moreover, under the null hypothesis, $\Omega$ clearly prescribes a topic model with $K = 1$. Therefore $\Omega$ follows the topic model \eqref{TopicModel}. Moreover, since $N_i \equiv N$, we have   $\Omega[\diag(\xi)]^{1/2} = \Omega$.

By Proposition \ref{prop:LB} specialized to our setting, we know that the $\chi^2$ distance between the null and alternative goes to zero if 
\begin{align*}
	\sqrt{n} N \| \mu \|  \omega_n^2 \to 0.
\end{align*}
Thus to prove Corollary \ref{cor:TM-LB} it suffices to show that 
\begin{align*}
	\num 	\label{eqn:cor-TM-main}
	\frac{N \sum_{k \geq 2}^M \lambda_k^2}{\sqrt{ \sum_{k=1}^M \lambda_k^2} } = \frac{N \lambda_2^2}{ \sqrt{ \sum_{k=1}^M \lambda_k^2} }\gtrsim \sqrt{n} N \| \mu \|  \omega_n^2 
\end{align*}

Accordingly we study the second largest singular value of $\Omega$. First we have some preliminary calculations. Let $U = \{ i: z_i = 1\}$, and let $V =  \{ i: z_i = -1\}$. Define 
\begin{align*}
	u &= \mu \circ (\bo + \omega_{n} \, \tilde b), \, \, \text{ and }
	\\ v&= \mu \circ (\bo - \omega_{n} \, \tilde b). 
\end{align*}
Observe that
\begin{align*}
	\langle u, v \rangle 
	&= \| \mu \|^2 - \omega_n^2 \| \mu \circ \tilde b \|^2 = 
	\| \mu \|^2 (1 - \omega_n^2). 
\end{align*}
Also, since $\omega_n$ is a sufficiently small absolute constant,
\begin{align*}
	\| u \|^2 &= \| \mu \|^2 + 2 \omega_n \langle \mu, \mu \circ \tilde b \rangle+
	\omega_n^2 \| \mu \circ \tilde b \|^2
	= (1 + \omega_n^2) \| \mu \|^2 + 2\omega_n \sum_j \mu_j^2 \, \tilde b_j
	\gtrsim \| \mu \|^2
	, \, \, \, \text{ and } \\
	\| v\|^2 &= \| \mu \|^2 - 2 \omega_n \langle \mu, \mu \circ \tilde b \rangle+
	\omega_n^2 \| \mu \circ \tilde b \|^2
	= (1 + \omega_n^2) \| \mu \|^2 - 2 \omega_n \sum_j \mu_j^2 \, \tilde b_j
	\gtrsim \| \mu \|^2. 
	\num \label{eqn:u_and_v_norms}
\end{align*}
Again, since we assume that $\omega_n$ is a sufficiently small absolute constant, 
\begin{align*}
	\delta^2 := 	\frac{ \langle u, v \rangle^2}{ \|u\|^2 \| v \|^2 }
	&= \frac{ \| \mu \|^4 (1 - \omega_n^2)^2 }{
		(1  + \omega_n^2)^2 \| \mu \|^4 - 4 \omega_n^2 \langle \mu, \mu \circ b \rangle^2 
	}
	\leq  \frac{ \| \mu \|^4 (1 - \omega_n^2)^2 }{
		(1  + \omega_n^2)^2 \| \mu \|^4 - 4 \omega_n^2 \| \mu \|^4
	}
	\\&= \frac{ \| \mu \|^4 (1 - \omega_n^2)^2}{\| \mu \|^4(1 + 2 \omega_n^2 - 3 \omega_n^4)} 
	=\frac{ (1 - \omega_n^2)^2 }{ 1 + 2 \omega_n^2 -3 \omega_n^4 }
	\num \label{eqn:incoherence_param}
\end{align*}
Note that
\begin{align*}
	\| a u + b v \|^2 
	&= a^2 \| u \|^2 + 2 ab \langle u, v \rangle + b^2 \| v \|^2 
	\geq a^2 \| u\|^2 + b^2 \| v \|^2 - 2 a b \delta \|u \| \|v \|
	\\&\geq (1 - \delta) \big( a^2 \| u \|^2  + b^2 \| v \|^2 \big) 
	+ \| au - b v \|^2 \geq (1 - \delta) \big( a^2 \| u \|^2  + b^2 \| v \|^2 \big).
\end{align*}
By \eqref{eqn:incoherence_param}, we have for $\omega_n$ sufficiently small that 
\begin{align*}
	1 - \delta  &\geq 1 - \frac{ 1 - \omega_n^2 }{\sqrt{ 1 + 2 \omega_n^2 -3 \omega_n^4} } 
	= \frac{\sqrt{1 + 2 \omega_n^2 -3 \omega_n^4} - 1 +  \omega_n^2 }{\sqrt{1 + 2 \omega_n^2 -3 \omega_n^4} }
	\\&\geq  \frac{  \omega_n^2 }{\sqrt{1 + 2 \omega_n^2 -3 \omega_n^4}}  \gtrsim \omega_n^2.
\end{align*}
Thus
\begin{align*}
	\num \label{eqn:incoherence_lower_bound}
	\| au + b v \|^2 \geq \omega_n^2( a^2 \| u\|^2 + b^2 \| v \|^2)
	\gtrsim \omega_n^2 \| \mu \|^2 (a^2 + b^2)
\end{align*}
Recall that if $M$ is a rank $k$ matrix, then 
\begin{align*}
	\num \label{eqn:rankM_sing}
	\lambda_k(M) = \sup_{y: \| y \| = 1, \, y \in \text{Ker}(M)^{\perp}} \| My \| = \sup_{y: \| y \| = 1, \, y \in \text{Im}(M')} \| My \|. 
\end{align*}
We have
\begin{align*}
	\Omega \Omega' = \sum_{i \in U} uu' + \sum_{i \in V} v v'
	= |U| u u' + |V| v v'. 
\end{align*}
Let $y \in \R^n$ satisfy $\|y\| = 1$ and $y = \Omega' x$ for some $x$. We have
\begin{align*}
	\Omega y =	\Omega \Omega' x 
	= |U| \langle u, x \rangle u + |V| \langle v, x \rangle v.
\end{align*}
By the previous equation and  \eqref{eqn:incoherence_lower_bound}, 
\begin{align*}
	\| \Omega y \|^2 = 	\| \Omega \Omega' x \|^2
	= \bigg \| |U| \langle u, x \rangle u + |V| \langle v, x \rangle v \bigg  \|^2
	\gtrsim \omega_n^2 \| \mu \|^2 \big( |U|^2 \langle u, x \rangle^2
	+ |V|^2 \langle v, x \rangle^2 \big). 
\end{align*}
By our conditioning on $z$, we have $\min(|U|, |V|) \gtrsim n$. Moreover
$$ 1 = \|y \|^2 = \| \Omega' x \|^2 =|U|  \langle u, x \rangle^2 + |V| \langle v, x \rangle^2. $$
Applying these facts and \eqref{eqn:rankM_sing}, we obtain
\begin{align*}
	\lambda_2^2 \geq \| \Omega y \|^2 =	\| \Omega \Omega' x \|^2 \gtrsim   \omega_n^2 \| \mu \|^2 n \big( |U| \langle u, x \rangle^2
	+ |V| \langle v, x \rangle^2 \big) 
	= \omega_n^2 \| \mu \|^2 n. 
\end{align*}
Next, 
\begin{align}
	\sum_{k  =1}^M \lambda^2_k = \| \Omega \|_F^2 
	= \sum_{i \in U} \| u \|^2 + \sum_{i \in V} \| v \|^2 
	= |U| \cdot \|u \|^2 + |V| \cdot \| v \|^2
	\asymp n \| \mu \|^2 
\end{align}
We conclude that 
\begin{align*}
	\frac{N \sum_{k \geq 2}^M \lambda_k^2}{\sqrt{ \sum_{k=1}^M \lambda_k^2} } = \frac{N \lambda_2^2}{ \sqrt{ \sum_{k=1}^M \lambda_k^2} }
	\gtrsim \frac{ N \cdot \omega_n^2 \| \mu \|^2 n }{ \sqrt{n} \| \mu \| }
	= \sqrt{n } N \| \mu \| \omega_n^2
\end{align*}
which establishes  \eqref{eqn:cor-TM-main}. The proof is complete. 
\qed

\subsection{Proof of Corollary~\ref{cor:author}}
This is a special case of our testing problem with $K=2$, we can apply Theorem~\ref{thm:K=2} directly. It remains to verify that the condition 
\beq \label{cor-author-0}
\frac{ \zeta_n^2\cdot (\|\eta_S\|_1+\|\theta_S\|_1) }{  \big( \frac{1}{n\bar{N}} + \frac{1}{m \bar{M}}  \big) 
	\max\{\| \eta \|,\, \| \theta \|\} } \to \infty
\eeq
is sufficient to yield the condition \eqref{SNR(K=2)} in Theorem~\ref{thm:K=2}. This is done by calculating $\|\eta-\theta\|^2$ directly. 
By our sparse model \eqref{SparseModel}, for $j\in S$, $|\sqrt{\eta_j}-\sqrt{\theta_j}|\geq \zeta_n$. It follows that for $j\in S$, 
\[
|\eta_j-\theta_j|^2 =(\sqrt{\eta_j} + \sqrt{\theta_j})^2(\sqrt{\eta_j} - \sqrt{\theta_j})^2\geq \zeta_n^2(\sqrt{\eta_j} + \sqrt{\theta_j})^2\geq \zeta_n^2(\eta_j+\theta_j). 
\]
It follows that 
\beq \label{cor-author-1}
\|\eta-\theta\|^2\geq \zeta_n^2\sum_{j\in S}(\eta_j+\theta_j)\geq \zeta_n^2\bigl(\|\eta_S\|_1 + \|\theta_S\|_1\bigr). 
\eeq
We plug it into \eqref{SNR(K=2)} and see immediately that \eqref{cor-author-0} implies this condition. The claim follows directly from Theorem~\ref{thm:K=2}. \qed

\section{A modification of DELVE for finite $p$} \label{sec:Finite-p}

Below we write out the variance of the terms of the raw DELVE statistic under the null, using the proofs of Lemmas \ref{lem:var2}--\ref{lem:var4}.  \begin{align*}
	\num \label{eqn:exact_variance} 
	\mathrm{Var}({\bf 1}_p'U_2) &= 2\sum_{k=1}^K\sum_{i\in S_k} \sum_{1\leq r<s\leq N_i}
	(\frac{1}{n_k\bar{N}_k}-\frac{1}{n\bar{N}})^2\frac{N_i^2}{(N_i-1)^2}
	\big[ \|\Omega_i\|^2-2\|\Omega_i\|_3^3 + \|\Omega_i\|^4 \big] 
	\\ 	\mathrm{Var}({\bf 1}_p'U_3)  &=  \frac{2}{n^2\bar{N}^2}\sum_{k\neq \ell}\sum_{i\in S_k} \sum_{m \in S_\ell} N_i N_m \Bigl(\sum_j \Omega_{ij}\Omega_{mj} - 2\sum_{j}\Omega^2_{ij}\Omega^2_{mj} + \sum_{j,j'}\Omega_{ij}\Omega_{ij'}\Omega_{mj}\Omega_{mj'} \Bigr)
	\\ \mathrm{Var}({\bf 1}_p'U_4) &=2\sum_{k=1}^K\sum_{\substack{i\in S_k, m\in S_k\\i \neq m}}(\frac{1}{n_k\bar{N}_k}-\frac{1}{n\bar{N}})^2 N_i N_m \Bigl(\sum_{j}\Omega_{ij}\Omega_{mj} - 2\sum_{j}\Omega^2_{ij}\Omega^2_{mj} + \sum_{j,j'}\Omega_{ij}\Omega_{ij'}\Omega_{mj}\Omega_{mj'}\Bigr). 
\end{align*}
In this section we develop an unbiased estimator for each term above, which leads to an unbiased estimator of $\var(T)$ by taking their sum. We require some preliminary results proved later in this section. Recall that Lemma \ref{lem:Omegaij2_unbiased} was established in the proof of Lemma \ref{lem:decompose}.

\begin{lemma}
	\label{lem:Omegajj'_unbiased}
	If $j \neq j'$, an unbiased estimator of $\Omega_{ij} \Omega_{ij'}$ is
	\begin{align*}
		\widehat{	\Omega_{ij} \Omega_{ij'} }
		:= \frac{X_{ij} X_{ij'}}{ N_i (N_i-1) } 
	\end{align*}
\end{lemma}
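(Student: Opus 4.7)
My plan is to verify unbiasedness by a direct moment computation using the single-trial decomposition of the multinomial that was already introduced in \eqref{Multinomials}. Specifically, writing $B_{ir}\sim\mathrm{Multinomial}(1,\Omega_i)$ for $1\leq r\leq N_i$ mutually independent, we have the distributional identity $X_i = \sum_{r=1}^{N_i} B_{ir}$, hence $X_{ij}X_{ij'} = \sum_{r=1}^{N_i}\sum_{s=1}^{N_i} B_{ijr}B_{ij's}$.

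I then split the double sum into the diagonal $r=s$ and off-diagonal $r\neq s$ contributions. The key observation is that for a single-trial multinomial, $B_{ir}$ has exactly one nonzero coordinate (and that coordinate equals $1$), so $B_{ijr}B_{ij'r}=0$ whenever $j\neq j'$. Thus the entire diagonal contribution vanishes. For the off-diagonal terms $r\neq s$, $B_{ir}$ and $B_{is}$ are independent, so $\E[B_{ijr}B_{ij's}] = \E B_{ijr}\cdot \E B_{ij's} = \Omega_{ij}\Omega_{ij'}$.

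Summing over the $N_i(N_i-1)$ off-diagonal index pairs yields
\[
\E[X_{ij}X_{ij'}] \;=\; N_i(N_i-1)\,\Omega_{ij}\Omega_{ij'},
\]
which upon dividing by $N_i(N_i-1)$ proves that the proposed estimator is unbiased. There is no real obstacle here: the entire argument is a one-line consequence of the decomposition already used throughout the appendix, together with the disjoint-support property of a single multinomial trial. (Equivalently, one could appeal to the standard multinomial covariance identity $\mathrm{Cov}(X_{ij},X_{ij'}) = -N_i\Omega_{ij}\Omega_{ij'}$ for $j\neq j'$ and combine with $\E X_{ij} = N_i\Omega_{ij}$, but the trial-level expansion is cleaner and matches the style of Lemma~\ref{lem:decompose}.)
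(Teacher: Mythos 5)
Your proof is correct and uses essentially the same argument as the paper: decompose $X_{ij}X_{ij'}$ via single-trial Bernoulli vectors, observe that the diagonal terms vanish because a single multinomial trial has disjoint support across coordinates, and compute the expectation of the $N_i(N_i-1)$ off-diagonal terms by independence.
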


\begin{lemma}
	\label{lem:Omegaij2_unbiased}
	An unbiased estimator of $\Omega_{ij}^2$ is 
	\begin{align}
		\widehat{\Omega_{ij}^2} :=	\frac{X_{ij}^2 - X_{ij} }{N_i(N_i-1)}. 
	\end{align}
\end{lemma}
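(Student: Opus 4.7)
The plan is to verify unbiasedness by a direct factorial-moment computation for $X_{ij}$. Since $X_i \sim \mathrm{Multinomial}(N_i, \Omega_i)$, the marginal of the $j$th coordinate satisfies $X_{ij} \sim \mathrm{Binomial}(N_i, \Omega_{ij})$, so it suffices to compute $\mathbb{E}[X_{ij}^2 - X_{ij}] = \mathbb{E}[X_{ij}(X_{ij}-1)]$ and divide by $N_i(N_i-1)$.

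First I would compute $\mathbb{E}[X_{ij}(X_{ij}-1)]$ using the standard binomial identity. One quick way: write $X_{ij} = \sum_{r=1}^{N_i} B_{ijr}$ where $B_{ijr} \in \{0,1\}$ is the indicator that trial $r$ of the $i$th multinomial falls in category $j$, so $B_{ijr}$ are i.i.d.\ Bernoulli$(\Omega_{ij})$. Then
\[
X_{ij}(X_{ij}-1) = \sum_{1\le r\ne s\le N_i} B_{ijr}B_{ijs},
\]
and by independence $\mathbb{E}[B_{ijr}B_{ijs}] = \Omega_{ij}^2$ for $r\ne s$, giving $\mathbb{E}[X_{ij}(X_{ij}-1)] = N_i(N_i-1)\Omega_{ij}^2$. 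Dividing by $N_i(N_i-1)$ (which is positive since $N_i\ge 2$ under \eqref{cond1-basic}) yields $\mathbb{E}[\widehat{\Omega_{ij}^2}] = \Omega_{ij}^2$.

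As a sanity check, one can cross-verify the claim against Lemma~\ref{prop:unbiased}'s building block: since $\mathbb{E}[X_{ij}(N_i-X_{ij})] = N_i(N_i-1)\Omega_{ij}(1-\Omega_{ij})$, we have
\[
\frac{X_{ij}^2}{N_i^2} - \frac{X_{ij}(N_i - X_{ij})}{N_i^2(N_i-1)} = \frac{X_{ij}^2(N_i-1) - X_{ij}(N_i - X_{ij})}{N_i^2(N_i-1)} = \frac{X_{ij}^2 - X_{ij}}{N_i(N_i-1)},
\]
whose expectation equals $\mathbb{E}[X_{ij}^2]/N_i^2 - \Omega_{ij}(1-\Omega_{ij})/N_i = \Omega_{ij}^2$, matching the first derivation.

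There is no real obstacle here; the only point requiring care is ensuring $N_i\geq 2$ so that the denominator is well-defined, which is guaranteed by the standing assumption \eqref{cond1-basic}. The proof consists of a one-line factorial-moment identity and is essentially a restatement of the calculation already used implicitly in the proof of Lemma~\ref{lem:decompose}.
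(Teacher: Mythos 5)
Your proof is correct, and the core computation is exactly the one the paper relies on: the paper does not give a separate proof of this lemma but instead appeals to the algebraic identity $\frac{X_{ij}^2}{N_i^2} - \frac{X_{ij}(N_i-X_{ij})}{N_i^2(N_i-1)} = \frac{X_{ij}^2 - X_{ij}}{N_i(N_i-1)}$ together with the moment formulas $\E[X_{ij}^2] = N_i^2\Omega_{ij}^2 + N_i\Omega_{ij}(1-\Omega_{ij})$ and $\E[X_{ij}(N_i-X_{ij})] = N_i(N_i-1)\Omega_{ij}(1-\Omega_{ij})$ already used in Section~2 and in the proof of Lemma~\ref{lem:decompose}. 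That is precisely your sanity check. Your lead derivation via the Bernoulli representation $X_{ij}=\sum_r B_{ijr}$ and the factorial-moment identity $X_{ij}(X_{ij}-1)=\sum_{r\ne s}B_{ijr}B_{ijs}$ is also how the paper proves the neighboring Lemmas~\ref{lem:Omegajj'_unbiased}--\ref{lem:Omega4_unbiased}, so it is the natural route and matches the local style. Nothing is missing; both derivations you give are valid and equivalent.
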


\begin{lemma}
	\label{lem:Omegajj'2_unbiased} 
	If $j \neq j'$, an unbiased estimator for $ \Omega_{ij}^2 \Omega_{ij'}^2$ is
	\begin{align*}
		\widehat{  \Omega_{ij}^2 \Omega_{ij'}^2 } 
		= \frac{ (X_{ij}^2 - X_{ij} ) (X_{ij'}^2 - X_{ij'} ) }{ N_i(N_i-1)(N_i - 2)(N_i - 3) }
	\end{align*}
\end{lemma}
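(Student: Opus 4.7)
The plan is to recognize that the numerator $(X_{ij}^2 - X_{ij})(X_{ij'}^2 - X_{ij'})$ is nothing other than the product of falling factorials $(X_{ij})_2 (X_{ij'})_2$, where $(n)_k := n(n-1)\cdots(n-k+1)$. This reformulation makes the joint factorial-moment identity for the multinomial immediately applicable.

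Concretely, I would first recall (or briefly re-derive) the standard identity
\begin{equation*}
\mathbb{E}\bigl[(X_{ij})_a (X_{ij'})_b\bigr] = (N_i)_{a+b}\, \Omega_{ij}^{a}\,\Omega_{ij'}^{b}, \qquad j\neq j',
\end{equation*}
which holds for any multinomial vector and disjoint categories. Setting $a=b=2$ yields
\begin{equation*}
\mathbb{E}\bigl[(X_{ij}^2-X_{ij})(X_{ij'}^2-X_{ij'})\bigr] = N_i(N_i-1)(N_i-2)(N_i-3)\,\Omega_{ij}^2\Omega_{ij'}^2,
\end{equation*}
from which dividing by $N_i(N_i-1)(N_i-2)(N_i-3)$ delivers the claim.

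The one step deserving care is the derivation of the factorial-moment identity itself; I would argue via the Bernoulli-indicator representation $X_{ijr} = \mathbf{1}\{B_{ir}=e_j\}$ with $B_{ir}\sim\mathrm{Multinomial}(1,\Omega_i)$ i.i.d., so that $X_{ij}=\sum_r X_{ijr}$. Then
\begin{equation*}
(X_{ij})_2(X_{ij'})_2 = \sum_{r_1\neq r_2}\sum_{s_1\neq s_2} X_{ij r_1}X_{ij r_2}X_{ij' s_1}X_{ij' s_2}.
\end{equation*}
Since an individual trial can contribute to at most one category, the indicators $X_{ij r}$ and $X_{ij' r}$ are mutually exclusive, which forces $\{r_1,r_2\}\cap\{s_1,s_2\}=\emptyset$ for a nonzero summand. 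Counting the $N_i(N_i-1)(N_i-2)(N_i-3)$ valid index tuples and using independence across trials gives the claimed expectation $N_i(N_i-1)(N_i-2)(N_i-3)\Omega_{ij}^2\Omega_{ij'}^2$. The main (very mild) obstacle is just this disjointness bookkeeping; there is no real analytic content beyond it, and the estimator is manifestly well-defined whenever $N_i\geq 4$ (which is implicit since otherwise the denominator vanishes and the statistic is not used).
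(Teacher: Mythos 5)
Your proposal is correct and follows essentially the same route as the paper: both reduce $(X_{ij}^2-X_{ij})(X_{ij'}^2-X_{ij'})$ to the sum $\sum B_{ijr_1}B_{ijs_1}B_{ij'r_2}B_{ij's_2}$ over distinct indices via the Bernoulli-trial representation, observe that $B_{ijr}B_{ij'r}=0$ forces all four trial indices to be distinct, and then count $N_i(N_i-1)(N_i-2)(N_i-3)$ tuples each contributing $\Omega_{ij}^2\Omega_{ij'}^2$ by independence. The factorial-moment framing $(X_{ij})_2(X_{ij'})_2$ is a tidy repackaging but the underlying computation is the same.
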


\begin{lemma}
	\label{lem:Omega3_unbiased}
	An unbiased estimator of $\Omega_{ij}^3$ is 
	\begin{align}
		\widehat{\Omega_{ij}^3} :=	\frac{X_{ij}^3 - 3 X_{ij}^2 + 2 X_{ij} }{N_i(N_i-1)(N_i-2)}. 
	\end{align}
\end{lemma}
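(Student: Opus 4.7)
The plan is to recognize the numerator $X_{ij}^3 - 3X_{ij}^2 + 2X_{ij}$ as the third falling factorial $X_{ij}(X_{ij}-1)(X_{ij}-2)$, and then compute its expectation directly using the fact that $X_{ij}$ is a Binomial$(N_i, \Omega_{ij})$ random variable (as the marginal of a multinomial). The formula for falling factorial moments of a binomial will immediately yield the result.

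First I would write $X_{ij} = \sum_{r=1}^{N_i} B_{ijr}$, where $B_{ijr} \in \{0,1\}$ is the indicator that the $r$-th trial of the $i$-th multinomial produced outcome $j$. These Bernoulli$(\Omega_{ij})$ variables are independent across $r$ for fixed $(i,j)$. Then I would expand the product
\[
X_{ij}(X_{ij}-1)(X_{ij}-2) = \sum_{r,s,t \text{ distinct}} B_{ijr}\, B_{ijs}\, B_{ijt},
\]
where the sum ranges over ordered triples of distinct indices in $\{1,\ldots,N_i\}$. This combinatorial identity follows from the observation that for each fixed index $r$ with $B_{ijr} = 1$, the product $X_{ij}(X_{ij}-1)(X_{ij}-2)$ counts the number of ways to choose three distinct ``successful'' trials in order.

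Next, taking expectations, independence across trials gives $\mathbb{E}[B_{ijr} B_{ijs} B_{ijt}] = \Omega_{ij}^3$ for each of the $N_i(N_i-1)(N_i-2)$ ordered triples, so
\[
\mathbb{E}[X_{ij}^3 - 3X_{ij}^2 + 2X_{ij}] = N_i(N_i-1)(N_i-2)\,\Omega_{ij}^3.
\]
Dividing by $N_i(N_i-1)(N_i-2)$ yields $\mathbb{E}[\widehat{\Omega_{ij}^3}] = \Omega_{ij}^3$.

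This argument involves no real obstacles; the only minor point requiring care is the combinatorial expansion of the falling factorial, which one can alternatively verify by induction on $N_i$ or by using the known formula $\mathbb{E}[X(X-1)(X-2)] = N(N-1)(N-2)p^3$ for $X \sim \text{Binomial}(N,p)$, an identity that also follows from differentiating the probability generating function three times at $1$. The same template was used implicitly in the proofs of Lemmas~\ref{lem:Omegajj'_unbiased}, \ref{lem:Omegaij2_unbiased}, and \ref{lem:Omegajj'2_unbiased}, so this result fits naturally alongside them.
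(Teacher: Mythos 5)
Your proof is correct and rests on the same foundation as the paper's (the Bernoulli decomposition $X_{ij}=\sum_{r=1}^{N_i}B_{ijr}$), but the organization is slightly different, and arguably cleaner. The paper expands $X_{ij}^3$ as a triple sum over $r_1,r_2,r_3$, groups terms by coincidence pattern to get
\[
\E X_{ij}^3 = N_i\Omega_{ij} + 3N_i(N_i-1)\Omega_{ij}^2 + N_i(N_i-1)(N_i-2)\Omega_{ij}^3,
\]
and then \emph{subtracts off} the previously derived unbiased estimators of $\Omega_{ij}$ and $\Omega_{ij}^2$ to isolate the cubic piece; this sequential ``peel off lower moments'' construction is the same template the paper reuses for $\widehat{\Omega_{ij}^4}$. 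You instead recognize $X_{ij}^3-3X_{ij}^2+2X_{ij}$ as the falling factorial $X_{ij}(X_{ij}-1)(X_{ij}-2)$ from the start and identify it directly with $\sum_{r,s,t \text{ dist.}} B_{ijr}B_{ijs}B_{ijt}$, which has expectation $N_i(N_i-1)(N_i-2)\Omega_{ij}^3$ by independence. The two routes are algebraically equivalent; yours is more direct for this particular lemma (one identity instead of a two-step inversion), while the paper's has the advantage of producing the estimator constructively rather than requiring you to guess and verify the falling-factorial form. No gaps.
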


\begin{lemma}
	\label{lem:Omega4_unbiased}
	An unbiased estimator of $\Omega_{ij}^4$ is 
	\begin{align}
		\widehat{\Omega_{ij}^4} := \frac{X_{ij}^4 - 3 X_{ij}^3 - X_{ij}^2 + 3 X_{ij} }{N_i(N_i-1)(N_i-2)(N_i-3)}.
	\end{align}
\end{lemma}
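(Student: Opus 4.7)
The plan is to reduce the claim to a standard factorial moment identity for the binomial distribution. Marginally, $X_{ij} \sim \mathrm{Binomial}(N_i, \Omega_{ij})$, so I would rely on the identity
\[
\mathbb{E}\bigl[X_{ij}(X_{ij}-1)(X_{ij}-2)(X_{ij}-3)\bigr] = N_i(N_i-1)(N_i-2)(N_i-3)\,\Omega_{ij}^4,
\]
which follows from the standard computation of the fourth factorial moment of a binomial (e.g., by writing $X_{ij}$ as a sum of $N_i$ i.i.d.\ Bernoulli$(\Omega_{ij})$ indicators and counting ordered $4$-tuples of distinct indices that all land on category $j$).

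From there, the proof is just bookkeeping: expand the falling factorial polynomial $X_{ij}(X_{ij}-1)(X_{ij}-2)(X_{ij}-3)$ in the monomial basis to identify the numerator of $\widehat{\Omega_{ij}^4}$, then divide both sides by $N_i(N_i-1)(N_i-2)(N_i-3)$ and take expectations. The resulting formula will be an unbiased estimator by construction, provided $N_i \geq 4$ so the denominator is nonzero (which is consistent with the high-dimensional scaling $N_i \to \infty$ used throughout).

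The argument is essentially a direct algebraic verification, so I do not anticipate any real obstacle. The only thing to be careful about is the coefficient arithmetic when expanding the falling factorial, since a sign error in any of the four coefficients would give a biased estimator. A useful sanity check is to take expectations on each monomial separately, using the lower-order factorial moments $\mathbb{E}[X_{ij}] = N_i \Omega_{ij}$, $\mathbb{E}[X_{ij}(X_{ij}-1)] = N_i(N_i-1)\Omega_{ij}^2$, and so on (the same identities already used to establish Lemmas \ref{lem:Omegaij2_unbiased} and \ref{lem:Omega3_unbiased}), and verify that the contributions from $\Omega_{ij}, \Omega_{ij}^2, \Omega_{ij}^3$ all cancel, leaving only a multiple of $\Omega_{ij}^4$. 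This both confirms unbiasedness and serves as a consistency check on the stated numerator.
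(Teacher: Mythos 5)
Your approach is correct and, in fact, exposes an error in the lemma as stated. The falling-factorial identity $\E\bigl[X_{ij}(X_{ij}-1)(X_{ij}-2)(X_{ij}-3)\bigr] = N_i(N_i-1)(N_i-2)(N_i-3)\,\Omega_{ij}^4$ is the right tool, and expanding the falling factorial gives $X_{ij}(X_{ij}-1)(X_{ij}-2)(X_{ij}-3) = X_{ij}^4 - 6X_{ij}^3 + 11X_{ij}^2 - 6X_{ij}$. This is \emph{not} the numerator $X_{ij}^4 - 3 X_{ij}^3 - X_{ij}^2 + 3 X_{ij}$ claimed in the lemma, and your version is the correct one. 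The paper's proof expands $X_{ij}^4 = \bigl(\sum_r B_{ijr}\bigr)^4$ by partition pattern of the four indices and assigns coefficients $6$ and $3$ to the $(2,2)$ and $(2,1,1)$ patterns respectively, but these are swapped: the correct multiplicities are $3$ for $(2,2)$ (since $\binom{4}{2}=6$ placements per unordered pair of values divides by $2$ over ordered pairs) and $6$ for $(2,1,1)$ (since $\binom{4}{2}\cdot 2! = 12$ placements per unordered singleton pair divides by $2$). Hence $\E X_{ij}^4 = N_i\Omega_{ij} + 7N_i(N_i-1)\Omega_{ij}^2 + 6N_i(N_i-1)(N_i-2)\Omega_{ij}^3 + N_i(N_i-1)(N_i-2)(N_i-3)\Omega_{ij}^4$, not the coefficients $1, 10, 3, 1$ the paper uses, and plugging in the lower-order unbiased estimators then yields exactly $X_{ij}^4 - 6X_{ij}^3 + 11X_{ij}^2 - 6X_{ij}$, recovering your falling factorial. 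A numerical check confirms the paper's formula is biased: for $N_i=4$, $\Omega_{ij}=1/4$, one computes $\E[X_{ij}^4 - 3X_{ij}^3 - X_{ij}^2 + 3X_{ij}] = -264/256$, whereas $N_i(N_i-1)(N_i-2)(N_i-3)\Omega_{ij}^4 = 24/256$. Your factorial-moment route is both correct and cleaner, since it avoids the combinatorial bookkeeping over partition patterns entirely; the only caveat, which you already noted, is that $N_i \geq 4$ is required for the denominator to be nonzero.
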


Define 
\begin{align*}
	\widehat{ \| \Omega_i \|^2 }
	&:= \sum_j \widehat{\Omega_{ij}^2}
	\\ 	\widehat{ \| \Omega_i \|_3^3 }
	&:= \sum_j \widehat{\Omega_{ij}^3}
	\\ 	\widehat{ \| \Omega_i \|^4 }
	&:= \sum_j \widehat{\Omega_{ij}^4}
	+ \sum_{j \neq j'} \widehat{  \Omega_{ij}^2 \Omega_{ij'}^2 }. 
	\num \label{eqn:norm_estimators}
\end{align*}
Using Lemmas \ref{lem:Omegajj'_unbiased}--\ref{lem:Omega4_unbiased} and \eqref{eqn:norm_estimators}, we define an unbiased estimator for each term of \eqref{eqn:exact_variance}. Let $\widehat{\Omega_{ij}} = X_{ij}/N_i$ and define
\begin{align*}
	\num \label{eqn:exact_variance_estimators} 
	\widehat{\mathrm{Var}({\bf 1}_p'U_2)} &= 2\sum_{k=1}^K\sum_{i\in S_k} \sum_{1\leq r<s\leq N_i}
	(\frac{1}{n_k\bar{N}_k}-\frac{1}{n\bar{N}})^2\frac{N_i^2}{(N_i-1)^2}
	\big[\widehat{ \|\Omega_i\|^2}-2 \widehat{\|\Omega_i\|_3^3} + \widehat{\|\Omega_i\|^4} \big] 
	\\ 	\widehat{\mathrm{Var}({\bf 1}_p'U_3)}  &=  \frac{2}{n^2\bar{N}^2}\sum_{k\neq \ell}\sum_{i\in S_k} \sum_{m \in S_\ell} N_i N_m \Bigl(\sum_j \widehat{\Omega_{ij}} \widehat{\Omega_{mj}} - 2\sum_{j} \widehat{\Omega^2_{ij}} \widehat{\Omega^2_{mj}} + \sum_{j,j'}\widehat{\Omega_{ij}\Omega_{ij'}} \widehat{  \Omega_{mj}\Omega_{mj'}} \Bigr)
	\\ \widehat{\mathrm{Var}({\bf 1}_p'U_4)} &=2\sum_{k=1}^K\sum_{\substack{i\in S_k, m\in S_k\\i \neq m}}(\frac{1}{n_k\bar{N}_k}-\frac{1}{n\bar{N}})^2 N_i N_m \Bigl(\sum_{j}  \widehat{\Omega_{ij}} \widehat{\Omega_{mj}} - 2\sum_{j}\widehat{\Omega^2_{ij}} \widehat{\Omega^2_{mj}} + \sum_{j,j'}\widehat{\Omega_{ij}\Omega_{ij'}} \widehat{  \Omega_{mj}\Omega_{mj'}} \Bigr). 
\end{align*}
Define 
\begin{align}
	\label{eqn:DELVE_exact_var_est}
	\widetilde{V} = 	\widehat{\mathrm{Var}({\bf 1}_p'U_2)} 
	+ 	\widehat{\mathrm{Var}({\bf 1}_p'U_3)} 
	+ 	\widehat{\mathrm{Var}({\bf 1}_p'U_4)}. 
\end{align}
We define \textit{exact DELVE}  as $\tilde \psi = T/ \widetilde{V}^{1/2}$. Combining our results above, we obtain the following.

\begin{proposition}
	Consider the statistic $\widetilde{V}$ defined in \eqref{eqn:DELVE_exact_var_est}. Under the null hypothesis,  $\widetilde{V}$ is an unbiased estimator for $\var(T)$. 
\end{proposition}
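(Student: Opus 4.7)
The plan is to start from the exact expressions for $\var(T)$ under the null given in \eqref{eqn:exact_variance} and show that $\widetilde{V}$ estimates each term unbiasedly. First, I would invoke Lemma \ref{lem:decompose}, which gives the orthogonal decomposition $T = \rho^2 + \sum_{\kappa=1}^4 \mathbf{1}_p' U_\kappa$ with the four components uncorrelated. Under the null, $\rho^2 = 0$ and $\mu_k \equiv \mu$, so $\var(\mathbf{1}_p' U_1) = 0$ by Lemma \ref{lem:var1}. Hence $\var(T) = \sum_{\kappa=2}^4 \var(\mathbf{1}_p' U_\kappa)$, and it suffices to show that each summand in \eqref{eqn:exact_variance_estimators} is an unbiased estimator of the corresponding expression in \eqref{eqn:exact_variance}.

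Next I would verify unbiasedness termwise. For $\widehat{\var(\mathbf{1}_p'U_2)}$, the coefficient in front depends only on known constants, so it suffices to prove $\mathbb{E}[\widehat{\|\Omega_i\|^2}] = \|\Omega_i\|^2$, $\mathbb{E}[\widehat{\|\Omega_i\|_3^3}] = \|\Omega_i\|_3^3$, and $\mathbb{E}[\widehat{\|\Omega_i\|^4}] = \|\Omega_i\|^4$. The first two follow immediately by linearity from Lemmas \ref{lem:Omegaij2_unbiased} and \ref{lem:Omega3_unbiased}. The third uses the algebraic identity $\|\Omega_i\|^4 = \sum_j \Omega_{ij}^4 + \sum_{j\neq j'} \Omega_{ij}^2 \Omega_{ij'}^2$, and then Lemmas \ref{lem:Omega4_unbiased} and \ref{lem:Omegajj'2_unbiased} applied to each of the two sums in the definition of $\widehat{\|\Omega_i\|^4}$ in \eqref{eqn:norm_estimators}.

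For $\widehat{\var(\mathbf{1}_p' U_3)}$ and $\widehat{\var(\mathbf{1}_p' U_4)}$, the key point is that these involve products indexed by distinct observations $i \neq m$ (in $U_4$) or $i \in S_k$, $m \in S_\ell$ with $k \neq \ell$ (in $U_3$). Since the $X_i$'s are independent by assumption, the expectation of a product across distinct indices equals the product of the expectations. Hence $\mathbb{E}[\widehat{\Omega_{ij}}\,\widehat{\Omega_{mj}}] = \Omega_{ij}\Omega_{mj}$ (trivially, since $\widehat{\Omega_{ij}} = X_{ij}/N_i$ is unbiased for $\Omega_{ij}$), $\mathbb{E}[\widehat{\Omega_{ij}^2}\,\widehat{\Omega_{mj}^2}] = \Omega_{ij}^2 \Omega_{mj}^2$ by Lemma \ref{lem:Omegaij2_unbiased}, and $\mathbb{E}[\widehat{\Omega_{ij}\Omega_{ij'}}\,\widehat{\Omega_{mj}\Omega_{mj'}}] = \Omega_{ij}\Omega_{ij'}\Omega_{mj}\Omega_{mj'}$ by Lemma \ref{lem:Omegajj'_unbiased}. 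Linearity of expectation then gives unbiasedness of the full sums.

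The main obstacle, though essentially mechanical, is confirming that the formulas for $\widehat{\Omega_{ij}^3}$, $\widehat{\Omega_{ij}^4}$, and $\widehat{\Omega_{ij}^2\Omega_{ij'}^2}$ in Lemmas \ref{lem:Omega3_unbiased}, \ref{lem:Omega4_unbiased}, and \ref{lem:Omegajj'2_unbiased} are indeed unbiased. These proofs rely on the factorial moment structure of the multinomial distribution: for $X_i \sim \mathrm{Multinomial}(N_i, \Omega_i)$, the falling factorial $\mathbb{E}[X_{ij}(X_{ij}-1)\cdots(X_{ij}-r+1)X_{ij'}(X_{ij'}-1)\cdots(X_{ij'}-s+1)] = N_i^{\underline{r+s}}\,\Omega_{ij}^r \Omega_{ij'}^s$, where $N^{\underline{k}} = N(N-1)\cdots(N-k+1)$. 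I would establish each lemma by expanding the polynomial in $X_{ij}$ (and $X_{ij'}$) in terms of falling factorials and then taking expectations. Once these building blocks are in hand, assembling the unbiasedness of $\widetilde{V}$ is straightforward bookkeeping.
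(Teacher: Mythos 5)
Your argument mirrors the paper's: the paper's own proof is just ``combining our results above,'' and you lay out exactly that reduction --- observe that $U_1 \equiv 0$ under the null (so $\var(T)$ splits into the three remaining orthogonal pieces from Lemma~\ref{lem:decompose}), then check unbiasedness term by term via linearity of expectation, independence of the $X_i$ across distinct documents for the cross-products in $U_3$ and $U_4$, and the building-block Lemmas~\ref{lem:Omegajj'_unbiased}--\ref{lem:Omega4_unbiased}.

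The only substantive variation is your plan to prove the building-block lemmas via falling factorials rather than the paper's direct Bernoulli-indicator expansion; these are the same computation phrased differently, but your more systematic version would actually flag a coefficient slip in the proof of Lemma~\ref{lem:Omega4_unbiased}. The fourth falling factorial is $X_{ij}(X_{ij}-1)(X_{ij}-2)(X_{ij}-3) = X_{ij}^4 - 6X_{ij}^3 + 11X_{ij}^2 - 6X_{ij}$, not $X_{ij}^4 - 3X_{ij}^3 - X_{ij}^2 + 3X_{ij}$ as stated there. The paper's expansion
\[
\mathbb{E}X_{ij}^4 = N_i\Omega_{ij} + 10N_i(N_i-1)\Omega_{ij}^2 + 3N_i(N_i-1)(N_i-2)\Omega_{ij}^3 + N_i(N_i-1)(N_i-2)(N_i-3)\Omega_{ij}^4
\]
uses coefficients $(1,10,3,1)$, whereas the Stirling-number identity $x^4 = x + 7x(x-1) + 6x(x-1)(x-2) + x(x-1)(x-2)(x-3)$ requires $(1,7,6,1)$; a spot check at $N_i = 3$ gives $\mathbb{E}X^4 = 3p + 42p^2 + 36p^3$, matching $(1,7,6,1)$ only. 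Carrying out your plan faithfully therefore yields the corrected estimator $\widehat{\Omega_{ij}^4} = (X_{ij}^4 - 6X_{ij}^3 + 11X_{ij}^2 - 6X_{ij})/[N_i(N_i-1)(N_i-2)(N_i-3)]$, and with that in place the rest of your argument closes the proof. Do also note the small bookkeeping point you glossed over: in the sum $\sum_{j,j'}\widehat{\Omega_{ij}\Omega_{ij'}}\,\widehat{\Omega_{mj}\Omega_{mj'}}$ the diagonal $j=j'$ must be read as $\widehat{\Omega_{ij}^2}$ from Lemma~\ref{lem:Omegaij2_unbiased}, since Lemma~\ref{lem:Omegajj'_unbiased} only covers $j \neq j'$.
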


With this result in hand, it is possible to derive consistency of $\widetilde{V}$ as an estimator of $\var(T)$ under certain regularity conditions. We omit the details. 

\subsection{Proof of Lemma~\ref{lem:Omegajj'_unbiased}}
Recall that  $B_{ijr} $ is the Bernoulli random variable $B_{ijr} =Z_{ijr} +\Omega_{ij}$ and satisfies $X_{ijr} = \sum_{r=1}^{N_i} B_{ijr}$. 
Observe that 
\begin{align*}
	X_{ij} X_{ij'}
	= \sum_{r,s} B_{ijr} B_{ij's} 
	= \sum_r B_{ijr} B_{ij'r}
	+ \sum_{r \neq s} B_{ijr}B_{ij's} 
	= 0 + \sum_{r \neq s} B_{ijr}B_{ij's} 
\end{align*}
Thus
\begin{align*}
	\E 	X_{ij} X_{ij'}
	=  N_i (N_i-1) \Omega_{ij} \Omega_{ij'},
\end{align*}
and we obtain
\begin{align*}
	\widehat{	\Omega_{ij} \Omega_{ij'} }
	= \frac{X_{ij} X_{ij'}}{ N_i (N_i-1)} 
\end{align*}
is an unbiased estimator for $\Omega_{ij} \Omega_{ij'}$, as desired. 
\qed 

\subsection{Proof of Lemma~\ref{lem:Omegajj'2_unbiased}}

Note that 
\begin{align*}
	X_{ij}^2 X_{ij'}^2
	&= \big( \sum_r B_{ijr} + \sum_{r \neq s } B_{ijr} B_{ijs} \big)
	\big( \sum_r B_{ij'r} + \sum_{r \neq s } B_{ij'r} B_{ij's} \big)
	\\&= \sum_{r} B_{ijr} B_{ij'r} +  \sum_{r_1 \neq r_2} B_{ijr} B_{ij's}
	+  \sum_{r_1 \neq s} B_{ijr_1} B_{ijs} \sum_{r_2} B_{ij'r_2}
	+ \sum_{r_1 \neq s} B_{ij'r_1} B_{ij's} \sum_{r_2} B_{ijr_2}
	\\&\quad 	 + \big( \sum_{r \neq s} B_{ijr} B_{ijs} \big) 
	\big( \sum_{r \neq s} B_{ij'r} B_{ij's} \big) 
	\\&= \sum_{r_1 \neq r_2} B_{ijr} B_{ij's} + \sum_{r_1 \neq s} B_{ijr_1} B_{ijs} \sum_{r_2} B_{ij'r_2}
	+ \sum_{r_1 \neq s} B_{ij'r_1} B_{ij's} \sum_{r_2} B_{ijr_2}
	\\&\quad + \big( \sum_{r \neq s} B_{ijr} B_{ijs} \big) 
	\big( \sum_{r \neq s} B_{ij'r} B_{ij's} \big) 
\end{align*}

Since $B_{ijr} B_{ij'r} = 0$, note that
\begin{align*}
	(X_{ij}^2 - X_{ij} ) (X_{ij'}^2 - X_{ij'} )
	&= \sum_{r_1 \neq s_1} \sum_{r_2 \neq s_2} B_{ijr_1} B_{ijs_1} B_{ij'r_2} B_{ij's_2}
	\\&= \sum_{ r_1, s_1, r_2, s_2 \, dist.} B_{ijr_1} B_{ijs_1} B_{ij'r_2} B_{ij's_2}.
\end{align*}
Thus 
\begin{align*}
	\E 	(X_{ij}^2 - X_{ij} ) (X_{ij'}^2 - X_{ij'} )
	&= \sum_{ r_1, s_1, r_2, s_2 \, dist.}  \E \big[ B_{ijr_1} B_{ijs_1} B_{ij'r_2} B_{ij's_2} \big] 
	\\&= N_i(N_i-1)(N_i - 2)(N_i - 3) \cdot  \Omega_{ij}^2 \Omega_{ij'}^2. 
\end{align*}
It follows that
\begin{align*}
	\widehat{  \Omega_{ij}^2 \Omega_{ij'}^2 } 
	= \frac{ (X_{ij}^2 - X_{ij} ) (X_{ij'}^2 - X_{ij'} ) }{ N_i(N_i-1)(N_i - 2)(N_i - 3) }
\end{align*}
is an unbiased estimator for $ \Omega_{ij}^2 \Omega_{ij'}^2$. 


\qed 

\subsection{Proof of Lemma~\ref{lem:Omega3_unbiased}}

Recall that  $B_{ijr} $ is the Bernoulli random variable $B_{ijr} =Z_{ijr} +\Omega_{ij}$ and satisfies $X_{ijr} = \sum_{r=1}^{N_i} B_{ijr}$. 
Observe that
\begin{align*}
	X_{ij}^3 = \sum_r B_{ijr} + 
	3\sum_{r_1 \neq r_2} B_{ijr_1} B_{ijr_2}
	+ \sum_{r_1 \neq r_2 \neq r_3} B_{ijr_1} B_{ijr_2} B_{ijr_3}. 
\end{align*}
Thus
\begin{align*}
	\E X_{ij}^3
	&= N_i \Omega_{ij}+ 3 N_i(N_i-1) \Omega^2_{ij} + N_i(N_i-1)(N_i-2) \Omega_{ij}^3. 
\end{align*}
Unbiased estimators for $\Omega_{ij}$ and $\Omega_{ij}^2$ are
\begin{align*}
	& \frac{X_{ij}}{N_i}
	\\ & \frac{X_{ij}^2}{N_i^2} - \frac{X_{ij}(N_i -X_{ij} )}{N_i^2(N_i-1)}
	= \frac{1}{N_i(N_i -1 )} \big(X_{ij}^2 - X_{ij} \big),
\end{align*}
respectively. Hence
\begin{align*}
	X_{ij}^3 
	- X_{ij} - 3 (X_{ij}^2 - X_{ij}) = X_{ij}^3 - 3 X_{ij}^2 + 2 X_{ij} 
\end{align*}
is an unbiased estimator for $N_i(N_i-1)(N_i-2) \Omega_{ij}^3$, as desired. 

\qed 

\subsection{Proof of Lemma~\ref{lem:Omega4_unbiased}}

Observe that 
\begin{align*}
	X_{ij}^4
	&= \sum_r B_{ijr}^4 + 4 \sum_{r_1 \neq r_2} B_{ijr_1}^3 B_{ijr_2}
	+ 6 \sum_{r_1 \neq r_2} B_{ijr_1}^2 B_{ijr_2}^2
	\\&\quad +  3 \sum_{r_1 \neq r_2 \neq r_3}B_{ijr_1}^2 B_{ijr_2} B_{ijr_3}
	+ \sum_{r_1 \neq r_2 \neq r_3 \neq r_4}  B_{ijr_1} B_{ijr_2} B_{ijr_3} B_{ijr_4}
	\\&=  \sum_r B_{ijr} + 10 \sum_{r_1 \neq r_2} B_{ijr_1} B_{ijr_2} +  3 \sum_{r_1 \neq r_2 \neq r_3}B_{ijr_1} B_{ijr_2} B_{ijr_3}
	\\&\quad 
	+ \sum_{r_1 \neq r_2 \neq r_3 \neq r_4}  B_{ijr_1} B_{ijr_2} B_{ijr_3} B_{ijr_4}.
\end{align*}
Thus 
\begin{align*}
	\E X_{ij}^4
	&= N_i \Omega_{ij} 
	+ 10 N_i (N_i-1)  \Omega_{ij}^2
	+ 3 N_i (N_i-1) (N_i-2) \Omega_{ij}^3 
	\\&\quad + N_i (N_i-1)(N_i-2)(N_i-3) \Omega_{ij}^4.
\end{align*}
Plugging in unbiased estimators for the first three terms, we have
\begin{align*}
	X_{ij}^4
	- X_{ij} -10 (X_{ij}^2 - X_{ij}) 
	- 3 ( X_{ij}^3 - 3 X_{ij}^2 + 2 X_{ij})
	= X_{ij}^4 - 3 X_{ij}^3 - X_{ij}^2 + 3 X_{ij} 
\end{align*}
is an unbiased estimator for $N_i(N_i-1)(N_i-2)(N_i-3)$, as desired. 
\qed

\bibliographystyle{rss}
\bibliography{topic}
\end{document}